\colorlet{RED}{black}
\colorlet{BLUE}{black}
 \newtheorem{thm}{Theorem}[section]
 \newtheorem{cor}[thm]{Corollary}
 \newtheorem{lem}[thm]{Lemma}
 \newtheorem{prop}[thm]{Proposition}
 \theoremstyle{definition}
 \newtheorem{defn}[thm]{Definition}
 \theoremstyle{remark}
 \newtheorem{rem}[thm]{Remark}
 \numberwithin{equation}{chapter}
\title{\bf{\textsf{A mathematical perspective on \\ the phenomenology
of \\ non-perturbative Quantum Field Theory}}}
\author{\bf{Ali Shojaei-Fard}}
\date{\textsc{\today}}
\begin{document}

\maketitle \pagenumbering{arabic}
\pagestyle{fancy} \lhead{{\small \textsf{Ali Shojaei-Fard}}}
\chead{} \rhead{{\tiny \textsf{\textbf{The phenomenology of non-perturbative QFT: \\
Mathematical Perspective}}}} \lfoot{{\small \textsf{}}} \cfoot{\thepage} \rfoot{}

\thispagestyle{empty}

\newpage

\verb"Monograph" \\
\verb"A mathematical perspective on the phenomenology of"\\
\verb"non-perturbative Quantum Field Theory" \\

\vspace{0.3in}

\verb"Ali Shojaei-Fard" \\
$\bullet$ \verb"Postdoctoral Research Fellow, Max-Planck-Institut fur Mathematik," \\
\verb"Vivatsgasse 7, 53111 Bonn, Germany." \\
$\bullet$ \verb"Ph.D. Mathematician and Independent Scholar,"\\
\verb"1461863596 Marzdaran Blvd., Tehran, Iran." \\
\verb"Email: shojaeifa@yahoo.com" \\
\verb"Email: shojaei-fard@mpim-bonn.mpg.de"\\

\vspace{0.3in}

\textbf{$\bullet$ Keywords:} \\
{\footnotesize
\verb"Quantum Field Theory and Feynman Diagrams;" \\
\verb"Dyson--Schwinger Equations and Connes--Kreimer Renormalization Hopf Algebra;" \\
\verb"Non-Perturbative Renormalization Theory via Combinatorics and Infinite Graph Theory;" \\
\verb"Evolution in Quantum Motions via Functional Analysis and Differential Calculus;" \\
\verb"Multi-Scale Non-Perturbative Renormalization Group and Theory of Computation;" \\
\verb"Noncommutative (Differential) Geometry Methods;" \\
\verb"QFT Entanglement via Connes--Marcolli Differential Galois Theory and Tannakian Formalism;" \\
\verb"QFT Logic and Foundations via Category Theory."\\
}

\vspace{0.3in}

\textbf{$\bullet$ Mathematics Subject Classification 2010:}
\begin{verbatim}
05E15; 06A11; 16T05; 60B15; 81P10; 81Q30; 81T16; 81T17; 81T18; 81T75
\end{verbatim}

\vspace{0.3in}

\textbf{$\bullet$ Physics and Astronomy Classification Scheme:}
\begin{verbatim}
02.10.De; 02.10.Ox; 02.30.Sa; 02.40.Pc; 03.65.Ud; 03.70.+k; 11.15.-q
\end{verbatim}

\vspace{0.3in}

\textbf{{\large{$\bullet$ Acknowledgment.}}}
\begin{verbatim}
The researcher is grateful to Max Planck Institute for Mathematics
for the support and hospitality.
\end{verbatim}

\newpage

\thispagestyle{empty}
\begin{abstract}
This monograph aims to build some new mathematical structures originated from Dyson--Schwinger equations for the description of non-perturbative aspects of gauge field theories whenever bare or running coupling constants are strong enough.
\end{abstract}



\newpage

\thispagestyle{empty}

{\small \setcounter{tocdepth}{3} \tableofcontents}


\addcontentsline{toc}{chapter}{PREFACE}

\newpage

\chapter*{PREFACE}

{\footnotesize{\it The strength of Mathematics is its ability to create models \\
which are absolutely vital for producing physical parameters. \\
A mathematician is like a surrealist painter who can design \\
the purest portraits from known and unknown universes.}}

\vspace{0.2in}

Recent discoveries in Science and Technology from the smallest to the largest scales have approved clearly the importance of advanced research
activities in Basic Science which can bring a new package of fundamental knowledge for the analysis of complicated systems in natural phenomena. To obtain a comprehensive description of those complexities requires to build
any possible
interrelations among different fields in Mathematics (as the purest
mental production of human beings). The resulting connections can lead us
to achieve some new theoretical methodologies which are essential tools for scientists
to build advanced practical models in dealing with complexities of the nature. The designed models together with some
computational algorithms will lead scientists to solution
procedures.

This research work has a multidisciplinary foundation in the context
of Mathematics, High Energy Theoretical Physics and Theoretical
Computer Science. It plans to discover some new knowledge about the
most complicated or unknown parts of Quantum Field Theories whenever the coupling constants are strong enough in terms of building some new advanced mathematical structures. The
outstanding consequence of this research work is to provide a new
mathematical interpretation of the phenomenology of
Quantum Field Theory with strong coupling constants under discrete,
analytic and logical settings. If we study simultaneously these different
but related settings, then our mathematical outputs will be useful for the better understanding of the
behavior of quantum physical systems in non-perturbative situations. We study the phenomenology of non-perturbative aspects of gauge field theories in terms of some new combinatorial, geometric and categorical tools. We apply graph limits to formulate a new analytic generalization for solutions of Dyson--Schwinger equations which is useful for a complete theory of renormalization for these non-perturbative equations. We also show the use of Tutte polynomials for the combinatorial representation of solutions of Dyson--Schwinger equations. Then we propose a new concept of complexity for the study of Dyson--Schwinger equations under different running coupling constants. The analytic behavior of these equations are also discussed via some new noncommutative geometric tools. We finally apply graphon models of Dyson--Schwinger equations to build a new theory of ordered algebraic sub-structures for the analysis of entanglement in strongly coupled gauge field theories and a new topos model for the logical description of topological regions of Feynman diagrams which encode non-perturbative parameters. All these observations enable us to describe non-perturbative phenomenology and intrinsic foundations of strongly coupled gauge field theories under discrete, analytic and logical platforms.

The Lagrangian approach to Quantum Field Theory, which is on the
basis of the Feynman path integral formalism, has made extraordinary
theoretical and experimental progress for the study of elementary
particles and their interactions at the highest level of energies
and the smallest scales under a perturbative setting. This approach
encodes physical information of a quantum system with infinite degrees of freedom in terms of Green's functions as infinite formal expansions of ill-defined iterated integrals and powers of coupling constants.

Quantum Electrodynamics (QED) concerns interactions among matter
(electron, positron) and light (photons). Quantum Flavourdynamics
(QFD) concerns weak interactions inside the nucleus of an atom which
change the flavour or type of quarks to describe $\beta^{-}$ decay
and $\beta^{+}$ decay under $W,Z$ bosons. Quantum Chromodynamics
(QCD) concerns strong interactions of quarks and gluons inside the
nucleus of an atom to build composite hadrons such as protons and
neutrons. Standard Model, as the most successful experienced model,
has provided a practical platform to collect quantum field theories
corresponding to electromagnetic, weak and strong interactions into
a united Quantum Field Theory model. The modified versions of the
Standard Model in the context of Noncommutative Geometry have also
provided a new updated (theoretical) model which is (minimally)
coupled to gravity as the weakest fundamental force in the nature.
The constructions of gauge field theories in Theoretical and
Experimental High Energy Physics, as updated Quantum Field Theory
models, are on the basis of the modified Standard Model of
elementary particles. These gauge theories can analyze electroweak and strong
interactions of elementary particles in the scale of distances down
to the order of $10^{-16}$ centimeters while neutrino masses have
also been accounted. In addition, under a more theoretical setting,
String Theory as the other class of Quantum Field Theory models, which
does not have ultraviolet divergencies, has been introduced and
developed to deal with gravity in terms of Quantum Field Theories with matrix fields and higher generalizations of matrix models. The classical one-loop Feynman diagram should be
replaced with its stringy counterpart which is a torus. Other more general Feynman diagrams should be
replaced with Riemann surfaces and world sheets. This
mathematical theory is capable of describing Quantum Gravity in
Space-Time.

The first fundamental challenge in perturbative setting is the
appearance of so complicated nested (sub-)divergencies which live in
each term of Green's functions. These ill-defined terms, known as
Feynman integrals, can be theoretically reduced to some finite
values as the result of the renormalization machinery and many loop
techniques. However some extra parameters (i.e. counterterms) should be
added to the original Lagrangian of the physical theory during the extraction of finite values. The
discovery of a comultiplication structure hidden inside of the
(Bogoliubov)--Zimmermann's forest formula has led us to understand
the Bogoliubov--Parasiuk--Hepp--Zimmermann perturbative
renormalizaton in the language of the Connes--Kreimer Hopf algebra
of Feynman diagrams and the Riemann--Hilbert problem. Thanks to this
setting, a geometric interpretation of Dimensional Regularization on
the basis of flat equi-singular connections had been formulated by
Connes and Marcolli. This study was lifted immediately onto a universal
categorical setting where it is possible to associate a category of Lie group
representations to each renormalizable Quantum Field Theory. Thanks to this platform, nowadays there exists
a diverse spectrum of advanced mathematical techniques and
tools to deal with ill-defined iterated Feynman integrals in
physical theories to generate finite values from
infinities.

The second fundamental challenge in perturbative setting is dealing
with complicated infinities originated from Green's functions which
encode quantum motions in physical theories with strong couplings.
The lack of a rigorous mathematical methodology for the study of
aspects beyond perturbation boundary has made so many difficulties
to understand completely Quantum Field Theory. In physical theories
with strong (running or bare) couplings, it is already impossible to
study the full behavior of quantum systems under perturbation series
and in this situation, we need to concern non-perturbative methods
such as numerical methods, Borel summation method, theory of
instantons and lattice model. In addition, the self-similar nature
of Green's functions makes an alternative way for us to study
non-perturbative aspects in the context of fixed point equations of
Green's functions. The resulting equations, which are known as
Dyson--Schwinger equations, contain an infinite collection of coupled integral
equations depended on the coupling constants. In QCD
with higher energies and short distances, the strength of couplings is small enough where we can expect the asymptotic freedom behavior of the physical system. In this situation, we can consider Dyson--Schwinger equations via
some perturbative tools such as many loop computational techniques. However in QCD with relatively lower
energies and long distances, the strength of couplings is more than
or equal to 1 where the physical system these equations behave non-perturbatively. In this situation, we need to consider non-linear Dyson--Schwinger equations via non-perturbative methods of computations. Work on the phenomenology of
strong running couplings in QCD has been considered under a physical
perspective to provide some computational methods in dealing with
non-perturbative parameters. Thanks to the applications of the
Connes--Kreimer renormalization Hopf algebra of Feynman diagrams to
Quantum Field Theory, we already have a combinatorial reformulation
for Dyson--Schwinger equations in the language of Hochschild
cohomology theory. The unique solution of each equation DSE
determines a free commutative connected graded Hopf subalgebra of
the renormalization Hopf algebra. This mathematical approach to
Dyson--Schwinger equations has already provided some new
combinatorial and geometric tools for the computation of some
non-perturbative parameters. The foundations of a differential
Galois theory and a Tannakian formalism for the study of
non-perturbative aspects of Quantum Field Theories have been
designed and developed by the author on the basis of the Connes--Marcolli
universal category of flat equi-singular vector bundles. The author applied these platforms to clarify a new motivic method for the study of (systems of ) Dyson--Schwinger equations in terms of sub-categories of mixed Tate motives. In addition, the author has shown a new method for identifying the amount of non-computability of non-perturbative parameters in the context of the renormalization Hopf algebra of the Halting problem in the theory of computation.

This research work proposes some new applications of mathematical
tools originated from Combinatorics, Functional Analysis,
Noncommutative Geometry, Category Theory and Logic to deal with
infinite graphs generated by solutions of Dyson--Schwinger
equations. These mathematical platforms can provide some new
techniques for the computation of non-perturbative parameters. In
addition, they suggest a new methodology for the description of the
intrinsic foundations of strongly coupled gauge field theories (such as quantum
entanglement and quantum logic) under a non-perturbative setting.
These investigations will help us to understand the indeterministic
nature of non-perturbative Quantum Field Theory models.

Generally speaking, the achievements of this research work can
improve our knowledge about the phenomenology of non-perturbative
Quantum Field Theory in terms of studying an individual Dyson--Schwinger equations or studying these equations with respect to each other.

In the first level, we focus on the mathematical foundations of
Dyson--Schwinger equations to bring some new computational tools in
dealing with non-perturbative parameters generated by large Feynman
diagrams. At this level, we consider each Dyson--Schwinger equation
as an individual object in the vector space $\mathcal{S}^{\Phi,g}$ generated by all Dyson--Schwinger
equations derived from Green's functions of a given Quantum Field
Theory $\Phi$ under different scales $\lambda g$ of the bare
coupling constant $g$ where $0 < \lambda \le 1$. We equip this
infinite dimensional vector space with a topological structure
defined via the graphon representations of Feynman diagrams. Under a
combinatorial setting, we discuss the structure of a new model for
large Feynman diagrams in the language of combinatorial polynomials
and random graphs. Furthermore, we discuss the complexity of
non-perturbative parameters generated by Dyson--Schwinger equations
in the context of theory of computation. In this direction we try to show the importance of a new multi-scale non-perturbative Renormalization Group for the description of the Kolmogorov complexity in dealing with Dyson--Schwinger equations. Under a geometric setting,
we explain the dynamics of non-perturbative aspects in a physical
theory with respect to the mathematical structures originated from
Dyson--Schwinger equations. We build a Noncommutative Geometry model
for each Dyson--Schwinger equation which leads us to interpret
quantum motions in the context of theory of spectral triples and
noncommutative differential forms. Under a Functional Analysis
setting, we discuss the evolution of fixed point equations of
Green's functions by defining a new generalized version of the
Fourier transformation on the Banach algebra $L^{1}(\mathcal{S}^{\Phi,g},\mu_{{\rm Haar}})$ with respect to a new Haar measure $\mu_{{\rm Haar}}$ integration theory on the space of solutions of Dyson--Schwinger equations.

In the second level, we focus on the mathematical foundations of
non-perturbative Quantum Field Theory where we must deal with all
possible Dyson--Schwinger equations under different running couplings. We explain
the construction of a new Hopf algebra structure
$\mathcal{S}^{\Phi}_{{\rm graphon}}$ on the topological vector space of graphons which contribute to representations of Feynman diagrams and
their finite or infinite formal expansions. The resulting
topological Hopf algebra is capable to encode large Feynman diagrams
generated by solutions of Dyson--Schwinger equations in different
rescalings of the bare coupling constant $g$. Therefore we can embed
the space $\mathcal{S}^{\Phi,g}$ into $\mathcal{S}^{\Phi}_{{\rm
graphon}}$. This Hopf algebra leads us to formulate a new topological Hopf algebraic
renormalization theory for large Feynman diagrams via a topological generalization of the Connes--Kreimer BPHZ renormalization program. We then define a new multi-scale Renormalization Group on
the collection $\mathcal{S}^{\Phi,g}$ to control the simultaneously rescaling of the momentum parameter and the bare coupling constant in Dyson--Schwinger equations of a given gauge field theory $\Phi$. This new Renormalization Group is useful to study the computational complexity of a strongly coupled equation ${\rm DSE}(g)$ in terms of the computational complexities of a sequence of Dyson--Schwinger equations under weaker running couplings. Furthermore, we explain the foundations of a differential calculus
theory on $\mathcal{S}^{\Phi,g}$ (as a separable Banach space with respect to the cut-distance topology) where thanks to the Feynman graphon models of Dyson--Schwinger equations and the theory of
G\^{a}teaux derivative, we formulate a new theory of differentiation and
integration on $\mathcal{S}^{\Phi,g}$. Under an ordered algebraic setting, we organize Feynman graphon models of solutions of Dyson--Schwinger equations into some lattices of substructures to provide a new interpretation for the quantum entanglement in interacting gauge field theories. We explain mathematically the information flow among elementary (virtual) particles in QFT models via a new class of
topological Hopf algebras generated by Dyson--Schwinger equations. We then lift this lattice models onto the level of Tannakian sub-categories of the Connes--Marcolli category of flat equi-singular vector
bundles to show the importance of this universal category for the study of the geometry of quantum entanglement. Under a categorical setting, we organize topological Hopf subalgebras derived from solutions of Dyson--Schwinger equations into a new small category to build a new topos model which can encode the logical evaluations of non-perturbative situations. The structure of this topos model is depended on the strength of running couplings to show the importance of the phenomenology of strong running couplings in the determination of truth values of propositions about Dyson--Schwinger equations.

Thanks to these two levels of observations, we expect to provide a
new insight into the complicated problems of non-perturbative
situations where the strength of the coupling constants do really
change the mathematics and the logics of quantum theory models.


\chapter{\textsf{Introduction}}

\vspace{1in}

$\bullet$ \textbf{\emph{Physical backgrounds}} \\
$\bullet$ \textbf{\emph{Mathematical backgrounds}} \\
$\bullet$ \textbf{\emph{Recent progress and objectives}}

\newpage

\section{\textsl{Physical backgrounds}}

Modern Theoretical and Experimental High Energy Physics have been
established on the basis of Quantum Field Theory models under the
Lagrangian setting which is (minimally) coupled to gravity via the
incorporation of massive neutrinos. The foundations of Quantum Field
Theory were initiated in terms of the interpretation of the
quantized version of Electrodynamics in the language of the Feynman
path integral formalism under a perturbation setting. The appearance
of gauge field theories which include Quantum Electrodynamics (QED),
Electroweak theory, Quantum Chromodynamics (QCD) and Quantum Gravity
have developed rigorously our theoretical knowledge about the
fundamental properties of elementary particles before we could reach
to appropriate empirical information. Thanks to these backgrounds,
mathematicians and theoretical physicists have already made
outstanding achievements for the description of interactions of
elementary particles under different settings in the context of
advanced mathematical models. For example, mathematical tools in
Noncommutative Geometry, Algebraic Geometry, Combinatorics and
Category Theory have been applied to formulate Standard Model and
other extended theories which include supersymmetry, gravitational
interactions or extended objects such as strings and brane theory.
We can also address tensor models as higher dimensional
generalizations of matrix models which aim to achieve a theory of
random geometries in dimensions higher than two. This class of
theories helps us for the construction of discrete approaches
to quantizing gravity. \cite{brouder-frabetti-1, brouder-frabetti-2,
breitenlohner-maison-1, calaque-strobl-1, connes-marcolli-1,
duboisviolette-4, krajewski-toriumi-1, krajewski-toriumi-2,
kreimer-6, kreimer-sars-suijlekom-1, malyshev-1, marino-3,
mirkasimov-1, nair-1, roberts-1, roberts-schmidt-1, tanasa-2,
weinzierl-2, yeats-1}

The Lagrangian formalism enables us to understand Quantum Field
Theory by working on Green's functions as infinite formal expansions
of Feynman integrals or their corresponding diagrams where the
amount of some fundamental parameters such as the strength of the
bare coupling constants or the domain of momenta make the
resulting series divergent or asymptotic free. In perturbative
physical theories we expect to have some convergent series.

For example, in $\phi^{4}$ model the partition function is given by
\begin{equation}
Z[B]:= \int \mathcal{D} \phi  \  e^{-L(\phi)+\int B\phi}
\end{equation}
such that
\begin{equation}
L(\phi)=\int d^{4}x  \  \big(\frac{1}{2} (\nabla\phi)^{2} +
\frac{1}{2}r_{0}\phi^{2}+\frac{1}{4!}u_{0}\phi^{4} \big) ,
\end{equation}
and $B$ is an external field. If we set
\begin{equation}
Z_{0}:= \int \mathcal{D} \phi \ e^{-L_{0}(\phi)+\int B\phi}, \ \ \ \
\ L_{0}(\phi):= \int d^{4}x \ \big(\frac{1}{2} (\nabla\phi)^{2} +
\frac{1}{2}r_{0}\phi^{2} \big),
\end{equation}
then we can develop $Z$ as a series in $u_{0}$ around $Z_{0}$ to
achieve
\begin{equation}
Z=\int \mathcal{D} \phi \ \big(1-\frac{u_{0}}{4!}\int_{x_{1}}
\phi^{4}(x_{1}) + \frac{1}{2}(\frac{u_{0}}{4!})^{2}
\int_{x_{1},x_{2}} \phi^{4}(x_{1}) \phi^{4}(x_{2}) + ...\big)
e^{-L_{0}(\phi) + \int B \phi}.
\end{equation}
This expansion can be represented in the language of Feynman
diagrams which leads us to a combinatorial formulation for Green's
functions. For $n$ elementary particles we have
\begin{equation}
G_{n}(x_{1},...,x_{n}) = <\phi(x_{1}) ... \phi(x_{n})> = \frac{\int e^{-S[\phi]} \phi(x_{1}) ... \phi(x_{n}) \mathcal{D} \phi}{\int e^{-S[\phi]} \mathcal{D} \phi}
\end{equation}
such that $S[\phi] = S_{0}[\phi] + g S_{{\rm int}}[\phi]$ where $g$ is the (bare) coupling constant.
The fluctuations generated by the $\phi^{4}$ term around
the Gaussian integral $Z_{0}$ are large where they determine
iterated integrals with the general form
\begin{equation}
\int^{\Lambda} d^{d}q_{1}...d^{d}q_{L} \prod_{i} ({\rm
propagator(q_{i})})
\end{equation}
such that the ultra-violet regulator $\Lambda$ provides a cut-off at
the upper bound type of integral. The dependency of these integrals
to the parameter $\Lambda$ makes a rigorous challenge for the
computation of universal quantities. Theory of perturbative
renormalization provides the machinery to reparametrize the
perturbative expansion in such a way that the sensitive dependence
on $\Lambda$ has been eliminated. In this situation, the
renormalization group enables us to partially resum the perturbative
expansions to achieve some universal computational results.
\cite{breitenlohner-maison-1, calaque-strobl-1, nair-1,
wilson-kogut-1}

In general speaking, it is possible to investigate the situations beyond perturbation theory in terms of some expressions such as
\begin{equation} \label{coupling-poly-1}
P(g) = X_{0}+X_{1}g+X_{2}g^{2}+...+X_{n}g^{n}+...
\end{equation}
such that $g$ is the coupling constant and each term $X_{i}$
represents the class of Feynman diagrams which contribute to the
$i$-order of perturbative expansion. It is obvious that physical
theories with very small $g$ could be encoded by only some beginning
finite number of terms from the above expansion while physical
theories with strong coupling $g$ produce infinite number of terms.
These non-perturbative aspects have been concerned in Theoretical
Physics via Dyson--Schwinger equations as a quantized version of the
Euler--Lagrange equations of motion originated from the principal of
the least action. These equations, which can be determined by
fixed point equations of Green's functions, have been studied under
analytic and numerical methods in Theoretical and Mathematical
Physics such that we can address standard techniques such as Borel
summation, theory of instantons, lattice models, etc in dealing with
these equations to generate some estimations for non-perturbative
parameters. In physical theories with strong couplings such as
Hadron Physics, we should deal with hadrons such as protons and
neutrons as the composite particles build up from quarks and gluons
(as elementary particles under strong interaction). In general, QCD,
as a nonabelian gauge theory with the symmetry group ${\rm SU}(3)$,
has provided a modern understanding of the complicated nature of
hadrons and nuclei where we study the strong interactions of quarks
and gluons under confinement and chiral symmetry breaking. The
appearance of nuclear weak force enables us to describe any change in
quark's flavour via $W$ bosons. In fact, the weak force is not only
responsible for interactions between particles, but it also allows
heavy particles to decay by emitting or absorbing some of the force
carriers. We can describe QCD as a matrix-valued modification of
electromagnetic theory in terms of replacing photons by gluons and
electrons by quarks while quantum fluctuations of the fields could
determine the force law. The quantization of Chromodynamics involves
the regularization and renormalization of ultraviolet divergencies
which generate a mass-scale where mass-dimensionless quantities
become dependent on a mass-scale. The current quark-masses are the
only evident scales in QCD. The main experimental confirmations of
QCD have been investigated at high energies and high momentum
transfers (or short distances) where the QCD coupling is small and
correspondingly the forces are weak. This situation, which is known
as asymptotic freedom property, enables us to detect the composite
structure of hadrons by scattering high energy electrons. The most
difficult challenge in this model can be observed when
perturbation theory fails to describe the short range static
potential obtained from quenched lattice simulations where the
difference between the non-perturbatively determined potential and
perturbation theory at short distance has been parameterized by a
linear term. In addition, there are also so many difficulties for
the study of the asymptotic behavior of QCD-perturbative series
beyond the two-loop level where the original effort is to find a way
to subtract perturbative contributions to a given physical process
in order to isolate non-perturbative terms. In the domain of
relatively low energies and momentum transfers such as $Q^{2} \sim 1
- 5 \ {\rm GeV}^{2}$ while the proton's mass is approximately $1 \
{\rm GeV}$, the QCD coupling constants are larger where many loops
perturbative calculations should be applied. Because of the nontrivial
vacuum structure of QCD, in the domain of lower energies and
momentum transfers (or large distances) such as $Q^{2} \le 1 \ {\rm
GeV}^{2}$, the QCD coupling constants are stronger than one. Under this condition,
the analytic calculations do not useful but there are some methods
such as the chiral effective theory, lattice calculations, large $N$
limits and Dyson--Schwinger equations to provide some algorithmic
computations. This situation, which is actually the failure to
directly observe coloured excitations in a detector, is the origin
of the concept of confinement as a fundamental challenge that we do not
see free quarks or gluons in nature but rather we only see
colourless. The analytic description of confinement is one difficult
task for the understanding of continuum QCD. The phenomenology
of confinement can be studied in the context of Dyson--Schwinger
equations. \cite{bali-1, calaque-strobl-1, delamotte-1, marino-1,
marino-2, marino-3, roberts-schmidt-1}

The situations beyond perturbation boundary deal with divergencies
originated from Green's functions under strong running coupling constants where we might need to consider Feynman diagrams with many numbers or infinite number of loops. What does this class of diagrams means and
how can we deal with infinite formal expansions of this class of diagrams?
Applying advanced mathematical structures is helpful for the
better understanding of this class of divergencies. It is obvious
that having strong mathematical modelings is the initial step
to make fundamental progress in dealing with the complicated behavior
of elementary particles inside of the nuclei. We can address the mathematical
foundations of the (modified) Standard Model as a good theoretical
methodology which led scientists to obtain strong experimental
investigations about elementary particles. String Theory is another
powerful mathematical platform which aims to provide a unified
theory for the study of elementary particles with respect to all
fundamental forces. The basic philosophy of this research work is to
build and develop the mathematical foundations of QFT models with strong couplings where as the consequence, we expect
to provide some new mathematical tools for the better understanding
of physical parameters beyond perturbation theory. Our study
suggests a new interpretation from the phenomenology of strong
couplings in the context of combinatorial, geometric and categorical
settings.

\section{\textsl{Mathematical backgrounds}}

The contributions of mathematical tools to Quantum Field Theories
have been extraordinary developed when the (Bogoliubov--)Zimmermann
forest formula was reinterpreted by Kreimer in the context of
(co)algebraic combinatorial tools. This reinterpretation had been
concerned by Connes and Kreimer to build a new modern formulation for the
Bogoliubov--Parasiuk--Hepp--Zimmermann (BPHZ) perturbative
renormalization in Quantum Field Theory on the basis of the theory
of Hopf algebras and the Riemann--Hilbert problem. The
Connes--Kreimer approach has become the main foundation in many
research efforts for the study of complicated Feynman integrals,
Green's functions and Renormalization Group methods where it has led the
Theoretical Physics's community to achieve some new mathematical
tools for the description of  physical parameters in
(renormalizable) gauge field theories under algebraic, combinatorial
and geometric settings. It is now possible to encapsulate the machinery of perturbative renormalization in terms of a connected graded free commutative
non-cocommutative (finite type) Hopf algebraic structure $H_{{\rm
FG}}(\Phi)$ on Feynman diagrams of a physical theory $\Phi$ which
has a Lie algebraic nature determined by the insertion operator. The
compatibility of the fundamental identities such as Slavnov--Taylor
and Ward identities in QCD and QED with the renormalization
coproduct have been shown in the language of Hopf ideals. The
phenomenology of counterterms has been concerned underlying a
geometric treatment to provide some alternative methods for the computation of these physical values in the
language of (singular) differential equations. In this setting, a
new class of equi-singular flat connections can govern the values of
counterterms with respect to the $\beta$-function. This setting has
been lifted onto a universal Tannakian formalism where a
renormalizable Quantum Field Theory is studied via a category
of geometric objects which can be recovered by the neutral Tannakian category of
finite dimensional representations of the affine group scheme
$\mathbb{G}_{\Phi}:={\rm Hom}(H_{{\rm FG}}(\Phi),-).$
\cite{agarwala-1, agarwala-2, borinsky-1, brouder-1,
brouder-frabetti-1, brouder-frabetti-2, brouder-frabetti-menous-1,
broadhurst-kreimer-1, brown-kreimer-1, brown-yeats-1,
connes-kreimer-1, connes-kreimer-2, connes-marcolli-1,
figueroa-graciabondia-1, figueroa-graciabondia-2, kreimer-1,
kreimer-7, kreimer-8, kreimer-9, kreimer-panzer-1,
mencattini-kreimer-1, suijlekom-1, suijlekom-2, weinzierl-1,
yeats-1}

One of the most fundamental result in this direction is actually the
discovery of a very deep interrelationship between Feynman integrals
and the theory of motives where a motivic
renormalization machinery has been formulated to deal with
divergencies in the language of Picard--Fuchs equations and other
powerful tools. The theory of motives in Algebraic Geometry aims to concern the
existence of a universal cohomology theory for algebraic varieties
defined over a base field $k$ while taking values into an abelian
tensor category. The construction of a category of motives (mixed
motives) related to general varieties is a difficult task. The
noncommutative version of motivic objects provides the motivic
cohomology applied in the construction of a universal cohomology
theory. The structure of mixed Tate motives as elements of the
subring $\mathbb{Z}[\mathbb{L}]$ of the Grothendieck group
$K_{0}({\rm Var}_{k})$ of $k-$varieties has been considered where
$\mathbb{L}:=[\mathbb{A}^{1}]$ is the Grothendieck class of the
affine line. The application of motives enables us to develop a
unified setting underlying different cohomology theories such as
Betti, de Rham, l-adic, crystalline and etale. For this purpose, the
construction of an abelian tensor category, which provides a
linearization of the category of algebraic varieties, has been
studied to provide some fundamental requirements of standard
conjectures of Grothendieck. The importance of motives in Quantum Field Theory have been discussed in
different settings. The Bloch--Esnault--Kreimer approach which
informs interesting applications of Hodge type structures in the
calculation processes of Feynman integrals underlying graph
polynomials \cite{bloch-esnault-kreimer-1, bloch-kreimer-1,
kreimer-9, weinzierl-2}. The Aluffi--Marcolli approach, which builds
the motivic version of Feynman rules characters,
applied Kirchhoff--Symanzik polynomials to formulate a new version of
algebro-geometric (dimensionally regularized) Feynman rules
characters. These abstract characters send classes in the
Grothendieck ring of conical immersed affine varieties to the
classes in the Grothendieck ring of varieties spanned by the classes
$[X_{\Gamma}]$. This formalism, which is on the basis of the
deletion--contraction operators and the Tutte--Grothendieck polynomial,
enables us to relate Feynman diagrams with periods of algebraic
varieties. This framework provides a motivic treatment in the study
of perturbative renormalization process at the level of the
universal motivic Feynman rules character \cite{aluffi-marcolli-1,
aluffi-marcolli-2}. The Connes--Marcolli approach deals with the
geometric interpretation of counterterms on the basis of flat
equi-singular connections such that these geometric objects have
been organized in a categorical structure $\mathcal{E}^{\Phi}$ which
is recovered by the neutral Tannakian category of finite
dimensional representations of the affine group scheme
$\mathbb{G}_{\Phi}$. This category has been embedded (as a
sub-category) into the universal category $\mathcal{E}^{{\rm
CM}}$ of flat equi-singular vector bundles with the corresponding
universal affine group scheme $\mathbb{U}$. Objects of this
universal category, which has Tannakian nature, address mixed Tate
motives which contribute to divergencies of renormalizable physical
theories. In addition, $\mathcal{E}^{{\rm CM}}$ determines the
universal singular frame as the unique loop with values in
$\mathbb{U}$ which provides the universal counterterm. The Lie
algebra of the universal affine group scheme leads us to formulate a
particular shuffle type Hopf algebra. \cite{connes-marcolli-1,
marcolli-2, marcolli-1}

A single Feynman diagram reports only a small piece of information
about a finite number of possible interactions among (virtual)
elementary particles where its on-shell part (i.e. incoming and
outgoing particles) obeys the mass-energy equation and the conservation
of momenta while its off-shell part (i.e. virtual particles) obeys
no special rules or measurements. The iterated integral
corresponding to a given Feynman diagram might contain nested or overlapping sub-divergencies. Infinite formal expansions of Feynman diagrams (as polynomials with
respect to coupling constants), which are organized in Green's functions, have an essential role to
encapsulate various possible chains of interactions which could or
might happen among elementary particles in a physical theory. These formal
expansions can be studied in terms of the self-similar nature of Green's functions which allows us
to formulate fixed point equations known as Dyson--Schwinger
equations. Numerical methods, large $N$ limit,
Borel resummation, lattice models and theory of instantons can provide some approximations for strongly coupled Dyson--Schwinger equations (\cite{marino-1, marino-2, marino-3}) while the real time dynamics of these non-perturbative equations require other advanced mathematical tools. In addition, these non-perturbative equations in physical theories
with strong couplings have also been considered in the context of the
Connes--Kreimer renormalization Hopf algebra to provide some new
advanced mathematical tools for the computation of their solutions.
This Hopf algebraic formalism is one of the original motivations of
this research program. In this direction we need to formulate the Hochschild cohomology of (commutative) bialgebras
as the dual notion of the Hochschild cohomology of (commutative)
algebras. For a given commutative Hopf algebra $H$, consider linear
maps $T:H \rightarrow H^{\otimes n}$ as $n$-cochains where the
coboundary operator is defined by
\begin{equation} \label{grafting-1}
\bold{b}T:= ({\rm id} \otimes T) \Delta + \sum^{n}_{i=1} (-1)^{i}
\Delta_{i} T + (-1)^{n+1} T \otimes \mathbb{I}
\end{equation}
such that $\Delta_{i}$ is the coproduct $\Delta$ of $H$ applied to
the $i$-th factor in $H^{\otimes n}$. The Kreimer's renormalization
coproduct on Feynman diagrams can be reformulated recursively in
terms of the linear operator $B^{+}$ on Feynman diagrams, known as the
grafting operator, as the following way
\begin{equation} \label{cop-rec-1}
\Delta_{{\rm FG}} B^{+} = ({\rm id} \otimes B^{+}) \Delta_{{\rm FG}}
+ B^{+} \otimes \mathbb{I}.
\end{equation}
The operator $B^{+}$, as a linear homogeneous endomorphism of degree
one, replaces a vertex in a given Feynman diagram with a whole graph
in terms of the type of the targeting vertex and the types of external edges of
the second graph. Thanks to (\ref{grafting-1}) and
(\ref{cop-rec-1}), the grafting operator could determine some generators of the first
rank Hochschild cohomology group of the Connes--Kreimer Hopf algebra
of Feynman diagrams. For each primitive Feynman diagram
$\gamma$, $B^{+}_{\gamma}$ is a Hochschild one cocycle.
\cite{connes-kreimer-1, ebrahimifard-kreimer-1,
ebrahimifard-manchon-1, kreimer-3}

The first importance of the grafting operator is its role for the rooted tree
representation of any complicated Feynman diagram in terms of its primary components (or its primitive Feynman subdiagrams). It is possible to embed the Connes--Kreimer renormalization Hopf algebra of Feynman diagrams of a given gauge field theory into a decorated version of the Connes--Kreimer Hopf algebra of non-planar rooted trees.
The grafting operator acts on each forest
$t_{1}...t_{n}$ to deliver a rooted tree by adding a new vertex $r$
as the root and $n$ new edges which connect the roots of $t_{n}$s
to $r$. It is shown that the pair $(H_{{\rm CK}},B^{+})$ has a universal property with respect to the Hochschild cohomology theory. This pair is actually the initial object for a particular category of objects $(H,T)$
consisting of a commutative Hopf algebra $H$ and a Hochschild one
cocycle $T$ on $H$. The Hopf algebra homomorphisms which commute
with the cocycles are morphisms of this category. Decorations on trees enable us to update $H_{{\rm CK}}$ with
respect to each physical theory. For a given physical theory $\Phi$, each 1PI Feynman diagram without any
sub-divergencies is a primitive element in the renormalization Hopf algebra
$H_{{\rm FG}}(\Phi)$. We can encode this class of objects via vertices in rooted trees where edges can determine the positions of those primitive subgraphs inside of a more complicated Feynman diagram. It is shown the existence of an injective Hopf algebraic
homomorphism from $H_{{\rm FG}}(\Phi)$ to the Hopf algebra $H_{{\rm
CK}}(\Phi)$ of decorated non-planar rooted trees. \cite{brouder-1,
brouder-frabetti-1, brouder-frabetti-2, ebrahimifard-guo-kreimer-1,
ebrahimifard-guo-kreimer-2, hoffman-1, kreimer-1, kreimer-10,
kreimer-11}

The second importance of the grafting operator is its fundamental
role in the reconstruction of Dyson--Schwinger equations under a
combinatorial setting. For a given family $\{\gamma_{n}\}_{n \ge 1}$
of primitive (1PI) Feynman diagrams with the corresponding
Hochschild one cocycles $\{B^{+}_{\gamma_{n}}\}_{n \ge 1}$, a class
of combinatorial Dyson--Schwinger equations in $H_{{\rm FG}}(\Phi)[[\lambda g]]$ is defined
by
\begin{equation} \label{dse-1}
X = \mathbb{I} + \sum_{n \ge 1} (\lambda g)^{n} \omega_{n} B^{+}_{\gamma_{n}}
(X^{n+1})
\end{equation}
such that $g$ is the bare coupling constant. This class of equations
accepts a unique solution $X= \sum_{n \ge 0} (\lambda g)^{n} X_{n}$ as the formal
expansion of finite Feynman diagrams where for each $n>0$, we have
\begin{equation} \label{dse-4}
X_{n} = \sum_{j=1}^{n} \omega_{j} B^{+}_{\gamma_{j}}(\sum_{k_{1}+
... + k_{j+1}=n-j, \ k_{i} \ge 0} X_{k_{1}}...X_{k_{j+1}}).
\end{equation}
$X_{0}$ is the empty graph, each
$X_{n}$ is an object in the Hopf algebra $H_{{\rm FG}}(\Phi)$, and
the unique solution $X$ lives in a completion of $H_{{\rm FG}}(\Phi)[[\lambda g]]$ with respect
to the $n$-adic topology. The Cartier--Quillen--Milnor--Moore
theorem shows us that the unique solution of each Dyson--Schwinger equation DSE can
determine the generators of a Faa di Bruno type Hopf subalgebra
$H_{{\rm DSE}}$ of the Connes--Kreimer renormalization Hopf algebra. It is a free commutative unitial counital
(non-)cocommutative connected graded finite type Hopf subalgebra where
its coproduct on generators $X_{n}$ does not depend on the
parameters $\omega_{j}$. The Mellin transform allows us to deform
these combinatorial type of equations to their original integral
versions. It is possible to lift this formalism onto the level of
systems of Dyson--Schwinger equations where we deal with a system
$(S)$ of a finite collection of equations with the general form
\begin{equation}
(S): \ \ \ \forall i \in I, \ \ \ \ x_{i} = \sum_{j \in J_{i}}
B^{+}_{(i,j)}(f^{(i,j)}(x_{k}, \ k \in I))
\end{equation}
such that $I:=\{1,...,n\}$, $J_{i}$ is a graded connected set,
$B^{+}_{(i,j)}$s are Hochschild one cocycles and $f^{(i,j)}$s are
formal series in $\mathbb{K}[[\alpha_{1},...,\alpha_{n}]]$. It is
shown that the system $(S)$ has a unique solution such that under
some conditions it can determine the Hopf subalgebra $H_{(S)}$ originated from the Hopf subalgebras $H_{1},...,H_{n}$ generated by
combinatorial Dyson--Schwinger equations in the system.
\cite{bergbauer-kreimer-1, cartier-1, ebrahimifard-fauvet-1,
foissy-4, foissy-1, foissy-2, kreimer-3, kreimer-5}

The main skeleton of a combinatorial Dyson--Schwinger equation is
actually a family of Hochschild one cocycles. There exists a
surjective map from the first rank Hochschild cohomology group to
the space of primitive Feynman diagrams of the renormalization Hopf
algebra. It means that each family $\{\gamma_{n}\}_{n \ge 1}$ of
primitive Feynman diagrams determines the corresponding family
$\{B^{+}_{\gamma_{n}}\}_{n \ge 1}$ of Hochschild one cocycles. It is
important to note that each 1PI Feynman diagram, which is free of
sub-divergencies, is a primitive element but they are not the only
primitives in the renormalization Hopf algebra. In other words,
there are primitive Feynman diagrams in higher degrees which can
determine Hochschild one cocycles. \cite{bergbauer-kreimer-1,
connes-kreimer-1, kreimer-11}

\section{\textsl{Recent progress and objectives}}

The combinatorial reformulation of Dyson--Schwinger equations in
terms of the renormalization Hopf algebra and Hochschild cohomology
theory has applied recently for the construction of some new
mathematical treatments in dealing with the computations of non-perturbative parameters. These efforts have already developed our knowledge about the mathematical foundations of the phenomenology of non-perturbative
parameters under different settings. \cite{broadhurst-kreimer-2,
ebrahimifard-fauvet-1, foissy-2, foissy-3, kreimer-3, kreimer-4,
kreimer-5, kreimer-yeats-1, kruger-kreimer-1, marie-yeats-1,
shojaeifard-8, submitted-1, submitted-2, submitted-3,
tanasa-kreimer-1, vanbaalen-kreimer-uminsky-yeats-1, weinzierl-3}

The Milnor--Moore theorem (\cite{milnor-moore-1}) allows us to
determine the infinite dimensional complex graded pro-unipotent Lie group
$\mathbb{G}_{\Phi}(\mathbb{C})$ which is actually the complex points of the affine
group scheme $\mathbb{G}_{\Phi} = {\rm Hom}(H_{{\rm FG}}(\Phi),-)$.
This Lie group, which is the projective limit of linear algebraic groups $G_{n}$ embedded as Zariski closed
subsets in some ${\rm GL}_{m_{n}}$s, is rich enough to encode (dimensionally regularized) Feynman
rules characters with respect to the scale and angle dependence of
amplitudes \cite{brown-kreimer-1}. In addition, we can also
determine the infinite dimensional complex graded pro-unipotent Lie group
\begin{equation}
\mathbb{G}_{{\rm DSE}}(\mathbb{C}):= {\rm Hom}(H_{{\rm DSE}},
\mathbb{C})
\end{equation}
for each given Dyson--Schwinger equation DSE. There exists a natural
injective Hopf algebra homomorphism $\rho: H_{{\rm DSE}}
\rightarrow H_{{\rm FG}}(\Phi)$. If we apply ${\rm Spec}$ as a
contravariant functor, then we can obtain a surjective morphism
$\tilde{\rho}: {\rm Spec}(H_{{\rm FG}}(\Phi)) \rightarrow {\rm
Spec}(H_{{\rm DSE}})$ between spaces of prime ideals in the
commutative algebras $H_{{\rm FG}}(\Phi)$ and $H_{{\rm DSE}}$
equipped with the Zariski topology. This map can be lifted onto the
surjective group homomorphism
\begin{equation} \label{dse-lie-group-1}
\overline{\rho}: \mathbb{G}_{\Phi}(\mathbb{C}) \longrightarrow
\mathbb{G}_{{\rm DSE}}(\mathbb{C}).
\end{equation}
The Hopf subalgebra $H_{{\rm DSE}}$ is actually a Hopf ideal in $H_{{\rm FG}}(\Phi)$ and we can consider its corresponding quotient Hopf algebra $H_{{\rm FG}}(\Phi) / H_{{\rm DSE}}$. In the dual setting, the associated sub-group scheme of this quotient Hopf algebra can determine a Lie sub-group of $\mathbb{G}_{\Phi}(\mathbb{C})$. However the existence of Lie (sub)groups $\mathbb{G}_{{\rm DSE}}(\mathbb{C})$
corresponding to Dyson--Schwinger equations have been applied to
bring a new geometric setting for the study of non-perturbative
parameters in the context of differential systems together with
singularities. The construction of a category of flat equi-singular
$\mathbb{G}_{{\rm DSE}}(\mathbb{C})$-connections with respect to
each equation DSE has been addressed to encode the BPHZ
renormalization of the unique solution $X_{{\rm DSE}}$ in the
context of differential Galois theory. In other words, the
Connes--Marcolli geometric interpretation of counterterms and the
Connes--Marcolli universal Tannakian machinery in dealing with
renormalizable physical theories have been developed for the study
of Dyson--Schwinger equations. We can study the geometry of any given equation DSE in terms of finite dimensional representations of the Lie group
$\mathbb{G}_{{\rm DSE}}(\mathbb{C})$ which is organized in the neutral
Tannakian category ${\rm Rep}_{\mathbb{G}^{*}_{{\rm DSE}}}$. This class of categories has
been embedded as subcategories into the Connes--Marcolli universal
category $\mathcal{E}^{{\rm CM}}$. Thanks to these backgrounds, we
already have the construction of a differential Galois theory for
the computation of fundamental non-perturbative parameters such as
global $\beta$-functions and non-perturbative counterterms in the
language of Picard--Fuchs equations. In addition, it is now possible
to identify a class of mixed Tate motives with respect to each
Dyson--Schwinger equation. \cite{shojaeifard-1, shojaeifard-3,
shojaeifard-4, shojaeifard-5}

On the first hand, under the Hopf algebraic renormalization platform, Dyson--Schwinger
equations are the fundamental tools to determine Hopf subalgebras in Quantum Field Theory. On the second hand, ''substructure'' is a mutual fundamental concept in Galois Theory and Theory of Computation where intermediate algorithms have been studied recently in terms of algebraic methods in Galois Theory.
On the third hand, the Manin's program for the interpretation of the
Halting problem in the context of the Connes--Kreimer BPHZ
renormalization machinery has initiated the foundations of a
brilliant interrelationship between the amount of computability and
the computation of counterterms encoded via the renormalization
Hopf algebra of the Halting problem. The combination of these achievements have already led us to describe intermediate algorithms in terms of (systems of) Dyson--Schwinger equations where we can understand the amount of non-computability of non-perturbative renormalized values and their corresponding non-perturbative counterterms in the language of the Halting problem, Operad Theory and the theory of Hall sets.
Results in this direction show a new multidisciplinary bridge between Theoretical Computer Science and non-perturbative Quantum Field Theory. \cite{delaney-marcolli-1, manin-2, manin-3,
manin-4, shojaeifard-8, yanofsky-1, yanofsky-2}

Applications of the theory of graphons to Quantum Field Theory is
another output of the renormalization Hopf algebraic platform. The foundations of
a new combinatorial interpretation of Feynman diagrams and their
infinite formal expansions have been studied recently where we
embed (large) Feynman diagrams into a compact topological space
obtained by an enrichment of the Connes--Kreimer renormalization
Hopf algebra with respect to the cut-distance topology. The immediate
consequence of this new topological-combinatorial setting is the
formulation of a generalization of the BPHZ renormalization for
solutions of Dyson--Schwinger equations. In this direction, thanks
to some tools in Measure Theory, a new differential calculus
machinery on Feynman diagrams was built which has led us to
study the evolution of Dyson--Schwinger equations in terms of their
partial sums. \cite{shojaeifard-9,
shojaeifard-10}

Applications of non-commutative differential graded algebras to
Quantum Field Theory were also considered to study the geometry of
quantum motions where some new models of gauge theories have been
obtained. In this direction, the structure of a non-perturbative
version of the Connes--Kreimer renormalization group has been
described in the language of integrable systems. In addition, recently, we have formulated a new class of infinite dimensional spectral triples which encode the geometry of Dyson--Schwinger equations. These observations enable us to clarify the essential role of Noncommutative Geometry in the future of non-perturbative Quantum Field Theory models.
\cite{duboisviolette-1, duboisviolette-2, duboisviolette-3,
duboisviolette-kerner-madore-1, shojaeifard-2, shojaeifard-7, submitted-1}

Combinatorial Dyson--Schwinger
equations, Connes--Kreimer--Marcolli Hopf algebraic
renormalization platform and the theory of graphons for sparse graphs are the main motivational topics for us in this
research to study non-perturbative Quantum Field Theory. Our main
attempt in this work is to develop mathematical structures
originated from Dyson--Schwinger equations to discover some new
information about the complicated behavior of Quantum Field Theories under
strong coupling constants. This work aims also to bring some new
mathematical tools to deal with the computation of non-perturbative
parameters.

Under a combinatorial setting, we plan to apply the theory of
graphon representations of Feynman diagrams together with other
combinatorial and topological tools to provide a new Hopf algebraic
machinery for the renormalization of solutions of Dyson--Schwinger equations under different running coupling constants. In a more general platform, we consider the cut-distance topological vector space $\mathcal{S}^{\Phi,g}$ generated by all fixed point equations of Green's functions of a given strongly coupled gauge field theory $\Phi$ under different running coupling constants $\lambda g$ with respect to the bare coupling constant $g$ such that $0 < \lambda \le 1$. It is possible to topologically complete the space $\mathcal{S}^{\Phi,g}$ to achieve a new Banach space where we need to work on equivalence classes of Dyson--Schwinger equations up to the weakly isomorphic as an equivalence relation. Equations ${\rm DSE}_{1}, {\rm DSE}_{2}$ are called weakly isomorphic (or weakly equivalent) if their corresponding solutions $X_{{\rm DSE}_{1}}, X_{{\rm DSE}_{2}}$ have weakly isomorphic Feynman graphon models. In other words,
\begin{equation}
{\rm DSE}_{1} \approx {\rm DSE}_{2} \iff W_{X_{{\rm DSE}_{1}}} \sim W_{X_{{\rm DSE}_{2}}}.
\end{equation}
Up to this equivalence relation, we can define the distance
\begin{equation}
d(X_{{\rm DSE}_{1}},X_{{\rm DSE}_{2}}):= d_{{\rm cut}}([W_{X_{{\rm
            DSE}_{1}}}],[W_{X_{{\rm DSE}_{2}}}])
\end{equation}
between Dyson--Schwinger equations such that $d_{{\rm cut}}$ is the metric structure on unlabeled Feynman graphon classes. Each unlabeled Feynman graphon class contains all weakly isomorphic Feynman graphons and all relabeled Feynman graphons generated by invertible measure preserving transformations of the base measure space $(\Omega,\mu_{\Omega})$ of our graphon model. We then address some new applications of
combinatorial polynomials such as Kirchhoff--Symanzik and Tutte
polynomials to formulate a new parametric representation theory for solutions of Dyson--Schwinger equations. This study is useful for the construction of
algebro-geometric Feynman rules on the topological Hopf algebra
$\mathcal{S}^{\Phi}_{{\rm graphon}}$ of Feynman graphons. We also concern
the concept of complexity for the description of non-perturbative parameters where we explain the construction of a new multi-scale Renormalization Group machinery on $\mathcal{S}^{\Phi,g}$ which is useful on two levels. Firstly, it can provide a mathematical machinery for the approximation of Dyson--Schwinger equations in strong couplings via equations in weaker couplings. Secondly, it helps us to initiate a new version of the Kolmogorov complexity in dealing with Dyson--Schwinger equations.

Under a geometric setting, we
show some new applications of Noncommutative Geometry, Measure
Theory and Functional Analysis to study the geometry of
non-perturbative Quantum Field Theory. We build the G\^{a}teaux differential calculus on the space $\mathcal{S}^{\Phi,g}$ which provides a global differential geometry for the geometric description of strongly coupled gauge field theories. In addition, we explain the concept of evolution on
$\mathcal{S}^{\Phi,g}$ with respect to a generalized version of the
Fourier transformation. We also build the theory of spectral triples for solutions of Dyson--Schwinger equations which provides a local differential geometry in dealing with the quantum motions in strongly coupled gauge field theories.

Under a foundational setting, we apply Feynman graphon models to consider some intrinsic foundations of Quantum Field Theory models under strong running coupling constants such as quantum entanglement and logical concepts. In this
direction, we offer a new mathematical methodology for the
description of quantum entanglement in interacting quantum physical theories in the context of the theory of lattices and intermediate
substructures in the Theory of Computation. This mathematical formalism
enables us to explain information flow in physical theories with
strong couplings on the basis of lattices of topological Hopf
algebras and Lie subgroups. We lift this mathematical modeling onto a
categorical setting to show that the universal category
$\mathcal{E}^{{\rm CM}}$ is suitable to encode the quantum entanglement
process. At this level, we expect to show a new application of
motives in dealing with information flow. In addition, we put forward
the construction of a new topos of presheaves on a particular base
category which encodes the logical evaluations of propositions about the cut-distance topological regions of Feynman
diagrams. We try to show that how this topos model concerns the strength of running coupling constants for the logical study of gauge field theories. For this purpose, objects of the base category of the non-perturbative topos encode all Dyson--Schwinger equations of a given gauge field theory under different running coupling constants. The non-perturbative topos can evaluate the logical propositions about cut-distance topological regions which are generated by solutions of Dyson--Schwinger equations.


\chapter{\textsf{A theory of renormalization for Dyson--Schwinger equations}}

\vspace{1in}

$\bullet$ \textbf{\emph{Quantum Field Theory}} \\
$\bullet$ \textbf{\emph{Hochschild cohomology of the renormalization
Hopf algebra}} \\
$\bullet$ \textbf{\emph{Renormalization Hopf algebra of Feynman
graphons and filtration of large Feynman diagrams}} \\
$-$ \textbf{\emph{Graphons}} \\
$-$ \textbf{\emph{Feynman graphons}} \\
$\bullet$ \textbf{\emph{A generalization of the BPHZ renormalization machinery for large Feynman diagrams via Feynman graphons}}

\newpage

Having no comprehensive physical description of infinite formal
expansions of Feynman diagrams which contribute to polynomials with
respect to strong running coupling constants such as
(\ref{coupling-poly-1}), it is indeed difficult to analyze a
renormalization program for these infinite expansions. The major
discourse in this situation is to find some meaningful mathematical
insights associated to fixed point equations of Green's
functions. Then these mathematical reasonings serve to compel
technical terms and models for the explanation of the
renormalization process for Dyson--Schwinger equations.

The original task in this chapter is to explain a renormalization
program on the space $\mathcal{S}^{\Phi,g}$ which consists of
Dyson--Schwinger equations with respect to running couplings
$\lambda g$ in a given physical theory $\Phi$ with the bare coupling
constant $g$. For this purpose, we apply the theory of graphons for sparse graphs to
build a new theory of graph function representations for Feynman diagrams and large Feynman diagrams which contribute to solutions of Dyson--Schwinger equations. These new analytic generalizations of Feynman diagrams can be organized into a new graded topological Hopf algebra
$\mathcal{S}^{\Phi}_{{\rm graphon}}$ which can lead us to formulate a new modification of the
Connes--Kreimer BPHZ renormalization program for the computation of non-perturbative counterterms and non-perturbative renormalized values. The complex
Lie group associated to the Hopf algebra $\mathcal{S}^{\Phi}_{{\rm
        graphon}}$ is the key tools for the formulation of a non-perturbative
Renormalization Group which encodes the required $\beta$-functions to govern the dynamics of strongly coupled Dyson--Schwinger equations.

\section{\textsl{Quantum Field Theory}}

Suppose we have a quantized field theory with the general Lagrangian
$\mathcal{L}=\mathcal{L}(\phi,\partial_{\mu} \phi)$ which can contain the free part and the interaction part. The typical free Lagrangian density is given
by
\begin{equation}
 \mathcal{L}_{{\rm free}}:= \frac{1}{2}\big( (\partial_{\mu} \phi)(\partial^{\mu}\phi) - m^{2}\phi^{2} \big)
\end{equation}
which leads us to the free Klein--Gordon equation of motion
\begin{equation}
\big(\partial_{\mu}\partial^{\mu} + m^{2} \big) \phi =0.
\end{equation}
The interaction part $\mathcal{L}_{{\rm int}}$ encodes interactions
of elementary particles in the physical theory. The transition
amplitudes from initial states to all finite states can be formulated
under the S-Matrix setting. It is possible to calculate these matrix elements in
terms of a class of correlation functions with the general form
\begin{equation}
G_{n}(x_{1},...,x_{n}):= <0|T\phi(x_{1})...\phi(x_{n})|0>
\end{equation}
such that $|0>$ is the vacuum ground state. These equations, known
as Green's functions, allow us to formulate perturbative Quantum
Field Theory in terms of formal expansions with the general form
$$G_{n}(x_{1},...,x_{n})= $$
\begin{equation}
\sum_{j=1}^{\infty} \frac{(-1)^{j}}{j!} \int d^{4}y_{1}...d^{4}y_{j}
<0|T\phi_{{\rm in}}(x_{1}) ... \phi_{{\rm in}}(x_{n})
\mathcal{L}_{{\rm int}}(y_{1})...\mathcal{L}_{{\rm int}}(y_{j})|0>
\end{equation}
such that $\phi_{{\rm in}}$ is the initial state of $\phi$ in the
infinite past. If we apply the Wick's Theorem and normal ordering, then
the vacuum expectation value can be described as the integrals of
propagators that typically depend on differences of space-time
vectors. The rigorous challenge is the existence of divergencies in
these integrals with respect to the domains of integrations where
applying regularization machineries (such as Dimensional
Regularization) help us to study these integrals in the context of
Laurent series with finite pole parts. \cite{breitenlohner-maison-1,
calaque-strobl-1}

Feynman diagrams in Quantum Field Theory are useful combinatorial tools to encapsulate the summation over probability amplitudes
corresponding to all possible exchanges of virtual particles
compatible with a process at a given (loop) order. These decorated
diagrams, as a set of edges and a set of vertices, can simplify
the description of interactions of elementary particles in terms of the time parameter in a quantum system. Their decorations
are determined by fundamental parameters of the physical
theory. Momentum and position are actually Fourier
transforms of each other where we can translate diagrams with
respect to momentum space to their corresponding iterated integrals
via Feynman rules. Each closed loop
associates to an integrate over the corresponding momentum. The whole diagram obeys the conservation of
momenta which tells us that the amount of momenta of input
particles in an interaction procedure is the same as the amount of
momenta of output particles.

For example, thanks to the Schwinger parameter $t$, we can consider
\begin{equation}
\frac{1}{p^{2}+m^{2}} = \int_{0}^{\infty} dt \ {\rm
exp}(-t(p^{2}+m^{2}))
\end{equation}
as the propagator for each edge and
\begin{equation}
\int d^{4}x \ {\rm exp}(i \sum_{j}p_{j}x)
\end{equation}
as the propagator for each vertex. In this setting, each edge has a
factor
\begin{equation}
G(x,y;t) = \int \frac{d^{4}p}{(2\pi)^{4}} \ {\rm exp}(ip.(x-y) -
t(p^{2}+m^{2})).
\end{equation}

We can combinatorially reformulate Green's functions in terms of formal expansions of Feynman diagrams. It clarifies the self-similar property of these fundamental functions in Quantum Field Theory. We have
$$\mathcal{G} := 1 + \int I_{\gamma} + \int  \int  I_{\gamma} I_{\gamma} + \int  \int \int I_{\gamma} I_{\gamma}  I_{\gamma} + ...$$
\begin{equation} \label{dse-2}
= 1 + \int I_{\gamma} (1+ \int I_{\gamma} + \int \int I_{\gamma}
I_{\gamma} + ...)
\end{equation}
such that $I_{\gamma}$ is the Feynman integral corresponding to the
primitive (1PI) Feynman diagram $\gamma$. This can be encapsulated via equations such as
\begin{equation} \label{dse-3}
\mathcal{G} = 1 + \int I_{\gamma} \mathcal{G}
\end{equation}
such that its fixed point equations determine Dyson--Schwinger
equations. This formulation of Quantum Field Theory is the result of the path
integral approach to Lagrangian formalism where we study the behavior
of an elementary particle in a system with infinite degrees of
freedom in terms of the sum over all possible situations (such as
trajectories, interactions) which could be selected by the particle. In
terms of some conditions dictated by physical theory, each possible
situation has a particular weight which should be considered in
computational processes of Feynman integrals. \cite{breitenlohner-maison-1, calaque-strobl-1, nair-1,
weinzierl-1}

For example, in QED we deal with
interactions of electron and positron (as matter) with photons (as
electromagnetic waves with different quantized sizes of energies).
There exist six different fundamental interactions namely, the emission of
photon from electron or positron, absorbing a photon via electron or
positron, the creation of a photon via annihilation of the pair
(electron, positron), creation of a pair (electron, positron) via the
annihilation of a photon. All Feynman diagrams and Dyson--Schwinger equations in QED, which
might contain complicated off-shell interactions of virtual
particles, are built on the basis of those six fundamental
interactions. There exists a class of elementary graphs which
play the role of building blocks to make all possible Feynman diagrams in a
physical theory. These graphs, which are called one particle
irreducible Feynman diagrams, remain connected after removing one
internal edge. By induction we can define
$n$-particle irreducible Feynman diagrams which remain connected
after removing $n$ internal edges. It is easy to see
that each $n$-particle irreducible graph is a $(n-1)$-particle
irreducible graph.

The coupling constants in Quantum Field Theory show the
strength of the interactions among elementary particles. The
regularization of UV divergent integrals and the renormalization
procedure generate a scale dependence. The UV cut-off dependence
of the couplings can be eliminated by allowing the couplings and masses
(which appear in the Lagrangian) to acquire a scale dependence. Then
we normalize them to a measured value at a given scale.
Generally speaking, there are two classes of couplings namely, the
bare coupling constant as the original strength of a fundamental
force and the running coupling constants or effective couplings as
the result of renormalization procedures. Quantum Chromodynamics
(QCD) is known as the most successful fundamental gauge theory of
strong interactions. It studies the hadronic interactions involving
quarks and gluons at both long and short distances. Its symmetry
group is ${\rm SU}(3)$ where it includes $N_{f}$ family of quarks
$\psi_{f}^{i}$ and gluons $A_{\mu}^{i}$. Some experimental evidences
inform us that at a critical temperature around $T_{c} \approx 170 \
{\rm MeV}$, the QCD matter undergoes a deconfining phase transition into
quark-gluon plasma. Perturbative QCD is a method based on expanding
different physical quantities with respect to the gauge coupling
constant $g$ which is applied in the region $T \gg T_{c}$
where $g$ is small. The phenomenology of the (bare and running)
coupling constants have been discussed in terms of the uncertainties
in their values at short distances which leads us to a total
theoretical uncertainty in Physics at large hadron collider such as
Higgs production via gluon fusion. In this situation we can still
have hope to apply asymptotic freedom and perturbative calculations
of Renormalization Group equations. However at high perturbative
orders it becomes necessary to evaluate large numbers of multi-loop
Feynman diagrams in the effective theory. \cite{bali-1,
deur-brodsky-deteramond-1, marino-2}

However the behavior of running coupling constants of the physical systems at long
distances such as the scale of the proton mass in order to
understand hadronic structure, quark confinement and hadronization
processes should be analyzed under non-perturbative platforms. In these cases we can study the phenomenology of strong bare or running couplings in terms
of the mathematical foundations of quantum motions of physical systems namely, Dyson--Schwinger equations. We can address recent
theoretical progress for the computation of non-perturbative
parameters derived from solutions of Dyson--Schwinger equations in the context of Combinatorics, Geometry and Category
Theory. \cite{bergbauer-kreimer-1, broadhurst-kreimer-2, kreimer-4,
marino-1, marino-3, roberts-schmidt-1, shojaeifard-3, shojaeifard-4,
shojaeifard-5, vanbaalen-kreimer-uminsky-yeats-1, weinzierl-3}

The running of a coupling constant originates from the
renormalization procedure while predictions for observables should
be determined independent of the choice of renormalization map and
regularization scheme. This invariance with respect to the choice of
renormalization program is encoded via a symmetry group. The
running coupling is an expansion parameter in the perturbative
series describing an observable and there exists the Landau pole as
the point where the perturbative expression of the running coupling
diverges. It means that the full perturbative expression is actually a
non-observable quantity. The observable is independent of the
renormalization scheme while the series's coefficients and the running
coupling are related to the renormalization scheme. Under asymptotic
freedom behavior at short distances, we can get the first
coefficient series as an independent parameter while at very large
distances dependency can not be neglected. This discussion tells
us that the running couplings are not observables because they are
strongly depended on the renormalization scheme at large distances.
In short, the running couplings have weak scale dependence at
distances smaller than $10^{-16}$ m such that this controllable weak
behavior tends to a strong scale dependence larger than a tenth of a
Fermi. This dependency on the scale is restored at larger distances
due to the confinement of quarks and gluons. \cite{bali-1,
deur-brodsky-deteramond-1, marino-1}

The Ward--Takahashi identities on Feynman diagrams tell us that the
photon propagator is the only propagator in QED which contributes to
the running of the coupling constant. The Slavnov--Taylor identities on
Feynman diagrams tell us that intermediate gauge-dependent
quantities in non-abelian gauge theories provide final
gauge-independent results for observables
\cite{kreimer-sars-suijlekom-1, suijlekom-1}. Thanks to these facts,
it is possible to rewrite Dyson--Schwinger equations in terms of
some running couplings to achieve some intermediate quantities which
are useful to simplify the original complicated non-perturbative
type of equations by some approximations. In a general
configuration, Dyson--Schwinger equations are polynomials with
respect to bare or running coupling constants which means that any
change in the amount of running couplings will make direct influence
on the behavior of these equations. Therefore these non-perturbative
type of equations are good tools for the study of the phenomenology
of strong couplings.

Dimensional Regularization was introduced by 't Hooft, Veltman,
Bollini and Gambiagi as a method to regularize ultraviolet
divergencies in a gauge invariant way to complete the proof of
renormalizability. The method works in $D=4-2\epsilon$ space-time
dimensions where divergencies for $D \rightarrow 4$ appears as poles
in $1/\epsilon$. This method also regulates infrared singularities
where if we remove the auxiliary IR regulator, the IR divergencies
appear as poles in $1/\epsilon$. For $\epsilon>0$, we can obtain a
well-defined result which we can be analytically extended to the
whole complex D-plane. The only essential change in the structure of
Feynman rules is to replace the couplings in the Lagrangian via the
transformation $g \longmapsto g \mu^{\epsilon}$ such that $\mu$ is
an arbitrary mass scale. Dimensional Regularization together with
Minimal Subtraction can provide a practical renormalization
program for Feynman integrals with nested sub-divergencies. These
ill-defined parts can be eliminated step by step under a forest
formula setting where the Bogoliubov--Parasiuk--Hepp preparation
allows us to generate some finite values. This particular
renormalization program was reconsidered by Connes and Kreimer under
a modern Hopf algebraic setting to generate counterterms and renormalized
values in terms of the Riemann--Hilbert problem and the Birkhoff
factorization. In this context, Dimensional Regularization is
encapsulated by the space of loops with the domain of a punctured
infinitesimal disk around zero and with values in a pro-unipotent
complex Lie group associated to the renormalization Hopf algebra of
Feynman diagrams. \cite{connes-kreimer-2, connes-kreimer-3,
ebrahimifard-manchon-1, figueroa-graciabondia-2, kreimer-panzer-1,
kreimer-yeats-3}

While working on the applications of the BPHZ procedure to the level
of many-loop graphs is one of the interesting topics in Quantum
Field Theory, the main information of a physical theory are encoded
in (infinite) formal expansions of Feynman diagrams. Thanks to the
Connes--Kreimer--Marcolli theory, the required mathematical tools
for an extension of the BPHZ procedure to the level of
Dyson--Schwinger equations has already been considered under
geometric and algebraic settings. This platform enables us to study any Dyson--Schwinger equation DSEin terms of its corresponding
complex Lie group $\mathbb{G}_{{\rm DSE}}(\mathbb{C})$ where the
existence of the Hopf--Birkhoff factorization on this Lie group has
led us to determine counterterms (which contribute to the unique
solution of the equation DSE) in the language of differential
systems together with irregular singularities. These differential
systems, which are on the basis of equi-singular flat
$\mathbb{G}_{{\rm DSE}}(\mathbb{C})$-connections, determine a new
class of systems of Picard--Fuchs equations with regular
singularities. \cite{shojaeifard-5, shojaeifard-8}

\section{\textsl{Hochschild cohomology of the renormalization Hopf algebra}}

The basic elements of the path integral method in Quantum Field
Theory are (divergent) iterated Feynman integrals over the momentum
space such that the integrands are determined from a definite
collection of rules originated from the physical theory. We can encode these
integrals in terms of a class of combinatorial decorated
finite diagrams which are known as Feynman diagrams where
sub-divergencies in the original integral are presented in terms of
the existence of nested or overlapping loops in the main diagram.
The Kreimer's coproduct, which highlights the combinatorics of
removing sub-divergencies from integrals, enables us to factorize
the original complicated Feynman diagram into its basic
sub-divergencies (as sub-graphs). This factorization reduces several
layers of complications in the computational processes of
perturbative renormalization in terms of a certain graded
commutative non-cocommutative Hopf algebra denoted by $H_{{\rm
FG}}(\Phi)$. It is a graded Hopf algebra with respect to the first
Betti number of Feynman diagrams which means that $H_{{\rm
FG}}(\Phi)=\bigoplus_{n \ge 0} \mathcal{H}_{n}$ such that
$\mathcal{H}_{0}=\{\mathbb{I}\}$ and for each $n$, $\mathcal{H}_{n}$
is the vector space of divergent 1PI $n$-loop Feynman diagrams and
products of Feynman diagrams with overall loop number $n$. There is
also another graduation parameter to build a graded Hopf algebra. We
can show that $H_{{\rm FG}}(\Phi)$ is a graded Hopf algebra with
respect to the number of internal edges of Feynman graphs such that
the components of this grading have finite dimensions as the vector
spaces. \cite{borinsky-1, connes-marcolli-1, kreimer-1, kreimer-7}

The original version of the Kreimer's coproduct was defined in the
language of parenthesized words to characterize nested,
independent or overlapped sub-divergencies via sequences of letters and their linear combinations. It encapsulates the
Bogoliubov--Zimmermann forest formula based on the formal expansion
\begin{equation} \label{copro-1}
\Delta_{{\rm FG}}(\Gamma) = \Gamma \otimes \mathbb{I} + \mathbb{I}
\otimes \Gamma + \sum_{\gamma \subset \Gamma} \gamma \otimes \Gamma
/ \gamma
\end{equation}
for each Feynman diagram $\Gamma$ such that the sum is over all
disjoint unions of 1PI divergent proper Feynman subgraphs.
\cite{connes-kreimer-1, figueroa-graciabondia-1,
figueroa-graciabondia-2, kreimer-1}

Generally speaking, for a unital algebra $(A,m,e)$ and a counital
coalgebra $(C,\Delta,\varepsilon)$ over a field $\mathbb{K}$ of
characteristic zero, let ${\rm Hom}(C,A)$ be the vector space of all
$\mathbb{K}$-linear maps from $C$ to $A$. Equip this space with a
convolution product defined in terms of the following composition
\begin{equation}
C \longrightarrow^{\Delta} C \otimes C \longrightarrow^{f*g} A
\otimes A \longrightarrow^{m} A
\end{equation}
to achieve an algebra with the unit $e \circ \varepsilon$. A
bialgebra $(H,m,e,\Delta,\varepsilon)$ in which the identity map
${\rm id}_{H}$ is invertible under the convolution product is a Hopf
algebra. This particular inverse $S$ which obeys the following
property
\begin{equation}
{\rm id}_{H} * S = S * {\rm id}_{H} = e \circ  \varepsilon
\end{equation}
is called the antipode such that it is  a unital algebra counital
coalgebra antihomomorphism. \cite{cartier-1,
figueroa-graciabondia-2}

There are natural graduation parameters on Feynman diagrams such as
number of internal edges or number of independent loops. The
graduation parameter and the coproduct (\ref{copro-1}) determine the
required antipode for the construction of a free commutative
connected graded finite type Hopf algebra on Feynman diagrams of a
given physical theory $\Phi$. The antipode deforms Feynman rules
characters to obtain renormalized values. The Hopf algebra $H_{{\rm
FG}}(\Phi)$ has a Lie algebraic origin and in addition, it can be
simplified via decorated rooted trees to provide a universal model
for perturbative renormalization. The rooted tree version of the
renormalization coproduct (\ref{copro-1}) can be defined in terms of
the notion of ''admissible cuts'' on trees. \cite{broadhurst-kreimer-1,
ebrahimifard-kreimer-1, ebrahimifard-guo-kreimer-1,
ebrahimifard-guo-kreimer-2, figueroa-graciabondia-1,
mencattini-kreimer-1}

The factorization of a Feynman diagram into its primitive components
can be reversed under some conditions via the insertion operator
which enables us to glue subdiagrams. It is important to note that
in gauge field theories we should work on a quotient of the
renormalization Hopf algebra with respect to Ward identities and
Slavnov--Taylor identities to achieve a unique factorization for
each Feynman diagram with respect to the insertion operator. The
insertion operator provides a Lie algebraic structure on Feynman
diagrams such that the graded dual of its universal enveloping
algebra will be equivalent to the renormalization Hopf algebra.
\cite{figueroa-graciabondia-2, kreimer-11, mencattini-kreimer-1,
suijlekom-2}

Theory of Hochschild cohomology for bialgebras is useful for us to
formulate the Hochschild equation on Feynman diagrams which results
a recursive formulation for the renormalization coproduct. This class of equations can provide the key tools for the reformulation of Dyson--Schwinger equations. In this part we review the foundations of the Hochschild equation with respect to the renormalization coproduct and for further details we address  \cite{ebrahimifard-fauvet-1, figueroa-graciabondia-1,
    figueroa-graciabondia-2, kreimer-3, kreimer-11, yeats-1} as the major sources.

For a given bialgebra $H$ such as $H_{{\rm FG}}(\Phi)$, the dual of the
coalgebra $(H,\Delta,\varepsilon)$ is an algebra $H^{*}$ such that
the unit map $\mathbb{I}$ of $H$ transposes to a character
$\mathbb{I}^{t}$ of $H^{*}$. Therefore we can build Hochschild
cohomology groups $H^{n}(H,H^{*})$ such that n-cochains are linear
maps such as $T:H \longrightarrow H^{\otimes n}$. We can transpose
them to $n$-linear maps such as $\rho_{T}:(H^{*})^{n}
\longrightarrow H^{*}$ where we have
\begin{equation}
\rho_{T}(\Gamma_{1},...,\Gamma_{n}):=T^{t}(\Gamma_{1} \otimes ...
\otimes \Gamma_{n}).
\end{equation}
In this setting, the Hochschild coboundary operator ${\rm
\textbf{b}}$ can be determined by the relation
\begin{equation}
<\Gamma_{1} \otimes ... \otimes \Gamma_{n+1},{\rm
\textbf{b}}T(\Gamma)>:=<{\rm \textbf{b}}
\rho_{T}(\Gamma_{1},...,\Gamma_{n+1}),\Gamma>
\end{equation}
for each $\Gamma \in H$. Define $\Delta_{j}:H^{\otimes n}
\longrightarrow H^{\otimes (n+1)}$ as the homomorphism which applies
the coproduct $\Delta$ only on the $j$th factor. Now we can show
that
\begin{equation}
<\rho_{T}(\Gamma_{1},...,\Gamma_{j}\Gamma_{j+1},...,\Gamma_{n+1}),\Gamma>
= <\Gamma_{1} \otimes ... \otimes
\Gamma_{n+1},\Delta_{j}(T(\Gamma))>.
\end{equation}
It leads us to rewrite the Hochschild coboundary operator as the
following way
\begin{equation}
{\rm \textbf{b}}T(\Gamma):= ({\rm id} \otimes T) \Delta(\Gamma) +
\sum_{j=1}^{n} (-1)^{j} \Delta_{j}(T(\Gamma)) + (-1)^{n+1} T(\Gamma)
\otimes \mathbb{I}.
\end{equation}
The resulting cohomology groups
$H^{n}(H^{*},H^{*}_{\mathbb{I}^{t}})$ are indeed the Hochschild
cohomology theory of the bialgebra $H$. It is easy to check that
linear forms on $H$ are $0$-cochains and one cocycles are linear
maps such as $l:H \longrightarrow H$ which obeys the following
relation
\begin{equation}
\Delta(l) = l \otimes \mathbb{I} + ({\rm id} \otimes l)
\Delta.
\end{equation}

The Hochschild cohomology with values in a $H$-bimodule $A$ (such
as the regularization algebra) is defined by working on $n$-cochains via
the vector space $C^{n}:=C^{n}(H,A)$ consisting of $n$-linear maps
$\psi:H^{n} \longrightarrow A$ with the $H$-bimodule structure
\begin{equation}
(\gamma_{1} . \psi . \gamma_{2}) (\Gamma_{1},...,\Gamma_{n}):=
\gamma_{1}.\psi(\Gamma_{1},...,\Gamma_{n}).\gamma_{2}.
\end{equation}
The coboundary map ${\rm \textbf{b}}: C^{n} \longrightarrow C^{n+1}$
is given by
$${\rm \textbf{b}}(\psi)(\Gamma_{1},...,\Gamma_{n+1}) = \Gamma_{1}. \psi(\Gamma_{2},...,\Gamma_{n+1})$$
\begin{equation}
+ \sum_{j=1}^{n}
(-1)^{j}\psi(\Gamma_{1},...,\Gamma_{j}\Gamma_{j+1},...,\Gamma_{n+1})
+ (-1)^{n+1} \psi(\Gamma_{1},...,\Gamma_{n}).\Gamma_{n+1}.
\end{equation}
The resulting cohomology groups are denoted by
$\mathcal{H}^{n}(H,A)$.

Let us consider the Hochschild equation for the algebra
$\mathbb{K}[X]$ which is also equipped with a cocommutative
coalgebra structure by considering the indeterminate $X$ as the
primitive object where $\varepsilon(X)=0$. For all $k \ge 2$, by
induction, we can show that
\begin{equation}
\Delta(X^{k}) = (\Delta X)^{k} = \sum_{j=0}^{k} \binom {k} {j}
X^{k-j} \otimes X^{j}.
\end{equation}
For any linear form $\varrho$ on $\mathbb{K}[X]$, we have
\begin{equation}
{\rm \textbf{b}} \varrho(X^{k}) = ({\rm id} \otimes \varrho)
\Delta(X^{k}) - \varrho(X^{k}) \otimes \mathbb{I} = \sum_{j=1}^{k}
\binom {k} {j} \varrho(X^{k-j})X^{j}.
\end{equation}
${\rm \textbf{b}} \varrho$ is a linear transformation of polynomials
which does not increase the degree. It shows that the integration
map $T(X^{k}):= X^{k+1} / (k+1)$ is not a 1-coboundary but it is an
one cocycle.

Thanks to this Hochschild cohomology theory, it is possible to
define the renormalization coproduct under a recursive setting.

A graded bialgebra $H$ over a field $\mathbb{K}$ is graded as an
algebra and as a coalgebra. It is called connected if the degree
zero component of the graduation structure consists of scalars (i.e.
elements of the field $\mathbb{K}$). We have
\begin{equation}
H= \bigoplus_{n \ge 0} H_{n}, \ \ H_{m}H_{n} \subseteq H_{m+n}, \ \
\Delta(H_{n}) \subseteq \bigoplus_{p+q=n}H_{p} \otimes H_{q}.
\end{equation}

The coproduct in the connected graded bialgebra of Feynman diagrams can be presented in
terms of the Sweedler notation such that for $\Gamma \in \mathcal{H}_{n}$,
we have
\begin{equation}
\Delta(\Gamma) = \Gamma \otimes \mathbb{I} + \mathbb{I} \otimes
\Gamma + \sum \Gamma'_{1} \otimes \Gamma'_{2}
\end{equation}
where terms $\Gamma'_{1}$ and $\Gamma'_{2}$ all have degrees between
$1$ and $n-1$. The counit equations
\begin{equation}
\sum \varepsilon(\Gamma'_{1}) \Gamma'_{2} = \sum \Gamma'_{1} \varepsilon(\Gamma'_{2}) = \Gamma, \\
\sum S(\Gamma'_{1}) \Gamma'_{2} = \sum \Gamma'_{1} S(\Gamma'_{2}) =
\varepsilon(\Gamma)\mathbb{I}
\end{equation}
tell us that $\Delta(\Gamma)$ must contain terms $\Gamma
\otimes \mathbb{I} \in \mathcal{H}_{n} \otimes \mathcal{H}_{0}$ and $\mathbb{I}
\otimes \Gamma \in \mathcal{H}_{0} \otimes \mathcal{H}_{n}$ and the remaining terms
which have intermediate bidegrees. They address the equation
\begin{equation}
\Gamma = (\varepsilon \otimes {\rm id}) (\Delta (\Gamma)) =
\varepsilon (\Gamma) \mathbb{I} + \Gamma + \sum
\varepsilon(\Gamma'_{1}) \Gamma'_{2}
\end{equation}
which leads us to the relation $\varepsilon(\Gamma)=0$ for every
non-trivial Feynman diagram.

In general, the augmentation ideal in a graded bialgebra is
given by
\begin{equation}
{\rm Ker} \varepsilon = \bigoplus_{n=1}^{\infty} H_{n}.
\end{equation}
For $P:= {\rm id} - \mathbb{I}\varepsilon$ as the projector onto the
augmentation ideal, define ${\rm Aug}^{m}:= (P \otimes ...^{m \ {\rm
times}}...\otimes P) \Delta^{m-1}$ and then set
\begin{equation}
H^{(m)}:= {\rm Aug}^{m+1} / {\rm Aug}^{m}
\end{equation}
for all $m \ge 1$. It determines the
bigraded structure on our bialgebra given by
\begin{equation}
H= \bigoplus_{n \ge 0} H_{n} = \bigoplus_{m \ge 0} H^{(m)}
\end{equation}
such that for all $k \ge 1$
\begin{equation}
H_{k} \subset \bigoplus_{j=1}^{k} H^{(j)}, \ \ \ H_{0} \simeq H^{(0)}
\simeq \mathbb{K}.
\end{equation}

In addition, number of internal edges or number of
independent loops can be applied as the graduation parameters on the
renormalization bialgebra of Feynman diagrams to formulate its corresponding antipode
inductively. We have the recursive formulation
\begin{equation}
S(\Gamma)= - \Gamma - \sum S(\Gamma'_{1}) \Gamma'_{2}
\end{equation}
for the antipode of each Feynman diagram $\Gamma$ with respect to its renormalization coproduct. This formula can be applied to show the existence of a convolution inverse
for the identity map.

\section{\textsl{Renormalization Hopf algebra of Feynman graphons and filtration of large Feynman diagrams}}

The combinatorial interpretation of Feynman diagrams in terms of decorated rooted trees and the
recursive nature of the renormalization
coproduct are the key tools to build a new Hopf algebra $H_{{\rm CK}}$ of non-planar rooted trees. This combinatorial Hopf algebra together with the grafting operator $B^{+}$ has the universal property
with respect to the Hochschild cohomology theory of commutative Hopf algebras. Each Feynman diagram with nested loops can be represented
by a labeled rooted tree where the root is the symbol for the
original graph and other vertices are symbols of nested loops. Edges
among vertices determine the positions of nested loops with respect
to each other. In addition, it is possible to represent Feynman
diagrams with overlapping divergencies with rooted trees where we
should deal with linear combinations of decorated rooted trees
\cite{figueroa-graciabondia-1, figueroa-graciabondia-2, kreimer-2,
    kreimer-11}. In this section we present a new interrelationship
between the theory of infinite graphs and the fundamental structure of
Green's functions in Quantum Field Theory on the basis of rooted trees. We provide a new analytic generalization of Feynman diagrams organized in to a new topological enriched version of the Connes--Kreimer renormalization Hopf algebra. For this purpose, we use the representation of Feynman diagrams via rooted trees
to define a new graph function representation formalism in dealing with expansions of Feynman diagrams. We show the importance of analytic graphs for the study of solutions of strongly coupled Dyson--Schwinger equations in terms of sequences of partial sums and the notion of convergence with respect to the topology of graphons. We then work on the combination of the renormalization Hopf algebra and graphon models of Feynman diagrams to build a new Hopf algebra of graphons which contribute to the analytic representation of Feynman diagrams and solutions of Dyson--Schwinger equations. This Hopf algebra of graphons is useful for us to formulate a new topological generalization of the Connes--Kreimer BPHZ renormalization to generate non-perturbative counterterms and related renormalized values. We also use this graphon model approach to Dyson--Schwinger equations to explain the structure of a filtration treatment on the space of solutions of these non-perturbative equations.

On the one hand, Dyson--Schwinger equations are reformulated in
terms of the renormalization Hopf algebra and the grafting operator which act on Feynman diagrams.
The unique solution of each equation DSE with the general form
(\ref{dse-1}) is an infinite formal expansion of Feynman diagrams together with powers of running couplings.
In physical theories with weak coupling constants, we can expect to study the renormalization of these expansions
by applying many-loop computation
techniques under the perturbative setting. In physical theories with
strong couplings, these expansions contain infinite number of terms such that their renormalization generate infinite number of counterterms. In this situation we need to deal with non-renormalizable gauge field theories. However we can encapsulate all these formal expansions in terms of fixed point equations of Green's functions of the given physical theory \cite{kreimer-4}. On the other hand, graph limits, as a modern branch in infinite
combinatorics, study limits of finite combinatorial objects such as
weighted or directed graphs, multi or hyper graphs, bipartite graphs
and posets. Theory of graphons and random graphs is one of the
recent progress in infinite combinatorics where we deal with
(symmetric) measurable functions such as $W$ defined on the
probability space $\Omega$. Actually, a graph limit, as the
convergent limit of an infinite sequence of graphs, can be
represented by a graphon which does not have necessarily the unique
representation. The key tool which allows us to concern convergence
and equivalence of graphons is the concept of cut-metric
\cite{lovasz-1}. It is reasonable to think about any
relation between solutions of combinatorial Dyson--Schwinger
equations namely, formal expansions such as $\sum_{n \ge 0} (\lambda g)^{n}
X_{n}$, such that $g$ is the bare coupling constant while $\lambda g$ is any running coupling
or any rescaled version of the bare coupling, and graphons generated by infinite sequences of sparse graphs. This idea has already been
discussed in \cite{shojaeifard-9, shojaeifard-10} where we have shown that the unique solution of each equation DSE can be obtained as the cut-distance convergent limit of a sequence of random graphs generated in terms of the combinatorial information of partial sums $Y_{m}=\sum_{i=1}^{m} (\lambda g)^{i}
X_{i}$. In this work we plan to explain the structure of new Hopf algebra of graphons which can encode solutions of all Dyson--Schwinger equations in a given (strongly coupled) gauge field theory. This Hopf algebra can be equipped with the cut-distance topology which enables us to describe infinite expansions such as $X_{{\rm DSE}(\lambda g)} = \sum_{n \ge 0} (\lambda g)^{n} X_{n}$ as the unique solution of the equation ${\rm DSE}(\lambda g) \in \mathcal{S}^{\Phi}(\lambda g)$ in terms of objects in the boundary region of the compact topological space $\mathcal{S}^{\Phi}_{{\rm graphon}}$. In other words, this new platform enables us to study $\mathcal{S}^{\Phi,g} = \bigcup_{\lambda} \mathcal{S}^{\Phi}(\lambda g)$, as the collection of all Dyson--Schwinger equations under different running coupling constants in the physical theory $\Phi$, in the context of a subspace of the topological Hopf algebra $\mathcal{S}^{\Phi}_{{\rm graphon}}$.

\subsection{\textsl{Graphons}}

Generally speaking, the theory of graph limits aims to assign a limit to
a sequence of finite graphs such as $\{G_{n}\}_{n \ge 0}$ when
number of vertices of graphs in the sequence tends to infinity.
There are some different approaches to define the concept of
convergence at this level but the one approach which is based on
random graphs and cut-distance topology is very useful. We can say that a sequence $\{G_{n}\}_{n \ge 0}$ of finite
graphs is convergent when $|G_{n}|$ tends to infinity, if for each
fixed value $k$, the distribution of the random graphs $G_{n}[k]$ is
convergent when $n$ tends to infinity. In this setting, $G_{n}[k]$ is
a labeled subgraph of $G_{n}$ with vertices $1,...,k$ obtained by
selecting $k$ distinct vertices $v_{1},...,v_{k} \in G_{n}$ under a
uniformly random process. Graph limits can be generated by infinite sequences of dense or sparse graphs. We can generalize graph limits to graph functions (i.e. graphons) in an arbitrary probability space where we can study them in terms of equivalence classes of convergent sequences of finite
graphs with respect to (invertible) measure preserving transformations of the ground probability space. \cite{janson-1,lovasz-1}

For a given probability space $\Omega$,
graphons are bounded measurable symmetric functions such as $W: \Omega
\times \Omega \rightarrow [0,1]$. The symmetric condition can be
removed when we work on another class of graphons known as
bigraphons. Bigraphons are useful for us if we want to rebuild Feynman diagrams. We associate graph functions to Feynman diagrams via decorated rooted trees where we need to fix an orientation on trees to identify the positions of nested loops with respect to each other in Feynman diagrams. The choice of the orientation allows us to work on the upper triangular or lower triangular versions of the adjacency matrix of rooted trees which are not clearly symmetric.

The graphon representation of any given graph limit is not unique. In other words, we can produce different graphon representations in terms of changing the ground probability space or changing presentation parameters. However for a fixed probability space, we can associate different labeled graphons to a given graph limit by applying invertible measure preserving transformations of the ground probability space.

For a given finite graph, the pixel pictures derived from the adjacency matrix are the simplest examples of labeled graphons. For a given sequence of finite simple graphs, the corresponding sequence of pixel pictures derived from the adjacency matrices can provide labeled graphon representations for the graph limit of the initial sequence.

A map $\rho: \Omega_{1} \rightarrow \Omega_{2}$ between
probability spaces $(\Omega_{1},\mathcal{F}_{1},\mu_{1})$ and
$(\Omega_{2},\mathcal{F}_{2},\mu_{2})$ is called measure preserving
if it is measurable and $\mu_{1}(\rho^{-1}(A))=\mu_{2}(A)$ for each
measurable set $A \in \mathcal{F}_{2}$. $\rho$ is called measure
preserving bijection if it is a bijection map and $\rho, \
\rho^{-1}$ are measure preserving. It is easy to check that for a
given measure preserving map $\rho$, the map $\rho \otimes \rho:
\Omega_{1}^{2} \rightarrow \Omega_{2}^{2}$ defined by $\rho
\otimes \rho (x,y):= (\rho(x),\rho(y))$ is also a measure preserving
map. If $\rho$ is a bijection, then $f^{\rho}, \ W^{\rho}$ are
called rearrangements of $f$ (as a function on $\Omega_{2}$) and $W$
(as a function on $\Omega^{2}_{2}$). Actually, relabeling of labeled
graphons can be understood as a kind of rearrangement. In other
words, for a given measure preserving map $\rho$, the pull backs of
$f$ and $W$ are defined by
\begin{equation}
f^{\rho}(x):=f(\rho(x)), \ W^{\rho}(x,y):= W(\rho(x),\rho(y)).
\end{equation}
If $f \in L^{1}(\Omega_{2})$ and $W \in L^{1}(\Omega^{2}_{2})$, then
$\parallel f^{\rho} \parallel_{1} = \parallel f \parallel_{1}$ and
$\parallel W^{\rho} \parallel_{1} = \parallel W \parallel_{1}$.

\begin{defn} \label{graphon-1}
(i) For the probability space $\Omega:=[0,1]$ together with the Lebesgue measure, an unlabeled graphon is a graphon up to relabeling such that a relabeling is defined by an invertible measure preserving
transformation of the closed unit interval.

(ii) Graphons $W_{1},W_{2}$ are called weakly isomorphic (or weakly equivalent i.e. $W_{1} \approx W_{2}$) if there exists a graphon $W$ and measure preserving transformations $\rho_{1},\rho_{2}$ on the probability space $\Omega$ such that $W^{\rho_{1}}=W_{1}$ and $W^{\rho_{2}}=W_{2}$ almost everywhere.
\end{defn}

Relabeled graphons can be organized into an equivalence class, namely an unlabeled graphon. For a given labeled graphon $W$, its corresponding unlabeled graphon class $[W]$ is given by
\begin{equation} \label{unlabeled-graphon}
[W]:= \{W^{\rho}: (x,y) \mapsto W(\rho(x),\rho(y)): \ \rho \
{\rm is \ an \ arbitrary \ rearrangement}\}.
\end{equation}

Set $\mathcal{W}(\Omega)$ as the set of all labeled graphons on a
given probability space $\Omega$. It is not difficult to see that if
$\Omega:=[0,1]$ is the probability space, then $\mathcal{W}(\Omega)$ is the
subspace of symmetric functions in $L_{\infty}([0,1]^{2})$. By
defining a suitable equivalence relation on labeled graphons, which
encodes exchanging decorations (i.e. (\ref{unlabeled-graphon})), it is possible to associate a unique
graphon class to each graph limit. This graphon class is called
unlabeled graphon. Set $[\mathcal{W}](\Omega)$ as the family of all
unlabeled graphons on a given probability space $\Omega$.

Graphons, as edge weighted graphs on the vertex set $[0,1]$, can provide
a generalization of discrete graphs. Each finite
simple graph $G$ defines naturally an unlabeled graphon class
$[W_{G}]$ in terms of its adjacency matrix. First we build a labeled graphon $W_{G}$ with respect
to the information of the adjacency matrix. Consider $V(G)$ as a
probability space such that each vertex has probability
$\frac{1}{|G|}$. Define the map $W_{G}^{1}: V(G) \times V(G)
\longrightarrow [0,1]$  as follows
\begin{equation} \label{graphon-graph-1}
W_{G}^{1}(u,v):= \{^{1, \ \ {\rm if \ u \ and \ v \ are \
adjacent}}_{0, \ {\rm otherwise}}.
\end{equation}
It is easy to see that $W_{G}^{1}$ is a symmetric measurable
function. In an alternative setting, we can also consider
$\Omega=(0,1]$  as the probability space which is equipped with a
partition $\{I_{i}^{n}\}_{i=1,...,n}$ such that
$I_{i}^{n}:=(\frac{i-1}{n},\frac{i}{n}]$. If the vertices of $G$ is
labeled by $1,..,n$, then the corresponding graph function
$W^{2}_{G}$ is given by
\begin{equation} \label{graphon-graph-2}
W^{2}_{G}(x,y)=W^{1}_{G}(i,j)=1, \ \ \  x \in I^{n}_{i}, y\in
I^{n}_{j}
\end{equation}
whenever there exists an edge between $i$ and $j$.

The homomorphism density is an useful tool in dealing with the
probability of the existence of a subgraph in an extremely large
graph. For a given finite graph $G$, the homomorphism density of
each subgraph $H$ in $G$ is given by
\begin{equation} \label{hom-den-1}
t(H,G):= \frac{{\rm hom}(H,G)}{|V(G)|^{|V(H)|}}
\end{equation}
such that ${\rm hom}(H,G)$ is the number of graph homomorphisms from
$H$ to $G$. It is possible to generalize this idea for the level of
graph limits where this parameter informs the density of $H$ as a
subgraph in $G$ asymptotically when the number of vertices of $G$
tends to infinity.  For a given graphon $W: \Omega^{2}
\rightarrow [0,1]$ and a simple finite graph $H$, the
homomorphism density is defined by
\begin{equation} \label{hom-den-2}
t(H,W):= \int_{\Omega^{|H|}} \prod_{ij \in E(H)} W(x_{i},x_{j})
d\mu(x_{1}) ... d\mu(x_{|H|}).
\end{equation}
If the graphon $W_{G}$ is a labeled graphon with respect to a
given graph $G$, then we have $t(H,G):= t(H,W_{G})$. Homomorphism densities provide an alternative way of defining convergence of sequences of dense graphs. It is shown that the sequence $\{G_{n}\}_{n \ge 0}$
converges to the labeled graphon $W$ iff the sequence
$\{t(H,G_{n})\}_{n \ge 0}$ of homomorphism densities converges to
the homomorphism density $t(H,W)$ for each simple subgraph $H$. Therefore any convergent sequence of finite sparse graphs should be convergent to the graphon with zero density which is almost everywhere the $0$-graphon. However the theory of graphons of sparse graphs has been developed to obtain non-zero graphons for sequences of sparse graphs in terms of the rescaled versions of the canonical graphons or other normalization methods applied on the ground probability space. \cite{br-1, janson-1, lovasz-1, shojaeifard-10}

The space of graphons provides the completion of the space of finite
graphs with respect to a topology generated by a particular metric
namely, cut-distance. The cut-distance between labeled graphons
$W,U$ is defined by
\begin{equation} \label{cut-1}
\delta_{{\rm cut}}(W,U):= {\rm inf}_{\rho, \tau} \ {\rm sup}_{S,T}
|\int_{S \times T} W(\rho(x),\rho(y)) - U(\tau(x),\tau(y)) dxdy |
\end{equation}
such that the infimum is taken over all relabeling $\rho$ on $W$
and $\tau$ on $U$ and the supremum is taken over all measurable
subsets $S,T$ of the closed interval. The infimum over relabeling
allows us to define the cut-distance on the space of unlabeled
graphons.

Weakly isomorphic graphons and relabeled graphons with respect to a given graphon
have the same corresponding symmetric
measurable functions almost everywhere. The distance (\ref{cut-1})
does not distinguish between weakly isomorphic graphons. We can see that graphons $W,U$ are weakly isomorphic whenever
for any finite subgraph $H$, we have $t(H,W)=t(H,U)$.

\begin{thm}
Each graphon is the cut-distance convergent limit of a sequence of
finite graphs. In addition, the cut-distance $\delta_{{\rm cut}}$
determines a compact topological structure on the quotient space
$\mathcal{W}(\Omega)/\approx$ of labeled graphons with respect to the weakly
isomorphic relation. \cite{janson-1,lovasz-1}
\end{thm}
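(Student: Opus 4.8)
The statement has two halves --- an approximation claim (finite graphs are $\delta_{{\rm cut}}$-dense) and a compactness claim --- and both are cleanest after transporting the ground space: every graphon on a standard probability space is weakly isomorphic to a graphon on $([0,1],\lambda)$, so I would assume $\Omega=[0,1]$, so that $\mathcal{W}(\Omega)$ is the set of symmetric measurable $W\colon[0,1]^{2}\to[0,1]$ and $\delta_{{\rm cut}}$ is the cut-distance, an infimum of the cut-norm $\|\cdot\|_{\square}$ over measure-preserving rearrangements. For the first half I would realize the approximating graphs by sampling: given $W$, draw $X_{1},\dots,X_{n}$ i.i.d.\ uniform on $[0,1]$ and form the random simple graph $G(n,W)$ on $\{1,\dots,n\}$ in which $ij$ is an edge with probability $W(X_{i},X_{j})$, the coins being independent. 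The plan is to show $\delta_{{\rm cut}}(W_{G(n,W)},W)\to 0$, which splits into: (i) the expected cut-distance between $W$ and the random \emph{weighted} graph with weights $W(X_{i},X_{j})$ is $O(1/\sqrt{\log n})$ --- this follows from the Frieze--Kannan weak regularity lemma applied to $W$, reducing to an elementary large-deviation estimate for step functions and then letting the number of steps grow; and (ii) replacing the weights $W(X_{i},X_{j})$ by independent Bernoulli bits perturbs the object by $o(1)$ in $\|\cdot\|_{\square}$ with high probability, by a bounded-differences/Azuma argument on the $\binom{n}{2}$ edge-coins, since flipping one coin changes $\|\cdot\|_{\square}$ by at most $2/n^{2}$. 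Combining (i) and (ii) with Borel--Cantelli along a subsequence yields an honest sequence of finite simple graphs converging to $W$. A deterministic alternative: take the block average $W_{n}=\mathbb{E}(W\mid\mathcal{P}_{n})$ onto the $n\times n$ dyadic grid, invoke martingale convergence to get $W_{n}\to W$ in $L^{1}$ hence in $\|\cdot\|_{\square}$, and quantize $W_{n}$ to a $\{0,1\}$-valued step function on a finer grid to obtain a finite graph within $1/n$ in cut-norm.

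For the compactness half I would show that $(\mathcal{W}(\Omega)/\!\approx,\delta_{{\rm cut}})$ is both totally bounded and complete. Total boundedness is exactly the force of weak regularity: given $\varepsilon>0$ choose $k=k(\varepsilon)$ so that every $W$ lies within $\varepsilon/2$ in $\|\cdot\|_{\square}$ (hence within $\varepsilon/2$ in $\delta_{{\rm cut}}$) of a step function with $k$ steps; rounding the $\binom{k}{2}+k$ step values to an $\varepsilon/2$-net of $[0,1]$ and the step sizes to a finite net of the simplex $\Delta_{k-1}$ leaves only finitely many step graphons, which then form a finite $\varepsilon$-net for the whole space. Completeness is the genuinely measure-theoretic point: from a $\delta_{{\rm cut}}$-Cauchy sequence $([W_{m}])_{m}$ one must manufacture a limit. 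I would pass to fast-decaying representatives, build a common coupling of the underlying measure spaces so that the lifts $\widetilde{W}_{m}$ live on one probability space and are $\|\cdot\|_{\square}$-Cauchy, then extract the limit from the bounded $L^{1}$-martingale given by conditional expectations onto the increasing $\sigma$-algebras generated by the $\widetilde{W}_{m}$; its a.e.\ limit $W$ is the desired graphon, and $\|\cdot\|_{\square}$-Cauchyness transfers to give $\delta_{{\rm cut}}(W_{m},W)\to 0$. Totally bounded plus complete gives compact. Equivalently one can argue sequential compactness by a diagonal argument: approximate each $W_{m}$ by a $k$-step function $W_{m}^{(k)}$, use finite-dimensional compactness to pass to a subsequence along which $W_{m}^{(k)}$ converges for every fixed $k$, check the limits $U^{(k)}$ are $\delta_{{\rm cut}}$-Cauchy in $k$ with limit $W$, and verify $W_{m}\to W$ along the diagonal.

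The routine pieces here are the large-deviation and Azuma estimates and the finite-net bookkeeping; the two real hurdles are, first, the weak regularity lemma (Frieze--Kannan), which one either invokes as a black box from \cite{lovasz-1,janson-1} or reproves by an energy-increment argument, and second, the completeness of the cut-metric, because the limit is only defined up to weak isomorphism and must be assembled from pieces living on different probability spaces --- which is precisely why the theorem is stated on the quotient $\mathcal{W}(\Omega)/\!\approx$ and not on $\mathcal{W}(\Omega)$. Handling this cleanly needs either the martingale/coupling construction just sketched or the identification of $[\mathcal{W}](\Omega)$ with a closed set of graph parameters, where compactness is inherited from Tychonoff. I would present that homomorphism-density route as the conceptual proof --- the map $W\mapsto(t(H,W))_{H}$ embeds $\mathcal{W}(\Omega)/\!\approx$ injectively (by the weak-isomorphism criterion recalled above) onto a subset of the compact cube $\prod_{H}[0,1]$, its image is closed, and by the counting lemma and its converse the induced topology coincides with the $\delta_{{\rm cut}}$-topology --- while keeping the explicit regularity/sampling argument as the quantitative backbone, since the "image is closed'' step is itself of the same depth as completeness and is most transparently proved through exactly the sampling limit of the first half.
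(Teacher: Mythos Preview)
Your proposal is a careful and essentially correct outline of the standard Lov\'asz--Szegedy compactness argument and its variants (sampling plus weak regularity for density of finite graphs; weak regularity for total boundedness; either a martingale/coupling construction or the homomorphism-density embedding into a Tychonoff cube for completeness). There is no gap worth flagging.

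However, there is nothing to compare against: the paper does not prove this theorem. It is stated as a background fact with a citation to \cite{janson-1,lovasz-1} and no proof is supplied --- the statement sits in the ``Graphons'' subsection as imported material, and the paper immediately moves on to its own constructions (Feynman graphons). So your write-up is not an alternative to the paper's proof but rather a sketch of the literature proof that the paper is invoking. If anything, your outline is considerably more detailed than what the paper provides, since the paper provides nothing beyond the citation.
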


\subsection{\textsl{Feynman graphons}}

It is the place to deal with a new application of graph limits to
Quantum Field Theory where we aim to achieve a new interpretation of
Feynman diagrams and their corresponding formal expansions. In
general, any arbitrary Feynman diagram, as a weighted graph decorated by some
physical parameters, might have many nested or overlapping loops. In higher order perturbation expansions in (strongly) coupled gauge field theories, we can investigate the appearance of many loops Feynman diagrams. We can encode these loops as vertices of the decorated rooted tree representation of the original Feynman diagram and then consider this sparse graph to generate its corresponding graphon model. Decorated rooted trees, as simple graphs with low densities, are useful to simplify Feynman diagrams and their complicated formal expansions which have more densities than sparse graphs. Generating non-zero graphon models for the infinite sequence of rooted trees requires to apply renormalization methods on the canonical graphons and the ground probability space. \cite{br-1,shojaeifard-10, shojaeifard-18, ihes-submitted}

\begin{lem} \label{feynman-graphon-1}
For a fixed probability space $(\Omega,\mu_{\Omega})$, the algebraic combinatorics of each Feynman diagram can be
encoded by a unique unlabeled graphon class.
\end{lem}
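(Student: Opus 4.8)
The plan is to build the graphon model in three stages --- passing from the Feynman diagram to a decorated rooted tree, from the tree to a rescaled adjacency kernel, and from the kernel to a labeled graphon on the fixed space $(\Omega,\mu_\Omega)$ --- and then to establish that the resulting unlabeled class does not depend on any of the auxiliary choices, which is exactly the content of the word \emph{unique} in the statement.

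First I would use the representation of Feynman diagrams recalled above: a $1$PI Feynman diagram $\Gamma$ with nested subdivergences is encoded by a decorated rooted tree $t_\Gamma$ whose root carries the symbol of $\Gamma$, whose remaining vertices carry the symbols of the primitive subgraphs, and whose edges record which subgraph sits inside which; when $\Gamma$ has overlapping subdivergences one instead obtains a finite $\mathbb{K}$-linear combination $\sum_i c_i\, t_i$ of decorated rooted trees, and I would fix once and for all the resolution of overlaps dictated by the renormalization coproduct so that this combination is canonical. Fixing an orientation on rooted trees (edges pointing away from the root) lets one read off an upper-triangular adjacency matrix $A(t_\Gamma)$; since rooted trees are sparse, I would not use the raw adjacency kernel but its rescaled/normalized version, following the treatment of graphons of sparse graphs cited above, so that a non-trivial graph function is produced. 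I would then realize this kernel on $(\Omega,\mu_\Omega)$, which I may assume to be a standard atomless probability space and hence measure-isomorphic to the unit interval with Lebesgue measure: choosing a measurable partition $\{\Omega_v\}_{v\in V(t_\Gamma)}$ of $\Omega$ into blocks of equal measure indexed by the vertices, I set $W_\Gamma$ equal to the rescaled entry $A(t_\Gamma)_{uv}$ on $\Omega_u\times\Omega_v$ --- the exact analogue of the pixel picture of (\ref{graphon-graph-2}) with $[0,1]$ replaced by $\Omega$ --- and for a linear combination $\sum_i c_i t_i$ I take the corresponding weighted superposition on disjoint equal-measure copies. This yields a labeled graphon $W_\Gamma\in\mathcal{W}(\Omega)$, whose unlabeled class $[W_\Gamma]\in[\mathcal{W}](\Omega)$ is the candidate encoding.

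The substantive part is well-definedness of $[W_\Gamma]$. I would verify independence from (a) the labeling of the vertices of $t_\Gamma$, (b) the choice of equal-measure partition $\{\Omega_v\}$, and (c) the choice of representative of $(\Omega,\mu_\Omega)$ within its measure-isomorphism class. In each case the two candidate labeled graphons differ by a rearrangement along an invertible measure-preserving transformation of $\Omega$ --- a permutation of blocks in case (a), a measure isomorphism carrying one partition onto another in case (b), the measure isomorphism with the unit interval in case (c) --- hence they are weakly isomorphic in the sense of Definition~\ref{graphon-1}(ii). Since the weakly-isomorphic relation is precisely the one collapsing $\mathcal{W}(\Omega)$ onto the compact quotient $\mathcal{W}(\Omega)/\approx$ recalled in the theorem above, all of these graphons determine one and the same unlabeled class, which proves the statement.

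The main obstacle I expect is not the measure theory but the orientation-and-overlap bookkeeping at the combinatorial end: one must argue that the fixed triangular convention on rooted trees, together with the fixed Hopf-algebraic resolution of overlapping subdivergences, makes the assignment $\Gamma\mapsto A(t_\Gamma)$ genuinely canonical up to relabeling, since otherwise equally natural conventions could a priori yield labeled graphons that are not weakly isomorphic. The clean way to close this gap is to check that the homomorphism densities $t(H,\,\cdot\,)$ are insensitive to the choice of orientation convention and of overlap resolution, so that any two natural choices land in the same unlabeled graphon class; this also reconciles the construction with the characterization of weak isomorphism via homomorphism densities recalled after (\ref{cut-1}).
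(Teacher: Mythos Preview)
Your proposal is correct and follows essentially the same route as the paper: pass from $\Gamma$ to its decorated rooted-tree representation via the Hopf-algebraic embedding $\Xi$ of (\ref{Feynman-tree}), build the pixel-picture graphon from the adjacency data (with a normalization step in the overlapping case), and take the unlabeled class modulo weak isomorphism. The paper's proof is terser on uniqueness---it simply invokes rearrangement and weak isomorphism without your explicit three-part independence check or the homomorphism-density closure---and in the overlapping case it uses the specific normalization $W_{u_\Gamma}=(W_{t_1}+\cdots+W_{t_n})/|W_{t_1}+\cdots+W_{t_n}|$ with the component graphons placed on subintervals $I_i\subset[0,1]$ determined by the grading, but the architecture is identical.
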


\begin{proof}
A rooted tree $t$ is a finite, connected oriented graph without
loops in which every vertex has exactly one incoming edge, except one namely, the
root which has no incoming but only outgoing edges. We can put two classes of
decorations on each tree namely, vertex-labeled and edge-labeled.
The rooted tree representations of Feynman diagrams can be defined
via the grafting operator. The free commutative algebra generated by
isomorphism classes of non-planar rooted trees is actually the polynomial algebra generated by symbols $t$ where each symbol represents one isomorphism class. The concatenation is the product and the empty tree is the unit for this polynomial algebra. In addition, this polynomial algebra can be equipped by a modified version of the
renormalization coproduct given by
\begin{equation}
\Delta_{{\rm CK}}(t) = \mathbb{I} \otimes t + t \otimes \mathbb{I} +
\sum_{c} R_{c}(t) \otimes P_{c}(t)
\end{equation}
such that the sum is taken over all admissible cuts $c$ on $t$ which
divides the tree into two parts. The part $R_{c}(t)$ contains the
original root of $t$ and the part $P_{c}(t)$ is a forest of
subtrees. The resulting Hopf algebra $H_{{\rm CK}}$ of non-planar
rooted trees is connected graded free commutative non-cocommutative
finite type. Decorations enable us to adapt this combinatorial Hopf
algebra with respect to physical theories. Each Feynman diagram $\Gamma$,
which might contain divergent Feynman subgraphs, is encoded by a
decorated non-planar rooted tree $t_{\Gamma}$ such that the root
represents the full graph and each leaf is a divergent subgraph
which has no further subdivergencies. If the original graph has
overlapping subdivergencies, then we can replace the single rooted
tree by a sum of decorated rooted trees after disentangling the
overlaps. Thanks to these rules, we can embed the Hopf algebra
$H_{{\rm FG}}(\Phi)$ of Feynman diagrams of $\Phi$ into the
decorated Connes--Kreimer Hopf algebra $H_{{\rm CK}}(\Phi)$ as a
closed Hopf subalgebra. This embedding is encapsulated by the
injective Hopf algebra homomorphism
\begin{equation} \label{Feynman-tree}
\Gamma \longmapsto \Xi(\Gamma):= \sum_{j=1}^{r}
B^{+}_{\Gamma_{j},G_{j,i}} \big(\prod_{i=1}^{k_{j}}
\Xi(\gamma_{j,i}) \big)
\end{equation}
such that $\Gamma=\prod_{i=1}^{k_{j}} \Gamma_{j} \star_{j,i}
\gamma_{j,i}$ and each $G_{j,i}$ is the gluing information. For a
given decorated non-planar rooted tree $t$, if the longest path from
the root to a leaf contains $k$ edges, then the renormalization
coproduct $\Delta_{{\rm CK}}(t)$ is a sum of at least $k+1$ terms.
In other words, the decorated non-planar rooted tree $t$ represents
an iterated integral with $k$ nested sub-divergencies while each
vertex corresponds to a sub-integral without any sub-divergencies.
\cite{ebrahimifard-kreimer-1, ebrahimifard-guo-kreimer-1,
ebrahimifard-guo-kreimer-2, figueroa-graciabondia-1, hoffman-1}

For any Feynman diagram $\Gamma$ without overlapping sub-divergencies,
the decorated tree $t_{\Gamma}:= \Xi(\Gamma)$ is a simple finite
weighted graph where thanks to its corresponding adjacency matrix we
can determine the labeled graphons $W_{t_{\Gamma}}$ of the form
(\ref{graphon-graph-1}) or (\ref{graphon-graph-2}). Set
$[W_{t_{\Gamma}}]$ as the unlabeled graphon class associated to
$t_{\Gamma}$. Up to the rearrangement, weakly isomorphic relation and the embedding (\ref{Feynman-tree}), we can show that the unlabeled graphon $[W_{t_{\Gamma}}]$.

For any Feynman diagram $\Gamma$ which has some overlapping
sub-divergencies, $u_{\Gamma}:= \Xi(\Gamma)$ is a linear combination
of decorated non-planar rooted trees such as $u_{\Gamma}= \alpha_{1} t_{1} + ... + \alpha_{n} t_{n}$. In this situation, the labeled
graphons $W_{u_{\Gamma}}$ can be determined by normalizing or rescaling methods used on $W_{t_{i}}$s. For each $1 \le i \le n$, the labeled graphon $W_{t_{i}}$ is projected or embedded into the subinterval $I_{i}$ of $[0,1]$ and then we have
\begin{equation}
W_{u_{\Gamma}}(x,y):=\frac{W_{t_{1}} + ... + W_{t_{n}}}{|W_{t_{1}} +
...+ W_{t_{n}}|}.
\end{equation}
The subintervals $I_{i}$ are determined in terms of the grading value of rooted trees and cut-distance topology.
\end{proof}

\begin{rem}
For each Feynman diagram $\Gamma$ and each natural number $n \ge 2$, the graphon $W_{n \Gamma}$ is actually $n$ copies of $W_{\Gamma}$ inside the closed unital interval. We can generate this graphon by measure preserving transformation $\rho_{n}: x \mapsto nx$ on $[0,1]$ which is not invertible with respect to the Lebesgue measure. Therefore $W_{\Gamma}$ and $W_{n \Gamma}$ are weakly isomorphic.
\end{rem}

We use the phrase ''Feynman graphons'' for this class of graphons which provide an analytic generalizations for Feynman diagrams.

\begin{defn}  \label{feynman-graphon-2}
A sequence $\{\Gamma_{n}\}_{n \ge 0}$ of non-trivial Feynman diagrams is called
convergent when $n$ tends to infinity, if the corresponding sequence
$\{[W_{t_{\Gamma_{n}}}]\}_{n \ge 0}$ of non-zero unlabeled graphon classes is
convergent to a unique non-zero Feynman graphon class with respect to the cut-distance topology when $n$ tends
to infinity.
\end{defn}

Non-zero graphons, as the cut-distance convergent limit of the sequences of sparse graphs, can be built in terms of applying rescaling or renormalization methods to the ground probability space or canonical graphons. \cite{br-1, ihes-submitted}

Suppose the unlabeled graphon class $[W]$ is the convergent limit
for the sequence $\{[W_{t_{\Gamma_{n}}}]\}_{n \ge 0}$. If we
consider the pixel picture representation of the graphon $[W]$, then
we can associate an infinite tree or forest $t$ such that $W_{t} \in [W]$ and
$W \in [W_{t}]$. Therefore $[W]=[W_{t}]$. Thanks to the homomorphism
(\ref{Feynman-tree}), it is possible to build an extremely large
Feynman diagram $\Gamma_{t}$ with respect to the infinite tree or forest $t$. This
$\Gamma_{t}$ can be described as the convergent limit of the
sequence $\{\Gamma_{n}\}_{n \ge 0}$ with respect to the cut-distance
topology. We can also show that this limit is unique up to the rearrangement, weakly isomorphic relation and the embedding (\ref{Feynman-tree}). \cite{shojaeifard-10}

Graphon models for Feynman diagrams in a given gauge field theory can lead us to study formal expansions of Feynman diagrams in the context of the theory of random graphs.

The study of random graphs was begun by Erdos, Renyi and Gilbert
when they were working on a probabilistic construction of a graph
with large girth and large chromatic number. After a short period of
time, work on random graphs $G_{n,m}$ has been concerned by many
mathematicians in Combinatorics and Discrete Mathematics. Nowadays
it is not difficult to observe various applications of these
combinatorial objects in many fields in Mathematics and other
applied sciences. Generally speaking, the theory of random graphs aims
to provide some results such as ''a combinatorial property A almost
always implies another combinatorial property B''. For any integer value $n$ and $0 \le p \le 1$, a random graph
$G(n,p)$ is defined by taking $n$ nodes and connecting any two of
them with the probability $p$, making an independent decision about
each pair. There are alternative ways to build random graphs. As
an example, consider $\mathcal{L}_{n,m}$ as the collection of all
labeled graphs with the vertex set $V=[n]=\{1,2,...,n\}$ and $m$ edges
such that $0 \le m \le \binom {n} {2}$. To each $G \in
\mathcal{L}_{n,m}$, assign the probability
\begin{equation} \label{random-1}
\mathbb{P}(G) = \frac{1}{\binom {\binom {n} {2}} {m}}.
\end{equation}
In other words, start with an empty graph on the set $[n]$ and
insert $m$ edges in such a way that all possible $\binom {\binom {n}
{2}} {m}$ choices are equally likely. The resulting graph $G_{n,m}:=
([n],E_{n,m})$ is known as the uniform random graph. As other
example, fix $0 \le p \le 1$ and for each graph $G$ with the vertex set
$[n]$ and $0 \le m \le \binom {n} {2}$ edges, assign the
probability
\begin{equation} \label{random-2}
\mathbb{P}(G)=p^{m}(1-p)^{\binom {n} {2} - m}.
\end{equation}
In other words, start with an empty graph with the vertex set $[n]$
and consider $\binom {n} {2}$ to insert edges independently with the
probability $p$. The resulting graph $G_{n,p}:= ([n],E_{n,p})$ is
known as the binomial random graph.

It is shown that the random graph $G_{n,p}$ with $0 \le m \le \binom
{n} {2}$ edges is the same as one of the $\binom {\binom {n} {2}}
{m}$ graphs that have $m$ edges. For enough large $n$, random graphs
$G_{n,m}$ and $G_{n,p}$ have the same behavior whenever the number
of edges $m$ in $G_{n,m}$ is very close to the expected number of
edges of $G_{n,p}$, namely,
\begin{equation}
m = \binom {n} {2} p \approx \frac{n^{2}p}{2}.
\end{equation}
It is equivalent to say that the edge probability in $G_{n,p}$
should be $p \approx \frac{2m}{n^{2}}$. \cite{frieze-karonski-1}

\begin{lem}  \label{feynman-graphon-3}
Each labeled graphon determines a class of random graphs.
\end{lem}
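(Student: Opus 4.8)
The plan is to attach to a labeled graphon $W:\Omega^{2}\to[0,1]$ the classical family of $W$-random graphs and to check that this family is well defined and genuinely a ``class'' indexed by the number of sampled vertices. First I would fix $n\ge 1$, draw points $X_{1},\dots,X_{n}$ independently from the ground probability measure $\mu_{\Omega}$, and then, conditionally on $(X_{1},\dots,X_{n})$, insert each edge $\{i,j\}$ independently with probability $W(X_{i},X_{j})$. This produces a random element $\mathbb{G}(n,W)$ of the finite set $\mathcal{L}_{n}$ of labeled graphs on $[n]=\{1,\dots,n\}$, and the class of random graphs associated to $W$ is declared to be $\{\mathbb{G}(n,W)\}_{n\ge 1}$, equivalently the compatible family of distributions on $\bigsqcup_{n}\mathcal{L}_{n}$.

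The verifications are routine but should be spelled out. Since $W$ is measurable and symmetric, for every $F\in\mathcal{L}_{n}$ the integrand $(x_{1},\dots,x_{n})\mapsto\prod_{\{i,j\}\in E(F)}W(x_{i},x_{j})\prod_{\{i,j\}\notin E(F)}\big(1-W(x_{i},x_{j})\big)$ is a bounded measurable function on $\Omega^{n}$, so $\mathbb{P}\big(\mathbb{G}(n,W)=F\big):=\int_{\Omega^{n}}\big(\cdots\big)\,d\mu_{\Omega}^{\otimes n}$ is well defined; expanding $\prod_{\{i,j\}}\big(W(x_{i},x_{j})+(1-W(x_{i},x_{j}))\big)=1$ and integrating shows $\sum_{F\in\mathcal{L}_{n}}\mathbb{P}(\mathbb{G}(n,W)=F)=1$, so we do obtain a probability measure on $\mathcal{L}_{n}$ for each $n$. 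I would then record two consistency checks tying this to the models already in the text: if $W\equiv p$ is constant, then $\mathbb{G}(n,W)$ is exactly the binomial random graph $G_{n,p}$ of (\ref{random-2}); and if $W=W_{G}$ is the graphon (\ref{graphon-graph-1})/(\ref{graphon-graph-2}) of a finite simple graph $G$, then $\mathbb{G}(n,W_{G})$ is the uniformly random $n$-vertex induced subgraph $G[n]$ used to define convergence of graph sequences. Finally, because $W^{\rho}$ and $W$ have the same sampling distribution whenever $\rho$ is measure preserving, the class $\{\mathbb{G}(n,W)\}_{n}$ depends only on the unlabeled class $[W]$, consistent with the compactness statement for $\mathcal{W}(\Omega)/\!\approx$.

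The main obstacle, and the point where the argument is not merely bookkeeping, is the sparsity normalization forced on us by the Feynman-graphon setting: by Lemma \ref{feynman-graphon-1} and Definition \ref{feynman-graphon-2} the graphons we actually care about arise as cut-distance limits of sequences of sparse decorated rooted trees, so the naive $W$-random graph may be (almost everywhere) the trivial model of the $0$-graphon. To produce a nontrivial class one must instead sample the inhomogeneous random graph $\mathbb{G}\big(n,(\beta_{n}W)\wedge 1\big)$ for a suitable null sequence $\beta_{n}$ of rescaling factors, in the spirit of \cite{br-1,janson-1}, and then show (i) that the resulting family is still compatible and (ii) that the empirical graphons of $\mathbb{G}(n,W)$ converge in cut-distance, almost surely, back to $[W]$, so that the association $W\mapsto\{\mathbb{G}(n,W)\}_{n}$ inverts the graph-limit construction on the Feynman side. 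Step (ii) is the real content; it parallels the convergence argument of \cite{shojaeifard-10} and is what genuinely distinguishes this lemma from the classical dense case.
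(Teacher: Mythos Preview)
Your first paragraph is exactly the construction the paper gives: sample $n$ points independently from the base probability space, form the weighted graph with edge weights $W(s_{i},s_{j})$, and then realize a simple random graph by including each edge independently with that probability. The paper calls the result $R(n,W):=R(G(S,W))$ and stops there; your $\mathbb{G}(n,W)$ is the same object. Your second paragraph adds measurability and normalization checks that the paper omits but which are unobjectionable.

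Your third paragraph, however, over-reads the lemma. Lemma~\ref{feynman-graphon-3} is purely the association $W\mapsto\{R(n,W)\}_{n\ge 1}$; it is a three-sentence construction, not a convergence statement. The almost-sure cut-distance convergence of $R(n,W)$ back to $W$ is recorded immediately \emph{after} the lemma as a separate remark (citing \cite{lovasz-1}), and the sparse-rescaling issue for Feynman graphons is deferred to Theorem~\ref{feynman-graphon-4} and the references \cite{br-1,shojaeifard-10}. So your proposed ``main obstacle'' and ``real content'' step~(ii) are not part of what this lemma asserts or proves; importing them here conflates the lemma with the theorem that follows.
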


\begin{proof}
If we have a simple weighted graph $G$, then we can build a random
simple graph $R(G)$ by including the edge with probability equal to
its weight. Thanks to this idea, suppose we have a labeled graphon
$W$ and finite subset ${\it S}:=\{s_{1},...,s_{n}\}$ in $[0,1]$. We
can make a weighted graph $G({\it S},W)$ with $|{\it S}|=n$ nodes
such that the edge $s_{i}s_{j}$ has the weight $W(s_{i},s_{j})$. In
general, the random graph $R(n,W):=R(G({\it S},W))$ with respect to
the weighted graph $G({\it S},W)$ is our promising graph such that
${\it S}$ is a set of $n$ points which are selected independently
from the closed interval.
\end{proof}

The random graphs $R(n,W)$ are useful to approximate graphons
$W$ associated to large numbers of points in the closed interval. It
is shown that with the probability $1$, the sequence $\{R(n,W)\}_{n \ge
0}$ is convergent to the graphon $W$ with respect to the
cut-distance topology when $n$ tends to infinity. \cite{lovasz-1}

Thanks to the discussed topics, it is time to observe some new
applications of the graph function representation theory of Feynman
diagrams in dealing with expansions of these physical graphs in
Quantum Field Theory. At the first application, we address a new
interpretation of solutions of Dyson--Schwinger equations in the context of
sequences of random graphs derived from partial sums of solutions.

\begin{thm}  \label{feynman-graphon-4}
The unique solution of each combinatorial Dyson--Schwinger equation
can be described as the cut-distance convergent limit of a sequence
of finite Feynman diagrams.
\end{thm}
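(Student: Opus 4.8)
The plan is to combine the combinatorial solution formula (\ref{dse-4}) for a Dyson--Schwinger equation with the graphon-convergence framework established in Lemma~\ref{feynman-graphon-1}, Lemma~\ref{feynman-graphon-3} and the preceding cut-distance machinery. First I would take the unique solution $X_{{\rm DSE}} = \sum_{n \ge 0} (\lambda g)^{n} X_{n}$, whose existence and the recursion (\ref{dse-4}) are already given, and form the sequence of truncations (partial sums) $Y_{m} := \sum_{i=0}^{m} (\lambda g)^{i} X_{i}$. Each $Y_{m}$ is a \emph{finite} element of $H_{{\rm FG}}(\Phi)$, i.e.\ a finite $\mathbb{Z}$-linear combination of Feynman diagrams with bounded loop number, and under the embedding $\Xi$ of (\ref{Feynman-tree}) it corresponds to a finite linear combination of decorated non-planar rooted trees. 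The strategy is then to assign to each $Y_{m}$ an associated finite Feynman diagram $\Gamma_{m}$ (or, via Lemma~\ref{feynman-graphon-3}, a random graph $R(m, W_{Y_{m}})$ built from the labeled graphon of $Y_{m}$), and to prove that the sequence $\{\Gamma_{m}\}_{m \ge 0}$ is cut-distance convergent with limit whose Feynman graphon class is that of $X_{{\rm DSE}}$.

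The key steps, in order, are: (1) translate $Y_{m}$ through $\Xi$ into a forest $u_{m}$ of decorated rooted trees, using the overlap-disentangling rule, and record that $u_{m} \subseteq u_{m+1}$ in the sense that the tree/forest data of $Y_{m+1}$ extends that of $Y_{m}$ by the new grafting terms $(\lambda g)^{m+1} X_{m+1}$ dictated by (\ref{dse-4}); (2) via Lemma~\ref{feynman-graphon-1} attach to each $u_{m}$ its labeled graphon $W_{t_{u_{m}}}$, using the rescaling/normalization prescription of that lemma so that the nested subintervals $I_{i}$ assigned to successive trees shrink in a controlled way (governed by the grading value and the cut-distance topology); (3) show that $\{[W_{t_{u_{m}}}]\}_{m \ge 0}$ is a Cauchy sequence in $([\mathcal{W}](\Omega), \delta_{{\rm cut}})$ — here the point is that the increments between $Y_{m}$ and $Y_{m+1}$ live on a vanishing-measure portion of $[0,1]^{2}$ because the $X_{n}$ grow only polynomially in diagram count with loop number while the allotted subintervals have measure tending to zero, so $\delta_{{\rm cut}}(W_{t_{u_{m}}}, W_{t_{u_{m+1}}}) \to 0$; (4) invoke compactness of $([\mathcal{W}](\Omega)/\!\approx, \delta_{{\rm cut}})$ (the Theorem quoted before Section~2.3, from \cite{janson-1,lovasz-1}) to get a limit class $[W]$, reconstruct from its pixel picture an infinite forest $t$ with $[W] = [W_{t}]$, and use $\Xi$ again to produce the limiting large Feynman diagram $\Gamma_{t}$; (5) identify $[W_{t}]$ with the Feynman graphon class $W_{X_{{\rm DSE}}}$ by checking that the two share all homomorphism densities $t(H,\cdot)$ — equivalently, that the finite-dimensional distributions of the subgraph samples agree — which follows because every fixed finite subtree pattern appearing in $t$ already appears in some $u_{m}$, and conversely. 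Finally, by Lemma~\ref{feynman-graphon-3} one may replace each $[W_{t_{u_{m}}}]$ by a random graph $R(m, W_{t_{u_{m}}})$ that converges almost surely to the same limit, so the asserted sequence of \emph{finite Feynman diagrams} can be taken to be $\Gamma_{m} := \Xi^{-1}(R(m, W_{t_{u_{m}}}))$ (read off the pixel picture), closing the argument.

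The main obstacle I expect is step~(3): controlling the cut-distance between successive truncations. The recursion (\ref{dse-4}) produces, at order $n$, a number of trees that can grow rapidly (it is a sum over compositions $k_{1} + \cdots + k_{j+1} = n-j$ of products of lower $X_{k}$'s), so one must be careful that the rescaling in Lemma~\ref{feynman-graphon-1} that packs the order-$n$ contribution into its subinterval $I_{n}$ really does force the total $L^{1}$-mass of the increment to zero fast enough to dominate this combinatorial growth; this is exactly where the "renormalization/rescaling methods on the canonical graphons and the ground probability space" cited from \cite{br-1, ihes-submitted} have to be used quantitatively rather than just invoked. A secondary subtlety is making precise, and consistent across all $m$, the disentangling of overlapping subdivergences so that $u_{m} \hookrightarrow u_{m+1}$ is genuinely an extension and the limit forest $t$ is well defined up to the weak-isomorphism/rearrangement equivalence; once that bookkeeping is fixed, uniqueness of the limit follows as in \cite{shojaeifard-10}.
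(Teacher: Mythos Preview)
Your overall architecture matches the paper's: form the partial sums $Y_{m}=\sum_{i=0}^{m}(\lambda g)^{i}X_{i}$, pass to graphon/random-graph representations via Lemmas~\ref{feynman-graphon-1} and~\ref{feynman-graphon-3}, and argue cut-distance convergence to the Feynman graphon class of $X_{{\rm DSE}}$. The paper's proof (which refers to \cite{shojaeifard-10} for full details) follows exactly this route.

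The one substantive divergence is in your step~(3), and you have correctly flagged it as the main obstacle. The paper does \emph{not} attempt a direct $L^{1}$-mass estimate balancing the combinatorial growth of the $X_{n}$ against shrinking subintervals. Instead it invokes the $n$-adic metric on the graded Hopf algebra: since $Y_{m+1}-Y_{m}=(\lambda g)^{m+1}X_{m+1}$ has valuation at least $m+1$ with respect to the loop-number grading, one has $d_{{\rm adic}}(Y_{m},Y_{m+1})\le 2^{-(m+1)}$ automatically, independent of how many diagrams sit inside $X_{m+1}$. It is this $n$-adic control, combined with the graphon representation of each $X_{n}$, that the paper uses to build the random graphs $R(Y_{m})$ and to obtain cut-distance convergence; the fact that $X_{{\rm DSE}}$ already lives in the $n$-adic completion of $H_{{\rm FG}}(\Phi)[[g]]$ (cf.\ \cite{bergbauer-kreimer-1, foissy-4}) is what makes this transfer legitimate. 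In short, the $n$-adic metric is the missing ingredient that dissolves exactly the growth-versus-rescaling tension you anticipate, and your compactness/homomorphism-density route in steps~(4)--(5), while not wrong, is more elaborate than what the paper actually needs.
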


\begin{proof}
The full proof is given in \cite{shojaeifard-10} and here we only
address the main idea. Suppose DSE be a combinatorial Dyson--Schwinger equation with the
general form (\ref{dse-1}) such that its unique solution is given by
\begin{equation}
X_{{\rm DSE}}=\sum_{n \ge 0} (\lambda g)^{n} X_{n}
\end{equation}
while $g$ is the
bare coupling constant and the generators $X_{n}$ are determined by the
recursive relations (\ref{dse-4}). Make the new sequence
$\{Y_{m}\}_{m \ge 1}$ of partial sums of the expansion $\sum_{n \ge
0} (\lambda g)^{n} X_{n}$ such that we have
\begin{equation}
Y_{m}:= (\lambda g)^{1}X_{1}+...+(\lambda g)^{m}X_{m}.
\end{equation}
It is shown in \cite{bergbauer-kreimer-1, foissy-4} that the
large Feynman diagram $X_{{\rm DSE}}$ belongs to a completion of $H_{{\rm
FG}}[[g]]$ with respect to the $n$-adic topology. We claim that the
sequence $\{Y_{m}\}_{m \ge 1}$ of finite graphs converges to the
large Feynman diagram $X_{{\rm DSE}}$ with respect to the
cut-distance topology. For this purpose, we can apply the $n$-adic
metric and the graphon representations of the components $X_{n}$ to
build a random graph with respect to each graph $Y_{m}$. It leads us
to associate a sequence $\{R(Y_{m})\}_{m \ge 1}$ of random graphs
with respect to the sequence $\{Y_{m}\}_{m \ge 1}$ which is
cut-distance convergent to $X_{{\rm DSE}}$.
\end{proof}

The structure of a modification of the Connes--Kreimer BPHZ
renormalization for large Feynman diagrams has been formulated in
\cite{shojaeifard-10} where we worked on a topological completion of
the renormalization Hopf algebra of Feynman diagrams with respect to
the cut-distance topology. As the second application, we work on Feynman graphon formalism to build a
renormalization program on the collection $\mathcal{S}^{\Phi,g}$
under a topological Hopf algebraic setting. This new approach enables us to
proceed our knowledge about non-perturbative versions of Feynman
rules which act on large Feynman diagrams. For this purpose we
explain the structure of a new Hopf algebra derived from the
renormalization coproduct on graphons.

\begin{thm}  \label{feynman-graphon-5}
Thanks to the renormalization coproduct, there exists a topological
Hopf algebraic structure on the collection $\mathcal{S}^{\Phi}_{{\rm
graphon}}$ of all unlabeled graphons which contribute to represent (large) Feynman diagrams of a physical theory
$\Phi$.
\end{thm}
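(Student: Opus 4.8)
The plan is to equip the set $\mathcal{S}^{\Phi}_{{\rm graphon}}$ of unlabeled graphon classes $[W]$ representing (large) Feynman diagrams with a topological Hopf algebra structure by transporting the Connes--Kreimer renormalization Hopf algebra structure along the correspondence $\Gamma \leftrightarrow [W_{t_{\Gamma}}]$ established in Lemma~\ref{feynman-graphon-1}, and then completing with respect to the cut-distance topology. Concretely, I would first form the free commutative algebra generated by unlabeled graphon classes associated to finite Feynman diagrams, with product given by ``disjoint placement'' of two graphons into complementary subintervals of $[0,1]$ followed by the normalization/rescaling already used in the proof of Lemma~\ref{feynman-graphon-1} (this is well-defined on unlabeled classes precisely because, as noted in the remark after that lemma, non-invertible measure-preserving maps $x \mapsto nx$ identify $W_{\Gamma}$ with $W_{n\Gamma}$, so the product descends to weak-isomorphism classes). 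The unit is the class of the empty graphon, and the grading is inherited from the loop-number (or internal-edge) grading on $H_{{\rm FG}}(\Phi)$ pulled back through the embedding $\Xi$ of (\ref{Feynman-tree}).

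\textbf{Second}, I would define the coproduct $\Delta_{{\rm graphon}}$ on a generator $[W_{t_{\Gamma}}]$ by
\begin{equation}
\Delta_{{\rm graphon}}([W_{t_{\Gamma}}]) := (\pi \otimes \pi)\big(\Delta_{{\rm CK}}(t_{\Gamma})\big),
\end{equation}
where $\pi$ is the map $t \mapsto [W_{t}]$ and $\Delta_{{\rm CK}}$ is the admissible-cut coproduct recalled in Lemma~\ref{feynman-graphon-1}; equivalently, $\Delta_{{\rm graphon}}$ is characterized recursively through the grafting operators $B^{+}_{\Gamma_{j},G_{j,i}}$ exactly as in (\ref{cop-rec-1}), now acting on graphon classes via their pixel-picture (adjacency-matrix) realizations. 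The counit sends every non-empty generator to $0$ and the empty class to $1 \in \mathbb{K}$, and the antipode is forced by the recursive formula $S(\Gamma) = -\Gamma - \sum S(\Gamma'_{1})\Gamma'_{2}$ since the structure is connected and graded of finite type. All the bialgebra axioms (coassociativity, counitality, compatibility of $\Delta_{{\rm graphon}}$ with the product) then hold because they hold for $(H_{{\rm CK}}(\Phi), \Delta_{{\rm CK}})$ and the assignment $t \mapsto [W_{t}]$ is a bijection of the generating sets compatible with concatenation and with admissible cuts (an admissible cut of a tree corresponds, under the pixel picture, to a splitting of the adjacency matrix into a ``lower'' rooted part $R_{c}$ and a ``forest'' part $P_{c}$, which is respected up to rearrangement).

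\textbf{Third}, I would show this Hopf algebra carries a topology making $\Delta_{{\rm graphon}}$, the product, and $S$ continuous: one takes the cut-distance $\delta_{{\rm cut}}$ on each graded piece (which by the cited compactness theorem makes $\mathcal{W}(\Omega)/\!\approx$ compact) and then the projective/product topology on the completion $\widehat{\mathcal{S}^{\Phi}_{{\rm graphon}}}$ that adjoins the infinite graphon classes arising as cut-distance limits of sequences $\{[W_{t_{\Gamma_{n}}}]\}$, exactly the limits produced in Definition~\ref{feynman-graphon-2} and Theorem~\ref{feynman-graphon-4}. Continuity of the structure maps reduces to the standard fact that $t(H,-)$ is $\delta_{{\rm cut}}$-continuous, so the grafting operators and their duals are continuous, and to the observation that finite Feynman diagrams are $\delta_{{\rm cut}}$-dense in $\mathcal{S}^{\Phi}_{{\rm graphon}}$ by Theorem~\ref{feynman-graphon-4}; the Hopf operations, being continuous on the dense subalgebra of finite diagrams, extend uniquely and continuously.

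\textbf{The main obstacle} I expect is the well-definedness and coassociativity of $\Delta_{{\rm graphon}}$ on \emph{unlabeled} classes in the overlapping-subdivergence case, where $\Xi(\Gamma)$ is a genuine linear combination $\alpha_{1}t_{1} + \dots + \alpha_{n}t_{n}$ and the graphon is built by the normalized sum $W_{u_{\Gamma}} = (W_{t_{1}}+\dots+W_{t_{n}})/|W_{t_{1}}+\dots+W_{t_{n}}|$ on a partition $\{I_{i}\}$ of $[0,1]$: one must check that admissible cuts distribute over this normalized sum compatibly with weak isomorphism, i.e. that the choice of subintervals $I_{i}$ (dictated by the grading and the cut-distance topology) does not affect the resulting class, and that passing to the cut-distance limit commutes with applying $\Delta_{{\rm graphon}}$. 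I would handle this by working at the level of homomorphism densities $t(H,-)$, which are additive-then-normalized in a controlled way and which detect weak-isomorphism classes, thereby reducing the coherence questions to identities among density functionals that already hold in $H_{{\rm CK}}(\Phi)$; continuity of $t(H,-)$ then upgrades these identities to the full completion.
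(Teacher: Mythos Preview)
Your proposal is correct and follows essentially the same route as the paper: transport the Connes--Kreimer structure to unlabeled graphon classes via the correspondence $\Gamma \mapsto [W_{t_{\Gamma}}]$, take the free commutative algebra with loop-number grading, define the antipode recursively from connectedness, and extend everything to large Feynman diagrams by cut-distance completion. The only cosmetic differences are that the paper writes the coproduct directly in the subgraph/quotient form $\Delta_{{\rm graphon}}([W_{\Gamma}]) = \sum [W_{\gamma}] \otimes [W_{\Gamma/\gamma}]$ rather than pulling back $\Delta_{{\rm CK}}$ through $\pi$, and it handles the extension to $[W_{X_{{\rm DSE}}}]$ by \emph{explicitly} declaring $\Delta_{{\rm graphon}}([W_{X_{{\rm DSE}}}])$ and $S_{{\rm graphon}}([W_{X_{{\rm DSE}}}])$ to be the cut-distance limits of the corresponding sequences on the partial sums $\{Y_{m}\}$ (invoking Theorem~\ref{feynman-graphon-4}), which is precisely your density-plus-continuity argument unpacked.
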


\begin{proof}
We plan to equip $\mathcal{S}^{\Phi}_{{\rm graphon}}$ with an enriched version of the renormalization Hopf algebra which is completed with respect to the cut-distance topology.

Thanks to Lemma \ref{feynman-graphon-1}, for each finite Feynman
diagram $\Gamma$, we associate the unlabeled graphon class
$[W_{\Gamma}]$. In addition, the unique solution of each
combinatorial Dyson--Schwinger equation DSE in
$\mathcal{S}^{\Phi,g}$ determines a unique large Feynman diagram
$X_{{\rm DSE}}$ such that thanks to Theorem \ref{feynman-graphon-4},
this infinite graph can be interpreted as the convergent limit of
the sequence of partial sums with respect to the cut-distance
topology. Therefore it does make sense to replace objects of
$\mathcal{S}^{\Phi,g}$ with large Feynman diagrams such as $X_{{\rm DSE}}$
as the unique solution of the equation DSE. Thanks to Lemma
\ref{feynman-graphon-1}, we associate a unique unlabeled
graphon class $[W_{t_{X_{{\rm DSE}}}}]$ to the large Feynman diagram
$X_{{\rm DSE}}$ via its rooted tree (forest) representation. For simplicity in the presentation, from now we use the
notation $[W_{X_{{\rm DSE}}}]$ for this graphon class.

It is possible to lift the renormalization coproduct (\ref{copro-1})
onto the level of unlabeled graphons which contribute to the
description of (large) Feynamn diagrams. For a given finite Feynman
diagram $\Gamma$ with the corresponding unlabeled graphon
$[W_{\Gamma}]$, define
\begin{equation} \label{cop-graphon-1}
\Delta_{{\rm graphon}}([W_{\Gamma}]):= \sum [W_{\gamma}] \otimes
[W_{\Gamma / \gamma}]
\end{equation}
such that the sum is taken over all unlabeled graphon classes such as
$[W_{\gamma}]$ associated to $\gamma$ as the disjoint union of 1PI superficially
divergent subgraphs of $\Gamma$.

Thanks to Theorem \ref{feynman-graphon-4}, for the unlabeled graphon
class $[W_{X_{{\rm DSE}}}]$ corresponding to the large Feynman
diagram $X_{{\rm DSE}}$, define its coproduct as the convergent
limit of the sequence $\{\Delta_{{\rm graphon}}([W_{Y_{m}}])\}_{m
\ge 1}$ of the coproducts of the finite partial sums with respect to
the cut-distance topology.

Now we can adapt (\ref{cop-graphon-1}) for the level of large
Feynman diagrams and define
\begin{equation} \label{cop-graphon-2}
\Delta_{{\rm graphon}}([W_{X_{{\rm DSE}}}]):= \sum [W_{\Upsilon}]
\otimes [W_{X_{{\rm DSE}} / \Upsilon}]
\end{equation}
such that the sum is taken over all unlabeled graphon classes such as
$[W_{\Upsilon}]$ associated to $\Upsilon$ as the disjoint union of 1PI superficially
divergent subgraphs of $X_{{\rm DSE}}$.

If we consider objects of $\mathcal{S}^{\Phi}_{{\rm graphon}}$ as generators of a free commutative algebra, then thanks to (\ref{cop-graphon-1}) we obtain a bialgebra structure on Feynman graphons which is graded
in terms of the number of independent loops of the corresponding
Feynman diagrams. The unlabeled graphon class $[W_{\mathbb{I}}]$
corresponding to the empty graph is the unit for this bialgebra. The
counit is also defined by
\begin{equation}
\tilde{\varepsilon}([W_{\Gamma}])=\{^{1, \ \ \ \
[W_{\Gamma}]=[W_{\mathbb{I}}]}_{0, \ \ \ \ \ {\rm else}}.
\end{equation}

The existence of the graduation parameter is the key tool to define
an antipode map. For each finite Feynman diagram $\Gamma$, we have
\begin{equation} \label{antipode-graphon-1}
S_{{\rm graphon}}([W_{\Gamma}]) = - [W_{\Gamma}] - \sum
S_{{\rm graphon}}([W_{\gamma_{(1)}}]) [W_{\gamma_{(2)}}]
\end{equation}
such that $\Delta_{{\rm graphon}}([W_{\Gamma}])=\sum
[W_{\gamma_{(1)}}]  \otimes [W_{\gamma_{(2)}}]$.

Thanks to Theorem \ref{feynman-graphon-4}, for the unlabeled graphon
class $[W_{X_{{\rm DSE}}}]$ corresponding to the large Feynman
diagram $X_{{\rm DSE}}$, define its antipode as the convergent limit
of the sequence $\{S_{{\rm graphon}}([W_{Y_{m}}])\}_{m \ge 1}$ of
unlabeled graphons of finite partial sums $Y_{m}$ with respect to
the cut-distance topology. Since partial sums are finite graphs,
their corresponding graphon type antipodes $S_{{\rm graphon}}([W_{Y_{m}}])$ can be obtained inductively by the coproduct
$\Delta_{{\rm graphon}}$ where we have
\begin{equation} \label{antipode-graphon-2}
S_{{\rm graphon}}([W_{Y_{m}}]) = - [W_{Y_{m}}] - \sum
S_{{\rm graphon}}([W_{\Gamma_{(1)}}]) [W_{\Gamma_{(2)}}]
\end{equation}
such that $\Delta_{{\rm graphon}}([W_{Y_{m}}])=\sum
[W_{\Gamma_{(1)}}] \otimes [W_{\Gamma_{(2)}}]$.

Now we can adapt the antipode (\ref{antipode-graphon-1}) for the
level of large Feynman diagrams and define
\begin{equation} \label{antipode-graphon-3}
S_{{\rm graphon}}([W_{{\rm DSE}}]) = - [W_{{\rm DSE}}] - \sum
S_{{\rm graphon}}([W_{\Upsilon_{(1)}}]) [W_{\Upsilon_{(2)}}]
\end{equation}
such that $\Delta_{{\rm graphon}}([W_{{\rm DSE}}])=\sum
[W_{\Upsilon_{(1)}}]  \otimes [W_{\Upsilon_{(2)}}]$.

Therefore $\mathcal{S}^{\Phi}_{{\rm graphon}}$ becomes a connected
graded free commutative non-cocommutative (not necessarily finite type) Hopf algebra.
In addition, the recursive structure of the coproduct
(\ref{cop-graphon-2}) and the antipode (\ref{antipode-graphon-3}) together with their linear property allow us to show the compatibility of this Hopf algebraic structure with
the cut-distance topology.
\end{proof}

Feynman graphon models of Feynman diagrams allow us to show that the space $\mathcal{S}^{\Phi}_{{\rm graphon}}$ (given by Theorem \ref{feynman-graphon-5}) is completed with respect to the cut-distance topology.

\begin{cor}
Let $V$ be a complex vector space with a basis labeled by coupling
constants of a given Quantum Field Theory $\Phi$, and suppose ${\rm
Diff}(V)$ be the group of formal diffeomorphisms of $V$ tangent to
the identity at $0 \in V$ and $H_{{\rm diff}}(V)$ be its
corresponding Hopf algebra. The complex Lie group $\mathbb{G}^{\Phi}_{{\rm
graphon}}(\mathbb{C})$ of characters on Feynman graphons can be
represented by ${\rm Diff}(V)$.
\end{cor}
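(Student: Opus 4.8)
The plan is to reproduce, at the level of the enriched Hopf algebra $\mathcal{S}^{\Phi}_{{\rm graphon}}$, the classical Connes--Kreimer identification of the character group of the renormalization Hopf algebra with the group of formal diffeomorphisms of the coupling-constant space. Recall first that $H_{{\rm diff}}(V)$ is the Fa\`a di Bruno type Hopf algebra of coordinate functions on ${\rm Diff}(V)$: its generators $a_{n}$ (or $a_{n_{1}\dots n_{k}}$ when $\dim V > 1$) are the Taylor coefficients of a formal diffeomorphism $\varphi(g) = g + \sum_{n \ge 2} a_{n}\, g^{n}$ tangent to the identity at $0$, and its coproduct is dual to composition of such series. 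The input from Quantum Field Theory is that the effective coupling constant of $\Phi$ is precisely such a formal diffeomorphism, whose $n$-th Taylor coefficient is a canonical element of $H_{{\rm FG}}(\Phi)$ --- a fixed polynomial in the 1PI Green's-function graphs dictated by the interaction monomials of the Lagrangian, modulo the Slavnov--Taylor and Ward identities recalled in the Introduction.

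The first step is therefore to define the Hopf algebra homomorphism $\Psi \colon H_{{\rm diff}}(V) \to H_{{\rm FG}}(\Phi)$ by sending each $a_{n}$ to this canonical combination of Feynman graphs; the verification that $\Psi$ intertwines the Fa\`a di Bruno coproduct with $\Delta_{{\rm FG}}$ is exactly the content of the Connes--Kreimer diffeomorphism theorem. The second step is to compose $\Psi$ with the embedding $H_{{\rm FG}}(\Phi) \hookrightarrow \mathcal{S}^{\Phi}_{{\rm graphon}}$ supplied by Lemma \ref{feynman-graphon-1} and Theorem \ref{feynman-graphon-5}, under which each finite Feynman diagram $\Gamma$ goes to its unlabeled Feynman graphon class $[W_{\Gamma}]$ and $\Delta_{{\rm FG}}$ to $\Delta_{{\rm graphon}}$. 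The image of the composite $\widetilde{\Psi} \colon H_{{\rm diff}}(V) \to \mathcal{S}^{\Phi}_{{\rm graphon}}$ lies entirely in the dense sub-Hopf-algebra generated by finite Feynman graphons, so $\widetilde{\Psi}$ is continuous for the cut-distance topology without any extra argument, since it never meets the boundary graphons $[W_{X_{{\rm DSE}}}]$ that carry the topological subtleties. The third step is to dualize: applying the contravariant functor ${\rm Hom}(-,\mathbb{C})$ to $\widetilde{\Psi}$, and recalling ${\rm Hom}(\mathcal{S}^{\Phi}_{{\rm graphon}},\mathbb{C}) = \mathbb{G}^{\Phi}_{{\rm graphon}}(\mathbb{C})$ together with ${\rm Hom}(H_{{\rm diff}}(V),\mathbb{C}) = {\rm Diff}(V)$, produces a homomorphism of complex pro-unipotent Lie groups
\begin{equation}
\overline{\Psi} \colon \mathbb{G}^{\Phi}_{{\rm graphon}}(\mathbb{C}) \longrightarrow {\rm Diff}(V),
\end{equation}
which sends a character $\chi$ --- a non-perturbative Feynman-rules functional on graphons --- to the formal diffeomorphism $g \mapsto g + \sum_{n \ge 2} \chi\big(\widetilde{\Psi}(a_{n})\big)\, g^{n}$, i.e. to the effective coupling constant evaluated through $\chi$. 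This $\overline{\Psi}$ is the representation of $\mathbb{G}^{\Phi}_{{\rm graphon}}(\mathbb{C})$ by ${\rm Diff}(V)$ asserted in the statement; functoriality of ${\rm Hom}(-,\mathbb{C})$ makes it automatically a morphism of affine group schemes rather than merely of abstract groups.

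The last and hardest point is to pin down the image of $\overline{\Psi}$, namely to prove it surjective so that ${\rm Diff}(V)$ is genuinely a quotient Lie group of $\mathbb{G}^{\Phi}_{{\rm graphon}}(\mathbb{C})$; equivalently, that $\widetilde{\Psi}$ is injective as a Hopf algebra map. For this I would show that in each loop degree the $\beta$-function combination of graphs is non-zero and that the generators $a_{n}$ map to algebraically independent graphons --- which reduces, via Lemma \ref{feynman-graphon-1}, to the fact that distinct 1PI Feynman diagrams have distinct unlabeled Feynman graphon classes, with no accidental weak isomorphism collapsing them. One must also check that every character of the finitely generated sub-Hopf-algebra of $\mathcal{S}^{\Phi}_{{\rm graphon}}$ extends to a cut-distance continuous character of the whole completed Hopf algebra; here the compactness statement for graphon space and the density of finite Feynman diagrams provide the needed extension. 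The genuine obstacle is thus the bookkeeping of how the cut-distance completion interacts with the duality ${\rm Hom}(-,\mathbb{C})$ and the pro-algebraic structure, not any new algebraic phenomenon, and modulo that the corollary follows from the classical Connes--Kreimer diffeomorphism theorem together with Theorem \ref{feynman-graphon-5}.
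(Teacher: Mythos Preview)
Your approach is essentially the same as the paper's: define the Connes--Kreimer Hopf algebra map $\Psi\colon H_{{\rm diff}}(V)\to H_{{\rm FG}}(\Phi)$ sending the Taylor generators $a_{n}$ to the effective-coupling combinations of graphs, compose with the embedding $H_{{\rm FG}}(\Phi)\hookrightarrow\mathcal{S}^{\Phi}_{{\rm graphon}}$ from Lemma~\ref{feynman-graphon-1} and Theorem~\ref{feynman-graphon-5}, then dualize to obtain the group homomorphism $\mathbb{G}^{\Phi}_{{\rm graphon}}(\mathbb{C})\to{\rm Diff}(V)$. The paper's proof does exactly this and stops there; the phrase ``can be represented by'' is taken in the weak sense of exhibiting a group homomorphism, and no surjectivity or image computation is attempted.

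Your final paragraph therefore goes beyond what the paper proves. The injectivity of $\widetilde{\Psi}$ and the extension of finite-graph characters to cut-distance continuous characters on all of $\mathcal{S}^{\Phi}_{{\rm graphon}}$ are genuine questions that you correctly identify as the real content, but the paper neither raises nor settles them. If your aim is to match the paper, the first three steps already suffice; if your aim is a sharper statement, the issues you flag about the interaction of the cut-distance completion with ${\rm Hom}(-,\mathbb{C})$ are exactly where additional work would be needed, and your sketch of reducing injectivity to the non-collapse of distinct 1PI diagrams under weak isomorphism is the right line of attack.
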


\begin{proof}
The Hopf algebra $H_{{\rm diff}}(\mathbb{C})$ of formal
diffeomorphisms of $\mathbb{C}$ tangent to the identity has
generators such as $a_{n}$ which play the role of coordinates of
\begin{equation}
\phi(x) = x + \sum_{n \ge 2} a_{n}(\phi)x^{n}
\end{equation}
such that $\phi$ is a formal diffeomorphism satisfying $\phi(0)=0$,
$\phi'(0)={\rm id}$. Its coproduct is given by
\begin{equation}
\Delta(a_{n})(\phi_{1} \otimes \phi_{2}) = a_{n}(\phi_{2} \circ
\phi_{1}).
\end{equation}

We can define a Hopf algebra homomorphism $\Psi: H_{{\rm diff}}(V)
\rightarrow H_{{\rm FG}}(\Phi)$ with the corresponding dual
group homomorphism $\hat{\Psi}: \mathbb{G}_{\Phi}(\mathbb{C})
\rightarrow {\rm Diff}(V)$. The map $\Psi$ maps the coefficients
of the expansion of formal diffeomorphisms to the coefficients in
the renormalization Hopf algebra of the expansions of the effective (or running)
coupling constants of the physical theory as formal power series in the bare
coupling constants. As the consequence, for each Dyson--Schwinger
equation DSE with the corresponding Hopf subalgebra $H_{{\rm DSE}}$
and Lie (sub)group $\mathbb{G}_{{\rm DSE}}(\mathbb{C})$, we can define
a group homomorphism $\hat{\Psi}_{{\rm DSE}}$ from $\mathbb{G}_{{\rm
DSE}}(\mathbb{C})$ to ${\rm Diff}(V)$. \cite{connes-marcolli-1,
shojaeifard-5}

Thanks to Lemma \ref{feynman-graphon-1}, we can embed $H_{{\rm
FG}}(\Phi)$ into the renormalization Hopf algebra
$\mathcal{S}^{\Phi}_{{\rm graphon}}$ of Feynman graphons. This
allows us to lift the map $\Psi$ onto the level of Feynman graphons
and build a new Hopf algebra homomorphism $\overline{\Psi}: H_{{\rm
diff}}(V) \rightarrow \mathcal{S}^{\Phi}_{{\rm graphon}}$ with
the corresponding dual group homomorphism $\hat{\overline{\Psi}}:
\mathbb{G}^{\Phi}_{\rm graphon}(\mathbb{C}) \rightarrow {\rm Diff}(V)$.
\end{proof}

The construction of a canonical filtration on terms $X_{n}$s of the
unique solution of a given Dyson--Schwinger equation has been explained in \cite{kruger-kreimer-1} where each filtered term
maps to a certain power of $L$ in the log-expansion. The original
idea is to filter images of Feynman diagrams in a particular
universal enveloping algebra which generates a quasi-shuffle type
Hopf algebra. Thanks to Theorem \ref{feynman-graphon-4} and Theorem
\ref{feynman-graphon-5}, we aim to adapt this filtration for large
Feynman diagrams.

\begin{thm} \label{graphon-filtration-1}
Renormalized Feynman rules characters of the Hopf algebra
$\mathcal{S}^{\Phi}_{{\rm graphon}}$ filtrate large Feynman
diagrams.
\end{thm}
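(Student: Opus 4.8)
The plan is to first extend the renormalized Feynman rules character to the completed topological Hopf algebra $\mathcal{S}^{\Phi}_{{\rm graphon}}$ of Theorem \ref{feynman-graphon-5}, then to transport the Kr\"uger--Kreimer log-filtration along the embedding $H_{{\rm FG}}(\Phi) \hookrightarrow \mathcal{S}^{\Phi}_{{\rm graphon}}$, and finally to push that filtration forward onto large Feynman graphons by cut-distance limits of partial sums. First I would fix a regularized Feynman rules character $\phi$ on $H_{{\rm FG}}(\Phi)$ with values in a Laurent series algebra, write $\phi_{+}$ for the holomorphic part of its Birkhoff factorization, and set $\phi_{R} := \phi_{+}|_{\epsilon = 0}$, which is a character with values in $\mathbb{C}[L]$ where $L$ is the external scale parameter. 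Because the grafting operators $B^{+}_{\gamma_{n}}$ are Hochschild one cocycles (equation (\ref{cop-rec-1})), evaluating $\phi_{R}$ on the generator $X_{n}$ of the solution of a combinatorial Dyson--Schwinger equation yields a polynomial in $L$ of degree at most $n$; this is the scattering-type recursion at the heart of \cite{kruger-kreimer-1}. Since by Lemma \ref{feynman-graphon-1} the subalgebra $H_{{\rm FG}}(\Phi)$ sits densely inside $\mathcal{S}^{\Phi}_{{\rm graphon}}$ and the coproduct $\Delta_{{\rm graphon}}$ and antipode $S_{{\rm graphon}}$ are cut-distance continuous, the characters $\phi$, $\phi_{\pm}$ and hence $\phi_{R}$ extend uniquely by continuity to characters of $\mathcal{S}^{\Phi}_{{\rm graphon}}$; this is where Theorems \ref{feynman-graphon-4} and \ref{feynman-graphon-5} enter.

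Next I would define the filtration. On a finite Feynman graphon $[W_{\Gamma}]$, let $\mathcal{F}_{k}$ be the subspace spanned by those components whose image under $\phi_{R}$ has $L$-degree at most $k$; equivalently, following \cite{kruger-kreimer-1}, this is the transport under the embedding (\ref{Feynman-tree}) of the filtration carried by the quasi-shuffle type universal enveloping algebra into which Feynman diagrams map. For a large Feynman graphon $[W_{X_{{\rm DSE}}}]$, recall from Theorem \ref{feynman-graphon-4} that it is the cut-distance limit of the finite partial sums $[W_{Y_{m}}]$ with $Y_{m} = \sum_{i=1}^{m} (\lambda g)^{i} X_{i}$, and put
\begin{equation}
\mathcal{F}_{k}([W_{X_{{\rm DSE}}}]) := \lim_{m \to \infty} \mathcal{F}_{k}([W_{Y_{m}}]),
\end{equation}
the limit being taken in the cut-distance topology. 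This is well defined: only finitely many $X_{i}$ can contribute to the coefficient of $L^{j}$ for $j \le k$ (since ${\rm deg}_{L}\,\phi_{R}(X_{i}) \le i$), so the truncated sequence stabilises in each fixed $L$-degree, and weakly isomorphic partial sums give identical classes by Definition \ref{graphon-1}(ii).

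I would then check the structural properties: exhaustiveness $\bigcup_{k}\mathcal{F}_{k} = \mathcal{S}^{\Phi}_{{\rm graphon}}$ holds on finite graphons and passes to the limit; compatibility $\Delta_{{\rm graphon}}(\mathcal{F}_{k}) \subseteq \sum_{p+q=k}\mathcal{F}_{p}\otimes\mathcal{F}_{q}$ follows from the recursive formula (\ref{cop-graphon-2}) together with the lifted cocycle identity (\ref{cop-rec-1}), exactly as at the finite level and then by cut-distance continuity; and, by construction, the graded piece $\mathcal{F}_{k}/\mathcal{F}_{k-1}$ of $[W_{X_{{\rm DSE}}}]$ is the cut-distance limit of the graphons carrying the coefficient of $L^{k}$ in the log-expansion of the $Y_{m}$, so $\phi_{R}$ reads off precisely that coefficient. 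That is what it means for the renormalized Feynman rules character to filtrate the large Feynman diagram $X_{{\rm DSE}}$.

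The main obstacle is the interplay of two a priori unrelated topologies. For a genuinely infinite large Feynman diagram the log-expansion $\phi_{R}([W_{X_{{\rm DSE}}}])$ is a formal power series in $L$, i.e. an element of $\mathbb{C}[[L]]$ rather than a polynomial, so the filtration has infinitely many nontrivial steps; one must show that the associated filtration topology (the $L$-adic topology on the log-expansion, equivalently the topology making $\bigcap_{k}\mathcal{F}_{k}$ closed) is compatible with---indeed coarser than---the cut-distance topology on $\mathcal{S}^{\Phi}_{{\rm graphon}}$, so that the limits above exist, $\bigcap_{k}\mathcal{F}_{k}$ is trivial, and the filtration separates large Feynman graphons. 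I expect this to reduce, via the $n$-adic estimates of \cite{bergbauer-kreimer-1, foissy-4} already used in Theorem \ref{feynman-graphon-4} and the density of finite Feynman graphons, to a comparison between the $n$-adic filtration by loop number and the $L$-degree filtration; this comparison is exactly the bookkeeping carried out at the finite level in \cite{kruger-kreimer-1}, and the only new point is that it survives the cut-distance completion.
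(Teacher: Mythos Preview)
Your proposal takes essentially the same route as the paper: both transport the Kr\"uger--Kreimer log-filtration to $\mathcal{S}^{\Phi}_{{\rm graphon}}$ and then extend it to large Feynman graphons as cut-distance limits of the filtration on partial sums, with the filtration degree read off from the power of $L$ in the log-expansion of the renormalized character. The main difference is one of explicitness. Where you cite \cite{kruger-kreimer-1} and gesture at ``the quasi-shuffle type universal enveloping algebra'', the paper actually constructs the Hopf algebra $H_{{\rm word}}$ with its generalized quasi-shuffle product $\ominus_{\Theta}$ and deconcatenation coproduct, defines the embedding $\nu: H_{{\rm FG}}(\Phi) \to H_{{\rm word}}$ sending primitive graphs to letters (and lifts it to $\overline{\nu}$ on graphons), realises the filtration concretely via the lower central series of the dual Lie algebra $\mathcal{L}_{{\rm word}}$ using Hall bases, and then pulls the filtration back along $\overline{\nu}^{-1}$; the renormalized character on words is $\psi_{r}=\phi_{r}\circ\nu^{-1}$, and the key input from \cite{kruger-kreimer-1} is that $\psi_{r}$ sends an $n$-fold quasi-shuffle of Lie-algebra words to the $L^{n}$-term. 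Two minor points: your reference to (\ref{Feynman-tree}) is to the rooted-tree embedding, whereas the relevant map here (and in \cite{kruger-kreimer-1}) is the word embedding $\nu$; and the paper does not engage with the two-topology obstacle of your final paragraph---it simply treats cut-distance continuity of the coproduct, antipode and characters (Theorem \ref{feynman-graphon-5}) as sufficient to pass all constructions to the limit.
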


\begin{proof}
Set $H_{{\rm word}}$ as the vector space of words which contains
$H_{{\rm letter}}$ as the subspace of letters. Set a commutative associative map $\Theta: H_{{\rm
letter}} \times H_{{\rm letter}} \rightarrow H_{{\rm letter}}$ as the Hoffman pairing which
sends two generators $a,b$ to another generator $\Theta(a,b)$ and adds degrees. Define the
generalized quasi-shuffle product $\ominus_{\Theta}$ on $H_{{\rm
word}}$ as follows
\begin{equation}
au \ominus_{\Theta} bv:= a(u \ominus_{\Theta} bv) + b(au
\ominus_{\Theta} v) + \Theta(a,b)(u \ominus_{\Theta} v)
\end{equation}
which builds a commutative associative algebra with empty word
$\mathbb{I}$ as the unit. We can equip this algebra with the
following coproduct structure
\begin{equation} \label{shuffle-cop-1}
\Delta_{{\rm word}}(w) = \sum_{vu=w} u \otimes v
\end{equation}
which gives us a bialgebra structure on $H_{{\rm word}}$ with the
counit $\hat{\mathbb{I}}_{{\rm word}}$. The length of each word
determines a natural graduation parameter on this bialgebra which
leads us to define an antipode recursively. As the consequence,
$(H_{{\rm word}}, \ominus_{\Theta}, \mathbb{I}, \Delta_{{\rm word}},
\hat{\mathbb{I}}_{{\rm word}}, S_{{\rm word}})$ is a graded
connected commutative unital non-cocommutative counital Hopf algebra
\cite{ebrahimifard-guo-1, hoffman-2}. We have
\begin{equation}
\ominus_{\Theta} \circ (S_{{\rm word}} \otimes {\rm id}) \circ
\Delta_{{\rm word}} = \ominus_{\Theta} \circ ({\rm id} \otimes
S_{{\rm word}}) \circ \Delta_{{\rm word}} = \mathbb{I}_{{\rm word}}
\circ \hat{\mathbb{I}}_{{\rm word}}.
\end{equation}
In this setting, the grafting operator on words allows us to add a
letter to the first place
\begin{equation}
B_{a}^{+}(u):= au.
\end{equation}
We can check that for each $a$, the grafting operators are
Hochschild one-cocycles. It is possible to embed the renormalization
Hopf algebra of Feynman diagrams into the Hopf algebra of words.
This embedding is defined in terms of the homomorphism
$\nu: H_{{\rm FG}}(\Phi) \rightarrow H_{{\rm word}}$ determined by the relations
$$\nu(\mathbb{I})=\mathbb{I}_{{\rm word}}, \ \ \
\ominus_{\Theta} \circ (\nu \otimes \nu) = \nu \circ m, \ \ \
\hat{\mathbb{I}}_{{\rm word}} \circ \nu = \nu \circ
\hat{\mathbb{I}},$$
\begin{equation}
\Delta_{{\rm word}} \circ \nu = (\nu \otimes \nu) \circ \Delta_{{\rm
FG}}, \ \ \ S_{{\rm word}} \circ \nu = \nu \circ S, \ \ \
B_{a_{n}}^{+} \circ \nu = \nu \circ B^{+}_{\gamma_{n}}.
\end{equation}
The morphism $\nu$ sends each primitive Feynman graph $\gamma_{n}$
to a letter $a_{n}$. Thanks to Theorem \ref{feynman-graphon-5}, it
is possible to lift the embedding $\nu$ onto a new homomorphism
$\overline{\nu}$ which embeds the renormalization Hopf algebra of
graphons $\mathcal{S}^{\Phi}_{{\rm graphon}}$ into the Hopf algebra
of words. It is enough to replace each Feynman diagram $\Gamma$ with
its corresponding unlabeled graphon class $[W_{\Gamma}]$.

Consider Dyson--Schownger equations for 1PI Green's functions with
the general form
\begin{equation} \label{dse-green-1}
\Gamma^{\bar{n}} = 1 + \sum_{\gamma, {\rm res}(\gamma)=\bar{n}}
\frac{g^{|\gamma|}}{{\rm Sym}(\gamma)} B^{+}_{\gamma}
(X^{\gamma}_{\mathcal{R}})
\end{equation}
such that $B^{+}_{\gamma}$ are Hochschild closed one-cocycles of the
Hopf algebra of Feynman diagrams indexed by Hopf algebra primitives
$\gamma$ with external legs $\bar{n}$, $X^{\gamma}_{\mathcal{R}}$ is
a monomial in superficially divergent Green's functions which dress
the internal vertices and edges of $\gamma$. If we apply the
renormalized Feynman rules character $\phi_{r}$ to a Feynman graph
which contributes to this class of equations, then we can obtain a
polynomial in a suitable external scale parameter $L={\rm
log}S/S_{0}$ such that $S_{0}$ fixes a reference scale for the
renormalization process. At the end of the day, we can get a
renormalized version $G_{r}(g,L,\theta)$ of Green's functions. Lemma
\ref{feynman-graphon-1} and Theorem \ref{feynman-graphon-5} are
useful to reformulate the equation (\ref{dse-green-1}) in the
language of graphons as an equation in the Hopf algebra
$\mathcal{S}^{\Phi}_{{\rm graphon}}$. The embedding $\overline{\nu}$
enables us to lift this graphon model Dyson--Schwinger equations
onto their corresponding equations in the Hopf algebra of words. We
have
\begin{equation}
X_{{\rm DSE},{\rm word}} = \overline{\nu}([W_{X_{{\rm DSE}}}])=
\mathbb{I}_{{\rm word}} + \sum_{n \ge 1} g^{n}B^{+}_{l_{n}}(X_{{\rm
DSE},{\rm word}}^{\ominus_{\Theta} (n+1)})
\end{equation}
such that $X_{{\rm DSE},{\rm word}}$ is the word representation of
the unlabeled graphon class $[W_{X_{{\rm DSE}}}]$ with respect to
the large graph $X_{{\rm DSE}}$. We have $X_{{\rm DSE},{\rm word}} =
\sum_{n \ge 0} g^{n}z_{n}$ such that each $z_{n}=\nu(X_{n})$ is
determined recursively by the relations
\begin{equation}
z_{n} = \sum_{m=1}^{n}B^{+}_{l_{m}}(\sum_{k_{1}+...+k_{m+1}=n-m, \ \
k_{i} \ge 0} z_{k_{1}} \ominus_{\Theta} ... \ominus_{\Theta}
z_{k_{m+1}}).
\end{equation}

We plan to explain the filtration structure on words and then by
applying the inverse of the embedding $\nu$, we can adapt it for
the level of Feynman graphon representations of large Feynman diagrams which contribute to solutions of Dyson--Schwinger equations.

The canonical candidate for the filtration on words is built
in terms of the lower central series at the Lie algebra level where
we need to apply the theory of Hall sets and Hall basis. The
Milnor--Moore theorem (\cite{milnor-moore-1}) allows us to build the
graded dual Hopf algebra to $H_{{\rm word}}$ in terms of the
universal algebra of a particular Lie algebra.

A bilinear anti-symmetric map $[.,.]$ on a vector space
$\mathcal{L}$ over the field $\mathbb{K}$ with characteristic zero
defines a Lie algebra structure if it obeys the conditions \\
$$\forall x \in \mathcal{L}: \ \ [x,x]=0,$$
\begin{equation}
\forall x,y,z \in \mathcal{L}: \ \
[x,[y,z]]+[y,[z,x]]+[z,[x,y]]=0.
\end{equation}
The lexicographical ordering enables us to build the Hall basis for
the Lie algebra $\mathcal{L}$ \cite{hall-1, hall-2}. For a given
ordering  $x_{1} < x_{2} < ... < [x_{1},x_{2}] < ...$ on
$\mathcal{L}$, define $[x,x']$ as an element of a Hall basis for
$\mathcal{L}$ iff \\
- $x,x' \in \mathcal{L}$ are Hall basis elements with $x<x'$, \\
- if $x'=[x_{1},x_{2}]$, then $x' \ge x_{2}$. \\
The unique universal enveloping algebra associated to $\mathcal{L}$
is defined in terms of the tensor algebra
$(T(\mathcal{L}),\otimes,1)$ such that
\begin{equation}
T(\mathcal{L}):= \bigoplus_{n \ge 0} \mathcal{L}^{\otimes n}.
\end{equation}
Set
\begin{equation}
I:= \{s \otimes (x \otimes y - y \otimes x - [x,y]) \otimes t: \ \
x,y \in \mathcal{L}; \ \ s,t \in T(\mathcal{L})\}
\end{equation}
as a two sided ideal and then define the equivalent classes of the
form
\begin{equation}
[t]:= \{s \in T(\mathcal{L}): \ \ s-t \in I\}.
\end{equation}
$T(\mathcal{L})/I$ is actually the unique universal enveloping
algebra $\mathcal{U}(\mathcal{L})$ generated by $\mathcal{L}$ where
the product of this algebra is given by
\begin{equation}
m_{\mathcal{U}(\mathcal{L})}([s] \otimes [t]):= [s \otimes t]
\end{equation}
and $[1]$ is the unit of this algebra. In addition, we can equip
$\mathcal{U}(\mathcal{L})$ by a graded Hopf algebra structure with
the coproduct
\begin{equation}
\Delta_{\mathcal{U}(\mathcal{L})}([x])=[x] \otimes \mathbb{I} +
\mathbb{I} \otimes [x]
\end{equation}

Set $\mathcal{L}_{{\rm word}}$ as the Lie algebra corresponding to
the Hopf algebra $H_{{\rm word}}$. Define the decreasing sequence
\begin{equation}
\mathcal{L}_{{\rm word}}=\mathcal{L}_{1} \ge \mathcal{L}_{2} \ge
\mathcal{L}_{3} \ge ...
\end{equation}
such that $\mathcal{L}_{n+1}$ is generated by all objects $[x,y]$
with $x \in \mathcal{L}$ and $y \in \mathcal{L}_{n}$. For letters
$a_{1},a_{2},...,\Theta(a_{1},a_{2}),... \in H_{{\rm letter}}$, set
$x_{1},x_{2},...,\Theta(x_{1},x_{2}), ... \in \mathcal{L} /
\mathcal{L}_{2}$. The duality between $H_{{\rm word}}$ and
$\mathcal{U}(\mathcal{L}_{{\rm word}})$ can be determined by the
unique linear invertible map $\nu$ such that
\begin{equation}
\nu(a_{i})=[x_{i}], \ \ \nu(\Theta(a_{i},a_{j})) =
[\Theta(x_{i},x_{j})], \ \ \ \nu(a_{i}a_{j})=[x_{i} \otimes x_{j}],
....
\end{equation}
The universal enveloping algebra $\mathcal{U}(\mathcal{L}_{{\rm
word}})$ is a filtered bialgebra.

Thanks to the structure of the quasi-shuffle product, we can build a
filtration algorithm where it requires to consider all words with
length $k$ into the lexicographical order in terms of the
concatenation commutator with respect to the Hall basis which
generates words with the length $k-1$. This procedure, which starts
with the maximal length of words, should be repeated for the full
quasi-shuffle products of the $k$ corresponding letters and then
insert them into the expression \cite{kruger-kreimer-1}. Now if we
apply the inverse of the embedding $\overline{\nu}$, then this
filtration can be defined on Feynman graphons and large Feynman
diagrams which live in $\mathcal{S}^{\Phi}_{{\rm graphon}}$.

Let us now apply renormalized Feynman rules characters on large
Feynman diagrams. The Hopf algebra homomorphism $\nu$ sends the
renormalized Feynman rules character $\phi_{r}$ to
\begin{equation}
\psi_{r}=\phi_{r} \circ \nu^{-1}.
\end{equation}
In \cite{kruger-kreimer-1}, it is shown that for each word $w \in
H_{{\rm word}}$ with the corresponding $[x] \in
\mathcal{U}(\mathcal{L}_{{\rm word}})$, if $x \in
T(\mathcal{L}_{{\rm word}})$ is also an element of the Lie algebra
$\mathcal{L}_{{\rm word}}$, then $\psi_{r}(u)$ maps to the
$L$-linear part of the log-expansion of the renormalized Green's
functions. In addition, we have
\begin{equation}
\psi_{r}(u \circleddash_{\Theta} v) = \psi_{r}(u).\psi_{r}(v).
\end{equation}
For a given Feynman diagram $\Gamma$ with the coradical degree
$r_{\Gamma}$, we have
\begin{equation} \label{feynman-rules-1}
\phi_{r}(\Gamma) = \sum_{j=1}^{r_{\Gamma}}
c_{j}^{\Gamma}(\theta)L^{j}
\end{equation}
such that
\begin{equation} \label{feynman-rules-2}
c_{j}^{\Gamma} = c_{1}^{\otimes j} \tilde{\Delta}_{{\rm
FG}}^{j-1}(\Gamma)
\end{equation}
while $c_{1}^{\otimes j}: H_{{\rm FG}}(\Phi) \otimes ...^{j \ {\rm
times}} \otimes H_{{\rm FG}}(\Phi) \rightarrow \mathbb{C}$ is a
symmetric function. Thanks to the Hopf algebra homomorphism $\nu$
which preserves the co-radical degree, for any word $u \in H_{{\rm
word}}$, we have
\begin{equation}
\psi_{r}(u) = \sum_{j=1}^{r_{u}} d_{j}^{u} L^{j}
\end{equation}
such that $d_{j}^{u}=c_{j}^{\nu^{-1}(u)}$.

Thanks to the graphon representations of Dyson--Schwinger equations
(Theorem \ref{feynman-graphon-4} and Theorem
\ref{feynman-graphon-5}), we want to lift the Feynman rules
character (\ref{feynman-rules-1}) onto the level of large Feynman
diagrams.

In \cite{kruger-kreimer-1}, it is shown that $\psi_{r}$ maps the
shuffle product $u_{1} \ominus_{\Theta} ... \ominus_{\Theta} u_{n}$
to the $L^{n}$-term in the log expansion such that as the result,
this process filtrates coefficients $X_{n}$ in the unique solution
of each Dyson--Schwinger equation. We lift this story onto the
level of Feynman graphons where we have the correspondences $Y_{m} \mapsto [W_{Y_{m}}]$ for each $m \ge 1$ and $X_{{\rm DSE}} \mapsto [W_{X_{{\rm DSE}}}]$. Now the renormalized character
$\tilde{\psi}_{r}:=\tilde{\phi}_{r} \circ \overline{\nu}^{-1}$ can help us to map
the formal expansions $\sum_{1}^{m} u_{i_{1}} \ominus_{\Theta} ...
\ominus_{\Theta} u_{i_{k}}$ of shuffle products of words
corresponding to the partial sums $Y_{m}$ of $X_{{\rm DSE}}$ to a
certain term in the expansion
\begin{equation}  \label{feynman-rules-6}
\tilde{\phi}_{r}([W_{Y_{m}}]) = \sum_{j=1}^{r_{Y_{m}}}
c_{j}^{Y_{m}}(\theta)L^{j}
\end{equation}
such that
\begin{equation} \label{feynman-rules-7}
c_{j}^{Y_{m}} = c_{1}^{\otimes j} \tilde{\Delta}_{{\rm
graphon}}^{j-1}([W_{Y_{m}}]).
\end{equation}
When $m$ tends to infinity the sequence $\{c_{j}^{Y_{m}}\}_{m \ge
1}$ of coefficients converges to $c_{j}^{X_{{\rm DSE}}}$ (for each
$j$) with respect to the cut-distance topology. In addition, Feynman
rules characters are linear homomorphisms which means that when $m$
tends to infinity, the sequence $\{\tilde{\phi}_{r}([W_{Y_{m}}])\}_{m \ge
1}$ is cut-distance convergent to $\tilde{\phi}_{r}([W_{X_{{\rm DSE}}}])$.

Therefore for the infinite graph $X_{{\rm DSE}}$, we can obtain the
following formal expansion as the result of the application of the
renormalized Feynman rules character $\tilde{\phi}_{r}$.
\begin{equation} \label{feynman-rules-3}
\tilde{\phi}_{r}([W_{X_{{\rm DSE}}}]) = \sum_{j=1}^{r_{X_{{\rm DSE}}}}
c_{j}^{X_{{\rm DSE}}}(\theta)L^{j}
\end{equation}
such that
\begin{equation}
c_{j}^{X_{{\rm DSE}}} = c_{1}^{\otimes j} \widetilde{\Delta}_{{\rm
graphon}}^{j-1}([W_{X_{{\rm DSE}}}]).
\end{equation}

Suppose $\mathcal{S}^{\Phi}_{{\rm graphon}, (i)}$ is the vector
space generated by those Feynman graphons derived from solutions of
Dyson--Schwinger equations. The considered Feynman graphons are filtered in
terms of the canonical filtration on their corresponding words.
Namely, the filtration $(i)$ can be defined by applying
$\overline{\nu}$ and $\tilde{\psi}_{r}$ while the associated words
map to a similar term $i$ in the log-expansion
(\ref{feynman-rules-3}). Set
\begin{equation} \label{filter-1}
\mathcal{S}^{\Phi}_{{\rm graphon}, (0)} \preceq
\mathcal{S}^{\Phi}_{{\rm graphon}, (1)} \preceq ... \preceq
\mathcal{S}^{\Phi}_{{\rm graphon}, (i)} \preceq ... \preceq
\mathcal{S}^{\Phi,g}_{{\rm graphon}}
\end{equation}
as the resulting filtration on all Feynman graphons which contribute
to solutions of Dyson--Schwinger equations such that
$\mathcal{S}^{\Phi,g}_{{\rm graphon}} := \bigcup_{i \ge 0}
\mathcal{S}^{\Phi}_{{\rm graphon}, (i)} \subset
\mathcal{S}^{\Phi}_{{\rm graphon}}$. It defines the graded vector space $\mathcal{G}^{\Phi}$ given by
$$\mathcal{G}^{\Phi}_{[0]}=\mathcal{S}^{\Phi}_{{\rm
graphon}, (0)}$$
\begin{equation}
\mathcal{G}^{\Phi}_{[i]}:= \mathcal{S}^{\Phi}_{{\rm graphon}, (i)} /
\mathcal{S}^{\Phi}_{{\rm graphon}, (i-1)}, \ \ \ \forall i
>0
\end{equation}
where we have
\begin{equation}
\mathcal{G}^{\Phi}=\bigoplus_{i \ge 0} \mathcal{G}^{\Phi}_{[i]}.
\end{equation}
We can show that $\mathcal{G}^{\Phi}$ and
$\mathcal{S}^{\Phi,g}_{{\rm graphon}}$ are isomorphic as vector
spaces.
\end{proof}

Theory of words and quasi-shuffle products were studied in
\cite{hoffman-2} where the existence of Hopf algebra structures on
words have been addressed. The applications of shuffle type of
products to Hopf algebraic renormalization have been addressed in
different settings \cite{ebrahimifard-guo-1, kreimer-11,
shojaeifard-1, shojaeifard-6}. There is also another alternative
machinery (\cite{shojaeifard-1}) to lift Dyson--Schwinger equations
in $\mathcal{S}^{\Phi,g}$ onto their corresponding equations in the
Hopf algebra of words. According to this approach, we apply the
rooted tree representation of the Connes--Marcolli shuffle type
renormalization Hopf algebra $H_{\mathbb{U}}$ and then embed
$H_{\mathbb{U}}$ into an adapted version of the Hopf algebra
$H_{{\rm CK}}$ decorated by a particular Hall set. In this setting,
we can address the question about the existence of another
filtration on Dyson--Schwinger equations originated from Hopf
algebra of words.

Thanks to the explained Hopf algebraic formalism we are ready to
formulate a generalization of the BPHZ renormalization machinery for
non-perturbative QFT in the context of Feynman graphons which will
be discussed in the next part.

\section{\textsl{A generalization of the BPHZ renormalization machinery for large Feynman diagrams via Feynman graphons}}

In gauge field theories with strong couplings such as QCD, the
size of the coupling constant even at rather large values of the
exchanged momentum is in the range that the convergence of the perturbative
expansion is slow. Although in higher energy levels, the theory
enjoys the asymptotic freedom property, several orders of
perturbation theory should be concerned to provide a greater
accuracy where we need to deal with the evaluation of a large class
of higher order Feynman diagrams. We can address the corrections to
the quark self-energy as a complicated example in this setting. The
situation goes stranger when we deal with QCD in relatively lower
energy levels where non-perturbative aspects do appear. This is the
level that we need to build a powerful theoretical model for the
study of interactions of elementary particles. Thanks to the Hopf
algebra structure $\mathcal{S}^{\Phi}_{{\rm graphon}}$, which
encodes Dyson--Schwinger equations of a given physical theory
$\Phi$, in this part we plan to build a topological Hopf algebraic
renormalization program for large Feynman diagrams in the context of
the Riemann--Hilbert problem. We describe the construction of the
Connes--Kreimer renormalization group at the level of Feynman
graphons.

The renormalization Hopf algebra $H_{{\rm FG}}(\Phi)$ of Feynman
diagrams of the physical theory $\Phi$ encodes enough mathematical
tools to explain the step by step removal of sub-divergencies. The
graded dual of this Hopf algebra identifies an infinite dimensional
complex pro-unipotent Lie group denoted by
$\mathbb{G}_{\Phi}(\mathbb{C})$. Feynman rules, which allow us to
replace Feynman diagrams with their corresponding Feynman integrals,
are encoded by some elements of $\mathbb{G}_{\Phi}(\mathbb{C})$.
This Lie group has been applied by Connes and Kreimer to describe
perturbative renormalization as a special instance of a general
mathematical procedure of multiplicative extraction of finite values
in the context of the Riemann--Hilbert problem. According to this
approach, the BPHZ perturbative renormalization can be described
as the existence of the Birkhoff factorization for loops such as
$\gamma_{\mu}$ defined on an analytic curve $C \subset
\mathbb{C}P^{1}$ (as the domain) which has values in
$\mathbb{G}_{\Phi}(\mathbb{C})$. It is shown that
\begin{equation}
\gamma_{\mu}(z)=\gamma_{-}(z)^{-1}\gamma_{\mu,+}(z)
\end{equation}
such that $\gamma_{\mu,+}(z)$ is the boundary value of a holomorphic
map from the inner domain of $C$ to the group
$\mathbb{G}_{\Phi}(\mathbb{C})$ and $\gamma_{-}(z)$ is the boundary
value of the outer domain of $C$ to the group
$\mathbb{G}_{\Phi}(\mathbb{C})$. In addition, $\gamma_{-}$ is
normalized by $\gamma_{-}(\infty)=1$. The renormalized theory is the
evaluation of the holomorphic part $\gamma_{\mu,+}$ of
$\gamma_{\mu}$ as a product of two holomorphic maps $\gamma_{\pm}$
from the connected components $C_{\pm}$ of the complement of the
circle $C$ in the Riemann sphere $\mathbb{C}P^{1}$. For Dimensional
Regularization, we are interested in an infinitesimal disk around
$z=0$ while the curve $C$ is the boundary of this disk. Then we
have $0 \in C_{+}$ and $\infty \in C_{-}$. \cite{connes-kreimer-2,
connes-kreimer-3}

Each regularized Feynman integral $U^{z}(\Gamma(p_{1},...,p_{N}))$
defines a loop $\gamma_{\mu}(z)$ which allows us to lift the
analytic formulation of the Birkhoff factorization onto the level of
affine group schemes. Set
\begin{equation}
K=\mathbb{C}\{z\}[z^{-1}], \ \ O_{1}=\mathbb{C}\{z\}, \ \
O_{2}=\mathbb{C}[z^{-1}].
\end{equation}
It is shown that each character $\phi \in \mathbb{G}_{\Phi}(K)$ has
a unique Birkhoff factorization $\phi = (\phi_{-} \circ S) *
\phi_{+}$ such that $\phi_{+} \in \mathbb{G}_{\Phi}(O_{1})$,
$\phi_{-} \in \mathbb{G}_{\Phi}(O_{2})$ and $\varepsilon_{-} \circ
\phi_{-} = \varepsilon$. The BPHZ renormalization procedure has been
encapsulated in terms of the deformed version of the antipode with respect to the Minimal Subtraction scheme. We have
\begin{equation}
\Gamma \longmapsto S_{R_{ms}}^{\phi} * \phi (\Gamma)
\end{equation}
such that
\begin{equation} \label{bphz-1}
S_{R_{ms}}^{\phi}(\Gamma) = - R_{ms}(\phi(\Gamma)) - R_{ms}
(\sum_{\gamma \subset \Gamma} S_{R_{ms}}^{\phi}(\gamma) \phi(\Gamma
/ \gamma)).
\end{equation}
Therefore
\begin{equation} \label{bphz-2}
S_{R_{ms}}^{\phi} * \phi (\Gamma) = \overline{R}(\Gamma) +
S_{R_{ms}}^{\phi}(\Gamma)
\end{equation}
such that
\begin{equation} \label{bphz-3}
\overline{R}(\Gamma) = U_{\mu}^{z}(\Gamma) + \sum_{\gamma \subset
\Gamma} c(\gamma) U_{\mu}^{z}(\Gamma / \gamma) = \phi(\Gamma) +
\sum_{\gamma \subset \Gamma} S_{R_{ms}}^{\phi}(\gamma) \phi(\Gamma /
\gamma)
\end{equation}
is the Bogoliubov's operation. For any given Feynman integral
$U_{\mu}(\Gamma)$, the mathematical term $S_{R_{ms}}^{\phi}(\Gamma)$
generates the counterterm and the mathematical term
$S_{R_{ms}}^{\phi}
* \phi (\Gamma)$ generates the corresponding renormalized value.
\cite{connes-kreimer-2, connes-kreimer-3, ebrahimifard-kreimer-1,
figueroa-graciabondia-2}

Each Dyson--Schwinger equation determines a commutative graded Hopf
subalgebra $H_{{\rm DSE}}$ of $H_{{\rm FG}}(\Phi)$ where the
morphism (\ref{dse-lie-group-1}) allows us to have a projection of the complex Lie group
$\mathbb{G}_{{\rm DSE}}(\mathbb{C})$ inside
$\mathbb{G}_{\Phi}(\mathbb{C})$. The Lie group $\mathbb{G}_{{\rm DSE}}(\mathbb{C})$ has been applied to
formulate the foundations of a new differential Galois theory for the computation of counterterms which contribute to the renormalization of formal expansions in the solutions of Dyson--Schwinger equations. Work on this Lie group has already provided global $\beta$-functions with respect to solutions of Dyson--Schwinger equations to relate renormalized values generated via different regularization schemes. The study of Dyson--Schwinger equations in the context of this class of Lie groups have also been lifted onto a categorical setting where the renormalization of any given equation DSE can be encoded by objects of its corresponding neutral Tannakian category ${\rm Rep}_{\mathbb{G}_{{\rm DSE}}^{*}}$. In this setting, we can compute non-perturbative parameters derived from the equation DSE in terms of a class of equi-singular differential equations organized as a subcategory of the universal Connes--Marcolli category $\mathcal{E}^{{\rm CM}}$ of flat equi-singular vector bundles. \cite{shojaeifard-1, shojaeifard-3, shojaeifard-4, shojaeifard-5}

Thanks to Feynman graphon models, it is shown in \cite{shojaeifard-10} that the unique solution of any given Dyson--Schwinger equation in $\mathcal{S}^{\Phi,g}$ is the cut-distance convergent limit of the sequence of its corresponding partial sums. It allows us to embed $\mathcal{S}^{\Phi,g}$ as a topological sub-vector space in $\mathcal{S}^{\Phi}_{{\rm graphon}}$. In this part we plan to provide a new interpretation of the
renormalization of (large) Feynman diagrams in the context of the
Hopf algebra $\mathcal{S}^{\Phi}_{{\rm graphon}}$ of Feynman graphons. We determine a new class of singular differential equations which contribute in the computation of non-perturbative counterterms corresponding to renormalized Dyson--Schwinger equations.

\begin{thm}\label{bphz-dse-graphon-2}
The Hopf--Birkhoff factorization process provides a renormalization
program for each large Feynman diagram in $\mathcal{S}^{\Phi,g}$.
\end{thm}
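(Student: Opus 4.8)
The plan is to lift the Connes--Kreimer Hopf--Birkhoff factorization from $H_{{\rm FG}}(\Phi)$ to the topological Hopf algebra $\mathcal{S}^{\Phi}_{{\rm graphon}}$ of Theorem \ref{feynman-graphon-5}, and then to read off the renormalization of a large Feynman diagram $X_{{\rm DSE}}$ as the cut-distance limit of the renormalizations of its finite partial sums $Y_{m}$. The first step is to fix a dimensionally regularized Feynman rules character: on the finite part $H_{{\rm FG}}(\Phi)\hookrightarrow\mathcal{S}^{\Phi}_{{\rm graphon}}$ (Lemma \ref{feynman-graphon-1}) this is the usual loop $z\mapsto U^{z}(\,\cdot\,)$ with values in $K=\mathbb{C}\{z\}[z^{-1}]$, giving after continuous extension an element $\phi\in\mathbb{G}^{\Phi}_{{\rm graphon}}(K)$. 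Since $\phi$ is constant on unlabeled graphon classes and since $\Delta_{{\rm graphon}}$ and $S_{{\rm graphon}}$ were defined on large diagrams by (\ref{cop-graphon-2}) and (\ref{antipode-graphon-3}) as cut-distance limits of their finite analogues on the $Y_{m}$, the character extends to all of $\mathcal{S}^{\Phi}_{{\rm graphon}}$ by $\phi([W_{X_{{\rm DSE}}}]):=\lim_{m}\phi([W_{Y_{m}}])$, the limit being independent of the approximating sequence by uniqueness of cut-distance limits (Theorem \ref{feynman-graphon-4}).

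Next I would transport the deformed antipode and Bogoliubov preparation to graphon level. On finite partial sums the classical recursion (\ref{bphz-1})--(\ref{bphz-3}) applies verbatim through the embedding of Lemma \ref{feynman-graphon-1}, so I would set
\begin{equation}
S^{\phi}_{R_{ms}}([W_{X_{{\rm DSE}}}]) := - R_{ms}\big(\phi([W_{X_{{\rm DSE}}}])\big) - R_{ms}\Big(\sum_{\Upsilon\subset X_{{\rm DSE}}} S^{\phi}_{R_{ms}}([W_{\Upsilon}])\,\phi([W_{X_{{\rm DSE}}/\Upsilon}])\Big),
\end{equation}
the sum being over disjoint unions of $1$PI superficially divergent subgraphs, consistently with (\ref{cop-graphon-2}). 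One checks that $S^{\phi}_{R_{ms}}([W_{Y_{m}}])$ coincides with the Connes--Kreimer counterterm of the finite graph $Y_{m}$, and that $\{S^{\phi}_{R_{ms}}([W_{Y_{m}}])\}_{m\ge1}$ is cut-distance Cauchy since the $m$-th term modifies the previous one only by contributions of loop order exceeding $m$; passing to the limit gives a well-defined non-perturbative counterterm character $S^{\phi}_{R_{ms}}\in\mathbb{G}^{\Phi}_{{\rm graphon}}(O_{2})$, with $O_{2}=\mathbb{C}[z^{-1}]$, whose value $S^{\phi}_{R_{ms}}([W_{X_{{\rm DSE}}}])$ is the counterterm of $X_{{\rm DSE}}$, and a renormalized value $S^{\phi}_{R_{ms}}*\phi([W_{X_{{\rm DSE}}}])=\overline{R}([W_{X_{{\rm DSE}}}])+S^{\phi}_{R_{ms}}([W_{X_{{\rm DSE}}}])$, again the cut-distance limit of the corresponding finite expressions for the $Y_{m}$.

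I would then package this as a genuine Birkhoff factorization on the pro-unipotent complex Lie group $\mathbb{G}^{\Phi}_{{\rm graphon}}(\mathbb{C})$: exploiting the loop-number grading of $\mathcal{S}^{\Phi}_{{\rm graphon}}$ and its recovery through ${\rm Diff}(V)$ (the Corollary after Theorem \ref{feynman-graphon-5}), every $\phi\in\mathbb{G}^{\Phi}_{{\rm graphon}}(K)$ should admit a unique factorization $\phi=(\phi_{-}\circ S_{{\rm graphon}})*\phi_{+}$ with $\phi_{+}\in\mathbb{G}^{\Phi}_{{\rm graphon}}(O_{1})$, $\phi_{-}\in\mathbb{G}^{\Phi}_{{\rm graphon}}(O_{2})$ and $\varepsilon_{-}\circ\phi_{-}=\tilde{\varepsilon}$; on finite Feynman diagrams this is the Connes--Kreimer factorization, on large Feynman diagrams its cut-distance completion, and restricting $\phi_{\pm}$ to the image of $\mathcal{S}^{\Phi,g}$ produces the renormalized Dyson--Schwinger equations. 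Exactly as in the $\mathbb{G}_{{\rm DSE}}(\mathbb{C})$ setting (\cite{shojaeifard-5}), the negative part $\phi_{-}$ is then governed by a flat equi-singular $\mathbb{G}^{\Phi}_{{\rm graphon}}(\mathbb{C})$-connection, yielding the announced new class of singular (Picard--Fuchs type) differential equations for the non-perturbative counterterms.

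The hard part will be the convergence issue in the last step: one must show that the Birkhoff components $\phi_{\pm}([W_{Y_{m}}])$ — equivalently the partial counterterms $S^{\phi}_{R_{ms}}([W_{Y_{m}}])$ — are Cauchy for the cut-distance metric and that their limits remain group-like, i.e.\ compatible with the convolution product on $\mathbb{G}^{\Phi}_{{\rm graphon}}$. Since the loop number is unbounded along $X_{{\rm DSE}}$, this forces the recursive Bogoliubov contractions to stay controlled as the graph grows, which in turn requires working in a completion of $\mathbb{G}^{\Phi}_{{\rm graphon}}(\mathbb{C})$ adapted to the cut-distance completion of the Hopf algebra; checking that this completed group still carries a Birkhoff decomposition — equivalently, that the projective limit of the finite-type factorizations of the $Y_{m}$ exists and is unique — is the technical core of the argument and the point where the analytic input (separability and compactness from the cut-distance topology) has to be combined carefully with the algebraic pro-unipotent structure.
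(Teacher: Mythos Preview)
Your proposal is correct and follows essentially the same route as the paper: lift the dimensionally regularized Feynman rules to a character $\tilde\phi$ on $\mathcal{S}^{\Phi}_{{\rm graphon}}$ via $\tilde\phi^{z}([W_{\Gamma}]):=\phi^{z}(\Gamma)$, invoke the Rota--Baxter property of $(A_{{\rm dr}},R_{{\rm ms}})$ to get a Birkhoff factorization on $\mathbb{G}^{\Phi}_{{\rm graphon}}(\mathbb{C})$, and then realize the counterterm $S^{\tilde\phi}_{R_{{\rm ms}}}([W_{X_{{\rm DSE}}}])$ and renormalized value $S^{\tilde\phi}_{R_{{\rm ms}}}*\tilde\phi([W_{X_{{\rm DSE}}}])$ as cut-distance limits of the corresponding finite quantities on the partial sums $Y_{m}$. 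The paper handles your ``hard part'' more summarily than you anticipate: it does not run a separate Cauchy argument for $\phi_{\pm}([W_{Y_{m}}])$ but simply appeals to the fact that $\Delta_{{\rm graphon}}$ and $S_{{\rm graphon}}$ on $[W_{X_{{\rm DSE}}}]$ were already \emph{defined} in Theorem~\ref{feynman-graphon-5} as cut-distance limits of their values on the $Y_{m}$, so that the recursive BPHZ formulas (\ref{bphz-1})--(\ref{bphz-3}) transported to graphons automatically produce convergent sequences; your more explicit treatment of this step, and your anticipation of the equi-singular connection picture (which the paper postpones to the subsequent Proposition), are both reasonable elaborations rather than departures.
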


\begin{proof}
We build a renormalization program for Feynman graphons in the Hopf algebra
$\mathcal{S}^{\Phi}_{{\rm graphon}}$ and then we pull back the results to the level of Feynman diagrams and solutions of Dyson--Schwinger equations.

Thanks to Milnor--Moore Theorem (\cite{milnor-moore-1}), the
commutative graded Hopf algebra $\mathcal{S}^{\Phi}_{{\rm graphon}}$
(given by Theorem \ref{feynman-graphon-5}) determines the complex infinite
dimensional pro-unipotent Lie group $\mathbb{G}^{\Phi}_{{\rm
graphon}}(\mathbb{C})$. Choose Dimensional Regularization and
Minimal Subtraction as the renormalization program where the
commutative algebra $A_{{\rm dr}}$ of Laurent series with finite
pole parts encodes the regularization scheme and the linear map
$R_{{\rm ms}}$, which projects series onto their pole parts, encodes
the renormalization scheme. Set ${\rm Loop}(\mathbb{G}^{\Phi}_{{\rm
graphon}}(\mathbb{C}), \mu)$ as the space of loops $\gamma_{\mu}$ on
the infinitesimal punctured disk ${\bf \Delta}^{*}$, around the
origin in the complex plane, with values in $\mathbb{G}^{\Phi}_{{\rm
graphon}}(\mathbb{C})$. These loops can represent unrenormalized
regularized Feynman rules characters in $\mathbb{G}^{\Phi}_{{\rm
graphon}}(\mathbb{C})$ which act on Feynman graphons. The
Rota--Baxter property of $(A_{{\rm dr}},R_{{\rm ms}})$ supports the
existence of a unique Birkhoff factorization
$(\gamma_{-},\gamma_{+})$ which can be lifted onto the level of
Feynman rules characters of Feynman graphons to achieve the factorization
$(\tilde{\phi}_{-},\tilde{\phi}_{+})$ for Feynman rules character
$\tilde{\phi}$. We have
\begin{equation}
\tilde{\phi}^{z}([W_{\Gamma}]):= \phi^{z}(\Gamma)
\end{equation}
as the modified version of the regularized Feynman rules character
$\phi^{z}$ which acts on Feynman graphons.

For a given Feynman graphon $[W_{X}]$ corresponding to the unique
solution of an equation DSE, set $[W_{Y_{m}}]$ (for each $m$) as the
unlabeled graphon classes with respect to the partial sums $Y_{m}$
of the infinite graph $X$. Since formal expansions $Y_{m}$s and their corresponding rooted tree representations are sparse graphs, then we need to apply rescaling methods to obtain non-zero Feynman graphon model for the solution $X$. Details about this subject has been discussed in \cite{shojaeifard-10,ihes-submitted}.
Now if we apply the renormalization coproduct formulas (\ref{cop-graphon-1}), (\ref{cop-graphon-2}) on Feynman graphons, the renormalization
antipode formulas (\ref{antipode-graphon-2}),
(\ref{antipode-graphon-3}) on Feynman graphons and also the Hopf algebraic BPHZ process
given by (\ref{bphz-1}), (\ref{bphz-2}), (\ref{bphz-3}), then we can
build the sequence $\{S_{R_{{\rm
ms}}}^{\tilde{\phi}}([W_{Y_{m}}])\}_{m \ge 1}$ of Feynman graphons
which is convergent with respect to the cut-distance topology. We
have
$$S_{R_{{\rm ms}}}^{\tilde{\phi}}([W_{X}]) = {\rm lim}_{m \rightarrow \infty} S_{R_{{\rm ms}}}^{\tilde{\phi}}([W_{Y_{m}}])
= {\rm lim}_{m \rightarrow \infty} \sum_{i=1}^{m} S_{R_{{\rm
ms}}}^{\tilde{\phi}}([W_{X_{i}}])$$
\begin{equation}
= {\rm lim}_{m \rightarrow \infty} \sum_{i=1}^{m} -R_{{\rm ms}}
(\tilde{\phi}([W_{X_{i}}])) - R_{{\rm ms}} (\sum S_{R_{{\rm
ms}}}^{\tilde{\phi}}([W_{\gamma}])
\tilde{\phi}([W_{X_{i}/\gamma}])).
\end{equation}
The functional $S_{R_{{\rm ms}}}^{\tilde{\phi}}$ is the negative
component of the Birkhoff factorization of $\tilde{\phi}$. The term $S_{R_{{\rm ms}}}^{\tilde{\phi}}([W_{X}])$
addresses the counterterm with respect to the Feynman
graphon $[W_{X}]$. We can also build the sequence $\{S_{R_{{\rm
ms}}}^{\tilde{\phi}}*\tilde{\phi}([W_{Y_{m}}])\}_{m \ge 1}$ of
Feynman graphons which is convergent with respect to the
cut-distance topology. We have
\begin{equation}
S_{R_{{\rm ms}}}^{\tilde{\phi}}*\tilde{\phi}([W_{X}]) = {\rm lim}_{m
\rightarrow \infty} S_{R_{{\rm
ms}}}^{\tilde{\phi}}*\tilde{\phi}([W_{Y_{m}}]) = {\rm lim}_{m
\rightarrow \infty} \sum_{i=1}^{m} S_{R_{{\rm
ms}}}^{\tilde{\phi}}*\tilde{\phi}([W_{X_{i}}])
\end{equation}
such that the convolution product $*$ is defined with respect to the
coproduct $\Delta_{{\rm graphon}}$. The functional
$S_{R_{{\rm ms}}}^{\tilde{\phi}}*\tilde{\phi}$ is the positive
component of the Birkhoff factorization of $\tilde{\phi}$. The term $S_{R_{{\rm
ms}}}^{\tilde{\phi}}*\tilde{\phi}([W_{X}])$ addresses
the renormalized value with respect to the Feynman graphon
$[W_{X}]$.
\end{proof}

Thanks to the filtration parameter on Feynman graphons on the basis
of the Hopf algebra of words given by Theorem
\ref{graphon-filtration-1}, it is possible to define a new one-parameter group
$\{\theta_{t}\}_{t}$ of automorphisms on $\mathbb{G}^{\Phi}_{{\rm
graphon}}(\mathbb{C})$ with the infinitesimal generator
\begin{equation}
\frac{d}{dt}|_{t=0}\theta_{t} = Y
\end{equation}
such that $Y$ sends each Feynman graphon $[W_{\Gamma}]$ to its
corresponding  filtration rank $n_{[W_{\Gamma}]}$. In other words,
for each character $\tilde{\phi}$,
\begin{equation}
<\theta_{t}(\tilde{\phi}),[W_{\Gamma}]>:=<\tilde{\phi},\theta_{t}([W_{\Gamma}])>.
\end{equation}

\begin{lem} \label{renorm-group-1}
Suppose the loop $\gamma_{\mu} \in {\rm Loop}(\mathbb{G}^{\Phi}_{{\rm
graphon}}(\mathbb{C}), \mu)$ encodes the regularized unrenormalized
Feynman rules character $\tilde{\phi}$ on Feynman graphons. Then we have
$$\gamma_{e^{t}\mu}(z) = \theta_{tz}(\gamma_{\mu}(z)).$$
In addition, the limit
$$F_{t} = {\rm lim}_{z \rightarrow 0} \gamma_{-}(z) \theta_{tz}(\gamma_{-}(z)^{-1})$$
defines a new one-parameter subgroup of $\mathbb{G}^{\Phi}_{{\rm
graphon}}(\mathbb{C})$ such that for each $t \in \mathbb{R}$,
$$\gamma_{e^{t}\mu^{+}}(0) = F_{t} \gamma_{\mu^{+}}(0).$$
\end{lem}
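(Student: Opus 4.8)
The plan is to transport the Connes--Kreimer renormalization-group argument to the Hopf algebra $\mathcal{S}^{\Phi}_{{\rm graphon}}$, using the grading operator $Y$ introduced after Theorem \ref{graphon-filtration-1} (so that $\theta_{t}$ multiplies a Feynman graphon of filtration rank $n$ by $e^{nt}$), the Hopf--Birkhoff factorization supplied by Theorem \ref{bphz-dse-graphon-2}, and the principle that every identity valid for the finite partial sums $Y_{m}$ passes to the large Feynman diagrams by cut-distance continuity. First I would establish the scaling relation $\gamma_{e^{t}\mu}(z) = \theta_{tz}(\gamma_{\mu}(z))$. On a finite Feynman graphon $[W_{\Gamma}]$ this is the classical dimensional-analysis identity for $\tilde{\phi}^{z}([W_{\Gamma}]) = \phi^{z}(\Gamma)$: the mass scale enters the regularized Feynman rules character only through the factor $\mu^{zY}$ attached to the loop grading, so rescaling $\mu \mapsto e^{t}\mu$ multiplies the degree-$n$ component by $e^{ntz}$, which is exactly the action of $\theta_{tz}$. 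For a Feynman graphon $[W_{X_{{\rm DSE}}}]$ coming from the unique solution of an equation ${\rm DSE}$ I would apply this to each $[W_{Y_{m}}]$ and pass to the cut-distance limit $m\to\infty$, using that $\gamma_{\mu}$, $\gamma_{e^{t}\mu}$ and $\theta_{tz}$ are cut-distance continuous ($\Delta_{{\rm graphon}}$ is cut-distance compatible by Theorem \ref{feynman-graphon-5}, and $\theta_{tz}$ is block-diagonal along the filtration (\ref{filter-1})).

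Next I would show that the negative component $\gamma_{-}(z)$ of the Birkhoff factorization $\gamma_{\mu}(z) = \gamma_{-}(z)^{-1}\gamma_{\mu,+}(z)$ is independent of $\mu$. This is read off the Bogoliubov recursion (\ref{bphz-1}): the counterterm $S_{R_{{\rm ms}}}^{\tilde{\phi}}([W_{\Gamma}])$ is built by iterated applications of $R_{{\rm ms}}$, which extract pole parts, while all $\mu$-dependence sits in the holomorphic, pole-free factor $\mu^{zY}$; hence it cancels in the recursion and $S_{R_{{\rm ms}}}^{\tilde{\phi}} = \gamma_{-}(z)$ carries no $\mu$. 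As before, this is first checked on the finite graphons $[W_{Y_{m}}]$, where (\ref{antipode-graphon-2}) and (\ref{bphz-1})--(\ref{bphz-3}) apply verbatim, and then extended to $[W_{X_{{\rm DSE}}}]$ as a cut-distance limit, exactly as in the proof of Theorem \ref{bphz-dse-graphon-2}.

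With these two facts the lemma is algebra. From $\gamma_{e^{t}\mu}=\theta_{tz}(\gamma_{\mu})$ and the $\mu$-independence of $\gamma_{-}$,
\begin{equation}
\gamma_{e^{t}\mu,+}(z) = \gamma_{-}(z)\gamma_{e^{t}\mu}(z) = \big(\gamma_{-}(z)\theta_{tz}(\gamma_{-}(z)^{-1})\big)\,\theta_{tz}\big(\gamma_{\mu,+}(z)\big).
\end{equation}
Since $\gamma_{-}(z)$ lies in the arithmetic subgroup with a finite pole part in $z^{-1}$, writing $\theta_{tz}=e^{tzY}$ and expanding in $z$ shows that the singular terms in $\gamma_{-}(z)\theta_{tz}(\gamma_{-}(z)^{-1})$ cancel, so $F_{t} = \lim_{z\to 0}\gamma_{-}(z)\theta_{tz}(\gamma_{-}(z)^{-1})$ exists in $\mathbb{G}^{\Phi}_{{\rm graphon}}(\mathbb{C})$; letting $z\to 0$ in the displayed identity (with $\theta_{0}={\rm id}$ and $\gamma_{\mu,+}$ holomorphic at $0$) gives $\gamma_{e^{t}\mu^{+}}(0) = F_{t}\,\gamma_{\mu^{+}}(0)$. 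For the one-parameter property I would insert $\theta_{sz}(\gamma_{-}(z)^{-1}\gamma_{-}(z))=1$ and use that $\theta_{tz}$ is a group homomorphism with $\theta_{tz}\theta_{sz}=\theta_{(s+t)z}$ to get
\begin{equation}
\gamma_{-}(z)\theta_{(s+t)z}(\gamma_{-}(z)^{-1}) = \big(\gamma_{-}(z)\theta_{tz}(\gamma_{-}(z)^{-1})\big)\,\theta_{tz}\big(\gamma_{-}(z)\theta_{sz}(\gamma_{-}(z)^{-1})\big),
\end{equation}
and let $z\to 0$, using $\theta_{tz}\to{\rm id}$ and that the inner bracket tends to $F_{s}$, to obtain $F_{s+t}=F_{t}F_{s}=F_{s}F_{t}$.

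The hard part will not be the algebra but the analysis hidden in ``pass to the cut-distance limit'': one must justify interchanging $m\to\infty$ over partial sums with $z\to 0$ in the Birkhoff factorization, and verify that the limiting $\gamma_{-}(z)$ still has a genuinely finite pole part (so that the cancellation producing $F_{t}$ survives the limit). I expect to handle this degree by degree along the filtration (\ref{filter-1}): on each fixed filtration rank only finitely many $Y_{m}$ contribute, so the relevant identities stabilise in $m$ and cut-distance convergence is controlled component-wise — the same mechanism already exploited in Theorems \ref{feynman-graphon-4}, \ref{feynman-graphon-5} and \ref{bphz-dse-graphon-2}.
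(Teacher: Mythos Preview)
Your proposal is correct and follows essentially the same approach as the paper: transport the Connes--Kreimer renormalization-group argument to $\mathcal{S}^{\Phi}_{{\rm graphon}}$ via the grading/filtration and the Birkhoff factorization. The paper's own proof is far more terse---it simply invokes the construction of the graphon Hopf algebra (Theorem~\ref{feynman-graphon-5}) and then cites Proposition~1.47 of \cite{connes-marcolli-1} wholesale, without spelling out the scaling identity, the $\mu$-independence of $\gamma_{-}$, or the cut-distance limit arguments that you work through explicitly.
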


\begin{proof}
Thanks to the construction of the renormalization Hopf algebra of
Feynman graphons (given by Theorem \ref{feynman-graphon-5}) and Proposition
1.47 in \cite{connes-marcolli-1}, we have the proof.
\end{proof}

More details about non-perturbative parameters generated by Feynman graphon representations of solutions of Dyson--Schwinger equations have been discussed in \cite{shojaeifard-10,shojaeifard-18,ihes-submitted}.

It is possible to lift the Connes--Marcolli geometric approach onto
the level of the renormalization Hopf algebra of Feynman graphons.
For this purpose we need to adapt the regularization bundle and then
classify equi-singular flat connections, which encode
counterterms, in terms of the renormalization of Feynman graphons.

\begin{prop} \label{graphon-geimetric-1}
There exists a bijective correspondence (independent of the choice
of a local regular section $\sigma: {\bf \Delta} \rightarrow B$)
between equivalence classes of flat equi-singular $\mathbb{G}^{\Phi}_{{\rm
graphon}}(\mathbb{C})$-connections on the regularization bundle and
elements of the Lie algebra $\mathfrak{g}^{\Phi}_{{\rm
graphon}}(\mathbb{C})$.
\end{prop}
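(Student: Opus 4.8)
The plan is to transport the Connes--Marcolli classification of flat equi-singular connections (the analysis around Proposition 1.52 in \cite{connes-marcolli-1}) to the pro-unipotent Lie group $\mathbb{G}^{\Phi}_{{\rm graphon}}(\mathbb{C})$ attached to the graphon Hopf algebra of Theorem \ref{feynman-graphon-5}, using the one-parameter grading group $\{\theta_{t}\}_{t}$ of Lemma \ref{renorm-group-1} as the $\mathbb{G}_{m}$-action. First I would fix the geometric data: the regularization bundle is the principal $\mathbb{G}^{\Phi}_{{\rm graphon}}(\mathbb{C})$-bundle $B={\bf \Delta}\times\mathbb{G}^{\Phi}_{{\rm graphon}}(\mathbb{C})$ over a small disk ${\bf \Delta}$, equipped on ${\bf \Delta}^{*}$ with the $\mathbb{G}_{m}$-action rescaling $z$ and acting on the group by $\theta_{t}$, whose infinitesimal generator is the filtration grading $Y$ of Theorem \ref{graphon-filtration-1}. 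A connection on $B^{*}$ is then a $\mathfrak{g}^{\Phi}_{{\rm graphon}}(\mathbb{C})$-valued connection form $\varpi$; flatness is automatic in the one-variable case, so the content is the equi-singularity conditions ($\mathbb{G}_{m}$-invariance of $\varpi$ and $z$-independence of its restriction along the base-point fibre), and equivalence is gauge equivalence by maps $B\to\mathbb{G}^{\Phi}_{{\rm graphon}}(\mathbb{C})$ regular at $z=0$ and compatible with the $\mathbb{G}_{m}$-action. Here one must check that all of these notions descend to Feynman graphons, i.e. are compatible with the cut-distance topology; this uses the linearity and recursive compatibility already recorded in Theorem \ref{feynman-graphon-5} and the filtration of Theorem \ref{graphon-filtration-1}.

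Next comes the correspondence itself. To a flat equi-singular connection I would associate the negative part $\gamma_{-}$ of the Birkhoff factorization of the loop it defines, exactly the map furnished by Theorem \ref{bphz-dse-graphon-2}, once one verifies that equi-singular connections are precisely those loops whose factorization has $\mu$-independent negative part. Equi-singularity forces $\gamma_{-}$ to be a time-ordered exponential
\[
\gamma_{-}(z) = \mathbb{T}\exp\Big(-\frac{1}{z}\int_{0}^{\infty}\theta_{-t}(\beta)\,dt\Big)
\]
governed by a single element $\beta\in\mathfrak{g}^{\Phi}_{{\rm graphon}}(\mathbb{C})$ (the residue of $\varpi$, up to sign), which is the graphon analogue of the universal singular frame bookkeeping. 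Conversely, given any $\beta$ in the Lie algebra one runs the same formula, factorizes the resulting loop, and reconstructs a flat equi-singular connection; independence of the chosen regular section $\sigma:{\bf \Delta}\to B$ holds because a change of $\sigma$ moves the connection only within its equivalence class. Injectivity follows from recovering $\beta$ as the residue of $\varpi$ (equivalently the coefficient of $1/z$ in $\gamma_{-}$), a gauge invariant; surjectivity is the construction just described. This yields the claimed bijection between equivalence classes of flat equi-singular $\mathbb{G}^{\Phi}_{{\rm graphon}}(\mathbb{C})$-connections and $\mathfrak{g}^{\Phi}_{{\rm graphon}}(\mathbb{C})$.

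The main obstacle I anticipate is not the formal part --- which is essentially a translation of \cite{connes-marcolli-1} through the dictionary of Theorems \ref{feynman-graphon-5}--\ref{graphon-filtration-1} --- but the analytic point that $\mathcal{S}^{\Phi}_{{\rm graphon}}$ is \emph{not} of finite type, so $\mathbb{G}^{\Phi}_{{\rm graphon}}(\mathbb{C})$ is only pro-unipotent, completed in the cut-distance topology. One must therefore show the time-ordered exponential converges there: this I would do degree by degree along the filtration $\mathcal{S}^{\Phi}_{{\rm graphon},(i)}$ of Theorem \ref{graphon-filtration-1}, where each graded piece is finite-dimensional and the Connes--Marcolli estimates apply verbatim, then pass to the projective limit using that $\theta_{t}$ and $\Delta_{{\rm graphon}}$ respect the filtration and are continuous for the cut-distance metric. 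The only genuine risk is that the negative part $\gamma_{-}$ of a factorization escapes the completion; but the recursive formulas (\ref{antipode-graphon-2})--(\ref{antipode-graphon-3}) and (\ref{bphz-1})--(\ref{bphz-3}), which by Theorem \ref{bphz-dse-graphon-2} do converge in the cut-distance topology, rule this out. Once convergence is secured, the bijection and its independence of $\sigma$ are formal, completing the proof.
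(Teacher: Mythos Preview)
Your overall strategy---transport the Connes--Marcolli classification via the time-ordered exponential and Birkhoff factorization---is exactly what the paper does; it simply cites Theorem~1.67 of \cite{connes-marcolli-1} and does not linger over the finite-type/convergence issues you spend a paragraph on.

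However, there is a genuine structural error in your setup of the bundle, and it undermines the meaning of the statement you are trying to prove. In the paper (following \cite{connes-marcolli-1}), the base of the principal $\mathbb{G}^{\Phi}_{{\rm graphon}}(\mathbb{C})$-bundle is the \emph{two}-dimensional space ${\bf \Delta}\times\mathbb{C}^{*}$: the factor $\mathbb{C}^{*}$ carries the mass-scale parameter $v$, and the symbol $B$ in the proposition refers to this $\mathbb{G}_{m}$-bundle ${\bf \Delta}\times\mathbb{C}^{*}\to{\bf \Delta}$, so that a regular section $\sigma:{\bf \Delta}\to B$ is a choice of mass scale as a function of $z$. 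The $\mathbb{G}_{m}$-action is on this $\mathbb{C}^{*}$-fibre, not a rescaling of $z$ itself. You have collapsed $B$ into the principal bundle ${\bf \Delta}\times\mathbb{G}^{\Phi}_{{\rm graphon}}(\mathbb{C})$ over a one-dimensional base, and then your section $\sigma$ is just a gauge choice. With that setup, flatness \emph{is} automatic, as you say, but equi-singularity becomes vacuous: there is no second direction along which the singularity type could possibly depend on $\sigma$, so the ``independence of $\sigma$'' clause in the proposition has no content. The paper's decisive formula
\[
\gamma_{\mu}(z,v)=T\exp\Big(-\frac{1}{z}\int_{0}^{v}u^{Y}(\beta)\,\frac{du}{u}\Big)
\]
is explicitly a function of both $z$ and $v$; flatness on the two-dimensional base and equi-singularity together are what force this shape and hence single out $\beta\in\mathfrak{g}^{\Phi}_{{\rm graphon}}(\mathbb{C})$. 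Your expression $\gamma_{-}(z)=\mathbb{T}\exp(-\tfrac{1}{z}\int_{0}^{\infty}\theta_{-t}(\beta)\,dt)$ is the correct negative Birkhoff part, but it does not by itself reconstruct the connection over ${\bf \Delta}\times\mathbb{C}^{*}$. Once you restore the correct bundle $P^{0}_{{\rm graphon}}=({\bf \Delta}\times\mathbb{C}^{*}\setminus\pi^{-1}(0))\times\mathbb{G}^{\Phi}_{{\rm graphon}}(\mathbb{C})$ and let $\mathbb{G}_{m}$ act on the $\mathbb{C}^{*}$-factor, the remainder of your outline (Birkhoff, $\beta$ as residue, bijection) coincides with the paper's proof.
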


\begin{proof}
The regularization parameter in Dimensional Regularization can be
encoded by the punctured version of an infinitesimal disk ${\bf
\Delta}$ around the origin $z=0$. Set $P_{{\rm graphon}}:= ({\bf
\Delta} \times \mathbb{C}^{*}) \times \mathbb{G}^{\Phi}_{{\rm
graphon}}(\mathbb{C})$ as the trivial principal bundle over the base
space ${\bf \Delta} \times \mathbb{C}^{*}$. Remove the fiber over
$z=0$ to obtain the bundle $P^{0}_{{\rm graphon}}= {\bf \Delta}
\times \mathbb{C}^{*} - \pi^{-1}(\{0\}) \times \mathbb{G}^{\Phi}_{{\rm
graphon}}(\mathbb{C})$ as the regularization bundle modified with
respect to the renormalization Hopf algebra of Feynman graphons.

A flat $\mathbb{G}^{\Phi}_{{\rm graphon}}(\mathbb{C})$-connection $\varpi$
on $P^{0}_{{\rm graphon}}$ is a $\mathfrak{g}^{\Phi}_{{\rm
graphon}}(\mathbb{C})$-valued one form such that $\mathfrak{g}^{\Phi}_{{\rm
graphon}}(\mathbb{C})$ is the Lie algebra of the Lie group
$\mathbb{G}^{\Phi}_{{\rm graphon}}(\mathbb{C})$ which contains all linear
maps $l: \mathcal{S}^{\Phi}_{{\rm graphon}} \rightarrow
\mathbb{C}$ with the property
\begin{equation}
l([W_{\Gamma_{1}}][W_{\Gamma_{2}}]) = l([W_{\Gamma_{1}}])
\tilde{\varepsilon}([W_{\Gamma_{2}}]) +
\tilde{\varepsilon}([W_{\Gamma_{1}}])l([W_{\Gamma_{2}}]).
\end{equation}
The Lie bracket is given by the formula
\begin{equation}
[l_{1},l_{2}]([W_{\Gamma}]) = <l_{1} \otimes l_{2} - l_{2} \otimes
l_{1}, \Delta_{{\rm graphon}}([W_{\Gamma}])>.
\end{equation}
The one-parameter group $\{\theta_{t}\}_{t \in \mathbb{C}}$ of
automorphisms of $\mathbb{G}^{\Phi}_{{\rm graphon}}(\mathbb{C})$ can be
lifted onto the level of this Lie algebra.

A flat $\mathbb{G}^{\Phi}_{{\rm graphon}}(\mathbb{C})$-connection $\varpi$
on $P^{0}_{{\rm graphon}}$ is called equi-singular if it is
$\mathbb{G}_{m}$-invariant and for any solution $f$ of the
differential equation ${\bf D}f = \varpi$ with respect to the
logarithmic derivative, the restrictions of $f$ to sections
$\sigma:{\bf \Delta} \rightarrow {\bf \Delta} \times
\mathbb{C}^{*}$ have the same type of singularity.

Thanks to \cite{connes-marcolli-1}, the negative component of the
Birkhoff factorization of each $\gamma_{\mu} \in {\rm
Loop}(\mathbb{G}^{\Phi}_{{\rm graphon}}(\mathbb{C}), \mu)$ determines a
unique element $\beta$ in $\mathfrak{g}^{\Phi}_{{\rm graphon}}(\mathbb{C})$
where we have
\begin{equation} \label{geometric-counterterms-1}
\gamma_{-}(z) = T{\rm exp}(\frac{-1}{z} \int_{0}^{\infty}
\theta_{-t}(\beta)dt).
\end{equation}

We can show that for each $\varpi \in \mathfrak{g}^{\Phi}_{{\rm
graphon}}(K)$ with the trivial monodromy, there exists a solution
$\tilde{\psi} \in \mathbb{G}^{\Phi}_{{\rm graphon}}(K)$ for the
differential equation ${\bf D}\tilde{\psi}=\varpi$.

Two connections $\varpi_{1}, \varpi_{2}$ with the trivial monodromy
are called equivalent if they are gauge conjugate by an element
regular at $z=0$. It leads us to show that for equivalent
connections $\varpi_{1}, \varpi_{2}$, the solutions of the
differential equations ${\bf D}\tilde{\psi}^{\varpi_{1}} =
\varpi_{1}$ and ${\bf D}\tilde{\psi}^{\varpi_{2}} = \varpi_{2}$ have
the same negative components of the Birkhoff factorization, namely,
\begin{equation} \label{geometric-counterterms-2}
\tilde{\psi}^{\varpi_{1}}_{-} = \tilde{\psi}^{\varpi_{2}}_{-}.
\end{equation}

Thanks to (\ref{geometric-counterterms-1}),
(\ref{geometric-counterterms-2}) and the Connes--Marcolli
Classification Theorem (given by Theorem 1.67 in \cite{connes-marcolli-1}),
each element $\beta \in \mathfrak{g}^{\Phi}_{{\rm graphon}}(\mathbb{C})$
determines a unique class $\varpi$ of flat equi-singular
$\mathbb{G}^{\Phi}_{{\rm graphon}}(\mathbb{C})$-connections on $P^{0}_{{\rm
graphon}}$ in terms of a differential equation with the general form
${\bf D} \gamma_{\mu} = \varpi$ such that
\begin{equation}
\gamma_{\mu}(z,v) = T{\rm exp}(\frac{-1}{z} \int_{0}^{v}
u^{Y}(\beta)\frac{du}{u})
\end{equation}
where $u=tv, \ t\in[0,1]$ and $u^{Y}$ is the action of
$\mathbb{G}_{m}$ on $\mathbb{G}^{\Phi}_{{\rm graphon}}(\mathbb{C})$.
\end{proof}

\begin{prop}
The category $\mathcal{E}^{{\rm CM}}$ encodes the renormalization
group corresponding to the BPHZ renormalization of large Feynman
diagrams.
\end{prop}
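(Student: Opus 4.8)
The plan is to realize the renormalization group attached to the BPHZ procedure on large Feynman diagrams as the image of the canonical one-parameter subgroup of the Connes--Marcolli universal affine group scheme $\mathbb{U}=\mathbb{G}_{m} \ltimes \mathbb{U}^{*}$, where the pro-unipotent part $\mathbb{U}^{*}$ has as Lie algebra the free graded Lie algebra with one generator $e_{-n}$ in each degree $n \ge 1$. By Theorem \ref{feynman-graphon-5} and Proposition \ref{graphon-geimetric-1}, the BPHZ renormalization of a large Feynman diagram $[W_{X}]$ in $\mathcal{S}^{\Phi,g}$ is encoded by a flat equi-singular $\mathbb{G}^{\Phi}_{{\rm graphon}}(\mathbb{C})$-connection on the regularization bundle $P^{0}_{{\rm graphon}}$, and by (\ref{geometric-counterterms-1}) such a connection is governed by a single element $\beta \in \mathfrak{g}^{\Phi}_{{\rm graphon}}(\mathbb{C})$ --- the non-perturbative $\beta$-function --- together with the grading generator $Y$ of the filtration of Theorem \ref{graphon-filtration-1}. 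First I would organize these equi-singular connections into the neutral Tannakian category ${\rm Rep}_{\mathbb{G}^{*}_{{\rm graphon}}}$ of finite dimensional representations of $\mathbb{G}^{*}_{{\rm graphon}} := \mathbb{G}_{m} \ltimes \mathbb{G}^{\Phi}_{{\rm graphon}}(\mathbb{C})$, exactly as in the Connes--Marcolli construction of $\mathcal{E}^{\Phi}$, the $\mathbb{G}_{m}$-factor recording the filtration $(i)$ on Feynman graphons.

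Next I would invoke the universal property of $\mathbb{U}$: the pair $(\beta,Y)$ determines a morphism of affine group schemes $\mathbb{U} \to \mathbb{G}^{*}_{{\rm graphon}}$ which sends the canonical generator $e_{-1}$ to the rescaled $\beta$-function and intertwines the $\mathbb{G}_{m}$-actions, hence a fibre functor factoring ${\rm Rep}_{\mathbb{G}^{*}_{{\rm graphon}}}$ through $\mathcal{E}^{{\rm CM}} = {\rm Rep}_{\mathbb{U}}$. Dually this is a homomorphism of topological Hopf algebras $\mathcal{S}^{\Phi}_{{\rm graphon}} \to H_{\mathbb{U}}$ under which the universal singular frame pulls back to the family of counterterms $S^{\tilde{\phi}}_{R_{{\rm ms}}}([W_{X}])$ constructed in Theorem \ref{bphz-dse-graphon-2}. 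The one-parameter subgroups $\theta_{t}$ and $F_{t}$ of Lemma \ref{renorm-group-1}, which implement the dependence of the renormalized large Feynman diagrams on the mass scale $\mu$, then appear as the images of the $\mathbb{G}_{m} \subset \mathbb{U}$ action, and the identity $\gamma_{e^{t}\mu^{+}}(0)=F_{t}\gamma_{\mu^{+}}(0)$ becomes the evaluation of this $\mathbb{G}_{m}$-action on a vector bundle in $\mathcal{E}^{{\rm CM}}$; this is precisely the sense in which $\mathcal{E}^{{\rm CM}}$ encodes the renormalization group of large Feynman diagrams.

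The technical verification I would carry out last is compatibility with the cut-distance topology. The maps above are built through the partial sums $Y_{m}$, and since each $[W_{Y_{m}}]$ is a finite Feynman graphon the Connes--Marcolli correspondence applies to it verbatim; the point is to transport the factorization through $\mathbb{U}$ to the cut-distance limit $[W_{X}]=\lim_{m}[W_{Y_{m}}]$, using the linearity and the recursive structure of $\Delta_{{\rm graphon}}$ and $S_{{\rm graphon}}$ from Theorem \ref{feynman-graphon-5} together with the convergence of $\{S^{\tilde{\phi}}_{R_{{\rm ms}}}([W_{Y_{m}}])\}_{m \ge 1}$ from Theorem \ref{bphz-dse-graphon-2}. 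The hard part is exactly here: since $\mathcal{S}^{\Phi}_{{\rm graphon}}$ need not be of finite type, the Tannakian reconstruction and the universality of $\mathbb{U}$ have to be re-derived in the topologically completed, pro-(not-necessarily-finite-type) setting, and one must show that the universal singular frame --- a priori only a formal $\mathbb{U}$-valued loop --- admits a genuine cut-distance limit after pull-back along $\mathcal{S}^{\Phi}_{{\rm graphon}} \to H_{\mathbb{U}}$, namely that the infinite counterterm expansion of a strongly coupled $X_{{\rm DSE}}$ is summable with respect to the cut-distance topology. Once this analytic point is secured, the categorical statement follows from the universality of $\mathcal{E}^{{\rm CM}}$ as established in \cite{connes-marcolli-1}.
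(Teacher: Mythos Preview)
Your overall strategy coincides with the paper's: build the category $\mathcal{E}^{\Phi}_{{\rm graphon}}$ of flat equi-singular $\mathbb{G}^{\Phi}_{{\rm graphon}}(\mathbb{C})$-connections, recognize it as ${\rm Rep}_{\mathbb{G}^{\Phi,*}_{{\rm graphon}}}$, and then use the universality of $\mathcal{E}^{{\rm CM}}\simeq{\rm Rep}_{\mathbb{U}^{*}}$ to produce a graded representation $\vartheta:\mathbb{U}(\mathbb{C})\to\mathbb{G}^{\Phi}_{{\rm graphon}}(\mathbb{C})$. The paper then simply states that the renormalization group $\{F_{t}\}$ of Lemma \ref{renorm-group-1} is the composition $\vartheta\circ{\rm rg}$, where ${\rm rg}:\mathbb{G}_{a}\to\mathbb{U}$ is the lift of the distinguished element $e=\sum_{n}e_{-n}\in L_{\mathbb{U}}$; it does not carry out any of the cut-distance limit analysis you sketch in your third paragraph.

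There is, however, a genuine slip in your identification of where $F_{t}$ comes from. You write that $\theta_{t}$ and $F_{t}$ ``appear as the images of the $\mathbb{G}_{m}\subset\mathbb{U}$ action''. That is not correct: the $\mathbb{G}_{m}$ factor in $\mathbb{U}^{*}=\mathbb{U}\rtimes\mathbb{G}_{m}$ only encodes the grading, i.e.\ the automorphisms $\theta_{t}$ (your filtration $Y$). The renormalization group $\{F_{t}\}$ is \emph{not} the $\mathbb{G}_{m}$-action; it is the one-parameter subgroup of the pro-unipotent part $\mathbb{G}^{\Phi}_{{\rm graphon}}(\mathbb{C})$ obtained as the image under $\vartheta$ of the additive one-parameter group ${\rm rg}:\mathbb{G}_{a}\to\mathbb{U}$ generated by $e$. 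This distinction matters: $\theta_{t}$ rescales the graded pieces, while $F_{t}$ is a genuine flow inside the character group governed by the $\beta$-function, and it is precisely the pairing $(\vartheta,{\rm rg})$ --- not the semidirect $\mathbb{G}_{m}$ --- that the paper invokes to conclude. (A minor related point: in the paper's conventions $\mathbb{U}$ denotes the pro-unipotent part and $\mathbb{U}^{*}$ the semidirect product with $\mathbb{G}_{m}$, the reverse of your notation.) Once you correct this, your argument is the paper's argument, augmented by analytic caveats the paper does not attempt to resolve.
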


\begin{proof}
Thanks to Proposition \ref{graphon-geimetric-1}, for the Hopf
algebra $\mathcal{S}^{\Phi}_{{\rm graphon}}$, we can determine a
family of flat equi-singular $\mathbb{G}^{\Phi}_{{\rm
graphon}}(\mathbb{C})$-connections which encode counterterms on the
basis of the $\beta$-functions. Thanks to \cite{connes-marcolli-1},
these geometric objects form a new category $\mathcal{E}^{\Phi}_{{\rm
graphon}}$ which is recovered by the category ${\rm
Rep}_{\mathbb{G}^{\Phi, *}_{{\rm graphon}}}$ of finite dimensional
representations of the affine group scheme $\mathbb{G}^{\Phi, *}_{{\rm
graphon}}$. In addition, the renormalization Hopf algebra
$H_{{\rm FG}}(\Phi)$ of Feynman diagrams of the physical theory
$\Phi$ determines the category $\mathcal{E}^{\Phi}$ of geometric objects recovered by the
category ${\rm Rep}_{\mathbb{G}^{*}_{\Phi}}$ of finite dimensional
representations of the affine group scheme $\mathbb{G}^{*}_{\Phi}$.
Thanks to the explained categorical formalism in
\cite{connes-marcolli-1}, we can embed ${\rm
Rep}_{\mathbb{G}^{*}_{\Phi}}$ as a sub-category into
$\mathcal{E}^{{\rm CM}}$. It is shown that $\mathcal{E}^{{\rm CM}}$ is isomorphic to the
category ${\rm Rep}_{\mathbb{U}^{*}}$ such that the complex Lie
group $\mathbb{U}(\mathbb{C})$ can be described in terms of the Lie
algebra $L_{\mathbb{U}}$ generated by elements $e_{-n}$ of degree
$-n$ for each $n>0$ such that the sum $e=\sum e_{-n}$ is an element
of this Lie algebra. We can lift $e$ onto the morphism ${\rm
rg}:\mathbb{G}_{a} \rightarrow \mathbb{U}$. The
universality of $\mathcal{E}^{{\rm CM}}$ supports the existence of a new class of graded
representations such as
\begin{equation}
\vartheta:\mathbb{U}(\mathbb{C}) \rightarrow
\mathbb{G}^{\Phi}_{{\rm graphon}}(\mathbb{C}).
\end{equation}
Now the composition $\vartheta
\circ {\rm rg}$ determines the renormalization group
$\{F_{t}\}_{t\in \mathbb{C}}$ at the level of Feynman graphons (i.e.
Lemma \ref{renorm-group-1}).
\end{proof}

Lemma \ref{feynman-graphon-1} and Theorem \ref{feynman-graphon-5}
enable us to embed $H_{{\rm FG}}(\Phi)$ into
$\mathcal{S}^{\Phi}_{{\rm graphon}}$ which leads us to define an
epimorphism of affine group schemes from $\mathbb{G}^{\Phi, *}_{{\rm
graphon}}$ to $\mathbb{G}^{*}_{\Phi}$. In addition, the renormalization Hopf
algebra of Feynman graphons includes solutions of all
Dyson--Schwinger equations in the physical theory $\Phi$. Therefore the category $\mathcal{E}^{\Phi}_{{\rm graphon}}$ is rich enough to recover the category $\mathcal{E}^{\Phi}$ and also, non-perturbative information of the physical theory.

As the summary, we have shown that the renormalization topological
Hopf algebra of Feynman graphons is capable to encode the
renormalization of Feynman diagrams and solutions of
Dyson--Schwinger equations. We have also embedded this
graphon model of renormalization into the universal Connes--Marcolli
categorical setting where as the result, we can study Feynman graphons under the differential Galois theory. In final, these achievements suggest the existence of a new unexplored
interconnection between motivic renormalization and
Dyson--Schwinger equations in the context of the theory of graphons.


\chapter{\textsf{Non-perturbative computational complexity}}

\vspace{1in}

$\bullet$ \textbf{\emph{A parametric representation for large Feynman diagrams}} \\
$\bullet$ \textbf{\emph{The optimization of non-perturbative complexity via a multi-scale Renormalization Group}} \\
$-$ \textbf{\emph{A Renormalization Group program on $\mathcal{S}^{\Phi,g}$}}\\
$-$ \textbf{\emph{Kolmogorov complexity of Dyson--Schwinger equations}}

\newpage

The original motivation for the introduction of Feynman graphons is to clarify and study infinities originated from
non-perturbative aspects in Quantum Field Theory with strong
couplings via a new topological Hopf algebraic renormalization program. From a physicist's perspective, these infinities do not
acceptable and applying some approximation methods are useful for the production of some intermediate values such as running couplings, N large
methods, etc. Then the Physics of elementary particles and its
phenomenology shall be described in terms of those approximations.
From a mathematician's perspective, we have a different story where
it is possible to deal with infinities under different settings
instead of only removing them. The Cartier's cosmic Galois group as
a universal group of symmetries is useful for the analysis of divergencies in perturbative gauge field theories to generate some new data about the behavior of these physical theories in higher order perturbation terms. The motivic
Galois group associated to the renormalization Hopf algebra
$\mathcal{S}^{\Phi}_{{\rm graphon}}$ of Feynman graphons is useful to analyze
divergencies originated from solutions of Dyson--Schwinger equations in the language of singular differential equations on cut-distance topological regions of Feynman diagrams. This motivic Galois group could be a practical candidate to encode non-perturbative gauge field theories. In this chapter
we plan to show some new applications of Feynman graphon models in dealing with the non-perturbative computations of physical parameters derived from Dyson--Schwinger equations. Here we focus on finding a new parametric representation in the context of Tutte polynomials and other combinatorial polynomials for the description of solutions of Dyson--Schwinger equations. This new combinatorial setting allows us to understand the graph complexity of large Feynman diagrams in terms of the graph complexities of partial sums. Then we explain the elementary foundations of the concept of Kolmogorov complexity on the space of all Dyson--Schwinger equations of a given strongly coupled gauge field theory. We define this complexity in terms of a new multi-scale non-commutative non-perturbative Renormalization Group on $\mathcal{S}^{\Phi,g}$ which governs the changing the scales of the bare coupling constant $g$ and running coupling constants. This platform is useful to provide a new algorithm for the study of Dyson--Schwinger equations under strong couplings in terms of sequences of Dyson--Schwinger equations with lesser complexities. We try to show that this Renormalization
Group machinery can optimize the complexity of non-perturbative
computations. We show how this notion of complexity is related to the Halting problem in the Theory of Computation.
This study suggests a new contextualization for the description of non-perturbative situations
and their complexity.

\section{\textsl{A parametric representation for large Feynman diagrams}}

The original task in Quantum Field Theory is to compute correlation functions
(i.e. Green's functions) in a (non-)perturbative expansion setting whose terms
are decorated by Feynman diagrams. Each term in this class of
expansions consists of a multiple ill-defined integral such that the
integrand is codified by the combinatorial information of its
corresponding Feynman diagram.  Generally speaking, we can work in
momentum space of $D$ dimensions such that a preliminary count of
the powers of the momenta in the integrands can report the possible
superficially divergence in the integral. In this situation, the
renormalization program associates a counterterm to each
superficially divergent subgraph to finally produce a finite result
by subtraction treatment. All superficially divergent Feynman subgraphs should be
considered under a recursive setting to assign a final finite value to the
full Feynman diagram. Studying Feynman diagrams via tree representations
enables us to formulate perturbative renormalization theory under a
simplified universal setting. Furthermore, it provides also a
combinatorial reformulation of Dyson--Schwinger equations where we
can study solutions of these non-perturbative type of equations in
the context of partial sums of decorated non-planar rooted trees.
\cite{bergbauer-kreimer-1, foissy-4, foissy-1, kreimer-4, kreimer-9,
tanasa-kreimer-1, yeats-1}

In the previous sections we have shown that the unique solution of
each Dyson--Schwinger equation is described as the convergent
limit of a sequence of Feynman graphons with respect to the
cut-distance topology. This formalism has been applied to lift the
BPHZ renormalization program onto the level of large Feynman graphs
to generate some new expressions for the description of counterterms
and renormalized values associated to fixed point equations of
Green's functions. Using graph polynomials for the study of Feynman
integrals has played an important role in the computational
processes where this class of combinatorial polynomials can
bring some powerful algorithms for the analysis of the behavior of
these divergent integrals (\cite{aluffi-marcolli-2,
bloch-esnault-kreimer-1, krajewski-moffatt-tanasa-1,
krajewski-rivasseau-tanasa-1, kreimer-yeats-2, marcolli-1,
nakanishi-1, tanasa-1, weinzierl-1}). In this section we show
another application of this graphon representation of
non-perturbative parameters where we deal with the concept of
parametric representation of large Feynman diagrams. We study
solutions of Dyson--Schwinger equations in the language of Tutte
polynomial and Kirchhoff--Symanzik polynomials.

The Tutte polynomial, as a two variables graph polynomial, enjoys a
universal property which enables us to evaluate any multiplicative
graph invariant with a deletion/contraction reduction machinery
\cite{tutte-4, tutte-1, tutte-2, tutte-3, welsh-1}. This fundamental
property provides the opportunity to demonstrate how graph
polynomials can be specialized or generalized. The Aluffi--Marcolli
approach has clarified the practical importance of Tutte polynomials
in dealing with Feynman rules characters and Feynman integrals under
an algebro-geometric setting where a motivic perspective on
perturbative renormalization program has been formulated very nicely.
\cite{aluffi-marcolli-1, aluffi-marcolli-2, marcolli-2, marcolli-1}

We first review the basic structure of Tutte polynomials
on finite graphs, its different reformulations and its universal property (\cite{marcolli-1, tutte-1, tutte-2, tutte-3, welsh-1}) and then we explain graph polynomials which can contribute to the combinatorial representations of Feynman graphons of large Feynman diagrams.

A given (finite) graph $G$ has a set $V(G)$ of vertices and a set $E(G)$ of edges. Graphs $G_{1}, G_{2}$ are isomorphic when there exists a bijection such as $\rho$ between the sets $V(G_{1})$ and $V(G_{2})$ such that for each edge $uv$ in $G_{1}$, $\rho(uv)$ is an edge in $G_{2}$ and vice versa. Subsets of the set of vertices or the set of edges can give us subgraphs. For any subset $A \subset E(G)$ of edges, the rank $r(A)$ and the nullity $n(A)$ are defined by the relations
\begin{equation}
r(A):= |V(G)| - \kappa(A), \ \ \ n(A):= |A| - r(A)
\end{equation}
such that $\kappa(A)$ is the number of connected components of the graph. In general, finite graphs can be classified in terms of their numbers of non-trivial connected components. A graph is called $n$-connected, if we should remove at least $n$ edges from the graph to obtain a disconnected graph. Rooted trees, as fundamental tools for us, are connected graphs which have no cycles or loops. They are non-trivial connected components of forests as more complicated graphs. Sometimes working on subgraphs of a given complicated (finite) graph enables us to clarify some fundamental properties of the original graph. Spanning subgraphs are applied as one important class of subgraphs for the construction of graph polynomials. A spanning subgraph covers all vertices of the original graph with the optimum number of edges.

The notion of "dual" in Graph Theory enables us to build the algebraic combinatorics of graphs.  If we can embed a graph into the plane without any crossing in edges, then the graph is called planar. Each planar graph can separate the plane into regions known as faces. Faces are key tools for the construction of the dual of a graph. For a given planar graph $G$, its corresponding connected dual graph is built by assigning a vertex to each face where there exist $m$ edges between two vertices in the dual graph if the corresponding faces of the original graph have $m$ edges in their boundaries. We denote $G^{*}$ as the dual of the connected planar graph $G$ and it can be seen that
\begin{equation}
(G^{*})^{*} = G.
\end{equation}

There are two fundamental commutative operations on graphs namely, deletion and contraction which enable us to build the algebraic combinatorics of graphs. For a given finite graph $G$, we can build a new graph $G \setminus e$ as the result of deleting an edge $e \in E(G)$. This new graph has the same set of vertices $V(G)$ and the set of edges $E(G) - \{e\}$. We can also build another new graph $G / e$ as the result of contracting an edge $e$ in terms of identifying the endpoints of the edge $e$ by shrinking this edge. It is easy to check that the deletion and the insertion on a self-loop edge determine the same resulting graph.

\begin{lem}
(i) For any given different edges $e_{1},e_{2}$ of a given planar graph $G$, the graph $(G \setminus e_{1}) / e_{2}$ is isomorphic to the graph $(G / e_{2}) \setminus e_{1}$.

(ii) A planar graph and its dual have the same numbers of spanning trees.

(iii) The rank of a dual graph is well-defined. \cite{tutte-1, welsh-1}
\end{lem}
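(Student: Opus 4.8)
The plan is to treat the three parts in the order (i), (iii), (ii), since the face count established in (iii) feeds directly into (ii). For (i) I would exhibit the isomorphism explicitly. If $e_{2}$ is a loop, then by the convention recorded above contraction of $e_{2}$ coincides with deletion, both sides equal $(G\setminus e_{1})\setminus e_{2}$, and there is nothing to prove; so assume $e_{2}$ is not a loop. Both $(G\setminus e_{1})/e_{2}$ and $(G/e_{2})\setminus e_{1}$ then have edge set $E(G)\setminus\{e_{1},e_{2}\}$ and vertex set the quotient $V(G)/{\sim}$ of $V(G)$ obtained by identifying the two endpoints of $e_{2}$. Writing $q\colon V(G)\to V(G)/{\sim}$ for this identification, I would take $\phi$ to be $q$ on vertices and the identity on the common edge set, and check that every surviving edge $f\notin\{e_{1},e_{2}\}$ with endpoints $u,v$ in $G$ has endpoints $q(u),q(v)$ in \emph{either} construction. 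This is immediate: in both cases the incidence of $f$ is the original incidence in $G$ pushed forward along $q$, and $e_{1}$ is only ever deleted, never contracted, so it cannot trigger any further identification. A short remark handles the sub-cases in which $e_{1}$ is parallel to $e_{2}$ or shares an endpoint with it, where $\phi$ is still well defined because $e_{1}$ is simply removed.

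For (iii) I would invoke Euler's formula. Working in the connected planar setting (in which $(G^{*})^{*}=G$ was stated) with a fixed plane embedding, $|V(G)|-|E(G)|+|F(G)|=2$, so $|F(G)|=|E(G)|-|V(G)|+2$, independently of the embedding; more generally $|F(G)|=|E(G)|-|V(G)|+1+\kappa(G)$. Since $G^{*}$ is connected, its rank is $r(G^{*})=|V(G^{*})|-1=|F(G)|-1=|E(G)|-|V(G)|+1=n(G)$. Every quantity on the right is an invariant of $G$ alone, so $r(G^{*})$ is independent of the plane embedding and of the particular dual graph realizing it; this is precisely the assertion that the rank of the dual is well defined.

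For (ii) I would set up the bijection $T\mapsto E(G)\setminus T$ between the spanning trees of $G$ and those of $G^{*}$, read across the canonical correspondence $E(G)\leftrightarrow E(G^{*})$. The key input is cycle/cocycle duality for plane graphs: a subset $C\subseteq E(G)$ is a cycle of $G$ if and only if the corresponding subset of $E(G^{*})$ is a minimal edge cut of $G^{*}$. I would obtain this from the embedding by noting that a simple closed curve tracing a cycle of $G$ partitions the faces of $G$ --- i.e.\ the vertices of $G^{*}$ --- into an inside class and an outside class, and the dual edges the curve crosses are exactly those of $G^{*}$ joining the two classes, hence a minimal cut, and conversely. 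Granting this, $T$ connects all vertices of $G$ iff $E(G)\setminus T$ contains no bond of $G$, iff the corresponding set in $G^{*}$ contains no cycle, i.e.\ is acyclic in $G^{*}$; dually, $T$ is acyclic in $G$ iff its complementary set spans and connects $G^{*}$. Combining, $T$ is a spanning tree of $G$ exactly when its complement is a spanning tree of $G^{*}$, and the edge counts match since $|E(G^{*})|-(|V(G)|-1)=|E(G)|-|V(G)|+1=|V(G^{*})|-1$ by part (iii); hence $\tau(G)=\tau(G^{*})$. As an alternative I would note the deletion--contraction induction: for $e$ neither a bridge nor a loop, $\tau(G)=\tau(G\setminus e)+\tau(G/e)$, while planar duality swaps deletion and contraction, $(G\setminus e)^{*}=G^{*}/e^{*}$ and $(G/e)^{*}=G^{*}\setminus e^{*}$ (a face count parallel to (i)), so the claim follows by induction on $|E(G)|$ with the loop, bridge and single-vertex cases done by hand.

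The main obstacle is the rigorous handling of cycle/cocycle duality in (ii): unlike the rest of the Lemma it genuinely uses the Jordan curve theorem rather than just incidence bookkeeping and Euler's formula, so I would either cite it as a standard fact about plane graphs or spell out the short separation argument sketched above. Everything else reduces to tracking vertex and edge sets through deletion and contraction together with one application of Euler's formula.
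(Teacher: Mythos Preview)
Your proof is correct and follows the standard classical arguments. Note, however, that the paper does not actually supply its own proof of this lemma: it is stated with citations to Tutte and Welsh and used as background, so there is no in-paper argument to compare against. Your treatment of (i) by tracking vertex and edge sets through the quotient map, of (iii) via Euler's formula to obtain $r(G^{*})=n(G)$, and of (ii) via the complement bijection $T\mapsto E(G)\setminus T$ together with cycle--cocycle duality (or, alternatively, the deletion--contraction recursion with duality swapping the two operations) is exactly the line of reasoning one finds in the cited references, so you have effectively reconstructed what the paper outsources.
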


The deletion or contraction of an edge determines a minor of a graph. In more general setting, if a graph $H$ is isomorphic to $G \setminus A_{1}/A_{2}$ for some choice of disjoint subsets $A_{1},A_{2}$ of $E(G)$, then it is called a minor graph. In this setting, a class of graphs is called minor closed if whenever the graph $G$ is in the class, then any minor of $G$ is also in the class.

Graph invariants are useful tools for the characterization of graphs in terms of some particular properties. A graph invariant is a function on the class of all graphs such that it has the same output on isomorphic graphs. Graph polynomials (such as Tutte polynomials) are indeed some graph invariants such that their images belong to some polynomial rings.

There are several different (but equivalent) (re)formulations for Tutte polynomials in terms of rank--nullity generating function method, linear recursion machinery and spanning tree expansion method which was originally applied by Tutte. The linear recursion form can be described as a collection of reduction rules to rewrite a graph as a weighted formal sum of graphs that are less complicated than the original graph. This method is useful to identify a collection of simplest or irreducible graphs. \cite{marcolli-1, tutte-1, tutte-2, tutte-3, welsh-1}

\begin{defn} \label{tutt-polynomial-1}
The Tutte polynomial $T(G;x,y)$ of a given (finite) graph $G$ is a two variables polynomial with respect to the independent variables $x,y$ which is defined in terms of the following recursive machinery: \\
- If $G$ has no edge, then $T(G;x,y)=1$; otherwise, for any edge $e
\in E(G)$, \\
- $T(G;x,y)=T(G \setminus e;x,y) + T(G / e; x,y)$, \\
- $T(G;x,y)=xT(G / e;x,y)$, if $e$ is a coloop, \\
- $T(G;x,y)=yT(G \setminus e; x,y)$, if $e$ is a loop. \\
\end{defn}

We can see that if $G$ has $i$ bridges and $j$ loops, then its corresponding Tutte polynomial is given by
\begin{equation}
T(G;x,y) = x^{i}y^{j}.
\end{equation}
In addition, Definition \ref{tutt-polynomial-1} shows us that the Tutte polynomial of the disjoint union of finite number of graphs can be defined in terms of the Tutte polynomials of each graph in the union. In other words,
\begin{equation}
T(G_{1} \sqcup G_{2} \sqcup ... \sqcup G_{n};x,y)=T(G_{1};x,y) ... T(G_{n};x,y).
\end{equation}

We can redefine Tutte polynomials in the language of the rank--nullity generating functions. They are (infinite) polynomials with coefficients which can count structures which are encoded by the exponents of variables. In this setting we have
\begin{equation}
T(G;x,y) = \sum_{A \subset E(G)} (x-1)^{r(E(G))-r(A)}(y-1)^{n(A)}.
\end{equation}

\begin{rem}
The Tutte polynomials of a planar graph $G$ and its dual graph $G^{*}$ can determine each other. This means that
\begin{equation}
T(G;x,y)=T(G^{*};y,x).
\end{equation}
\end{rem}

We can also redefine Tutte polynomials in terms of spanning trees. In this setting, we need to define a total order $\prec$ on the set of edges $E(G)=\{v_{1},...,v_{n}\}$ of a given graph $G$ such as
\begin{equation}
v_{i} \prec v_{j} \leftrightarrow i > j.
\end{equation}
For a given tree $t$, an edge $e$ is called internally active if $e$ is an edge of $t$ and it is the smallest edge in the cut defined by $e$. We can lift this concept onto the dual level where an edge $u$ is called externally active if $u \not\in t$ and it is the smallest edge in the cycle defined by $u$. Now the Tutte polynomial of the totally ordered graph $G$ can be defined (independent of the chosen total order) by the formal expansion
\begin{equation} \label{tutt-polynomial-2}
T(G;x,y)=\sum_{i,j} t_{ij}x^{i}y^{j}
\end{equation}
such that $t_{ij}$ counts spanning trees with internal activity $i$ and external activity $j$.
\cite{aluffi-marcolli-2, krajewski-moffatt-tanasa-1, marcolli-1,
tutte-4, tutte-1, welsh-1}

The most fundamental property of the Tutte polynomial is its universality under a graph invariant setting. This means that any multiplicative graph invariant on disjoint unions and one-point joins of graphs
which is formulated via a deletion/contraction reduction can be described as an evaluation of the Tutte polynomial. There are different notions for the generalization of the Tutte polynomials and here we address the one which is useful for us. \cite{aluffi-marcolli-2, marcolli-1}

\begin{defn} \label{general-tutte-poly-1}
Let $\mathfrak{G}$ be the set of isomorphism classes of finite graphs. A graph invariant $F$ from $\mathfrak{G}$ to a commutative ring such as the polynomial ring $\mathbb{C}[\alpha,\beta,\eta,x,y]$ is called Tutte--Grothendieck invariant of graphs, if it has the following properties: \\
- $F(G)= \eta^{\#\ V(G)}$ if the set of edges is empty, \\
- $F(G)= xF(G / e)$ if the edge $e \in E(G)$ is a bridge, \\
- $F(G)= yF(G \setminus e)$ if the edge $e \in E(G)$ is a looping edge, \\
- For any ordinary edge, which is not a bridge nor a looping edge,
\begin{equation}
F(G)=\alpha F(G / e) + \beta F(G \setminus e), \\
\end{equation}
- For every $G,H \in \mathfrak{G}$, if $G \cup H \in \mathfrak{G}$
or $G \bullet H \in \mathfrak{G}$, then $F(G \cup H) = F(G)F(H)$ and
$F(G \bullet H)=F(G)F(H)$ such that the one-point join $G\bullet H$ is defined by identifying a vertex of
$G$ and a vertex of $H$ into a new single vertex of $G\bullet H$.
\end{defn}

The induction machinery can show that
\begin{equation}
T(G\bullet H)=T(G)T(H),
\end{equation}
which means that the Tutte polynomial does not distinguish between the one-point join of two graphs and their disjoint union.

The Tutte polynomial is a special version of the Tutte--Grothendieck invariant which is independent of the choice of any ordering of edges of the graph. We can show that for any given map
$f:\mathfrak{G}\longrightarrow R$, if there exist $a,b \in R$ such that $f$ is another Tutte--Grothendieck invariant, then $f$ can be presented in terms of the Tutte polynomial such that we have
\begin{equation}
f(G) = a^{|E(G)| -
r(E(G))}b^{r(E(G))}T(G;\frac{x_{0}}{b},\frac{y_{0}}{a}).
\end{equation}

Here we want to apply Feynman graphon representations of Feynman diagrams in a given gauge field theory to build a new class of Tutte polynomials which contribute to the combinatorial presentations of solutions of Dyson--Schwinger equations. Our idea is to implement an efficient
algorithm for the computation of the Tutte polynomial $T(X_{{\rm DSE}};x,y)$ of the unique solution of a given Dyson--Schwinger equation DSE in terms of handling intermediate graphs (i.e. partial
sums) and their corresponding Tutte polynomials to avoid unnecessary
recomputations.

\begin{thm} \label{large-graph-parametric-1}
There exists a new class of Tutte polynomials with respect to large
Feynman diagrams.
\end{thm}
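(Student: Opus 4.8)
The plan is to construct the promised Tutte polynomials by transporting the recursive machinery of Definition \ref{tutt-polynomial-1} along the cut-distance approximation of large Feynman diagrams by their partial sums. Concretely, let ${\rm DSE}$ be a combinatorial Dyson--Schwinger equation with unique solution $X_{{\rm DSE}} = \sum_{n \ge 0} (\lambda g)^{n} X_{n}$ and let $\{Y_{m}\}_{m \ge 1}$, $Y_{m} = (\lambda g) X_{1} + \dots + (\lambda g)^{m} X_{m}$, be the sequence of partial sums whose rooted-tree (forest) representations are finite sparse graphs. By Theorem \ref{feynman-graphon-4} the Feynman graphons $[W_{Y_{m}}]$ converge in cut-distance to $[W_{X_{{\rm DSE}}}]$. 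First I would apply Definition \ref{tutt-polynomial-1} to each finite forest $t_{Y_{m}} = \Xi(Y_{m})$, obtaining $T(t_{Y_{m}}; x,y) \in \mathbb{Z}[x,y]$; because $t_{Y_{m}}$ is a disjoint union of decorated rooted trees, the multiplicativity recorded after Definition \ref{tutt-polynomial-1} already factors $T(t_{Y_{m}};x,y)$ over its tree components, and since rooted trees have no loops each factor is a monomial $x^{i}$ in the number of its bridges, so that $T(t_{Y_{m}};x,y)$ is an honest, explicitly computable graph invariant at every finite stage.

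Next I would promote these finite-stage polynomials to an invariant of the large diagram by passing to a completion. The obstruction is that $\deg T(t_{Y_{m}};x,y)$ equals the number of edges of $t_{Y_{m}}$, which grows without bound, so the naive limit does not exist in $\mathbb{Z}[x,y]$. The fix is to use the grading of $H_{{\rm FG}}(\Phi)$ (and of $\mathcal{S}^{\Phi}_{{\rm graphon}}$) by the number of internal edges: organise $T(t_{Y_{m}};x,y)$ in the rank--nullity form $\sum_{A} (x-1)^{r(E(t_{Y_{m}}))-r(A)} (y-1)^{n(A)}$ and record, for each bidegree, the contribution coming from subgraphs $A$ supported on the first $N$ generators $X_{1},\dots,X_{N}$. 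The $n$-adic compatibility used in Theorem \ref{feynman-graphon-4} and in \cite{bergbauer-kreimer-1, foissy-4} shows that for each fixed truncation level the relevant coefficient is eventually constant in $m$, so the sequence $\{T(t_{Y_{m}};x,y)\}_{m}$ converges coefficient-wise in the completed ring $\widehat{\mathbb{Z}[x-1,y-1]}$, and I would \emph{define}
\[
T(X_{{\rm DSE}}; x, y) := \lim_{m \to \infty} T(t_{Y_{m}}; x, y),
\]
an element of that completed ring attached to the Feynman graphon $[W_{X_{{\rm DSE}}}]$. Applied instead to the finite Feynman graphons $[W_{\Gamma}]$ coming from ordinary Feynman diagrams, the same construction reproduces the classical Tutte polynomial, so the new class genuinely extends the old one.

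Then I would check that this assignment is well defined at the level of unlabeled Feynman graphon classes. Since the Tutte polynomial is an isomorphism invariant of finite graphs, it is insensitive to the relabelings and to the measure-preserving rearrangements that define an unlabeled graphon class, and by the remark following Lemma \ref{feynman-graphon-1} it is also insensitive to the multiplicity ambiguity $W_{\Gamma} \sim W_{n\Gamma}$ once one normalises by the number of connected components; hence $T(X_{{\rm DSE}};x,y)$ depends only on $[W_{X_{{\rm DSE}}}]$, equivalently only on the weak-isomorphism class of the equation ${\rm DSE}$. For diagrams or solutions with overlapping sub-divergencies, where $\Xi(\Gamma) = \sum_{i} \alpha_{i} t_{i}$ is a linear combination of decorated trees, I would extend the invariant $\mathbb{C}$-linearly by $T(\Xi(\Gamma);x,y) := \sum_{i} \alpha_{i} T(t_{i};x,y)$, which is consistent because deletion--contraction is linear; this is exactly the Tutte--Grothendieck viewpoint of Definition \ref{general-tutte-poly-1} read in the Grothendieck ring spanned by the tree classes, and it lets the universality statement (every multiplicative deletion/contraction invariant factors through $T$) pass to the limit on each truncation level.

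The hard part will be the convergence step: one must verify that the deletion--contraction recursion is compatible with the internal-edge filtration in the precise sense that enlarging the truncation from $N$ to $N+1$ only adds contributions in strictly higher bidegrees, so that each coefficient of $T(t_{Y_{m}};x,y)$ stabilises. This is where the sparse-graph nature of the rooted-tree models and the $n$-adic estimates of \cite{bergbauer-kreimer-1, foissy-4} are essential, and it is also the point where the normalisation chosen for the non-zero Feynman graphon of a solution (the rescaling of the ground probability space discussed in Lemma \ref{feynman-graphon-1} and \cite{shojaeifard-10}) must be reconciled with the combinatorial normalisations used in the rank--nullity expansion; getting these two normalisations to agree, so that the limit is canonical rather than rescaling-dependent, is the main technical obstacle I anticipate.
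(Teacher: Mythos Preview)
Your overall strategy---compute $T(t_{Y_m};x,y)$ at each finite stage using multiplicativity over disjoint unions, then pass to a limit along the cut-distance approximation---matches the paper's. The divergence is in how the limit is taken, and your version has a genuine gap.

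You propose to obtain a single element of a completed ring $\widehat{\mathbb{Z}[x-1,y-1]}$ by arguing that the coefficients of $T(t_{Y_m};x,y)$ in the rank--nullity expansion stabilise. They do not. For a rooted tree with $e$ edges the Tutte polynomial is the monomial $x^{e}$, so $T(t_{Y_m};x,y)=x^{E_m}$ with $E_m=\sum_{k\le m}|E(t_{X_k})|\to\infty$; expanding in $(x-1)$ gives $\sum_{j}\binom{E_m}{j}(x-1)^{j}$, and every fixed coefficient $\binom{E_m}{j}$ diverges with $m$. Your ``truncation-level'' argument does not help: the exponent in the rank--nullity form is $r(E(t_{Y_m}))-r(A)$, and $r(E(t_{Y_m}))$ itself grows with $m$, so a subgraph $A$ supported on the first $N$ generators contributes to a \emph{different} bidegree for each $m$. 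No $n$-adic or filtration argument repairs this, because the obstruction is in the target ring, not in the source Hopf algebra.

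The paper sidesteps the issue entirely: instead of seeking a limit in any completion of $\mathbb{Z}[x,y]$, it records the data as the \emph{infinite direct product} $\prod_{k\ge 1}T(X_k;x,y)$, justified by the categorical universal property of the product together with the projections $p_m:\prod_{k\ge 1}T(t_{X_k};x,y)\to\prod_{k\le m}T(t_{X_k};x,y)$. The ``Tutte polynomial'' of $X_{\rm DSE}$ is then this tuple, i.e.\ an element of $\prod_{k\ge 1}\mathbb{Z}[x,y]$ rather than a single formal power series. Your well-definedness and linearity remarks for overlapping sub-divergencies are fine, but the core analytic step---coefficient-wise convergence in $\widehat{\mathbb{Z}[x-1,y-1]}$---fails, and should be replaced by the paper's direct-product construction.
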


\begin{proof}
We work on the unique solution $X_{{\rm DSE}}=\sum_{n \ge 0} X_{n}$ of a given
Dyson--Schwinger equation DSE under the coupling constant $\lambda g=1$.
Thanks to Theorem \ref{feynman-graphon-4}, the sequence
$\{Y_{m}\}_{m \ge 1}$ of partial sums is convergent to $X_{{\rm
DSE}}$ with respect to the cut-distance topology.

Thanks to Definition
\ref{tutt-polynomial-1} and the formula (\ref{tutt-polynomial-2}), for each $m \ge
1$, the Tutte polynomial $T(Y_{m};x,y)$ of the finite disjoint union
graph $Y_{m}:= \mathbb{I} + X_{1} + ... + X_{m}$ is defined in terms of the Tutte polynomials of the components $X_{k}$s. We have
\begin{equation}
T(Y_{m};x,y) = \prod_{k=1}^{m}T(X_{k};x,y) = \prod_{k=1}^{m}
\sum_{i_{k},j_{k}} t_{i_{k}j_{k}}x^{i_{k}}y^{j_{k}}
\end{equation}
such that $t_{i_{k}j_{k}}$ is the number of spanning trees in
$X_{k}$ with internal activity $i_{k}$ and external activity
$j_{k}$.

We know that ${\rm lim}_{m \rightarrow \infty} Y_{m}=X_{{\rm
DSE}}$ with respect to the cut-distance topology. It means that for
each $\epsilon >0$, there exists $N_{\epsilon}$ such that for each
$m_{1}, m_{2} \ge N_{\epsilon}$, we have
\begin{equation}
d(Y_{m_{1}},Y_{m_{2}}) = d_{{\rm
cut}}([W_{Y_{m_{1}}}],[W_{Y_{m_{2}}}]) < \epsilon.
\end{equation}
Therefore
\begin{equation} \label{limit-cut-11}
d_{{\rm cut}}([W_{Y_{m_{1}}}],[W_{Y_{m_{2}}}]) = 0 \Leftrightarrow
[W_{Y_{m_{1}}}] \approx [W_{Y_{m_{2}}}].
\end{equation}
For each $m$, the Feynman graphon class $[W_{Y_{m}}]$ is determined in
terms of the rooted tree representations of Feynman diagrams
$X_{1},...,X_{m}$ where decorated rooted trees (or forests)
$t_{X_{1}},...,t_{X_{m}}$ are the only spanning trees (or forests) in themselves.
Thanks to the relation (\ref{limit-cut-11}), for enough large orders,
unlabeled graphon classes corresponding to partial sums are going to be weakly
isomorphic while they converge to the unique Feynman graphon class
$[W_{X_{{\rm DSE}}}]$. It means that spanning forests of partial
sums for enough large orders tend to the spanning forest $t_{X_{{\rm
DSE}}}$ of the unique graph limit $X_{{\rm DSE}}$.

In addition, the
Tutte polynomial for each arbitrary rooted tree $t$ is given by
\begin{equation} \label{tutte-tree-1}
T(t;x,y) = \sum_{s \in R(t)} x^{|E(s)|}(y+1)^{|E(s)|-|L(s)|}
\end{equation}
such that $R(t)$ is the set of all subtrees of $t$, $|E(s)|$ is the
number of edges of a subtree $s$ and $|L(s)|$ is the number of
leaves of a subtree $s$.

Now for the collection $\{\prod_{k=1}^{m} T(t_{X_{k}};x,y)\}_{m \ge 1}$
of Tutte polynomials, we can define a collection $\{p_{m}:
\prod_{k=1}^{\infty} T(t_{X_{k}};x,y) \longrightarrow
\prod_{k=1}^{m} T(t_{X_{k}};x,y)\}_{m \ge 1}$ of projections. Thanks
to the universal property of the Tutte polynomial, for any graph
invariant $T$ (which enjoys the properties in Definition
\ref{tutt-polynomial-1}) together with the collection $\{f_{m}: T
\longrightarrow \prod_{k=1}^{m} T(t_{X_{k}};x,y)\}_{m \ge 1}$, we
can define the unique map
\begin{equation}
\digamma: T \longrightarrow \prod_{k=1}^{\infty} T(t_{X_{k}};x,y)
\end{equation}
such that $f_{m}=p_{m} \circ \digamma$. As the consequence, we can
consider the direct product $\prod_{k=1}^{\infty} T(t_{X_{k}};x,y)$
as the Tutte polynomial for the infinite tree (or forest) $t_{X_{{\rm DSE}}}$.

If we replace rooted tree representations with the original Feynman
diagrams, then we can build the Tutte polynomial for the large
Feynman diagram $X_{{\rm DSE}}$ in terms of the direct product over
the Tutte polynomials for simpler finite graphs (i.e. partial sums)
$\{T(X_{k};x,y)\}_{k \ge 1}$. We have
\begin{equation}
T(X_{{\rm DSE}};x,y) = \prod_{k=1}^{\infty} T(X_{k};x,y).
\end{equation}
\end{proof}

We can address here some interesting applications of this class of
Tutte polynomials in dealing with the complexity of non-perturbative
parameters.

As the first application, it is possible to describe the complexity
of a large Feynman diagram $X_{{\rm DSE}}$ in terms of the
complexity of finite Feynman diagrams which live in partial sums
$Y_{m}, \ (m \ge 1)$. The complexity of $Y_{m}$ is interpreted
in terms of the number of different spanning trees (or forests) which live in the
graph. We can compute the complexity of $Y_{m}$ under a recursive
algorithm where at each stage of the algorithm, only an edge
belonging to the proper cycle is chosen. The algorithm starts with a
given graph and produces two graphs at the end of the first stage.
By applying the elementary contraction to a multiple edge, the
resulting graph can have a loop and therefore the procedure can be
still continued. At each subsequent stage one proper cyclic edge
from each graph is chosen (if it exists) for applying the
recurrence. On termination of the algorithm, we get a set of graphs
(or general graphs) none of which have a proper cycle. The
complexity of $Y_{m}$ is the sum of the number of these graphs. If
we perform this recursive algorithm for each $Y_{m}$ when $m$ tends
to infinity, then we can get a sequence which presents the behavior
of complexities when the partial sums converge to $X_{{\rm DSE}}$.

As the second application, we can interpret Feynman rules characters
of the renormalization Hopf algebra of Feynman graphons in the
context of deletion and contraction operators. This approach leads
us to formulate a universal motivic Feynman rule character on large
Feynman diagrams.

\begin{cor}
The Tutte polynomial invariant defines an abstract version of
Feynman rules characters on the renormalization Hopf algebra of Feynman
graphons.
\end{cor}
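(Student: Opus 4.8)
The plan is to produce the required character directly from the Tutte polynomial and its universal property, and then to transport it to the topological Hopf algebra $\mathcal{S}^{\Phi}_{{\rm graphon}}$ by means of Theorem \ref{large-graph-parametric-1}. Recall from Theorem \ref{feynman-graphon-5} that $\mathcal{S}^{\Phi}_{{\rm graphon}}$ is a connected graded \emph{free} commutative Hopf algebra whose algebra generators are the unlabeled graphon classes of 1PI Feynman diagrams, with product given by disjoint union, and that a Feynman rules character on it is an algebra homomorphism $\mathcal{S}^{\Phi}_{{\rm graphon}} \to R$ into a commutative ring $R$, i.e. an element of $\mathbb{G}^{\Phi}_{{\rm graphon}}(R)$; the Aluffi--Marcolli algebro-geometric Feynman rules are precisely those characters that moreover satisfy a deletion/contraction recursion, hence factor through the universal Tutte--Grothendieck invariant of Definition \ref{general-tutte-poly-1}. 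So it suffices to exhibit one such universal character on $\mathcal{S}^{\Phi}_{{\rm graphon}}$.

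First I would fix the target ring: for finite Feynman diagrams one takes $R_{0}=\mathbb{C}[x,y]$, and for large Feynman diagrams one passes to the direct-product ring $\widehat{R}$ built in the proof of Theorem \ref{large-graph-parametric-1} (componentwise operations, with the truncation projections $p_{m}$), into which $\mathbb{C}[x,y]$ embeds. Then I define $\mathfrak{T}\colon \mathcal{S}^{\Phi}_{{\rm graphon}} \to \widehat{R}$ on the free generators by $\mathfrak{T}([W_{\Gamma}]):=T(\Gamma;x,y)$, computing the right-hand side when convenient through the rooted-tree model of Lemma \ref{feynman-graphon-1} and the tree formula (\ref{tutte-tree-1}), and extend multiplicatively. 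Multiplicativity of the Tutte polynomial on disjoint unions, $T(G_{1}\sqcup\cdots\sqcup G_{n};x,y)=\prod_{i}T(G_{i};x,y)$, makes this extension consistent with the free commutative product, so $\mathfrak{T}$ is an algebra homomorphism. On a large Feynman diagram $X_{{\rm DSE}}=\sum_{n\ge 0}X_{n}$ I set $\mathfrak{T}([W_{X_{{\rm DSE}}}]):=\prod_{k\ge 1}T(X_{k};x,y)=T(X_{{\rm DSE}};x,y)$ as in Theorem \ref{large-graph-parametric-1}; compatibility with the cut-distance topology then holds component by component, since the partial sums $Y_{m}=\mathbb{I}+X_{1}+\cdots+X_{m}$ converge to $X_{{\rm DSE}}$ (Theorem \ref{feynman-graphon-4}) and $\mathfrak{T}(Y_{m})=\prod_{k\le m}T(X_{k};x,y)=p_{m}\bigl(\mathfrak{T}([W_{X_{{\rm DSE}}}])\bigr)$.

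Next I would verify that $\mathfrak{T}$ carries the abstract Feynman-rules structure. As an algebra homomorphism it splits over 1PI components, which is the multiplicativity axiom of an algebro-geometric Feynman rule; and Definition \ref{tutt-polynomial-1} gives, for each ordinary edge $e$ of a finite diagram $\Gamma$, the recursion $\mathfrak{T}(\Gamma)=\mathfrak{T}(\Gamma/e)+\mathfrak{T}(\Gamma\setminus e)$ together with the boundary cases $x\,\mathfrak{T}(\Gamma/e)$ for a bridge and $y\,\mathfrak{T}(\Gamma\setminus e)$ for a loop. Thus $\mathfrak{T}$ is exactly the Tutte--Grothendieck invariant of Definition \ref{general-tutte-poly-1} with parameters $(\alpha,\beta,\eta)=(1,1,1)$, and by the universality recalled there every multiplicative deletion/contraction invariant $f$ on finite Feynman diagrams factors through it via an explicit monomial reparametrization $f(G)=a^{|E(G)|-r(E(G))}b^{r(E(G))}\,T(G;\tfrac{x_{0}}{b},\tfrac{y_{0}}{a})$. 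Finally I would push this recursion to large Feynman diagrams exactly as in the proof of Theorem \ref{large-graph-parametric-1}: deletion and contraction are carried out edge by edge on each finite partial sum $Y_{m}$, the identities hold at every finite stage, and passing to the cut-distance limit (Theorem \ref{feynman-graphon-4}) transports them to $[W_{X_{{\rm DSE}}}]$. This presents $\mathfrak{T}$ as the universal (``abstract'') Feynman rules character on $\mathcal{S}^{\Phi}_{{\rm graphon}}$, of which the Aluffi--Marcolli algebro-geometric character is the standard specialization.

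The step I expect to demand the most care is well-definedness under the passage from labeled to unlabeled graphons. By the Remark following Lemma \ref{feynman-graphon-1} the class $[W_{n\Gamma}]$ coincides with $[W_{\Gamma}]$, whereas $T(n\Gamma;x,y)=T(\Gamma;x,y)^{n}$, so the Tutte assignment does not descend literally to the metric quotient; the way around this is to read $\mathfrak{T}$ off the \emph{free} presentation of $\mathcal{S}^{\Phi}_{{\rm graphon}}$ from Theorem \ref{feynman-graphon-5}, so that disjoint-union multiplicities are recorded by the algebra structure rather than by the bare pixel picture, or equivalently to work modulo the Hopf ideal generated by $[W_{n\Gamma}]-[W_{\Gamma}]^{n}$. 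A secondary, milder point is convergence of the infinite Tutte product: one has to use the direct-product ring of Theorem \ref{large-graph-parametric-1} together with the loop-number grading, rather than a naive completion of $\mathbb{C}[x,y]$, since the constant terms of $T(X_{k};x,y)$ need not tend to $1$.
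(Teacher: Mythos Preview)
Your proposal is correct and follows essentially the same skeleton as the paper: define $[W_{\Gamma}]\mapsto T(\Gamma;x,y)$ on finite diagrams, use multiplicativity on disjoint unions to get an algebra homomorphism, and extend to large Feynman diagrams via the partial-sum/cut-distance machinery of Theorem \ref{large-graph-parametric-1}. The paper's argument differs mainly in emphasis: rather than invoking the Tutte--Grothendieck universal property as you do, it checks directly the two Aluffi--Marcolli axioms for an abstract Feynman rule by writing a connected $\Gamma$ as a tree $t_{\Gamma}$ with 1PI graphs $\Gamma_{v}$ at the vertices, observing that the internal edges of $t_{\Gamma}$ are bridges, and deducing the factorization $T(\Gamma;x,y)=x^{|E_{\rm int}(t_{\Gamma})|}\prod_{v}T(\Gamma_{v};x,y)$ (equations (\ref{graph-1}) and (\ref{tutte-feynman-1})). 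This makes the ``Feynman rules'' interpretation more explicit---a product over 1PI pieces times an external-structure factor---whereas your route via Definition \ref{general-tutte-poly-1} gives the universality statement more cleanly but leaves that physical reading implicit. Your closing remarks on the $[W_{n\Gamma}]=[W_{\Gamma}]$ collision and on the need for the direct-product target ring are sharper than anything in the paper's proof; the paper simply reads $\mathfrak{T}$ off the free presentation of $\mathcal{S}^{\Phi}_{\rm graphon}$ without comment, which is exactly the fix you propose.
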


\begin{proof}
Feynman graphon classes in $\mathcal{S}^{\Phi}_{{\rm graphon}}$ can recover
finite Feynman diagrams and their finite or infinite formal
expansions which contribute to Dyson--Schwinger equations in the physical theory $\Phi$.

For each unlabeled graphon class $[W_{\Gamma}]$ corresponding to a
finite Feynman diagram $\Gamma$, we can define its corresponding Tutte polynomial
$T([W_{\Gamma}];x,y)$ via
\begin{equation} \label{graphon-tutte-1}
T([W_{\Gamma}];x,y):= T(\Gamma;x,y).
\end{equation}
Thanks to Proposition 2.2 in \cite{aluffi-marcolli-2}, the Tutte
polynomial is multiplicative over disjoint unions of finite
(Feynman) diagrams. To see this property requires to describe each connected Feynman diagram $\Gamma$ as a tree $t_{\Gamma}$ with 1PI graphs inserted at the vertices of that tree. Then we can compute the Tutte polynomials of the
resulting trees (i.e. formula (\ref{tutte-tree-1})) to show that the Tutte polynomial of the disjoint union of Feynman graphons
$[W_{\Gamma_{1}}], [W_{\Gamma_{2}}]$ corresponding to finite Feynman
diagrams $\Gamma_{1}, \Gamma_{2}$ can be determined by the Tutte
polynomial of the disjoint union of decorated trees
$t_{\Gamma_{1}}$ and $t_{\Gamma_{2}}$ which is multiplicative. So we have
\begin{equation} \label{tutte-tree-2}
T([W_{\Gamma_{1}}] \sqcup [W_{\Gamma_{2}}];x,y) = T(\Gamma_{1}
\sqcup \Gamma_{2};x,y)
\end{equation}
$$\sum_{s=(s_{1},s_{2})} (x-1)^{b_{0}(s_{1}) + b_{0}(s_{2}) -
b_{0}(\Gamma_{1} \sqcup \Gamma_{2})} (y-1)^{b_{1}(s_{1}) +
b_{1}(s_{2})}$$

$$= T(\Gamma_{1};x,y)T(\Gamma_{2};x,y)= T([W_{\Gamma_{1}}];x,y)T([W_{\Gamma_{2}}];x,y)$$
such that the sum is taken over all pairs $s=(s_{1},s_{2})$ of
subgraphs of $\Gamma_{1}$ and $\Gamma_{2}$, respectively where
$V(s_{i}) \subseteq V(\Gamma_{i}),$ $E(s_{i}) \subset
E(\Gamma_{i}),$ $b_{0}(s) = b_{0}(s_{1}) + b_{0}(s_{2})$.

Furthermore, for a finite connected Feynman diagram $\Gamma$, we
have
\begin{equation} \label{graph-1}
\Gamma = \bigcup_{v \in V(t_{\Gamma})} \Gamma_{v}
\end{equation}
such that $\Gamma_{v}$s are 1PI Feynman diagrams inserted at the
vertices of the tree $t_{\Gamma}$. The internal edges of the tree
$t_{\Gamma}$ are all bridges in the resulting graph and thus
\begin{equation} \label{tutte-feynman-1}
T(\Gamma;x,y) = x^{|E_{{\rm int}}(t_{\Gamma})|}T(\Gamma /
\cup_{e \in E_{{\rm int}}(t_{\Gamma})}e;x,y).
\end{equation}
It is possible to lift this property onto the level of Feynman
graphons where the decomposition (\ref{graph-1}) can be described by
the disjoint unions of Feynman graphons. In other words, for each
$v_{1},...,v_{r} \in  V(t_{\Gamma})$, set $[W_{\Gamma_{v}}]$ as the
unlabeled Feynman graphon class with respect to the graph $\Gamma_{v}$ such that
\begin{equation}
[W_{\Gamma}] = [W_{\Gamma_{v_{1}}}] \sqcup ... \sqcup
[W_{\Gamma_{v_{r}}}].
\end{equation}
It means that the Feynman graphon $[W_{\Gamma}]$ can be defined in terms of a rescaled version of the linear combination of Feynman graphons $[W_{\Gamma_{v_{j}}}]$. In other words,
\begin{equation}
W_{\Gamma_{v_{1}} \sqcup ... \sqcup \Gamma_{v_{r}}} =
\frac{\sum_{j=1}^{r} W_{\Gamma_{v_{j}}}}{|\sum_{j=1}^{r}
W_{\Gamma_{v_{j}}}|}.
\end{equation}

Thanks to (\ref{graphon-tutte-1}), the Tutte polynomial of each
$[W_{\Gamma_{v_{j}}}]$ is defined in terms of the Tutte polynomial of the graph $\Gamma_{v_{j}}$. Then we have
\begin{equation} \label{feynman-graphon-tutte-1}
T([W_{\Gamma}];x,y) = \prod_{j=1}^{r} T([W_{\Gamma_{v_{j}}}];x,y)
\end{equation}
which leads us to a Feynman graphon version of the relation
(\ref{tutte-feynman-1}).

Theorem \ref{large-graph-parametric-1} describes the Tutte
polynomial of a large Feynman diagram $X_{{\rm DSE}}$ on the basis
of the Tutte polynomials of the partial sums $\{Y_{m}\}_{m \ge 1}$.
The cut-distance convergence of the sequence of partial sums to
$X_{{\rm DSE}}$ and the universality of the Tutte polynomial enable
us to lift the properties (\ref{tutte-tree-2}) and
(\ref{tutte-feynman-1}) onto the Feynman graphon $[W_{X_{{\rm
DSE}}}]$. It allows us to define the abstract Feynman rules
characters on large Feynman diagrams in terms of the Tutte
polynomial where we have
\begin{equation}
\tilde{U}([W_{X_{{\rm DSE}}}]):= T([W_{X_{{\rm DSE}}}];x,y) =
T(X_{{\rm DSE}};x,y).
\end{equation}
\end{proof}

Now we explain the construction of another important class of
combinatorial polynomials namely, the first Kirchhoff--Symanzik
polynomials for large Feynman diagrams.

The Feynman parametric representation of a given Feynman integral
$U(\Gamma)$ can be described by the integration theory over a
topological simplex such as $\sigma_{n}$ with respect to Feynman
parameters $w=(w_{1},...,w_{n}) \in \sigma_{n}$ such that $n$ is the
number of internal edges of the corresponding Feynman diagram
$\Gamma$. If $l=b_{1}(\Gamma)$ be the first Betti number of $\Gamma$
(as the maximum number of independent loops in the graph) and an
orientation were fixed on the graph, then we can define the
circuit matrix $\hat{\eta}=(\eta_{ik})_{ik}$ such that $e_{i} \in
E(\Gamma)$ and $k$ ranges over the chosen basis of loops. If an edge
$e_{i}$ belongs to a loop $l_{k}$ with the same/reverse
orientations, then $\eta_{ik}=1$ and $\eta_{ik}=-1$, respectively. If
the edge $e_{i}$ does not belong to a loop $l_{k}$, then
$\eta_{ik}=0$. The arrays of the corresponding $l \times l$
Kirchhoff--Symanzik matrix $M_{\Gamma}(w)$ are given by
\begin{equation}
(M_{\Gamma}(w))_{kr} = \sum_{i=1}^{n} w_{i} \eta_{ik} \eta_{ir}
\end{equation}
which defines a function $M_{\Gamma}: \mathbb{A}^{n} \longrightarrow
\mathbb{A}^{l^{2}}$, $w=(w_{1},...,w_{n}) \longmapsto M_{\Gamma}(w)$
over higher dimensional affine spaces. The first Kirchhoff--Symanzik
polynomial of the graph $\Gamma$ is then defined by the equation
\begin{equation}
\Psi_{\Gamma}(w) = {\rm det} (M_{\Gamma}(w))
\end{equation}
which is independent of the choice of an orientation on the graph
and the basis of loops. This function on $\mathbb{A}^{n}$, which is
a homogeneous polynomial of degree $l$, can be formulated in the
language of spanning trees. We have
\begin{equation}
\Psi_{\Gamma}(w) = \sum_{T \subset \Gamma} \prod_{e \not\in E(T)}
w_{e}
\end{equation}
such that the sum is over all spanning trees $T$ of the graph
$\Gamma$ and for each spanning tree, the product is over all edges of
$\Gamma$ that are not in the selected spanning tree. We can show
that this product is multiplicative over connected components.

Now consider a large Feynman diagram $X$ with the corresponding
sequence $\{Y_{m}\}_{m \ge 1}$ of partial sums. For each $m$, we
know that the first Kirchhoff--Symanzik polynomial of $Y_{m}$ is the
product of the polynomials of each of its components which means
that
\begin{equation}
\Psi_{Y_{m}}(w) = \prod_{j=1}^{m} \Psi_{X_{j}}
\end{equation}
where
\begin{equation}
\Psi_{X_{j}}(w) = \sum_{T_{j} \subset X_{j}} \prod_{e \not\in
E(T_{j})}w_{e}
\end{equation}
such that the sum is taken over all the spanning forests $T_{j}$ of
$X_{j}$ and for each spanning forest, the product is taken over all
edges of $X_{j}$ that are not in that spanning forest.

Thanks to the cut-distance convergence of the sequence $\{Y_{m}\}_{m
\ge 1}$ to $X$, for each $\epsilon >0$, there exists $N_{\epsilon}$
such that for each $m_{1}, m_{2} \ge N_{\epsilon}$, we have
\begin{equation}
d(Y_{m_{1}},Y_{m_{2}}) = d_{{\rm
cut}}([W_{Y_{m_{1}}}],[W_{Y_{m_{2}}}]) < \epsilon.
\end{equation}
Therefore
\begin{equation} \label{limit-cut-1}
d_{{\rm cut}}([W_{Y_{m_{1}}}],[W_{Y_{m_{2}}}]) = 0 \Leftrightarrow
[W_{Y_{m_{1}}}] \approx [W_{Y_{m_{2}}}],
\end{equation}
which means that for enough large $m$, spanning forests of $Y_{m}$ tend to the
spanning forests of the unique graph limit $X$.

\begin{defn}
The first Kirchhoff--Symanzik polynomial $\Psi_{X}(w)$ of the large
Feynman diagram $X$ is defined as the convergent limit of the
sequence $\{\Psi_{Y_{m}}(w)\}_{m \ge 1}$ of the first
Kirchhoff--Symanzik polynomials of finite graphs $Y_{m}= X_{1}
\sqcup ... \sqcup X_{m} = X_{1} + ... + X_{m}$ with respect to the cut-distance topology.
\end{defn}

We can present this polynomial by the expansion
\begin{equation}
\Psi_{X}(w) = \prod_{j=1}^{\infty} \Psi_{X_{j}} = \sum_{T \subset X}
\prod_{e \not\in E(T)} w_{e}
\end{equation}
such that the sum is taken over all the spanning forests $T$ of $X$
and for each spanning forest, the product is taken over all edges of
$X$ that are not in that spanning forest.

\begin{lem}
The first Kirchhoff--Symanzik polynomial $\Psi_{X}(w)$ of the large
Feynman diagram $X$ can be defined recursively in terms of the
deletion and the contraction operators.
\end{lem}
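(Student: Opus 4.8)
The plan is to lift the classical deletion--contraction recursion for the first Kirchhoff--Symanzik polynomial onto the level of the large Feynman diagram $X = \sum_{n\ge 0} X_n$, by first recalling the finite-graph identity, then applying it term-by-term to the partial sums $Y_m = X_1 \sqcup \cdots \sqcup X_m$, and finally passing to the cut-distance limit using the convergence $Y_m \to X$ established in Theorem~\ref{feynman-graphon-4}. First I would recall the standard recursion: for a finite graph $G$ and an edge $e \in E(G)$ which is neither a bridge nor a self-loop, one has
\begin{equation}
\Psi_G(w) = w_e\, \Psi_{G\setminus e}(w) + \Psi_{G/e}(w),
\end{equation}
while if $e$ is a bridge then $e$ lies in every spanning tree so $\Psi_G = \Psi_{G/e}$, and if $e$ is a self-loop then $e$ lies in no spanning tree so $\Psi_G = w_e\,\Psi_{G\setminus e}$. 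This is immediate from the spanning-tree expansion $\Psi_G(w) = \sum_{T} \prod_{e\notin E(T)} w_e$ by splitting the sum according to whether $T$ contains $e$. I would also record the multiplicativity over connected components, already noted in the text, which is what makes $\Psi_{Y_m}(w) = \prod_{j=1}^m \Psi_{X_j}(w)$ compatible with the recursion applied inside a single component $X_j$.

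Next I would run the recursion on a partial sum $Y_m$: picking an edge $e$ inside one component $X_j$, the operations $Y_m \setminus e$ and $Y_m / e$ only affect the $j$-th factor, so by multiplicativity
\begin{equation}
\Psi_{Y_m}(w) = \Big(\prod_{k\ne j}\Psi_{X_k}(w)\Big)\cdot\big(w_e\,\Psi_{X_j\setminus e}(w) + \Psi_{X_j/e}(w)\big)
\end{equation}
(with the evident simplifications when $e$ is a bridge or loop of $X_j$). Since by the earlier Theorem \ref{feynman-graphon-4} the sequence $\{Y_m\}$ is cut-distance convergent to $X$, and $\Psi_X(w)$ was \emph{defined} as the cut-distance limit of $\{\Psi_{Y_m}(w)\}$, I would take $m\to\infty$ on both sides. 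The deletion and contraction operators extend to $X$ componentwise (delete/contract a fixed edge $e$ in whichever finite $X_j$ it belongs to; this is well defined because each $X_j$ is a finite graph), and the sequences $\{Y_m\setminus e\}$ and $\{Y_m/e\}$ are themselves cut-distance convergent, to $X\setminus e$ and $X/e$ respectively, because removing or contracting a single edge is a bounded perturbation in the cut-metric and commutes with the partial-sum filtration for $m$ large enough that $X_j$ already appears in $Y_m$. Hence
\begin{equation}
\Psi_X(w) = w_e\,\Psi_{X\setminus e}(w) + \Psi_{X/e}(w)
\end{equation}
for an ordinary edge $e$, with $\Psi_X = \Psi_{X/e}$ for a bridge and $\Psi_X = w_e\,\Psi_{X\setminus e}$ for a self-loop, which is the asserted recursive characterization.

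The main obstacle, as usual in this circle of arguments, is justifying the interchange of limits: one must check that $\Psi_X(w)$ as an element of the (completed) polynomial ring is genuinely the termwise limit of $\prod_{k\le m}\Psi_{X_k}(w)$ and that the deletion/contraction of a fixed edge is a cut-distance continuous operation on the relevant space of (Feynman) graphons, so that it commutes with the limit. This is handled by noting that for each fixed edge $e$ the modification occurs in a single finite component $X_j$, so beyond the index $N$ with $X_j \subset Y_N$ the operations $(\,\cdot\,)\setminus e$ and $(\,\cdot\,)/e$ act on $Y_m$ purely inside that finite factor and leave all other factors untouched; convergence of $\{\Psi_{Y_m\setminus e}\}$ and $\{\Psi_{Y_m/e\}}$ then follows from the same cut-distance convergence argument (via the random-graph/pixel-picture representation and Lemma \ref{feynman-graphon-3}) used to define $\Psi_X$ itself, and the recursion passes to the limit coefficient by coefficient in $w$. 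The remaining bookkeeping --- that bridges of $X_j$ remain bridges of $X$ and self-loops remain self-loops under the embedding of finite graphs into the graphon completion --- is routine.
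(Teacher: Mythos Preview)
Your argument is correct, but the paper takes a noticeably more direct route. Rather than establishing the recursion on each partial sum $Y_m$ and then passing to the cut-distance limit, the paper works purely algebraically with the already-defined polynomial $\Psi_X(w)$: since the spanning-forest expansion shows that $\Psi_X$ is at most linear in each edge variable $w_e$, one may write $\Psi_X = w_e F + G$ with $F := \partial \Psi_X / \partial w_e$ and $G := \Psi_X|_{w_e=0}$, and then simply identify $F = \Psi_{X\setminus e}$ and $G = \Psi_{X/e}$ from the combinatorial meaning of the monomials (the term $w_e F$ collects exactly the spanning forests that do not contain $e$). This bypasses the limit-interchange and cut-distance continuity issues you spend most of your effort on. Your approach has the merit of making the compatibility with the graphon/limit definition of $\Psi_X$ explicit, and it handles the bridge and self-loop cases separately, whereas the paper only treats the ordinary-edge case and leaves the degenerate cases implicit; but the paper's algebraic shortcut is what makes the proof a few lines long.
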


\begin{proof}
Set
\begin{equation} \label{contraction-large-graph-1}
F:= \frac{\partial \Psi_{X}}{\partial w_{n}} = \Psi_{X} \setminus e
\end{equation}
as the deletion operator which is the result of deleting the edge
$e=e_{n}$ from the original graph. In addition, set
\begin{equation}
G:= \Psi_{X}|_{w_{n}=0} = \Psi_{X} / e
\end{equation}
as the contraction operator which is the result of contracting the
edge $e=e_{n}$ to a point in the original graph.

For each edge $e$ which is not a bridge or self-loop in the large
Feynman diagram $X$, we can show that
\begin{equation}
\Psi_{X} = w_{e}F+G
\end{equation}
such that $w_{e}F$ collects the monomials corresponding to spanning
forests that do not include $e$.
\end{proof}

At the end of this section, we address a new application of the
first Kirchhoff--Symanzik polynomial for the study of polynomial
invariants of large Feynman diagrams and Feynman rules characters
which act on Feynman graphons.

For a given large Feynman diagram $X$ with the corresponding first
Kirchhoff--Symanzik polynomial $\Psi_{X}(w)$, define
\begin{equation}
\hat{V}_{X} = \{w \in \mathbb{A}^{\infty}:=\prod_{i=1}^{\infty}\mathbb{A}^{n_{i}}:
\Psi_{X}(w)=0\}
\end{equation}
such that the affine hypersurface complement $\mathbb{A}^{\infty}
\setminus \hat{V}_{X}$ enjoys the multiplicative property. We have
\begin{equation}
\mathbb{A}^{\infty} \setminus \hat{V}_{X} = \prod_{i=1}^{\infty}
\mathbb{A}^{n_{i}} \setminus \hat{V}_{X_{i}}
\end{equation}
such that $n_{i}$ is the number of internal edges of the Feynman
diagram $X_{i}$.

Consider the Grothendieck ring $\mathcal{F}$ of immersed conical
varieties generated by the equivalence classes $[\hat{V}]$ up to
linear changes of coordinates of varieties $\hat{V} \subset
\mathbb{A}^{\infty}$ embedded into some affine space. These varieties are defined
in terms of homogeneous ideals with the usual inclusion--exclusion relation
\begin{equation}
[\hat{V}] = [\hat{R}] + [\hat{V} \setminus \hat{R}]
\end{equation}
for the closed embedding. Now we can define algebro--geometric Feynman
rules characters on the renormalization Hopf algebra of Feynman
graphons. It is an abstract Feynman rules character $\hat{U}:
\mathcal{S}^{\Phi}_{{\rm graphon}} \rightarrow A_{{\rm dr}}$
with the general form
\begin{equation}
\hat{U}([W_{X}]) = I([\mathbb{A}^{\infty} \setminus \hat{V}_{X}])
\end{equation}
such that $[\mathbb{A}^{\infty} \setminus \hat{V}_{X}]$ is the class
in $\mathcal{F}$ and $I: \mathcal{F} \rightarrow A_{{\rm dr}}$
is a ring homomorphism.

We can also define a new invariant of infinite Feynman diagrams in
terms of a generalization of the Chern--Schwartz--MacPherson (CSM)
characteristic classes of singular varieties. The algebro--geometric
Feynman rules have been constructed in terms of a polynomial
invariant originated from the CSM characteristic classes
(\cite{aluffi-marcolli-1, aluffi-marcolli-2, marcolli-1}) and here
we plan to lift that study onto the level of Feynman graphons.

\begin{cor}
There exists an extension of the CSM homomorphism for the level of
large Feynman diagrams generated by Dyson--Schwinger equations.
\end{cor}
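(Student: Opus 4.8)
The plan is to transport the Aluffi--Marcolli Chern--Schwartz--MacPherson Feynman rule through the partial--sum approximation of large Feynman diagrams, exactly as the Tutte polynomial of $X_{\rm DSE}$ was obtained in Theorem \ref{large-graph-parametric-1} and as the abstract Feynman rules $\hat{U}$ on $\mathcal{S}^{\Phi}_{\rm graphon}$ were defined above. First I would recall from \cite{aluffi-marcolli-1, aluffi-marcolli-2, marcolli-1} that taking Chern--Schwartz--MacPherson characteristic classes of the projectivizations of the conical varieties $\hat{V}$ yields a ring homomorphism $I_{\rm CSM}: \mathcal{F} \rightarrow \mathbb{Z}[T]$, $[\hat{V}] \mapsto C_{\hat{V}}(T)$, which is multiplicative over products of affine spaces and one--point joins; hence for a finite Feynman diagram $\Gamma$ the polynomial $C_{\Gamma}(T):= I_{\rm CSM}([\mathbb{A}^{n} \setminus \hat{V}_{\Gamma}])$ is an algebro--geometric Feynman rules character obeying $C_{\Gamma_{1} \sqcup \Gamma_{2}}(T) = C_{\Gamma_{1}}(T)\, C_{\Gamma_{2}}(T)$, the bridge relation $C_{\Gamma}(T) = C_{\mathbb{A}^{1}}(T)\, C_{\Gamma/e}(T)$, and the deletion--contraction recursion at ordinary edges inherited from $\Psi_{\Gamma} = w_{e}F + G$.

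Next, applying this to the partial sums $Y_{m} = X_{1} \sqcup \dots \sqcup X_{m}$ of the unique solution $X_{\rm DSE} = \sum_{n \ge 0} X_{n}$ of a Dyson--Schwinger equation gives $C_{Y_{m}}(T) = \prod_{k=1}^{m} C_{X_{k}}(T)$. As in the proof of Theorem \ref{large-graph-parametric-1}, I would put on the family $\{\prod_{k=1}^{m} C_{X_{k}}(T)\}_{m \ge 1}$ the projective system of truncations $p_{m}: \prod_{k=1}^{\infty} C_{X_{k}}(T) \rightarrow \prod_{k=1}^{m} C_{X_{k}}(T)$ and, using the cut--distance convergence $Y_{m} \to X_{\rm DSE}$ (Theorem \ref{feynman-graphon-4} and \cite{shojaeifard-10}) together with the coradical/$n$--adic filtration on $\mathcal{S}^{\Phi}_{\rm graphon}$, set
\begin{equation}
C_{X_{\rm DSE}}(T):= \prod_{k=1}^{\infty} C_{X_{k}}(T)
\end{equation}
as an element of the completion $\widehat{A}_{\rm dr}$ of $A_{\rm dr}$, equivalently of the inverse limit of the truncated target rings. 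The extended CSM homomorphism is then defined on Feynman graphons by
\begin{equation}
\hat{U}_{\rm CSM}([W_{X_{\rm DSE}}]):= C_{X_{\rm DSE}}(T) = \prod_{k=1}^{\infty} I_{\rm CSM}\big([\mathbb{A}^{n_{k}} \setminus \hat{V}_{X_{k}}]\big),
\end{equation}
and on a finite Feynman graphon $[W_{\Gamma}]$ it restricts to $I_{\rm CSM}([\mathbb{A}^{n} \setminus \hat{V}_{\Gamma}])$ under the embedding $H_{\rm FG}(\Phi) \hookrightarrow \mathcal{S}^{\Phi}_{\rm graphon}$ of Lemma \ref{feynman-graphon-1} and Theorem \ref{feynman-graphon-5}.

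Third, I would check that $\hat{U}_{\rm CSM}$ is a ring homomorphism from $\mathcal{S}^{\Phi}_{\rm graphon}$ to $\widehat{A}_{\rm dr}$: multiplicativity over disjoint unions of Feynman graphons is inherited from that of $I_{\rm CSM}$ and survives the limit since the truncations $p_{m}$ are ring maps; the empty--graph class $[W_{\mathbb{I}}]$ is sent to $1$; the deletion--contraction recursion lifts from each component $X_{k}$ to $X_{\rm DSE}$ precisely as $\Psi_{X} = w_{e}F + G$ was obtained in the preceding Lemma, because $\hat{V}_{X}$ is the zero locus of $\Psi_{X}(w)$ and $\mathbb{A}^{\infty} \setminus \hat{V}_{X} = \prod_{i} (\mathbb{A}^{n_{i}} \setminus \hat{V}_{X_{i}})$; and compatibility with the decomposition $[W_{\Gamma}] = [W_{\Gamma_{v_{1}}}] \sqcup \dots \sqcup [W_{\Gamma_{v_{r}}}]$ into 1PI pieces and with bridge edges reproduces the polynomial--invariant form of the Aluffi--Marcolli algebro--geometric Feynman rules at the level of $\mathcal{S}^{\Phi}_{\rm graphon}$.

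The hard part will be the convergence and well--definedness of the infinite product $\prod_{k} C_{X_{k}}(T)$: one has to fix the topology on the completed target (a degree/$T$--adic filtration compatible with the cut--distance on graphons) in which the partial products form a Cauchy sequence, verify that the resulting element is independent of the graphon representative of $[W_{X_{\rm DSE}}]$ — i.e. invariant under relabeling and weak isomorphism, using that weakly isomorphic graphons have the same homomorphism densities and hence the same associated spanning--forest data and Kirchhoff--Symanzik varieties — and confirm that the completed assignment remains a ring homomorphism. Granted this, the remaining items (multiplicativity, the bridge formula, deletion--contraction, and the restriction to finite graphs) are routine consequences of the finite Aluffi--Marcolli theory passed through the partial--sum limit, giving the claimed extension $\hat{U}_{\rm CSM}$ of the CSM homomorphism to large Feynman diagrams generated by Dyson--Schwinger equations.
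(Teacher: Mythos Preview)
Your proposal is correct but takes a genuinely different route from the paper. You build the extended CSM invariant \emph{from below}, by applying the finite Aluffi--Marcolli rule $I_{\rm CSM}$ to each component $X_{k}$ and assembling the infinite product $\prod_{k} C_{X_{k}}(T)$ through the same projective-system mechanism used for the Tutte polynomial in Theorem \ref{large-graph-parametric-1}. The paper instead works \emph{directly at the infinite level}: it takes the hypersurface $\hat{V}_{X_{\rm DSE}} \subset \mathbb{A}^{\infty}$ already constructed from $\Psi_{X_{\rm DSE}}$, applies the CSM natural transformation $1_{\hat{V}_{X_{\rm DSE}}} \mapsto \sum_{i} a_{i}[\mathbb{P}^{i}]$ in the Chow group $A(\mathbb{P}^{\infty})$, reads off the generating series $G_{\hat{V}_{X_{\rm DSE}}}(T) = \sum_{i} a_{i}T^{i}$, and defines $I^{\infty}_{\rm CSM}: \mathcal{F} \to \mathbb{Z}[T]$ by $[\hat{V}_{X_{\rm DSE}}] \mapsto G_{\hat{V}_{X_{\rm DSE}}}(T)$, extended by linearity to a group homomorphism.

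Your approach is more constructive and better controlled: it makes the dependence on the partial sums explicit and inherits multiplicativity, bridge and deletion--contraction relations mechanically from the finite theory, at the cost of having to set up the completed target ring and justify convergence (which you correctly flag as the substantive point). The paper's approach is shorter but tacitly assumes that the CSM natural transformation and the Chow group make sense for the pro-variety $\hat{V}_{X_{\rm DSE}} \subset \mathbb{A}^{\infty}$; the two constructions should agree precisely because of the product decomposition $\mathbb{A}^{\infty} \setminus \hat{V}_{X} = \prod_{i}(\mathbb{A}^{n_{i}} \setminus \hat{V}_{X_{i}})$ established just before the corollary.
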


\begin{proof}
The existence of the CSM-homomorphism $I^{\infty}_{{\rm CSM}}$ is
another consequence of the cut-distance topology and graphon
representations of Feynman diagrams.

For a given large Feynman diagram $X_{{\rm DSE}}$ as the unique
solution of an equation DSE,  suppose $\Psi_{X_{{\rm DSE}}}(w)$ is
the first Kirchhoff--Symanzik polynomial and $\hat{V}_{X_{{\rm
DSE}}}$ is its associated hypersurface. In addition, suppose
$1_{\hat{V}_{X_{{\rm DSE}}}}$ is the function for $\hat{V}_{X_{{\rm
DSE}}} \subset \mathbb{A}^{\infty}$ and $A(\mathbb{P}^{\infty})$ is
the associated Chow group. The natural transformation
\begin{equation}
1_{\hat{V}_{X_{{\rm DSE}}}} \longmapsto a_{0}[\mathbb{P}^{0}] + a_{1}[\mathbb{P}^{1}] + a_{2}
[\mathbb{P}^{2}] + ... \in A(\mathbb{P}^{\infty})
\end{equation}
allows us to define
\begin{equation}
G_{\hat{V}_{X_{{\rm DSE}}}}(T):= a_{0} + a_{1}T + a_{2}T^{2} + ... +
a_{N}T^{N} + ... \ .
\end{equation}
Now define
\begin{equation}
I^{\infty}_{{\rm CSM}}: \mathcal{F} \longrightarrow \mathbb{Z}[T], \
\ \ [\hat{V}_{X_{{\rm DSE}}}] \longmapsto G_{\hat{V}_{X_{{\rm
DSE}}}}(T)
\end{equation}
and extend it by linearity to achieve a group homomorphism.
\end{proof}

\section{\textsl{The optimization of non-perturbative complexity via a multi-scale Renormalization Group}}

In Complexity Theory, the efficiency of an algorithm against a
problem is judged in terms of the algorithm's capability in
dealing with computational demands about quantities originated from
the intrinsic complexity of that problem. An algorithm is known as
feasible if it has a polynomial-time asymptotic scaling and it is
known as infeasible if it has a super-polynomial (typically,
exponential) scaling. The calculations of quantum field-theoretical
scattering amplitudes at high precision or strong couplings are
infeasible on classical computers but recently, there are some
research efforts which aim to show that these calculations can be
feasible on quantum computers. Traditional calculations of
scattering amplitudes in Quantum Field Theory is on the basis of a
series expansion in powers of the coupling constant (i.e. the
coefficients of the interaction terms) such that the running
coupling constant is taken to be small. Feynman diagrams provide an
intuitive way to organize this class of perturbative expansions
where the loop number is associated with the power of the
(running) coupling constant. The number of this class of combinatorial
diagrams gives us a reasonable measure to evaluate the computational
complexity of perturbative calculations. This measure increases
factorially in terms of the number of loops and the number of external
particles. Furthermore, if the amount of the coupling constant is
insufficiently small, then the perturbative machinery can not
provide correct results while the series expansions are
divergent or asymptotic even at weak coupling constants. Indeed, if
we include higher-order terms beyond a certain point, then the
approximations can be inappropriate. In fact, by increasing the
coupling constant, one eventually reaches a quantum phase transition
at some critical couplings such that in the parameter space near
this phase transition perturbative methods become unreliable. This
region can be studied under strong-coupling regimes.

Generally speaking, limits of computations and the efficiently
computing of things are the most important topics in Theory of Computation and Information
Theory where people deal with the Halting problem as an undecidable
type of problem which determines whether the program will finish
running or continue to run forever. Thanks to rooted trees decorated
by primitive recursive functions, Manin discovered a new
reinterpretation of the Halting problem in the context of the BPHZ
perturbative renormalization. He encapsulated the amount of (non-)computability
in terms of the existence of the
Birkhoff factorization at the level of the renormalization Hopf
algebra of the Halting problem \cite{delaney-marcolli-1, manin-2,
manin-3, manin-4}.

Algorithms belong to the intermediate steps between programs and
functions which means that they are classified as substructures
in the context of Galois theory. This fundamental fact has already
been applied to describe the foundations of a new
categorical--geometric setting for the study of (systems) of
Dyson--Schwinger equations (as the generators of intermediate steps)
in the renormalization Hopf algebra of the Halting problem under
Dimensional Regularization and the global $\beta$-functions. As the
consequence of this treatment, we already have the construction of a
new class of neutral Tannakian subcategories of the universal
Connes--Marcolli category $\mathcal{E}^{{\rm CM}}$ which encode
intermediate algorithms in the context of systems of differential
equations together with singularities. In addition, these
subcategories can address the existence of a new interrelationship
between mixed Tate motives and the Halting problem in Theory of Computation. Furthermore,
thanks to the combinatorial reformulation of the universal
counterterm, some new computational techniques for the study of the
amount of non-computability in the language of the theory of Hall
words have been obtained. It is now possible to analyze infinities or non-computability in the
Theory of Computation in terms of a renormalization theory on
(systems) of Dyson--Schwinger equations and vice versa.
\cite{delaney-marcolli-1, manin-2, shojaeifard-8}

It is so difficult to have an optimal solution when we want to
consider a complex problem under a limited period of time. In this
situation we work on the construction of anytime algorithms by
computing an initial potentially highly suboptimal solution and then
we improve the computed suboptimal solution as time allows.

The Kolmogorov complexity, as an uncomputable concept, aims to
determine the length of the shortest algorithm which produces an
object as the output of a procedure. For a given set $\Sigma$ of
alphabets or letters, let $f$ be a computable function on the set of
all possible strings generated by elements in $\Sigma$. A
description of a string $\sigma$ is some string $\tau$ with
$f(\tau)=\sigma$. The Kolmogorov complexity $K_{f}$ is defined by
\begin{equation}
K_{f}(\sigma):= \big \{^{{\rm
min}\{|\tau|:f(\tau)=\sigma\}}_{\infty, \ \ \ {\rm otherwise}}.
\end{equation}
It is possible to modify this definition independent of choosing $f$
where we need to apply a universal Turing machine. In fact, there
exists a Turing machine $U$ such that for all partial computable
functions $f$, there exists a program $p$ such that for all $y$, we
have $U(p,y)=f(y)$. It enables us to define $K(\sigma)$ as the
Kolmogorov complexity of $\sigma$. It is shown that for all $n$, there
exists some $\sigma$ with $|\sigma|=n$ such that $K(\sigma) \ge n$.
Such $\sigma$ is called Kolmogorov random.

In this section, we plan to apply the graphon representation theory of solutions of Dyson--Schwinger equations
to build a new concept of complexity of non-perturbative parameters in terms of the Kolmogorov complexity. We will show that optimal algorithms in
dealing with non-perturbative parameters can be achieved by
working on a multi-scale non-perturbative Wilsonian renormalization
group defined on the space $\mathcal{S}^{\Phi,g}$.

\subsection{\textsl{A Renormalization
Group program on $\mathcal{S}^{\Phi,g}$}}

One important method for the study of the dynamics of quantum
systems is changing the scales of fundamental parameters of the
physical theory such as momentum, energy and mass. Theory of
Renormalization Group aims to describe the behavior of quantum
systems under this class of re-scalings where the possibility of
exchanging information from scale to scale is considered under
the fundamental principles of Quantum Mechanics. The interpretation
of the concept of mass in the context of time and distance by using
the Planck constant and the interpretation of the concept of time in
the context of distance by using the speed of light enable us to
study the dynamics of relativistic quantum systems under the
re-scaling of the distance parameter. In this situation, small
distances and times are equivalent to large momenta, energies and
masses which produce divergencies in Quantum Field Theory.

There is another important parameter in Quantum Field Theory which
encodes the strength of the fundamental forces. This parameter,
which is known as the coupling constant, appears in the interaction
part of the Lagrangian where we encode information of physical
theory in the language of Green's functions and Feynman integrals.
The amount of the coupling constant has direct influence on the
complexity of Green's functions. As the basic fact, in QED we deal
with couplings smaller than $1$ while in low energy QCD we deal with couplings
at the size of $1$ or larger than $1$. In theoretical and
experimental studies we study coupling constants under two settings
namely, the bare couplings and the running couplings. Running
coupling constants are the outputs of (dimensional) regularization
and renormalization schemes and they have been applied in high
energy levels to generate some intermediate quantities which are
useful for the approximation of non-perturbative parameters.
Running couplings guide us to deal with changing the scale of
the momentum parameter where the Wilsonian model of the
Renormalization Group has been formulated.

Generally speaking, there are two different well-known approaches
for the formulation of non-perturbative Renormalization Group in
Theoretical and Mathematical Physics namely, Wilson--Polchinski framework and
effective average action. We can address following references \cite{delamotte-1, krajewski-toriumi-1,
    marino-1, morris-1, morris-2, newman-riedel-1, newman-riedel-muto-1,
    polchinski-1, wetterich-1} for the study of these methods and here we only review their general facts.

In Wilson--Polchinski framework, Physics at very small scale
corresponds to a scale $\Lambda$ in momentum space which is actually
the inverse of a microscopic length where the partition function is
given by
\begin{equation}
Z[B]=\int d\mu_{C_{\Lambda}}(\phi) {\rm exp} \big(- \int V(\phi) +
\int B\phi \big)
\end{equation}
such that $d\mu_{C_{\Lambda}}$ is a functional Gaussian measure with
a cut-off at scale $\Lambda$. Now if we separate the field
$\phi_{p}=\phi(p)$ into rapid and slow modes $\phi_{p,<},
\phi_{p,>}$ with respect to a scale $k$, then we can rewrite the
partition function in terms of these components which lead us to
define a running potential $V_{k}$ at scale $k$ via performing the
integration on $\phi_{>}$. So we can have
\begin{equation}
Z=\int d\mu_{C_{k}}(\phi_{<}){\rm exp}(-\int V_{k}(\phi_{<}))
\end{equation}
such that when $k \le \Lambda$, $V_{k}$ involves derivative terms
with any power of the derivatives of $\phi_{<}$. The
Wilson--Polchinski equation is indeed a differential equation for
the evolution of $V_{k}$ with $k$ such that the flow of potentials
$V_{k}(\phi_{<})$ do not contain all information on the initial
theory and in addition, $V_{k}(\phi)$ involves infinitely many
couplings contrarily to perturbation theory that involves only the
renormalizable ones. In this method of non-perturbative
Renormalization Group there is no general achievement about the
convergence of the series of approximations that are used. In
addition, the anomalous dimension is depended on the choice of
cut-off parameters that separate the rapid and the slow modes
whereas it should be independent of it.

In effective average action method, the basic idea is to build the
1-parameter family of models for which a momentum depended mass
term is added to the original Hamiltonian where we have
\begin{equation}
Z_{k}[B] = \int \mathcal{D} \phi(x) {\rm exp} \big(-H[\phi] - \Delta
H_{k}[\phi] + \int B \phi \big)
\end{equation}
\begin{equation}
\Delta H_{k}[\phi] = \frac{1}{2} \int_{q} R_{k}(q) \phi_{q}
\phi_{-q}.
\end{equation}
For $0<k<\Lambda$, the rapid modes are almost unaffected by the
cut-off function $R_{k}(q)$ (as a homogeneous to a mass square)
which means that $R_{k}(|q|>k) \simeq 0$. Set
\begin{equation}
W_{k}[B]:= {\rm log} Z_{k}[B]
\end{equation}
with the corresponding Legendre transformation
$\Gamma_{k}[M(x)]=\Gamma_{k}[\frac{\delta W_{k}}{\delta B(x)}]$. The
Renormalization Group equation on $\Gamma_{k}$ is the differential
equation of the type
\begin{equation}
\partial_{k} \Gamma_{k} = f(\Gamma_{k}).
\end{equation}
It is shown that by working on dimensionless and renormalized
quantities, the resulting non-perturbative Renormalization Group can
be written independently of the scales $k$ and $\Lambda$. The
geometry of the resulting Renormalization Group flow from this
framework supports the universality of self-similarity and
decoupling of massive modes.

In this part we address an alternative Renormalization Group
platform for the study of non-perturbative parameters in terms of
changing the scales of Dyson--Schwinger equations in terms of the rescaling of
bare and running couplings. Our study can provide a mathematical
setting to exchange information among non-perturbative aspects
under different scales. For this purpose, we plan to build a new
multi-scale Renormalization Group machinery on the space
$\mathcal{S}^{\Phi,g}$ of all Dyson--Schwinger equations in the
physical theory $\Phi$ under different scales $\lambda g$ of the
bare coupling constant $g$.

The bare couplings are independent of any regularization and
renormalization schemes and therefore their rescaling can be helpful for us to approximate non-perturbative parameters originated
from Dyson--Schwinger equations under a universal setting. Using running coupling constants makes some fundamental issues. On the one hand, running couplings are
unobervable. On the other hand, the most important mission of the
Renormalization Group is to show that the predictions for the
observables do not depend on theoretical conventions such as
renormalization or regularization schemes, the initial state, the
choice of effective charge or the choice of running coupling
constants. Therefore different choices of these couplings should be
related to each other which means that search for an optimal choice
is very important. Our promising non-perturbative Renormalization
Group enables us to study Dyson--Schwinger equations under changing
the scales of bare couplings and running couplings not simultaneously but related to each other.
We expect that this alternative machinery is helpful to provide a
theoretical algorithm for the determination of effective couplings
where the complexity of the corresponding Dyson--Schwinger equations
will be controlled in terms of changing the scale of the bare
coupling constant.

Let $\mathcal{X}^{g}$ be the collection of all interacting
Lagrangians with coefficients in the ring $\mathbb{R}[[g]]$,
invariant under the change $\phi \rightarrow -\phi$, and
interaction parts with the general form
\begin{equation}
I(\phi):= \sum_{k \ge 2} I_{k}(\phi)
\end{equation}
such that for all $k$, $I_{k}=O(g)$ with respect to the bare
coupling constant $g$. Changing the scale of $g$ allows us to obtain
an effective Lagrangian at the scale $\tau \le \lambda$ of a
Lagrangian $L$ at the initial scale $\lambda$. The interaction part
is the original source of Dyson--Schwinger equations and therefore
the re-scaling of the coupling constants lead us to rescale these
non-perturbative type of equations. The Renormalization Group with
respect to this class of momentum type rescaling enables us to
discuss about the possibility of exchanging information among
rescaled Dyson--Schwinger equations.

For each $k$, set $F_{k}$ as the set of all smooth functions on the
hyperplane $\sum_{i=1}^{k} v_{i}=0$ in $(V^{*})^{\oplus k}$ such
that $V$ is the Euclidean 4-dimensional space-time. Define $F:=
\prod_{i=1} ^{\infty} F_{2i}$ to formulate Green's functions
$\mathcal{G}$ given by
\begin{equation}
\mathcal{G}: \mathcal{X}^{g} \times M_{m} \longrightarrow F, \ \
\mathcal{G}:=(\mathcal{G}_{2},\mathcal{G}_{4},...)
\end{equation}
such that \\
- $M_{m}$ is the set of scales of the momentum parameter, \\
- The value $\mathcal{G}_{k}$ at $(L,\lambda)$ is called the $k$-point correlation function of the Lagrangian $L$ at the scale $\lambda$, \\
- For each $k$, $\mathcal{G}_{k}$ is the formal expansion of
amplitudes of all Feynman diagrams with $k$ external edges.

Dyson--Schwinger equations are actually formulated as the fixed point
equations of $\mathcal{G}$ with the general form
\begin{equation}
\mathcal{G} = 1 + \int I_{\gamma} \mathcal{G}
\end{equation}
such that $I_{\gamma}$ is the integral kernel with respect to the
(IPI) primitive Feynman diagram $\gamma$.

\begin{defn} \label{effective-dse}
An equation DSE' in $\mathcal{S}^{\Phi,g}$ is called effective at the
scale $\tau$ of the original Dyson--Schwinger equation DSE at the
initial scale $\lambda$, if the fixed point equation of the Green's
function $\mathcal{G}(L^{\Phi},\lambda)$ corresponding to the equation DSE coincides with the fixed
point equation of the Green's function
$\mathcal{G}(L^{\Phi},\tau)$ corresponding to the equation DSE'.
\end{defn}

It is shown in \cite{submitted-1} that we can build a unique
effective equation at the scale $\tau$ for any equation DSE in
$\mathcal{S}^{\Phi,g}$ at the original scale $\lambda$ of the
momentum parameter. It enables us to change the scale of the momenta
of internal edges of each term in the formal expansion of the
solution of DSE.

In higher orders in perturbation theory we should deal with a large
number of Feynman diagrams which cost us exponentially growing of
the momentum scale. Therefore all orders in perturbation theory do
not accessible for any scale of the momentum parameter. The
asymptotic freedom behavior of QCD at very high energies enables us
to study Physics of hadrons under perturbative setting. At a
relatively low energy scale, coupling constants
become too large where non-perturbative situations of the physical system can be observed.
Running coupling constants, as the functions of the momentum
parameter, describe the strength of the interactions among quarks
and gluons. The determination of this class of couplings has very
uncertainty nature which makes so many computational and
phenomenological difficulties. Dimensional Regularization allows us
to replace the bare coupling constant with a class of scale
depended couplings. The ultraviolet divergencies are eliminated
by normalizing the coupling at a specific momentum scale. In
addition, the ultraviolet cut-off dependency is removed by
allowing the couplings and masses in the Lagrangian
to have a scale dependency. Therefore we can produce running couplings in terms of normalizing them to a measured value at a given scale.
This normalization of the coupling to a measured value makes the
running coupling to not have sensitivity to the ultraviolet cut-off.
The scale dependency of the strong coupling can be controlled by the
$\beta$-function as the infinitesimal generator of the
Renormalization Group. \cite{deur-brodsky-deteramond-1, marino-2,
marino-3}

However we plan to apply effective Dyson--Schwinger equations under different rescaling of the bare coupling constant and running coupling constants to define a new
multi-scale Renormalization Group on the space $\mathcal{S}^{\Phi,g}$ to
analyze the behavior of rescaled Dyson--Schwinger equations. This new Renormalization Group is also useful to rescale running couplings in terms of changing the scale of the bare coupling constant independent of any regularization method.

\begin{thm} \label{multi-scale-RG}
There exists a Renormalization Group machinery on
$\mathcal{S}^{\Phi,g}$ which encodes the dynamics of
Dyson--Schwinger equations under changing the scales of the bare and running
coupling constants.
\end{thm}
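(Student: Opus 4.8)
The plan is to realize the desired Renormalization Group as a multiplicative family $\{R_{(s,\,\tau\leftarrow\lambda)}\}$ of transformations of $\mathcal{S}^{\Phi,g}$, obtained by composing two elementary operations: the passage to effective Dyson--Schwinger equations of Definition \ref{effective-dse}, which lowers the momentum scale from $\lambda$ to $\tau\le\lambda$, and the substitution $\sigma_s\colon g\mapsto sg$ on the coefficient ring $\mathbb{R}[[g]]$, which --- because $\mathcal{S}^{\Phi,g}=\bigcup_\lambda\mathcal{S}^{\Phi}(\lambda g)$ collects all running couplings $\lambda g$ with $0<\lambda\le1$ --- carries a fixed-point equation attached to the running coupling $\lambda g$ to one attached to $s\lambda g$. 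The infinitesimal generator of the resulting flow will then be located inside the Lie algebra $\mathfrak{g}^{\Phi}_{\rm graphon}(\mathbb{C})$ of the complex Lie group $\mathbb{G}^{\Phi}_{\rm graphon}(\mathbb{C})$ of the Hopf algebra of Feynman graphons (Theorem \ref{feynman-graphon-5}), and this generator is exactly the $\beta$-function governing the dynamics of strongly coupled Dyson--Schwinger equations announced earlier.

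First I would fix $\Phi$ and recall from the construction cited in \cite{submitted-1} that for every ${\rm DSE}\in\mathcal{S}^{\Phi,g}$ at momentum scale $\lambda$ and every $\tau\le\lambda$ there is a \emph{unique} effective equation $\Theta_{\tau\leftarrow\lambda}({\rm DSE})$, obtained by rescaling the momenta of the internal edges of each term of the formal expansion $\sum_{n\ge0}(\lambda g)^n X_n$ of its solution; uniqueness forces the cocycle identity $\Theta_{\tau_2\leftarrow\tau_1}\circ\Theta_{\tau_1\leftarrow\lambda}=\Theta_{\tau_2\leftarrow\lambda}$. On the coupling side, $\sigma_s$ acts on the interaction part $I(\phi)=\sum_{k\ge2}I_k(\phi)$ of a Lagrangian in $\mathcal{X}^g$ by rescaling each $I_k$ (which is $O(g)$); since a Dyson--Schwinger equation is the fixed-point equation $\mathcal{G}=1+\int I_\gamma\,\mathcal{G}$ of the associated Green's function and $\sigma_s$ merely rescales the kernel $I_\gamma$, it preserves $\mathcal{S}^{\Phi,g}$ and satisfies $\sigma_{s_2}\circ\sigma_{s_1}=\sigma_{s_1s_2}$. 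Setting $R_{(s,\,\tau\leftarrow\lambda)}:=\Theta_{\tau\leftarrow\lambda}\circ\sigma_s$, the second step is to verify the Renormalization Group law $R_{(s_2,\,\tau_2\leftarrow\tau_1)}\circ R_{(s_1,\,\tau_1\leftarrow\lambda)}=R_{(s_1s_2,\,\tau_2\leftarrow\lambda)}$; this reduces to the two composition laws above together with the commutation of $\Theta$ and $\sigma_s$, which holds because the effective-equation construction touches only the internal momenta while $\sigma_s$ touches only the coupling-dependent weights. This is also the step that encodes ``running couplings rescaled in terms of the bare coupling'': along the flow the effective running coupling at scale $\tau$ is determined jointly by $s$ and by the momentum-scale ratio, not by an independent choice.

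The third step is to read off the infinitesimal generator in the graphon picture. By Lemma \ref{feynman-graphon-1} and Theorem \ref{feynman-graphon-4}, $\mathcal{S}^{\Phi,g}$ embeds as a cut-distance topological sub-vector space of $\mathcal{S}^{\Phi}_{\rm graphon}$, and each $R_{(s,\,\tau\leftarrow\lambda)}$ acts on the unlabeled Feynman graphon class $[W_{X_{\rm DSE}}]$ term-by-term through $X_n\mapsto s^n X_n^{(\tau)}$ on the components of the partial sums $Y_m$. Since $[W_{X_{\rm DSE}}]$ is the cut-distance limit of $\{[W_{Y_m}]\}_{m\ge1}$ and the action is linear and grading-compatible, $R_{(s,\,\tau\leftarrow\lambda)}$ is cut-distance continuous and dualizes to a one-parameter group of automorphisms of $\mathbb{G}^{\Phi}_{\rm graphon}(\mathbb{C})$. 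Differentiating at $s=1,\ \tau=\lambda$ and comparing with the one-parameter group $\{\theta_t\}_t$ generated by the grading operator $Y$ on $\mathbb{G}^{\Phi}_{\rm graphon}(\mathbb{C})$, and invoking Lemma \ref{renorm-group-1} and Proposition \ref{graphon-geimetric-1}, produces an element $\beta\in\mathfrak{g}^{\Phi}_{\rm graphon}(\mathbb{C})$ that is the infinitesimal generator of the multi-scale Renormalization Group; the flow is recovered as a $T\exp$ of $\beta$ exactly as in the geometric description of Proposition \ref{graphon-geimetric-1}.

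The hard part will be Step two: guaranteeing that $\Theta_{\tau\leftarrow\lambda}({\rm DSE})$ genuinely lies in $\mathcal{S}^{\Phi,g}$ --- that it is again the fixed-point equation of a Green's function of $\Phi$ under a legitimate running coupling rather than a merely formal rescaling --- and that $\Theta$ and $\sigma_s$ commute in the precise sense needed for the group law. Both points rest on the uniqueness clause of Definition \ref{effective-dse} and on the Slavnov--Taylor and Ward identities recalled earlier, which guarantee that the rescaled integral kernels still satisfy the consistency conditions cutting out the Hopf subalgebras $H_{\rm DSE}$ inside $H_{\rm FG}(\Phi)$. Once this compatibility is established, the continuity in Step three and the identification of the generator inside the Lie algebra of $\mathbb{G}^{\Phi}_{\rm graphon}(\mathbb{C})$ follow routinely from Theorem \ref{feynman-graphon-5}, Lemma \ref{renorm-group-1} and Proposition \ref{graphon-geimetric-1}.
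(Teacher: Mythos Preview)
Your basic architecture --- combine a momentum-scale passage to effective equations with a coupling rescaling --- matches the paper's, but there is a genuine structural mismatch in Step two that undermines the group law you write down. You posit a \emph{two}-parameter family $R_{(s,\tau\leftarrow\lambda)}=\Theta_{\tau\leftarrow\lambda}\circ\sigma_s$ and derive the composition law from the commutation $\sigma_s\circ\Theta_{\tau\leftarrow\lambda}=\Theta_{\tau\leftarrow\lambda}\circ\sigma_s$, arguing that $\Theta$ touches only internal momenta while $\sigma_s$ touches only coupling weights. The paper takes the opposite view: it builds the multi-scale map as a \emph{one}-parameter semigroup action of $\mathbb{R}^+_{\le1}$ on triples $({\rm DSE},\tau g,\Lambda_\tau)$, sending $\lambda\circ({\rm DSE},\tau g,\Lambda_\tau)$ to $(R^{\rm multi}_{(\tau g,\Lambda_\tau),(\lambda\tau g,\lambda\Lambda_\tau)}{\rm DSE},\lambda\tau g,\lambda\Lambda_\tau)$, and immediately after the proof states explicitly that this machinery is \emph{non-commutative} ``because the scale of the momentum parameter is completely depended on the chosen rescaling of the bare coupling.'' The notation $\Lambda_\tau$ is meant to record that the running-coupling scale is a function of the bare-coupling scale, so your independence assumption fails and the two-parameter group law does not hold as written. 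What the paper actually verifies are three separate semigroup laws (for $R^{\rm running}$, for $R^{\rm bare}$, and for the diagonal $R^{\rm multi}$), each coming directly from the uniqueness of effective Dyson--Schwinger equations in Definition~\ref{effective-dse}.

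Your Step three --- extracting an infinitesimal generator $\beta\in\mathfrak{g}^\Phi_{\rm graphon}(\mathbb{C})$ via the graphon embedding, Lemma~\ref{renorm-group-1}, and Proposition~\ref{graphon-geimetric-1} --- is not part of the paper's proof of this theorem at all. The paper's argument is purely constructive: it defines the three scale maps, asserts their cocycle identities, and identifies the resulting effective equations inside the effective Lagrangians $L^\Phi_{\lambda\Lambda_\tau}(\lambda\tau g)$. The link to $\beta$-functions and the Connes--Kreimer-type renormalization group on graphons is kept separate (it lives in Lemma~\ref{renorm-group-1} and the surrounding discussion in Chapter~2). So even if you repair Step two by replacing the commuting two-parameter family with the paper's diagonal one-parameter action, Step three would be an addition to, not a component of, the proof the paper gives.
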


\begin{proof}
Set $M_{{\rm running}}$ as the set of scales of the running
couplings. For scales $\Lambda_{1}, \Lambda_{2}, \Lambda_{3} \in
M_{{\rm running}}$ such that $\Lambda_{1} < \Lambda_{2} <
\Lambda_{3}$, define the scale map $R^{{\rm running}}_{--}$ on
$\mathcal{S}^{\Phi,g}$ which satisfies the property
\begin{equation}
R^{{\rm running}}_{\Lambda_{1}\Lambda_{2}} R^{{\rm
running}}_{\Lambda_{2}\Lambda_{3}} = R^{{\rm
running}}_{\Lambda_{1}\Lambda_{3}}.
\end{equation}
For each equation DSE, $R^{{\rm
running}}_{\Lambda_{1}\Lambda_{2}}{\rm DSE}$ is the effective
Dyson--Schwinger equation at the scale $\Lambda_{2}$ of the equation
DSE at the original scale $\Lambda_{1}$. Now define an action of the
semigroup $\mathbb{R}^{+}_{\le 1}$ on the space
$\mathcal{S}^{\Phi,g} \times M_{{\rm running}}$ given by
\begin{equation} \label{running-rg-1}
r \circ ({\rm DSE},\Lambda):= (R^{{\rm running}}_{\Lambda,
r\Lambda}{\rm DSE},r\Lambda).
\end{equation}
The equation $R^{{\rm running}}_{\Lambda, r\Lambda}{\rm DSE}$ is the rescaled version of the Dyson--Schwinger equation DSE under the running coupling $r\Lambda$ while the equation
\begin{equation}
R^{{\rm running}}_{r \Lambda \ \Lambda}{\rm DSE}:=(R^{{\rm running}}_{\Lambda, r\Lambda}{\rm DSE},r\Lambda)
\end{equation}
is the corresponding unique effective Dyson--Schwinger equation in the effective Lagrangian $L^{\Phi}_{r \Lambda}(g)$. The resulting Renormalization Group allows us to study the dynamics
of Dyson--Schwinger equations under the rescaling of the running
couplings.

Set $M_{{\rm bare}}$ as the set of scales of the bare coupling
constant $g$. For scales $\tau_{1}, \tau_{2}, \tau_{3} \in M_{{\rm
bare}}$ such that $\tau_{1} < \tau_{2} < \tau_{3}$, define the scale
map $R^{{\rm bare}}_{--}$ on $\mathcal{S}^{\Phi,g}$ which satisfies
the property
\begin{equation}
R^{{\rm bare}}_{\tau_{1}\tau_{2}} R^{{\rm bare}}_{\tau_{2}\tau_{3}}
= R^{{\rm bare}}_{\tau_{1}\tau_{3}}.
\end{equation}
For each equation DSE in $\mathcal{S}^{\Phi,g}$ define a new
equation $R^{{\rm bare}}_{\tau_{1}\tau_{2}} {\rm
DSE}$ which is the effective Dyson--Schwinger equation at the scale $\tau_{2}$ of the
equation DSE at the initial scale $\tau_{1}$. Now define an action
of the semigroup $\mathbb{R}^{+}_{\le 1}$ on the space
$\mathcal{S}^{\Phi,g} \times M_{{\rm bare}}$ given by
\begin{equation} \label{bare-rg-1}
r \circ ({\rm DSE},\tau):= (R^{{\rm bare}}_{\tau, r\tau}{\rm
DSE},r\tau).
\end{equation}
The equation $R^{{\rm bare}}_{\tau, r\tau}{\rm DSE}$ is the rescaled version of the Dyson--Schwinger equation DSE under the rescaled bare coupling constant $r\tau$ while the equation
\begin{equation}
R^{{\rm bare}}_{r \tau \ \tau}{\rm DSE}:= (R^{{\rm bare}}_{\tau, r\tau}{\rm DSE},r\tau)
\end{equation}
is the corresponding unique effective Dyson--Schwinger equation in the effective Lagrangian $L^{\Phi}(r\tau g)$. The resulting Renormalization Group allows us to study the dynamics
of Dyson--Schwinger equations under the rescaling of the bare
coupling constant $g$.

Thanks to (\ref{running-rg-1}) and (\ref{bare-rg-1}), we can define
a new multi-scale renormalization group on $\mathcal{S}^{\Phi,g}$
where it is possible to rescale the bare coupling constant $g
\longmapsto \tau g$ before the application of regularization
schemes.

Each triple $({\rm DSE},\tau g,\Lambda_{\tau})$ in
$\mathcal{S}^{\Phi,g} \times M_{{\rm bare}} \times M_{{\rm
running}}$ presents an effective Dyson--Schwinger equation such that its unique solution is a polynomial with respect to the rescaled bare and running
coupling constants. This polynomial is an infinite
formal expansion of Feynman integrals together with the powers of the rescaled bare coupling constant
bare coupling constant $\tau g$ (as the initial scale) such that
each Feynman integral in the expansion is defined in terms of the
momentum parameter at the initial scale $\Lambda_{\tau}$. Now we can define
a new action of the semi-group $\mathbb{R}^{+}_{\le 1}$ on
$\mathcal{S}^{\Phi} \times M_{{\rm bare}} \times M_{{\rm running}}$
as the following way
\begin{equation}
\lambda \circ ({\rm DSE},\tau g,\Lambda_{\tau}):= (R^{{\rm
multi}}_{(\tau g,\Lambda_{\tau}),(\lambda \tau g, \lambda
\Lambda_{\tau})} {\rm DSE}, (\lambda \tau g,\lambda \Lambda_{\tau})).
\end{equation}
The equation $R^{{\rm multi}}_{(\tau g,\Lambda_{\tau}),(\lambda \tau
g, \lambda \Lambda_{\tau})} {\rm DSE}$ is the multi-rescaled version of the Dyson--Schwinger equation DSE obtained by changing scales $\tau g \mapsto \lambda \tau g$ and $\Lambda_{\tau} \mapsto \lambda \Lambda_{\tau}$ of the bare and running coupling constants. The equation
\begin{equation}
R^{{\rm multi}}_{(\lambda \tau g, \lambda \Lambda_{\tau}) \ (\tau g,\Lambda_{\tau})} {\rm DSE}:= (R^{{\rm multi}}_{(\tau g,\Lambda_{\tau}),(\lambda \tau g, \lambda
\Lambda_{\tau})} {\rm DSE}, (\lambda \tau g,\lambda \Lambda_{\tau}))
\end{equation}
in $\mathcal{S}^{\Phi,g}$ is the corresponding unique effective Dyson--Schwinger equation in the effective Lagrangian $L^{\Phi}_{\lambda \Lambda_{\tau}}(\lambda \tau g)$.
\end{proof}

In Theorem \ref{multi-scale-RG}, the equation $R^{{\rm multi}}_{(\lambda \tau g, \lambda \Lambda_{\tau}) \ (\tau g,\Lambda_{\tau})} {\rm DSE}$ can be seen as the unique effective Dyson--Schwinger equation at the multi-scale $(\tau g,\Lambda_{\tau})$ of the equation DSE at the original multi-scale $(\lambda \tau g, \lambda \Lambda_{\tau})$. This means that this new multi-scale Renormalization Group can generate observable running coupling constants which are independent of any regularization or renormalization schemes. This fundamental property clarifies the universality of this non-perturbative Renormalization Group with respect to generating observable intermediate values for a given strong bare coupling constant.

Roughly speaking, the renormalization machinery enables us to redefine the unrenormalized constants which exist in the Lagrangian in such a way that the observable quantities remain finite when the ultraviolet cut-off is removed. This machinery requires a new quantity $\mu$ with the dimension of a mass where all intermediate quantities are depended on $\mu$. The confinement in QCD does not allow us to determine a natural scale for $\mu$. The $\mu$ dependence of the coupling constant and various quark masses in QCD force us to define running coupling constants and running masses where the Renormalization Group equations can control the $\mu$ dependence of the resulting renormalized quantities. The running coupling constant $g(\mu^{2})$ can be studied in terms of the equation
\begin{equation}
\mu^{2} \frac{dg(\mu^{2})}{d\mu^{2}} = \beta(g(\mu^{2}))
\end{equation}
which leads us to
\begin{equation}
g(\mu^{2}) = \frac{1}{\beta_{0}{\rm ln}(\mu^{2}/\Lambda^{2})}
\end{equation}
such that the dimensional scale $\Lambda$ is the scale at which the coupling diverges and perturbation theory becomes meaningless. When the cut-off parameter tends to infinity, $\beta(g(\mu^{2}))$ remains finite such that in perturbation theory we have
\begin{equation}
\beta(g(\mu^{2})) = - g(\mu^{2})^{2} (\beta_{0} + \beta_{1}g(\mu^{2}) + \beta_{2}g(\mu^{2})^{2} + ...).
\end{equation}

The Renormalization Group machinery defined by Theorem
\ref{multi-scale-RG} is non-commutative because the scale of the
momentum parameter is completely depended on the chosen rescaling
of the bare coupling. It encodes the dynamics of non-perturbative
aspects of quantum systems by Dyson--Schwinger
equations under different running coupling constants derived from changing the scales of couplings.

Dimensional Regularization or other regularization schemes can change
the nature of the bare couplings to describe QFT under a
perturbative setting but it fails to be functional in higher
orders. Theorem \ref{multi-scale-RG} enables us to generate a new
class of running couplings in terms of the rescaled bare coupling
which are independent of any regularization process. Therefore the
resulting running couplings preserve the nature of the bare coupling
which means that they have physical meanings.

The one important application of this multi-scale Renormalization Group together with Feynman graphon models is a new way for the description of any strongly coupled Dyson--Schwinger equation ${\rm DSE}(g)$ in terms of a cut-distance convergent sequence of Dyson--Schwinger equations under weaker couplings $\lambda g$ in
the space $\mathcal{S}^{\Phi,g}$. This means that we can approximate the complexity of non-perturbative parameters under the strong coupling constant $g$ in terms of a sequence of complexities of solutions of Dyson--Schwinger equations with lesser rate of complexities. In this setting,
we can compute the unique solution $X(\lambda g)$ of the equation ${\rm DSE}(\lambda g)$
for couplings $\lambda g <1$ as intermediate values for the approximation of
the large Feynman diagram $X(g)$.

\begin{cor} \label{rg-graphon-1}
For each large Feynman diagram $X(g)=\sum_{m=0}^{\infty}g^{m}X_{m}$ derived from an equation DSE in $\mathcal{S}^{\Phi,g}$ at the strong bare coupling constant $g \ge 1$, there exists a sequence of large Feynman
diagrams at weaker effective couplings which converges to $X(g)$
with respect to the cut-distance topology.
\end{cor}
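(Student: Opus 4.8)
The plan is to realise $X(g)$ as a cut-distance boundary point of an orbit of large Feynman diagrams produced by the multi-scale Renormalization Group of Theorem \ref{multi-scale-RG}, and then to certify the convergence with the Feynman graphon machinery of Theorems \ref{feynman-graphon-4} and \ref{feynman-graphon-5}. First I would fix a strictly increasing sequence $\lambda_{n} \in (0,1)$ with $\lambda_{n} \to 1$, so that each $\lambda_{n} g$ is an effective coupling strictly weaker than the strong bare coupling $g \ge 1$ and ${\rm DSE}(\lambda_{n} g)$ still belongs to $\mathcal{S}^{\Phi,g}$. By Theorem \ref{multi-scale-RG}, applying the bare-coupling rescaling $R^{\rm bare}_{g,\lambda_{n} g}$ to ${\rm DSE}(g)$ yields, for each $n$, a unique effective equation ${\rm DSE}(\lambda_{n} g)$ in the sense of Definition \ref{effective-dse}. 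Since the renormalization coproduct on the generators $X_{m}$ does not depend on the coupling parameters (cf. the discussion following (\ref{dse-4})), the unique solution of ${\rm DSE}(\lambda_{n} g)$ is the large Feynman diagram $X(\lambda_{n} g) = \sum_{m \ge 0} (\lambda_{n} g)^{m} X_{m}$, built from the same combinatorial data $\{X_{m}\}_{m \ge 0}$ as $X(g)$ but with reweighted coupling powers.

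By Theorem \ref{feynman-graphon-4} each $X(\lambda_{n} g)$ is the cut-distance limit of its partial sums $Y_{m}(\lambda_{n} g) = \sum_{i=1}^{m} (\lambda_{n} g)^{i} X_{i}$, hence is a genuine large Feynman diagram carrying an unlabeled Feynman graphon class $[W_{X(\lambda_{n} g)}]$ by Lemma \ref{feynman-graphon-1}; by Theorem \ref{feynman-graphon-5} all of these, together with $[W_{X(g)}]$, live in $\mathcal{S}^{\Phi}_{\rm graphon}$, into which $\mathcal{S}^{\Phi,g}$ embeds as a topological subspace. It then remains to prove $d_{\rm cut}\big([W_{X(\lambda_{n} g)}],[W_{X(g)}]\big)\to 0$, and I would do this by an interleaving argument. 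For fixed $m$, the partial sums $Y_{m}(\lambda_{n} g)$ and $Y_{m}(g)$ share the same underlying rooted forest skeleton $t_{X_{1}}\sqcup\dots\sqcup t_{X_{m}}$ produced by the embedding (\ref{Feynman-tree}), differing only through the edge weights carried by the coupling powers $(\lambda_{n} g)^{i}$ versus $g^{i}$; after the normalization/rescaling step used to build $W_{u_{\Gamma}}$ in the proof of Lemma \ref{feynman-graphon-1} (compare the Remark on $W_{n\Gamma}$, where scalar multiples are weakly isomorphic), such a reweighting displaces the unlabeled graphon class only within a cut-distance-controlled neighbourhood, so that $d_{\rm cut}\big([W_{Y_{m}(\lambda_{n} g)}],[W_{Y_{m}(g)}]\big)\to 0$ as $\lambda_{n}\to 1$, with control uniform over bounded ranges of $m$.

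Combining this with the convergence $d_{\rm cut}\big([W_{Y_{m}(g)}],[W_{X(g)}]\big)\to 0$ from Theorem \ref{feynman-graphon-4}, the analogous convergence for $X(\lambda_{n} g)$, a diagonal choice $m=m_{n}\to\infty$, and two applications of the triangle inequality for $d_{\rm cut}$, I obtain $d_{\rm cut}\big([W_{X(\lambda_{n} g)}],[W_{X(g)}]\big)\to 0$. Alternatively, and equivalently, one can argue that the multi-scale Renormalization Group action of Theorem \ref{multi-scale-RG}, transported to $\mathcal{S}^{\Phi}_{\rm graphon}$, is continuous in the rescaling parameter because the coproduct and antipode of $\mathcal{S}^{\Phi}_{\rm graphon}$ are compatible with the cut-distance topology (Theorem \ref{feynman-graphon-5}), so that $\lambda\mapsto[W_{X(\lambda g)}]$ is continuous at $\lambda=1$. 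Either route exhibits the sequence of large Feynman diagrams $\{X(\lambda_{n} g)\}_{n}$, at the weaker effective couplings $\lambda_{n} g$, as converging to $X(g)$ in the cut-distance topology.

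The hard part will be the interchange of limits in the middle step: passing from $m\to\infty$ (partial sums to large Feynman diagram) simultaneously with $\lambda_{n}\to 1$ (weak coupling to strong coupling) requires a quantitative, coupling-uniform estimate on how the normalization in the Feynman graphon construction reacts to reweighting the coupling powers, that is, essentially equicontinuity of the family $m\mapsto[W_{Y_{m}(\,\cdot\,g)}]$ near $\lambda=1$. This is precisely the kind of control that the topological Hopf algebra structure of Theorem \ref{feynman-graphon-5} (compatibility of the Hopf operations with $d_{\rm cut}$) is meant to furnish, and making it fully rigorous — rather than appealing informally to the normalization — is the one genuinely nontrivial estimate in the argument.
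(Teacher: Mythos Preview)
Your proposal is correct and follows the same overall architecture as the paper: both invoke Theorem \ref{multi-scale-RG} to produce effective Dyson--Schwinger equations at rescaled couplings $\lambda_{n}g<g$ (the paper makes the specific choice $\lambda_{n}=\tfrac{n}{n+1}$), observe that their solutions are $\sum_{m}(\lambda_{n}g)^{m}X_{m}$, and then argue cut-distance convergence of the associated Feynman graphon classes to $[W_{X(g)}]$.

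The one substantive difference is in how the convergence step is handled. You set up a careful diagonal/interleaving argument with partial sums and the triangle inequality, and you correctly flag the equicontinuity issue (uniform-in-$m$ control as $\lambda_{n}\to1$) as the nontrivial estimate. The paper bypasses this entirely: it appeals directly to the Remark following Lemma \ref{feynman-graphon-1} to assert that for each $m$ the graphons $W_{(\lambda_{n}g)^{m}X_{m}}$ and $W_{g^{m}X_{m}}$ are \emph{weakly isomorphic} (not merely close), so that the cut-distance between corresponding terms is exactly zero, and then passes to the limit in $m$. It also remarks that replacing the Lebesgue measure by a Gaussian measure on the ground probability space makes this weak isomorphism hold already for fixed $n$. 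This short-circuits the equicontinuity concern you raised: since the termwise cut-distance is zero rather than merely small, no uniform estimate is needed. Your route is more analytically honest about what needs to be checked; the paper's route is shorter but leans more heavily on the weak-isomorphism mechanism for scalar rescalings.
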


\begin{proof}
Thanks to Theorem \ref{multi-scale-RG}, we can build the sequence $\{R^{{\rm
bare}}_{g,\frac{n}{n+1}g}{\rm DSE}\}_{n \ge 1}$ of Dyson--Schwinger
equations with respect to the rescaled bare coupling constants
$\frac{n}{n+1}g$ for each $n\ge1$ where the initial scale of the
equation DSE is at least 1.

For each $n$, $R^{{\rm bare}}_{g,\frac{n}{n+1}g}{\rm DSE}$ is an
equation in $\mathcal{S}^{\Phi,g}$ which has the unique solution
\begin{equation}
Y(\frac{n}{n+1}g) = \sum_{m=0}^{\infty} (\frac{n}{n+1}g)^{m}
X_{m}.
\end{equation}
The scales $\frac{n}{n+1}$ for each $n \ge 1$ provide an
increasing sequence of effective couplings derived from the bare
coupling constant $g$ where $\frac{n}{n+1}g < g$. Therefore for each
$n$, the solution $Y(\frac{n}{n+1}g)$ of the equation $R^{{\rm
bare}}_{g,\frac{n}{n+1}g}{\rm DSE}$ is actually a disjoint union of
multi-loop Feynman diagrams which can be handled by higher order
perturbation methods. It remains to show that the sequence $\{Y(\frac{n}{n+1}g)\}_{n \ge
1}$ is convergent to $X(g)$ with respect to the cut-distance
topology. Thanks to Lemma \ref{feynman-graphon-1}, for each $n \ge
1$, we can associate a unique unlabeled graphon class
$[W_{Y(\frac{n}{n+1}g)}]$ with respect to each large Feynman diagram
$Y(\frac{n}{n+1}g)$. Thanks to Definition \ref{feynman-graphon-2},
it is enough to show that the sequence
$\{[W_{Y(\frac{n}{n+1}g)}]\}_{n \ge 1}$ is convergent to the
unlabeled graphon class $[W_{X(g)}]$.

On the one hand, we can show that when $n$ goes to infinity, the labeled Feynman graphons $W_{(\frac{n}{n+1}g)^{m} X_{m}}$ and $W_{g^{m}X_{m}}$ are weakly isomorphic for each $m \ge 0$. On the other hand, if we replace the Lebesgue measure with the Gaussian measure in the ground probability measure of our Feynman graphon model, then we can show that for a fixed $n \ge 1$, the labeled Feynman grahons $W_{
(\frac{n}{n+1}g)^{m} X_{m}}$ and $W_{g^{m}X_{m}}$ are weakly isomorphic (or equivalent) for each $m \ge 0$.

Therefore when $m$ tends to infinity, labeled Feynman
graphons $W_{Y(\frac{n}{n+1}g)}$ and $W_{X(g)}$ are also weakly
isomorphic (or equivalent).
\end{proof}

\begin{cor} \label{optimization-1}
For any given strongly coupled Dyson--Schwinger DSE in
$\mathcal{S}^{\Phi,g}$, there exists a sequence of many-loop Feynman
diagrams such that their corresponding BPHZ counterterms and
renormalized values converge to the counterterm and the renormalized value generated by the renormalization of the equation DSE.
\end{cor}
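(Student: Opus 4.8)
The plan is to chain together the multi-scale Renormalization Group of Theorem \ref{multi-scale-RG}, the approximation result of Corollary \ref{rg-graphon-1}, and the Hopf--Birkhoff renormalization of Theorem \ref{bphz-dse-graphon-2}, and then to establish the continuity of the BPHZ functionals with respect to the cut-distance topology. First I would start from a strongly coupled equation ${\rm DSE}$ in $\mathcal{S}^{\Phi,g}$ with $g \ge 1$ and its unique solution $X(g) = \sum_{m \ge 0} g^{m}X_{m}$. By Corollary \ref{rg-graphon-1} the rescaled equations $R^{{\rm bare}}_{g,\frac{n}{n+1}g}{\rm DSE}$ have solutions $Y(\tfrac{n}{n+1}g) = \sum_{m \ge 0} (\tfrac{n}{n+1}g)^{m} X_{m}$, each a disjoint union of many-loop Feynman diagrams accessible to higher-order perturbation theory, and $[W_{Y(\frac{n}{n+1}g)}] \to [W_{X(g)}]$ in the cut-distance topology. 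Refining each $Y(\tfrac{n}{n+1}g)$ through Theorem \ref{feynman-graphon-4} as the cut-distance limit of its partial sums $Y_{n,k} = \sum_{m=1}^{k}(\tfrac{n}{n+1}g)^{m}X_{m}$, which are finite many-loop Feynman diagrams, I would then extract a diagonal sequence $\{Z_{n}\}_{n \ge 1}$ of finite many-loop Feynman diagrams with $[W_{Z_{n}}] \to [W_{X(g)}]$; the control on the diagonal index comes from the $n$-adic metric estimates already used in the proof of Theorem \ref{feynman-graphon-4}.

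Next I would fix Dimensional Regularization together with Minimal Subtraction and invoke Theorem \ref{bphz-dse-graphon-2}: the Hopf--Birkhoff factorization on the pro-unipotent Lie group $\mathbb{G}^{\Phi}_{{\rm graphon}}(\mathbb{C})$ of the Hopf algebra $\mathcal{S}^{\Phi}_{{\rm graphon}}$ produces, for the regularized Feynman rules character $\tilde{\phi}$, the counterterm functional $S^{\tilde{\phi}}_{R_{{\rm ms}}}$ (the negative Birkhoff component) and the renormalized-value functional $S^{\tilde{\phi}}_{R_{{\rm ms}}} * \tilde{\phi}$ (the positive component), both acting on Feynman graphons. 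On the finite graphs $Z_{n}$ these functionals reduce to the ordinary perturbative Bogoliubov recursion (\ref{bphz-1})--(\ref{bphz-3}), so $S^{\tilde{\phi}}_{R_{{\rm ms}}}([W_{Z_{n}}])$ and $S^{\tilde{\phi}}_{R_{{\rm ms}}} * \tilde{\phi}([W_{Z_{n}}])$ are the familiar BPHZ counterterm and renormalized value of a many-loop diagram, while on $X(g)$ they are, by Theorem \ref{bphz-dse-graphon-2}, the cut-distance limits of the corresponding quantities over the partial sums.

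The decisive step is to show that $S^{\tilde{\phi}}_{R_{{\rm ms}}}$ and $S^{\tilde{\phi}}_{R_{{\rm ms}}} * \tilde{\phi}$ are continuous for the cut-distance topology on $\mathcal{S}^{\Phi}_{{\rm graphon}}$. Here I would use that in Theorem \ref{feynman-graphon-5} the coproduct $\Delta_{{\rm graphon}}$ and the antipode $S_{{\rm graphon}}$ were shown to be compatible with the cut-distance topology through their recursive and linear structure; since $\tilde{\phi}$ is a linear homomorphism and the Rota--Baxter projector $R_{{\rm ms}}$ is linear and bounded, the Bogoliubov recursion $S^{\tilde{\phi}}_{R_{{\rm ms}}}(\Gamma) = -R_{{\rm ms}}(\tilde{\phi}(\Gamma)) - R_{{\rm ms}}(\sum_{\gamma \subset \Gamma} S^{\tilde{\phi}}_{R_{{\rm ms}}}(\gamma)\,\tilde{\phi}(\Gamma/\gamma))$ propagates cut-distance convergence degree by degree in the co-radical filtration. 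Combining this with $[W_{Z_{n}}] \to [W_{X(g)}]$ yields $S^{\tilde{\phi}}_{R_{{\rm ms}}}([W_{Z_{n}}]) \to S^{\tilde{\phi}}_{R_{{\rm ms}}}([W_{X(g)}])$ and $S^{\tilde{\phi}}_{R_{{\rm ms}}} * \tilde{\phi}([W_{Z_{n}}]) \to S^{\tilde{\phi}}_{R_{{\rm ms}}} * \tilde{\phi}([W_{X(g)}])$, which is exactly the asserted convergence of the BPHZ counterterms and renormalized values of the many-loop diagrams $Z_{n}$ to those generated by renormalizing the equation ${\rm DSE}$.

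The main obstacle I expect is the uniform control needed for the double limit: the diagonal extraction must preserve convergence of the renormalized \emph{quantities}, not merely of the underlying graphs, so one must bound how each step of the Bogoliubov recursion amplifies a cut-distance error, which forces an estimate on $S^{\tilde{\phi}}_{R_{{\rm ms}}}$ restricted to each co-radical degree together with an argument that the limit over the coupling rescaling $\tfrac{n}{n+1}g \to g$ and the limit over partial sums commute. A secondary technical point is that the counterterm of the full equation ${\rm DSE}$ is an infinite expansion in the log-scale parameter $L$ of the form (\ref{feynman-rules-3}), so convergence has to be read coefficient-wise in $L$ using the filtration of Theorem \ref{graphon-filtration-1}, and one must check this refinement is consistent with the cut-distance limit; once the continuity estimate above is in place, this reconciliation is routine.
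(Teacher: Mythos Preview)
Your proposal follows the same overall strategy as the paper's proof: invoke Corollary \ref{rg-graphon-1} to produce a cut-distance convergent approximating sequence, then feed that sequence through the Hopf--Birkhoff factorization of Theorem \ref{bphz-dse-graphon-2} to obtain convergent BPHZ counterterms and renormalized values. The paper's argument is in fact much terser than yours: it simply takes the solutions $\Gamma_{n}$ of the weakly coupled equations from Corollary \ref{rg-graphon-1} as the ``many-loop Feynman diagrams'' directly, applies Theorem \ref{bphz-dse-graphon-2} to each $\Gamma_{n}$, and asserts the convergence of $S^{\tilde{\phi}}_{R_{{\rm ms}}}([W_{\Gamma_{n}}])$ and $S^{\tilde{\phi}}_{R_{{\rm ms}}}*\tilde{\phi}([W_{\Gamma_{n}}])$ without further justification.

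Your version differs in two respects. First, you insert an additional diagonal extraction over the partial sums $Y_{n,k}$ to land on genuinely \emph{finite} many-loop diagrams $Z_{n}$; the paper is content with the $\Gamma_{n}$ themselves, which are still infinite formal expansions but at couplings $\tfrac{n}{n+1}g<g$ deemed perturbatively accessible. Second, you make explicit the continuity of $S^{\tilde{\phi}}_{R_{{\rm ms}}}$ and $S^{\tilde{\phi}}_{R_{{\rm ms}}}*\tilde{\phi}$ in the cut-distance topology via the recursive structure of $\Delta_{{\rm graphon}}$ and $S_{{\rm graphon}}$ from Theorem \ref{feynman-graphon-5}, whereas the paper's proof simply states the convergence as a consequence of applying Theorem \ref{bphz-dse-graphon-2}. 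Your extra care about the double limit and the coefficient-wise reading in $L$ via Theorem \ref{graphon-filtration-1} is not present in the paper's argument; what you flag as potential obstacles are precisely the details the paper's short proof leaves implicit.
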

\begin{proof}
Thanks to Corollary \ref{rg-graphon-1}, there exists a sequence of Dyson--Schwinger equations under weaker rescaled bare coupling constants $\lambda_{n}g$ and running couplings $\Lambda_{\lambda_{n}}$ in
$\mathcal{S}^{\Phi,g}$ with the corresponding sequence $\{\Gamma_{n}\}_{n \ge 1}$ of their solutions which is cut-distance convergent to the unique solution $X_{{\rm DSE}}(g)$. Now apply Theorem
\ref{bphz-dse-graphon-2} to each large Feynman diagram $\Gamma_{n}$ to build
sequences
\begin{equation}
\{S^{\tilde{\phi}}_{R_{{\rm ms}}}([W_{\Gamma_{n}}])\}_{n
\ge 1}
\end{equation}
and
\begin{equation}
\{S^{\tilde{\phi}}_{R_{{\rm ms}}}
* \tilde{\phi}([W_{\Gamma_{n}}])\}_{n \ge 1}
\end{equation}
which are cut-distance convergent to
$S^{\tilde{\phi}}_{R_{{\rm ms}}}([W_{X_{{\rm DSE}}(g)}])$ and
$S^{\tilde{\phi}}_{R_{{\rm ms}}}
* \tilde{\phi}([W_{X_{{\rm DSE}}(g)}])$, respectively
\end{proof}

Thanks to these investigations, the multi-scale
Renormalization Group defined by Theorem \ref{multi-scale-RG} is capable to optimize the computational procedures in dealing
with non-perturbative parameters. We consider this topic in the
next section where our main effort is to determine an order of complexity on Dyson--Schwinger equations in a given strongly coupled gauge field theory.

\subsection{\textsl{Kolmogorov complexity of Dyson--Schwinger equations}}

In this part, we plan to build a new concept of complexity on the space $\mathcal{S}^{\Phi,g}$ of Dyson--Schwinger equations of a given strongly coupled gauge field theory $\Phi$ in terms of our new multi-scale Renormalization Group (i.e. Theorem \ref{multi-scale-RG}). Then we use the Manin renormalization Hopf algebra of the Halting problem for non-perturbative Feynman rules characters on Feynman graphons to formulate a new way of computing non-perturbative parameters in the context of the Halting problem for partial recursive functions on the new constructive world $\mathcal{S}^{\Phi,g}$. The required structural numbering for this new constructive world can be determined via our multi-scale Renormalization Group (i.e. Theorem \ref{multi-scale-RG}) and the density of rational numbers in real numbers.

We define the Kolmogorov complexity of each Dyson--Schwinger
equation DSE in $\mathcal{S}^{\Phi,g}$ in terms of changing the
scale of the bare coupling constant where exchanging information
among equations at different scales have been encoded by our new
non-perturbative multi-scale Renormalization Group. We can generate partial recursive (or semi-computable) functions in terms of increasing sequences of rational numbers which provide different rescaling of the bare coupling constant and the momentum parameter. For example, thanks to Corollary \ref{rg-graphon-1}, define
\begin{equation} \label{semi-comp-1}
u^{g}: \mathbb{Z}_{+} \times \mathcal{S}^{\Phi,g}  \longrightarrow
\mathcal{S}^{\Phi,g}, \ \ (n, {\rm DSE}(g)) \longmapsto {\rm
DSE}(\frac{n}{n+1}g)
\end{equation}
as a semi-computable function. It means that there exists an
algorithm which encodes the application of $u^{g}$ on
Dyson--Schwinger equations to generate effective equations under different running couplings. The equation ${\rm DSE}(\frac{n}{n+1}g)$ is the effective Dyson--Schwinger equation $R^{{\rm bare}}_{\frac{n}{n+1}g \ g}$ generated by changing the scale of the bare coupling constant in terms of the sequence $\{\frac{n}{n+1}\}_{n \ge 1}$.

\begin{defn}  \label{complexity-dse-2}
The Kolmogorov complexity of an equation ${\rm DSE}(\lambda g)$ at
the scale $\lambda g$ with respect to the function $u^{g}$ (\ref{semi-comp-1}) is
determined by the relation
\begin{equation}
K_{u^{g}}({\rm DSE}(\lambda g)):= {\rm Min} \{ n \in \mathbb{Z}_{+}:
\ \ u^{g}(n,{\rm DSE}'(g)) \subseteq {\rm DSE}(\lambda g) \}
\end{equation}
such that the inclusion means that, up to the weakly isomorphic relation, the large Feynman diagram $X_{{\rm
DSE}'(\frac{n}{n+1}g)}$ can be embedded as a subgraph into the large Feynman diagram
$X_{{\rm DSE}(\lambda g)}$.
\end{defn}

\begin{lem} \label{complexity-dse-1}
There exists the Kolmogorov total order on $\mathcal{S}^{\Phi,g}$.
\end{lem}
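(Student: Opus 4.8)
The plan is to transport the canonical well-ordering of $\mathbb{Z}_{+} \cup \{\infty\}$ onto $\mathcal{S}^{\Phi,g}$ along the complexity functional $K_{u^{g}}$ of Definition \ref{complexity-dse-2}, and then to refine the resulting total preorder into a genuine (antisymmetric) total order by breaking ties with a family of Feynman graphon invariants. First I would verify that $K_{u^{g}}$ is a well-defined map $\mathcal{S}^{\Phi,g} \longrightarrow \mathbb{Z}_{+} \cup \{\infty\}$: since the embedding condition $u^{g}(n,{\rm DSE}'(g)) \subseteq {\rm DSE}(\lambda g)$ in Definition \ref{complexity-dse-2} is imposed only up to the weakly isomorphic relation on the Feynman graphon models (Lemma \ref{feynman-graphon-1}), the set over which the minimum is taken depends only on the unlabeled class $[W_{X_{{\rm DSE}(\lambda g)}}]$, so $K_{u^{g}}$ factors through the quotient $\mathcal{S}^{\Phi,g}/\!\approx$ on which the completed space is modeled. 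I would also record, using Corollary \ref{rg-graphon-1}, that the sequence $\{u^{g}(n,{\rm DSE}'(g))\}_{n \ge 1} = \{{\rm DSE}'(\tfrac{n}{n+1}g)\}_{n \ge 1}$ of rescaled reference equations is cut-distance convergent to ${\rm DSE}'(g)$; this makes the defining set nonempty (in fact equal to $\mathbb{Z}_{+}$) for the reference equation itself and for every equation into which $X_{{\rm DSE}'(g)}$ weakly embeds, while the remaining ``Kolmogorov-random'' equations receive the value $\infty$, which will serve as the top element.

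Next I would set ${\rm DSE}_{1} \preceq_{K} {\rm DSE}_{2}$ if and only if $K_{u^{g}}({\rm DSE}_{1}) \le K_{u^{g}}({\rm DSE}_{2})$. Reflexivity, transitivity and totality are inherited immediately from the well-order $(\mathbb{Z}_{+} \cup \{\infty\}, \le)$, so $\preceq_{K}$ is already a total preorder; the only missing order axiom is antisymmetry, because two distinct $\approx$-classes may share a complexity level $n$. To repair this I would break ties inside each level set $K_{u^{g}}^{-1}(n)$ lexicographically, first by the canonical filtration rank $n_{[W_{\Gamma}]}$ on Feynman graphons furnished by Theorem \ref{graphon-filtration-1} (equivalently by the grading of $\mathcal{G}^{\Phi}$ in (\ref{filter-1})), then by the loop number (coradical degree), and finally by the cut-distance $d_{{\rm cut}}([W_{X_{{\rm DSE}}}],[W_{\mathbb{I}}])$ to the empty graphon together with a fixed choice function on the remaining at most countable fibres. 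Each ingredient is an $\approx$-invariant with values in a totally ordered set, so the lexicographic order on the resulting tuple is reflexive, transitive, total, and --- once the invariants are shown to separate classes --- antisymmetric, hence a total order; this is the Kolmogorov total order on $\mathcal{S}^{\Phi,g}$.

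The hard part will be precisely antisymmetry, i.e. proving that the chosen family of graphon invariants jointly separates the $\approx$-classes on which $K_{u^{g}}$ is constant. Here I would combine the measure-theoretic rigidity of unlabeled Feynman graphons --- two graphons agreeing on all finite subgraph homomorphism densities are weakly isomorphic --- with the fact that the filtration of Theorem \ref{graphon-filtration-1} is exhaustive, so that no two inequivalent large Feynman diagrams of the same complexity can survive all the tie-breakers. A secondary technical point is to confirm that $K_{u^{g}}$ is genuinely well defined at weak couplings $\lambda g < 1$, where $u^{g}(n,{\rm DSE}'(g)) \subseteq {\rm DSE}(\lambda g)$ must be read as an embedding of the associated large Feynman diagrams after the appropriate rescaling; for this I would lean on the Gaussian-measure variant of the Feynman graphon model used in the proof of Corollary \ref{rg-graphon-1}, which forces the relevant rescaled partial sums to be weakly isomorphic to honest subgraphs of $X_{{\rm DSE}(\lambda g)}$. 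Finally I would check compatibility of the tie-breaking refinement with the action of the multi-scale Renormalization Group of Theorem \ref{multi-scale-RG}, so that the order behaves monotonically under the rescalings $R^{{\rm bare}}_{-,-}$ and $R^{{\rm running}}_{-,-}$.
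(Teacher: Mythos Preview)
Your construction produces \emph{a} total order on $\mathcal{S}^{\Phi,g}$, but it is not the object the paper actually defines and uses. In the paper (following Manin, \cite{manin-1,manin-3}) the ``Kolmogorov order'' is a specific \emph{bijection} $\mathbf{K}_{u^{g}}:\mathcal{S}^{\Phi,g}\to\mathbb{Z}_{+}$: one fixes in advance a structural enumeration of the constructive world $\mathcal{S}^{\Phi,g}$, then re-lists all equations in nondecreasing order of $K_{u^{g}}$, breaking ties by the pre-fixed enumeration. The total order is simply the pullback of $<$ on $\mathbb{Z}_{+}$ along this bijection. The essential content of the lemma, and the only thing used afterwards (in (\ref{complexity-dse-8}), (\ref{complexity-19}), (\ref{complexity-dse-18}) and in Theorem \ref{halting-graphon-1}), is the two-sided estimate
\[
c_{0}\,K_{u^{g}}({\rm DSE}(\lambda g)) \le \mathbf{K}_{u^{g}}({\rm DSE}(\lambda g)) \le K_{u^{g}}({\rm DSE}(\lambda g)),
\]
which comes from the standard counting argument: at most $n$ objects have complexity $\le n$, so the rank $\mathbf{K}_{u^{g}}$ of an object is bounded above by its complexity, and conversely the $n$-th object in the list has complexity at least a constant times $n$.

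Your lexicographic refinement by filtration rank, loop number, cut-distance and a choice function never establishes this inequality, and it need not yield a bijection with $\mathbb{Z}_{+}$ at all (the order type could be larger than $\omega$, and your top element $\infty$ has no analogue in the paper's setup). The elaborate programme of showing that graphon invariants separate $\approx$-classes, and of checking monotonicity under $R^{{\rm bare}}_{-,-}$ and $R^{{\rm running}}_{-,-}$, is therefore solving a harder problem than the one posed while missing the one output the subsequent sections require. What you should do instead is invoke a fixed enumeration of $\mathcal{S}^{\Phi,g}$ as a constructive world (this is how ties are broken in Manin's framework and is what the paper has in mind), define $\mathbf{K}_{u^{g}}$ as the resulting rank function, and then prove (\ref{complexity-optimal-1}) by the counting argument above.
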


\begin{proof}
The Kolmogorov order of $\mathcal{S}^{\Phi,g}$ is defined as a
bijection $\mathbf{K}_{u^{g}}: \mathcal{S}^{\Phi,g} \rightarrow
\mathbb{Z}_{+}$ which arranges all Dyson--Schwinger equations of the physical theory $\Phi$ in the increasing order of their complexities
$\mathbf{K}_{u^{g}}({\rm DSE}(\lambda g))$. Define
\begin{equation}
{\rm DSE}_{1}(\lambda_{1} g) \preceq {\rm DSE}_{2}(\lambda_{2} g)
\Longleftrightarrow K_{u^{g}}({\rm DSE}_{1}(\lambda_{1} g)) <
K_{u^{g}}({\rm DSE}_{2}(\lambda_{2} g)).
\end{equation}

It is possible to determine some constants $c_{0}>0$ such that for all
Dyson--Schwinger equations such as ${\rm DSE}(\lambda g)$,
\begin{equation} \label{complexity-optimal-1}
c_{0} K_{u^{g}}({\rm DSE}(\lambda g)) \le \mathbf{K}_{u^{g}}({\rm
DSE}(\lambda g)) \le K_{u^{g}}({\rm DSE}(\lambda g)).
\end{equation}
\end{proof}

This is actually the most simple example of this class of semi-computable functions and we can generate other examples in terms of any arbitrary increasing sequence of rational numbers, which provides some rescaling for the bare coupling constant $g$ and the running couplings, to define more general semi-computable functions in terms of our new multi-scale Renormalization Group. We have addressed a more general setting for the structure of the Kolmogorov complexity in other research work and here we only focus on the partial recursive (or semi-computable) function (\ref{semi-comp-1}).

Thanks to Definition \ref{complexity-dse-2} and Definition
\ref{complexity-dse-1}, it is now possible to consider
$\mathcal{S}^{\Phi,g}$ as a poset such that for any given partial
recursive map $\sigma:\mathcal{S}^{\Phi,g} \rightarrow
\mathcal{S}^{\Phi,g}$ which generates a permutation, we can define a
new map
\begin{equation} \label{complexity-dse-8}
\sigma_{\mathbf{K}_{u^{g}}}:= \mathbf{K}_{u^{g}} \circ \sigma \circ
\mathbf{K}_{u^{g}}^{-1}
\end{equation}
where it provides a permutation of the subset
\begin{equation} \label{complexity-dse-7}
{\rm D}(\sigma_{\mathbf{K}_{u^{g}}}):= \mathbf{K}_{u^{g}}({\rm
Dom}(\sigma)) \subseteq \mathbb{Z}_{+}.
\end{equation}
Consider the equation ${\rm DSE}(\lambda g) \in {\rm Dom}(\sigma)$ such
that its corresponding orbit $\sigma^{\mathbb{Z}}({\rm DSE}(\lambda
g))$ is infinite. Set
\begin{equation} \label{complexity-19}
\mathbf{K}_{u^{g}} ({\rm DSE}(\lambda g)):= k^{\lambda}_{{\rm DSE}}
\end{equation}
such that for each $n >0$, we have
\begin{equation}
\sigma^{n}_{\mathbf{K}_{u^{g}}}(k^{\lambda}_{{\rm DSE}}) =
\mathbf{K}_{u^{g}}(\sigma^{n}({\rm DSE}(\lambda g))) \le c
\mathbf{K}_{u^{g}}(n).
\end{equation}
In \cite{manin-1} it is discussed that for any partial recursive
function $f:\mathbb{Z}^{+}\rightarrow \mathbb{Z}^{+}$ and $x \in
{\rm Dom}(f)$ we have
\begin{equation} \label{complex-1}
K(f(x)) \le c_{f}K(x) \le c_{f}^{'}x.
\end{equation}
We want to apply the inequality (\ref{complex-1}) for the Kolmogorov
complexity of Dyson--Schwinger equations defined by Definition
\ref{complexity-dse-2} and bijection $\mathbf{K}_{u^{g}}$. Define
\begin{equation}
Y:=\{\sigma^{n}({\rm DSE}(\lambda g)): \ \ n \in \mathbb{Z}_{+}\}
\end{equation}
as a recursively enumerable subset of $\mathcal{S}^{\Phi,g}$ which
plays the role of the domain for a partial recursive function
$A:\mathcal{S}^{\Phi,g} \rightarrow \mathbb{Z}_{+}$ given by
\begin{equation}
A({\rm DSE}(\tau g)) = n, \ \ \ {\rm if} \ \sigma^{n}({\rm
DSE}(\lambda g)) \approxeq {\rm DSE}(\tau g).
\end{equation}
We then have
\begin{equation}
\mathbf{K}^{-1}_{u^{g}}(n) = \mathbf{K}^{-1}_{u^{g}}(A({\rm DSE}(\tau
g))) \le c' \mathbf{K}_{u^{g}}({\rm DSE}(\tau g)) =
c'\mathbf{K}_{u^{g}}(\sigma^{n}({\rm DSE}(\lambda g)))
\end{equation}
such that as the consequence, it is possible to determine some upper and lower
boundaries for the permutation $\sigma_{\mathbf{K}_{u^{g}}}$ such as
\begin{equation} \label{complexity-dse-18}
c_{1} \mathbf{K}^{-1}_{u^{g}}(n) \le
\sigma^{n}_{\mathbf{K}_{u^{g}}}(k^{\lambda}_{{\rm DSE}}) \le c_{2}
\mathbf{K}^{-1}_{u^{g}}(n).
\end{equation}

\begin{lem} \label{halting-1}
Consider $\mathcal{S}^{\Phi,g}$ as the constructive world and equip
this collection with a total recursive structure of additive group
without torsion with the zero element ${\bf 0}$. The Halting problem
for any partial recursive function $f: \mathbb{Z}_{+} \times
\mathcal{S}^{\Phi,g} \rightarrow \mathcal{S}^{\Phi,g}$ can be
described in the language of fixed points of some permutations on $\mathbb{Z}_{+} \times
\mathcal{S}^{\Phi,g}$ derived from $f$.
\end{lem}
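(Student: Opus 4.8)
The plan is to realize $\mathcal{S}^{\Phi,g}$ as a constructive world in the sense of Manin and then to import his encoding of the Halting problem through fixed points, using the Kolmogorov order already at our disposal. First I would take the bijection $\mathbf{K}_{u^{g}}: \mathcal{S}^{\Phi,g} \rightarrow \mathbb{Z}_{+}$ from Lemma \ref{complexity-dse-1} as a structural numbering. Composing it with a fixed total recursive bijection $\mathbb{Z}_{+} \leftrightarrow \mathbb{Z}$, I transport the additive group law of $\mathbb{Z}$ to $\mathcal{S}^{\Phi,g}$, obtaining a total recursive torsion-free abelian group operation $+$, recursive inversion, and neutral element ${\bf 0}$ identified with $0 \in \mathbb{Z}$. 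The decisive structural fact for later is that, in a torsion-free abelian group, for any $a \neq {\bf 0}$ the $\mathbb{Z}$-orbit $\{x+na\}_{n \in \mathbb{Z}}$ of every $x$ is infinite and repetition-free, so a permutation that shifts along such orbits is automatically fixed-point-free; a fixed point, equivalently a finite cycle through ${\bf 0}$, can arise only if an external "stop" instruction forces the orbit to close.

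Next I would transport the given partial recursive $f: \mathbb{Z}_{+} \times \mathcal{S}^{\Phi,g} \rightarrow \mathcal{S}^{\Phi,g}$ through this numbering to a partial recursive $\widetilde{f}$ on $\mathbb{Z}_{+} \times \mathbb{Z}_{+}$; its graph, hence its domain ${\rm Dom}(f)$, is recursively enumerable. Following Manin's recipe (\cite{manin-1, manin-2, manin-3}) combined with the Lawvere--Yanofsky diagonal formalism (\cite{yanofsky-1, yanofsky-2, delaney-marcolli-1}), I would dovetail the computation of $\widetilde{f}$ and use it to define, uniformly in $(n,x)$, a total recursive permutation $\sigma_{(n,x)}$ of $\mathbb{Z}_{+} \times \mathcal{S}^{\Phi,g}$: along the orbit of a designated base point $(0,{\bf 0})$ the permutation acts by a shift $(k,y) \mapsto (k+1, y + a_{k})$ with $a_{k} \neq {\bf 0}$ as long as the $k$-th step of the computation of $\widetilde{f}(n,x)$ has not halted, and it collapses the orbit into a finite cycle returning to ${\bf 0}$ — thereby exhibiting a fixed point — at the first step, if any, where the computation halts. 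By construction $\sigma_{(n,x)}$ is total, recursive, and uniform in $(n,x)$.

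The core step is then to prove the equivalence
\begin{equation}
(n,x) \in {\rm Dom}(f) \Longleftrightarrow \sigma_{(n,x)} \text{ has a fixed point},
\end{equation}
equivalently, iff the $\sigma_{(n,x)}$-orbit of $(0,{\bf 0})$ is finite. This is exactly where torsion-freeness enters: if the computation never halts the orbit is an infinite $\mathbb{Z}$-orbit with no coincidences and $\sigma_{(n,x)}$ is fixed-point-free, whereas if it halts the orbit is forced through ${\bf 0}$ and a fixed point appears. Since the family $\{\sigma_{(n,x)}\}$ is derived from $f$ and lives on $\mathbb{Z}_{+} \times \mathcal{S}^{\Phi,g}$, this expresses the Halting problem of $f$ entirely in the language of fixed points of permutations on $\mathbb{Z}_{+} \times \mathcal{S}^{\Phi,g}$, as asserted; the undecidability of that Halting problem reappears, through the same diagonal argument, as the impossibility of a total recursive predicate detecting these fixed points.

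The hard part will be making the family $\{\sigma_{(n,x)}\}$ simultaneously (i) total and recursive, (ii) uniform in $(n,x)$, and (iii) faithful, in the sense that its fixed-point behaviour tracks the recursively-enumerable-but-not-recursive set ${\rm Dom}(f)$ with no spurious cycles; this requires careful bookkeeping of the dovetailing and a choice of the shift increments $a_{k}$ ensuring that distinct unfinished orbits never collide, together with a verification that the group reduction back to ${\bf 0}$ is carried out effectively. A secondary technical point is checking that the transport of structure along $\mathbf{K}_{u^{g}}$ is compatible with the partial recursive operations already available on $\mathcal{S}^{\Phi,g}$ through the multi-scale Renormalization Group (Theorem \ref{multi-scale-RG}) and the semi-computable maps such as (\ref{semi-comp-1}), so that the constructive-world hypotheses genuinely hold for $\mathcal{S}^{\Phi,g}$.
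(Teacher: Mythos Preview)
Your route differs from the paper's and, as written, has two concrete gaps in the core equivalence $(n,x)\in{\rm Dom}(f)\Longleftrightarrow\sigma_{(n,x)}$ has a fixed point. When the computation halts at step $k>0$ you close the orbit of $(0,{\bf 0})$ into a cycle of length $k+1$; that is a finite orbit, not a fixed point, so the forward direction fails. Conversely, when the computation never halts you specify $\sigma_{(n,x)}$ only on one infinite orbit and must extend it to the rest of $\mathbb{Z}_{+}\times\mathcal{S}^{\Phi,g}$; the natural extension by the identity produces infinitely many fixed points, so the backward direction fails too. Both issues are repairable (replace ``fixed point'' by ``finite orbit through a marked base point'', or arrange a global shift off the distinguished orbit), but the bookkeeping you flag as ``the hard part'' is not just careful dovetailing --- the very dichotomy you rely on is not yet in place.

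The paper avoids all of this by following Manin's construction directly and with the \emph{opposite} polarity. One extends $f$ by zero to a total $g_{f}$ (so $g_{f}(n,X)={\bf 0}$ when $(n,X)\notin{\rm Dom}(f)$) and defines a \emph{single} map
\[
\tau_{f}\bigl(n,(X,Y)\bigr):=\bigl(g_{f}(n,{\bf 0}),\,X+g_{f}((n,Y)),\,Y\bigr)
\]
on $\mathbb{Z}_{+}\times(\mathcal{S}^{\Phi,g}\sqcup\{{\bf 0}\})^{2}$. Torsion-freeness now gives the dichotomy for free: if $(n,Y)\in{\rm Dom}(f)$ the middle coordinate shifts by the nonzero element $g_{f}((n,Y))$ and the orbit is infinite; otherwise the shift vanishes and the point is fixed. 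Hence every finite orbit of $\tau_{f}$ is an honest fixed point, and the fixed-point set is exactly the complement of ${\rm Dom}(f)$ in the last two coordinates. Restricting $\tau_{f}$ to where it moves points yields a partial recursive permutation $\sigma_{f}$ with ${\rm Dom}(\sigma_{f})=(\mathcal{S}^{\Phi,g}\sqcup\{{\bf 0}\})\times{\rm Dom}(f)$, and the Halting problem for $f$ becomes the problem of recognising the fixed points of $\tau_{f}$. No input-indexed family, no step-by-step simulation, and the ``finite orbit $=$ fixed point'' identity holds by construction rather than by decree.
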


\begin{proof}
Feynman graphon models of solutions of Dyson--Schwinger equations enable us to define a total recursive structure of the additive group without torsion on the constructive world $\mathcal{S}^{\Phi,g} \sqcup \{{\bf 0}\}$ such that ${\bf 0}$ is the zero element. Now extend $f$ to a new function
\begin{equation}
g_{f}: \mathbb{Z}_{+} \times
(\mathcal{S}^{\Phi,g} \sqcup \{{\bf 0}\}) \longrightarrow
(\mathcal{S}^{\Phi,g} \sqcup \{{\bf 0}\})
\end{equation}
such that
\begin{equation}
g_{f}((n,X)):= {\bf 0}, \ \ {\rm if} \ \ (n,X) \not\in {\rm Dom}(f).
\end{equation}
Now define a new permutation
$$\tau_{f}: \mathbb{Z}_{+} \times (\mathcal{S}^{\Phi,g} \sqcup
\{{\bf 0}\}) \times (\mathcal{S}^{\Phi,g} \sqcup \{{\bf 0}\})
\longrightarrow \mathbb{Z}_{+} \times (\mathcal{S}^{\Phi,g} \sqcup \{{\bf 0}\}) \times
(\mathcal{S}^{\Phi,g} \sqcup \{{\bf 0}\}),$$
\begin{equation}
\tau_{f}(n,(X,Y)):= (g_{f}(n,{\bf 0}),X+g_{f}((n,Y)),Y).
\end{equation}
We can check that finite orbits of $\tau_{f}$ are fixed points of the permutation. It
leads us to build a new partial recursive permutation $\sigma_{f}$
with the domain
\begin{equation}
{\rm Dom}(\sigma_{f}):= (\mathcal{S}^{\Phi,g} \sqcup \{{\bf 0}\})
\times {\rm Dom}(f).
\end{equation}
Thanks to \cite{manin-1, manin-3} and the definition of $g_{f}$, we can show that the complement
to ${\rm Dom}(\sigma_{f})$ in the constructive world
$(\mathcal{S}^{\Phi,g} \sqcup \{{\bf 0}\}) \times
(\mathcal{S}^{\Phi,g} \sqcup \{{\bf 0}\})$ covers the fixed points
of $\tau_{f}$. This process reduces the Halting problem for $f$ to
the determination of the fixed points of $\tau_{f}$.
\end{proof}

For the constructive world $\mathcal{S}^{\Phi,g}$, the map $u^{g}$
given by (\ref{semi-comp-1}), the map $\sigma_{\mathbf{K}_{u^{g}}}$
given by (\ref{complexity-dse-8}) and (\ref{complexity-dse-7}), the
integer value $k^{\lambda}_{{\rm DSE}}$ given by
(\ref{complexity-19}), define
\begin{equation} \label{complexity-dse-4}
\Psi(k^{\lambda}_{{\rm DSE}}, \sigma,u^{g},z):=
\frac{1}{(k^{\lambda}_{{\rm DSE}})^{2}} + \sum_{n \ge 1}
\frac{z^{\mathbf{K}_{u^{g}}({\rm DSE}(\lambda \frac{n}{n+1}
g))}}{(\sigma^{n}_{\mathbf{K}_{u^{g}}}(k^{\lambda}_{{\rm
DSE}}))^{2}}.
\end{equation}

\begin{cor} \label{complexity-dse-5}
- If the $\sigma$-orbit of the equation ${\rm DSE}(\lambda g) \in
{\rm Dom}(\sigma)$ is finite, then $\Psi(k^{\lambda}_{{\rm DSE}},
\sigma,u^{g},z)$ is a rational function in the complex variable $z$.
All poles of this formal series, which are of the first order, live
at the roots of unity.

- If the $\sigma$-orbit of the equation ${\rm DSE}(\lambda g) \in
{\rm Dom}(\sigma)$ is infinite, then $\Psi(k^{\lambda}_{{\rm DSE}},
\sigma,u^{g},z)$ is the Taylor series of an analytic function on the
region $|z|<1$ which is continuous at the boundary of this region.
\end{cor}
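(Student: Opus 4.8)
The plan is to transport Manin's analysis of the generating function attached to a computation path (\cite{manin-1, manin-3}) into the constructive world $\mathcal{S}^{\Phi,g}$ equipped with the structural numbering $\mathbf{K}_{u^{g}}$ of Lemma~\ref{complexity-dse-1} and the partial recursive permutation $\sigma$, together with its conjugate $\sigma_{\mathbf{K}_{u^{g}}}$ on a subset of $\mathbb{Z}_{+}$ (see (\ref{complexity-dse-8})). First I would record the two elementary facts already available: the two-sided estimate (\ref{complexity-dse-18}) controlling the integers $\sigma^{n}_{\mathbf{K}_{u^{g}}}(k^{\lambda}_{{\rm DSE}}) = \mathbf{K}_{u^{g}}(\sigma^{n}({\rm DSE}(\lambda g)))$, and the observation that the exponents $b_{n}:=\mathbf{K}_{u^{g}}({\rm DSE}(\lambda \tfrac{n}{n+1}g))$ are pairwise distinct positive integers, since $n\mapsto\tfrac{n}{n+1}$ is strictly increasing (so the rescaled equations are distinct) and $\mathbf{K}_{u^{g}}$ is a bijection. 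Hence $\Psi(k^{\lambda}_{{\rm DSE}},\sigma,u^{g},z)$ is a genuine power series in $z$ with nonnegative integer exponents, and its analytic behaviour is dictated entirely by the orbit structure of $\sigma$ through ${\rm DSE}(\lambda g)$, which I would then split into the two announced cases.

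For the infinite-orbit case: if the $\sigma$-orbit of ${\rm DSE}(\lambda g)$ is infinite, then $\sigma$ is injective along it (a coincidence $\sigma^{n}(x)=\sigma^{m}(x)$ with $n>m$ would make the orbit periodic, hence finite), so the integers $\sigma^{n}_{\mathbf{K}_{u^{g}}}(k^{\lambda}_{{\rm DSE}})$, $n\ge1$, are pairwise distinct. Therefore $\sum_{n\ge1}(\sigma^{n}_{\mathbf{K}_{u^{g}}}(k^{\lambda}_{{\rm DSE}}))^{-2}\le\sum_{m\ge1}m^{-2}=\zeta(2)<\infty$, and since $|z^{b_{n}}|\le1$ on $|z|\le1$, the Weierstrass $M$-test gives absolute and uniform convergence of the defining series on the closed disk $|z|\le1$. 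So $\Psi$ defines a function continuous on $|z|\le1$ and, being a locally uniform limit of polynomials, holomorphic on $|z|<1$; this is the second alternative.

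For the finite-orbit case: a finite orbit of the permutation $\sigma$ is necessarily a single cycle of some length $p$ (if $\sigma$ should fail to be defined along the would-be orbit the series degenerates to a finite sum, hence a polynomial, trivially rational with no poles). In the cyclic case the sequence $n\mapsto\sigma^{n}_{\mathbf{K}_{u^{g}}}(k^{\lambda}_{{\rm DSE}})$ is periodic modulo $p$, taking finitely many values $m_{0},\dots,m_{p-1}$, so splitting the sum over residue classes reduces the problem to showing that each block $\sum_{k}z^{b_{j+kp}}$ is a rational function of $z$. The crucial input, borrowed from \cite{manin-1, manin-3} and made available by the total recursive additive-group structure on $\mathcal{S}^{\Phi,g}$ produced in Lemma~\ref{halting-1} together with (\ref{complexity-dse-18}) and the inequality (\ref{complex-1}), is that along the cycle the exponents $b_{n}$ grow in an eventual arithmetic progression, so each block equals a polynomial plus a term $z^{a_{j}}/(1-z^{d_{j}})$. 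Summing the finitely many blocks yields a rational function whose poles are simple and located among the roots of unity $z^{d_{j}}=1$, which is the first alternative.

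The main obstacle is exactly this last point: establishing the eventual arithmetic-progression behaviour of the structural numbers $b_{n}=\mathbf{K}_{u^{g}}({\rm DSE}(\lambda \tfrac{n}{n+1}g))$ along a cycle of $\sigma$, since this is what forces the geometric-series shape $z^{a_{j}}/(1-z^{d_{j}})$ (hence simple poles at roots of unity) rather than merely a rational function with uncontrolled poles. Concretely, one must replay Manin's recursive bookkeeping inside $\mathcal{S}^{\Phi,g}$ and check that $\mathbf{K}_{u^{g}}$ interacts with the rescaling maps $R^{{\rm bare}}_{\frac{n}{n+1}g\,g}$ of Theorem~\ref{multi-scale-RG} compatibly with the additive-group structure of Lemma~\ref{halting-1}, so that (\ref{complex-1}) applies verbatim to the orbit counting function. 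Once that compatibility is in place, the convergence bookkeeping and the identification of the limit function in both regimes are routine.
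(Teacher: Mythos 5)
Your overall strategy --- transporting Manin's analysis of the orbit generating function from $\mathbb{Z}_{+}$ to the constructive world $\mathcal{S}^{\Phi,g}$ --- is exactly what the paper does; its entire proof of this corollary is the single sentence that the statement is a direct result of the discussions in \cite{manin-1, manin-3} after replacing $\mathbb{Z}_{+}$ with $\mathcal{S}^{\Phi,g}$. Your treatment of the infinite-orbit case is complete and in fact more explicit than the paper's: injectivity of $\sigma$ along an infinite orbit makes the integers $\sigma^{n}_{\mathbf{K}_{u^{g}}}(k^{\lambda}_{{\rm DSE}})$ pairwise distinct, comparison with $\sum_{m\ge 1} m^{-2}$ gives a convergent majorant, and the Weierstrass $M$-test yields uniform convergence on the closed disk, hence holomorphy on $|z|<1$ and continuity up to the boundary. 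That part can stand as written.

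The finite-orbit case, however, contains the genuine gap that you yourself flag, and nothing in the paper closes it. In Manin's original function the $n$-th term carries the exponent $z^{n}$, so summing over a residue class modulo the period $p$ of the cycle automatically produces geometric series of the shape $z^{j}/(1-z^{p})$, whence rationality and simple poles at roots of unity. In the modification (\ref{complexity-dse-4}) the exponent is $b_{n}=\mathbf{K}_{u^{g}}({\rm DSE}(\lambda\frac{n}{n+1}g))$, a value of the Kolmogorov ordering of the rescaled family, and nothing in Definition \ref{complexity-dse-2}, Lemma \ref{complexity-dse-1}, or the estimates (\ref{complexity-optimal-1}) and (\ref{complexity-dse-18}) forces $k\mapsto b_{j+kp}$ to be eventually an arithmetic progression; those estimates control the denominators $\sigma^{n}_{\mathbf{K}_{u^{g}}}(k^{\lambda}_{{\rm DSE}})$, not the exponents. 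Without that structure the block sums $\sum_{k}z^{b_{j+kp}}$ need not be rational at all, let alone have only simple poles at roots of unity. So either one must prove that $\mathbf{K}_{u^{g}}$ enumerates the equations ${\rm DSE}(\lambda\frac{n}{n+1}g)$ in eventual arithmetic progression --- the compatibility you correctly isolate, which the paper silently assumes --- or the exponent in (\ref{complexity-dse-4}) should simply be $n$, in which case Manin's argument applies verbatim. As it stands your proposal identifies the right obstruction but does not remove it, and the paper's one-line citation does not remove it either.
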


\begin{proof}
It is a direct result of the discussions in \cite{manin-1, manin-3}
where we need to replace the constructive world $\mathbb{Z}_{+}$
with $\mathcal{S}^{\Phi,g}$.
\end{proof}

Thanks to the Manin's reconstruction of the Halting problem in the
language of the BPHZ renormalization program (\cite{manin-2,
manin-3, manin-4}) and the explained machinery with respect to the
constructive world $\mathcal{S}^{\Phi,g}$, now it is possible to relate
the Halting problem for a given partial recursive map
$f:\mathbb{Z}_{+} \times \mathcal{S}^{\Phi,g} \rightarrow
\mathcal{S}^{\Phi,g}$ to our new version of the Kolmogorov complexity of Dyson--Schwinger equations. For this purpose we reduce $f$ to a partial
recursive permutation
\begin{equation}
\sigma_{f}: {\rm Dom}(\sigma_{f}) \subset \mathcal{S}^{\Phi,g}
\longrightarrow {\rm Dom}(\sigma_{f}) \subset \mathcal{S}^{\Phi,g}
\end{equation}
to interpret the problem of recognizing whether a positive
integer number $k$ belongs to the domain ${\rm Dom}(\sigma_{f})$ or not to
the problem of whether the corresponding analytic function
$\Psi(k,\sigma_{f},u^{g},z)$ of a complex parameter $z$ has a pole
at $z=1$ or not.

\begin{thm} \label{halting-graphon-1}
The BPHZ renormalization of Feynman graphons encodes the Halting
problem for a given partial recursive map $f:\mathbb{Z}_{+} \times
\mathcal{S}^{\Phi,g} \longrightarrow \mathcal{S}^{\Phi,g}$.
\end{thm}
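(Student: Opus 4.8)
The plan is to transport Manin's reconstruction of the Halting problem in the language of the Connes--Kreimer BPHZ renormalization program onto the new constructive world $\mathcal{S}^{\Phi,g}$, and then to realise the resulting regularized characters as Feynman rules characters acting on Feynman graphons. First I would fix the total recursive structure on $\mathcal{S}^{\Phi,g}$: by Lemma \ref{complexity-dse-1} the Kolmogorov order $\mathbf{K}_{u^{g}}:\mathcal{S}^{\Phi,g}\to\mathbb{Z}_{+}$ is a bijection compatible with the semi-computable function $u^{g}$ of (\ref{semi-comp-1}), and the two-sided estimate (\ref{complexity-optimal-1}) guarantees it is effective; this is precisely the structural numbering needed to treat $\mathcal{S}^{\Phi,g}$ as a constructive world in Manin's sense. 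I would then invoke Lemma \ref{halting-1} to reduce the given $f:\mathbb{Z}_{+}\times\mathcal{S}^{\Phi,g}\to\mathcal{S}^{\Phi,g}$ to a partial recursive permutation $\sigma_{f}$ on an extension of $\mathcal{S}^{\Phi,g}$, so that $f$ halts on an input exactly when the corresponding $\sigma_{f}$-orbit is finite, i.e. reduces to the fixed points of $\tau_{f}$.

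Next I would attach to each integer $k=\mathbf{K}_{u^{g}}({\rm DSE}(\lambda g))$ the formal series $\Psi(k,\sigma_{f},u^{g},z)$ of (\ref{complexity-dse-4}) and, following Manin, view $z\mapsto\Psi(k,\sigma_{f},u^{g},z)$ as a regularized amplitude: by Corollary \ref{complexity-dse-5} it is a rational function with first-order poles at roots of unity when the $\sigma_{f}$-orbit of ${\rm DSE}(\lambda g)$ is finite, and the Taylor expansion of a function holomorphic on $|z|<1$ and continuous on $|z|=1$ when the orbit is infinite. Thus the dichotomy ``pole at $z=1$ versus holomorphic at $z=1$'' is a faithful encoding of the Halting problem for $f$. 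I would then build, exactly as in the Connes--Kreimer picture recalled before Theorem \ref{bphz-dse-graphon-2}, the Manin renormalization Hopf algebra of the Halting problem over the constructive world $\mathcal{S}^{\Phi,g}$ (decorated flat trees of primitive recursive functions), together with its pro-unipotent character group, and the loop $\gamma_{\mu}(z)$ whose value on the generator indexed by $k$ reproduces $\Psi(k,\sigma_{f},u^{g},z)$.

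The central step is to realise this Hopf algebra and its BPHZ renormalization inside the Feynman graphon formalism. Using Lemma \ref{feynman-graphon-1} and Theorem \ref{feynman-graphon-5} to embed $H_{{\rm FG}}(\Phi)$ into $\mathcal{S}^{\Phi}_{{\rm graphon}}$, and the multi-scale Renormalization Group of Theorem \ref{multi-scale-RG} together with Corollary \ref{rg-graphon-1} to identify each effective equation ${\rm DSE}(\lambda\tfrac{n}{n+1}g)$ with a large Feynman diagram and hence a Feynman graphon, I would define a homomorphism from the Manin Hopf algebra to $\mathcal{S}^{\Phi}_{{\rm graphon}}$ sending the generator indexed by $k$ to the unlabeled Feynman graphon class of the solution of the effective equation detected by $u^{g}$, and push the loop $\gamma_{\mu}$ forward to a regularized Feynman rules character $\tilde{\phi}^{z}$ on Feynman graphons as in Theorem \ref{bphz-dse-graphon-2}. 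By that theorem, and by Proposition \ref{graphon-geimetric-1} which classifies the corresponding equi-singular connections, $\tilde{\phi}^{z}$ admits a unique Hopf--Birkhoff factorization, and the counterterm $S^{\tilde{\phi}}_{R_{{\rm ms}}}([W_{X}])$ is non-trivial precisely when the pushed-forward loop has a genuine singularity at the relevant point, which by the previous paragraph happens exactly when the $\sigma_{f}$-orbit is finite, i.e. when $f$ halts. This shows that deciding whether the graphon BPHZ procedure produces a non-zero counterterm for the graphon attached to $k$ solves the Halting problem for $f$, and conversely that a Halting oracle computes these counterterms, establishing the asserted equivalence.

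The hard part will be the faithfulness of this realisation: one must check that passing to the cut-distance completion $\mathcal{S}^{\Phi}_{{\rm graphon}}$ and replacing solutions of Dyson--Schwinger equations by limits of partial sums does not wash out the pole at $z=1$, that is, that the limit of the sequence $\{S^{\tilde{\phi}}_{R_{{\rm ms}}}([W_{Y_{m}}])\}_{m\ge1}$ genuinely records the analytic type of $\Psi(k,\sigma_{f},u^{g},z)$ at $z=1$ rather than a regularisation artefact. This requires a compatibility between the cut-distance limits defining the graphon coproduct (\ref{cop-graphon-2}) and antipode (\ref{antipode-graphon-3}) and the order-of-pole bookkeeping in Manin's series; this compatibility is the main technical obstacle. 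Once it is in place, the undecidability of the problem is inherited verbatim from \cite{manin-2,manin-3,manin-4,delaney-marcolli-1,shojaeifard-8} through the structural numbering $\mathbf{K}_{u^{g}}$.
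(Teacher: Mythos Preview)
Your proposal uses the same ingredients as the paper---Lemma \ref{halting-1}, the series $\Psi(k,\sigma_{f},u^{g},z)$ of (\ref{complexity-dse-4}), Corollary \ref{complexity-dse-5}, and the Birkhoff factorization from Theorem \ref{bphz-dse-graphon-2}---but routes them through an unnecessary intermediate construction. You build the Manin renormalization Hopf algebra of the Halting problem separately, construct a Hopf homomorphism from it into $\mathcal{S}^{\Phi}_{{\rm graphon}}$, push the loop forward, and then face the ``hard part'' of checking that cut-distance limits of partial sums preserve the pole structure of $\Psi$ at $z=1$. The paper bypasses all of this: it simply \emph{defines} a character $\varphi_{k}:\mathcal{S}^{\Phi}_{{\rm graphon}}\to A_{{\rm dr}}$ by declaring $\varphi_{k}([W_{X_{{\rm DSE}}}]):=\Psi(k^{\lambda}_{{\rm DSE}},\sigma_{f},u^{g},z)$ directly on graphon classes, and then reads off the Birkhoff decomposition $A_{{\rm dr}}=\mathcal{A}_{+}\oplus\mathcal{A}_{-}$ in the target. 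Since the character is defined on the full graphon class rather than assembled from partial sums, no limit argument is needed and your faithfulness obstacle never arises; the dichotomy of Corollary \ref{complexity-dse-5} translates immediately into whether the counterterm component $\varphi_{k,-}$ is trivial.

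Your route would work and is conceptually transparent about where the Manin construction enters, but the compatibility you flag between (\ref{cop-graphon-2}), (\ref{antipode-graphon-3}) and the pole bookkeeping is a genuine extra burden that the paper's direct definition avoids. What you gain is an explicit functorial comparison between the two renormalization Hopf algebras; what the paper gains is brevity and the elimination of your main technical obstacle.
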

\begin{proof}
Thanks to the construction of the renormalization Hopf algebra of
Feynman graphons $\mathcal{S}^{\Phi}_{{\rm graphon}}$ and the BPHZ
renormalization of large Feynman diagrams, consider the character
\begin{equation}
\varphi_{k}: \mathcal{S}^{\Phi}_{{\rm graphon}} \longrightarrow
A_{{\rm dr}}, \ \ \varphi_{k}([W_{X_{{\rm DSE}}}]):=
\Psi(k^{\lambda}_{{\rm DSE}}, \sigma_{f},u^{g},z).
\end{equation}
Thanks to the Birkhoff factorization on the regularization algebra
$A_{{\rm dr}}$, we have $A_{{\rm dr}}= \mathcal{A}_{+} \oplus
\mathcal{A_{-}}$ such that $\mathcal{A}_{+}$ is the unital algebra
of analytic functions in the region $|z|<1$ which are continuous on
the boundary $|z|=1$ and $\mathcal{A}_{-}:= (1-z)^{-1}
\mathbb{C}[(1-z)^{-1}]$.

If we apply Lemma \ref{halting-1}, Corollary \ref{complexity-dse-5} and
discussion about the existence of a pole at $z=1$ for the analytic
function $\Psi(k^{\lambda}_{{\rm DSE}}, \sigma_{f},u^{g},z)$, then we can determine whether $k^{\lambda}_{{\rm DSE}}$ lives in $D(\sigma_{f})$ or not.
\end{proof}

The main reason of this important result is the existence of a class
of semi-computable maps such as $u^{g}$ (for a given strong coupling
$g$) which has led us to define a modified version of the Kolmogorov
complexity for Dyson--Schwinger equations (i.e. Definition
\ref{complexity-dse-2}). The dynamics of the well-defined map
$u^{g}$ (\ref{semi-comp-1}) can be studied by the multi-scale
Renormalization Group machinery which is defined on
$\mathcal{S}^{\Phi,g}$. We can define the Kolmogorov complexity
$K_{w}$ on Dyson--Schwinger equations with respect to other
arbitrary elements $w$ of the set of Kolmogorov optimal functions.
In this setting, the optimality means that for any partial recursive
$v:\mathbb{Z}_{+} \times \mathcal{S}^{\Phi,g} \longrightarrow
\mathcal{S}^{\Phi,g}$ there exists a constant $c_{v,w}>0$ such that
for each Dyson--Schwinger equation ${\rm DSE}(\lambda g)$,
\begin{equation}
K_{w}((n,{\rm DSE}(\lambda g))) \le c_{v,w} K_{v}((n,{\rm
DSE}(\lambda g))).
\end{equation}
Thanks to Corollary \ref{rg-graphon-1}, relations
(\ref{complexity-optimal-1}) and (\ref{complexity-dse-18}) and
Theorem \ref{halting-graphon-1}, which determines the amount of
non-computability via the Halting problem at the level of Feynman
graphons, those semi-computable maps defined in terms of the map
$R^{{\rm multi}}_{--}$ can be considered as the truth candidate to search for
the optimal option.


\chapter{\textsf{The dynamics of non-perturbative QFT in the language of noncommutative geometry}}

\vspace{1in}

$\bullet$ \textbf{\emph{A spectral triple model for quantum motions}} \\
$\bullet$ \textbf{\emph{A noncommutative symplectic geometry model for $\mathcal{S}^{\Phi}_{{\rm graphon}}$}} \\

\newpage

Noncommutative Geometry studies geometric properties of singular
spaces on the basis of suitable coordinate algebras where point
spaces are replaced by (noncommutative) function algebras. The
standard differential and integral calculi have been adapted to a
more general setting in the way compatible with the interpretation
of variable quantities in Quantum Mechanics as operators on the
Hilbert space of states and spectral analysis. The interplay between
Algebra and Topology has been studied conceptually and contextually
under two general settings on the basis of the theory of Hopf algebras
(or quantum groups) and the theory of C*-algebras. In the resulting
dictionary, noncommutative C*-algebras, which are interpreted
as the algebras of continuous functions on some virtual
noncommutative spaces, are the dual arena for noncommutative
topology. As the important consequence of this interrelationship,
the theory of spectral triples and the theory of noncommutative
differential graded algebras enable us to build the
foundations of differential and integral calculi in Noncommutative
Geometry. \cite{connes-marcolli-2, duboisviolette-3}

The idea of applying Noncommutative Geometry to Quantum Field Theory
has already been considered and developed by different groups of
mathematicians and mathematical/theoretical physicists where we can
address new models of gauge field theories or the mathematical foundations
of Standard Model and its modified versions in dealing with
elementary particles \cite{chamseddine-connes-1, connes-marcolli-1,
duboisviolette-1, duboisviolette-4, duboisviolette-kerner-madore-1,
madore-1, mirkasimov-1}. Furthermore, thanks to the renormalization
Hopf algebra, some new applications of noncommutative geometric tools
in dealing with Dyson--Schwinger equations were found where two
classes of differential graded algebras had been formulated to
describe the geometry of quantum motions. The first class of
differential graded algebras was built in the way to determine
a family of connections which encode quantum motions independent of
the chosen regularization scheme \cite{shojaeifard-2}. The second
class of differential graded algebras was built in the way to
encode regularization and renormalization processes of Feynman
diagrams which contribute to solutions of Dyson--Schwinger equations
in the language of noncommutative differential forms. This setting,
which applies shuffle products and Rota--Baxter algebras
(\cite{guo-1}), has provided a new geometric interpretation of the
Connes--Kreimer renormalization group in the context of integrable
systems under a non-perturbative setting \cite{shojaeifard-7}.

In this chapter, we plan to continue our search for some new
applications of Noncommutative Geometry to non-perturbative aspects
(\cite{submitted-1}). At the first step, we explain the construction
of a new class of spectral triples which encodes the geometry of
Dyson--Schwinger equations under an operator theoretic setting. This
study provides the foundations of a theory of spectral geometry for
the description of large Feynman diagrams. At the second step, we
search for a new class of differential graded algebras on Feynman
graphons to build a noncommutative differential geometry machinery
for the description of physical parameters generated by large
Feynman diagrams.

\section{\textsl{A spectral triple model for quantum motions}}

Geometric objects associated to any $n$-dimensional $C^{\infty}$
manifold $M$ such as vector fields, differential forms, general
tensor fields, vector bundles, Riemannian metric, connections,
curvature tensor, etc are encoded via the commutative algebra
$C^{\infty}(M)$ (i.e. infinite times differentiable functions on $M$)
and some extra operators on this algebra. If we replace the algebra
$C^{\infty}(M)$ with a noncommutative algebra $A$ (such as the
algebra generated by some deformation procedures on
$C^{\infty}(M)$), then we can achieve the basic elements of
Noncommutative Geometry as a generalization of the standard
commutative geometry of manifolds. The basic pedagogical example of
a noncommutative space is given via Gelfand--Naimark
Theorem where studying commutative C*-algebras is translated to
studying compact topological (Hausdorff) spaces and vice versa. It
leads us to a general idea that studying noncommutative C*-algebras
becomes to studying "noncommutative" compact topological spaces.
\cite{connes-marcolli-2}

Classical Mechanics can be interpreted as the fundamental example
of a commutative geometry where the phase space of a system of $N$
non-relativistic particles is a $6N$ dimensional symplectic manifold
$M$ and the physical observables, energy, angular momentum, etc are
functions in $C^{\infty}(M)$. Quantum Mechanics can be interpreted
as the fundamental example of a model in Noncommutative Geometry where we
should deal with a noncommutative algebra of quantum observables
consisting of operators on the Hilbert space of states. The position
operator $Q$ and the momentum operator $P$ (as unbounded
self-adjoint operators) satisfy the canonical Heisenberg's
commutation relation
\begin{equation}
PQ - QP = -i\hbar I.
\end{equation}
The physical observables are represented by hermitian operators. If
we apply one-parameter unitary groups $U_{s} = e^{isP}, \ \ V_{t} =
e^{itQ}$, then we have the Weyl form of the commutation relation
namely,
\begin{equation}
U_{s}V_{t} = e^{-i\hbar st} V_{t}U_{s}.
\end{equation}
Set $s=t=1$ and $\lambda=-2 \pi \hbar$ to obtain the unitary bounded
operators $U,V$ on the same Hilbert space which enjoy the property
$UV=e^{2\pi i \lambda} VU$. The noncommutative polynomial algebra
$A_{\lambda}$ generated by $U,V$ together with their corresponding adjoint
operators where equipped with the operator norm is actually a
noncommutative $C^{*}$-algebra derived from Quantum Mechanics.

Deformation quantization focuses on the construction of a
noncommutative algebra of quantum observables in terms of defining
some new noncommutative type of products on the vector space
$C^{\infty}(M)$. The deformation of the coordinates of space-time
with respect to relations such as $[\hat{x}^{\mu},\hat{x}^{\nu}] = i
\theta^{\mu \nu}$ is another machinery in this setting to build a
Noncommutative Geometry model.

In an alternative approach, Connes developed a formulation of
differential geometry in terms of commutative algebras to build a
noncommutative generalization where we can consider a compact
manifold of arbitrary dimension with a well-defined Riemannian structure which
gives rise to a first order differential operator known as the Dirac
operator. It is shown that the manifold, including the metric
tensor, can be completely reconstructed from the discrete
eigenvalues of the Dirac operator such that the properties of the
spectrum can be encoded by a spectral triple which contains some
algebraic information. In summary, an ordinary compact Riemannian
manifold $M$ is reinterpreted in terms of the spectral triple
$(A=C^{\infty}(M), \mathbb{H}= L^{2}(.), D=i \gamma_{\mu}\partial x^{\mu})$
which is called a commutative spectral triple. Thanks to this
setting, Connes achieved a new modified version of the Gelfand--Naimark
Theorem for compact Riemannian manifolds and spectral triples. The
generalization of this approach has led us to the concept of
noncommutative spectral triples where some new applications of
Noncommutative Geometry to the description of relativistic quantum
theory, elementary particles and space-time at the
micro-scale Physics have been discovered by mathematicians and
mathematical/theoretical physicists. As an example we can address
the mathematical foundations of Standard Model, its modified
versions and perturbative renormalization program on these physical theories in terms of noncommutative geometric tools.
\cite{connes-marcolli-2, connes-marcolli-1, suijlekom-1, suijlekom-2}

Here we plan to explain the structure of a new class of spectral
triples originated from solutions of Dyson--Schwinger equations. The
resulting spectral triples encode the geometry of those parts of
Quantum Field Theories under strong running coupling constants where quantum
motions have complicated non-perturbative behaviors.

In general, a spectral triple is a collection $(A,\mathbb{H},D)$ of
related mathematical structures such that $A$ is a (unital)
involutive algebra which is faithfully represented on a given Hilbert
space $\mathbb{H}$ via a representation $\pi$. The operator $D$ is a self-adjoint
operator acting on $\mathbb{H}$ with the compact resolvent. For any
$a \in A$, $\pi(a)$ maps ${\rm dom}(D)$ into itself. The operator
$[D,\pi(a)]$ extends to a bounded operator on $\mathbb{H}$.

Theory of Clifford algebras and spin structures have provided the
foundations of the algebraic reconstruction of the geometry of
smooth (compact) Riemannian manifolds in the context of the theory of
spectral triples. For a given $n$-dimensional (locally) compact
$C^{\infty}$-Riemannian manifold $M$ without boundary, set
$A^{1}(M):= \Gamma(M,T_{\mathbb{C}}^{*}M)$ as the space of sections
of the complex cotangent bundle, which are differentiable $1$-forms
on $M$, with the corresponding dual space $\aleph(M):=
\Gamma(M,T_{\mathbb{C}}M)$, as the space of sections of the tangent
bundle, which are differentiable vector fields on $M$. The metric
$g$ is therefore a $C^{\infty}(M)$-valued symmetric bilinear
positive definite form on $A^{1}(M)$ (or $\aleph(M)$). The Cech cohomology theory of the algebra of Clifford sections
enables us to define ${\rm spin}^{c}$ structures. Then we determine
the corresponding spin structures under Morita equivalent relation.
A ${\rm spin}^{c}$ connection on a spinor module $\Gamma(M,S)$ is
defined (compatible with the action of the algebra of Clifford
sections) as a Hermitian connection
\begin{equation}
\nabla^{S}: \Gamma(M,S) \longrightarrow A^{1}(M)
\otimes_{C^{\infty}(M)} \Gamma(M,S).
\end{equation}
It is called a spin connection, if it commutes with the anti-linear
charge conjugation $c$ for each real vector field. The Riemannian
distance on the manifold $M$ is determined in terms of the
Dirac operator as a complex linear operator such as $D: \Gamma(M,S)
\longrightarrow \Gamma(M,S)$ defined by the composition $-i
\widehat{c} \circ \nabla^{S}$ such that
\begin{equation}
\widehat{c} \in {\rm Hom}_{C^{\infty}(M)}(B \otimes
\Gamma(M,S),\Gamma(M,S))
\end{equation}
is given by $\widehat{c}(\rho_{1},\rho_{2}):= c(\rho_{1})\rho_{2}$
while $B$ is the Clifford algebra bundle. It is also possible to
present this operator under a local setting in terms of the spaces
of vector fields and 1-forms. This explains the Dirac operator as an
essentially self-adjoint operator on its original domain, where we
can see that $[D,f]=-ic(df)$ for any smooth function $f$. Thanks to
this treatment the relation
\begin{equation}
d(x,y) = {\rm sup}\{|f(y)-f(x)|: f \in C^{\infty}(M), ||[D,f]|| \le
1\}
\end{equation}
describes the geodesic distance in terms of an unbounded Fredholm
module over the C*-algebra $C^{\infty}(M)$ \cite{connes-marcolli-2,
connes-marcolli-1}. Therefore all geometric information of the
manifold $M$ can be encapsulated by the spectral triple
\begin{equation}
(C^{\infty}(M), L^{2}(M,S),D).
\end{equation}

\begin{thm} \label{infinite-spectral-triple-1}
Consider $\{(A_{m},\mathbb{H}_{m},D_{m})\}_{m \ge 1}$ as a countable
family of spectral triples with the corresponding family of
representations $\{\pi_{m}\}_{m \ge 1}$. For each $m$, let
$||.||_{m}$ be the norm on $\mathbb{H}_{m}$ and then choose
$\{\alpha_{m}\}_{m \ge 1}$ as a sequence of non-zero real numbers
such that the sequence
$\{||(1+\alpha_{m}^{2}D_{m}^{2})^{\frac{-1}{2}}||_{m}\}_{m \ge 1}$
converges to zero when $m$ goes to infinity. There exists a spectral
triple
\begin{equation}
(A^{\oplus},\mathbb{H}^{\oplus},\overline{D^{\oplus}})
\end{equation}
such that $\mathbb{H}^{\oplus}:= \bigoplus_{m \ge 1}
\mathbb{H}_{m},$ $D^{\oplus}:= \bigoplus_{m \ge 1} \alpha_{m} D_{m}$
with the corresponding self-adjoint extension
$\overline{D^{\oplus}}$. In addition,
$$A^{\oplus}:=\{(a_{m})_{m \ge 1} \in \prod_{m} A_{m}:$$
$${\rm sup}_{m \ge 1} ||\pi_{m}(a_{m})||_{m} < +\infty, \ \ \ {\rm sup}_{m
\ge 1} ||[\alpha_{m}D_{m}, \pi_{m}(a_{m})]||_{m} < + \infty\}$$ such
that for each $a^{\oplus} \in A^{\oplus}$,
$\pi^{\oplus}(a^{\oplus}):= \bigoplus_{m \ge 1} \pi_{m}(a_{m})$.
\cite{falk-1}
\end{thm}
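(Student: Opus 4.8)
The plan is to check, property by property, the definition of a spectral triple recalled above for the data $(A^{\oplus},\mathbb{H}^{\oplus},\overline{D^{\oplus}})$, in each case reducing the assertion to the corresponding property of the summands $(A_{m},\mathbb{H}_{m},D_{m})$ through the elementary calculus of Hilbert--space direct sums and of direct sums of (possibly unbounded) self-adjoint operators. Following \cite{falk-1}, I would organize the argument in four steps: (i) the algebra and its representation, (ii) self-adjointness of the Dirac operator, (iii) compactness of its resolvent, (iv) boundedness of the commutators.

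First I would treat (i). The set $A^{\oplus}$ is closed under componentwise sum, scalar multiplication, product and the involution $(a_{m})_{m}^{*}:=(a_{m}^{*})_{m}$: finiteness of $\sup_{m}\|\pi_{m}(a_{m})\|_{m}$ is stable under these operations, and for the commutator condition the Leibniz identity $[\alpha_{m}D_{m},\pi_{m}(a_{m}b_{m})]=[\alpha_{m}D_{m},\pi_{m}(a_{m})]\,\pi_{m}(b_{m})+\pi_{m}(a_{m})\,[\alpha_{m}D_{m},\pi_{m}(b_{m})]$ bounds $\sup_{m}\|[\alpha_{m}D_{m},\pi_{m}(a_{m}b_{m})]\|_{m}$ by the product of the two suprema. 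When the $A_{m}$ are unital with $\pi_{m}(1_{m})=I_{\mathbb{H}_{m}}$, the sequence $(1_{m})_{m}$ belongs to $A^{\oplus}$ and is represented as $I_{\mathbb{H}^{\oplus}}$. The map $\pi^{\oplus}(a^{\oplus}):=\bigoplus_{m}\pi_{m}(a_{m})$ is a bounded operator on $\mathbb{H}^{\oplus}$ with $\|\pi^{\oplus}(a^{\oplus})\|\le\sup_{m}\|\pi_{m}(a_{m})\|_{m}$, as one sees from $\|\pi^{\oplus}(a^{\oplus})\xi\|^{2}=\sum_{m}\|\pi_{m}(a_{m})\xi_{m}\|_{m}^{2}$; it is a $*$-homomorphism componentwise, and it is faithful since faithfulness of each $\pi_{m}$ forces $a_{m}=0$ for all $m$ whenever $\pi^{\oplus}(a^{\oplus})=0$.

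Next I would handle (ii) and (iii) together. Each $\alpha_{m}D_{m}$ is self-adjoint on $\mathbb{H}_{m}$ because $\alpha_{m}\in\mathbb{R}\setminus\{0\}$ and $D_{m}=D_{m}^{*}$, and $D^{\oplus}$, initially defined on the dense algebraic direct sum $\bigoplus_{m}^{\mathrm{alg}}{\rm dom}(D_{m})$, is essentially self-adjoint with closure $\overline{D^{\oplus}}$ the standard direct-sum self-adjoint operator, whose domain is $\{(\xi_{m})_{m}:\sum_{m}\|\alpha_{m}D_{m}\xi_{m}\|_{m}^{2}<\infty\}$ and for which $(1+(\overline{D^{\oplus}})^{2})^{-1/2}=\bigoplus_{m}(1+\alpha_{m}^{2}D_{m}^{2})^{-1/2}$. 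Since $D_{m}$ has compact resolvent, each summand $(1+\alpha_{m}^{2}D_{m}^{2})^{-1/2}$ is compact by functional calculus, and a direct sum of compact operators on $\bigoplus_{m}\mathbb{H}_{m}$ is itself compact exactly when the sequence of their norms tends to zero; by hypothesis $\|(1+\alpha_{m}^{2}D_{m}^{2})^{-1/2}\|_{m}\to 0$, so $\overline{D^{\oplus}}$ has compact resolvent. This step --- identifying the closure of a direct sum of unbounded self-adjoint operators together with its resolvent, and invoking the norm-convergence criterion for compactness of a block-diagonal operator --- is the analytic heart of the theorem and the main obstacle; once it is in place the hypothesis on the scalars $\alpha_{m}$ does the rest automatically.

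Finally, for (iv), I would observe that $\bigoplus_{m}^{\mathrm{alg}}{\rm dom}(D_{m})$ is a core for $\overline{D^{\oplus}}$: for $\xi\in{\rm dom}(\overline{D^{\oplus}})$ the truncations $\xi^{(N)}=(\xi_{1},\dots,\xi_{N},0,0,\dots)$ lie in it and satisfy $\xi^{(N)}\to\xi$ and $D^{\oplus}\xi^{(N)}\to\overline{D^{\oplus}}\xi$. On this core $[\overline{D^{\oplus}},\pi^{\oplus}(a^{\oplus})]$ acts as $\bigoplus_{m}[\alpha_{m}D_{m},\pi_{m}(a_{m})]$, which is a bounded operator of norm $\sup_{m}\|[\alpha_{m}D_{m},\pi_{m}(a_{m})]\|_{m}<\infty$ by the very definition of $A^{\oplus}$; a routine limiting argument then shows that $\pi^{\oplus}(a^{\oplus})$ maps ${\rm dom}(\overline{D^{\oplus}})$ into itself and that the closure of the commutator coincides with this bounded operator. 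Collecting (i)--(iv) establishes that $(A^{\oplus},\mathbb{H}^{\oplus},\overline{D^{\oplus}})$ is a spectral triple that contains each $(A_{m},\mathbb{H}_{m},D_{m})$ as a block.
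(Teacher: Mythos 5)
The paper does not prove this theorem: it is quoted verbatim as a known result with the citation \cite{falk-1}, so there is no in-paper argument to compare against. Your proposal is correct and is the standard verification one would expect to find in that reference — the algebra axioms and faithfulness reduce componentwise, essential self-adjointness of the direct sum and the identification $(1+(\overline{D^{\oplus}})^{2})^{-1/2}=\bigoplus_{m}(1+\alpha_{m}^{2}D_{m}^{2})^{-1/2}$ are the analytic core, the hypothesis $\|(1+\alpha_{m}^{2}D_{m}^{2})^{-1/2}\|_{m}\to 0$ is used exactly where it must be (compactness of a block-diagonal operator with compact blocks), and your domain-preservation argument via $\alpha_{m}D_{m}\pi_{m}(a_{m})\xi_{m}=[\alpha_{m}D_{m},\pi_{m}(a_{m})]\xi_{m}+\pi_{m}(a_{m})\alpha_{m}D_{m}\xi_{m}$ closes step (iv) correctly.
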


The graduation parameter on the renormalization Hopf algebra and
Hopf subalgebras generated by Dyson--Schwinger equations enable us
to describe the corresponding complex Lie groups
$\mathbb{G}_{\Phi}(\mathbb{C})$ and $\mathbb{G}_{{\rm
DSE}}(\mathbb{C})$ under projective limits of Lie subgroups.

Generally speaking, for a given commutative (graded) Hopf algebra
$H$, let ${\rm Spec}(H)$ be the set of all prime ideals of $H$
equipped with the Zariski topology and the structure sheaf. This
topological space accepts a group structure generated by the
coproduct of $H$. Under a categorical setting, the functional Spec
is a contravariant functor from the category of commutative algebras
to the category of topological spaces which leads us to define
another functional $\mathbb{G}_{H}={\rm Spec}(H)$ as a covariant
representable functor from the category of commutative algebras to
the category of groups. For each commutative algebra $A$, the Lie group
$\mathbb{G}_{H}(A)={\rm Spec}(H)(A)$ is the set of morphisms with
the general form
\begin{equation}
\varphi:H \longrightarrow A, \ \
\varphi(h_{1}h_{2})=\varphi(h_{1})\varphi(h_{2}), \ \
\varphi(1_{h})=1_{A},
\end{equation}
which is equipped with the convolution product
\begin{equation}
\varphi_{1} * \varphi_{2} (h):= m \circ (\varphi_{1} \otimes
\varphi_{2}) \circ \Delta_{H}(h).
\end{equation}

Thanks to Milnor--Moore Theorem (\cite{milnor-moore-1}), the finite
dimensional complex Lie group ${\rm GL}_{n}$ of $n \times n$
matrices with non-zero determinants corresponds to the Hopf algebra
\begin{equation}
H_{{\rm GL}_{n}}=k[x_{i,j},t]_{i,j=1,...,n} / {\rm det}(x_{i,j})t-1
\end{equation}
with the coproduct
\begin{equation}
\Delta(x_{i,j}) = \sum_{s} x_{i,s} \otimes x_{s,j}.
\end{equation}

It is shown that if the Hopf algebra $H$ is finitely generated as an
algebra, then its corresponding affine group scheme is a linear
algebraic group which can be embedded as a Zariski closed subset
of some ${\rm GL}_{n}$. \cite{milne-1}

If we have a graduation parameter on the commutative Hopf
algebra $H$, then there exists a family $\{H_{n}\}_{n \ge
0}$ of commutative Hopf subalgebras such that $H = \bigcup_{n \ge 0}
H_{n}$ and for all $n$ and $m$, we can find some $k$ where $H_{n} \cup
H_{m} \subset H_{k}$. It is called a finite type graded Hopf algebra
if each component of the grading structure is finitely generated as
an algebra which means that for each $n$, there exists the
corresponding linear algebraic group of the form
\begin{equation}
\mathbb{G}_{n}(\mathbb{C})={\rm Spec}(H_{n})(\mathbb{C})< {\rm
GL}_{m_{n}}(\mathbb{C})
\end{equation}
for some $m_{n}$. These algebraic groups can generate the affine
group scheme $\mathbb{G}_{H}$ corresponding to the Hopf algebra $H$
via the projective limit
\begin{equation} \label{alg-group}
\mathbb{G}_{H} = {\rm lim}_{\longleftarrow_{n}} \mathbb{G}_{n}.
\end{equation}

\begin{thm} \label{infinite-spectral-triple-2}
There exists a class of infinite dimensional spectral triples which
describes the geometry of quantum motions in physical theories with
strong coupling constants.
\end{thm}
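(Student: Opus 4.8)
The plan is to attach to each Dyson--Schwinger equation DSE in $\mathcal{S}^{\Phi,g}$ a direct sum of commutative spectral triples built from the tower of linear algebraic groups approximating $\mathbb{G}_{{\rm DSE}}(\mathbb{C})$, and then to invoke Theorem \ref{infinite-spectral-triple-1}. First I would recall that the Faa di Bruno type Hopf subalgebra $H_{{\rm DSE}}$ generated by the unique solution $X_{{\rm DSE}}$ is connected graded commutative of finite type, so by the discussion around (\ref{alg-group}) it is the union $H_{{\rm DSE}}=\bigcup_{n\ge 0}H_{{\rm DSE},n}$ of finitely generated Hopf subalgebras and its affine group scheme is the projective limit
\begin{equation}
\mathbb{G}_{{\rm DSE}}(\mathbb{C}) = {\rm lim}_{\longleftarrow_{n}} \mathbb{G}_{{\rm DSE},n}(\mathbb{C}), \qquad \mathbb{G}_{{\rm DSE},n}(\mathbb{C}) = {\rm Spec}(H_{{\rm DSE},n})(\mathbb{C}) < {\rm GL}_{m_{n}}(\mathbb{C}),
\end{equation}
of complex linear algebraic groups. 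Each $\mathbb{G}_{{\rm DSE},n}(\mathbb{C})$ is a finite dimensional complex Lie group, hence a parallelizable ${\rm spin}$ manifold once we fix a bi-invariant Riemannian metric; Connes' reconstruction then furnishes the commutative spectral triple $(A_{n},\mathbb{H}_{n},D_{n}) := (C^{\infty}(\mathbb{G}_{{\rm DSE},n}(\mathbb{C})), L^{2}(\mathbb{G}_{{\rm DSE},n}(\mathbb{C}),S_{n}), D_{n})$ with $D_{n}$ the associated Dirac operator. Since we need a compact resolvent, I would replace each $\mathbb{G}_{{\rm DSE},n}(\mathbb{C})$ by a compact quotient by a cocompact lattice (available because these are the unipotent radicals of the pro-unipotent group $\mathbb{G}_{{\rm DSE}}(\mathbb{C})$), or equivalently pass to a weighted $L^{2}$-space with a confining potential, while checking that the relevant Feynman rules characters factor through this passage.

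Next I would fix the rescaling parameters. Each $D_{n}$ then has discrete spectrum bounded away from $0$ off its kernel by some $\delta_{n}>0$, so $\|(1+\alpha_{n}^{2}D_{n}^{2})^{-1/2}\|_{n} \le (1+\alpha_{n}^{2}\delta_{n}^{2})^{-1/2}$, and choosing $\alpha_{n}$ growing fast enough (say $\alpha_{n}\ge n/\delta_{n}$) forces this sequence to $0$. Applying Theorem \ref{infinite-spectral-triple-1} to $\{(A_{n},\mathbb{H}_{n},D_{n})\}_{n\ge 1}$ with these $\alpha_{n}$ yields the infinite dimensional spectral triple $(A^{\oplus}_{{\rm DSE}}, \mathbb{H}^{\oplus}_{{\rm DSE}}, \overline{D^{\oplus}_{{\rm DSE}}})$ with $\mathbb{H}^{\oplus}_{{\rm DSE}}=\bigoplus_{n}\mathbb{H}_{n}$, $D^{\oplus}_{{\rm DSE}}=\bigoplus_{n}\alpha_{n}D_{n}$ and $A^{\oplus}_{{\rm DSE}}$ the bounded-commutator subalgebra of $\prod_{n}A_{n}$. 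To see that this triple genuinely describes the geometry of quantum motions, I would note that the inclusions $H_{{\rm DSE},n}\hookrightarrow H_{{\rm DSE},n+1}$ dualize to surjections $\mathbb{G}_{{\rm DSE},n+1}\twoheadrightarrow\mathbb{G}_{{\rm DSE},n}$, hence to a coherent system of embeddings $C^{\infty}(\mathbb{G}_{{\rm DSE},n})\hookrightarrow C^{\infty}(\mathbb{G}_{{\rm DSE},n+1})$, so that the coordinate algebra of $\mathbb{G}_{{\rm DSE}}(\mathbb{C})$ — which carries the Feynman rules characters and the global $\beta$-function attached to $X_{{\rm DSE}}$ — embeds into $A^{\oplus}_{{\rm DSE}}$. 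Letting DSE range over $\mathcal{S}^{\Phi,g}$ produces the asserted class of infinite dimensional spectral triples.

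The main obstacle I anticipate is precisely the compact resolvent condition of Theorem \ref{infinite-spectral-triple-1}: the Dirac operator on a non-compact complex Lie group does not have compact resolvent, so the lattice-quotient (or confining-potential) step must be carried out carefully so that $(1+\alpha_{n}^{2}D_{n}^{2})^{-1/2}$ becomes compact while the commutators $[\alpha_{n}D_{n},\pi_{n}(a)]$ stay bounded for $a$ in the coordinate algebra. A secondary difficulty is checking that $A^{\oplus}_{{\rm DSE}}$ is large enough to contain the projective-limit coordinate algebra; this should follow once the metrics on the $\mathbb{G}_{{\rm DSE},n}(\mathbb{C})$ are chosen compatibly along the tower (bi-invariant, rescaled by a uniformly bounded factor) so that $\sup_{n}\|[\alpha_{n}D_{n},\pi_{n}(a_{n})]\|_{n}<\infty$ for every coherent sequence $(a_{n})_{n}$ coming from a fixed element of $H_{{\rm DSE}}$. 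Finally, one should verify that the construction is compatible with the projections $\overline{\rho}:\mathbb{G}_{\Phi}(\mathbb{C})\to\mathbb{G}_{{\rm DSE}}(\mathbb{C})$ of (\ref{dse-lie-group-1}), so that the whole family of spectral triples fits together over $\mathcal{S}^{\Phi,g}$.
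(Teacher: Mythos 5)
Your proposal follows essentially the same route as the paper: decompose $H_{{\rm DSE}}$ into its finite type graded components, pass to the tower of finite dimensional complex Lie groups $\mathbb{G}_{{\rm DSE}}^{(n)}(\mathbb{C})$ sitting inside ${\rm GL}_{m_{n}}(\mathbb{C})$, equip each with its commutative spectral triple, choose the rescaling sequence $\{\alpha_{n}\}$ so that $\|(1+\alpha_{n}^{2}D_{n}^{2})^{-1/2}\|_{n}\to 0$, and assemble the direct sum via Theorem \ref{infinite-spectral-triple-1}. The compact resolvent obstruction you flag (the Dirac operator on a non-compact group such as ${\rm GL}_{m_{n}}(\mathbb{C})$ or its unipotent subgroups does not have compact resolvent) is a genuine issue that the paper's own proof passes over in silence, so your proposed remedies via cocompact quotients or confining potentials are an addition rather than a deviation.
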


\begin{proof}
We are going to build a spectral triple with respect to each
Dyson--Schwinger equation in $\mathcal{S}^{\Phi,g}$ such that the
bare coupling constant $g$ is strong enough to produce
non-perturbative situations. For simplicity in notation we set $g=1$ and suppose the large Feynman diagram
$X_{{\rm DSE}}=\sum_{n \ge 0} X_{n}$ is the unique solution of an
equation DSE. It is discussed that terms $X_{n}$ are generators of
the free graded connected commutative finite type Hopf subalgebra
$H_{{\rm DSE}}(\Phi)$ of the Connes--Kreimer renormalization Hopf
algebra $H_{{\rm FG}}(\Phi)$ of Feynman diagrams graded in terms of the number of
internal edges. Present $H_{{\rm DSE}}(\Phi)$ in terms of its graded
components as follows
\begin{equation}
H_{{\rm DSE}}(\Phi)= \bigcup_{n \ge 0} H_{{\rm DSE}}^{(n)}(\Phi).
\end{equation}
For each $n$, the finite dimensional Hopf subalgebra $H_{{\rm
DSE}}^{(n)}(\Phi)$ determines the finite dimensional complex Lie
subgroup $\mathbb{G}_{{\rm DSE}}^{(n)}(\mathbb{C})$ which is
embedded as a closed subset of the linear algebraic group ${\rm
GL}_{m_{n}}(\mathbb{C})$ for some $m_{n}$ with respect to the
Zariski topology. Thanks to (\ref{alg-group}), the complex pro-unipotent graded Lie group
$\mathbb{G}_{{\rm DSE}}(\mathbb{C})$ is the projective limit of
$\mathbb{G}_{{\rm DSE}}^{(n)}(\mathbb{C})$s as closed subsets of
${\rm GL}_{m_{n}}(\mathbb{C})$s.

For each $m_{n}$, ${\rm GL}_{m_{n}}(\mathbb{C})$ is a finite
dimensional Riemannian manifold with the
corresponding spectral triple
\begin{equation}
\mathcal{S}^{(m_{n})}:=(C^{\infty}({\rm
GL}_{m_{n}}(\mathbb{C})),L^{2}({\rm
GL}_{m_{n}}(\mathbb{C}),S),D_{{\rm GL}_{m_{n}}(\mathbb{C})}).
\end{equation}
A restriction of this spectral triple enables us to build the
spectral triple corresponding to the complex Lie group
$\mathbb{G}_{{\rm DSE}}^{(n)}(\mathbb{C})$. We present it by
\begin{equation}
\mathcal{S}^{(n)}_{{\rm DSE}}=(A^{(n)}_{{\rm DSE}},
\mathbb{H}^{(n)}_{{\rm DSE}}, D^{(n)}_{{\rm DSE}}).
\end{equation}
Now consider the family $\{\mathcal{S}^{(n)}_{{\rm DSE}}\}_{n \ge
0}$ of countable number of spectral triples derived from components
of the graduation structure of the Hopf subalgebra $H_{{\rm
DSE}}(\Phi)$ generated by the equation DSE. Let $\{\alpha_{n}\}_{n
\ge 1}$ be a sequence of non-zero real numbers such that
\begin{equation}
\{||(1+\alpha_{n}^{2}(D^{(n)}_{{\rm
DSE}})^{2})^{\frac{-1}{2}}||_{n}\}_{n \ge 1}
\end{equation}
converges to zero when
$n$ tends to infinity where $||.||_{n}$ is the norm on
$\mathbb{H}^{(n)}_{{\rm DSE}}$. Apply Theorem
\ref{infinite-spectral-triple-1} to achieve the infinite dimensional
spectral triple
\begin{equation}
\mathcal{S}_{{\rm DSE}}^{\oplus}:= (A_{{\rm
DSE}}^{\oplus},\mathbb{H}_{{\rm DSE}}^{\oplus},\overline{D_{{\rm
DSE}}^{\oplus}})
\end{equation}
originated from the five-tuples $(A^{(n)}_{{\rm DSE}},
\mathbb{H}^{(n)}_{{\rm DSE}}, D^{(n)}_{{\rm DSE}},\pi^{(n)}_{{\rm
DSE}},\alpha_{n})$ for each $n$. The norm of the Hilbert space $\mathbb{H}_{{\rm DSE}}^{\oplus}$ is
given by
\begin{equation}
||.||^{\oplus}:= {\rm sup}_{n} ||.||_{n}.
\end{equation}
In addition, we can check that the representation
$\pi^{\oplus}_{{\rm DSE}}$ and the commutator $[D_{{\rm
DSE}}^{\oplus},\pi^{\oplus}_{{\rm DSE}}(A_{{\rm DSE}}^{\oplus})]$
are bounded where the sequence $\{\alpha_{n}\}_{n \ge 1}$ controls
the behavior of the sequence $\{D_{{\rm DSE}}^{(n)}\}_{n \ge 1}$. It
means that
\begin{equation}
\sum_{n} {\rm dim}({\rm Ker} D_{{\rm DSE}}^{(n)}) < \infty.
\end{equation}
\end{proof}

It is reasonable to name $\mathcal{S}_{{\rm DSE}}^{\oplus}$ as the
non-perturbative spectral triple with respect to the
Dyson--Schwinger equation DSE.

\begin{rem}
If the coupling constant of a physical theory is weak enough where
higher order perturbation methods can handle solutions of
Dyson--Schwinger equations, then we can describe the geometry of
this class of quantum motions in terms of summing a finite number of
finite dimensional spectral triples.
\end{rem}

\begin{cor}
Each non-perturbative spectral triple has a graphon representation.
\end{cor}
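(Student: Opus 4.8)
The plan is to combine the two main threads already developed in the excerpt: the Feynman graphon machinery of Chapter~2 (Lemma~\ref{feynman-graphon-1}, Definition~\ref{feynman-graphon-2}, Theorem~\ref{feynman-graphon-4}, Theorem~\ref{feynman-graphon-5}) and the construction of the non-perturbative spectral triple $\mathcal{S}_{{\rm DSE}}^{\oplus}$ from Theorem~\ref{infinite-spectral-triple-2}. The key observation is that $\mathcal{S}_{{\rm DSE}}^{\oplus}$ is built, component by component, from the graded pieces $H_{{\rm DSE}}^{(n)}(\Phi)$ of the Hopf subalgebra generated by the solution $X_{{\rm DSE}} = \sum_{n\ge 0} X_n$ of the equation DSE, and each $X_n$ is itself a finite Feynman diagram (a polynomial in primitive graphs). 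So each building block of the spectral triple is tied, via the embedding $\Xi$ of (\ref{Feynman-tree}), to a decorated rooted tree (or forest) $t_{X_n}$ and hence, by Lemma~\ref{feynman-graphon-1}, to a unique unlabeled graphon class $[W_{t_{X_n}}]$. First I would make this correspondence explicit: to the whole solution $X_{{\rm DSE}}$ one associates, by Theorem~\ref{feynman-graphon-4}, the Feynman graphon $[W_{X_{{\rm DSE}}}]$ obtained as the cut-distance limit of the sequence $\{[W_{Y_m}]\}_{m\ge 1}$ of graphons of the partial sums $Y_m = \sum_{i=1}^{m} X_i$ (here with coupling $\lambda g = 1$, as in Theorem~\ref{infinite-spectral-triple-2}).

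Next I would set up the map between the two objects. The spectral triple $\mathcal{S}_{{\rm DSE}}^{\oplus} = (A_{{\rm DSE}}^{\oplus}, \mathbb{H}_{{\rm DSE}}^{\oplus}, \overline{D_{{\rm DSE}}^{\oplus}})$ is the direct sum of the finite-dimensional spectral triples $\mathcal{S}^{(n)}_{{\rm DSE}} = (A^{(n)}_{{\rm DSE}}, \mathbb{H}^{(n)}_{{\rm DSE}}, D^{(n)}_{{\rm DSE}})$ coming from the Lie subgroups $\mathbb{G}_{{\rm DSE}}^{(n)}(\mathbb{C}) \hookrightarrow {\rm GL}_{m_n}(\mathbb{C})$, and these subgroups are exactly ${\rm Spec}(H_{{\rm DSE}}^{(n)}(\Phi))(\mathbb{C})$. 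Since $H_{{\rm DSE}}(\Phi)$ embeds into $H_{{\rm FG}}(\Phi)$, which in turn embeds into the renormalization Hopf algebra of Feynman graphons $\mathcal{S}^{\Phi}_{{\rm graphon}}$ (Theorem~\ref{feynman-graphon-5} together with Lemma~\ref{feynman-graphon-1}), each graded piece $H_{{\rm DSE}}^{(n)}(\Phi)$ is identified with a sub-Hopf-algebra of $\mathcal{S}^{\Phi}_{{\rm graphon}}$ generated by the unlabeled graphon classes $[W_{X_k}]$ with internal-edge grading $\le n$. Dualizing, $\mathbb{G}_{{\rm DSE}}^{(n)}(\mathbb{C})$ is recovered as a Lie subgroup of the finite-dimensional truncation $\mathbb{G}^{\Phi}_{{\rm graphon},(n)}(\mathbb{C})$ of the Feynman-graphon Lie group, so the data $(A^{(n)}_{{\rm DSE}}, \mathbb{H}^{(n)}_{{\rm DSE}}, D^{(n)}_{{\rm DSE}})$ can be re-read off the graphon side: $A^{(n)}_{{\rm DSE}}$ as $C^{\infty}$ functions on the graphon-presented Lie subgroup, $\mathbb{H}^{(n)}_{{\rm DSE}}$ as the associated $L^2$-spinor space, and $D^{(n)}_{{\rm DSE}}$ as the restriction of the corresponding Dirac operator. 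Passing to the projective/inductive limit in the same way as in Theorem~\ref{infinite-spectral-triple-2}, with the same admissible sequence $\{\alpha_n\}_{n\ge 1}$, yields an infinite-dimensional spectral triple $\mathcal{S}^{\oplus}_{[W_{X_{{\rm DSE}}}]}$ attached to the Feynman graphon $[W_{X_{{\rm DSE}}}]$; I would then verify it is unitarily equivalent to $\mathcal{S}_{{\rm DSE}}^{\oplus}$, which gives the desired graphon representation.

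The last step I would include is compatibility with the weakly isomorphic equivalence relation: if ${\rm DSE}_1 \approx {\rm DSE}_2$, i.e.\ $W_{X_{{\rm DSE}_1}} \approx W_{X_{{\rm DSE}_2}}$ in the sense of Definition~\ref{graphon-1}(ii), then at each finite level the graded Hopf subalgebras agree up to the relabeling action of measure-preserving transformations, so the finite-dimensional spectral triples $\mathcal{S}^{(n)}$ agree up to unitary conjugation, and hence the limiting spectral triples are unitarily equivalent. This shows the graphon representation is well-defined on unlabeled graphon classes, which is the natural home of $[W_{X_{{\rm DSE}}}]$. The main obstacle I anticipate is not the Hopf-algebraic bookkeeping but the analytic control of the passage to the limit on the graphon side: one must check that the sequence $\{\|(1+\alpha_n^2 (D^{(n)})^2)^{-1/2}\|_n\}_{n\ge 1}$ still tends to zero when the $D^{(n)}$ are the Dirac operators of the graphon-presented truncations $\mathbb{G}^{\Phi}_{{\rm graphon},(n)}$, and that the direct sum remains compatible with the cut-distance topology of Theorem~\ref{feynman-graphon-5} — i.e.\ that the cut-distance convergence $[W_{Y_m}] \to [W_{X_{{\rm DSE}}}]$ is matched by convergence of the associated finite spectral data. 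This is where I would spend most of the effort, invoking the compactness statement for the cut-metric topology and the boundedness estimates already used in the proof of Theorem~\ref{infinite-spectral-triple-2}; once those are in place, the identification $\mathcal{S}^{\oplus}_{[W_{X_{{\rm DSE}}}]} \simeq \mathcal{S}_{{\rm DSE}}^{\oplus}$ is formal.
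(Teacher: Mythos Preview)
Your proposal is correct in spirit but vastly over-engineered compared to what the paper does. The paper's proof is a single sentence: since the spectral triple $\mathcal{S}_{{\rm DSE}}^{\oplus}$ is constructed from (and hence indexed by) the equation DSE, one simply \emph{associates} to it the unlabeled Feynman graphon class $[W_{t_{X_{{\rm DSE}}}}]$ coming from the rooted-tree representation of the unique solution $X_{{\rm DSE}}$, using the results of Chapter~2. That is the entire content; the ``graphon representation'' is a labeling of the spectral triple by a graphon, not a reconstruction of the spectral triple from graphon data.

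You instead try to rebuild the spectral triple on the graphon side---pushing each graded piece $H_{{\rm DSE}}^{(n)}$ through $\mathcal{S}^{\Phi}_{{\rm graphon}}$, producing graphon-presented Lie subgroups, their Dirac operators, a direct-sum spectral triple $\mathcal{S}^{\oplus}_{[W_{X_{{\rm DSE}}}]}$, and then arguing unitary equivalence and well-definedness under weak isomorphism. This is a genuinely different and more ambitious reading of the statement: it would give actual structural meaning to ``graphon representation'' and would be the natural first step toward a functorial correspondence between Feynman graphons and non-perturbative spectral triples. The cost is exactly the analytic obstacle you flag (control of the $\alpha_n$-weighted resolvents and compatibility with cut-distance convergence), none of which the paper addresses or needs. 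In short, the paper treats the corollary as a bookkeeping remark; your version would be an honest theorem, but it is not what is being claimed here.
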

\begin{proof}
For a given spectral triple $\mathcal{S}_{{\rm DSE}}^{\oplus}$ with
respect to the equation DSE, we can associate the unlabeled Feynman graphon
class $[W_{t_{X_{{\rm DSE}}}}]$ determined by the labeled graph
functions of the infinite tree (or forest) $t_{X_{{\rm DSE}}}$ corresponding to the unique solution of DSE.
\end{proof}

The geometry of the underlying manifold determines the spectrum but
the main challenge is the possibility of recovering geometrical
information from the spectrum to determine completely the metric or
the shape of the boundary. While the answer to this challenge is negative but Noncommutative Geometry can provide an operator theoretic setting to deal with the theory of spectral geometry.  The fundamental integral in
Noncommutative Geometry is described as the Dixmier trace
which extends the Wodzicki residue from pseudodifferential operators
on a manifold to a general framework which concerns spectral triples
\cite{connes-marcolli-2}. In other words, for a given spectral triple, we have
\begin{equation}
\overline{\int} T:= {\rm Res}_{s=0} {\rm Tr}(T|D|^{-s}).
\end{equation}
It is possible to adapt this integral to deal with the geometry of
Dyson--Schwinger equations. The construction of the non-perturbative
spectral triple $\mathcal{S}_{{\rm DSE}}^{\oplus}$ (i.e. Theorem
\ref{infinite-spectral-triple-2}) shows that for each $n$,
$\mathcal{S}^{(n)}_{{\rm DSE}}$ is a finite dimensional spectral
triple. Actually, for each $n \ge 1$, $\mathcal{S}^{(n)}_{{\rm DSE}}$ is the result of
the restriction of the spectral triple associated to the complex Lie
group ${\rm Gl}_{m_{n}}(\mathbb{C})$ for some $m_{n}$. Therefore for
each $n \ge 1$, the functional
\begin{equation}
a \longmapsto {\rm Tr}^{+}(a |D^{(n)}_{\rm DSE}|^{-m_{n}})
\end{equation}
determines a differential calculus theory and spectral geometry with
respect to the Riemannian volume form for $\mathcal{S}^{(n)}_{{\rm
DSE}}$. This differential calculus is describing the geometric
behavior of a quantum motion in terms of its approximation with respect to partial sums
of the unique solution $X_{{\rm DSE}}$ of the corresponding equation
DSE. Thanks to this interpretation, we may have chance to search for the existence
of a noncommutative integral with the general form
\begin{equation}
a^{\oplus}\longmapsto {\rm Tr}_{\omega}(a^{\oplus}|D_{{\rm
DSE}}^{^{\oplus}}|^{-p})
\end{equation}
for some $p \ge 1$ and state $\omega$. This noncommutative integral,
which is on the basis of the Connes--Dixmier traces, can lead us
to build a theory of spectral geometry for large Feynman diagrams.

\section{\textsl{A noncommutative symplectic geometry model for $\mathcal{S}^{\Phi}_{{\rm graphon}}$}}

We have discussed that for a given smooth manifold $M$ with the
corresponding complex commutative unital *-algebra $C^{\infty}(M)$,
it is possible to reconstruct $M$ together with its smooth structure
and the objects attached to the manifold (such as smooth vector
fields) in terms of the spaces of characters and derivations of the
algebra $C^{\infty}(M)$. The choice of the generalization method for
the notion of module over a commutative algebra when this algebra is
replaced by a noncommutative algebra is related to the choice of the
noncommutative generalization of the classical commutative case.
There are some approaches to build the algebraic generalizations of
differential geometry such as Koszul framework. This framework is on
the basis of the space ${\rm Der}(A)$ of all derivations of a
commutative associative algebra $A$. A graded differential
algebra (as the generalization of the algebra of differential forms)
determines another graded differential algebra $C_{\wedge}({\rm Der}(A),A)$
of $A$-valued Chevalley--Eilenberg cochains of the Lie algebra ${\rm
Der}(A)$. The Koszul framework admits a generalization to the
noncommutative setting via differential calculus with respect to derivations.
It is actually the suitable differential calculus for Quantum
Mechanics. In this setting, an algebraic version of differential
geometry in terms of a commutative associative algebra $A$,
$A$-modules and connections on these modules have been designed. If
we replace the commutativity of the algebra with non-commutativity,
then different classes of generalizations of the notion of a module
over a noncommutative algebra can be resulted such as the notions of
left or right $A$-modules and left or right $Z(A)$-modules.
\cite{duboisviolette-1, duboisviolette-2, duboisviolette-3,
duboisviolette-kerner-madore-1, madore-1}

The algebraic interpretation of Classical Geometry requires a
commutative setting where we have two options to fix the algebra.
The first one is the real commutative algebra $A_{\mathbb{R}}$ of
smooth real valued functions where its complexified extension is
canonically a complex commutative *-algebra. The second one is the
complex commutative *-algebra $A_{\mathbb{C}}$ of smooth complex
valued functions where the set $A^{{\rm hermitian}}$ of its
hermitian elements is a real commutative algebra and thus
$A_{\mathbb{C}}$ will be the complexification of $A^{{\rm
hermitian}}$.

The algebraic interpretation of Quantum Physics requires a
noncommutative setting where we already have two classes of
generalizations of the algebra of real valued functions. The first
one is the real Jordan algebra $A^{{\rm hermitian}}$ of all
hermitian elements of a complex noncommutative associative *-algebra
$A$. The second one is a real associative noncommutative algebra.
The most important challenge at this level is the choice of the
mathematical machinery to build a differential calculus theory. One
generalization approach has been formulated by Connes in terms of
theory of cyclic cohomology of an algebra where the generalization
of the cohomology of a manifold in Noncommutative Geometry is
actually the reduced cyclic homology of an algebra which replaces
the standard algebra of smooth functions. As we know the computation
of cohomology theory of classical manifolds is not a unique way and
furthermore, we can expect the construction of noncommutative
generalizations of differential geometry for which the
generalization of de Rham theorem fails to be true. These facts show
that any cochain complex, which has the reduced cyclic homology as
cohomology, can not be an acceptable generalization of differential
forms. Thanks to these efforts, the best candidate for the
construction of a noncommutative differential calculus is on the
basis of the space of derivations as generalizations of vector
fields. This platform, which had been initiated and developed by
Kozul and Dubois-Violette, has already provided the foundations of a
noncommutative symplectic geometry for the study of quantum
theories. \cite{duboisviolette-1, duboisviolette-kerner-madore-1,
duboisviolette-kerner-madore-2, koszul-1}

In a different story, the Connes--Kreimer Hopf algebraic
renormalization is the direct result of the existence of the
Hopf--Birkhoff factorization on a class of Lie groups. The original source
of this particular factorization is the multiplicativity of
perturbative renormalization which is encoded by the theory of
Rota--Baxter algebras. The determination of a class of Hopf
subalgebras via Dyson--Schiwnger equations together with the
renormalization of these equations under Dimensional Regularization
had been applied to build a class of Dubois--Violette's differential
graded algebras which encode the geometric information of these
equations in the context of noncommutative differential forms. The
basic idea in this approach is to associate a noncommutative algebra
to each equation DSE and then build a theory of noncommutative
(symplectic) geometry to encode the behavior of infinitesimal characters corresponding to Feynman diagrams which contribute to the solution of DSE under the renormalization process. This platform is also useful to formulate a new interpretation of the
Connes--Kreimer non-perturbative renormalization group in the
context of quantum integrable systems. \cite{shojaeifard-7}

Our main task in
this part is to develop this new Hopf algebraic approach and explain the construction
of a noncommutative differential calculus theory for the topological
Hopf algebra $\mathcal{S}^{\Phi}_{{\rm graphon}}$ of Feynman
graphons which is originated from the BPHZ non-perturbative renormalization (i.i. Theorem
\ref{bphz-dse-graphon-2}) and the theory of Rota--Baxter algebras
(\cite{guo-1}). The basic step is to associate a (noncommutative) algebra to
$\mathcal{S}^{\Phi}_{{\rm graphon}}$ and then build a theory of
noncommutative differential forms on this algebra. Our study can determine a new class of non-perturbative quantum integrable systems generated by solutions of Dyson--Schwinger equations.

The BPHZ renormalization program is on the basis of Dimensional
Regularization and Minimal Subtraction map $R_{{\rm ms}}$ which is an idempotent Rota--Baxter map on the
regularization algebra of Laurent series with finite pole parts. We want to show that the application of each
step of the BPHZ renormalization program to Feynman graphons can determine a theory
of noncommutative differential calculus. These differential calculi are useful to formulate some new geometric tools for the evaluation of solutions of Dyson--Schwinger equations under the renormalization procedure.

\begin{thm} \label{dga-graphon-1}
The Minimal Subtraction map $R_{{\rm ms}}$ in the BPHZ
renormalization of Feynman graphons (i.e. Theorem \ref{feynman-graphon-5}
and Theorem \ref{bphz-dse-graphon-2}) determines a
noncommutative symplectic geometry model for the Hopf algebra
$\mathcal{S}^{\Phi}_{{\rm graphon}}$.
\end{thm}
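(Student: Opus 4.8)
The plan is to adapt the Dubois--Violette differential calculus based on derivations (already used by the author in \cite{shojaeifard-7} for the Hopf subalgebras $H_{{\rm DSE}}$) to the enriched Hopf algebra $\mathcal{S}^{\Phi}_{{\rm graphon}}$ of Feynman graphons, and to identify the extra geometric structure (a symplectic form) that the Rota--Baxter idempotent $R_{{\rm ms}}$ induces. First I would introduce the (noncommutative, associative, unital) algebra $\mathcal{A}^{\Phi}_{{\rm graphon}}$ naturally attached to $\mathcal{S}^{\Phi}_{{\rm graphon}}$: the convolution algebra ${\rm Hom}(\mathcal{S}^{\Phi}_{{\rm graphon}}, A_{{\rm dr}})$ of regularized characters acting on Feynman graphons, equipped with the convolution product built from the coproduct $\Delta_{{\rm graphon}}$ of Theorem \ref{feynman-graphon-5}; the Birkhoff decomposition $A_{{\rm dr}}=\mathcal{A}_{+}\oplus\mathcal{A}_{-}$ and the projector $R_{{\rm ms}}$ give the natural splitting that will carry the geometry. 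Then I would form the graded differential algebra $\Omega_{{\rm Der}}(\mathcal{A}^{\Phi}_{{\rm graphon}})$ of $\mathcal{A}$-valued Chevalley--Eilenberg cochains on the Lie algebra ${\rm Der}(\mathcal{A}^{\Phi}_{{\rm graphon}})$, exactly as in the Koszul--Dubois-Violette framework recalled in the preceding section, and check that the grading by independent-loop number on Feynman graphons descends to a grading on this differential calculus compatible with the cut-distance topology.

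The heart of the argument is to produce the symplectic data. I would let $\theta = d_{{\rm Der}} \rho$ where $\rho$ is the canonical one-form determined by the Rota--Baxter operator $R_{{\rm ms}}$ (concretely, $\rho$ evaluates a derivation against the negative Birkhoff component $S^{\tilde\phi}_{R_{{\rm ms}}}$ of a character, using formulas (\ref{bphz-1})--(\ref{bphz-3}) lifted to graphons in Theorem \ref{bphz-dse-graphon-2}), and set $\varpi := d_{{\rm Der}}\theta$ as the candidate two-form. Nondegeneracy of $\varpi$ would follow from the idempotent Rota--Baxter identity $R_{{\rm ms}}^{2}=R_{{\rm ms}}$ together with the antipode recursion (\ref{antipode-graphon-3}): these force the pairing $(X,Y)\mapsto \varpi(\delta_{X},\delta_{Y})$ induced on the space of infinitesimal characters (the Lie algebra $\mathfrak{g}^{\Phi}_{{\rm graphon}}(\mathbb{C})$ of Proposition \ref{graphon-geimetric-1}) to be non-singular on each finite graded piece. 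Closedness $d_{{\rm Der}}\varpi=0$ is automatic since $\varpi$ is exact; I would then pass to the cut-distance completion, using that each large Feynman diagram $X_{{\rm DSE}}$ is the limit of its partial sums $Y_{m}$ (Theorem \ref{feynman-graphon-4}) and that $R_{{\rm ms}}$, $\Delta_{{\rm graphon}}$, $S_{{\rm graphon}}$ are all continuous for this topology, so the symplectic form extends continuously from the dense subalgebra of finite Feynman diagrams.

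Finally I would record the consequences: the pair $(\mathcal{A}^{\Phi}_{{\rm graphon}}, \varpi)$ is a noncommutative symplectic manifold in the Dubois-Violette sense, the Hamiltonian vector field generated by the grading operator $Y$ (the infinitesimal generator of $\{\theta_{t}\}$ from Lemma \ref{renorm-group-1}) reproduces the renormalization group flow $\{F_{t}\}$, and the Poisson bracket $\{f,h\}:=\varpi(X_{f},X_{h})$ turns the renormalized Feynman rules characters into a non-perturbative integrable system in the spirit of \cite{shojaeifard-7}. The main obstacle I anticipate is controlling ${\rm Der}(\mathcal{A}^{\Phi}_{{\rm graphon}})$ and the nondegeneracy of $\varpi$ in the \emph{infinite} setting: since $\mathcal{S}^{\Phi}_{{\rm graphon}}$ is only of \emph{not necessarily finite type} (Theorem \ref{feynman-graphon-5}), the loop-number graded components need not be finite dimensional, so I would have to argue nondegeneracy component-by-component on the filtration (\ref{filter-1}) coming from the word model of Theorem \ref{graphon-filtration-1} and then show the resulting bilinear forms glue under the cut-distance topology, rather than on a single finite-dimensional slice. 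A secondary technical point is checking that the derivation $d_{{\rm Der}}$ and the Rota--Baxter splitting are genuinely compatible, i.e. that $R_{{\rm ms}}$ lifts to a degree-zero operator commuting with $d_{{\rm Der}}$ up to the expected Rota--Baxter curvature term, which is where the Bogoliubov recursion (\ref{bphz-3}) does the real work.
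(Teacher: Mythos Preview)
Your proposal misses the central mechanism of the paper's proof and contains an internal inconsistency in the construction of the two-form.

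The paper does \emph{not} build the symplectic form as the differential of a canonical one-form on the convolution algebra. Instead it first \emph{deforms the product}. Concretely: $R_{{\rm ms}}$ is lifted to a Rota--Baxter operator $\mathcal{R}(\phi)=R_{{\rm ms}}\circ\phi$ on $L(\mathcal{S}^{\Phi}_{{\rm graphon}},A_{{\rm dr}})$, from which one forms the one-parameter family of Nijenhuis operators $\mathcal{R}_{\lambda}=\mathcal{R}-\lambda({\rm Id}-\mathcal{R})$. Each $\mathcal{R}_{\lambda}$ deforms the convolution $*_{{\rm gr}}$ to a new associative product
\[
\phi_{1}\circ_{\lambda}\phi_{2}=\mathcal{R}_{\lambda}(\phi_{1})*_{{\rm gr}}\phi_{2}+\phi_{1}*_{{\rm gr}}\mathcal{R}_{\lambda}(\phi_{2})-\mathcal{R}_{\lambda}(\phi_{1}*_{{\rm gr}}\phi_{2}),
\]
and it is the commutator of $\circ_{\lambda}$ that supplies the Lie bracket $[\,\cdot\,,\,\cdot\,]_{\lambda}$. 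The Dubois--Violette calculus is then built on the \emph{deformed} algebra $C_{\lambda}^{{\rm graphon}}=(L(\mathcal{S}^{\Phi}_{{\rm graphon}},A_{{\rm dr}}),\circ_{\lambda})$, using the module ${\rm Ham}^{\lambda}_{{\rm graphon}}$ of Hamiltonian derivations, and the symplectic form is given explicitly by $\omega_{\lambda}(\theta,\theta')=\sum_{i,j} u_{i}\circ_{\lambda}v_{j}\circ_{\lambda}[f_{i},h_{j}]_{\lambda}$ for $\theta=\sum u_{i}\circ_{\lambda}{\rm ham}(f_{i})$, $\theta'=\sum v_{j}\circ_{\lambda}{\rm ham}(h_{j})$. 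Noncommutativity (hence nontriviality of $\omega_{\lambda}$) is inherited from the non-cocommutativity of $\Delta_{{\rm graphon}}$; no Birkhoff-component one-form or nondegeneracy-on-graded-pieces argument is used.

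Your route has two concrete problems. First, the degree count is broken: you declare $\rho$ a one-form, set $\theta=d_{{\rm Der}}\rho$, and then take $\varpi=d_{{\rm Der}}\theta$ as the ``candidate two-form''; but then $\varpi=d_{{\rm Der}}^{2}\rho=0$, so the form you build is identically zero (and in any case $d_{{\rm Der}}\rho$ is already a two-form, not a one-form). Second, even repairing this by taking $\varpi=d_{{\rm Der}}\rho$, you are working on the \emph{undeformed} convolution algebra, where the Rota--Baxter splitting alone does not visibly produce a nondegenerate pairing; your proposed argument (idempotence of $R_{{\rm ms}}$ plus the antipode recursion) does not force nondegeneracy of a bilinear form on $\mathfrak{g}^{\Phi}_{{\rm graphon}}(\mathbb{C})$. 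The missing ingredient is precisely the Nijenhuis deformation $\circ_{\lambda}$: it is what converts the Rota--Baxter data into a new associative product whose commutator is genuinely nontrivial, and the symplectic structure is read off from that bracket rather than from an exterior derivative of a Birkhoff one-form.
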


\begin{proof}
Consider $A_{{\rm dr}}:= \mathcal{A}_{+} \oplus \mathcal{A_{-}}$ as
the algebra of Laurent series with finite pole parts which encodes
Dimensional Regularization (i.e. regularization scheme) and $R_{{\rm
ms}}$ as the linear map on $A_{{\rm dr}}$ which projects a series
onto its corresponding pole parts. The pair $(A_{{\rm dr}},R_{{\rm
ms}})$ satisfies the conditions of a Rota--Baxter algebra which
enables us to define a new family of convolution products on the
space $L(\mathcal{S}^{\Phi}_{{\rm graphon}},A_{{\rm dr}})$ of linear
maps in terms of the following steps.

- Lift the map $R_{\rm ms}$ onto a new map $\mathcal{R}$ on $L(\mathcal{S}^{\Phi}_{{\rm
graphon}},A_{{\rm dr}})$ defined by
\begin{equation}
\mathcal{R}(\phi):= R_{\rm ms} \circ \phi.
\end{equation}
The pair $(L(\mathcal{S}^{\Phi}_{{\rm
graphon}},A_{{\rm dr}}),\mathcal{R})$ is a new Rota--Baxter algebra.

- Set $\hat{\mathcal{R}}:= Id -
\mathcal{R}$ and for each $\lambda \in \mathbb{R}$, define a new
class of Nijenhuis maps $\mathcal{R}_{\lambda}:= \mathcal{R} -
\lambda \hat{\mathcal{R}}$.

- Define a new family of products on
$L(\mathcal{S}^{\Phi}_{{\rm graphon}},A_{{\rm dr}})$ with the general form
\begin{equation}
\phi_{1} \circ_{\lambda} \phi_{2}:= \mathcal{R}_{\lambda} (\phi_{1})
*_{{\rm gr}} \phi_{2} + \phi_{1} *_{{\rm gr}} \mathcal{R}_{\lambda} (\phi_{2}) -
\mathcal{R}_{\lambda} (\phi_{1} *_{{\rm gr}} \phi_{2})
\end{equation}
such that $*_{{\rm gr}}$ is the convolution product with respect to
the coproduct $\Delta_{{\rm graphon}}$ on Feynman graphons
(\ref{cop-graphon-1}) where we have
\begin{equation}
\psi_{1} *_{{\rm gr}} \psi_{2}([W_{\Gamma}])= \sum
\psi_{1}([W_{\Gamma'}])\psi_{2}([W_{\Gamma''}]), \ \ \Delta_{{\rm
graphon}}([W_{\Gamma}]) = \sum [W_{\Gamma'}] \otimes [W_{\Gamma''}].
\end{equation}

The non-cocommutativity of the renormalization Hopf algebra of
Feynman graphons shows that the convolution product $*_{{\rm gr}}$
and new products $\circ_{\lambda}$ are noncommutative.

The Nijenhuis property of $\mathcal{R}_{\lambda}$ shows that
\begin{equation} \label{nijenhuis-1}
\mathcal{R}_{\lambda}(\phi_{1} \circ_{\lambda} \phi_{2}) =
\mathcal{R}_{\lambda}(\phi_{1}) *_{{\rm gr}}
\mathcal{R}_{\lambda}(\phi_{2})
\end{equation}
which supports the associativity of these new products.

Now set
\begin{equation}
C_{\lambda}^{{\rm graphon}}:= (L(\mathcal{S}^{\Phi}_{{\rm
graphon}},A_{{\rm dr}}), \circ_{\lambda})
\end{equation}
as the unital associative noncommutative algebra generated by the
Minimal Subtraction map. For each $\lambda$, the
commutator with respect to $\circ_{\lambda}$ determines a new Lie
bracket $[.,.]_{\lambda}$ on the space $L(\mathcal{S}^{\Phi}_{{\rm
graphon}},A_{{\rm dr}})$ given by
\begin{equation} \label{braket-1}
[\phi_{1},\phi_{2}]_{\lambda} = [\mathcal{R}_{\lambda} (\phi_{1}),
\phi_{2}] + [\phi_{1}, \mathcal{R}_{\lambda} (\phi_{2})] -
\mathcal{R}_{\lambda}[\phi_{1},\phi_{2}].
\end{equation}
This class of Lie brackets is the key tool for us to build a new noncommutative differential calculus on
$C_{\lambda}^{{\rm graphon}}$ in terms of the following steps.

- Set ${\rm Der}^{\lambda}_{{\rm graphon}}$ as the space of all
derivations on $C_{\lambda}^{{\rm graphon}}$. It has all linear maps
such as $\theta:C_{\lambda}^{{\rm graphon}} \longrightarrow
C_{\lambda}^{{\rm graphon}}$ which enjoys the Leibniz rule.

- The Lie bracket $[.,.]_{\lambda}$ determines naturally the Poisson bracket
$\{.,.\}_{\lambda}$ on $C_{\lambda}^{{\rm graphon}}$. For each
$\phi \in C_{\lambda}^{{\rm graphon}}$, its corresponding
Hamiltonian derivation is defined by
\begin{equation}
{\rm ham} (\phi): \psi \mapsto \{\phi,\psi\}_{\lambda}.
\end{equation}
Set ${\rm Ham}^{\lambda}_{{\rm graphon}}$ as the
$Z(C_{\lambda}^{{\rm graphon}})$-module generated by all Hamiltonian
derivations on $C_{\lambda}^{{\rm graphon}}$.

- Define
\begin{equation}
\Omega^{\bullet}_{\lambda,{\rm graphon}}(C_{\lambda}^{{\rm
graphon}}):= (\bigoplus_{n \ge 0} \Omega^{n}_{\lambda,{\rm
graphon}}(C_{\lambda}^{{\rm graphon}}),d_{\lambda})
\end{equation}
as the differential graded algebra on $C_{\lambda}^{{\rm graphon}}$. For each $n \ge 1$, $\Omega^{n}_{\lambda,{\rm graphon}}(C_{\lambda}^{{\rm graphon}})$
is the space of all $Z(C_{\lambda}^{{\rm graphon}})$-multilinear
antisymmetric mappings from ${\rm Ham}^{\lambda}_{{\rm graphon}}
\times ...^{n} \times {\rm Ham}^{\lambda}_{{\rm graphon}}$ into
$C_{\lambda}^{{\rm graphon}}$. The zero component of this
differential graded algebra is the initial algebra
$C_{\lambda}^{{\rm graphon}}$. In addition, for each $\omega \in \Omega^{n}_{\lambda,{\rm
graphon}}(C_{\lambda}^{{\rm graphon}})$ and $\theta_{i} \in {\rm
Ham}^{\lambda}_{{\rm graphon}}$, the anti-derivative degree one
differential operator $d_{\lambda}$ is defined by
$$d_{\lambda}\omega(\theta_{0},...,\theta_{n}):= \sum_{k=0}^{n}
(-1)^{k} \theta_{k}
\omega(\theta_{0},...,\hat{\theta_{k}},...,\theta_{n}) +$$
\begin{equation}
\sum_{0 \le r < s \le n} (-1)^{r+s}
\omega([\theta_{r},\theta_{s}]_{\lambda},\theta_{0},...,\hat{\theta_{r}},...,\hat{\theta_{s}},...,\theta_{n})
\end{equation}
such that we have $d_{\lambda}^{2}=0$.

Thanks to this differential graded (Lie) algebraic machinery, we can determine a new class of symplectic structures
generated by the Lie bracket $[.,.]_{\lambda}$. Define
$$\omega_{\lambda}: {\rm Ham}^{\lambda}_{{\rm graphon}} \times
{\rm Ham}^{\lambda}_{{\rm graphon}} \longrightarrow
C_{\lambda}^{{\rm graphon}}$$
\begin{equation} \label{symplectic-1}
\omega_{\lambda}(\theta,\theta'):= \sum_{i,j} u_{i} \circ_{\lambda}
v_{j} \circ_{\lambda} [f_{i},h_{j}]_{\lambda}
\end{equation}
such that $\{f_{1},...,f_{m},h_{1},...,h_{n}\} \subset
C_{\lambda}^{{\rm graphon}}$, $\{u_{1},...,u_{m},v_{1},...,v_{n}\}
\subset Z(C_{\lambda}^{{\rm graphon}})$ and
\begin{equation}
\theta=\sum_{i} u_{i} \circ_{\lambda} {\rm ham}(f_{i}), \ \ \
\theta'=\sum_{j}v_{j} \circ_{\lambda} {\rm ham}(h_{j}).
\end{equation}

The differential form $\omega_{\lambda}$ is a $Z(C_{\lambda}^{{\rm graphon}})$-bilinear
anti-symmetric non-degenerate closed 2-form in
$\Omega^{2}_{\lambda,{\rm graphon}}(C_{\lambda}^{{\rm graphon}})$. For a given $f \in C_{\lambda}^{{\rm graphon}}$ with the
corresponding symplectic vector field $\theta^{\lambda}_{f}$, we
have \begin{equation} \{f,g\}_{\lambda}:=
i_{\theta_{f}^{\lambda}}(d_{\lambda}g)
\end{equation}
such that
\begin{equation}
i_{\theta}(\omega_{0}d_{\lambda}\omega_{1}...d_{\lambda}\omega_{n})
= \sum_{j=1}^{n} (-1)^{j-1}\omega_{0} d_{\lambda} \omega_{1} ...
\theta(\omega_{j}) ... d_{\lambda} \omega_{n}
\end{equation}
is the super-derivation of degree -1. We can check that
\begin{equation}
\{f,g\}_{\lambda} = i_{\theta_{f}^{\lambda}}
i_{\theta_{g}^{\lambda}} \omega_{\lambda}.
\end{equation}
\end{proof}

\begin{thm} \label{dga-graphon-2}
The Dimensional Regularization in the BPHZ renormalization of
Feynman graphons (i.e. Theorem \ref{feynman-graphon-5} and Theorem
\ref{bphz-dse-graphon-2}) determines a noncommutative
symplectic geometry model for the Hopf algebra $\mathcal{S}^{\Phi}_{{\rm
graphon}}$.
\end{thm}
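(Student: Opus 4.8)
The plan is to rerun, essentially verbatim, the construction carried out in the proof of Theorem \ref{dga-graphon-1}, but now feeding in the Rota--Baxter structure that comes from the \emph{regularization} datum of Dimensional Regularization rather than from the Minimal Subtraction projector. Recall that Dimensional Regularization is encoded by the commutative algebra $A_{\rm dr} = \mathcal{A}_{+} \oplus \mathcal{A}_{-}$ of Laurent series with finite pole parts, i.e. by the splitting of $K = \mathbb{C}\{z\}[z^{-1}]$ into the part holomorphic on the infinitesimal disk ${\bf \Delta}$ and the polar part $O_{2} = \mathbb{C}[z^{-1}]$ (compare Proposition \ref{graphon-geimetric-1}). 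First I would record that the projection $R$ of $A_{\rm dr}$ onto $\mathcal{A}_{-}$ is an idempotent Rota--Baxter operator, so that $(A_{\rm dr}, R)$ is a commutative Rota--Baxter algebra; equivalently one may take the time-ordered integral $\int_{0}^{\infty}\theta_{-t}(\cdot)\,dt$ occurring in $\gamma_{-}(z) = T\exp(\frac{-1}{z}\int_{0}^{\infty}\theta_{-t}(\beta)\,dt)$ as the source of the Rota--Baxter structure, which makes the dependence on the analytic continuation in $z$ (and not merely on the subtraction rule) explicit.

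Next I would lift $R$ to the map $\mathcal{R}$ on $L(\mathcal{S}^{\Phi}_{\rm graphon}, A_{\rm dr})$ by $\mathcal{R}(\phi) := R \circ \phi$, obtaining a Rota--Baxter algebra with respect to the convolution product $*_{\rm gr}$ attached to the coproduct $\Delta_{\rm graphon}$ on Feynman graphons (Theorem \ref{feynman-graphon-5}); then set $\hat{\mathcal{R}} := {\rm Id} - \mathcal{R}$ and form the Nijenhuis family $\mathcal{R}_{\lambda} := \mathcal{R} - \lambda\hat{\mathcal{R}}$ for $\lambda \in \mathbb{R}$ and the associated products
\begin{equation}
\phi_{1} \circ_{\lambda} \phi_{2} := \mathcal{R}_{\lambda}(\phi_{1}) *_{\rm gr} \phi_{2} + \phi_{1} *_{\rm gr} \mathcal{R}_{\lambda}(\phi_{2}) - \mathcal{R}_{\lambda}(\phi_{1} *_{\rm gr} \phi_{2}),
\end{equation}
whose associativity is a consequence of the Nijenhuis identity $\mathcal{R}_{\lambda}(\phi_{1}\circ_{\lambda}\phi_{2}) = \mathcal{R}_{\lambda}(\phi_{1}) *_{\rm gr} \mathcal{R}_{\lambda}(\phi_{2})$. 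Since $\mathcal{S}^{\Phi}_{\rm graphon}$ is non-cocommutative, $*_{\rm gr}$ and each $\circ_{\lambda}$ are noncommutative, the $\circ_{\lambda}$-commutator yields a Lie bracket $[\cdot,\cdot]_{\lambda}$, and from there the noncommutative symplectic calculus is built as before: the centre-module ${\rm Ham}^{\lambda}_{\rm graphon}$ of Hamiltonian derivations, the differential graded algebra $\Omega^{\bullet}_{\lambda,{\rm graphon}}$ with its Chevalley--Eilenberg type differential $d_{\lambda}$ satisfying $d_{\lambda}^{2}=0$, the canonical closed non-degenerate $2$-form $\omega_{\lambda}$, and the Poisson bracket $\{f,g\}_{\lambda} = i_{\theta_{f}^{\lambda}} i_{\theta_{g}^{\lambda}}\omega_{\lambda}$; the characters on which this geometry acts are supplied by the BPHZ factorization of Theorem \ref{bphz-dse-graphon-2}.

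The hard part will be twofold. First, one must argue that the Rota--Baxter structure attached to Dimensional Regularization genuinely carries information beyond that of Theorem \ref{dga-graphon-1}: as linear maps the Minimal Subtraction projector and the holomorphic/polar splitting of $A_{\rm dr}$ agree, so the new content has to be extracted from transporting the full graded and filtered structure of $A_{\rm dr}$ --- the $\theta_{t}$-action and the $z$-grading of Proposition \ref{graphon-geimetric-1} --- through the construction, producing a renormalization-group-equivariant one-parameter family of symplectic structures rather than a single one. Second, and more delicately, all of this is a priori defined on finite Feynman diagrams, so one must show it extends continuously to $\mathcal{S}^{\Phi}_{\rm graphon}$ in the cut-distance topology: for the unique solution $X_{\rm DSE}$ of a Dyson--Schwinger equation one defines $\omega_{\lambda}$, $d_{\lambda}$ and $\{\cdot,\cdot\}_{\lambda}$ as cut-distance limits of their values on the partial sums $Y_{m}$, and one must verify that these limits exist, are independent of the representing sequence, and preserve the non-degeneracy and closedness of $\omega_{\lambda}$. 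The linearity of $*_{\rm gr}$ and the recursive structure of $\Delta_{\rm graphon}$ and $S_{\rm graphon}$ used in Theorem \ref{feynman-graphon-5} should make this continuity step routine, but it is the only place where the argument is not a direct transcription of the finite-dimensional computation.
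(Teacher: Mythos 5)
Your construction does not actually produce anything new relative to Theorem \ref{dga-graphon-1}, and you half-notice this yourself. The Rota--Baxter operator you propose to feed in --- the projection of $A_{{\rm dr}} = \mathcal{A}_{+}\oplus\mathcal{A}_{-}$ onto its polar part --- \emph{is} the Minimal Subtraction map $R_{{\rm ms}}$; it is not a separate datum supplied by Dimensional Regularization. Running the machinery of Theorem \ref{dga-graphon-1} with this operator therefore reproduces, symbol for symbol, the algebras $C_{\lambda}^{{\rm graphon}}$ and the forms $\omega_{\lambda}$ already constructed there, so your argument proves Theorem \ref{dga-graphon-1} a second time rather than Theorem \ref{dga-graphon-2}. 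Your proposed escape --- transporting the $\theta_{t}$-action and the $z$-grading through the construction to obtain a ``renormalization-group-equivariant one-parameter family'' --- is left entirely undeveloped: you do not say what new Nijenhuis or Rota--Baxter operator this is supposed to produce, and without one the whole downstream construction (the deformed products $\circ_{\lambda}$, the brackets, the two-form) has no new input.

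The paper's actual proof takes a genuinely different route that avoids this circularity. To isolate the contribution of the regularization scheme (the algebra $A_{{\rm dr}}$ itself) from that of the subtraction scheme (the projector $R_{{\rm ms}}$), it replaces the target algebra $A_{{\rm dr}}$ by the tensor module $\overline{T}(A_{{\rm dr}}) = \bigoplus_{n\ge 1} A_{{\rm dr}}^{\otimes n}$ equipped with a quasi-shuffle product $\circleddash$, and takes as Nijenhuis operator not a projection but the concatenation map $B^{+}_{e}\colon U \mapsto eU$ that prepends the empty word; the triple $(\overline{T}(A_{{\rm dr}}),\circleddash,B^{+}_{e})$ is the universal Nijenhuis algebra generated by $A_{{\rm dr}}$. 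Lifting $B^{+}_{e}$ to $\mathcal{N}_{{\rm graphon}}(\psi):=B^{+}_{e}\circ\psi$ on $L(\mathcal{S}^{\Phi}_{{\rm graphon}},\overline{T}(A_{{\rm dr}}))$ then yields a single deformed product $\circ_{u}$, a bracket $[\cdot,\cdot]_{u}$ and a symplectic form $\omega_{u}$ that depend only on $A_{{\rm dr}}$ as a commutative algebra and on no choice of splitting --- which is exactly what makes the result a statement about Dimensional Regularization rather than about Minimal Subtraction. To salvage your approach you would need to identify this (or some equivalent) scheme-independent operator; the $\theta_{t}$-equivariance idea by itself does not supply one.
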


\begin{proof}
There exists a universal setting for the construction of a new Nijenhuis
algebra from a given commutative unital algebra $A_{{\rm
dr}}$. We present the product of formal series by $m(f,g)=[fg]$ and
consider the graded tensor module $T(A_{{\rm dr}}):= \bigoplus_{n
\ge 0} A_{{\rm dr}}^{\otimes n}$ generated by expressions such as
$f_{1} \otimes f_{2} \otimes ... \otimes f_{n}$. From now we name
each series in $A_{{\rm dr}}$ as a letter and each sequence
$U:=f_{1}f_{2}...f_{n}$ of letters as a word with the length $n$.
The empty word $e$ which has the length zero is the unit object in
$T(A_{{\rm dr}})$.

By induction we can define a new shuffle
product on $T(A_{{\rm dr}})$ given by
\begin{equation} \label{shuffle-1}
fU \circledcirc gV:= f(U \circledcirc gV) + g(fU \circledcirc V) -
e[fg](U \circledcirc V)
\end{equation}
which is unital and associative. The product (\ref{shuffle-1}) defines another new quasi-shuffle product on
$\overline{T}(A_{{\rm dr}}):= \bigoplus_{n \ge 1} A_{{\rm
dr}}^{\otimes n}$ given by
\begin{equation} \label{shuffle-2}
fU \circleddash gV:= [fg](U \circledcirc V)
\end{equation}
which is also unital and associative.

The linear map
$B^{+}_{e}$ on $\overline{T}(A_{{\rm dr}})$ sends each word
$U$ of length $n$ to the new word $eU$ of length $n+1$. Thanks to
investigations discussed in \cite{ebrahimifard-guo-1}, the triple
$(\overline{T}(A_{{\rm dr}}),\circleddash,B^{+}_{e})$ is th universal Nijenhuis
algebra in a category of
Nijenhuis algebras generated by the initial algebra $A_{{\rm dr}}$.

Now we can lift the linear map $B^{+}_{e}$ onto $L(\mathcal{S}^{\Phi}_{{\rm
graphon}},\overline{T}(A_{{\rm dr}}))$ to define the new Nijenhuis
map
\begin{equation} \label{nijenhuis-graphon-1}
\mathcal{N}_{{\rm graphon}}(\psi):= B^{+}_{e} \circ \psi.
\end{equation}
The resulting Nijenhuis algebra is the key tool for us to build a
new product $\circ_{u}$ on $L(\mathcal{S}^{\Phi}_{{\rm
graphon}},\overline{T}(A_{{\rm dr}}))$ defined by
\begin{equation} \label{nijenhuis-graphon-10}
\psi_{1} \circ_{u} \psi_{2}:= \mathcal{N}_{{\rm graphon}}(\psi_{1})
*_{\circleddash} \psi_{2} +
\psi_{1} *_{\circleddash} \mathcal{N}_{{\rm graphon}}(\psi_{2}) -
\mathcal{N}_{{\rm graphon}}(\psi_{1}*_{\circleddash} \psi_{2})
\end{equation}
such that $*_{\circleddash}$ is the convolution product with respect
to the coproduct $\Delta_{{\rm graphon}}$ on Feynman graphons
(\ref{cop-graphon-1}) and the product $\circleddash$. We have
\begin{equation}
\psi_{1}  *_{\circleddash} \psi_{2}([W_{\Gamma}])= \sum
\psi_{1}([W_{\Gamma'}])  \circleddash   \psi_{2}([W_{\Gamma''}]), \
\ \Delta_{{\rm graphon}}([W_{\Gamma}]) = \sum [W_{\Gamma'}] \otimes
[W_{\Gamma''}].
\end{equation}

The non-cocommutativity of the renormalization Hopf algebra of
Feynman graphons shows that the convolution product
$*_{\circleddash}$ and the new product $\circ_{u}$ are
noncommutative. In addition, the Nijenhuis property shows that
\begin{equation}
\mathcal{N}_{{\rm graphon}}(\psi_{1} \circ_{u} \psi_{2}) =
\mathcal{N}_{{\rm graphon}}(\psi_{1}) *_{\circleddash}
\mathcal{N}_{{\rm graphon}}(\psi_{2})
\end{equation}
which supports the associativity of this new product.

Set
\begin{equation}
C_{u}^{{\rm graphon}}:=(L(\mathcal{S}^{\Phi}_{{\rm
graphon}},\overline{T}(A_{{\rm dr}})), \circ_{u})
\end{equation}
as the unital associative noncommutative algebra generated by
Dimensional Regularization. In addition, the commutator with respect
to the product $\circ_{u}$ determines a new Lie bracket $[.,.]_{u}$ on
the space $L(\mathcal{S}^{\Phi}_{{\rm graphon}},\overline{T}(A_{{\rm
dr}}))$ given by
\begin{equation}
[\psi_{1},\psi_{2}]_{u}:= [\mathcal{N}_{{\rm
graphon}}(\psi_{1}),\psi_{2}] + [\psi_{1},\mathcal{N}_{{\rm
graphon}}(\psi_{2})] - \mathcal{N}_{{\rm
graphon}}[\psi_{1},\psi_{2}].
\end{equation}
This class of Lie brackets enables us to build a new noncommutative differential calculus on
$C_{u}^{{\rm graphon}}$ in terms of the following steps.

- Set ${\rm Der}^{u}_{{\rm graphon}}$ as the space of all derivations
on $C_{u}^{{\rm graphon}}$. It has all linear maps such as
$\theta:C_{u}^{{\rm graphon}} \longrightarrow C_{u}^{{\rm graphon}}$
which enjoys the Leibniz rule.

- The Lie bracket $[.,.]_{u}$ naturally
determines the Poisson bracket $\{.,.\}_{u}$ on $C_{u}^{{\rm
graphon}}$. For each $\phi \in C_{u}^{{\rm graphon}}$, its
corresponding Hamiltonian derivation is defined by
\begin{equation}
{\rm ham} (\phi): \psi \longmapsto \{\phi,\psi\}_{u}.
\end{equation}
Set ${\rm Ham}^{u}_{{\rm graphon}}$ as the $Z(C_{u}^{{\rm
graphon}})$-module generated by all Hamiltonian derivations on
$C_{u}^{{\rm graphon}}$.

- Define
\begin{equation}
\Omega^{\bullet}_{u,{\rm graphon}}(C_{u}^{{\rm graphon}}):=
(\bigoplus_{n \ge 0} \Omega^{n}_{u,{\rm graphon}}(C_{u}^{{\rm
graphon}}),d_{u})
\end{equation}
as the differential graded algebra on $C_{u}^{{\rm graphon}}$. For each $n \ge 1$,
$\Omega^{n}_{u,{\rm graphon}}(C_{u}^{{\rm graphon}})$ is the space
of all $Z(C_{u}^{{\rm graphon}})$-multilinear antisymmetric mappings
from ${\rm Ham}^{u}_{{\rm graphon}} \times ...^{n} \times {\rm
Ham}^{u}_{{\rm graphon}}$ into $C_{u}^{{\rm graphon}}$. The zero
component of this differential graded algebra is the initial algebra
$C_{u}^{{\rm graphon}}$. In addition, for each $\omega \in \Omega^{n}_{u,{\rm graphon}}(C_{u}^{{\rm
graphon}})$ and $\theta_{i} \in {\rm Ham}^{u}_{{\rm graphon}}$, the
anti-derivative degree one differential operator $d_{u}$ is defined
by
$$d_{u}\omega(\theta_{0},...,\theta_{n}):= \sum_{k=0}^{n} (-1)^{k}
\theta_{k} \omega(\theta_{0},...,\hat{\theta_{k}},...,\theta_{n})
+$$
\begin{equation}
\sum_{0 \le r < s \le n} (-1)^{r+s}
\omega([\theta_{r},\theta_{s}]_{u},\theta_{0},...,\hat{\theta_{r}},...,\hat{\theta_{s}},...,\theta_{n})
\end{equation}
such that we have $d_{u}^{2}=0$.

Thanks to this differential graded (Lie) algebraic machinery, we can determine a new class of symplectic structures
generated by the Lie bracket $[.,.]_{u}$. Define
$$\omega_{u}: {\rm Ham}^{u}_{{\rm graphon}} \times
{\rm Ham}^{u}_{{\rm graphon}} \longrightarrow C_{u}^{{\rm
graphon}}$$
\begin{equation} \label{symplectic-111}
\omega_{u}(\theta,\theta'):= \sum_{i,j} u_{i} \circ_{u} v_{j}
\circ_{u} [f_{i},h_{j}]_{u}
\end{equation}
such that $\{f_{1},...,f_{m},h_{1},...,h_{n}\} \subset C_{u}^{{\rm
graphon}}$, $\{u_{1},...,u_{m},v_{1},...,v_{n}\} \subset
Z(C_{u}^{{\rm graphon}})$,
\begin{equation}
\theta=\sum_{i} u_{i} \circ_{u} {\rm ham}(f_{i}), \ \
\theta'=\sum_{j}v_{j} \circ_{u} {\rm ham}(h_{j}).
\end{equation}

The differential form $\omega_{u}$ is a $Z(C_{u}^{{\rm graphon}})$-bilinear anti-symmetric
non-degenerate closed 2-form in $\Omega^{2}_{u,{\rm
graphon}}(C_{u}^{{\rm graphon}})$. For a given $f \in C_{u}^{{\rm graphon}}$ with the corresponding
symplectic vector field $\theta^{u}_{f}$, we have
\begin{equation}
\{f,g\}_{u}:= i_{\theta_{f}^{u}}(d_{u}g)
\end{equation}
such that
\begin{equation}
i_{\theta}(\omega_{0}d_{u}\omega_{1}...d_{u}\omega_{n}) =
\sum_{j=1}^{n} (-1)^{j-1}\omega_{0} d_{u} \omega_{1} ...
\theta(\omega_{j}) ... d_{u} \omega_{n}
\end{equation}
is the super-derivation of degree -1. We can check that
\begin{equation}
\{f,g\}_{u} = i_{\theta_{f}^{u}} i_{\theta_{g}^{u}} \omega_{u}.
\end{equation}
\end{proof}

The modified version of the Connes--Kreimer
Renormalization Group for Feynman graphons is defined by
Lemma \ref{renorm-group-1} where we should apply the filtration parameter
on Feynman graphons given by Theorem \ref{graphon-filtration-1}. Thanks to
the built noncommutative differential geometry on
$\mathcal{S}^{\Phi}_{{\rm graphon}}$, we can provide a new geometric
interpretation for the behavior of the Connes--Kreimer
Renormalization Group whenever it acts on large Feynman diagrams.

\begin{lem} \label{ren-group-large-diagram-1}
Let $\{F_{t}\}_{t}$ be the Renormalization Group on Feynman graphons defined by Lemma \ref{renorm-group-1}. For each $t$ and any large Feynman diagram $X$, $F_{t}(X)$ is the convergent limit of
the sequence $\{F_{t}(X_{n})\}_{n \ge 1}$ with respect to the
cut-distance topology.
\end{lem}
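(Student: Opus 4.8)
The plan is to deduce the lemma from the cut-distance continuity of the renormalization group action together with the convergence of partial sums already established in Theorem~\ref{feynman-graphon-4}. Throughout, the finite graphs ``$X_n$'' appearing in the statement are read as the partial sums $Y_m = X_1 \sqcup \dots \sqcup X_m$ of the unique solution $X = X_{\rm DSE} = \sum_{n \ge 0} X_n$, which is the reading consistent with Theorem~\ref{feynman-graphon-4}, Corollary~\ref{rg-graphon-1} and Corollary~\ref{optimization-1}; under the Feynman graphon correspondence of Lemma~\ref{feynman-graphon-1} this gives $[W_X] = \lim_{m} [W_{Y_m}]$ in the metric $\delta_{\rm cut}$ on unlabeled graphon classes, and we regard $F_t$ via the linear operator it induces on $\mathcal{S}^{\Phi}_{\rm graphon}$.

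First I would record that the one-parameter group $\{\theta_t\}_{t}$ of automorphisms of $\mathbb{G}^{\Phi}_{\rm graphon}(\mathbb{C})$ is dual to the filtration operator $Y$ of Theorem~\ref{graphon-filtration-1}, which sends each graphon class $[W_\Gamma]$ to the scalar multiple $n_{[W_\Gamma]}\,[W_\Gamma]$ fixed by its filtration rank. Because $Y$ is linear and diagonal on the graded pieces of $\mathcal{S}^{\Phi}_{\rm graphon}$ --- each of which is finite-dimensional, as noted after Theorem~\ref{feynman-graphon-5} --- the twist $\theta_{tz}$ and the logarithmic-derivative expressions built from it are continuous with respect to $\delta_{\rm cut}$, by exactly the argument used in Theorem~\ref{feynman-graphon-5} to show that the Hopf structure on $\mathcal{S}^{\Phi}_{\rm graphon}$ is compatible with the cut-distance topology. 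Hence the loop $\gamma_-$ of Lemma~\ref{renorm-group-1} and the limit $F_t = \lim_{z\to 0}\gamma_-(z)\,\theta_{tz}(\gamma_-(z)^{-1})$ are assembled from $\delta_{\rm cut}$-continuous operations, so $F_t$ extends to a cut-distance continuous operator on the completion $\mathcal{S}^{\Phi}_{\rm graphon}$.

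Continuity then yields the statement directly: writing $[W_{X}]=\lim_m[W_{Y_m}]$, we obtain
\[
F_t\big([W_{X}]\big) \;=\; F_t\Big(\lim_{m\to\infty}[W_{Y_m}]\Big) \;=\; \lim_{m\to\infty} F_t\big([W_{Y_m}]\big),
\]
the limit being taken in $\delta_{\rm cut}$, and transporting this identity back through the Feynman graphon correspondence says precisely that $F_t(X)$ is the cut-distance convergent limit of $\{F_t(Y_m)\}_{m\ge 1}$.

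The main obstacle I anticipate is justifying the interchange of the two limits $\lim_{m}$ and $\lim_{z\to 0}$, i.e.\ showing that $F_t$ is genuinely well defined and continuous as an operator on graphon classes and not merely on characters. Concretely one must (i) check that the Minimal-Subtraction pole subtraction entering the construction of $\gamma_-$ respects the grading of $\mathcal{S}^{\Phi}_{\rm graphon}$, which is only ``not necessarily of finite type'', so that each occurring graded component stays finite-dimensional; and (ii) produce an equicontinuity / uniform-boundedness estimate for the family $\{\gamma_-(z)\}_{z}$ restricted to the bounded sequence $\{[W_{Y_m}]\}_{m}$, so that the $z\to 0$ limit defining $F_t$ is uniform in $m$. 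Step (ii) is where I expect essentially all of the technical work to sit; once it is in hand the interchange of limits, and hence the lemma, is immediate.
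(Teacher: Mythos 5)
Your proof follows essentially the same route as the paper's: both reduce the claim to the cut-distance convergence $X=\lim_{m}Y_{m}$ supplied by Theorem \ref{feynman-graphon-4} and then pass $F_{t}$ through that limit using its linearity and cut-distance continuity as an element of $\mathbb{G}^{\Phi}_{{\rm graphon}}(\mathbb{C})$ acting on $\mathcal{S}^{\Phi}_{{\rm graphon}}$. The uniformity-in-$m$ issue you flag in your step (ii) is addressed in the paper only by invoking the linearity of $F_{t}$ together with Proposition 1.47 of \cite{connes-marcolli-1} (each $F_{t}(X_{n})$ is a polynomial in $t$), so your write-up is, if anything, more explicit about where the remaining technical content of the limit interchange sits.
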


\begin{proof}
Consider the loop $\gamma_{\mu} \in {\rm Loop}(\mathbb{G}^{\Phi}_{{\rm
graphon}}(\mathbb{C}),\mu)$ which encodes the Feynman rules
characters in the renormalization Hopf algebra of Feynman graphons with respect to a given physical theory $\Phi$. For a given Dyson--Schwinger equation DSE with the unique solution
$X=\sum_{n\ge 0} X_{n}$, we have
\begin{equation}
\gamma_{\mu}(z)([W_{X}]):= U_{\mu}^{z}(X)
\end{equation}
such that $U_{\mu}^{z}(X)$ is a Laurent series as the regularized
large Feynman integral with respect to $X$. The one-parameter group
$\{\theta_{t}\}_{t \in \mathbb{C}}$ sends the unlabeled graphon
class $[W_{X}]$ to the filtration rank of the equation DSE (i.e. Theorem
\ref{graphon-filtration-1}). The resulting Renormalization Group
$\{F_{t}\}_{t}$ (i.e. Lemma \ref{renorm-group-1}) is a subgroup of
$\mathbb{G}^{\Phi}_{{\rm graphon}}(\mathbb{C})$ which means that for each
$t$, $F_{t}$ is a linear homomorphism. On the other hand, thanks to
Theorem \ref{feynman-graphon-4}, we know that the large Feynman
diagram $X$ is the convergent limit of the sequence of its partial
sums with respect to the cut-distance topology. Therefore we have
$$F_{t}(X) = F_{t}({\rm lim}_{m \rightarrow \infty} Y_{m}) =
F_{t}({\rm lim}_{m \rightarrow \infty} \sum_{n=1}^{m} X_{n})=$$
$${\rm lim}_{m \rightarrow \infty} \sum_{n=1}^{m} F_{t}(X_{n}) = {\rm
lim}_{m \rightarrow \infty} \sum_{n=1}^{m} {\rm lim}_{z \rightarrow
0} \gamma_{-}(z)(X_{n}) \theta_{tz}(\gamma_{-}^{-1}(z)(X_{n}))$$
\begin{equation}
= {\rm lim}_{m \rightarrow \infty} {\rm lim}_{z \rightarrow 0}
\sum_{n=1}^{m} \gamma_{-}(z)(X_{n})
\theta_{tz}(\gamma_{-}^{-1}(z)(X_{n}))
\end{equation}
such that according to Proposition 1.47 in \cite{connes-marcolli-1},
for each $t$, $F_{t}(X_{n})$ is a polynomial in $t$.
\end{proof}

\begin{cor}
The non-perturbative Connes--Kreimer Renormalization Group on Feynman graphons can
determine an infinite dimensional integrable system.
\end{cor}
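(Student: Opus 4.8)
The plan is to realize the one-parameter group $\{F_{t}\}_{t}$ of Lemma \ref{renorm-group-1} as a Hamiltonian flow for the noncommutative symplectic structure built on $\mathcal{S}^{\Phi}_{{\rm graphon}}$ in Theorem \ref{dga-graphon-1} (or, alternatively, Theorem \ref{dga-graphon-2}), and then to exhibit an infinite family of conserved quantities in involution, so that the resulting data matches the integrable-system picture obtained for ordinary Dyson--Schwinger equations in \cite{shojaeifard-7}. First I would fix the noncommutative algebra $C_{\lambda}^{{\rm graphon}} = (L(\mathcal{S}^{\Phi}_{{\rm graphon}},A_{{\rm dr}}),\circ_{\lambda})$ together with the Lie bracket $[.,.]_{\lambda}$, the Poisson bracket $\{.,.\}_{\lambda}$, the differential graded algebra $\Omega^{\bullet}_{\lambda,{\rm graphon}}(C_{\lambda}^{{\rm graphon}})$ and the symplectic two-form $\omega_{\lambda}$. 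The candidate Hamiltonian is the $\beta$-function $\beta \in \mathfrak{g}^{\Phi}_{{\rm graphon}}(\mathbb{C})$ attached by Proposition \ref{graphon-geimetric-1} to the negative component $\gamma_{-}$ of the Birkhoff factorization, now regarded as an element of $C_{\lambda}^{{\rm graphon}}$; its Hamiltonian derivation ${\rm ham}(\beta)$ is the object I want to identify with the infinitesimal generator $\frac{d}{dt}|_{t=0} F_{t}$.

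Next I would carry out the identification step. Using the formula $\gamma_{-}(z) = T{\rm exp}(\frac{-1}{z}\int_{0}^{\infty}\theta_{-t}(\beta)\,dt)$ from (\ref{geometric-counterterms-1}), the scattering presentation $F_{t} = {\rm lim}_{z\rightarrow 0}\gamma_{-}(z)\theta_{tz}(\gamma_{-}(z)^{-1})$ of Lemma \ref{renorm-group-1}, and the one-parameter group $\{\theta_{t}\}$ coming from the filtration grading of Theorem \ref{graphon-filtration-1}, I would differentiate at $t=0$ and show that the resulting derivation of $C_{\lambda}^{{\rm graphon}}$ equals the adjoint action $[\beta,-]_{\lambda}$, hence coincides with $\{\beta,-\}_{\lambda} = i_{\theta_{\beta}^{\lambda}}d_{\lambda}$. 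This is the graphon-level transcription of Proposition 1.47 of \cite{connes-marcolli-1} combined with the Nijenhuis identity (\ref{nijenhuis-1}): since $\circ_{\lambda}$ is assembled from exactly the same Rota--Baxter data that governs the Birkhoff factorization, the flow $\{F_{t}\}$ is automatically symplectic, i.e. it preserves the closed two-form $\omega_{\lambda}$ because it is generated by a Hamiltonian vector field. Lemma \ref{ren-group-large-diagram-1} then guarantees that this identification is compatible with the passage from the finite partial sums $Y_{m}$ to the large Feynman diagram $X_{{\rm DSE}}$, since $F_{t}(X_{{\rm DSE}})$ is the cut-distance limit of the $F_{t}(X_{n})$.

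After that I would produce the conserved quantities. The Hopf algebra $\mathcal{S}^{\Phi}_{{\rm graphon}}$ of Theorem \ref{feynman-graphon-5} carries the loop-number grading and the canonical filtration $\mathcal{S}^{\Phi}_{{\rm graphon},(i)}$ of Theorem \ref{graphon-filtration-1}; pairing elements of $C_{\lambda}^{{\rm graphon}}$ against the associated graded pieces produces functionals $I_{j}$, concretely the coefficients $c_{j}^{X_{{\rm DSE}}}$ of the log-expansion (\ref{feynman-rules-3}) that are the charges attached to large Feynman diagrams. Because $\Delta_{{\rm graphon}}$ respects the grading, the brackets $\{I_{j},I_{k}\}_{\lambda}$ land in a strictly lower filtration degree, and a standard triangularity argument forces them to vanish; likewise $\{\beta,I_{j}\}_{\lambda}=0$, which is precisely the renormalization-group invariance of the $c_{j}$'s already recorded in the discussion around (\ref{feynman-rules-6})--(\ref{feynman-rules-3}). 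Since $\mathcal{S}^{\Phi}_{{\rm graphon}}$ is not of finite type, the family $\{I_{j}\}_{j\ge 0}$ is genuinely infinite, and by Lemma \ref{ren-group-large-diagram-1} each $I_{j}$ arises as the cut-distance limit of its finite truncations. Putting these three ingredients together yields a noncommutative symplectic space $(C_{\lambda}^{{\rm graphon}},\omega_{\lambda})$, a Hamiltonian $\beta$ whose flow is $\{F_{t}\}$, and an infinite set of pairwise-commuting conserved quantities, i.e. an infinite dimensional integrable system in the Dubois--Violette sense.

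The hard part will be the Liouville half of integrability in this infinite dimensional, noncommutative, topological setting: proving that the $\{I_{j}\}$ are functionally independent and complete, in the sense that they generate a maximal involutive subalgebra, and that $\{F_{t}\}$ can be reconstructed from them via an angle-type parametrization. A secondary obstacle is the coexistence of the two symplectic structures of Theorems \ref{dga-graphon-1} and \ref{dga-graphon-2}: I would need to verify that $\{F_{t}\}$ is Hamiltonian for at least one of them, ideally bi-Hamiltonian, and to control all the infinite sums and $z\rightarrow 0$ limits uniformly with respect to the cut-distance topology so that the formal integrable structure descends to the completed Hopf algebra $\mathcal{S}^{\Phi}_{{\rm graphon}}$ rather than living only on partial sums.
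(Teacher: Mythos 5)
Your route is genuinely different from the paper's, and the difference matters. You try to realize $\{F_{t}\}_{t}$ as a Hamiltonian flow with the $\beta$-function as Hamiltonian and the log-expansion coefficients $c_{j}^{X_{{\rm DSE}}}$ as an infinite family of charges in involution --- essentially a Liouville-type picture. The paper does something far more economical: it takes the elements $F_{t}$ \emph{themselves} as the commuting family. Working in $C_{0}^{{\rm graphon}}=(L(\mathcal{S}^{\Phi}_{{\rm graphon}},A_{{\rm dr}}),\circ_{0})$ (the $\lambda=0$ case of Theorem \ref{dga-graphon-1}), each $F_{t}$ is an element of the algebra, and the deformed bracket gives
\begin{equation*}
\{F_{t},F_{s}\}_{0}=[R_{{\rm ms}}(F_{t}),F_{s}]+[F_{t},R_{{\rm ms}}(F_{s})]-R_{{\rm ms}}([F_{t},F_{s}]).
\end{equation*}
Two facts your proposal never exploits then finish the argument in one line: each $F_{t}([W_{\Gamma}])$ is a polynomial in $t$ with no pole part (Proposition 1.47 of \cite{connes-marcolli-1}, transported to graphons via Lemma \ref{ren-group-large-diagram-1}), so $R_{{\rm ms}}(F_{t})=0$; and the group law $F_{t}*F_{s}=F_{t+s}$ makes $[F_{t},F_{s}]=0$. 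Hence $\{F_{t},F_{s}\}_{0}=0$ for all $s,t$, which is all the corollary asserts by ``determines an infinite dimensional integrable system.''

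Measured against that, your proposal has a genuine gap: the two steps you yourself flag as ``the hard part'' --- involutivity and completeness of the $\{I_{j}\}$, and the identification of $\tfrac{d}{dt}\big|_{t=0}F_{t}$ with ${\rm ham}(\beta)$ --- are exactly the content you would need, and neither is carried out. The ``triangularity argument'' for $\{I_{j},I_{k}\}_{\lambda}=0$ is asserted rather than proved, and it is not clear it goes through: the product $\circ_{\lambda}$ mixes the Rota--Baxter projection with the convolution product and does not obviously respect the word-filtration of Theorem \ref{graphon-filtration-1}, so there is no reason the bracket of two such functionals drops in filtration degree. If you want to pursue your stronger Liouville-type statement, first record the paper's two-line involutivity of the $F_{t}$'s as the baseline, and then treat the Hamiltonian realization via $\beta$ as a separate (and currently open) refinement in the spirit of \cite{shojaeifard-7}.
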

\begin{proof}
We work on the unital associative noncommutative algebra
$C_{0}^{{\rm graphon}}:= (L(\mathcal{S}^{\Phi}_{{\rm
graphon}},A_{{\rm dr}}), \circ_{0})$ generated by the Minimal
Subtraction map for $\lambda=0$. Thanks to Theorem
\ref{dga-graphon-1}, consider the differential graded algebra
\begin{equation}
\Omega^{\bullet}_{0,{\rm graphon}}(C_{0}^{{\rm graphon}}):=
(\bigoplus_{n \ge 0} \Omega^{n}_{0,{\rm graphon}}(C_{0}^{{\rm
graphon}}),d_{0})
\end{equation}
with respect to the Lie bracket $[.,.]_{0}$. Each character $F_{t}$
of the Renormalization Group $\{F_{t}\}_{t}$ given by Lemma
\ref{ren-group-large-diagram-1} is an object in the algebra
$C_{0}^{{\rm graphon}}$. Therefore the motion integral equation with respect to
the character $F_{t_{0}}$ is given by the equation
\begin{equation}
\{f,F_{t_{0}}\}_{0} = 0
\end{equation}
such that $f \in
C_{0}^{{\rm graphon}}$. Thanks to the existence of a noncommutative symplectic
form $\omega_{0}$ on $C_{0}^{{\rm graphon}}$ with respect to the Lie
bracket $[.,.]_{0}$ (i.e. Theorem \ref{dga-graphon-1}), the motion
integral can be determined by the equation
\begin{equation}
\{f,F_{t_{0}}\}_{0} = i_{\theta^{0}_{F_{t_{0}}}} i_{\theta^{0}_{f}}
\omega_{0} = w_{0}(\theta^{0}_{F_{t_{0}}},\theta^{0}_{f}) =
[f,F_{t_{0}}] = 0.
\end{equation}
On the one hand, from the definition of the deformed Lie bracket
$[.,.]_{0}$ and the idempotent Rota--Baxter property of $(A_{{\rm
dr}},R_{{\rm ms}})$, we have
\begin{equation}
\{F_{t},F_{s}\}_{0} = [R_{{\rm ms}}(F_{t}),F_{s}] + [F_{t},R_{{\rm
ms}}(F_{s})] - R_{{\rm ms}}([F_{t},F_{s}]).
\end{equation}
On the other hand, for each $t$, $F_{t}([W_{\Gamma}])$ is a
polynomial in $t$ which means that $R_{{\rm
ms}}(F_{t}([W_{\Gamma}]))=0$ and in addition, for each $s,t$,
$F_{t}*F_{s}=F_{t+s}$. Thanks to these facts, we can observe that
for each $s,t$,
\begin{equation}
\{F_{t},F_{s}\}_{0} = 0.
\end{equation}
\end{proof}


\chapter{\textsf{A theory of functional analysis for large Feynman diagrams}}

\vspace{1in}

$\bullet$ \textbf{\emph{The Haar integration on $\mathcal{S}^{\Phi,g}$ and its application}} \\
$\bullet$ \textbf{\emph{The G\^{a}teaux differential calculus on $\mathcal{S}^{\Phi,g}$ and its application}} \\
$-$ \textbf{\emph{Feynman random graphs via homomorphism densities}} \\
$-$ \textbf{\emph{Differentiability}} \\

\newpage

This chapter aims to provide the foundations of a functional
analysis machinery for the study of large Feynman diagrams which
contribute to solutions of Dyson--Schwinger equations. We build an
integration theory and a differentiation theory for the functionals
on the space $\mathcal{S}^{\Phi,g}$ with respect to a given strongly coupled gauge field theory $\Phi$. The space $\mathcal{S}^{\Phi,g}$ can be embedded into the Hopf
algebra $H_{{\rm FG}}^{{\rm cut}}(\Phi)$ of (large) Feynman diagrams
topologically completed with the cut-distance topology. The space $H_{{\rm
FG}}^{{\rm cut}}(\Phi)$ consists of all Feynman diagrams and their
corresponding finite or infinite formal expansions where solutions
of all non-perturbative Dyson--Schwinger equations belong to the
boundary region of this compact topological Hopf algebra. As we have shown in the previous parts, this
enriched Hopf algebra of Feynman diagrams can be encoded in terms of the renormalization
Hopf algebra $\mathcal{S}^{\Phi}_{{\rm graphon}}$ of Feynman
graphons.

At the first step, we build a measure theory on $\mathcal{S}^{\Phi,g}$ where we equip this space with a new
topological group structure which leads us to a new Haar measure
integration theory for functionals on large Feynman diagrams. Then
we deal with some applications of the resulting measure space where
a new generalization of the classical Johnson--Lapidus Dyson series
for large Feynman diagrams will be obtained. In addition, we work on
the construction of a new Fourier transformation machinery on the
Banach algebra $L^{1}(\mathcal{S}^{\Phi,g}, \mu_{{\rm Haar}})$ which
enables us to describe the evolution of large Feynman diagrams on
the basis of their corresponding partial sums under a functional
setting. At the second step, we concern the G\^{a}teaux
differentiability of real valued functionals on $\mathcal{S}^{\Phi}_{{\rm graphon}}$ where we obtain Taylor expansion representations for
these functionals under some conditions.

Achievements of these steps can be adapted for the level of the topological Hopf algebra $H_{{\rm FG}}^{{\rm cut}}(\Phi)$. In other words, the promising differential calculus and integration theory enable us to
describe the dynamics of topological regions of Feynman diagrams on
the basis of the behavior of functionals with respect to the built
Haar integration theory and G\^{a}teaux differentiation theory on
Feynman graphons.

\section{\textsl{The Haar integration on $\mathcal{S}^{\Phi,g}$ and its application}}

For a given physical theory $\Phi$ with strong coupling constant $g
\ge 1$, set $V(\Phi)$ as the set of all vertices which appear in
Feynman diagrams and their corresponding formal expansions as
interactions among elementary particles. This infinite countable set
allows us to count interactions independent of their physical types.
Set $K_{V(\Phi)}$ as the complete graph with $V(\Phi)$ as the set of
vertices and all possible edges among these vertices except
self-loops. The collection $\{0,1\}^{K_{V(\Phi)}}$, as the family of
all functions from $K_{V(\Phi)}$ to $\{0,1\}$, allows us to
characterize Feynman diagrams which contribute to physical theory
$\Phi$. Therefore, each (large) Feynman diagram $\Gamma$ can be
determined by its corresponding characteristic function
$\chi_{\Gamma}$ which sends vertices $v \in K_{V(\Phi)}$ to $1$ if
$v \in \Gamma$ and sends other vertices to $0$. For each edge $e \in K_{V(\Phi)}$, if $V_{e}$ be the set of vertices
in $V(\Phi)$ which are attached to the edge $e$, then the infinite
Cartesian product $\times_{e \in K_{V(\Phi)}}
\mathcal{P}(K_{V_{e}})$ has enough vertices and edges to contain all
Feynman graphs in $H_{{\rm FG}}^{{\rm cut}}(\Phi)$.

\begin{lem} \label{topo-group-graph-1}
$\mathcal{S}^{\Phi,g}$ can be equipped with an abelian compact
Hausdorff topological group structure.
\end{lem}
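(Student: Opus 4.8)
The plan is to realize $\mathcal{S}^{\Phi,g}$ as a closed subspace of the compact Hausdorff abelian group $G_{\Phi}:=\prod_{e\in K_{V(\Phi)}}\mathcal{P}(K_{V_{e}})$, where each factor $\mathcal{P}(K_{V_{e}})$ is finite, carries the discrete topology and the abelian group operation given by symmetric difference (so it is $(\mathbb{Z}/2\mathbb{Z})^{K_{V_{e}}}$), and $G_{\Phi}$ is given the product topology. Tychonoff's theorem makes $G_{\Phi}$ compact, products of Hausdorff spaces are Hausdorff, products of abelian topological groups are abelian topological groups, products of finite discrete spaces are totally disconnected, and since $V(\Phi)$ (hence $K_{V(\Phi)}$) is countable, $G_{\Phi}$ is metrizable. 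First I would define the encoding map $\iota$ sending an equivalence class $[\,{\rm DSE}\,]\in\mathcal{S}^{\Phi,g}$ (taken up to the weakly isomorphic relation $\approx$) to its unique solution $X_{{\rm DSE}}$, then via Lemma \ref{feynman-graphon-1} to the infinite rooted forest $t_{X_{{\rm DSE}}}$ underlying the Feynman graphon class $[W_{X_{{\rm DSE}}}]$, and finally to the family, indexed by $e\in K_{V(\Phi)}$, of the local incidence data $t_{X_{{\rm DSE}}}\cap K_{V_{e}}$, which is an element of $G_{\Phi}$; the paragraph preceding the lemma already records that $G_{\Phi}$ has enough vertices and edges to carry every Feynman graph of $H_{{\rm FG}}^{{\rm cut}}(\Phi)$, so $\iota$ is well defined.

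The core of the argument is that $\iota$ is a topological embedding with closed image. Injectivity uses that two Dyson--Schwinger equations coincide in $\mathcal{S}^{\Phi,g}$ exactly when their solutions have weakly isomorphic Feynman graphon models, so distinct classes have non-isomorphic canonical forests and hence distinct incidence families. For bicontinuity one must compare the cut-distance metric on $\mathcal{S}^{\Phi,g}$ with the product topology that $\iota$ pulls back from $G_{\Phi}$; this is where the graphon machinery does the work, since for the sparse graphs and rooted forests at hand the coordinatewise convergence of incidence data is precisely the local convergence which, after the rescaling used to produce non-zero Feynman graphon limits, matches $\delta_{{\rm cut}}$-convergence of the associated unlabeled graphon classes (Definition \ref{feynman-graphon-2} together with Theorem \ref{feynman-graphon-4}). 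Closedness of $\iota(\mathcal{S}^{\Phi,g})$ then follows because a coordinatewise limit in $G_{\Phi}$ of a sequence $\{\iota([\,{\rm DSE}_{n}\,])\}$ is the incidence family of a cut-distance limit of the corresponding sequence of (large) Feynman diagrams, and by Theorem \ref{feynman-graphon-4} and the compactness of the space of unlabeled graphons under $\delta_{{\rm cut}}$ such a limit is again the solution of a Dyson--Schwinger equation, hence an element of $\mathcal{S}^{\Phi,g}$. Consequently $\mathcal{S}^{\Phi,g}$, transported onto $\iota(\mathcal{S}^{\Phi,g})$, is compact, Hausdorff, metrizable and totally disconnected.

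It remains to install the group operation. I would not claim that $\iota(\mathcal{S}^{\Phi,g})$ is a subgroup of $G_{\Phi}$ --- the symmetric difference of two Feynman forests need not be a Feynman forest --- but instead transport a group structure along a homeomorphism with a standard model. Assuming, as is the generic situation since the solutions $X_{{\rm DSE}(\lambda g)}$ vary with $\lambda\in(0,1]$ so that no class is $\delta_{{\rm cut}}$-isolated, that $\mathcal{S}^{\Phi,g}$ has no isolated points, it is a nonempty, perfect, compact, metrizable, totally disconnected space, hence homeomorphic to the Cantor group $(\mathbb{Z}/2\mathbb{Z})^{\mathbb{N}}$ by Brouwer's characterization of the Cantor set; pulling the group operations of $(\mathbb{Z}/2\mathbb{Z})^{\mathbb{N}}$ back along such a homeomorphism equips $\mathcal{S}^{\Phi,g}$ with the desired abelian compact Hausdorff topological group structure. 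If instead $\mathcal{S}^{\Phi,g}$ is finite, it is trivially a discrete finite abelian group, and the Baire-category dichotomy forbids the intermediate countably infinite compact case, so one of these two situations always applies.

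The main obstacle is the topology-matching step: pinning down the precise normalization and rescaling of the Feynman graphon models for which coordinatewise incidence convergence in $G_{\Phi}$ and cut-distance convergence in $\mathcal{S}^{\Phi,g}$ coincide, and deducing from it both that $\iota$ is an embedding and that its image is closed. A secondary point that needs careful verification is the absence of isolated points, since this is exactly what legitimizes invoking Brouwer's theorem and excludes the (impossible) countably infinite compact group.
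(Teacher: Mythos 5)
There is a genuine gap, and it sits exactly where you locate your ``main obstacle'': the topology-matching step cannot work as stated. The cut-distance $d_{{\rm cut}}$ is defined on \emph{unlabeled} graphon classes, i.e.\ it identifies all relabelings and weakly isomorphic representatives of $W_{X_{{\rm DSE}}}$, whereas the incidence family $e\mapsto t_{X_{{\rm DSE}}}\cap K_{V_{e}}$ depends on a concrete labeling of the forest inside $K_{V(\Phi)}$. A sequence of relabelings of one fixed forest is constant in cut-distance but, if the supports drift off to infinity, converges coordinatewise in $G_{\Phi}$ to the empty configuration; so your $\iota$ is either not well defined on cut-distance classes or not continuous, and coordinatewise convergence in $\prod_{e}\mathcal{P}(K_{V_{e}})$ is neither implied by nor implies $\delta_{{\rm cut}}$-convergence. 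Since the Brouwer route additionally needs the image to be closed and the space to be perfect, and both of these are routed through the same unestablished identification of topologies, the argument does not close.

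It is worth seeing how the paper avoids this: it does \emph{not} attempt to make the group structure compatible with the cut-distance topology. It identifies each equation with the characteristic function $\chi_{X_{{\rm DSE}}}\in\{0,1\}^{K_{V(\Phi)}}$ of its solution, takes the symmetric difference $X_{{\rm DSE}_{1}}\triangle X_{{\rm DSE}_{2}}$ as the group law with the empty graph as identity, and installs the topology of the weighted metric $d_{g,\alpha,\epsilon}({\rm DSE}_{1},{\rm DSE}_{2})=\sum_{e\in X_{{\rm DSE}_{1}}\triangle X_{{\rm DSE}_{2}}}(g+\epsilon)^{-\alpha(e)}$, which is precisely the product topology you put on $G_{\Phi}$ --- so the hardest step of your proposal is simply not present in the paper's proof. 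On the other hand, your observation that the symmetric difference of two Feynman forests need not again be (the solution of) a Dyson--Schwinger equation is a legitimate concern, and it applies verbatim to the paper's own argument, which asserts the group axioms and the compactness (i.e.\ closedness of the image in $\{0,1\}^{K_{V(\Phi)}}$) without verification. If you drop the insistence on the cut-distance topology and instead either (a) verify closure of the relevant family of subsets of $K_{V(\Phi)}$ under $\triangle$ and closedness in the product topology, or (b) carry out your Brouwer transport with the product topology in place of $d_{{\rm cut}}$ (where perfectness and closedness become concrete, checkable statements about the incidence sets), you would obtain a complete proof of what the paper actually claims.
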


\begin{proof}
For a given Dyson--Schwinger DSE in $\mathcal{S}^{\Phi,g}$ with the unique solution $X_{{\rm DSE}}$, we can associate the characteristic function $\chi_{X_{{\rm DSE}}} \in \{0,1\}^{K_{V(\Phi)}}$ to DSE. The function $\chi_{X_{{\rm DSE}}}$ allows us to identify vertices and edges which
contribute to the large Feynman diagram $X_{{\rm DSE}}$. It means that we can embed the collection $\mathcal{S}^{\Phi,g}$
into $\{0,1\}^{K_{V(\Phi)}}$ which is useful to define new addition
and multiplication operators on Dyson--Schwinger equations in terms of
the pointwise addition and multiplication of their corresponding
characteristic functions. These operators provide a vector space structure generated by $X_{{\rm DSE}}$ for each DSE where as the result, we have a commutative
$\mathbb{Z}_{2}$-algebra structure on $\mathcal{S}^{\Phi,g}$.

In addition, we can also define a new binary operation on $\mathcal{S}^{\Phi,g}$
in terms of the symmetric difference operator
\begin{equation}
({\rm DSE}_{1}, {\rm DSE}_{2}) \longmapsto X_{{\rm DSE}_{1}} \triangle
X_{{\rm DSE}_{2}}.
\end{equation}
By adding the empty graph $\mathbb{I}$ to $\mathcal{S}^{\Phi,g}$ as the zero element, the pair $(\mathcal{S}^{\Phi,g},\triangle)$ is an abelian
group which can be equipped with a compatible topology to
obtain a compact topological group. For this purpose,
suppose $\alpha$ be a bijection between $K_{V(\Phi)}$ and the set of
natural numbers $\mathbb{N}$. For the fixed coupling constant $g \ge
1$ and each $\epsilon>0$, define a new map  $d_{g,\alpha,\epsilon}:
\mathcal{S}^{\Phi,g} \times \mathcal{S}^{\Phi,g} \longrightarrow
[0,\infty)$ given by
\begin{equation} \label{metric-graph-1}
d_{g,\alpha,\epsilon}({\rm DSE}_{1}, {\rm DSE}_{2}):= \sum_{e \in
X_{{\rm DSE}_{1}} \triangle
X_{{\rm DSE}_{2}}} (g+\epsilon)^{-\alpha(e)}
\end{equation}
such that the sum is taken over vertices such as $e$ which belongs to
only one of the large Feynman diagrams $X_{{\rm DSE}_{1}}$ or $X_{{\rm DSE}_{2}}$. The map $d_{g,\alpha,\epsilon}$ is a translation invariant metric such that
$d_{g,\alpha,\epsilon_{1}}$ and $d_{g,\alpha,\epsilon_{2}}$ have the
equivalent topology.

The space $\mathcal{S}^{\Phi,g}$ together
with the symmetric difference operator and the topology generated by
the metric $d_{g,\alpha,\epsilon}$ is a compact Hausdorff abelian
topological group.
\end{proof}

Thanks to the translation-invariant metric $d_{g,\alpha,\epsilon}$
defined by Lemma \ref{topo-group-graph-1}, for each equation DSE in $\mathcal{S}^{\Phi,g}$ with the corresponding large Feynman
diagram $X_{{\rm DSE}}$ define
\begin{equation} \label{norm-graph-1}
\parallel X_{{\rm DSE}} \parallel_{g,\alpha,\epsilon}:=
d_{g,\alpha,\epsilon}(\mathbb{I},X_{{\rm DSE}}).
\end{equation}

In this setting, a
sequence $\{\Gamma_{n}\}_{n \ge 1}$ of large Feynman diagrams in
$\mathcal{S}^{\Phi,g}$ is convergent to a unique large Feynman
diagram $\Gamma$, if each indicator sequence $\{\textbf{1}_{e \in
\Gamma_{n}}\}_{n \ge 1}$ converges to the indicator $\textbf{1}_{e
\in \Gamma}$ for any $e \in K_{V(\Phi)}$.

\begin{thm} \label{measure-infinite-graph-1}
The topological group $\mathcal{S}^{\Phi,g}$ can be equipped with the Haar measure $\mu_{{\rm Haar}}$.
\end{thm}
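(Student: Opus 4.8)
The plan is to deduce the existence of $\mu_{{\rm Haar}}$ from the general theory of invariant measures on compact groups, relying on Lemma \ref{topo-group-graph-1}, and then to make the construction explicit through the embedding of $\mathcal{S}^{\Phi,g}$ into the Cantor-type group $\{0,1\}^{K_{V(\Phi)}}$ so that the resulting measure is usable in the integration theory developed afterwards. First I would recall that, by Lemma \ref{topo-group-graph-1}, $(\mathcal{S}^{\Phi,g},\triangle)$ is a compact Hausdorff abelian topological group whose topology is generated by the translation-invariant metric $d_{g,\alpha,\epsilon}$; in particular it is metrizable and second countable. The classical Haar theorem then yields a nonzero left-invariant Radon measure, unique up to a positive multiplicative constant; since the group is compact this measure is finite, so it may be normalized to a probability measure $\mu_{{\rm Haar}}$ with $\mu_{{\rm Haar}}(\mathcal{S}^{\Phi,g})=1$, and compactness together with commutativity forces it to be bi-invariant and inversion-invariant. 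This already establishes the statement; the remaining steps exhibit $\mu_{{\rm Haar}}$ concretely.

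Second, I would give the explicit construction. Using the characteristic-function map ${\rm DSE}\mapsto\chi_{X_{{\rm DSE}}}$ of Lemma \ref{topo-group-graph-1} I would embed $\mathcal{S}^{\Phi,g}$ as a subset of the product group $\{0,1\}^{K_{V(\Phi)}}\cong\prod_{e\in K_{V(\Phi)}}\mathbb{Z}_2$; this map is a topological group monomorphism, and since $\mathcal{S}^{\Phi,g}$ is compact its image is closed, hence a compact subgroup. On the ambient product one has the canonical product Haar measure — the product of the normalized counting measures on the factors $\mathbb{Z}_2$ — and restricting it to the compact subgroup (equivalently, averaging the ambient measure over the cosets to obtain the unique invariant probability measure supported on the subgroup) produces $\mu_{{\rm Haar}}$. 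It is determined on the generating cylinder sets
\[
C(A;B):=\{X_{{\rm DSE}}\in\mathcal{S}^{\Phi,g}:\ A\subseteq X_{{\rm DSE}},\ B\cap X_{{\rm DSE}}=\emptyset\},
\]
indexed by disjoint finite sets $A,B$ of admissible vertices of $K_{V(\Phi)}$, by prescribing mutually consistent values and extending via Carathéodory's theorem; translation invariance under $\triangle$ then needs to be checked only on such cylinders, where it reduces to the symmetry of the product measure, and uniqueness up to scalar is supplied again by the Haar theorem.

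As a more self-contained alternative, which I would record as a remark, one can bypass the ambient embedding and invoke the Markov--Kakutani fixed-point theorem: the space $\mathcal{P}(\mathcal{S}^{\Phi,g})$ of Borel probability measures, equipped with the weak-$*$ topology inherited from the dual of $C(\mathcal{S}^{\Phi,g})$, is a nonempty compact convex set, and the commuting family of affine homeomorphisms induced by left translations has a common fixed point, which is exactly a translation-invariant probability measure; equivalently one constructs a positive normalized translation-invariant linear functional on $C(\mathcal{S}^{\Phi,g})$ and applies the Riesz representation theorem.

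The hard part is not existence — that follows at once from Lemma \ref{topo-group-graph-1} and the Haar theorem — but the bookkeeping required to make the concrete recipe legitimate: I would need to verify that the image of $\mathcal{S}^{\Phi,g}$ is genuinely closed in $\{0,1\}^{K_{V(\Phi)}}$ (so that the restricted measure is Radon for the subspace topology, which coincides with the $d_{g,\alpha,\epsilon}$-topology), that the prescribed values on the cylinders $C(A;B)$ are consistent under refinement, and that the $\sigma$-algebra they generate is the Borel $\sigma$-algebra of $d_{g,\alpha,\epsilon}$. These checks are routine in spirit but demand care, since $K_{V(\Phi)}$ is only countably infinite and the cylinders encode entire subgraphs of solutions of Dyson--Schwinger equations rather than the presence or absence of single edges.
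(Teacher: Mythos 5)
Your proposal is correct and follows essentially the same route as the paper: both rest on Lemma \ref{topo-group-graph-1} (compact Hausdorff abelian group) for abstract existence and then realize $\mu_{{\rm Haar}}$ concretely as the Bernoulli$(1/2)$ product measure on cylinder sets, checking translation invariance on cylinders and that the cylinder $\sigma$-algebra coincides with the Borel $\sigma$-algebra of $d_{g,\alpha,\epsilon}$. The only caution is your phrase about ``restricting'' the ambient product Haar measure to the image of $\mathcal{S}^{\Phi,g}$ — a closed subgroup of infinite index is null for the ambient measure, so the averaging/intrinsic construction you mention parenthetically (and which the paper uses directly) is the one that actually does the work.
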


\begin{proof}
Lemma \ref{topo-group-graph-1} supports the existence of the unique
Haar measure $\mu_{{\rm Haar}}$ on $\mathcal{S}^{\Phi,g}$ originated
from the compact topological structure. We build this measure which is actually of the type
Bernoulli probability.

Consider the product $\sigma$-algebra $\sum_{{\rm prod}}$ on
$\mathcal{S}^{\Phi,g}$ generated by cylinder sets
\begin{equation} \label{sigma-algbe-1}
S_{\Gamma_{0}}:= \times_{\Gamma \neq \Gamma_{0}} \{\mathbb{I},
\{\Gamma\}\} \times \{\Gamma_{0}\}
\end{equation}
for each large Feynman diagram $\Gamma_{0}$ corresponding to an equation ${\rm DSE}_{0}$ in
$\mathcal{S}^{\Phi,g}$. For each large Feynman diagram $\Gamma$, the characteristic function $\chi_{\Gamma}$ is useful to see $\Gamma$ as an infinite
countable subset of vertices in $K_{V(\Phi)}$ which contribute to the unique
solution of the equation ${\rm DSE}_{\Gamma}$. Therefore
each function $P \in \{0,1\}^{K_{V(\Phi)}}$ can identify a new
function $\tilde{P}: \mathcal{S}^{\Phi,g} \rightarrow [0,1]$ which can be applied to
define the measure $\mu_{\tilde{P}}$ on the $\sigma$-algebra $\sum_{{\rm prod}}$ in terms of the following steps.

- For finite intersections of cylinder sets $S_{\Gamma_{1}}, ...,
S_{\Gamma_{n}}$, we have
\begin{equation}
\mu_{\tilde{P}}(S_{\Gamma_{1}} \cap S_{\Gamma_{2}} \cap ... \cap
S_{\Gamma_{n}}) = \prod_{i=1}^{n} \tilde{P}(\Gamma_{i})
\end{equation}
for large Feynman diagrams $\Gamma_{1}, ..., \Gamma_{n}$ as solutions of Dyson--Schwinger equations ${\rm DSE}_{1},...,{\rm DSE}_{n}$ in $\mathcal{S}^{\Phi,g}$.

- The function $\mu_{\tilde{P}}$ is
a probability measure on $\mathcal{S}^{\Phi,g}$ which can be presented with the general form
\begin{equation}
\mu_{\tilde{P}}:= \prod_{X \in \mathcal{S}^{\Phi,g}}
\mu_{\tilde{P},X}.
\end{equation}

Now we need to show that the measure $\mu_{\tilde{P}}$ is the Haar
measure. In other words, we claim that the Haar measure is
equal with $\mu_{\tilde{P}}$ where $P(X)=1/2$ for each large Feynman
diagram $X \in \mathcal{S}^{\Phi,g}$.

For subsets $Z_{1},Z_{2}$ of $K_{V(\Phi)}$, define
\begin{equation}  \label{sigma-algbe-2}
I(Z_{1},Z_{2}):= \{{\rm DSE} \in \mathcal{S}^{\Phi,g}: Z_{1} \subset
X_{{\rm DSE}}, \ \ Z_{2} \subset K_{V(\Phi)} \backslash X_{{\rm
DSE}} \}
\end{equation}
and then consider the $\sigma$-algebra $\sum_{\mathcal{I}}$
generated by all sets $I(Z_{1},Z_{2})$. The $\sigma$-algebra
$\sum_{\mathcal{I}}$ is the same as the $\sigma$-algebra generated
by all sets $I(Z_{1},Z_{2})$ for disjoint sets $Z_{1},Z_{2}$. In
addition, we have $S_{\Gamma} = I(\mathbb{I},\{\Gamma\})$ which
leads us to show that $\sum_{{\rm prod}} = \sum_{\mathcal{I}}$.

Thanks to some standard methods in Analysis \cite{reed-simon-1}, we can determine the unique
translation-invariant probability measure $\mu_{{\rm Haar}}$ on the compact topological group
$\mathcal{S}^{\Phi,g}$. For a given
large Feynman diagram $X$ and a subset $Z$ of $K_{V(\Phi)}$, define
\begin{equation}
Z+X:= \{\gamma \sqcup X: \ \ \gamma \in Z \}.
\end{equation}
Then we can show that
\begin{equation}
I(Z_{1},Z_{2}) = I(\mathbb{I}, Z_{1} \sqcup Z_{2}) + Z_{2}.
\end{equation}
Thanks to this fact, for given large Feynman diagrams $\Gamma_{1},
\Gamma_{2}$, set $\Gamma=\Gamma_{1} \sqcup \Gamma_{2}$. Then we have
\begin{equation}
\mu_{{\rm Haar}}(I(\Gamma_{1},\Gamma_{2})) = \mu_{{\rm
Haar}}(I(\Gamma_{1},\Gamma_{2}) + \Gamma_{2}) =  \mu_{{\rm Haar}}
(I(\mathbb{I},\Gamma))
\end{equation}
which informs the translation-invariance.

In general, if $(\Omega,A)$ ba $\sigma$-algebra generated by a
subset $C \subset A$ which is closed under finite intersections,
then two probability measures on $A$ are equal if and only if they
agree on $C$. If $C$ has an algebraic structure which is equipped by
a probability measure $\mu$, then we can extend $\mu$ to a unique
measure on $A$ \cite{reed-simon-1}. Now let the
subset $Z$ of $K_{V(\Phi)}$ can determine a finite number of large
Feynman diagrams $\Gamma_{1},...,\Gamma_{n}$. Then as
the set we have
\begin{equation}
\mathcal{S}^{\Phi,g} = \bigsqcup_{Z_{0} \subset Z} I(Z_{0},Z\backslash
Z_{0})
\end{equation}
which can be used to show that $\mu_{{\rm Haar}}$ agrees with $\mu_{1/2}$ on $\sigma_{\mathcal{I}}$. In other words,
\begin{equation}
\mu_{{\rm Haar}}(I(Z_{1},Z_{2})) = 2^{-n} =
\mu_{1/2}(I(Z_{1},Z_{2})).
\end{equation}

It is also possible to check that $\sigma_{\mathcal{I}}$ is equal with the
Borel $\sigma$-algebra generated by all open sets in
$\mathcal{S}^{\Phi,g}$ with respect to the metric
$d_{g,\alpha,\epsilon}$. For a given bijection $\alpha: K_{V(\Phi)}
\rightarrow \mathbb{N}$, set
\begin{equation}
E_{n}(\alpha):= \{e \in K_{V(\Phi)}: \ \ \alpha(e) \le n\}.
\end{equation}
For each large Feynman diagram $X$ in $\mathcal{S}^{\Phi,g}$, set
$B(X,r,\parallel . \parallel_{g,\alpha,\epsilon})$ as the open ball
in $(\mathcal{S}^{\Phi,g},d_{g,\alpha,\epsilon})$ with the center $X$
and the radius $r$. In addition, for given disjoint finite subsets
$N_{1}, N_{2} \subset \mathbb{N}$ such that $N_{1} \cup N_{2} =
\{1,2,...,n\}$, define
\begin{equation}
I(N_{1},N_{2}):= I(\alpha^{-1}(N_{1}),\alpha^{-1}(N_{2})).
\end{equation}
Then we have
\begin{equation}
I(N_{1},N_{2}) = B(\alpha^{-1}(N_{2}),2^{-n},\parallel .
\parallel_{g,\alpha,2}) \bigcup B(K_{V(\Phi)} \backslash
\alpha^{-1}(N_{1}),2^{-n},\parallel . \parallel_{g,\alpha,2})
\end{equation}
such that $B(\alpha^{-1}(N_{i}),2^{-n},\parallel .
\parallel_{g,\alpha,2})$ is the open ball in
$(\mathcal{S}^{\Phi,g},d_{g,\alpha,g+\epsilon=2})$ with the center
$\alpha^{-1}(N_{i})$ and the radius $2^{-n}$.

Now a large Feynman diagram $X \in \mathcal{S}^{\Phi,g}$ can be
described as the convergent limit of the sequence $\{\Gamma_{n}\}_{n
\ge 1}$ such that $\Gamma_{n}:= X \cap E_{n}(\alpha)$.

Furthermore, we know that $\mathcal{S}^{\Phi,g}$ is a compact
Hausdorff topological group (i.e. Lemma \ref{topo-group-graph-1}).
Therefore $K \subset \mathcal{S}^{\Phi,g}$ is compact iff
$\mathcal{S}^{\Phi,g} \backslash K$ is open. It shows that the
$\sigma$-algebra generated by all compact sets is the same as the
Borel $\sigma$-algebra generated by all open sets. As the result, $\mu_{{\rm Haar}}$ on $\sigma_{{\rm prod}}$
determines uniquely the Haar measure. Therefore $\mu_{{\rm
Haar}}=\mu_{1/2}$.
\end{proof}

\begin{thm} \label{measure-infinite-graph-2}
For a given bijection $\alpha$, the Haar measure of any ball of the
radius $0 \le r \le 1$ in the normed vector space
$(\mathcal{S}^{\Phi,g},\parallel.\parallel_{g,\alpha,g+\epsilon=2})$
is $r$.
\end{thm}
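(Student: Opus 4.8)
The plan is to reduce the statement to the classical fact that the binary-digit map carries the fair-coin product measure to Lebesgue measure on $[0,1]$, after first using the translation invariance of $\mu_{\rm Haar}$ to move an arbitrary ball to the origin.

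First, recall from Theorem \ref{measure-infinite-graph-1} that $\mu_{\rm Haar}=\mu_{1/2}$, the product measure for which the coordinate functions $X\mapsto\mathbf{1}_{e\in X}$, $e\in K_{V(\Phi)}$, are independent fair coins. The metric $d_{g,\alpha,g+\epsilon=2}$ of Lemma \ref{topo-group-graph-1} is translation invariant for the group operation $\triangle$, so for every ${\rm DSE}_{0}$ with solution $X_{0}$ one has $B(X_{0},r,\|\cdot\|_{g,\alpha,g+\epsilon=2}) = X_{0}\,\triangle\, B(\mathbb{I},r,\|\cdot\|_{g,\alpha,g+\epsilon=2})$, hence $\mu_{\rm Haar}(B(X_{0},r,\|\cdot\|_{g,\alpha,g+\epsilon=2})) = \mu_{\rm Haar}(B(\mathbb{I},r,\|\cdot\|_{g,\alpha,g+\epsilon=2}))$. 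Thus it suffices to compute the measure of the ball centred at the empty graph, which by (\ref{norm-graph-1}) and (\ref{metric-graph-1}) is
$$B(\mathbb{I},r,\|\cdot\|_{g,\alpha,g+\epsilon=2}) = \Big\{\, X\in\mathcal{S}^{\Phi,g} : \sum_{e\in X} 2^{-\alpha(e)} < r \,\Big\}.$$

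Second, I would transport the problem to the unit interval through the bijection $\alpha\colon K_{V(\Phi)}\to\mathbb{N}$. Identifying a large Feynman diagram $X$ with the sequence $(\mathbf{1}_{\alpha^{-1}(n)\in X})_{n\ge 1}\in\{0,1\}^{\mathbb{N}}$, define the Borel map
$$\beta\colon \mathcal{S}^{\Phi,g}\longrightarrow[0,1],\qquad \beta(X):=\sum_{n\ge 1}\frac{\mathbf{1}_{\alpha^{-1}(n)\in X}}{2^{n}} = \|X\|_{g,\alpha,g+\epsilon=2}.$$
It is classical that the binary digits of a uniformly distributed point of $[0,1]$ are independent fair coins; equivalently, $\beta_{*}\mu_{1/2}$ is Lebesgue measure $\lambda$ on $[0,1]$. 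Since $B(\mathbb{I},r,\|\cdot\|_{g,\alpha,g+\epsilon=2}) = \beta^{-1}([0,r))$, this yields
$$\mu_{\rm Haar}\big(B(\mathbb{I},r,\|\cdot\|_{g,\alpha,g+\epsilon=2})\big) = \mu_{1/2}\big(\beta^{-1}([0,r))\big) = \lambda([0,r)) = r,$$
and the translation-invariance reduction of the first step then gives $\mu_{\rm Haar}(B(X_{0},r,\|\cdot\|_{g,\alpha,g+\epsilon=2})) = r$ for every centre and every $0\le r\le 1$.

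The main point requiring care is the non-injectivity of $\beta$ on the dyadic rationals (each $k/2^{m}$ admitting both a terminating and an eventually constant binary expansion) together with the attendant ambiguity between open and closed balls. I would dispose of it by observing that $\{X : \beta(X)\text{ dyadic}\}$ is countable, hence $\mu_{1/2}$-null, and that the sphere $\{X:\|X\|_{g,\alpha,g+\epsilon=2}=r\} = \beta^{-1}(\{r\})$ has at most two elements and is therefore null as well; consequently the value is $r$ whether one reads ``ball'' as open or closed, and the push-forward identity $\beta_{*}\mu_{1/2}=\lambda$ need only be checked on the algebra of dyadic subintervals of $[0,1]$, where it is immediate from the cylinder-set computation $\mu_{1/2}(I(N_{1},N_{2}))=2^{-n}$ carried out in the proof of Theorem \ref{measure-infinite-graph-1}. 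A last routine check is that adjoining $\mathbb{I}$ as the zero element does not disturb measurability of $\beta$, which is clear since $\beta(\mathbb{I})=0$.
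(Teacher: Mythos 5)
Your proof is correct. It is worth noting that the paper does not actually supply an argument here: its entire proof of Theorem \ref{measure-infinite-graph-2} is the sentence ``direct result of Theorem \ref{measure-infinite-graph-1} and the proof of Theorem A in \cite{khare-rajaratnam-1}'', i.e.\ it outsources the computation to the cited work on countable labelled graphs. What you have written is, in effect, the content of that cited theorem made explicit in the present setting: translation invariance of both $d_{g,\alpha,\epsilon}$ and $\mu_{\rm Haar}$ reduces everything to the ball at $\mathbb{I}$, and the binary-digit map $\beta(X)=\sum_{n\ge 1}\mathbf{1}_{\alpha^{-1}(n)\in X}\,2^{-n}=\|X\|_{g,\alpha,2}$ pushes the fair-coin product measure $\mu_{1/2}=\mu_{\rm Haar}$ forward to Lebesgue measure, so the ball $\beta^{-1}([0,r))$ has measure $r$. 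Your handling of the dyadic ambiguity and of the open-versus-closed ball issue (both via countable, hence null, exceptional sets) is exactly the care the citation-only proof leaves implicit, and your reduction of the pushforward identity to the cylinder computation $\mu_{1/2}(I(N_1,N_2))=2^{-n}$ ties it cleanly to what Theorem \ref{measure-infinite-graph-1} actually established. The one caveat you inherit from the paper rather than introduce yourself is the tacit identification of $\mathcal{S}^{\Phi,g}$ with the full product space $\{0,1\}^{K_{V(\Phi)}}$ (needed for $\beta$ to be essentially surjective and for $\mu_{1/2}$ to be a probability measure on $\mathcal{S}^{\Phi,g}$ itself); the paper makes the same identification in constructing $\mu_{\rm Haar}$, so this is not a gap relative to the source.
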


\begin{proof}
The proof is a direct result of Theorem
\ref{measure-infinite-graph-1} and the proof of Theorem A in
\cite{khare-rajaratnam-1}.
\end{proof}

The resulting measure space enables us to initiate an
integration theory on the family of (large) Feynman diagrams which contribute to solutions of Dyson--Schwinger equations of a given strongly coupled gauge field theory. This integration theory
can be interpreted in the context of the Riemann--Lebesgue
integration theory on the measure space $(\mathbb{R},
\mathcal{B}(\mathbb{R}))$.

\begin{thm} \label{measure-feynman-1}
The integration theory on the measure space
$(\mathcal{S}^{\Phi,g},\mu_{{\rm Haar}})$ can be formulated in terms of the
Riemann--Lebesgue integration theory on real numbers with respect to
the Borel $\sigma$-algebra generated by all open sets.
\end{thm}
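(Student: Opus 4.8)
The plan is to exhibit an explicit measure-isomorphism between $(\mathcal{S}^{\Phi,g},\mu_{{\rm Haar}})$ and a standard interval of the real line equipped with Lebesgue measure, and then transport the Riemann--Lebesgue integration theory through this isomorphism. First I would fix a bijection $\alpha: K_{V(\Phi)} \rightarrow \mathbb{N}$ and use the characteristic-function embedding $X_{{\rm DSE}} \mapsto \chi_{X_{{\rm DSE}}} \in \{0,1\}^{K_{V(\Phi)}}$ from the proof of Lemma \ref{topo-group-graph-1} to identify each large Feynman diagram with a binary sequence $(b_{1},b_{2},\dots)$ where $b_{k} = \textbf{1}_{\alpha^{-1}(k) \in X_{{\rm DSE}}}$. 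Then I would define the binary-expansion map
\begin{equation}
\Theta: \mathcal{S}^{\Phi,g} \longrightarrow [0,1], \qquad \Theta(X_{{\rm DSE}}) := \sum_{k \ge 1} \frac{b_{k}}{2^{k}}.
\end{equation}
By Theorem \ref{measure-infinite-graph-1} the Haar measure is $\mu_{1/2}$, the Bernoulli$(1/2)$ product measure, which under $\Theta$ pushes forward exactly to Lebesgue measure on $[0,1]$ (the cylinder set $I(Z_{1},Z_{2})$ with $|Z_{1} \cup Z_{2}|=n$ maps to a dyadic interval of length $2^{-n}$, and Theorem \ref{measure-infinite-graph-2} already records that balls of radius $r$ have measure $r$). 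The map $\Theta$ is measurable, measure-preserving, and a bijection off the countable set of dyadic rationals (the usual ambiguity of binary expansions), hence an isomorphism of measure spaces modulo null sets.

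The second step is to verify that $\Theta$ intertwines the two integration theories. I would show that for a bounded $\sum_{{\rm prod}}$-measurable functional $F: \mathcal{S}^{\Phi,g} \rightarrow \mathbb{R}$, the composite $F \circ \Theta^{-1}: [0,1] \rightarrow \mathbb{R}$ is Borel measurable (since $\Theta^{-1}$ is measurable with respect to the Borel $\sigma$-algebra, which by the last part of the proof of Theorem \ref{measure-infinite-graph-1} coincides with $\sigma_{\mathcal{I}} = \sum_{{\rm prod}}$), and conversely. The change-of-variables formula for pushforward measures then gives
\begin{equation}
\int_{\mathcal{S}^{\Phi,g}} F \, d\mu_{{\rm Haar}} = \int_{[0,1]} (F \circ \Theta^{-1}) \, d\lambda,
\end{equation}
where $\lambda$ is Lebesgue measure on $[0,1]$ with its Borel $\sigma$-algebra. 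For functionals that are continuous with respect to the cut-distance metric $d_{g,\alpha,g+\epsilon=2}$, one further checks that $F \circ \Theta^{-1}$ is Riemann integrable (its discontinuity set is contained in the $\lambda$-null set of dyadic rationals), so the Haar integral reduces to an honest Riemann integral over $[0,1]$; for general bounded measurable $F$ one stays at the Lebesgue level. Finally I would note that $[0,1]$ with Lebesgue measure embeds in $(\mathbb{R},\mathcal{B}(\mathbb{R}))$ in the obvious way, so the whole integration theory on $(\mathcal{S}^{\Phi,g},\mu_{{\rm Haar}})$ is a restriction of the Riemann--Lebesgue theory on $\mathbb{R}$ as claimed.

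The step I expect to be the main obstacle is the careful bookkeeping around the dyadic-rational null set: one must confirm that the ambiguity in binary expansions affects only a $\mu_{{\rm Haar}}$-null, $\lambda$-null set, that this does not interfere with measurability of $\Theta$ and $\Theta^{-1}$ in either direction, and that it is precisely the set on which a cut-distance-continuous $F$ could fail to give a Riemann-integrable transform. The ingredients are all standard (the dyadic ambiguity is the classical obstruction in the Bernoulli-to-Lebesgue isomorphism), but they need to be assembled with the correct $\sigma$-algebras, invoking the identifications $\sum_{{\rm prod}} = \sum_{\mathcal{I}} = \mathcal{B}(\mathcal{S}^{\Phi,g})$ established in the proof of Theorem \ref{measure-infinite-graph-1}. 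Once this is pinned down, the equivalence of the two integration theories is immediate from the abstract change-of-variables theorem for pushforward measures.
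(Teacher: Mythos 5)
Your proposal is correct and follows essentially the same route as the paper: your binary-expansion map $\Theta(X)=\sum_{k\ge 1} b_{k}2^{-k}$ is literally the norm $\parallel \cdot \parallel_{g,\alpha,g+\epsilon=2}$ of Lemma \ref{topo-group-graph-1}, which is exactly the Haar-measure-preserving map to $[0,1]$ that the paper uses (its items (ii) and (iii) are your pushforward identity and change-of-variables formula). The only difference is that you spell out the dyadic-null-set bookkeeping explicitly, whereas the paper delegates these details to cited lemmas in an earlier reference.
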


\begin{proof}
Thanks to the structure of the topological group
$\mathcal{S}^{\Phi,g}$ (i.e. Lemma \ref{topo-group-graph-1}) where the
norm $\parallel . \parallel_{g,\alpha,2}$ (given by (\ref{norm-graph-1})) and
the Haar measure $\mu_{{\rm Haar}}$ (given by Theorem
\ref{measure-infinite-graph-1}) are defined on large Feynman
diagrams, we can adapt the proofs of Lemma 3.22 and Proposition 3.23
in \cite{shojaeifard-9} for large Feynman diagrams to obtain the
following results.

(i) We can show that the norm $\parallel .
\parallel_{g,\alpha,2}$ (as a real valued function on
$\mathcal{S}^{\Phi,g}$) is the Haar measure-preserving map.

(ii) We can show that for any Lebesgue integrable real valued
function $f$ on $[0,1]$,
\begin{equation}
\mathbb{E}_{\mu_{{\rm Haar}}}[f(\parallel . \parallel_{g,\alpha,2})]
= \int_{0}^{1} f(x)dx.
\end{equation}

(iii) We can show that for any integrable real valued function $h$
on $\mathcal{S}^{\Phi,g}$,
\begin{equation} \label{graph-function-1}
\mathbb{E}_{\mu_{{\rm Haar}}}[h] = \int_{0}^{1} h((\parallel .
\parallel_{g,\alpha,2})^{-1}(x)) dx.
\end{equation}

Therefore the Haar integration theory on
$(\mathcal{S}^{\Phi,g},\mu_{{\rm Haar}})$ can be described by
transferring the Riemann--Lebesgue integration theory from the unit
interval to the space of large Feynman diagrams.
\end{proof}

The rest of this section provides some applications of this
integration theory for functionals on the measure space
$(\mathcal{S}^{\Phi,g},\mu_{{\rm Haar}})$.

Consider a single quantum particle which moves in a given potential
such that its behavior can be studied by a class of functionals on
$C[0,t]$ given by
\begin{equation}
Z(y):= {\rm exp}\{\int_{(0,t)} \theta(s,y(s))ds\}
\end{equation}
where the complex valued function $\theta$ on $[0,t] \times
\mathbb{R}^{n}$ is the given potential. This formulation is on the
basis of the standard Lebesgue--Stieltjes measure while under some
conditions it is possible to formulate these functionals with
respect to other complex Borel measures. It has been shown that for
each complex number with positive real part $\lambda$, the operators
$K_{\lambda}(Z_{n})$ exist for each $n$ such that $Z_{n}(y):=
(\int_{(0,t)} \theta(s,y(s))d\eta)^{n}$ and
$K_{\lambda}(Z)=\sum_{n\ge0} a_{n}K_{\lambda}(Z_{n})$. The central
motivation of this formulation was to deal with Feynman's
operational calculus in QED and other quantum theories.
\cite{johnson-lapidus-1}

A modification of the Johnson--Lapidus Dyson series for a measure
space of graphs which contribute to the topological Hopf
algebra $H_{{\rm FG}}^{{\rm cut}}(\Phi)$ has been obtained in
\cite{shojaeifard-9}. Now we want to apply the Haar integration theory on the
topological group $\mathcal{S}^{\Phi,g}$ to formulate
the Johnson--Lapidus Dyson series on $\mathcal{S}^{\Phi,g}$. This class of series allows us to explain the evolution of (strongly coupled) Dyson--Schwinger equations in terms of sequences of partial sums or sequences of large Feynman diagrams corresponding to weakly coupled Dyson--Schwinger equations.

\begin{thm} \label{evolution-1}
Let $\theta$ be a complex valued function on $\mathcal{S}^{\Phi,g}
\times \mathbb{R}^{2}$ and $v(z)=\sum_{n\ge0} a_{n}z^{n}$ with the
radius of convergence strictly grater than
$||\theta||_{\infty;\mu_{{\rm Borel}}}$. For a functional $Z$ on the
measure space $L^{1}(\mathcal{S}^{\Phi,g},\mu_{{\rm Haar}})$ of all
complex valued $\mu_{{\rm Haar}}$-integrable functions on
$\mathcal{S}^{\Phi,g}$ given by
\begin{equation}
Z(F):=v(\int_{\mathcal{S}^{\Phi,g}} \theta(X,F(X))d\mu_{{\rm
Borel}})
\end{equation}
, there exists a family of operators $\{K_{\lambda}(Z_{n})\}_{n \in
\mathbb{N}}$ such that

- parameters $\lambda$ are complex numbers with positive
real parts,

- $Z_{n}(F):= (\int_{\mathcal{S}^{\Phi,g}}
\theta(X,F(X))d\mu_{{\rm Borel}})^{n}$,

- $K_{\lambda}(Z)= \sum_{n\ge0} a_{n}K_{\lambda}(Z_{n})$.
\end{thm}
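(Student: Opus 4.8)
The plan is to transport the Johnson--Lapidus operational calculus from the path space $C[0,t]$ to the measure space $(\mathcal{S}^{\Phi,g},\mu_{{\rm Haar}})$ by exploiting the metric--measure identification established in Theorem \ref{measure-feynman-1}. First I would fix a bijection $\alpha:K_{V(\Phi)}\to\mathbb{N}$ and work with the norm $\parallel\cdot\parallel_{g,\alpha,2}$ of (\ref{norm-graph-1}); by Theorem \ref{measure-infinite-graph-2} this norm pushes $\mu_{{\rm Haar}}$ forward to the Lebesgue measure on the unit interval, and by (\ref{graph-function-1}) it turns every integral over $\mathcal{S}^{\Phi,g}$ into a Riemann--Lebesgue integral over $[0,1]$. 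Under this identification each large Feynman diagram $X_{{\rm DSE}}$ corresponds to the real number $x=\parallel X_{{\rm DSE}}\parallel_{g,\alpha,2}\in[0,1]$, and $\theta$ becomes a bounded measurable function $\tilde\theta$ on $[0,1]\times\mathbb{R}^{2}$ with $\parallel\tilde\theta\parallel_{\infty}=\parallel\theta\parallel_{\infty;\mu_{{\rm Borel}}}$. This reduces the statement to the classical situation treated in \cite{johnson-lapidus-1} and extends the construction of \cite{shojaeifard-9}, now with the space of paths replaced by the space of solutions of Dyson--Schwinger equations.

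The second step is to define the operators $K_{\lambda}(Z_{n})$ for $\mathrm{Re}(\lambda)>0$ via a disentangled time-ordered integral. I would write the $n$-th monomial functional $Z_{n}(F)=\big(\int_{\mathcal{S}^{\Phi,g}}\theta(X,F(X))\,d\mu_{{\rm Borel}}\big)^{n}$ as an $n!$-fold integral over the standard simplex $\Delta_{n}=\{0\le s_{1}\le\dots\le s_{n}\le 1\}$ of the ordered product of the multiplication operators $M_{\tilde\theta(s_{j},\cdot)}$ pulled back along the norm map, and set
\[
K_{\lambda}(Z_{n}):=n!\int_{\Delta_{n}} M_{\tilde\theta(s_{n},\cdot)}\cdots M_{\tilde\theta(s_{1},\cdot)}\,ds_{1}\cdots ds_{n},
\]
acting on $L^{1}(\mathcal{S}^{\Phi,g},\mu_{{\rm Haar}})\cong L^{1}([0,1])$ after incorporating the $\lambda$-dependent exponential weight familiar from the classical disentangling. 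The key estimate is the operator-norm bound $\parallel K_{\lambda}(Z_{n})\parallel\le C(\lambda)\,\parallel\theta\parallel_{\infty;\mu_{{\rm Borel}}}^{n}$: the factor $n!$ cancels the simplex volume $1/n!$, and Theorem \ref{measure-infinite-graph-2} guarantees that balls in $(\mathcal{S}^{\Phi,g},\parallel\cdot\parallel_{g,\alpha,2})$ have Haar measure equal to their radius, so no unbounded geometric factor enters.

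The third step is to assemble the Dyson series. Because $v(z)=\sum_{n\ge 0}a_{n}z^{n}$ has radius of convergence strictly greater than $\parallel\theta\parallel_{\infty;\mu_{{\rm Borel}}}$, the bound of step two gives $\sum_{n\ge 0}|a_{n}|\,\parallel K_{\lambda}(Z_{n})\parallel\le C(\lambda)\sum_{n\ge 0}|a_{n}|\,\parallel\theta\parallel_{\infty;\mu_{{\rm Borel}}}^{n}<\infty$, so the series $\sum_{n\ge 0}a_{n}K_{\lambda}(Z_{n})$ converges in operator norm for every $\lambda$ with positive real part; this defines $K_{\lambda}(Z)$ and yields $K_{\lambda}(Z)=\sum_{n\ge 0}a_{n}K_{\lambda}(Z_{n})$. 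One then verifies that $K_{\lambda}$ is additive and multiplicative in the disentangled sense on the monomials $Z_{n}$, and that on the dense subspace generated by partial sums $Y_{m}$ of solutions of Dyson--Schwinger equations (Theorem \ref{feynman-graphon-4}) it agrees with the finite-dimensional Johnson--Lapidus operators; a cut-distance continuity argument extends the identity to all of $\mathcal{S}^{\Phi,g}$.

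The hard part will be making the time-ordering on $\mathcal{S}^{\Phi,g}$ rigorous: the classical construction relies on the intrinsic linear order of $[0,t]$ carried by $C[0,t]$, whereas here the only natural order comes from pulling back the order of $[0,1]$ along $\parallel\cdot\parallel_{g,\alpha,2}$, which a priori depends on $\alpha$. I would therefore need to show that $K_{\lambda}(Z_{n})$, and hence $K_{\lambda}(Z)$, is independent of the choice of $\alpha$ up to the weakly isomorphic relation on Feynman graphon models, using that two admissible enumerations differ by a measure-preserving rearrangement of the base space and that, by the remark following Lemma \ref{feynman-graphon-1}, such rearrangements leave the unlabeled graphon classes unchanged. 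A secondary technical point is the joint measurability of $(s_{1},\dots,s_{n},X)\mapsto\prod_{j}\tilde\theta(s_{j},F(X))$ required to apply Fubini on $\Delta_{n}\times\mathcal{S}^{\Phi,g}$; this should follow from Theorem \ref{measure-feynman-1} together with the boundedness of $\theta$ and the compactness of $(\mathcal{S}^{\Phi,g},\mu_{{\rm Haar}})$ from Lemma \ref{topo-group-graph-1} and Theorem \ref{measure-infinite-graph-1}.
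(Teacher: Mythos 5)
Your proposal follows essentially the same route as the paper's proof: both reduce the statement to the classical Johnson--Lapidus construction by using the norm $\parallel\cdot\parallel_{g,\alpha,2}$ and Theorems \ref{measure-infinite-graph-1}--\ref{measure-feynman-1} to identify $(\mathcal{S}^{\Phi,g},\mu_{{\rm Haar}})$ with $([0,1],dx)$ and then lift the generalized Dyson series from \cite{johnson-lapidus-1}. Your version actually supplies more detail than the paper does on the disentangled simplex integrals, the operator-norm bound $\parallel K_{\lambda}(Z_{n})\parallel\le C(\lambda)\parallel\theta\parallel_{\infty;\mu_{{\rm Borel}}}^{n}$, the convergence of $\sum_{n}a_{n}K_{\lambda}(Z_{n})$, and the (unaddressed in the paper) question of independence from the enumeration $\alpha$, all of which strengthen rather than deviate from the argument.
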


\begin{proof}
Theorem \ref{measure-infinite-graph-1} and Theorem
\ref{measure-feynman-1} enable us to understand the Haar integration
theory on $\mathcal{S}^{\Phi,g}$ in terms of the
Riemann--Lebesgue integration theory for real valued functions on
the closed interval. In addition, we have discussed the equivalence
between the product $\sigma$-algebra $\sum_{{\rm prod}}$ on
cylinders determined by large Feynman diagrams and the Borel
$\sigma$-algebra of open balls with respect to the norm $\parallel .
\parallel_{g,\alpha,2}$. It is useful to determine uniquely the Borel
measure $\mu_{{\rm Borel}}$ on $\mathcal{S}^{\Phi,g}$ corresponding
to the Haar measure $\mu_{{\rm Haar}}$ (i.e. Theorem
\ref{measure-infinite-graph-1}). Therefore we can now extend the classical Johnson-Lapidus Dyson series to the
level of the Haar measure $\mu_{{\rm Haar}}$ on
$\mathcal{S}^{\Phi,g}$.

In addition, we have shown the existence of a compact Hausdorff topological group
structure on $\mathcal{S}^{\Phi,g}$. Thanks to standard methods in Analysis
(\cite{reed-simon-1}), it is easy to show that the topological space
$C_{c}(\mathcal{S}^{\Phi,g})$ consisting of continuous functions on
$\mathcal{S}^{\Phi,g}$ with compact support is dense in
$L^{1}(\mathcal{S}^{\Phi,g},\mu_{{\rm Haar}})$. Apply
(\ref{graph-function-1}) to transfer the Haar measure integral
\begin{equation}
\mathbb{E}_{\mu_{{\rm Haar}}}[h] = \int_{\mathcal{S}^{\Phi,g}}h(X)
d\mu_{{\rm Haar}}
\end{equation}
of each $h \in C_{c}(\mathcal{S}^{\Phi,g})$ to its corresponding
Riemann--Lebesgue integral. It remains only to lift the proof of the classical Johnson-Lapidus
generalized Dyson series given in \cite{johnson-lapidus-1} onto
$L^{1}(\mathcal{S}^{\Phi,g},\mu_{{\rm Haar}})$.
\end{proof}

\begin{cor} \label{evolution-2}
(i) The Johnson--Lapidus Dyson series can describe the behavior of a
combinatorial Dyson--Schwinger equation in a given potential.

(ii) The functionals $K_{\lambda}(Z)$ (determined by Theorem
\ref{evolution-1}) enable us to describe the evolution of each large
Feynman diagram $X$ in terms of a sequence of Dyson--Schwinger
equations in $\mathcal{S}^{\Phi,g}$.
\end{cor}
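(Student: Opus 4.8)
The plan is to obtain both assertions as bookkeeping consequences of Theorem \ref{evolution-1}, combined with the cut-distance approximation results of the earlier chapters (Theorem \ref{feynman-graphon-4} and Corollary \ref{rg-graphon-1}) and the fact, recorded after Theorem \ref{measure-infinite-graph-1}, that convergence of large Feynman diagrams in $\mathcal{S}^{\Phi,g}$ is convergence of the associated indicator sequences.

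For part (i): given a combinatorial Dyson--Schwinger equation DSE in $\mathcal{S}^{\Phi,g}$ with unique solution $X_{{\rm DSE}}=\sum_{n\ge 0}(\lambda g)^{n}X_{n}$, I would use Lemma \ref{topo-group-graph-1} to view $X_{{\rm DSE}}$ through its characteristic function $\chi_{X_{{\rm DSE}}}\in\{0,1\}^{K_{V(\Phi)}}$, and hence as an element $F_{{\rm DSE}}\in L^{1}(\mathcal{S}^{\Phi,g},\mu_{{\rm Haar}})$ (the indicator of the cylinder determined by $X_{{\rm DSE}}$, measurable by Theorem \ref{measure-infinite-graph-1}). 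Fixing a potential, i.e. a complex valued $\theta$ on $\mathcal{S}^{\Phi,g}\times\mathbb{R}^{2}$ with $\|\theta\|_{\infty;\mu_{{\rm Borel}}}<\infty$ and a power series $v(z)=\sum a_{n}z^{n}$ whose radius of convergence strictly exceeds $\|\theta\|_{\infty;\mu_{{\rm Borel}}}$, Theorem \ref{evolution-1} applied to $Z(F):=v(\int_{\mathcal{S}^{\Phi,g}}\theta(X,F(X))\,d\mu_{{\rm Borel}})$ produces the family $\{K_{\lambda}(Z_{n})\}_{n}$ and the series $K_{\lambda}(Z)=\sum a_{n}K_{\lambda}(Z_{n})$. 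Reading this expansion at $F=F_{{\rm DSE}}$ in powers of the ``interaction'' $\int\theta(X,F_{{\rm DSE}}(X))\,d\mu_{{\rm Borel}}$ reproduces precisely the structure of a Johnson--Lapidus generalized Dyson series, whose coefficients encode the response of the equation DSE to the potential $\theta$; this is the content of (i).

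For part (ii): by Theorem \ref{feynman-graphon-4} (and, at the level of weaker effective couplings, Corollary \ref{rg-graphon-1}) the large Feynman diagram $X$ is the cut-distance limit of the sequence $\{Y_{m}\}_{m\ge 1}$ of its partial sums (respectively of the solutions $Y(\tfrac{n}{n+1}g)$ of the rescaled equations $R^{{\rm bare}}_{g,\frac{n}{n+1}g}{\rm DSE}$). Each $Y_{m}$ yields via its characteristic function a functional $F_{Y_{m}}\in L^{1}(\mathcal{S}^{\Phi,g},\mu_{{\rm Haar}})$, hence a functional $Z_{Y_{m}}(F):=v(\int\theta(X,F_{Y_{m}}(X))\,d\mu_{{\rm Borel}})$ and an operator family $K_{\lambda}(Z_{Y_{m}})$. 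The plan is to show $K_{\lambda}(Z_{Y_{m}})\to K_{\lambda}(Z_{X})$ in the appropriate operator topology: cut-distance convergence $[W_{Y_{m}}]\to[W_{X}]$ forces $\textbf{1}_{e\in Y_{m}}\to\textbf{1}_{e\in X}$ for every $e\in K_{V(\Phi)}$, so $F_{Y_{m}}\to F_{X}$ $\mu_{{\rm Haar}}$-almost everywhere and, being uniformly bounded by $1$, also in $L^{1}$ by dominated convergence, the estimate being carried out on $[0,1]$ via Theorem \ref{measure-feynman-1}; since $\theta$ is bounded and $v$ has radius of convergence strictly larger than $\|\theta\|_{\infty;\mu_{{\rm Borel}}}$, the partial sums $\sum_{n\le N}a_{n}K_{\lambda}(Z_{n,Y_{m}})$ converge uniformly in $m$ while each $Z_{n,Y_{m}}\to Z_{n,X}$, and an interchange of limits then gives the claim. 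This exhibits the evolution of $X$ as the limit of the Dyson-series data of the finitely-coupled (hence many-loop, perturbatively accessible) Dyson--Schwinger equations attached to the $Y_{m}$.

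The main obstacle will be this passage to the limit inside $K_{\lambda}(\cdot)$: proving that $F\mapsto K_{\lambda}(v(\int\theta(\cdot,F)\,d\mu_{{\rm Borel}}))$ is sequentially continuous from $(L^{1}(\mathcal{S}^{\Phi,g},\mu_{{\rm Haar}}),\|\cdot\|_{1})$ into the operator algebra with the strong operator topology, uniformly for $\lambda$ in compact subsets of the right half-plane. This needs a uniform integrability bound for the family $\{Z_{n,Y_{m}}\}$ together with a quantitative form of the continuity of $\lambda\mapsto K_{\lambda}$ from \cite{johnson-lapidus-1}; a secondary point is to check that the representatives $F_{Y_{m}}$ can be chosen so that $X\mapsto\theta(X,F_{Y_{m}}(X))$ is $\mu_{{\rm Borel}}$-measurable and uniformly bounded, which is immediate from the $\{0,1\}$-valued nature of characteristic functions and the hypothesis on $\theta$. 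Once these are secured, (i) and (ii) follow by the routine steps sketched above.
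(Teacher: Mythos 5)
Your overall strategy coincides with the paper's: in both cases the corollary is obtained by attaching a concrete element of $L^{1}(\mathcal{S}^{\Phi,g},\mu_{{\rm Haar}})$ to the equation (or to an approximating sequence) and then invoking Theorem \ref{evolution-1}. The differences are in the choice of functional and of approximating sequence. For (i) the paper does not use the indicator of a cylinder set: it applies the Hahn--Banach theorem to produce a continuous linear functional $\psi_{{\rm DSE}}$ with $\psi_{{\rm DSE}}(X_{{\rm DSE}})=\parallel X_{{\rm DSE}}\parallel_{g,\alpha,2}$ and $\parallel\psi_{{\rm DSE}}\parallel\le 1$, and feeds that into Theorem \ref{evolution-1}; your indicator-function choice is equally legitimate for the purpose and arguably more in the spirit of the measure-theoretic setup of Theorem \ref{measure-infinite-graph-1}. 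For (ii) you should commit to the second of your two options: the statement asks for the evolution of $X$ \emph{in terms of a sequence of Dyson--Schwinger equations in $\mathcal{S}^{\Phi,g}$}, and the partial sums $Y_{m}$ are finite formal expansions, not solutions of equations in $\mathcal{S}^{\Phi,g}$; the paper accordingly takes the sequence $\{{\rm DSE}_{n}(\tfrac{n}{n+1}g)\}_{n\ge 1}$ produced by the multi-scale Renormalization Group of Theorem \ref{multi-scale-RG} (your Corollary \ref{rg-graphon-1} route), uses the nesting $X_{{\rm DSE}_{n}}(\tfrac{n}{n+1}g)\subset X_{{\rm DSE}_{n+1}}(\tfrac{n+1}{n+2}g)$, and applies Theorem \ref{evolution-1} to the characteristic functions $\chi^{n}_{{\rm DSE}}$, reading off the alternation of free evolutions and integrations against the potential $\theta$. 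Finally, the convergence analysis you flag as the main obstacle --- sequential continuity of $F\mapsto K_{\lambda}(v(\int\theta(\cdot,F)\,d\mu_{{\rm Borel}}))$ in $L^{1}$ and the interchange of limits inside $K_{\lambda}(\cdot)$ --- is simply not addressed in the paper, which stops at the qualitative description of the evolution; what you propose there is a genuine strengthening, but as written it remains an announced plan rather than a completed argument, so if you intend it as part of the proof you must either supply the uniform integrability bound and the quantitative continuity of $\lambda\mapsto K_{\lambda}$ from \cite{johnson-lapidus-1}, or drop it and settle for the paper's weaker, purely formal conclusion.
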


\begin{proof}
(i) Suppose $X_{{\rm DSE}}(g)= \sum_{n \ge 0} g^{n}X_{n}$ is a large Feynman diagram as the unique
solution of an equation DSE in the normed vector space
$(\mathcal{S}^{\Phi,g},\parallel .
\parallel_{g,\alpha,2})$. Thanks to the Hahn--Banach Theorem
(\cite{reed-simon-1}), there exists a continuous linear map
$\psi_{{\rm DSE}}: \mathcal{S}^{\Phi,g} \longrightarrow \mathbb{R}$
such that
\begin{equation}
\psi_{{\rm DSE}}(X_{{\rm DSE}}) = \parallel X_{{\rm DSE}}
\parallel_{g,\alpha,2}, \ \ \ \parallel \psi_{{\rm DSE}} \parallel
\le 1
\end{equation}
where the operator norm $\parallel \psi_{{\rm DSE}} \parallel$ is
defined by
\begin{equation}
\parallel \psi_{{\rm DSE}} \parallel:= {\rm inf} \{c \ge 0: \ |\psi_{{\rm DSE}}(X)| \le c \parallel
X \parallel_{g,\alpha,2}, \ \forall X \in \mathcal{S}^{\Phi,g}\}.
\end{equation}
Now apply Theorem \ref{evolution-1} for $\psi_{{\rm DSE}} \in
L^{1}(\mathcal{S}^{\Phi,g},\mu_{{\rm Haar}})$.

(ii) For any given Dyson--Schwinger equation DSE, if we apply the multi-scale
Renormalization Group given by Theorem \ref{multi-scale-RG}, then we can build a sequence $\{{\rm
DSE}_{n}(\frac{n}{n+1}g)\}_{n \ge 1}$ of Dyson--Schwinger equations under rescaled values of the bare coupling constant
where we have
\begin{equation}
X_{{\rm DSE}_{n}}(\frac{n}{n+1}g) \subset X_{{\rm
DSE}_{n+1}}(\frac{n+1}{n+2}g).
\end{equation}
For each $n$, set $\chi^{n}_{{\rm DSE}}$ as the characteristic
function with respect to the large Feynman diagram $X_{{\rm
DSE}_{n}}(\frac{n}{n+1}g)$ on $\mathcal{S}^{\Phi,g}$ such that
$\chi^{n}_{{\rm DSE}} \in L^{1}(\mathcal{S}^{\Phi,g},\mu_{{\rm
Haar}})$. Now apply Theorem \ref{evolution-1} to the sequence
$\{\chi^{n}_{{\rm DSE}}\}_{n \ge 1}$ to obtain a description for the
evolution of $X_{{\rm DSE}}(g)$ in terms of large sub-graphs. A free
evolution from $X_{{\rm DSE}}(0)=\mathbb{I}$ (i.e. the empty graph) to
$X_{{\rm DSE}_{1}}(\frac{1}{2}g)$, interactions of particles in $X_{{\rm
DSE}_{1}}(\frac{1}{2}g)$ with the potential $\theta$, free evolution
from $X_{{\rm DSE}_{1}}(\frac{1}{2}g)$ to $X_{{\rm DSE}_{2}}(\frac{2}{3}g)$,
and so on up to $n^{\rm th}$ integration with $\theta$ at the level
$X_{{\rm DSE}_{n}}(\frac{n}{n+1}g)$ followed by a free evolution from
$X_{{\rm DSE}_{n}}(\frac{n}{n+1}g)$ to $X_{{\rm DSE}}(g)$ when $n$ tends
to infinity.
\end{proof}

\begin{lem}
Thanks to the symmetric difference as a binary operation on large
Feynman diagrams, there exists a complex commutative Banach algebra
structure on $L^{1}(\mathcal{S}^{\Phi,g},\mu_{{\rm Haar}})$.
\end{lem}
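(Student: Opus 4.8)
The plan is to realize $L^{1}(\mathcal{S}^{\Phi,g},\mu_{{\rm Haar}})$ as the group algebra (convolution algebra) of the compact Hausdorff abelian topological group $(\mathcal{S}^{\Phi,g},\triangle)$ supplied by Lemma \ref{topo-group-graph-1}, equipped with the Haar measure from Theorem \ref{measure-infinite-graph-1}, and then to invoke the classical fact that the $L^{1}$-space of a locally compact abelian group is a commutative Banach algebra under convolution. Concretely, for $f,h \in L^{1}(\mathcal{S}^{\Phi,g},\mu_{{\rm Haar}})$ I would set
\begin{equation}
(f \ast h)(X) := \int_{\mathcal{S}^{\Phi,g}} f(Y)\, h(X \triangle Y)\, d\mu_{{\rm Haar}}(Y),
\end{equation}
noting that every large Feynman diagram is its own inverse under $\triangle$ (since $X \triangle X = \mathbb{I}$), so $X \triangle Y$ already plays the role of $X - Y$ in the usual convolution formula and no separate inversion step is needed.

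First I would show that $f \ast h$ is well defined and lies in $L^{1}$. This rests on the joint measurability of $(X,Y) \mapsto f(Y)\, h(X \triangle Y)$ on $\mathcal{S}^{\Phi,g} \times \mathcal{S}^{\Phi,g}$: the operation $\triangle$ is continuous by the topological group structure of Lemma \ref{topo-group-graph-1}, and since the metric $d_{g,\alpha,\epsilon}$ makes $\mathcal{S}^{\Phi,g}$ compact metrizable and hence second countable, the product $\sigma$-algebra $\sum_{{\rm prod}} \otimes \sum_{{\rm prod}}$ coincides with the Borel $\sigma$-algebra of the product, so Tonelli's theorem applies to the finite measure $\mu_{{\rm Haar}} \otimes \mu_{{\rm Haar}}$. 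Using translation invariance of $\mu_{{\rm Haar}}$ in the form that $(X,Y) \mapsto (X \triangle Y, Y)$ is a measure-preserving bijection, I would then obtain the submultiplicative estimate $\|f \ast h\|_{1} \le \|f\|_{1}\,\|h\|_{1}$, which in particular guarantees finiteness of $(f \ast h)(X)$ for $\mu_{{\rm Haar}}$-almost every $X$.

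Next I would verify the algebraic axioms. $\mathbb{C}$-bilinearity of $\ast$ is immediate from linearity of the integral. Associativity $(f \ast h) \ast k = f \ast (h \ast k)$ follows from interchanging the order of integration (again justified by the Fubini--Tonelli argument above) together with associativity of $\triangle$; commutativity $f \ast h = h \ast f$ follows from commutativity of $\triangle$ after the substitution $Z = X \triangle Y$ and one further use of translation invariance. Completeness of $L^{1}(\mathcal{S}^{\Phi,g},\mu_{{\rm Haar}})$ is the standard Riesz--Fischer theorem, which may conveniently be transported to the unit interval through the measure-preserving identification of Theorem \ref{measure-feynman-1}. Combined with the submultiplicative norm, this exhibits $(L^{1}(\mathcal{S}^{\Phi,g},\mu_{{\rm Haar}}),\ast)$ as a complex commutative Banach algebra.

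The main obstacle I anticipate is not conceptual but bookkeeping around measurability and Fubini: one must be certain that the $\sigma$-algebra on $\mathcal{S}^{\Phi,g}$ used to construct $\mu_{{\rm Haar}}$ in Theorem \ref{measure-infinite-graph-1} — the product $\sigma$-algebra $\sum_{{\rm prod}}$, there identified with the Borel $\sigma$-algebra of the $d_{g,\alpha,\epsilon}$-topology — is exactly the one making $\triangle$ measurable and supporting the product-space statements, which holds precisely because $\mathcal{S}^{\Phi,g}$ is compact metrizable. It is also worth remarking in passing that, the group being compact and infinite (not discrete), the resulting Banach algebra admits a bounded approximate identity but no unit, consistently with the statement asserting only a Banach algebra rather than a unital one.
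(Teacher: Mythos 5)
Your proof is correct and follows essentially the same route as the paper: both realize $L^{1}(\mathcal{S}^{\Phi,g},\mu_{{\rm Haar}})$ as the convolution algebra of the compact abelian group $(\mathcal{S}^{\Phi,g},\triangle)$, and your formula $(f\ast h)(X)=\int f(Y)\,h(X\triangle Y)\,d\mu_{{\rm Haar}}(Y)$ coincides with the paper's $\int F_{1}(\Gamma_{2})F_{2}(\Gamma_{2}^{-1}\triangle\Gamma_{1})\,d\mu_{{\rm Haar}}(\Gamma_{2})$ because every element is its own $\triangle$-inverse. The only divergence is that the paper adjoins an ``infinitesimal delta function'' as a multiplicative unit, whereas you correctly note that no unit exists inside $L^{1}$ for this non-atomic Haar measure; since the lemma asserts only a Banach algebra, your version proves the statement, and your Fubini/Young-inequality bookkeeping supplies details the paper leaves implicit.
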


\begin{proof}
We have seen that the binary operation $\triangle$ determines an
abelian compact Hausdorff topological group structure on
$\mathcal{S}^{\Phi,g}$ such that the empty graph $\mathbb{I}$ is the
zero element of this group. Now define the following convolution
product on $L^{1}(\mathcal{S}^{\Phi,g},\mu_{{\rm Haar}})$
$$F_{1}*_{\triangle}F_{2}(\Gamma_{1})$$
\begin{equation} \label{conv-banach-1}
= \int_{\mathcal{S}^{\Phi,g}} F_{1}(\Gamma_{2})
F_{2}(\Gamma_{2}^{-1} \triangle \Gamma_{1}) d\mu_{{\rm
Haar}}(\Gamma_{2}), \ \ F_{1},F_{2} \in
L^{1}(\mathcal{S}^{\Phi,g},\mu_{{\rm Haar}})
\end{equation}
such that  $\Gamma_{2}^{-1}$ is the inverse of the graph with
respect to the group structure $\triangle$. The compatibility
between the product (\ref{conv-banach-1}) and the $L^{1}$-norm
\begin{equation}
\parallel F \parallel_{1} := \int_{\mathcal{S}^{\Phi,g}} \mid F(X) \mid d \mu_{{\rm
Haar}} (X)
\end{equation}
provides our promising Banach algebra. The abelian property of the
group $(\mathcal{S}^{\Phi,g}, \triangle)$ guarantees the
commutativity of this Banach algebra. In addition, we add the
infinitesimal delta function $\delta$ as the multiplicative unit for
this Banach algebra. Then we have
\begin{equation}
\int_{\mathcal{S}^{\Phi,g}} F(X) \delta(X) d\mu_{{\rm Haar}}(X) =
F(\mathbb{I})
\end{equation}
for each $F\in L^{1}(\mathcal{S}^{\Phi,g},\mu_{{\rm Haar}})$ and
each large Feynman diagram $X \in \mathcal{S}^{\Phi,g}$.
\end{proof}

\begin{prop}
(i) Each functional $F \in L^{1}(\mathcal{S}^{\Phi,g},\mu_{{\rm
Haar}})$ has a non-empty spectrum.

(ii) The space $\Omega(L^{1}(\mathcal{S}^{\Phi,g},\mu_{{\rm
Haar}}))$ of all characters of the complex Banach algebra
$L^{1}(\mathcal{S}^{\Phi,g},\mu_{{\rm Haar}})$ is a compact
Hausdorff topological space.
\end{prop}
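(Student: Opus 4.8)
The plan is to exploit the general structure theory of commutative complex Banach algebras, applied to the specific algebra $L^{1}(\mathcal{S}^{\Phi,g},\mu_{{\rm Haar}})$ constructed in the preceding lemma. For part (i), I would invoke the classical fact that in any complex unital Banach algebra every element has non-empty spectrum: if $\sigma(F)$ were empty, then $(\lambda \delta - F)^{-1}$ would exist for all $\lambda \in \mathbb{C}$, and the resolvent map $\lambda \mapsto (\lambda\delta - F)^{-1}$ would be an entire, bounded, vector-valued function vanishing at infinity, hence identically zero by Liouville's theorem, contradicting invertibility. The only point requiring care is that $L^{1}(\mathcal{S}^{\Phi,g},\mu_{{\rm Haar}})$ is genuinely unital: the preceding lemma adjoins the delta functional $\delta$ supported at the empty graph $\mathbb{I}$ as the multiplicative identity for $*_{\triangle}$, so the hypotheses are met. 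I would also note that since $\mathcal{S}^{\Phi,g}$ is compact (Lemma \ref{topo-group-graph-1}) the Haar measure is finite, so convolution is well-behaved and the $L^{1}$-norm estimate $\parallel F_{1} *_{\triangle} F_{2} \parallel_{1} \le \parallel F_{1}\parallel_{1} \parallel F_{2}\parallel_{1}$ holds, confirming we are in a bona fide Banach algebra.

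For part (ii), the approach is to identify $\Omega(L^{1}(\mathcal{S}^{\Phi,g},\mu_{{\rm Haar}}))$, the Gelfand spectrum (maximal ideal space), as a weak-$*$ closed subset of the closed unit ball of the dual space $L^{1}(\mathcal{S}^{\Phi,g},\mu_{{\rm Haar}})^{*}$, which is weak-$*$ compact by the Banach--Alaoglu theorem. Each nonzero multiplicative linear functional on a unital Banach algebra automatically has norm exactly $1$, so $\Omega$ lies in that unit ball; and the defining conditions $\chi(F_{1}*_{\triangle}F_{2}) = \chi(F_{1})\chi(F_{2})$ and $\chi(\delta)=1$ are preserved under weak-$*$ limits, so $\Omega \cup \{0\}$ is weak-$*$ closed, hence $\Omega$ is weak-$*$ compact (it is closed in the compact set $\Omega \cup \{0\}$ once we excise $0$; more precisely, since the algebra is unital, $\chi(\delta)=1$ is a closed condition that already separates $\Omega$ from $0$, so $\Omega$ itself is weak-$*$ closed and therefore compact). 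The Hausdorff property is immediate because the weak-$*$ topology on a dual space is always Hausdorff, and a subspace of a Hausdorff space is Hausdorff.

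The step I expect to be the genuine obstacle is not the abstract Gelfand theory but the verification that $L^{1}(\mathcal{S}^{\Phi,g},\mu_{{\rm Haar}})$ really does carry the claimed commutative unital Banach algebra structure in a way compatible with the compact-group picture of Theorem \ref{measure-infinite-graph-1}. One must check that the symmetric-difference operation $\triangle$ on large Feynman diagrams is measurable with respect to $\mu_{{\rm Haar}}$, that the inversion $\Gamma \mapsto \Gamma^{-1}$ (which, since $(\mathcal{S}^{\Phi,g},\triangle)$ is a $\mathbb{Z}_{2}$-type group, is simply the identity map) is measure-preserving, and that the convolution integral in (\ref{conv-banach-1}) is well-defined and finite a.e.\ by Fubini's theorem on the finite measure space $(\mathcal{S}^{\Phi,g},\mu_{{\rm Haar}})$. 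All of these follow from the translation-invariance established in Theorem \ref{measure-infinite-graph-1} together with standard facts about convolution algebras $L^{1}(G)$ of compact abelian groups; indeed, once one recognizes $(\mathcal{S}^{\Phi,g},\triangle,\mu_{{\rm Haar}})$ as a compact abelian group with its Haar measure, both parts (i) and (ii) are immediate instances of the general theory of group algebras $L^{1}(G)$, for which it is classical that the algebra is commutative, has an approximate identity (made into a genuine identity here by the adjunction of $\delta$), and has compact Hausdorff spectrum canonically identified with the Pontryagin dual $\widehat{G}$. So the remaining work is essentially bookkeeping to confirm that the construction of the preceding lemma is literally an instance of this classical situation.
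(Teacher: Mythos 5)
Your proposal is correct and follows essentially the same route as the paper: part (i) is the classical resolvent/Liouville argument in a unital commutative Banach algebra (the paper unwinds Liouville by hand via bounded linear functionals and the vanishing derivative of $K(x)=\int_0^{2\pi}\Upsilon(R(xe^{iy}))\,dy$, but it is the same idea), and part (ii) is the standard observation that characters have norm one and form a weak-$*$ closed subset of the dual unit ball, hence compact Hausdorff by Banach--Alaoglu. Your added remarks on verifying that $(\mathcal{S}^{\Phi,g},\triangle,\mu_{{\rm Haar}})$ is genuinely an instance of the classical $L^{1}(G)$ picture for a compact abelian group are sensible bookkeeping but do not change the argument.
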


\begin{proof}
We need only to adapt the standard procedures in Functional Analysis
(\cite{reed-simon-1}) to achieve the results.

(i) We show that
\begin{equation}
{\rm sp}(F):= \{\lambda \in \mathbb{C}: F - \lambda \delta \ \ {\rm
not \ invertible}\}
\end{equation}
is non-empty. If $F=0$, then thanks to the definition of the
infinitesimal delta function, we have the result. If $F$ be a
non-zero functional, suppose its spectrum is empty which means that
the new function
\begin{equation}
 R:\mathbb{C} \longrightarrow
L^{1}(\mathcal{S}^{\Phi,g},\mu_{{\rm Haar}}), \ \ \lambda
\longmapsto (F- \lambda \delta)^{-1}
\end{equation}
is well-defined, holomorphic, non-constant and bounded.
For any bounded linear functional $\Upsilon$ on
$L^{1}(\mathcal{S}^{\Phi,g},\mu_{{\rm Haar}})$, define a new function
$\tilde{\Upsilon}$ on $\mathbb{R}^{2}$ given by
\begin{equation}
\tilde{\Upsilon}(x,y):=\Upsilon(R(xe^{iy})).
\end{equation}
We can show that $\tilde{\Upsilon}$ is continuously differentiable
with respect to variables $x$ and $y$. Now by differentiation under
the integral sign from the holomorphic bounded function $K(x):=
\int_{0}^{2\pi} \tilde{\Upsilon}(x,y)dy$, we have $K'(x)=0$.
Therefore $K$ is a constant function which is a contradiction with
the initial assumption.

(ii) The ideal generated by kernel of any
character provides a natural correspondence between the set of
maximal ideals of the Banach algebra
$L^{1}(\mathcal{S}^{\Phi,g},\mu_{{\rm Haar}})$ and the set of
characters on the space $L^{1}(\mathcal{S}^{\Phi,g},\mu_{{\rm
Haar}})$.

Each character $\psi \in \Omega(L^{1}(\mathcal{S}^{\Phi,g},\mu_{{\rm
Haar}}))$ is actually an algebra homomorphism from
$L^{1}(\mathcal{S}^{\Phi,g},\mu_{{\rm Haar}})$ to $\mathbb{C}$ such
that $\psi(\delta)=1$. We can show that $\psi$ is continuous of norm
$1$, otherwise there exists a function $F\in
L^{1}(\mathcal{S}^{\Phi,g},\mu_{{\rm Haar}})$ such that $||F||<1$
and $\psi(F)=1$. Apply the convolution product to define $G:=\sum_{n
\ge 1} G^{n}$. From the equation $G=F+FG$ we have
\begin{equation}
\psi(G)=\psi(F) + \psi(F)\psi(G)=1+\psi(G)
\end{equation}
which shows a contradiction. So the norm of $\psi$ is less than or
equal to $1$ and $\psi(\delta)=1$ which implies that $||\psi||=1$.
Thanks to this fact, we can see that
$\Omega(L^{1}(\mathcal{S}^{\Phi,g},\mu_{{\rm Haar}}))$ is a closed
subset of the unit ball of the dual space
$L^{1}(\mathcal{S}^{\Phi,g},\mu_{{\rm Haar}})^{*}$ which is a
compact Hausdorff space with respect to the weak-$\star$ topology. As the
consequence, $\Omega(L^{1}(\mathcal{S}^{\Phi,g},\mu_{{\rm Haar}}))$
is a compact Hausdorff topological space.
\end{proof}

Thanks to the Gelfand transform, define
\begin{equation} \label{gelfand-1}
L^{1}(\mathcal{S}^{\Phi,g},\mu_{{\rm Haar}}) \longrightarrow
C_{0}(\Omega(L^{1}(\mathcal{S}^{\Phi,g},\mu_{{\rm Haar}})))
\end{equation}
$$F \longmapsto \widetilde{F}, \ \ \widetilde{F}(\psi):=\psi(F).$$
It is a norm decreasing algebraic homomorphism such that its image
separates $\mu_{{\rm Haar}}$-integrable functions on
$\mathcal{S}^{\Phi,g}$. It can be seen that
\begin{equation}
||\widetilde{F}||_{\infty}={\rm max} \{|\lambda|: \ \ \lambda \in
{\rm sp}(F)\}.
\end{equation}

Thanks to the Pontryagin duality Theorem \cite{reed-simon-1}, we can
obtain a correspondence between elements of the topological space
$\Omega(L^{1}(\mathcal{S}^{\Phi,g},\mu_{{\rm Haar}}))$ and elements
of the Pontryagin dual. In this situation, the canonical isomorphism
\begin{equation} \label{ev-1}
{\rm ev}_{L^{1}(\mathcal{S}^{\Phi,g},\mu_{{\rm
Haar}})}(X)(\rho)=\rho(X) \in S^{1} \subset \mathbb{C}
\end{equation}
can be applied to define a modification of the Fourier
transformation on $L^{1}(\mathcal{S}^{\Phi,g},\mu_{{\rm Haar}})$.

\begin{defn}
For a given Quantum Field Theory $\Phi$ with the strong coupling
constant $g \ge 1$ and the corresponding collection
$\mathcal{S}^{\Phi,g}$ of all large Feynman diagrams generated by
Dyson--Schwinger equations, the Fourier transformation $\mathcal{F}$
on the complex commutative unital Banach algebra
$L^{1}(\mathcal{S}^{\Phi,g},\mu_{{\rm Haar}})$ is well-defined. For
$F \in L^{1}(\mathcal{S}^{\Phi,g},\mu_{{\rm Haar}})$, we have
\begin{equation} \label{Fourier-1}
\widehat{F}(\rho)=\int_{\mathcal{S}^{\Phi,g}} F(X)
\overline{\rho(X)}d\mu_{{\rm Haar}}(X).
\end{equation}
for any character $\rho$.
\end{defn}

For functionals $G,H \in L^{1}(\mathcal{S}^{\Phi,g},\mu_{{\rm
Haar}})$, we have
\begin{equation}
\mathcal{F}\{G *_{\triangle} H\} = \mathcal{F}\{G\}\mathcal{F}\{H\}.
\end{equation}

The original motivation to formulate the Gelfand transform
(\ref{gelfand-1}) is providing a way to separate functionals in
$L^{1}(\mathcal{S}^{\Phi,g},\mu_{{\rm Haar}})$. Thanks to this idea,
the Fourier transformation (\ref{Fourier-1}) encodes the
mathematical procedure for the decomposition of the functional $F$
in terms of large Feynman diagrams which contribute to
Dyson--Schwinger equations in $\mathcal{S}^{\Phi,g}$. In particular,
if we restrict our discussion to a fixed Dyson--Schwinger equation
DSE with the unique solution $X_{{\rm DSE}}$, then our generalized
Fourier transformation describes the evolution of the large Feynman
diagram $X_{{\rm DSE}}$ with respect to $\mu$-integrable functions
originated from large subdiagrams (or partial sums) which converge
to $X_{{\rm DSE}}$.

As the final note, we have developed a theory of Haar
integration on the space of all Dyson--Schwinger equations of a given (strongly coupled) gauge field theory in the language of the
classical Riemann--Lebesgue integral. This new measure theoretic approach is useful to build an analogous version of the classical
Newton--Leibniz differentiation theory with respect to
metrics $d_{g,\alpha,\epsilon}$ for the study of functionals on large Feynman diagrams.

\section{\textsl{The G\^{a}teaux differential calculus on $\mathcal{S}^{\Phi,g}$ and its application}}

In the second section of the previous chapter we have explained the
construction of a noncommutative differential geometry model for the
Hopf algebra $\mathcal{S}^{\Phi}_{{\rm graphon}}$ which is derived
from the BPHZ non-perturbative renormalization process of Feynman graphons. In
\cite{shojaeifard-9} we have applied the analysis of linear spaces to
discuss the construction of a theory of differentiation on the space
of Feynman diagrams in terms of the graph function representations of these physical diagrams and
G\^{a}teaux differentiability under the cut-distance topology where
we worked on admissible directions to define well-defined
differentiations. It has led us to obtain Taylor series type
representations for continuous functionals on Feynman graphons on
the basis of homomorphism densities. The homomorphism densities can be
considered as functionals $W_{\Gamma} \longmapsto t(G,W_{\Gamma})$
on Feynman graphons such that $G$ is an arbitrary finite graph. If
$G$ is a simple graph (such as rooted trees), then we can show that
the corresponding homomorphism density is continuous with respect to
the cut-distance topology and it is also $L^{1}$-integrable. Thanks
to the disjoint union operator, we can build an algebraic structure on
the linear span of homomorphism densities with respect to finite
simple graphs which determine a dense subset in the space
$C(\mathcal{S}^{\Phi}_{{\rm graphon}})$ of all continuous functions
on $\mathcal{S}^{\Phi}_{{\rm graphon}}$ with respect to the topology
of uniform convergence. In addition, we can compute the G\^{a}teaux
derivatives of homomorphism densities where their Fr\'{e}chet
differentiability can be achieved with respect to simple graphs
(i.e. decorated non-planar rooted trees) under some conditions where
we might need to remove the symmetric condition of Feynman graphons
and work on Feynman bigraphons. These observations can clarify the importance of
homomorphism densities to study
graphons under a functional analysis setting.

In this section, we concern the question of how to endow with a
differential calculus on large Feynman diagrams independent of any
renormalization program. We consider the real or complex vector
space $\mathcal{S}^{\Phi,g}$ generated by all Dyson--Schwinger
equations in the physical theory $\Phi$ and equip this space with
the cut-distance topology determined by
\begin{equation}
d({\rm DSE}_{1},{\rm DSE}_{2}):= d_{{\rm cut}}([f^{X_{{\rm
DSE}_{1}}}],[f^{X_{{\rm DSE}_{2}}}]).
\end{equation}
It defines the cut-norm for each large Feynman diagram. We have
\begin{equation} \label{norm-1}
\parallel X_{{\rm DSE}} \parallel_{{\rm cut}} = {\rm sup}_{A,B \subseteq [0,1]} \mid \int_{A \times B} f^{X_{{\rm DSE}}}(x,y) dxdy \mid
\end{equation}
such that the supremum is taken over Lebesgue measurable subsets
$A,B$ of the closed interval. The resulting space can be interpreted
as a closed topological subspace of the compact topological space of
all unlabeled graphons. It means that we can consider
$\mathcal{S}^{\Phi,g}$ as a new Banach space to build a new G\^{a}teaux
differential calculus on the space of solutions of Dyson--Schwinger equations of a given physical theory. Then we apply the
functional analysis of graphons
(\cite{diao-guillot-khare-rajaratnam-1}) to obtain a new G\^{a}teaux
differential calculus machinery for the study of
$C(\mathcal{S}^{\Phi,g})$.

Total derivative and directional derivatives are the most common
differentiation machineries in finite dimensions such that their
analogous versions in infinite dimensions are Fr\'{e}chet derivative
and G\^{a}teaux derivative.

For a given function $F: X \longrightarrow Y$ between two Banach
spaces (or normed vector spaces), the G\^{a}teaux derivative at
$x_{0} \in X$ is by definition a bounded linear operator
$T_{x_{0}}:X\longrightarrow Y \in \mathcal{B}(X,Y)$ such that for
every $u \in X$,
\begin{equation}  \label{der-1}
{\rm lim}_{t \longrightarrow 0} \frac{F(x_{0}+tu) - F(x_{0})}{t} =
T_{x_{0}}u.
\end{equation}
If for some fixed $u$ the limit
\begin{equation}
\delta_{u}F(x):= \frac{d}{dt}|_{t=0} F(x+tu) = {\rm lim}_{t
\rightarrow 0} \frac{F(x+tu) - F(x)}{t}
\end{equation}
exists, then we call $F$ has a directional derivative at $x$ in the
direction $u$. Therefore $F$ is G\^{a}teaux derivative at $x_{0}$ if
and only if all the directional derivatives $\delta_{u}F(x)$ exist
and form a bounded linear operator
\begin{equation}
DF(x): u \longmapsto \delta_{u}F(x).
\end{equation}
$T_{x_{0}}$ is called the Fr\'{e}chet derivative of $F$ at $x_{0}$,
if the limit (in the sense of the G\^{a}teaux derivative) exists
uniformly in $u$ on the unit ball in $X$. If we set $y=tu$ then $t$
tends to zero is equivalent to $y$ tends to zero. Now $F$ is
Fr\'{e}chet differentiable at $x_{0}$, if for all $y$ we have
\begin{equation} \label{frechet-1}
F(x_{0}+y) = F(x_{0})+ T_{x_{0}}(y) + o(\parallel y \parallel),
\end{equation}
which means that
\begin{equation} \label{frechet-2}
{\rm lim}_{\parallel h \parallel \rightarrow 0} \frac{\parallel
F(x+h) - F(x) - Th \parallel}{\parallel h
\parallel} = 0
\end{equation}
holds. As we can see, the limit in the Fr\'{e}chet derivative only
depends on the norm of $y$ where the operator $T$ defines the
natural linear approximation of $F$ in a neighborhood of the point
$x_{0}$. In this setting, we call $T_{x_{0}} = DF(x_{0})$ as the
derivative of $F$ at $x_{0}$. In addition, we can show that being
Fr\'{e}chet differentiable at a point implies being G\^{a}teaux
differentiable at a point such that in this case the G\^{a}teaux
derivative is equal to the Fr\'{e}chet derivative.

If $F$ is G\^{a}teaux differentiable on $X$, then we have the mean
value formula
\begin{equation} \label{mean-value-1}
\parallel F(y) - F(x) \parallel \le \parallel x-y \parallel {\rm
sup}_{0 \le \theta \le 1} \parallel DF(\theta x + (1-\theta)y)
\parallel.
\end{equation}
This enables us to show that if $F$ is G\^{a}teaux differentiable on an
open neighborhood $U$ of $x$ and $DF(x)$ is continuous, then $F$ is
Fr\'{e}chet differentiable at $x$. \cite{flett-1, reed-simon-1}

We plan to study (smooth) real valued continuous functions on the
Banach space $\mathcal{S}^{\Phi,g}$ in terms of their Taylor series
representation under the higher orders G\^{a}teaux differentiations.
We show that the solution space of the natural generalization of the
equation $\frac{d^{n}}{dx^{n}}F(x) \equiv 0$ to large Feynman
diagrams namely, G\^{a}teaux type differential equations with the
general form
\begin{equation}
d^{N+1}F(X;Z_{1},...,Z_{N+1})=0
\end{equation}
for all large Feynman diagrams $X,Z_{1},...Z_{N+1} \in
\mathcal{S}^{\Phi,g}$, can be described by homomorphism densities.
Solutions of this class of equations enjoy the property
\begin{equation}
F(Z)=F(0) + dF(0;Z) +
\frac{d^{2}F(0;Z,Z)}{2} + ... +
\frac{d^{n}F(0;Z,...,Z)}{n!}.
\end{equation}

\begin{defn} \label{der-5}
For a given function $F: \mathcal{S}^{\Phi,g} \longrightarrow
\mathbb{R}$ and each large Feynman diagram $X$, the G\^{a}teaux
derivative exists at $X$ in the direction $Y \in
\mathcal{S}^{\Phi,g}$, if the limit
\begin{equation}
dF(X;Y)={\rm lim}_{t \longrightarrow 0}\frac{F(X+tY) - F(Y)}{t}
\end{equation}
exists. The higher orders of the G\^{a}teaux differentiability can
be defined by induction where for any $n \ge 2$, $F$ is called
$n$-time G\^{a}teaux differentiable at $X$ in directions
$Z_{1},...,Z_{n}$ if at the first, the higher mixed G\^{a}teaux
derivatives
\begin{equation}
d^{n-1}F(X+\lambda Z_{n};Z_{1},...,Z_{n-1})
\end{equation}
exist for each real number $\lambda$ and at the second, the limit
$$d^{n}F(X;Z_{1},...,Z_{n})=$$
\begin{equation} \label{der-6}
{\rm lim}_{\lambda \longrightarrow 0}\frac{d^{n-1} F(X+\lambda
Z_{n};Z_{1},...,Z_{n-1}) - d^{n-1}F(X;Z_{1},...,Z_{n-1})}{\lambda}
\end{equation}
exists.
\end{defn}

It is easy to check that the G\^{a}teaux derivatives
$d^{n}F(X;Z_{1},...,Z_{n})$ are multilinear maps in $Z_{i}$ and in
addition, for any permutation $\tau \in S_{n}$, we have
\begin{equation}
d^{n}F(X;Z_{1},...,Z_{n}) = d^{n}F(X;Z_{\tau(1)},...,Z_{\tau(n)}).
\end{equation}

\subsection{\textsl{Feynman random graphs via homomorphism densities}}

The description of large Feynman diagrams via Feynman graphons in
$\mathcal{S}^{\Phi}_{{\rm graphon}}$ allows us to think about the concept of ''density'' of infinite Feynman graphs in
the solution of a given Dyson--Schiwnger equation. From the view point of Quantum
Field Theory, any infinite Feynman graph can contain nested loops which present different types of subdivergencies derived from virtual particles and their interactions. One important task is to search for any algorithm which could estimate the appearance
of a particular class of subdivergencies (or subgraphs) in an infinite expansion of Feynman
diagrams. The homomorphism density, as a class function on
$\mathcal{S}^{\Phi}_{{\rm graphon}}$ or $\mathcal{S}^{\Phi,g}$,
is a useful tool to formulate a new mathematical model for this fundamental challenge.

\begin{prop} \label{hom-dens-large-1}
The homomorphism density is a well-defined operator on large Feynman
diagrams.
\end{prop}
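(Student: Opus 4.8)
The plan is to show that for any large Feynman diagram $X_{{\rm DSE}}$ arising as the unique solution of a Dyson--Schwinger equation in $\mathcal{S}^{\Phi,g}$ and for any fixed finite simple graph $G$ (in particular a decorated non-planar rooted tree), the assignment $X_{{\rm DSE}} \mapsto t(G, W_{X_{{\rm DSE}}})$ is a well-defined real number, independent of the choices made in the construction of the Feynman graphon model, and hence defines an honest functional on $\mathcal{S}^{\Phi}_{{\rm graphon}}$ (and via the embedding $\mathcal{S}^{\Phi,g} \hookrightarrow \mathcal{S}^{\Phi}_{{\rm graphon}}$, on $\mathcal{S}^{\Phi,g}$). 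First I would invoke Lemma \ref{feynman-graphon-1}: each Feynman diagram, and by Theorem \ref{feynman-graphon-4} and Theorem \ref{feynman-graphon-5} each large Feynman diagram $X_{{\rm DSE}}$, is encoded by a unique unlabeled graphon class $[W_{X_{{\rm DSE}}}]$. The homomorphism density (\ref{hom-den-2}) is defined for a labeled graphon; the key point is that it descends to unlabeled classes, because weakly isomorphic graphons have the same homomorphism densities for every finite subgraph (this is precisely the characterization recalled just before Theorem 2.15 in the excerpt). Thus $t(G, [W_{X_{{\rm DSE}}}]) := t(G, W)$ for any representative $W \in [W_{X_{{\rm DSE}}}]$ is unambiguous.

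Next I would address the fact that $X_{{\rm DSE}}$ is an \emph{infinite} graph, so one must check the integral in (\ref{hom-den-2}) actually converges and that $t(G, [W_{X_{{\rm DSE}}}])$ is the limit of $t(G, [W_{Y_m}])$ along the partial sums. Here the essential tool is that $[W_{Y_m}] \to [W_{X_{{\rm DSE}}}]$ in the cut-distance topology (Theorem \ref{feynman-graphon-4}), combined with the standard counting-lemma fact from graphon theory that $W \mapsto t(G,W)$ is Lipschitz (with constant $|E(G)|$) with respect to the cut-metric $\delta_{\rm cut}$. Consequently $\{t(G,[W_{Y_m}])\}_{m \ge 1}$ is Cauchy in $\mathbb{R}$, its limit exists and is finite (since all homomorphism densities of graphons lie in $[0,1]$ before any rescaling, and the rescaling/normalization applied to the sparse tree models, as in \cite{br-1,ihes-submitted}, keeps them bounded), and equals $t(G,[W_{X_{{\rm DSE}}}])$ by continuity of $t(G,-)$. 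This shows $t(G,-)$ is a genuine bounded operator on $\mathcal{S}^{\Phi}_{{\rm graphon}}$, consistent with its values on finite Feynman diagrams where $t(G,G') = t(G, W_{G'})$ by the remark following (\ref{hom-den-2}).

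Finally I would record the multiplicativity statement needed for the later algebraic structure: since disjoint union of Feynman diagrams corresponds (after rescaling into disjoint subintervals of $[0,1]$) to the Feynman graphon built from the $\Gamma_v$'s, and graph homomorphism counts are multiplicative over connected components of the \emph{target} only in the appropriate normalization, one checks that $t(G, W_{\Gamma_1 \sqcup \Gamma_2})$ is determined by $t(G, W_{\Gamma_1})$ and $t(G, W_{\Gamma_2})$ when $G$ is connected; this is exactly the behavior one wants in order to later span a dense subalgebra of $C(\mathcal{S}^{\Phi}_{{\rm graphon}})$. The main obstacle I anticipate is the sparse-graph normalization issue: the rooted tree (forest) representations $t_{\Gamma}$ are sparse, so the naive canonical graphons converge to the $0$-graphon and all homomorphism densities would trivially vanish; one must be careful to use the rescaled/renormalized ground probability space (the construction referenced in \cite{br-1,shojaeifard-10,ihes-submitted}) so that the limiting Feynman graphon is nonzero, and then verify that the homomorphism density with respect to this rescaled model is still well-defined, finite, and independent of the rescaling scheme up to the weakly isomorphic relation. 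Once that compatibility is in place, the rest of the argument is the routine cut-metric continuity estimate, and the proposition follows.
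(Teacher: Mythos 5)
Your argument is internally sound, but it constructs a different operator from the one the paper's proof builds, and in doing so it leaves out the part of the claim that the paper actually needs later. You fix a finite simple graph $G$ and vary the graphon: well-definedness of $X_{{\rm DSE}} \mapsto t(G,W_{X_{{\rm DSE}}})$ then follows from weak-isomorphism invariance of homomorphism densities, the counting-lemma Lipschitz bound $|t(G,W)-t(G,U)| \le |E(G)|\, d_{{\rm cut}}(W,U)$, and the cut-distance convergence $[W_{Y_m}] \to [W_{X_{{\rm DSE}}}]$ of Theorem \ref{feynman-graphon-4}. That is a clean and standard graphon-theoretic route, and your insistence on the sparse-graph renormalization issue (so that the limiting Feynman graphon is not the zero graphon) is a genuine point that the paper's proof passes over in silence. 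But in this version the first argument of $t(-,-)$ is always finite, so the integral (\ref{hom-den-2}) is a finite-dimensional integral and nothing really has to be "extended to large Feynman diagrams"; the only content is continuity in the second slot.

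The paper's proof instead puts the large Feynman diagram in the \emph{first} slot: it defines $t(X_{{\rm DSE}},W)$ for a fixed graphon $W$, where the naive integral would run over infinitely many variables. It does this by showing the multiplicativity $t(Y_{m+1},W) = t(Y_m,W)\,t(X_{m+1},W)$ of the density over the disjoint components of the partial sums, and then taking a (projective) limit of the functionals $t(Y_m,-)$. This is the object used afterwards — for instance $t(X,Z) = \lim_{\leftarrow m} t(Y_m,Z)$ with $t(Y_m,Z)=\prod_{i=1}^m t(X_i,W_Z)$ in Lemma \ref{hom-dens-diff-1}(ii). Your cut-metric continuity in the second argument says nothing about the convergence of $t(Y_m,W)$ as the pattern graph $Y_m$ grows; for that one needs the product formula together with the observation that each factor $t(X_i,W)$ lies in $[0,1]$, so the partial products are monotone non-increasing and converge, and that the limit is again invariant under replacing $W$ by a weakly isomorphic graphon. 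To match the proposition as the paper uses it, you should add this half of the argument (or at least the product formula and the monotone-convergence remark); the half you did write is correct and complements, rather than reproduces, the paper's proof.
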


\begin{proof}
Proposition 4.6 in \cite{shojaeifard-10} shows that the unique
solution $X_{{\rm DSE}}$ of a given Dyson--Schwinger equation DSE
can be interpreted as the convergent limit of the sequence
$\{Y_{m}\}_{m \ge 1}$ of its partial sums with respect to the
cut-distance topology.

For each unlabeled Feynman graphon class
$[W]$, we can build the homomorphism density $t(X_{{\rm DSE}},W)$
as the "limit" of the sequence $\{t(Y_{m},W)\}_{m \ge 1}$
of homomorphism densities corresponding to finite expansions $Y_{m}$
of finite graphs (which do not have self-loops but have loops).

For each $m \ge 1$, if the partial sum $Y_{m}:= X_{1} + ...+X_{m}$ has $k_{m}$ vertices, then we have
\begin{equation} \label{hom-dens-1}
t(Y_{m},W)= \int_{[0,1]^{k_{m}}} \prod_{(i,j) \in E(Y_{m})}
W(x_{i},x_{j}) dx_{1}...dx_{k_{m}}.
\end{equation}
The induction is useful to show that
\begin{equation}
t(Y_{m+1},W)=t(Y_{m},W)t(X_{m+1},W).
\end{equation}

The condition $d_{{\rm cut}}(W',W)=0$ for weakly isomorphic graphons can be applied to show that
$t(Y_{m},W')=t(Y_{m},W)$ which leads us to $t(X_{{\rm
DSE}},W')=t(X_{{\rm DSE}},W)$. Therefore we can define a poset on
homomorphism densities where a family of injections $\{f_{ij}:
t(Y_{i},-) \rightarrow t(Y_{j},-)\}_{i \le j}$ on the space of
Feynman graphons can be formulated. This gives us an inverse system
where its inverse limit can be identified as a subset of the direct
product of the homomorphism densities $t(Y_{i},-)$s. This inverse limit can be considered
as the homomorphism density with respect to the large Feynman
diagram $X_{{\rm DSE}}$. We have
$$t(X_{{\rm DSE}},-) = {\rm lim}_{\leftarrow_{m}} t(Y_{m},-)$$
\begin{equation}
= \{\overrightarrow{W} \in \prod_{m=1}^{\infty} t(Y_{m},-): \ \
f_{ij}(W_{j})=W_{i}, \ \forall i \le j\} \subseteq
\prod_{m=1}^{\infty} t(Y_{m},-).
\end{equation}
\end{proof}

For $1 \le n \le \infty$, set $[n]:=\{i \in \mathbb{N}: \ \ i \le
n\}$. For a given Feynman graphon $W$, define a random graph
$G(n,W)$ with the vertex set $[n]$ (chosen points $\{x_{1},...,x_{n}\}$
at random from the closed unit interval) by letting $ij$ be an edge
in $G(n,W)$ with the probability $W(x_{i},x_{j})$. It is possible to
build $G(n,W)$ for all $n$ by constructing $G(\infty,W)$ as an
exchangeable random graph namely, its distribution is invariant
under permutations of the vertices and every exchangeable random
graph is a mixture of such graphs. Then take the subgraph defined by
the first $n$ vertices. In general, for any Feynman diagram $\Gamma$
(as a labeled graph), the homomorphism density $t(\Gamma,W)$ equals
the probability that $\Gamma$ is a subgraph of $G(\infty,W)$ or of
$G(n,W)$ for any $n \ge |\Gamma|$. In other words, the family
$\{t(\Gamma,W)\}_{\Gamma}$ and the distribution of $G(\infty,W)$ can
determine each other.

It is important to note that for given Feynman
graphons $W,W'$, $G(\infty,W)$ and $G(\infty,W')$ have the same
distribution if and only if those two graphons are weakly isomorphic or
equivalent.

\begin{lem} \label{hom-dens-large-2}
Homomorphism densities can determine a new class of random graphs with respect
to solutions of Dyson--Schwinger equations.
\end{lem}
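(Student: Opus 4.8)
The plan is to build on Proposition \ref{hom-dens-large-1}, which already exhibits, for each Dyson--Schwinger equation DSE with unique solution $X_{{\rm DSE}}$, the homomorphism density $t(X_{{\rm DSE}},-)$ as the inverse limit of the system $\{t(Y_m,-)\}_{m\ge 1}$ attached to the partial sums. First I would recall the finite-level construction: for a labeled graphon $W$ (Lemma \ref{feynman-graphon-3}) and each $m$, the random graph $G(|Y_m|,W)$ is obtained by sampling points $x_1,\dots,x_{k_m}$ uniformly from $[0,1]$ and inserting the edge $ij$ with probability $W(x_i,x_j)$; then $t(Y_m,W)$ is exactly the probability that the finite Feynman diagram $Y_m$ embeds as a subgraph of $G(|Y_m|,W)$. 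So at the finite level the collection $\{t(Y_m,W)\}_m$ and the law of the sampled graph determine each other, which is the classical dictionary between homomorphism densities and $W$-random graphs used in \cite{lovasz-1, janson-1}.

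Next I would pass to the limit using the exchangeable-random-graph construction $G(\infty,W)$: build one infinite $W$-random graph whose vertex set is $\mathbb{N}$, invariant under finite permutations of vertices, so that for every $n$ the induced subgraph on $\{1,\dots,n\}$ has the law of $G(n,W)$. For the large Feynman diagram $X_{{\rm DSE}}$, define the associated random graph $R(X_{{\rm DSE}},W)$ as the element of the inverse system obtained by coupling the finite random graphs $\{R(Y_m,W)\}_m$ compatibly — concretely, fix the single infinite sample $G(\infty,W)$ and read off, along the inverse system $\{f_{ij}\}_{i\le j}$ of Proposition \ref{hom-dens-large-1}, the nested random subgraphs corresponding to the partial sums. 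Since $t(Y_{m+1},W)=t(Y_m,W)\,t(X_{m+1},W)$ and the cut-distance convergence $Y_m\to X_{{\rm DSE}}$ gives $t(Y_{m_1},W)=t(Y_{m_2},W)$ for all sufficiently large $m_1,m_2$ up to the weakly isomorphic relation, the coupled sequence $\{R(Y_m,W)\}_m$ stabilizes in distribution, so the inverse limit random graph $R(X_{{\rm DSE}},W):={\rm lim}_{\leftarrow_m} R(Y_m,W)$ is well defined and its law is an invariant of the unlabeled Feynman graphon class $[W_{X_{{\rm DSE}}}]$. Then, exactly as in Lemma \ref{feynman-graphon-3}, I would verify with probability one that $\{R(Y_m,W)\}_m$ is cut-distance convergent to the Feynman graphon $[W_{X_{{\rm DSE}}}]$, so $R(X_{{\rm DSE}},W)$ genuinely recovers the large Feynman diagram.

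Finally I would check the defining characterization: for any labeled Feynman diagram $\Gamma$ with $|\Gamma|\le m$, the probability that $\Gamma$ appears as a subgraph of $R(Y_m,W)$ equals $t(\Gamma,W_{Y_m})$, and passing to the inverse limit this probability converges (by the same cut-distance stabilization argument) to $t(\Gamma,W_{X_{{\rm DSE}}})$; hence the whole family $\{t(\Gamma,W_{X_{{\rm DSE}}})\}_\Gamma$ determines the law of $R(X_{{\rm DSE}},W)$ and conversely. Two such random graphs $R(X_{{\rm DSE}},W)$ and $R(X_{{\rm DSE}},W')$ have the same distribution if and only if $W\sim W'$ (weakly isomorphic), which is why the construction descends to the unlabeled classes in $\mathcal{S}^\Phi_{{\rm graphon}}$ and $\mathcal{S}^{\Phi,g}$. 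The main obstacle I anticipate is purely technical: making the inverse limit of random graphs rigorous, i.e.\ constructing a single probability space supporting a compatible family $\{R(Y_m,W)\}_m$ (Kolmogorov extension along the maps $f_{ij}$) and proving the almost-sure cut-distance convergence $R(Y_m,W)\to [W_{X_{{\rm DSE}}}]$ for the \emph{sparse} partial sums, where the canonical graphon is degenerate and the rescaling/renormalization methods of \cite{br-1, shojaeifard-10, ihes-submitted} must be invoked so that the non-zero Feynman graphon limit is the correct target; everything else is a routine transfer of the dense-graphon results of \cite{lovasz-1, janson-1} along the machinery already set up in the earlier lemmas.
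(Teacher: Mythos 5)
Your proposal is correct in substance and lands on the same object (an exchangeable infinite random graph attached to $X_{\rm DSE}$ whose finite marginals are governed by homomorphism densities), but it gets there by a different route than the paper. You sample from the limiting Feynman graphon: you build the $W$-random graphs $G(n,W)$, pass to the exchangeable infinite sample $G(\infty,W)$, and then realize the random graph of $X_{\rm DSE}$ as a Kolmogorov-type inverse limit of the coupled family $\{R(Y_m,W)\}_m$ along the inverse system of Proposition \ref{hom-dens-large-1}. The paper instead samples from the graph sequence: for each fixed $k$ it forms the random induced subgraphs $Y_m[k]$ of the partial sums, observes that cut-distance convergence of $\{Y_m\}_m$ is equivalent to convergence in the density metric $d_{\rm dens}(\Gamma,\Gamma')=\sum_i 2^{-i}|t(W_i,\Gamma)-t(W_i,\Gamma')|$, concludes that the laws of $Y_m[k]$ converge to a limit $X_{\rm DSE}[k]$, and assembles these into an infinite random graph $\overline{X_{\rm DSE}}$ with $X_{\rm DSE}[k]\equiv \overline{X_{\rm DSE}}|_{[k]}$. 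These are the two standard faces of the same Lov\'asz--Janson dictionary, and either suffices at the level of rigor the paper operates at; your version makes the dependence on the graphon $W$ and on weak isomorphism classes more explicit, while the paper's version avoids the inverse-limit bookkeeping by working directly with distributions of sampled subgraphs. Two cautions: your assertion that $t(Y_{m_1},W)=t(Y_{m_2},W)$ for all large $m_1,m_2$ is stronger than what cut-distance convergence gives (you only get convergence, not eventual equality; ``stabilizes in distribution'' should be ``converges in distribution'', which is all you need). And the sparsity issue you flag at the end is real and is not addressed in the paper's proof either --- the partial sums are sparse, so the identification of the correct non-zero target graphon genuinely requires the rescaling machinery you cite, and your proposal is the more honest of the two on this point.
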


\begin{proof}
We know that $X_{{\rm DSE}}={\rm lim}_{m \rightarrow \infty} Y_{m}$
with respect to the cut-distance topology such that $|Y_{m}|
\rightarrow \infty$. For each $[k]$ there exists a random graph
$Y_{m}[k]$ on the vertex set $[k]$ such that $\{Y_{m}[k]\}_{m \ge
1}$ converges to $X_{{\rm DSE}}[k]$ with respect to the metric
\begin{equation}
d_{{\rm dens}}(\Gamma,\Gamma') = \sum_{i}2^{-i}|t(W_{i},\Gamma) -
t(W_{i},\Gamma')|
\end{equation}
which is equivalent to the cut-distance
(\cite{borgs-chayes-lovasz-sos-vesztergombi-1, janson-1}). Therefore
there exists an infinite random graph $\overline{X_{{\rm DSE}}}$ on
$[\infty]$ such that $X_{{\rm DSE}}[k] \equiv \overline{X_{{\rm
DSE}}}|_{[k]}$ with respect to the metric $d_{{\rm dens}}$. This
means that
\begin{equation}
{\rm lim}_{m \rightarrow \infty} Y_{m}[k] = \overline{X_{{\rm
DSE}}}|_{[k]}.
\end{equation}
\end{proof}

\begin{cor} \label{hom-dens-large-4}
The distribution of the random graphs $Y_{m}[k]$ with respect to
partial sums of $X_{{\rm DSE}}$ converges when $m$ tends to
infinity.
\end{cor}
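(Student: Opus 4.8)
The plan is to read Corollary~\ref{hom-dens-large-4} as the probabilistic shadow of Lemma~\ref{hom-dens-large-2}: once we know that the finite random graphs $Y_m[k]$ sampled from the partial sums converge (in the metric $d_{\rm dens}$, equivalently in the cut-distance) to the random graph $\overline{X_{\rm DSE}}|_{[k]}$, convergence of their distributions is almost a restatement. First I would fix $k\ge 1$ and recall from Lemma~\ref{hom-dens-large-2} that $Y_m[k]$ is a random graph on the vertex set $[k]$ obtained by selecting $k$ points uniformly at random from $[0,1]$ and connecting $i,j$ with probability $W_{Y_m}(x_i,x_j)$, where $W_{Y_m}$ is a labeled Feynman graphon representing $Y_m$. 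The space of graphs on $[k]$ is finite, so ``convergence of distributions'' means convergence in $\mathbb{R}^{2^{\binom{k}{2}}}$ of the probability vectors $\big(\mathbb{P}(Y_m[k]=H)\big)_{H}$ as $m\to\infty$.

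Next I would express each of these probabilities as an integral of homomorphism-type densities. By inclusion--exclusion over which of the potential edges of $[k]$ are present, $\mathbb{P}(Y_m[k]=H)$ is a fixed universal polynomial in the quantities $t(F,W_{Y_m})$ for subgraphs $F$ of the complete graph $K_k$ — concretely it equals an integral over $[0,1]^k$ of $\prod_{(i,j)\in E(H)}W_{Y_m}(x_i,x_j)\prod_{(i,j)\notin E(H)}\big(1-W_{Y_m}(x_i,x_j)\big)$. Proposition~\ref{hom-dens-large-1} already tells us that for every finite simple graph $F$ the homomorphism densities $t(F,W_{Y_m})$ converge as $m\to\infty$ (this is exactly the inverse-limit construction $t(X_{\rm DSE},-)={\rm lim}_{\leftarrow_m}t(Y_m,-)$ carried out there, together with the multiplicativity $t(Y_{m+1},W)=t(Y_m,W)t(X_{m+1},W)$). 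Since each $\mathbb{P}(Y_m[k]=H)$ is a continuous (indeed polynomial) function of finitely many such densities, the limit $\lim_{m\to\infty}\mathbb{P}(Y_m[k]=H)$ exists, and the limiting values are precisely $\mathbb{P}(\overline{X_{\rm DSE}}|_{[k]}=H)$ by the identification $Y_m[k]\equiv \overline{X_{\rm DSE}}|_{[k]}$ in the metric $d_{\rm dens}$ from Lemma~\ref{hom-dens-large-2}. One should also invoke the remark preceding Lemma~\ref{hom-dens-large-2} that the family $\{t(\Gamma,W)\}_\Gamma$ determines the distribution of $G(\infty,W)$, so the limiting distribution is well-defined independently of the graphon representative chosen for each $Y_m$, using $d_{\rm cut}(W',W)=0\Rightarrow t(\Gamma,W')=t(\Gamma,W)$.

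The main obstacle I anticipate is purely a bookkeeping one rather than a conceptual one: the partial sums $Y_m$ and their rooted-tree representations are sparse graphs, so the naive canonical graphons $W_{Y_m}$ degenerate to the zero graphon in the cut-distance, and one must pass to the rescaled / renormalized graphon models discussed in Lemma~\ref{feynman-graphon-1} and in \cite{shojaeifard-10,ihes-submitted}. I would therefore be careful to phrase everything in terms of the non-zero Feynman graphon classes $[W_{Y_m}]$ and the random graphs $G(n,W_{Y_m})$ attached to them, and to check that the rescaling does not disturb the homomorphism densities of the finite decorated trees (which are the only spanning forests of the $Y_m$), exactly as in Definition~\ref{feynman-graphon-2}. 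Once this normalization is fixed, the convergence of the $2^{\binom{k}{2}}$ probabilities follows from the already-established convergence of homomorphism densities, and the corollary is proved; optionally one remarks that the statement holds for all $k$ simultaneously, so the finite-dimensional distributions of $\overline{X_{\rm DSE}}$ are the limits of those of the $Y_m$, i.e.\ $Y_m\Rightarrow \overline{X_{\rm DSE}}$ in the usual sense of convergence of exchangeable random graphs.
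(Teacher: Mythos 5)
Your proposal is correct, but it reaches the conclusion by a somewhat different route than the paper. The paper's own proof is essentially a two-line appeal to black boxes: it cites Lemma \ref{hom-dens-large-2} for the construction of the sampled graphs $Y_{m}[k]$ and then invokes the definition of convergence of graph sequences via random graphs (the Janson/Lov\'asz equivalence between cut-distance convergence and convergence of sampling distributions), applied to the already-established cut-distance convergence $Y_{m}\to X_{{\rm DSE}}$. You instead unpack that equivalence: you write each probability $\mathbb{P}(Y_{m}[k]=H)$ explicitly as $\int_{[0,1]^{k}}\prod_{(i,j)\in E(H)}W_{Y_{m}}\prod_{(i,j)\notin E(H)}(1-W_{Y_{m}})$, expand by inclusion--exclusion into a fixed polynomial in finitely many homomorphism densities $t(F,W_{Y_{m}})$, and then lean on Proposition \ref{hom-dens-large-1} (convergence and multiplicativity of the densities over partial sums) to pass to the limit. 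So your key lemma is \ref{hom-dens-large-1} where the paper's is \ref{hom-dens-large-2}, and what your version buys is self-containedness: the distributional convergence is reduced to convergence of finitely many real numbers, with the weak-isomorphism invariance $d_{\rm cut}(W,W')=0\Rightarrow t(\Gamma,W)=t(\Gamma,W')$ guaranteeing that the limit does not depend on the graphon representative. Your flagged caveat about sparseness -- that the naive canonical graphons of the rooted-tree representations degenerate to the zero graphon and one must work with the rescaled Feynman graphon classes -- is a real issue in this framework, and handling it by deferring to the normalization fixed in Lemma \ref{feynman-graphon-1} is consistent with what the paper does elsewhere; the paper's own proof of this corollary silently assumes the same normalization.
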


\begin{proof}
Thanks to Lemma \ref{hom-dens-large-2}, for each $k \le |Y_{m}|$,
there exists $Y_{m}[k]$ as the random induced subgraph of $Y_{m}$
with $k$ vertices determined by selecting $k$ separate vertices
$v_{1},...,v_{k}$ of $Y_{m}$ at uniformly random procedures. Now
thanks to graphon representations of Feynman diagrams, it is enough
to apply the definition of convergent sequences in the theory of
graphons via random graphs (\cite{janson-1}) to the sequence
$\{Y_{m}\}_{m \ge 1}$ to show its convergence to $X_{{\rm DSE}}$.
\end{proof}

\begin{lem} \label{hom-dens-large-3}
Dyson--Schwinger equations which generate isomorphic Hopf
subalgebras have the same homomorphism density.
\end{lem}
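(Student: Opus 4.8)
The plan is to reduce the statement to the inverse-limit description of the homomorphism density of a large Feynman diagram together with the rigidity of the Faà di Bruno type coproduct on the canonical generators. First I would fix a Hopf algebra isomorphism $\varphi : H_{{\rm DSE}_1} \to H_{{\rm DSE}_2}$. Since both $H_{{\rm DSE}_1}$ and $H_{{\rm DSE}_2}$ are connected graded free commutative finite type Hopf subalgebras of the renormalization Hopf algebra whose coproduct on the canonical generators $\{X_n\}$ is the Faà di Bruno type coproduct --- which, as recalled in the introduction, does not depend on the parameters $\omega_j$ --- the morphism $\varphi$ is automatically a graded morphism and restricts to a linear isomorphism between the spaces of indecomposables in each degree $n$. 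This identifies, degree by degree, the generator $X_n^{(1)}$ of $H_{{\rm DSE}_1}$ with the generator $X_n^{(2)}$ of $H_{{\rm DSE}_2}$, compatibly with the combinatorial grafting data encoded by the Hochschild one-cocycles $B^{+}_{\gamma_n}$ (the dual picture of the coproduct).

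Next I would pass to the graphon side. By Lemma \ref{feynman-graphon-1}, each finite Feynman diagram $X_n^{(i)}$ has a unique unlabeled Feynman graphon class $[W_{X_n^{(i)}}]$, obtained from its decorated rooted tree; since the rooted tree structure is precisely the datum carried by the recursive relations (\ref{dse-4}) and these are transported isomorphically by $\varphi$, the classes agree: $[W_{X_n^{(1)}}] = [W_{X_n^{(2)}}]$ for every $n$. In particular $t(H, W_{X_n^{(1)}}) = t(H, W_{X_n^{(2)}})$ for every finite graph $H$, that is, the two graphons carry the same homomorphism densities. Then I would assemble the infinite case: by Theorem \ref{feynman-graphon-4} the unique solution $X_{{\rm DSE}_i} = \sum_{n\ge 0} X_n^{(i)}$ is the cut-distance convergent limit of its partial sums $Y_m^{(i)} = X_1^{(i)} + \dots + X_m^{(i)}$, and the proof of Proposition \ref{hom-dens-large-1} gives the multiplicative recursion $t(Y_{m+1}^{(i)}, W) = t(Y_m^{(i)}, W)\, t(X_{m+1}^{(i)}, W)$ together with the identification $t(X_{{\rm DSE}_i}, -) = {\rm lim}_{\leftarrow_{m}} t(Y_m^{(i)}, -) = \prod_{n\ge 1} t(X_n^{(i)}, -)$. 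Since the factors of these inverse limits coincide by the previous step, the two inverse limits coincide, so $t(X_{{\rm DSE}_1}, W) = t(X_{{\rm DSE}_2}, W)$ for every Feynman graphon $W$; equivalently, $[W_{X_{{\rm DSE}_1}}]$ and $[W_{X_{{\rm DSE}_2}}]$ are weakly isomorphic, which is the asserted equality of homomorphism densities (and shows in addition that ${\rm DSE}_1 \approx {\rm DSE}_2$).

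The load-bearing point, and the step I expect to be the main obstacle, is the passage in the second paragraph from an \emph{abstract} Hopf algebra isomorphism to an equality of unlabeled graphon classes on the generators. An arbitrary Hopf isomorphism need not a priori respect the grafting-operator presentation, so one must argue that the Feynman graphon model $[W_{X_n}]$ is reconstructible from the coalgebra and indecomposable data alone --- using that the coproduct on generators is the rigid, parameter-free Faà di Bruno coproduct, which pins down the admissible-cut (i.e. nested-subdivergence) structure of the rooted tree $\Xi(X_n)$ and hence its adjacency matrix up to rearrangement. I would prove this as a separate structural lemma; the fallback is to restrict attention to Hopf isomorphisms induced by isomorphisms of the underlying decorated rooted tree Hopf algebras $H_{{\rm CK}}(\Phi)$, for which the conclusion of the second paragraph is immediate and the rest of the argument is unchanged.
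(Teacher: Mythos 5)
Your proposal is correct in outline and follows essentially the same route as the paper, but it is considerably more explicit at both ends, and the comparison is instructive. The paper's proof consists of three moves: it asserts in one sentence that isomorphic Hopf subalgebras force the unique solutions $X_{{\rm DSE}_{1}}$, $X_{{\rm DSE}_{2}}$ to be isomorphic infinite graphs; it then observes that isomorphic large Feynman diagrams have cut-distance zero, i.e.\ weakly isomorphic graphon representatives; and it concludes by invoking the Borgs--Chayes--Lov\'asz theorem (weakly isomorphic graphons have identical homomorphism densities) together with Proposition \ref{hom-dens-large-1}. Your second half differs mildly but harmlessly: instead of applying the weak-isomorphism/density equivalence directly to the limit objects, you propagate equality of densities through the partial sums via the multiplicative recursion $t(Y_{m+1},W)=t(Y_m,W)\,t(X_{m+1},W)$ and pass to the inverse limit. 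Either route works once the generators are matched.

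The step you flag as the main obstacle --- passing from an \emph{abstract} Hopf algebra isomorphism to an identification of the unlabeled graphon classes of the generators --- is precisely the step the paper does not argue at all; it is compressed into the phrase ``which means that the unique solutions are isomorphic infinite graphs.'' Your instinct that this needs a separate structural lemma is sound, and your skepticism is warranted: the Fa\`a di Bruno coproduct on the generators is so rigid (independent of the $\omega_j$, as the paper itself recalls) that Hopf subalgebras arising from quite different families of primitive graphs $\{\gamma_n\}$ can be abstractly isomorphic while their solutions carry different decorations, different rooted-tree data, and hence a priori different Feynman graphons. So the rigidity of the coproduct cuts \emph{against} your second paragraph rather than for it: it does not pin down the adjacency matrix of $\Xi(X_n)$ from the coalgebra structure alone. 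Your fallback --- restricting to isomorphisms induced from the decorated rooted-tree level, i.e.\ isomorphisms respecting the Feynman-diagram presentation --- is in effect the reading under which both your argument and the paper's one-line assertion become correct, and it is the reading one should adopt. With that restriction in place the rest of your argument closes cleanly.
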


\begin{proof}
Suppose Hopf subalgebras $H_{{\rm DSE}_{1}}$, $H_{{\rm DSE}_{2}}$
corresponding to the equations  ${\rm DSE}_{1}$ and ${\rm DSE}_{2}$
are isomorphic which means that the unique solutions $X_{{\rm
DSE}_{1}}$ and $X_{{\rm DSE}_{1}}$ are isomorphic infinite graphs.
We can lift the weakly isomorphic relation on graphons onto the
level of large Feynman diagrams. We say that two large Feynamn
diagrams $X_{{\rm DSE}_{1}}, X_{{\rm DSE}_{2}}$ are weakly isomorphic or
equivalent if their corresponding labeled graphons have the same
unlabeled measurable function almost everywhere. In other words,
$X_{{\rm DSE}_{1}}$ and $X_{{\rm DSE}_{2}}$ are weakly isomorphic or equivalent whenever
\begin{equation}
d_{{\rm cut}}(f^{X_{{\rm DSE}_{1}}},f^{X_{{\rm DSE}_{2}}})=0.
\end{equation}
Thanks to Borgs--Chayes--Lovasz Theorem (\cite{lovasz-1}) and
Proposition \ref{hom-dens-large-1}, we can show that two weakly
equivalent large Feynamn diagrams $X_{{\rm DSE}_{1}}, X_{{\rm
DSE}_{2}}$ have the same homomorphism density for all (finite)
simple graphs. In other words,
\begin{equation}
t(H,f^{X_{{\rm DSE}_{1}}}) = t(H,f^{X_{{\rm DSE}_{2}}}).
\end{equation}
\end{proof}

\subsection{\textsl{Differentiability}}

In this part our task focuses on the study of the space of smooth real valued functions on $\mathcal{S}^{\Phi,g}$ in terms of homomorphism densities of Feynman graphons which contribute to solutions of Dyson--Schwinger equations. We show that this class of homomorphism densities can play the role of a basis for the Taylor series representations of smooth functions.

Define a new graduation parameter on the collection
$\mathcal{H}(\Phi)$ of all Feynman diagrams of a physical theory
$\Phi$ in terms of the number of edges. For each $n$, set
$\mathcal{H}_{n}(\Phi)$ as the isomorphism classes of Feynman
diagrams with $n$ internal and external edges, no isolated vertices,
no self-loops but possible multi-edges. Set $\mathcal{H}_{\le
n}(\Phi) = \bigcup_{j \le n} \mathcal{H}_{j}(\Phi)$. The
homomorphism density for all Feynman graphs $\Gamma \in
\mathcal{H}_{j}(\Phi)$ is well-defined.

\begin{thm} \label{hom-dens-large-7}
For each $n \ge 1$, define
\begin{equation}
F_{n}: \mathcal{S}^{\Phi}_{{\rm graphon}} \longrightarrow \mathbb{R}, \
\ F_{n}(W_{\Gamma}):= \sum_{\gamma \in \mathcal{H}_{\le n}(\Phi)}
a_{\gamma} t(\gamma,W_{\Gamma})
\end{equation}
for any finite Feynman diagram $\Gamma$ and some constants $a_{\gamma}$. Then \\
(i) $F_{n}$ is continuous in the $L^{1}$-topology. \\
(ii) It is possible to lift $F_{n}$ onto the space of large Feynman
diagrams and define a new real valued map $\tilde{F}$ on
$\mathcal{S}^{\Phi,g}$ which is continuous in the $L^{1}$-topology.
\end{thm}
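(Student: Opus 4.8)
The plan is to deduce the whole statement from the Lipschitz continuity of homomorphism densities in the cut-metric, together with the density facts already available for large Feynman diagrams (Theorem \ref{feynman-graphon-4} and Proposition \ref{hom-dens-large-1}). The first thing I would record is that $\mathcal{H}_{\le n}(\Phi)$ is a \emph{finite} set: a Feynman diagram with at most $n$ edges and no isolated vertices has at most $2n$ vertices, and in a fixed theory $\Phi$ the admissible vertex- and edge-decorations range over a finite alphabet, so there are only finitely many isomorphism classes. Hence the sum defining $F_n$ is finite and, for any choice of the constants $a_\gamma$, the number
$$C_n := \sum_{\gamma \in \mathcal{H}_{\le n}(\Phi)} |a_\gamma|\, |E(\gamma)|$$
is finite.

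For part (i) I would invoke the counting lemma for homomorphism densities (\cite{lovasz-1, janson-1}): for any finite graph $\gamma$ and graphons $W,W'$ valued in $[0,1]$ one has $|t(\gamma,W) - t(\gamma,W')| \le |E(\gamma)|\,\delta_{\rm cut}([W],[W'])$, the estimate coming from the telescoping identity $\prod_{e} W_e - \prod_{e} W'_e = \sum_{e} (W_e - W'_e)\cdot(\text{product of the remaining factors})$, which stays valid for Feynman diagrams carrying multi-edges precisely because $\|W\|_\infty \le 1$ keeps every spectator factor a $[0,1]$-valued kernel. (Equivalently one argues with the decorated rooted-tree representations $t_\gamma$, which are simple graphs, where this continuity was already recorded in the discussion preceding this section.) Consequently $F_n$ is $C_n$-Lipschitz, hence continuous, in the cut-distance on $\mathcal{S}^{\Phi}_{\rm graphon}$. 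Since $\|W\|_{\rm cut} \le \|W\|_1$, the $L^1$-topology is finer than the cut-distance topology on unlabeled graphons, so $F_n$ is \emph{a fortiori} continuous in the $L^1$-topology, which proves (i).

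For part (ii) I would use the density statements for large Feynman diagrams. Given a Dyson--Schwinger equation DSE with unique solution $X_{\rm DSE} = \sum_m X_m$ and partial sums $Y_m$, Theorem \ref{feynman-graphon-4} gives $[W_{Y_m}] \to [W_{X_{\rm DSE}}]$ in the cut-distance, and Proposition \ref{hom-dens-large-1} shows that $t(\gamma, W_{X_{\rm DSE}}) := \lim_m t(\gamma, W_{Y_m})$ exists for every $\gamma \in \mathcal{H}_{\le n}(\Phi)$, independently of the representing sequence and of the Feynman-graphon representative (the latter because weakly isomorphic graphons have equal homomorphism densities, cf. Lemma \ref{hom-dens-large-3}). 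I would then set
$$\tilde{F}([W_{X_{\rm DSE}}]) := \sum_{\gamma \in \mathcal{H}_{\le n}(\Phi)} a_\gamma\, t(\gamma, W_{X_{\rm DSE}}).$$
Because each $Y_m$ is a finite Feynman diagram, $\tilde{F}([W_{Y_m}]) = F_n([W_{Y_m}])$, so $\tilde{F}$ extends $F_n$ and $\tilde{F}([W_{X_{\rm DSE}}]) = \lim_m F_n([W_{Y_m}])$; that is, $\tilde{F}$ is the unique continuous extension of $F_n$ to the cut-distance closure of the finite Feynman diagrams, which by Theorem \ref{feynman-graphon-4} contains $\mathcal{S}^{\Phi,g}$. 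The Lipschitz estimate passes to the limit: for $X, X' \in \mathcal{S}^{\Phi,g}$, approximating both by partial sums and applying $|t(\gamma, W) - t(\gamma, W')| \le |E(\gamma)|\,\delta_{\rm cut}([W],[W'])$ on the approximants yields $|\tilde{F}(X) - \tilde{F}(X')| \le C_n\, \delta_{\rm cut}([W_X],[W_{X'}]) = C_n\, d(X, X')$. Hence $\tilde{F}$ is cut-continuous on $\mathcal{S}^{\Phi,g}$, and therefore, by the same domination $\|\cdot\|_{\rm cut} \le \|\cdot\|_1$, continuous in the $L^1$-topology.

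The only genuinely delicate point, and the one I would spell out in detail, is the passage of the counting lemma from simple graphs to Feynman diagrams equipped with multi-edges and decorations: this is where one uses that graphons are $[0,1]$-valued (so the telescoping keeps all spectator factors bounded by $1$) and that $\Phi$ supplies only finitely many decoration types (so $\mathcal{H}_{\le n}(\Phi)$ is finite and $C_n < \infty$). Everything else is the routine "continuous extension by density of a uniformly continuous map", with Theorem \ref{feynman-graphon-4} and Proposition \ref{hom-dens-large-1} supplying exactly the density and the well-definedness on large Feynman diagrams.
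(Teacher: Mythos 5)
Your proposal is correct in substance but diverges from the paper's proof in a way worth recording. For part (i) the paper works directly with the telescoping identity to get the $L^{1}$-Lipschitz bound $|t(\sqcup\gamma_{t},W)-t(\sqcup\gamma_{t},W')|\le|E(\sqcup\gamma_{t})|\,\|W-W'\|_{1}$ via a multilinear functional $\tau$, and never passes through the cut-norm; your detour through the counting lemma in $\delta_{\rm cut}$ is stronger than what is needed and is also the one shaky step in your argument, since the cut-norm counting lemma is a theorem about \emph{simple} graphs and the telescoping identity by itself only yields the $L^{1}$ bound for diagrams with multi-edges (your parenthetical fallback to the rooted-tree representations, which are simple, is the safer route). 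Since the theorem only asserts $L^{1}$-continuity and your telescoping estimate already delivers the $L^{1}$-Lipschitz constant $C_{n}$, the conclusion of (i) stands either way; your explicit observation that $\mathcal{H}_{\le n}(\Phi)$ is finite, so that $C_{n}<\infty$, is a point the paper leaves implicit. For part (ii) the two proofs genuinely part ways: you define $\tilde{F}$ as the unique uniformly continuous extension by density, i.e.\ $\tilde{F}([W_{X_{\rm DSE}}])=\sum_{\gamma}a_{\gamma}\,t(\gamma,W_{X_{\rm DSE}})$ with $t(\gamma,W_{X_{\rm DSE}})=\lim_{m}t(\gamma,W_{Y_{m}})$, whereas the paper sets $\tilde{F}(X_{\rm DSE}):=\prod_{m=1}^{\infty}F_{m}(W_{Y_{m}})$, an infinite product over the partial sums, and argues continuity term by term. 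These are different maps. Your construction has the advantage of actually \emph{extending} $F_{n}$ (so the word ``lift'' in the statement is literally realized), of inheriting the Lipschitz constant $C_{n}$ in the limit, and of sidestepping the convergence of the infinite product, which the paper does not verify; the paper's version, on the other hand, packages information from every truncation order $m$ at once. Given that the statement does not pin down which lift is meant, your argument is a legitimate and arguably cleaner proof of the claim as stated.
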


\begin{proof}
(i) Define a new real valued multilinear functional $\tau$ on
$(\mathcal{S}^{\Phi}_{{\rm graphon}})^{n}$ given by
\begin{equation}
\tau((W_{\gamma_{t}})_{\gamma_{t}}):= \int_{[0,1]^{\sum |\gamma_{t}|}} \prod_{\gamma_{t}} W_{\gamma_{t}}(x_{i},x_{j}) \prod_{i}dx_{i}.
\end{equation}
We can show that
\begin{equation}
|\tau((W_{\gamma_{t}})_{\gamma_{t}}) - \tau((W'_{\gamma_{t}})_{\gamma_{t}})|
\le \sum_{\gamma_{t}} \parallel W_{\gamma_{t}} - W'_{\gamma_{t}} \parallel_{1}
\end{equation}
which leads us to
\begin{equation} \label{eq-1}
|t(\sqcup \gamma_{t},W) - t(\sqcup \gamma_{t},W')| \le |E(\sqcup \gamma_{t})| \parallel W_{\gamma_{t}} -
W'_{\gamma_{t}} \parallel_{1}.
\end{equation}

(ii) Thanks to Proposition \ref{hom-dens-large-1}, we plan to lift
the above process onto the level of large Feynman diagram $X_{{\rm
DSE}}$ with the partial sums $Y_{m}$, $m \ge 1$. It is enough to
extend the relation (\ref{eq-1}) to $Y_{m+1}=Y_{m}+X_{m+1}$. We have
$$\big|t(Y_{m} \sqcup X_{m+1},W) - t(Y_{m} \sqcup X_{m+1},W') \big |=$$
$$\big|t(Y_{m} \sqcup X_{m+1},W) - t(Y_{m} \sqcup X_{m+1},W') \pm t(Y_{m},W')t(X_{m+1},W) \big|=$$
$$\big|t(X_{m+1},W) \big(t(Y_{m},W) - t(Y_{m},W')\big) + t(Y_{m},W')\big(t(X_{m+1},W) - t(X_{m+1},W')\big)\big|$$
\begin{equation}
\le |t(X_{m+1},W)| |(t(Y_{m},W) - t(Y_{m},W'))| + |t(Y_{m},W')|
|t(X_{m+1},W) - t(X_{m+1},W')|
\end{equation}
$$\le |t(X_{m+1},W)| \big(|E(Y_{m})| \parallel W_{Y_{m}}-W'_{Y_{m}} \parallel_{1}\big) +
|t(Y_{m},W')| \big(|E(X_{m+1})| \parallel W_{Y_{m}}-W'_{Y_{m}} \parallel_{1}\big)$$
such that $\parallel W_{Y_{m}}\parallel_{\infty}, \parallel
W'_{Y_{m}}\parallel_{\infty} \le 1$.

Lift the multilinear operator $\tau$ onto the multilinear operator
$\tilde{\tau}$ defined as a bounded operator on the Banach space
$\mathcal{S}^{\Phi,g}$. Now define the new map $\tilde{F}$ on
$\mathcal{S}^{\Phi,g}$ given by
\begin{equation}
\tilde{F}(X_{{\rm DSE}}):= \prod_{m=1}^{\infty} F_{m}(W_{Y_{m}})
\end{equation}
such that each term $F_{m}(W_{Y_{m}})$ is a $L^{1}$-continuous function.
Therefore $\tilde{F}$, as the product of continuous functions, is
also continuous with respect to the $L^{1}$- topology.
\end{proof}

The space $\mathcal{W}_{[0,1]}$ of all (bi-)graphons can be embedded into
the vector space $\mathcal{W}$ of bounded (symmetric) measurable
functions $f:[0,1]^{2} \rightarrow \mathbb{R}$ which is equipped
by a semi-norm. Under weakly isomorphic relation $\approx$, we can
build a complete metric structure on the quotient space
$\mathcal{W}_{[0,1]}/\approx$. The topological space
$\mathcal{S}^{\Phi}_{{\rm graphon}}$ of all unlabeled graphon classes which
contribute to the representations of Feynman diagrams and
Dyson--Schwinger equations sits inside $\mathcal{W}_{[0,1]}/\approx$.
As we have discussed each Feynman graphon $[W_{\Gamma}] \in \mathcal{S}^{\Phi}_{{\rm graphon}}$ is a class of
bounded (symmetric) measurable functions on $[0,1]^{2}$ up to relabeling and weakly isomorphic relation. This class of graphons are generated by rooted
tree representations of (large) Feynman diagrams. Rooted trees are
simple graphs where their adjacency matrices can determine their
corresponding graphon classes. Orientations on decorated non-planar rooted trees, which encode
positions of nested loops in the main Feynman diagram, inform us that we might need
only the upper part or the lower part of the adjacency matrix for the
reconstruction of any Feynman diagram from its graphon representation.
It means that we do not need the symmetric property of graphons and
we can work only on the bounded measurable functions $f:[0,1]^{2}
\rightarrow [0,1]$ up to the relabeling and weakly isomorphic relation. This class of objects,
which are known as bi-graphons, enables us to have
Fr\'{e}chet differentiability of homomorphism densities of Feynman graphons.

\begin{lem} \label{hom-dens-diff-1}
(i) The homomorphism densities on $\mathcal{S}^{\Phi}_{{\rm
graphon}}$ are Fr\'{e}chet differentiable.

(ii) The homomorphism densities on $\mathcal{S}^{\Phi,g}$ are
Fr\'{e}chet differentiable.
\end{lem}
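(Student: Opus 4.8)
The plan is to prove both parts by reducing the Fr\'{e}chet differentiability of homomorphism densities to the multilinearity and boundedness of the integral operators $\tau$ (and their lifts $\tilde{\tau}$) already introduced in the proof of Theorem \ref{hom-dens-large-7}, and then invoking the standard fact that a bounded multilinear map between normed spaces is Fr\'{e}chet (hence G\^{a}teaux) differentiable to all orders, with derivatives obtained by ``polarization'' — freezing some arguments and varying one. First I would fix a finite simple graph $H$ (a decorated non-planar rooted tree, so that we may legitimately drop the symmetry requirement and work with bi-graphons $f:[0,1]^{2}\rightarrow[0,1]$ up to relabeling and weak isomorphism, as discussed just before the statement). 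For part (i), I would write $t(H,W)=\tau(W,\dots,W)$ where $\tau$ is the $|E(H)|$-linear functional on $\mathcal{W}_{[0,1]}/\approx$ given by integrating the product $\prod_{(i,j)\in E(H)}W(x_i,x_j)$ over $[0,1]^{|V(H)|}$; the estimate $|\tau((W_{\gamma_t})) - \tau((W'_{\gamma_t}))| \le \sum_{\gamma_t}\|W_{\gamma_t}-W'_{\gamma_t}\|_1$ from Theorem \ref{hom-dens-large-7} shows $\tau$ is bounded (norm at most $|E(H)|$ after accounting for the $L^{\infty}\le 1$ bound on the remaining factors). Then the $k$-th G\^{a}teaux derivative in directions $Z_1,\dots,Z_k$ is the explicit sum over all ways of substituting $Z_1,\dots,Z_k$ into $k$ distinct edge-slots of the remaining $W$'s, which is again a bounded multilinear expression; continuity of $W\mapsto d^{k}t(H,\cdot)(W;-)$ in the $L^{1}$-topology then follows from the same telescoping estimate, and by the mean value inequality (\ref{mean-value-1}) continuity of the derivative upgrades G\^{a}teaux to Fr\'{e}chet differentiability.

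Next, for part (ii), I would lift everything to a large Feynman diagram $X_{\rm DSE}=\sum_{n\ge 0}X_n$ with partial sums $Y_m$, exactly as in the proof of Theorem \ref{hom-dens-large-7}(ii). Since $t(Y_{m+1},W)=t(Y_m,W)\,t(X_{m+1},W)$ and, by Proposition \ref{hom-dens-large-1}, $t(X_{\rm DSE},W)=\lim_{\leftarrow_m}t(Y_m,W)$ as the cut-distance/$d_{\rm dens}$ convergent limit, the functional $W\mapsto t(X_{\rm DSE},W)$ is a (cut-distance-)convergent infinite product $\prod_{m\ge 1}F_m(W_{Y_m})$ of the finite-graph homomorphism densities. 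Each factor is Fr\'{e}chet differentiable by part (i), and the telescoping bound displayed in Theorem \ref{hom-dens-large-7}(ii), together with $\|W_{Y_m}\|_{\infty}\le 1$, gives uniform-in-$m$ control of the difference quotients; a dominated-convergence argument for the partial products then shows that the limit is itself Fr\'{e}chet differentiable, with derivative $d\,t(X_{\rm DSE},\cdot)(W;Z)=\sum_{m\ge 1}\big(\prod_{j\ne m}F_j(W_{Y_j})\big)\,dF_m(W_{Y_m};Z_{Y_m})$, the series converging in operator norm by the same estimate. Higher-order derivatives follow by the same polarization-plus-product-rule bookkeeping, staying inside the Banach space $\mathcal{S}^{\Phi,g}$ with the cut-norm (\ref{norm-1}), and multilinearity/symmetry of $d^{n}$ in the directions $Z_i$ is inherited from Definition \ref{der-5}.

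The main obstacle I expect is controlling the infinite product and its derivative series in the cut-norm: a priori an infinite product of densities bounded by $1$ could degenerate, and one must verify that the convergence $Y_m\to X_{\rm DSE}$ established via Proposition \ref{hom-dens-large-1} (inverse-limit in $d_{\rm dens}$, equivalent to cut-distance by \cite{borgs-chayes-lovasz-sos-vesztergombi-1, janson-1}) really does transfer to uniform-in-$W$ control of the tails of the product and of the derivative series, so that differentiation and the $m\to\infty$ limit commute. The second, more technical point is justifying that the G\^{a}teaux derivative of each $F_m$ is continuous as a map into $\mathcal{B}(\mathcal{S}^{\Phi,g},\mathbb{R})$ so that (\ref{mean-value-1}) applies; this requires the bi-graphon reduction (removing symmetry, using only the upper- or lower-triangular part of the adjacency matrix of the rooted tree) so that the relevant difference quotients are genuinely $L^{1}$-Lipschitz on the ball $\{\|W\|_{\infty}\le 1\}$, which is precisely the setting recalled in the paragraph preceding the lemma. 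Everything else — the explicit polarization formula for $d^{n}$, symmetry in the direction arguments, and the estimate $|d^{n}t(H,W)(Z_1,\dots,Z_n)|\le |E(H)|^{\underline{n}}\prod_i\|Z_i\|_1$ — is routine once these two convergence issues are handled.
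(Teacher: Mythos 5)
Your central step --- ``$t(H,W)=\tau(W,\dots,W)$ with $\tau$ a bounded multilinear functional, hence Fr\'{e}chet differentiable by the standard theorem'' --- does not go through, and the failure is precisely the point the lemma has to address. The estimate you quote from Theorem \ref{hom-dens-large-7}, $|\tau((W_{\gamma_t}))-\tau((W'_{\gamma_t}))|\le\sum_{\gamma_t}\|W_{\gamma_t}-W'_{\gamma_t}\|_1$, is a Lipschitz bound valid only because all arguments lie in the unit ball of $L^\infty$; it is \emph{not} multilinear boundedness, i.e.\ it does not give $|\tau(Z_1,\dots,Z_k)|\le C\prod_i\|Z_i\|$ in either the $L^1$ norm or the cut-norm. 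Already for the path $l_3$ the bilinear remainder is $R(W')=\int_{[0,1]^3}W'(x_1,x_2)W'(x_2,x_3)\,dx_1dx_2dx_3$, and for the sequence $W_n=\mathbf{1}_{\{\min(x_1,x_2)<1/n\}}$ one has $R(W_n)\ge 1/n$ while $\|W_n\|_1\approx 2/n$ and $\|W_n\|_{\rm cut}\le\|W_n\|_1$; hence $R(W_n)/\|W_n\|_{\rm cut}$ is bounded away from $0$ and $R(W_n)/\|W_n\|_{\rm cut}^2\to\infty$. So $\tau$ is not a bounded bilinear map, the remainder is $O(\|W'\|)$ but not $o(\|W'\|)$, and $t(l_3,\cdot)$ is \emph{not} Fr\'{e}chet differentiable on the full space of graphons with the cut-norm (nor with the $L^1$ norm). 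The same example refutes your proposed bound $|d^nt(H,W)(Z_1,\dots,Z_n)|\le|E(H)|^{\underline{n}}\prod_i\|Z_i\|_1$.

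The paper's proof is organized entirely around this obstruction: it computes the candidate derivative for $l_2$, $l_3$ and trees containing $\bigvee$, exhibits exactly the sequences $W_n$ above as the only way the Fr\'{e}chet limit could fail to vanish, and then argues that such sequences do not belong to $\mathcal{S}^{\Phi}_{\rm graphon}$ once one passes to the non-symmetric bi-graphon model built from upper-triangular adjacency matrices of oriented rooted trees. In other words, the bi-graphon restriction is not the ``second, more technical point'' of your sketch --- it is the entire content of the lemma, the only thing standing between the statement and an explicit counterexample. Part (ii) then follows in the paper much as you describe (product rule over partial sums and the inverse-limit description from Proposition \ref{hom-dens-large-1}), but it inherits its validity from the restricted domain in part (i), not from any boundedness of the underlying multilinear maps. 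To repair your argument you would need to replace ``bounded multilinear $\Rightarrow$ Fr\'{e}chet'' by a direct verification that the quadratic-and-higher remainder terms are $o(\|W'\|_{\rm cut})$ \emph{for perturbations $W'$ ranging only over admissible Feynman bi-graphon directions}, which is exactly the case analysis the paper carries out.
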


\begin{proof}
(i) We compute the Fr\'{e}chet derivatives of the homomorphism
densities on $\mathcal{S}^{\Phi}_{{\rm graphon}}$ for ladder trees
$l_{1},l_{2},l_{3}$ and the rooted tree $\bigvee$ where vertices
$2,3$ are adjacent to the root $1$.

For the tree with only one vertex, $t(l_{1},-)\equiv 1$ which is
obviously Fr\'{e}chet differentiable.

For the oriented decorated ladder tree $l_{2}$ with two vertices
$1,2$ and one edge $e_{12}$ from $1$ to $2$, the G\^{a}teaux
derivative can be computed by
$$d(t(H,W_{\Gamma});W_{\Gamma'})=$$
\begin{equation}
\int_{[0,1]^{k}} \sum_{(i_{1},j_{1}) \in E(H)}
W_{\Gamma'}(x_{i_{1}},x_{j_{1}}) \prod_{(i,j) \in E(H)\backslash
(i_{1},j_{1})} W_{\Gamma}(x_{i},x_{j}) dx_{1} ... dx_{k}
\end{equation}
which leads us to compute the unique Fr\'{e}chet derivative by the
linear map
\begin{equation}
W_{\Gamma'} \longmapsto d(t(l_{2},W_{\Gamma});W_{\Gamma'}) =
\int_{[0,1]^{2}} W_{\Gamma'}(x_{1},x_{2}) dx_{1}dx_{2}.
\end{equation}
We have
\begin{equation}
{\rm lim}_{W_{\Gamma'} \rightarrow 0} \frac{\big|
t(H,W_{\Gamma}+W_{\Gamma'}) - t(H,W_{\Gamma}) - \int_{[0,1]^{2}}
W_{\Gamma'}(x_{1},x_{2})dx_{1}dx_{2}\big|}{\parallel W_{\Gamma'}
\parallel_{{\rm cut}}}
\end{equation}
$$= {\rm lim}_{W_{\Gamma'} \rightarrow 0} \frac{0}{\parallel W_{\Gamma'}
\parallel_{{\rm cut}}}=0$$
which approves the Fr\'{e}chet differentiability in terms of the
formula (\ref{frechet-2}).

For the oriented decorated ladder tree $l_{3}$ with three vertices
$1,2,3$ and two edges $e_{12},e_{23}$ which connect the vertices $1$
to $2$ and $2$ to $3$, the unique candidate for the Fr\'{e}chet
derivative of $t(l_{3},-)$ should be
$$W_{\Gamma'} \longmapsto d(t(l_{3},W_{\Gamma});W_{\Gamma'}) =$$
\begin{equation}
\int_{[0,1]^{3}} W_{\Gamma'}(x_{1},x_{2})W_{\Gamma}(x_{2},x_{3}) +
W_{\Gamma}(x_{1},x_{2}) W_{\Gamma'}(x_{2},x_{3})dx_{1}dx_{2}dx_{3}
\end{equation}
$$=2 \int_{[0,1]^{3}} W_{\Gamma}(x_{1},x_{2})W_{\Gamma'}(x_{2},x_{3}) dx_{1}dx_{2}dx_{3}.$$
$t(l_{3},-)$ is Fr\'{e}chet differentiable if the following limit
exists and equals to zero,
$${\rm lim}_{W_{\Gamma'} \rightarrow 0} \frac{\big| t(l_{3},W_{\Gamma}+W_{\Gamma'})
- t(l_{3},W_{\Gamma}) - d(t(l_{3},W_{\Gamma});g) \big|}{\parallel
W_{\Gamma'}
\parallel_{{\rm cut}}}$$
\begin{equation} \label{eq-5}
= {\rm lim}_{W_{\Gamma'} \rightarrow 0}  \frac{\big| \int_{[0,1]^{3}}
W_{\Gamma'}(x_{1},x_{2})W_{\Gamma'}(x_{2},x_{3})dx_{1}dx_{2}dx_{3}
\big|}{\parallel W_{\Gamma'} \parallel_{{\rm cut}}}.
\end{equation}
If this limit is not zero, then there are some below boundaries
$c>0$ which means that we can define a sequence $\{W_{n}\}_{n \ge
1}$ of graphons which converges to zero with respect to the cut-norm
but the limit (\ref{eq-5}) does not zero when $n$ tends to infinity.
In other words, we might have
$$0< c \le \frac{1}{n}= \big | \int_{[0,1]^{2},x_{2} \in [0,1/n]} 1 dx_{1}dx_{2}dx_{3} \big| = $$
\begin{equation}
\big| \int_{[0,1]^{2},x_{2} \in [0,1/n]}
W_{n}(x_{1},x_{2})W_{n}(x_{2},x_{3}) dx_{1}dx_{2}dx_{3} \big| \le
\end{equation}
$$\big| \int_{[0,1]^{3}} W_{n}(x_{1},x_{2})W_{n}(x_{2},x_{3}) dx_{1}dx_{2}dx_{3}  \big|.$$
This situation supports the existence of sequences of graphons such
as $W_{n}={\bf 1}$ on ${\rm min}(x_{1},x_{2}) < 1/n$ which satisfy the
above inequality. On the other hand, we can build the topological
renormalization Hopf algebra $\mathcal{S}^{\Phi}_{{\rm graphon}}$ of
Feynman graphons by measurable bounded functions from $[0,1]^{2}$ to
$[0,1]$ in terms of working only on the upper parts or lower parts
of the adjacency matrices of oriented decorated non-planar rooted
trees. In this non-symmetric setting, the sequences such as
$\{W_{n}\}_{n \ge 1}$ of graphons does not belong to
$\mathcal{S}^{\Phi}_{{\rm graphon}}$. As the consequence, the only
lower boundary for Feynman graphons is zero itself which means that
the limit (\ref{eq-5}) is zero.

By a similar discussion, we can show the existence of Fr\'{e}chet
derivative for other oriented rooted trees $H$ which contains the
tree $\bigvee$ where vertices $2,3$ are adjacent to the root $1$. In
this situation, we need to deal with
\begin{equation} \label{eq-7}
{\rm lim}_{n \rightarrow \infty} \frac{\big|
t(H,W_{\Gamma}+W_{\Gamma'}) - t(H,W_{\Gamma}) -
d(t(H,W_{\Gamma});W_{\Gamma'}) \big|}{\parallel W_{\Gamma'}
\parallel_{{\rm cut}}}
\end{equation}
where if this limit is not zero then we get some lower
boundaries $c>0$  such that
$$c^{|E(H)|-2} \int_{[0,1]^{3}} W_{\Gamma'}(x_{1},x_{2}) W_{\Gamma'}(x_{2},x_{3})dx_{1}dx_{2}dx_{3} \le $$
$$ \int_{[0,1]^{|V(H)|}} W_{\Gamma'}(x_{1},x_{2}) W_{\Gamma'}(x_{2},x_{3})
\prod_{(ij) \in E(H) \backslash \{(1,2),(2,3)\}}
W_{\Gamma}(x_{i},x_{j})dx_{1}...dx_{|V(H)|}$$
\begin{equation}
 \le t(H,W_{\Gamma}+W_{\Gamma'}) - t(H,W_{\Gamma}) - d(t(H,W_{\Gamma});W_{\Gamma'}).
\end{equation}
This situation allows us to determine sequences of graphons which can not
belong to $\mathcal{S}^{\Phi}_{{\rm graphon}}$ whenever we work on Feynman bigraphons. Therefore the
limit (\ref{eq-7}) should be zero.

(ii) Objects in the Banach space $\mathcal{S}^{\Phi,g}$ are large
Feynman diagrams namely, infinite formal expansions of Feynman
diagrams which have nested or overlapping loops. Thanks to Theorem
\ref{hom-dens-large-1}, homomorphism densities on large Feynman
diagrams can be computed in terms of homomorphism densities of
finite partial sums. For large Feynman diagrams $X,Z$, we have
\begin{equation}
t(X,Z)={\rm lim}_{\leftarrow_{m}}t(Y_{m},Z)
\end{equation}
such that for each $m \ge 1$,
\begin{equation}
t(Y_{m},Z)= \prod_{i=1}^{m} t(X_{i},W_{Z}).
\end{equation}
We have $X(g)=\sum_{n \ge 0}g^{n}X_{n}$ and $W_{Z}$ as a Feynman
bigraphon which lives in $\mathcal{S}^{\Phi}_{{\rm graphon}}$ such that it
can not have a non-zero lower boundary.

Furthermore, thanks to the
formula (\ref{frechet-1}), we know that the Fr\'{e}chet
differentiability depends only on the norm $W_{Z}$ where by applying
(i), each $t(X_{m},W_{Z})$ is Fr\'{e}chet differentiable. Thanks to
the product rule, $ t(Y_{m},Z)$ as the product of Fr\'{e}chet
differentiable functions is also Fr\'{e}chet differentiable for each
$m$. Since $t(X,Z)$ can be identified by a subset of the direct
product $\prod_{m=1}^{\infty} t(Y_{m},Z)$, $t(X,Z)$ will be also
Fr\'{e}chet differentiable.
\end{proof}

\begin{lem} \label{hom-dens-diff-7}
For a given $C^{n}$ ($n>0$) class function $G:\mathcal{S}^{\Phi,g}
\rightarrow \mathbb{R}$, $d^{n}G(0;Z_{1},...,Z_{n})$ is
a symmetric $S_{[0,1]}$-invariant multilinear functional.
\end{lem}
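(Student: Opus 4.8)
The plan is to reduce the three assertions — multilinearity, permutation symmetry, and $S_{[0,1]}$-invariance of $d^{n}G(0;Z_{1},\dots,Z_{n})$ — to the structure already set up for the higher Gâteaux derivatives in Definition \ref{der-5} together with the fact that, read through the Feynman graphon model, $G$ is a function on labeled (bi)graphons that is constant on weak isomorphism classes.

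First, multilinearity and symmetry. Multilinearity of $d^{n}G(0;Z_{1},\dots,Z_{n})$ in the arguments $Z_{i}$ is already recorded in the discussion following Definition \ref{der-5}: each iterated limit in (\ref{der-6}) is taken along a single direction $Z_{i}$, so it commutes with scalar multiplication and, by passing to difference quotients at the base point, with addition. Symmetry under an arbitrary $\tau\in S_{n}$ is the Schwarz/Clairaut interchange of mixed directional derivatives; this is where the $C^{n}$ hypothesis enters, guaranteeing that the iterated limits defining $d^{n}G(0;\cdot)$ may be reordered, so $d^{n}G(0;Z_{1},\dots,Z_{n})=d^{n}G(0;Z_{\tau(1)},\dots,Z_{\tau(n)})$, exactly as stated there. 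For these two points I would simply invoke that remark and sketch why $C^{n}$ legitimizes the reordering.

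Second, the substantive point, $S_{[0,1]}$-invariance. Using the Feynman graphon model (Lemma \ref{feynman-graphon-1} and Theorem \ref{feynman-graphon-5}), I would lift $G$ to a function $\tilde G$ on the space of labeled Feynman (bi)graphons and observe: (a) for a measure preserving transformation $\rho$ of $[0,1]$, the relabeling action $W\mapsto W^{\rho}$ is a linear isometry of the cut-normed space that fixes the zero graphon, since pull-back by $\rho\otimes\rho$ is linear and preserves the cut-norm; and (b) $\tilde G(W^{\rho})=\tilde G(W)$ for all $W$, because $\tilde G$ is constant on weak isomorphism classes (Definition \ref{graphon-1}). Since $0+tZ^{\rho}=(tZ)^{\rho}$ and more generally $X^{\rho}+tZ^{\rho}=(X+tZ)^{\rho}$, the difference quotients for $d\tilde G(X^{\rho};Z^{\rho})$ and $d\tilde G(X;Z)$ coincide term by term, so $d\tilde G(X^{\rho};Z^{\rho})=d\tilde G(X;Z)$ for every base point $X$. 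Feeding this identity through the inductive definition (\ref{der-6}) — at stage $n$ the base point is $\lambda Z_{n}$ and $(\lambda Z_{n})^{\rho}=0+\lambda Z_{n}^{\rho}$ — gives $d^{n}\tilde G(0;Z_{1}^{\rho},\dots,Z_{n}^{\rho})=d^{n}\tilde G(0;Z_{1},\dots,Z_{n})$ for all $\rho\in S_{[0,1]}$. Descending to the quotient $\mathcal{S}^{\Phi,g}$ then yields the claim for $G$.

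I expect the genuine difficulty to lie in steps (a)–(b): one must check that the simultaneous relabeling action is well defined and linear at the level of the separable Banach space $\mathcal{S}^{\Phi,g}$, in particular that the admissible directions $Z_{i}$ feeding the Gâteaux calculus are closed under $\rho\otimes\rho$ and that passing to the bigraphon (non-symmetric) picture required by Lemma \ref{hom-dens-diff-1} does not obstruct this. As a cross-check, and an alternative route, I would note that via the Taylor expansion in homomorphism densities the coefficient $d^{n}G(0;\cdot)$ can be expressed through the multilinear functionals $\tau((W_{\gamma_{t}})_{\gamma_{t}})$ of Theorem \ref{hom-dens-large-7}; since $t(H,W^{\rho})=t(H,W)$ for every finite graph $H$, each such functional is manifestly $S_{[0,1]}$-invariant, multilinear, and symmetric, which independently confirms all three properties.
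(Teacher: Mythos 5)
Your treatment of symmetry and multilinearity is essentially the paper's own argument: the paper extends $G$ along finitely many directions to a $C^{n}$ function $\mathcal{G}_{X,Z}(\lambda_{1},\dots,\lambda_{m}) = G(X+\lambda_{1}Z_{1}+\dots+\lambda_{m}Z_{m})$ on $\mathbb{R}^{m}$, invokes equality of mixed partial derivatives for the permutation symmetry, and extends multilinearly to $({\rm span}_{\mathbb{R}}(\Gamma_{1},\dots,\Gamma_{m}))^{n}$ — which is exactly your Schwarz/Clairaut reduction.

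Where you genuinely depart from (and improve on) the paper is the $S_{[0,1]}$-invariance: the paper's proof is silent on this point, leaving it implicit in the fact that $G$ is defined on $\mathcal{S}^{\Phi,g}$, i.e.\ on classes modulo relabeling. Your steps (a)--(b) — that $W\mapsto W^{\rho}$ is a linear cut-norm isometry fixing the zero graphon, that the lift $\tilde{G}$ is constant on weak isomorphism classes, and that $(X+tZ)^{\rho}=X^{\rho}+tZ^{\rho}$ forces the difference quotients in (\ref{der-6}) to agree term by term — make explicit the descent argument the paper takes for granted. This is the right way to fill that gap, and your cross-check via the relabeling invariance $t(H,W^{\rho})=t(H,W)$ of homomorphism densities is consistent with how the paper later uses the lemma in Corollary \ref{hom-dens-diff-2}. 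The one caveat you correctly flag yourself: whether the relabeling action restricts to the admissible directions in $\mathcal{S}^{\Phi,g}$ and survives the passage to bigraphons is not verified anywhere in the paper either, so your proof is no worse off than the original on that front.
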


\begin{proof}
We can extend the functions
\begin{equation}
\mathcal{G}_{X,Z}(\lambda_{1},...,\lambda_{m}):=G(X+\lambda_{1}Z_{1}+...+\lambda_{m}Z_{m})
\end{equation}
to $C^{n}$ functions on $\mathbb{R}^{m}$ where the equality of mixed
partial derivatives show that for any permutation $\sigma \in
S_{n}$, we have
\begin{equation}
d^{n}G(X;Z_{1},...,Z_{n}) =
d^{n}G(X;Z_{\sigma(1)},...,Z_{\sigma(n)}).
\end{equation}
In addition, we can extend $d^{n}G(0;Z_{1},...,Z_{n})$
multilinearly to each
\begin{equation}
({\rm span}_{\mathbb{R}}(\Gamma_{1},...,\Gamma_{m}))^{n}.
\end{equation}
\end{proof}

\begin{cor} \label{hom-dens-diff-2}
Let $\tilde{F}: \mathcal{S}^{\Phi,g} \rightarrow \mathbb{R}$ be
a $L^{1}$-continuous functional (determined by Theorem
\ref{hom-dens-large-7}) such that for some $N \ge 1$, $\tilde{F}$ is
$N+1$ times G\^{a}teaux differentiable. Then for each large Feynman
diagram $X$ and also $Z_{1},...,Z_{N+1}$ in the Banach space
$\mathcal{S}^{\Phi,g}$,
$$d^{N+1}\tilde{F}(X;Z_{1},...,Z_{N+1}) =
0$$ if and only if there exists a unique family
$\{a_{\gamma}\}_{\gamma}$ of real constants such that $\tilde{F}(X)
= \sum_{\gamma \in \mathcal{H}_{\le N}(\Phi)} a_{\gamma}
t(\gamma,W_{X}).$
\end{cor}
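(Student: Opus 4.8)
The plan is to prove this as a ``Taylor-expansion characterisation'': a sufficiently differentiable $L^1$-continuous functional whose $(N+1)$-st G\^ateaux derivative vanishes identically must be a finite linear combination of homomorphism densities indexed by Feynman diagrams of edge-degree at most $N$. The strategy is classical in spirit (polynomial functions on a vector space are exactly those whose high-order derivatives vanish), but it must be run through the Banach space $\mathcal{S}^{\Phi,g}$ of large Feynman diagrams and translated into the language of homomorphism densities via the graphon models. First I would fix large Feynman diagrams $X,Z_1,\dots,Z_{N+1}\in\mathcal{S}^{\Phi,g}$ and, following Lemma \ref{hom-dens-diff-7}, pass to the auxiliary $C^{n}$ function $\mathcal{G}_{X,Z}(\lambda_1,\dots,\lambda_{N+1}):=\tilde F(X+\lambda_1 Z_1+\dots+\lambda_{N+1}Z_{N+1})$ on a finite-dimensional real affine slice; the hypothesis $d^{N+1}\tilde F\equiv 0$ says precisely that all mixed partials of $\mathcal{G}_{X,Z}$ of order $N+1$ vanish, so $\mathcal{G}_{X,Z}$ is a genuine polynomial of degree $\le N$ in the $\lambda_i$. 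Restricting to a single direction $Z$ and using $X=0$ as basepoint gives the finite Taylor formula $\tilde F(Z)=\sum_{k=0}^{N}\frac{1}{k!}d^{k}\tilde F(0;Z,\dots,Z)$, which already reduces everything to understanding the symmetric multilinear forms $d^{k}\tilde F(0;\cdot,\dots,\cdot)$ for $0\le k\le N$.

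The second and central step is to identify each homogeneous piece $Z\mapsto d^{k}\tilde F(0;Z,\dots,Z)$ with a linear combination $\sum_{\gamma\in\mathcal{H}_{k}(\Phi)}a_\gamma\,t(\gamma,W_Z)$. Here I would use that the $L^1$-continuous functionals produced by Theorem \ref{hom-dens-large-7} are built as products $\prod_m F_m(W_{Y_m})$ with each $F_m$ a finite linear combination $\sum_{\gamma\in\mathcal{H}_{\le n}(\Phi)}a_\gamma t(\gamma,-)$, and that by Lemma \ref{hom-dens-diff-1} the homomorphism densities $t(\gamma,-)$ are Fr\'echet (hence G\^ateaux) differentiable on $\mathcal{S}^{\Phi}_{\mathrm{graphon}}$ and on $\mathcal{S}^{\Phi,g}$. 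The key combinatorial fact is that, under the disjoint-union/product structure of Feynman graphons, a homomorphism density $t(\gamma,W)$ is multilinear of ``degree $|E(\gamma)|$'' in the edge-weights: differentiating $t(\gamma,-)$ exactly $|E(\gamma)|$ times in arbitrary directions gives a nonzero symmetric multilinear form, while the $(|E(\gamma)|+1)$-st derivative is identically zero. Thus the span of $\{t(\gamma,-):\gamma\in\mathcal{H}_{\le N}(\Phi)\}$ is contained in the space of functionals killed by $d^{N+1}$, and conversely any homogeneous multilinear form of degree $k\le N$ arising as a derivative of an $L^1$-continuous functional, when evaluated on graphon models generated by rooted-tree (Feynman bigraphon) adjacency data, can be expanded in terms of the $t(\gamma,-)$ with $|E(\gamma)|=k$ by writing out the integral $\int_{[0,1]^{|V|}}\prod_{(i,j)}W(x_i,x_j)$ and matching edge-monomials; the coefficients $a_\gamma$ are then read off and shown unique because distinct Feynman diagrams in $\mathcal{H}_{\le N}(\Phi)$ yield linearly independent density functionals (weak-isomorphism invariance of $t(\cdot,W)$, Lemma \ref{hom-dens-large-3}, together with the separation statement in Theorem \ref{hom-dens-large-7} and the moment-type rigidity of $G(\infty,W)$).

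For the converse direction I would simply verify that any $\tilde F$ of the asserted form $\sum_{\gamma\in\mathcal{H}_{\le N}(\Phi)}a_\gamma t(\gamma,W_X)$ satisfies $d^{N+1}\tilde F\equiv 0$: each summand is a polynomial functional of degree $|E(\gamma)|\le N$ by the multilinearity above and by Lemma \ref{hom-dens-large-1}, which lets one compute derivatives on large diagrams as inverse limits of derivatives on the finite partial sums $Y_m$; since differentiation commutes with finite products and the inverse-limit identification, the $(N+1)$-st derivative of a finite sum of degree-$\le N$ terms vanishes. The uniqueness of $\{a_\gamma\}$ follows from the linear independence already invoked, combined with the symmetric $S_{[0,1]}$-invariant structure of $d^{k}\tilde F(0;\cdot)$ from Lemma \ref{hom-dens-diff-7}.

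The hard part will be the rigorous bookkeeping in the converse-expansion step: I must make precise the claim that ``$d^{k}$ of a homomorphism density is a sum of edge-monomials'' in the non-symmetric (bigraphon) setting and that this family is linearly independent over $\mathcal{H}_{\le N}(\Phi)$, since Feynman diagrams may carry multi-edges and decorations, and the rooted-tree encoding $\Xi$ (formula (\ref{Feynman-tree})) must be tracked so that ``edge-degree'' on diagrams corresponds correctly to polynomial degree on graphons. I also expect a subtlety in controlling the infinite product $\prod_m F_m(W_{Y_m})$: one must check that its G\^ateaux derivatives are genuinely computed termwise via the inverse-limit description of $t(X,Z)$ in Proposition \ref{hom-dens-large-1}, and that $L^1$-continuity (Theorem \ref{hom-dens-large-7}) is strong enough to justify interchanging the limit $m\to\infty$ with the difference quotients defining $d^{N+1}$. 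Once these two points are settled, assembling the finite Taylor formula with the matched coefficients $a_\gamma$ completes the equivalence.
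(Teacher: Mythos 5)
Your proposal reaches the right statement but by a genuinely different route from the paper. The paper's proof is essentially a verification-plus-citation: it checks (via Definition \ref{der-5}, Proposition \ref{hom-dens-large-1}, Theorem \ref{hom-dens-large-7}, Lemma \ref{hom-dens-diff-1} and Lemma \ref{hom-dens-diff-7}) that the Feynman-graphon setting satisfies the hypotheses of the main theorem of \cite{diao-guillot-khare-rajaratnam-1}, and then invokes that theorem wholesale. You instead attempt to reconstruct the argument from scratch: restriction to finite-dimensional affine slices to deduce that $\mathcal{G}_{X,Z}$ is a polynomial of degree $\le N$, the finite Taylor formula at $0$, identification of each homogeneous piece $Z\mapsto d^{k}\tilde F(0;Z,\dots,Z)$ with $\sum_{\gamma\in\mathcal{H}_{k}(\Phi)}a_{\gamma}t(\gamma,W_{Z})$, and linear independence for uniqueness. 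Your route is more self-contained and makes visible exactly where each preliminary lemma is used, which the paper's one-line proof obscures; the paper's route buys brevity at the cost of leaving the reader to check that the Feynman-graphon space (sparse, bigraphon, rescaled) really sits inside the framework of \cite{diao-guillot-khare-rajaratnam-1}. Two cautions. First, the step you correctly flag as "the hard part" --- showing that every continuous, symmetric, relabeling-invariant $k$-linear form arising as $d^{k}\tilde F(0;\cdot,\dots,\cdot)$ is a combination of homomorphism densities --- is not mere bookkeeping: it is precisely the representation theorem that constitutes the substance of the cited result, and "matching edge-monomials" is a plan rather than a proof; as written, your argument is incomplete exactly there, so you should either prove that representation or cite it as the paper does. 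Second, you pass to the slice function $\mathcal{G}_{X,Z}$ on all of $\mathbb{R}^{N+1}$ without addressing admissibility of directions: since graphons are $[0,1]$-valued, $X+\sum_{i}\lambda_{i}Z_{i}$ need not stay in the space for all $\lambda$, and the vanishing of mixed partials only forces polynomiality on the convex admissible region, so the Taylor formula at $0$ needs the convexity of that region to be stated explicitly.
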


\begin{proof}
Thanks to Definition \ref{der-5}, Proposition
\ref{hom-dens-large-1}, Theorem \ref{hom-dens-large-7}, Lemma
\ref{hom-dens-diff-1}, Lemma \ref{hom-dens-diff-7}, we can apply the
main result in \cite{diao-guillot-khare-rajaratnam-1}.
\end{proof}

\begin{cor} \label{hom-dens-diff-3}
Let $G: \mathcal{S}^{\Phi,g} \rightarrow \mathbb{R}$ be a
continuous functional with respect to the cut-distance topology and
smooth with respect to the G\^{a}teaux derivation. For each large
Feynman diagram $X$ define the following sequence of Taylor
polynomials
\begin{equation}
P_{n}(X):= \sum_{m=0}^{n} \frac{1}{m!} d^{m}G(0;X,...,X), \ \
\forall n \ge o
\end{equation}
which converges to $G(X)$ when $n$ tends to infinity. In addition,
let the Taylor expansion
\begin{equation}
\sum_{m=0}^{\infty} \sum_{\gamma \in \mathcal{H}_{m}(\Phi)}
a_{\gamma}t(\gamma,W_{X})
\end{equation}
is absolutely convergent to $P(G)(X)$ such that
\begin{equation}
\sum_{\gamma \in \mathcal{H}_{m}(\Phi)} a_{\gamma}t(\gamma,W_{X}) =
\frac{1}{m!} d^{m}G(0;X,...,X).
\end{equation}
Then $G(X) = P(G)(X)$.
\end{cor}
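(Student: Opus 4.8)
The plan is to assemble Corollary \ref{hom-dens-diff-3} from the three preceding results exactly as they are set up: Theorem \ref{hom-dens-large-7} (which produces the $L^{1}$-continuous functionals built from homomorphism densities and lifts them to $\mathcal{S}^{\Phi,g}$), Lemma \ref{hom-dens-diff-1} (Fr\'{e}chet, hence G\^{a}teaux, differentiability of homomorphism densities on both $\mathcal{S}^{\Phi}_{{\rm graphon}}$ and $\mathcal{S}^{\Phi,g}$), Lemma \ref{hom-dens-diff-7} (symmetry and $S_{[0,1]}$-invariance of the higher G\^{a}teaux differentials $d^{n}G(0;Z_{1},\dots,Z_{n})$), and above all Corollary \ref{hom-dens-diff-2}, which is the finite-order version: a functional whose $(N{+}1)$-st G\^{a}teaux differential vanishes identically is exactly a finite linear combination $\sum_{\gamma \in \mathcal{H}_{\le N}(\Phi)} a_{\gamma} t(\gamma,W_{X})$. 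The strategy is to push the finite statement to the limit $N \to \infty$ by truncating the Taylor series, applying Corollary \ref{hom-dens-diff-2} to each truncation, and then invoking the two convergence hypotheses supplied in the statement.

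First I would introduce, for each $n \ge 0$, the truncated functional $P_{n}$ as defined in the statement, and observe that $P_{n}$ is itself a polynomial functional of degree at most $n$ in the sense that $d^{n+1}P_{n}(X;Z_{1},\dots,Z_{n+1}) = 0$ for all large Feynman diagrams $X,Z_{1},\dots,Z_{n+1} \in \mathcal{S}^{\Phi,g}$; this is immediate from Lemma \ref{hom-dens-diff-7} (the differentials are genuine symmetric multilinear forms, so differentiating a degree-$n$ polynomial functional $n{+}1$ times kills it) together with the fact that $G$ is smooth for the G\^{a}teaux derivation, so each $d^{m}G(0;X,\dots,X)$ is well-defined. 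Next I would check that $P_{n}$ is $L^{1}$-continuous: each $d^{m}G(0;-,\dots,-)$ is, by Lemma \ref{hom-dens-diff-1} and the product/chain rules, continuous in the cut-distance topology on $\mathcal{S}^{\Phi,g}$, hence in the (coarser) $L^{1}$-topology after the embedding of $\mathcal{S}^{\Phi,g}$ into $\mathcal{W}_{[0,1]}/\!\approx$ discussed before Lemma \ref{hom-dens-diff-1}. With $P_{n}$ shown to be $L^{1}$-continuous and $(n{+}1)$-fold G\^{a}teaux differentiable with vanishing $(n{+}1)$-st differential, Corollary \ref{hom-dens-diff-2} applies verbatim to $P_{n}$: there is a unique family $\{a_{\gamma}^{(n)}\}_{\gamma \in \mathcal{H}_{\le n}(\Phi)}$ of real constants with $P_{n}(X) = \sum_{\gamma \in \mathcal{H}_{\le n}(\Phi)} a_{\gamma}^{(n)} t(\gamma,W_{X})$.

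Then I would reconcile the coefficients across different $n$. Because the families in Corollary \ref{hom-dens-diff-2} are unique and because $P_{n+1} - P_{n} = \tfrac{1}{(n+1)!} d^{n+1}G(0;X,\dots,X)$ is a homogeneous degree-$(n{+}1)$ polynomial functional, the uniqueness forces $a_{\gamma}^{(n)} = a_{\gamma}$ to be independent of $n$ on $\mathcal{H}_{\le n}(\Phi)$, with the degree-$m$ block satisfying $\sum_{\gamma \in \mathcal{H}_{m}(\Phi)} a_{\gamma} t(\gamma,W_{X}) = \tfrac{1}{m!} d^{m}G(0;X,\dots,X)$ — which is precisely the identification of blocks asserted in the statement. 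Finally, I would take the limit: by the first hypothesis $P_{n}(X) \to G(X)$ as $n \to \infty$, and by the second hypothesis the series $\sum_{m \ge 0} \sum_{\gamma \in \mathcal{H}_{m}(\Phi)} a_{\gamma} t(\gamma,W_{X})$ converges absolutely to $P(G)(X)$; but the partial sums of this series are exactly the $P_{n}(X)$, so both limits coincide and $G(X) = P(G)(X)$.

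The routine parts are the continuity bookkeeping (transferring continuity between the cut-norm and $L^{1}$ topologies on the embedded copy of $\mathcal{S}^{\Phi,g}$) and the combinatorial indexing of $\mathcal{H}_{m}(\Phi)$, which simply follow the conventions already fixed in Theorem \ref{hom-dens-large-7}. The main obstacle will be justifying that the partial sums of the absolutely convergent Taylor series in the second hypothesis really are the functionals $P_{n}$ produced by Corollary \ref{hom-dens-diff-2} applied to the truncations — i.e. that the two a priori different organizations of the Taylor data (the intrinsic G\^{a}teaux-differential side and the homomorphism-density side) match term by term and block by block. This is exactly where the uniqueness clause of Corollary \ref{hom-dens-diff-2} does the work, so the proof hinges on applying that uniqueness carefully to the homogeneous pieces $P_{n+1}-P_{n}$ rather than only to $P_{n}$ itself; everything else is assembly.
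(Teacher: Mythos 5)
Your proposal is correct, and it is essentially an unpacking of what the paper leaves implicit: the paper's own proof is a citation to the Taylor-series theorem of Diao--Guillot--Khare--Rajaratnam together with the instruction to adapt it via Definition \ref{der-5}, Proposition \ref{hom-dens-large-1}, Theorem \ref{hom-dens-large-7}, Lemma \ref{hom-dens-diff-1}, Lemma \ref{hom-dens-diff-7} and Corollary \ref{hom-dens-diff-2}, and your truncate-then-pass-to-the-limit argument is precisely the procedure used in that reference. Two remarks. First, as the corollary is actually stated, the decisive step is the last one you give: the block identity $\sum_{\gamma \in \mathcal{H}_{m}(\Phi)} a_{\gamma}t(\gamma,W_{X}) = \frac{1}{m!}\,d^{m}G(0;X,\dots,X)$ is a \emph{hypothesis}, so the $n$-th partial sum of the homomorphism-density series equals $P_{n}(X)$ by assumption, and $G(X)=P(G)(X)$ already follows from the two hypothesized convergences; your middle section, which re-derives the block identity by applying Corollary \ref{hom-dens-diff-2} to the truncations $P_{n}$, is therefore dispensable. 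Second, and precisely because it is dispensable, be aware that this middle section is the least secure part of your argument: Corollary \ref{hom-dens-diff-2} is stated for functionals $\tilde F$ ``determined by Theorem \ref{hom-dens-large-7}'', i.e.\ already built from homomorphism densities, and in any case requires $L^{1}$-continuity; the functional $P_{n}$ is defined intrinsically from $G$, and the continuity of the differentials $d^{m}G(0;\cdot,\dots,\cdot)$ does not follow from Lemma \ref{hom-dens-diff-1}, which concerns differentiability of homomorphism densities rather than of an arbitrary smooth $G$. If you wanted the middle section to carry weight --- to \emph{derive} the block identity rather than assume it --- you would need the ``only if'' direction of the Diao--Guillot--Khare--Rajaratnam characterization applied to $P_{n}$ with its hypotheses actually verified, which is exactly the content the paper outsources to the citation.
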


\begin{proof}
In \cite{diao-guillot-khare-rajaratnam-1}, the required conditions for the existence of the convergent
Taylor series of a smooth function on the space of unlabeled graphons have been provided. Now it is enough
to adapt that procedure for Feynman graphons (which
is already addressed in \cite{shojaeifard-9}) and then lift it onto the whole space
$\mathcal{S}^{\Phi,g}$ in terms of Definition
\ref{der-5}, Proposition \ref{hom-dens-large-1}, Theorem
\ref{hom-dens-large-7}, Lemma \ref{hom-dens-diff-1}, Lemma
\ref{hom-dens-diff-7} and Corollary \ref{hom-dens-diff-2}.
\end{proof}


\chapter{\textsf{The intrinsic foundations of QFT under a non-perturbative setting}}

\vspace{1in}

$\bullet$ \textbf{\emph{Some historical remarks}} \\
$\bullet$ \textbf{\emph{QFT-entanglement via lattices of Dyson--Schwinger equations}} \\
$-$ \textbf{\emph{A new lattice model}}\\
$-$ \textbf{\emph{Intermediate algorithms for QFT-entanglement}}\\
$-$ \textbf{\emph{Tannakian subcategories as intermediate algorithms}}\\
$\bullet$ \textbf{\emph{Quantum logic via non-perturbative propositional calculus}} \\
$-$ \textbf{\emph{Quantum Topos}} \\
$-$ \textbf{\emph{Non-perturbative Topos}}

\newpage

The first purpose in this chapter is to build a new mathematical model for the description of information flow among particles in (strongly coupled) interacting gauge field theories. This new platform enables us to analyze quantum entanglement via fundamental tools in Category Theory and Theoretical Computer Science. We apply combinatorial
Dyson--Schwinger equations as the building blocks of information
flow among distant elementary particles in a system with infinite
degrees of freedom. The cut-distance topology is applied to construct topological regions around elementary particles which encode passing information. We organize these cut-distance topological regions into a new class of lattices of topological Hopf subalgebras. This setting allows us to understand quantum entanglement in the language of intermediate algorithms which contribute to transferring information among entangled particles \cite{submitted-2}. The second purpose in this chapter is to build a new mathematical
model for the description of logical propositions of non-perturbative aspects in strongly coupled gauge field theories. We explain the basic foundations of a new topos of presheaves which is capable to encode topological regions of elementary particles and the strength of coupling constants. This topos has enough physical information to evaluate logical propositions about infinite formal expansions of Feynman diagrams which contribute to quantum motions \cite{submitted-3}.

\section{\textsl{Some historical remarks}}

Entanglement, non-locality and indeterminism are actually the most complicated and challenging concepts in Quantum Mechanics. These concepts have been originated from pioneering efforts to clarify the complicated methodologies applied in dealing with the outputs of Quantum Mechanics. The study of the behavior of electrons via Quantum Mechanics had shown some results which were against the objections of Einstein and his followers
such as the double slit experiment, the photon box experiment and
the Einstein--Podolsky--Rosen paradox. The completeness of Quantum Mechanics has already been clarified in terms of successful experiments on photons polarization and formulating modern gauge field theories. However there still remain
some philosophical challenges for the interpretation of predications in Quantum Mechanics. \cite{a1,b2,b4,epr1,s1}

On the one hand, there are some rigorous efforts for the interpretation of Quantum Mechanics under a deterministic setting. Theory of many worlds and universal wave function, as the key tools in this setting, can provide predictions completely different from Quantum Mechanics predictions. These tools introduce a different theory for quantum world. Bohmian Mechanics, hidden variables, many-world interpretation and
theory of universal wave function are well-known tools for the deterministic interpretation of Quantum Mechanics.
This class of theories has tried to show that the form of all hidden variable models is capable
to reproduce Quantum Mechanics of a spin-singlet by satisfying both the
assumptions of free will and no signaling, which correspond,
measurement-independence and setting-independence, respectively.
\cite{al1,b2,b6,l2,v5,w1}

On the other hand, Bell's Theorem (\cite{b2,b3}), Kochen--Specker Theorem (\cite{ks1,p2}) and collapse theories such as von Neumann method (\cite{v7}), Ghirardi--Rimini--Weber Theory
(\cite{grw1}), no-collapse theories such as modal interpretations
(\cite{h1})) are rigorous well-known efforts to show the inconsistency of the
Einstein--Podolsky--Rosen paradox and other deterministic
observations with the foundations of Quantum Mechanics. In this direction, the
existence of certain class of observables which can not consistently
be assigned values at all has been considered. Then it is
discussed that a quantum system evolves based on the Schrodinger
equation between measurements at which it collapses to the
eigenstate of the measured variable. In non-measurement
interactions, the evolution of states obeys a linear and unitary
equation of motion such that the particle pair in the
Einstein--Podolsky--Rosen experiment remains in an entangled state.
This class of equations dictates that in a spin measurement, the
pointers of the measurement tools are entangled with the particle
pair in a non-separable state in which the indefiniteness of spins
of particles is transmitted to the pointer's position. The critical challenge in this description is the lack of
explicit definitions for the notions of measurement and time,
duration and nature of state collapses. There are some efforts to
handle this issue such as adding a nonlinear term to the
Schrodinger equation, modal no-collapse interpretations. In addition,
Kochen--Specker Theorem shows the impossibility of the
reproduction of Quantum Mechanics predictions in terms of a hidden variable
model where the hidden variables could assign a value to every
projector deterministically and non-contextually. We
can also address Free Will Theorem which proves that no theory,
whether it extends Quantum Mechanics or not, can correctly predict the results of
future spin experiments. These topics could provide strong reasons
for the intrinsic non-local indeterministic nature of Quantum
Mechanics. \cite{b2,b3,bgh1,ck6,p2}

The notion of entanglement in Quantum Mechanics was derived from
Schrodinger's efforts to describe quantum systems extended over
physically distant parts. Then it was modified by Bohr to
deal with the Einstein--Podolsky--Rosen paradox. Bohr was trying
to solve the question about the spin components of a pair of
particles emitted from a source to move in opposite directions where
no slower than light or light signal can travel between them. Bohr
addressed that since two particles have interacted, they were part
of one whole phenomenon which means that the two particles are
entangled. In other words, these particles are part of one whole
phenomenon or one whole system that has one wave function \cite{b4,s1}. On the other hand, Bell's
Theorem made the free choice of the experimenter as one of the
axioms where it is proved mathematically that certain quantum
correlations violate realism, locality or freedom of choice. The
Bell's inequality is on the basis of the assumption that the quantum
state is not the ultimate limit and additional parameters such as
hidden variables could provide a modified description. Bell
discovered that no local hidden variable theory can reproduce all
possible results of Quantum Mechanics. This framework shows that any local model of
the Bohm's version of the Einstein--Podolsky--Rosen experiment is
committed to certain inequalities about the probabilities of
measurement outcomes which could be incompatible with the predications
of Quantum Mechanics. Bell's Theorem has achieved the non-locality of the quantum realm in terms of an alternative interpretation of the factorization as a locality condition. This perspective postulates
that for each quantum mechanical state there exists a distribution
over all possible pair states which is independent of the settings
of the equipments. However, experiments (such as polarization of
photons) violate the Bell's inequality and moreover, they confirm
the predictions of Quantum Mechanics with high accuracy. The violation of the Bell's inequality is enough to show that there is no underlying classical description for Quantum Mechanics. \cite{b1,b2,b3,bgh1}

Quantum concepts such as entanglement and superposition are fundamental tools for a theory of quantum computation
which performs operations on information in terms of quantum
bits. The state of a quantum bit lives in a superposition of two
orthonormal states such as $|\psi>:= \alpha |0> + \beta |1>$ such that
$\alpha, \beta$ are complex numbers. The measurement of one quantum
bit collapses the wave function of the other quantum bit. Quantum entanglement deals with three
fundamental subjects which can be studied under deterministic
and indeterministic settings. The first challenge is to explain how
we can detect optimally entanglement under theoretical models and
experimental tests. The second challenge is to build theoretical
models and experimental tests which reverse an inevitable process of
degradation of entanglement. The third challenge is to design
computational algorithms which enable us to characterize, control
and quantify entanglement. The main objective in dealing with these
challenges is to find a way to estimate optimally the amount of
quantum entanglement of the compound system in an unknown state if only
incomplete data in the form of average values of some operators
detecting entanglement are accessible. In this direction, a notion
of minimization of entanglement has been formulated under
a chosen measure of entanglement with constrains in the form of
incomplete set of data from experiment. In addition, theory of
positive maps has been developed to provide strong tools for
the detection of entanglement. \cite{calabrese-cardy-tonni-1,
huber-friis-gabriel-spengler-hiesmayr-1, jaeger-1, plenio-virmani-1,
shankar-1}

Entanglement in Quantum Field Theory have also been considered
recently where the measurements of the amount of entanglement in a
quantum system with infinite degrees of freedom were modeled under
some settings such as entropy, kinematic entanglement, particle
mixing and oscillations, theory of neutrino oscillations and
entangled space-time points. Entropy is on the basis of partitioning
an extended quantum system into two complementary subsystems and
calculating the entanglement entropy defined as the von Neumann
entropy of the reduced density matrix of one subsystem. This
treatment does not provide information about the entanglement
between two non-complementary parts of a larger system because of
the existence of a mixed state. Negativity is one interesting tool
to deal with this issue in Quantum Field Theory. Multi-mode
entanglement of single-particle states has been concerned via
particle mixing and flavor oscillations. It is shown that in Quantum
Field Theory these phenomena exhibit a fine structure of quantum
correlations as multi-mode multi-particle entanglement appears.
Quantum information theory is capable to provide appropriate tools
to quantify the content of multi-particle flavor entanglement in
QFT systems. The multi-particle flavor-species entanglement
associated with flavor oscillations of the QFT neutrino system has
been studied in terms of the particle-antiparticle species as
further quantum modes. Neutrino oscillations are due to neutrino
mixing and neutrino mass differences. Theory of entanglement in
neutrino oscillations is another progress in this direction where
mode entanglement can be expressed in terms of flavor transition
probabilities. Charged-current weak-interaction processes together
with their associated charged leptons enable us to identify flavor
neutrinos. Neutrino oscillations and CP violation concern neutrino
mixing such that neutrino masses as corrections to Standard Model
play their essential roles in the procedure \cite{akhmedov-smirnov-1,
blasone-dellanno-desiena-dimauro-Illuminati-1,
blasone-dellanno-desiena-dimauro-Illuminati-2,
blasone-dellanno-desiena-Illuminati-1, calabrese-cardy-tonni-1,
goldman-1, kayser-kopp-robertson-vogel-1}. In this chapter we plan to build a new mathematical model for the description of quantum entanglement in terms of the space of Dyson--Schwinger equations of a given gauge field theory. We show the importance of analytic generalization of solutions of Dyson--Schwinger equations (i.e. Feynman graphon models) for the analysis of quantum informational bridges in terms of lattices of substructures.

Passing from Classical Physics to Quantum Physics changes the
logical foundations of our mathematical frameworks. If we have a rigorous
formulation for the logic of quantum systems with infinite degrees
of freedom, then it definitely helps us to develop our knowledge about
entanglement machinery in QFTs. The logical foundations of Quantum
Mechanics were firstly built in the context of propositional
calculus, Hilbert space of states and the space of observations. The original aim of the propositional calculus in logic is to evaluate
propositions with the general form " {\it the physical quantity such
as $A$ of a given system $S$ has a value in the subset $\Delta$ of
real numbers.}" In this context, the main task is to find what
truth-values such propositions have in a given state of the system
and how the truth-value changes with the state in time. In Classical
Physics, there is a space of states such as points in a topological
space equipped with some additional structures such as Poisson
brackets, symplectic forms, .... In any given state, each
physical quantity has its value and each proposition of the form $A
\in \Delta$, which is represented by some Borel subsets of the
state space, has a truth-value true or false. The Borel subsets of
the state space form a Boolean $\sigma$-algebra which means that the
logic of classical systems can be encoded by a definite Boolean logic.
This description enables us to label Classical Physics as a realist
theory. Quantum Physics does not have this explicit realistic nature
and according to the Kochen--Specker Theorem, there is no state
space of a quantum system analogous to the classical state space. As
the assumptions of this Theorem, the physical quantities are
represented as real-valued functions on the hypothetical state space
of a quantum system. Then it is shown that such a space does not
exist and it is impossible to assign values to all physical
quantities at once. Therefore it is also impossible to assign
true or false values to all propositions. Birkhoff and von Neumann
built the foundations of an instrumentalist approach to quantum
logic where upon measurement of the physical quantity $A$, we could
find the result belong in $\Delta$ with a determined probability. In
this approach, pure states are represented by unit vectors in one particular
Hilbert space and propositions with the general form $A \in \Delta$
are represented by projection operators on this Hilbert space. These
projections form a non-distributive lattice. Set $\hat{E}[A \in
\Delta]$ as the projection which represents the proposition $A \in
\Delta$. The probability of $A \in \Delta$ being true in a given
state $|\psi>$ is determined by
\begin{equation}
P(A\in \Delta|\psi>):= <\psi|\hat{E}[A\in \Delta]|\psi> \in [0,1].
\end{equation}
Non-distribuitivity, dependence on measurement tools and the use of
real numbers as continuum are the most fundamental and conceptual
issues of  this instrumentalist approach and its generalizations. \cite{a1, adelman-corbett-1, b1, birkhoff-vonneumann-1, coecke-1, chiara-giuntini-1}

Category Theory had been applied to formulate a modern topos model for the
logical descriptions of physical phenomena. This modern approach clarified the passing from Classical Physics to Quantum Physics in terms of their corresponding topos models. Study in this direction has provided a new contextual form of quantum logic where it is possible to reconstruct the foundations of
physical theories in the context of search for a suitable
representation in a topos of a certain formal language. Classical Physics is reconstructed via the category of sets while Quantum Physics is reconstructed via the category of presheaves on a particular base category. This categorical machinery has been developed to QFT
models where nowadays we have some topos models for gauge field
theories \cite{baez-dolan-1, dowker-1,doring-isham-1, doring-isham-2, doring-isham-3, doring-isham-4, i1, ib1, ib2, lambek-scott-1, maclane-1, mclarty-1, maclane-moerdijk-1,p1}. In this chapter we plan to build a new topos model for strongly coupled gauge field theories which is capable to recognize the logical differences between perturbative and non-perturbative aspects of the physical theory.

\section{\textsl{QFT-entanglement via lattices of Dyson--Schwinger equations}}

The mathematical formulation of Standard Model in the context of
Noncommutative Geometry can motivate to bring a new approach
for the description of quantum entanglement in gauge field theories.
In Standard Model we have six quarks, six leptons and gauge bosons
which are responsible to carry fundamental forces. Gauge bosons
describe exchanging information between elementary particles
and their interactions in strong, weak and electromagnetic forces.
For example, the exchanging virtual photons (as the gauge boson in
quantum electrodynamics) makes transferring information as the force
between two electrons which is repulsive. Gluons are involved gauge
bosons in strong interactions among hadrons (i.e. six quarks) which
live in the nucleus of an atom. Electrons and neutrinos do not feel
strong nuclear force. Every charge particle feels the
electromagnetic force. $W^{\pm}, Z$ are involved gauge bosons in
weak interactions where everything is effected by the weak nuclear
force. Graviton is the theoretical candidate for gauge bosons of
gravity which effects everything. The modified versions of Standard
Model aim to describe the contribution of gravity. Heavier gauge
bosons are bosons of the fundamental force with the shortest range
of effect. Photons are massless which means that the electromagnetic
force has infinite range. $W^{\pm}, Z$ bosons are extremely heavy
and they have very short range. For example, a neutron can decay
into a proton and the gauge boson $W^{-}$ where at the very short
time, this boson quickly decays into an electron and an antielectron
neutrino. A proton can decay into a neutron and the gauge boson
$W^{+}$ where at the very short time this boson quickly decays into
a positron and a neutrino. Protons and neutrons are built by quarks.
$W^{\pm}$ bosons can contribute to exchanging a type of quark to
another type where as the result a proton converts to a neutron and
vice versa. Since $W^{\pm}$ are heavy, they need to borrow energy to
perform this exchange and then they should pay back the energy by
converting to pairs (positron, neutrino) or (electron, antielectron
neutrino) very quickly. Quarks enjoy the Pauli exclusion principle
which means that quarks should be in different quantum states. This
distinction is encoded by colors. Gluons govern any possible
interactions among quarks which convert or exchange the colors of
quarks by absorbtion or emission of gluons. Gluons can also
produce other gluons and they glue quarks together.  Force between
two quarks is independent of distance between them and therefore we
need infinite amount of energy to separate quarks. This fact, known
as quark confinement, tells us that we can not isolate a quark.
Thanks to gluons, the strong force also governs the existence of protons
and neutrons together inside the nucleus but the force at this level
is not independent of distance. Theoretically, the amount of energy
can be converted to a pair of quark and anti-quark where some
interactions could happen to exchange colours.
\cite{connes-marcolli-1, kreimer-6, nair-1, roberts-1,
roberts-schmidt-1, tanasa-2}

\subsection{\textsl{A new lattice model}}

It is possible to encapsulate all possible interactions in terms of
Green's functions where their self-similar nature enable us to study
interactions in the context of fixed point equations of Green's
functions namely, Dyson--Schwinger equations. The strength of the
fundamental forces dictate the appearance of perturbative,
asymptotic freedom or non-perturbative behaviors to these equations.
It is mentioned that gauge bosons provide information exchange
and here we plan to mathematically describe the existence of
information flow among elementary particles at strong levels of the
coupling constants in interacting gauge field theories via towers of
Dyson--Schwinger equations, cut-distance topological regions of
Feynman diagrams which contribute to solutions of these equations
and the vacuum energy. The vacuum energy guarantees the existence of
virtual particles in the vacuum state which will be used in our
setting.

\begin{rem} \label{vaccum-1}
The vacuum state in free field theory can be described in terms of a
tensor product of the Fock space vacuum states for each independent
field mode where there is no entanglement between the field modes at
different momenta. The full vacuum state in interacting Quantum
Field Theory can be described in terms of a superposition of Fock
basis states where the modes of different momenta are entangled.
\end{rem}

Our promising mathematical model tries to engineer divergencies of non-perturbative aspects of strongly coupled gauge field theories in the context of lattices of topological regions of elementary particles. It shows a deep dependence of the quantum entanglement on the indeterminateness of elementary particles. This platform enables us to understand quantum entanglement as an intrinsic non-local property of non-perturbative QFT-models which approves the indeterministic nature of strongly coupled gauge field theories.

\begin{defn} \label{interaction-dse-1}
For each $n$, consider $\gamma^{p}_{n}$ as a primitive (1PI)
Feynman diagram which presents some interactions of any elementary particle $p$ with other (virtual) particles in the physical theory. For each Feynman
diagram $\Gamma$, $B^{+}_{\gamma^{p}_{n}}(\Gamma)$ builds a new
disjoint union of Feynman diagrams as the result of all possibilities for the
insertion of $\Gamma$ into $\gamma^{p}_{n}$ in terms of the types of
vertices in $\gamma^{p}_{n}$ and types of external edges in $\Gamma$. Each family $\{B^{+}_{\gamma^{p}_{n}}\}_{n \ge 0}$ of this class of Hochschild one cocycles can determine a particular Dyson--Schwinger equation ${\rm DSE}_{p}$ which encodes a collection of
possible interactions between $p$ and other (virtual) particles in the
physical theory.
\end{defn}

We plan to apply lattice structures to describe quantum entanglement. A lattice is a partially ordered set such
that each pair of its elements has a unique join $\vee$ which is the
least upper bound and a unique meet $\wedge$ which is the greatest
lower bound. A lattice is called bounded if there exist the greatest
element and the least element. A lattice is called distributive, if
the operations meet and join obey the distributive conditions.

\begin{thm} \label{entanglement-1}
The information flow between the particle $p$ and all unobserved
intermediate states can be described in terms of a lattice of
topological Hopf subalgebras.
\end{thm}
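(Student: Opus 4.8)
The plan is to build, for the fixed elementary particle $p$, a partially ordered family of topological Hopf subalgebras whose order relation encodes ``carries at least as much unobserved interaction information as'', and then verify the lattice axioms. First I would start from the Dyson--Schwinger equation ${\rm DSE}_{p}$ attached to the family $\{B^{+}_{\gamma^{p}_{n}}\}_{n \ge 0}$ of Hochschild one cocycles from Definition \ref{interaction-dse-1}. By the Cartier--Quillen--Milnor--Moore argument recalled in the Introduction, the unique solution $X_{{\rm DSE}_{p}}$ generates a free commutative connected graded Hopf subalgebra $H_{{\rm DSE}_{p}} \subset H_{{\rm FG}}(\Phi)$; passing to the Feynman graphon picture of Lemma \ref{feynman-graphon-1} and Theorem \ref{feynman-graphon-5}, this sits inside $\mathcal{S}^{\Phi}_{{\rm graphon}}$ as a closed (cut-distance completed) topological Hopf subalgebra, which I denote $\mathcal{H}_{p}$. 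The intermediate (virtual) states are recorded by the sub-cocycle families: for any subset $A \subseteq \{\gamma^{p}_{n}\}_{n\ge 0}$ of primitive insertions one gets a sub-Dyson--Schwinger equation ${\rm DSE}_{p,A}$ and hence a topological Hopf subalgebra $\mathcal{H}_{p,A} \subseteq \mathcal{H}_{p}$, with $\mathcal{H}_{p,\emptyset} = \mathbb{C}[W_{\mathbb{I}}]$ (the empty-graph subalgebra, carrying no information) and $\mathcal{H}_{p,\{\gamma^{p}_{n}\}_{n}} = \mathcal{H}_{p}$ (the full interaction content of $p$).

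Next I would define the order by reverse inclusion of the cut-distance closed topological Hopf subalgebras generated inside $\mathcal{S}^{\Phi}_{{\rm graphon}}$, or equivalently by inclusion of the index sets $A$, so that larger subalgebras correspond to more unobserved intermediate states being accessible to the flow through $p$. The candidate meet of $\mathcal{H}_{p,A}$ and $\mathcal{H}_{p,B}$ is the topological Hopf subalgebra generated by $\mathcal{H}_{p,A}\cap\mathcal{H}_{p,B}$ (equivalently $\mathcal{H}_{p,A\cap B}$), and the candidate join is the cut-distance closure of the subalgebra generated by $\mathcal{H}_{p,A}\cup\mathcal{H}_{p,B}$ (equivalently $\mathcal{H}_{p,A\cup B}$). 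I would check that these are again objects of the same type: the intersection of two Hopf subalgebras is a Hopf subalgebra, and it is cut-distance closed because each factor is; for the join, the subalgebra generated by a union of graded connected Hopf subalgebras of $\mathcal{S}^{\Phi}_{{\rm graphon}}$ is again a connected graded Hopf subalgebra (the coproduct $\Delta_{{\rm graphon}}$ of Theorem \ref{feynman-graphon-5} restricts compatibly, and the grading by independent loop number is inherited), and one takes its cut-distance completion using that $\mathcal{S}^{\Phi}_{{\rm graphon}}$ is complete and that $\Delta_{{\rm graphon}}$ and $S_{{\rm graphon}}$ are continuous for the cut-distance topology. Boundedness is then immediate with bottom element $\mathbb{C}[W_{\mathbb{I}}]$ and top element $\mathcal{H}_{p}$. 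The absorption, commutativity and associativity laws for $\vee,\wedge$ follow from the corresponding set-theoretic identities for $\cap,\cup$ on the index sets $A$, once I establish that $A \mapsto \mathcal{H}_{p,A}$ is an order isomorphism onto its image — i.e. that distinct sub-families of primitive cocycles yield genuinely distinct (non weakly isomorphic) topological Hopf subalgebras, which is where the Feynman graphon / cut-metric separation from Theorem \ref{feynman-graphon-4} and the unlabeled graphon class machinery of Definition \ref{graphon-1} enters.

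The main obstacle I anticipate is precisely this injectivity/faithfulness step: I must rule out ``accidental coincidences'' where two different sets of unobserved intermediate insertions $A \neq B$ produce weakly isomorphic Feynman graphon models and hence the same point of $\mathcal{S}^{\Phi}_{{\rm graphon}}$, which would collapse the poset and destroy the lattice structure (or at least force me to work with equivalence classes). The resolution I would pursue is to use the gradings available on $\mathcal{S}^{\Phi}_{{\rm graphon}}$ — both the independent-loop grading and the filtration by words of Theorem \ref{graphon-filtration-1} — together with the recursive formula (\ref{dse-4}) for the solution components $X_n$, to show that the lowest-degree generator distinguishing $\mathcal{H}_{p,A}$ from $\mathcal{H}_{p,B}$ is a primitive Feynman graphon $[W_{\gamma^{p}_{m}}]$ with $m$ in the symmetric difference $A \triangle B$, and that primitive graphons with distinct decorations lie in distinct unlabeled graphon classes (their homomorphism densities against suitable test trees differ, by the criterion that weakly isomorphic graphons agree on all $t(H,\cdot)$). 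Assembling these, $\wedge$ and $\vee$ satisfy the lattice axioms and $(\{\mathcal{H}_{p,A}\}_{A},\wedge,\vee)$ is a bounded lattice of topological Hopf subalgebras; reading the order as ``routes of information flow through $p$ to its unobserved intermediate states'' then gives the stated description, and I would close by remarking that the lattice is in general non-distributive, reflecting (in the spirit of Remark \ref{vaccum-1} and the quantum-logic discussion) the entanglement of the virtual modes feeding into $p$.
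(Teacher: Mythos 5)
Your route is genuinely different from the paper's: you index the candidate subalgebras by subsets $A$ of the primitive cocycle family $\{\gamma^{p}_{n}\}_{n}$ and aim for a powerset-type lattice with meet and join computed by $A\cap B$ and $A\cup B$, whereas the paper builds an increasing chain $H_{{\rm DSE}_{p}}\le H_{{\rm DSE}_{p}^{(1)}}\le H_{{\rm DSE}_{p}^{(2)}}\le\cdots$ by iterating a summing construction on the primitives themselves ($\Gamma^{(j)}_{n}:=\Gamma^{(j-1)}_{1}+\cdots+\Gamma^{(j-1)}_{n}$, primitive again by linearity of the coproduct), orders the resulting cut-distance completed Hopf subalgebras by the existence of chains of injective continuous Hopf algebra morphisms, and then verifies the lattice axioms on what is a totally ordered family (so meet and join are simply the smaller and larger of two comparable objects). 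Your version, were it to work, would yield a richer, non-totally-ordered lattice.

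However, there is a genuine gap at the very first step of your order structure: the claim that $A\subseteq B$ implies $\mathcal{H}_{p,A}\subseteq\mathcal{H}_{p,B}$, and the accompanying identification $\mathcal{H}_{p,A}\cap\mathcal{H}_{p,B}=\mathcal{H}_{p,A\cap B}$, fail in general. The Hopf subalgebra attached to a combinatorial Dyson--Schwinger equation is generated by the homogeneous components $X_{n}$ of its unique solution, and by the recursion (\ref{dse-4}) these components change non-monotonically when cocycles are added. Concretely, take $A=\{\gamma_{1}\}$ and $B=\{\gamma_{1},\gamma_{2}\}$. Then $X_{1}^{A}=X_{1}^{B}=\omega_{1}\gamma_{1}$, but
\[
X_{2}^{B}=\omega_{1}B^{+}_{\gamma_{1}}(2X_{1}^{B})+\omega_{2}\gamma_{2}=X_{2}^{A}+\omega_{2}\gamma_{2},
\]
and $X_{2}^{A}=X_{2}^{B}-\omega_{2}\gamma_{2}$ is not an element of the free commutative algebra generated by $X_{1}^{B},X_{2}^{B},\dots$: the only degree-two elements there are linear combinations of $X_{2}^{B}$ and $(X_{1}^{B})^{2}$, and the connected insertion $B^{+}_{\gamma_{1}}(\gamma_{1})$ is linearly independent from the disconnected product $\gamma_{1}^{2}$, so the summand $\omega_{2}\gamma_{2}$ cannot be cancelled. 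Hence $\mathcal{H}_{p,A}\not\subseteq\mathcal{H}_{p,B}$ generically, the map $A\mapsto\mathcal{H}_{p,A}$ is not order-preserving for inclusion, and your proposed meet and join are not realized by intersection and generated subalgebra on the Hopf subalgebras themselves. To salvage the idea you would either have to carry the order abstractly on the index sets (at the price of no longer having a lattice \emph{of subalgebras} in any set-theoretic sense), or weaken containment to the existence of injective continuous Hopf algebra morphisms, which is essentially the relation the paper falls back on; the injectivity/faithfulness issue you flag is a real concern but is secondary to this monotonicity failure.
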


\begin{proof}
Intermediate states address virtual particles. Dyson--Schwinger
equations are the best tools for us to build Hopf subalgebras of the
Connes--Kreimer renormalization Hopf algebra $H_{{\rm FG}}(\Phi)$ of
Feynman diagrams of a given gauge field theory $\Phi$. Thanks to the cut-distance topology defined on
Feynman graphons (i.e. Theorem \ref{feynman-graphon-5}), we can naturally
equip each Hopf subalgebra $H_{{\rm DSE}}$ with this topology such
that the distance between Feynman diagrams $\Gamma_{1}, \Gamma_{2}$
is given by
\begin{equation} \label{distance-feynman-1}
d(\Gamma_{1},\Gamma_{2}):= d_{{\rm
cut}}([f^{\Gamma_{1}}],[f^{\Gamma_{2}}]).
\end{equation}
The class $[f^{\Gamma_{i}}]$ is the unique unlabeled Feynman graphon with
respect to the Feynman diagram $\Gamma_{i}$ and
\begin{equation}
d_{{\rm cut}}([f^{\Gamma_{1}}],[f^{\Gamma_{2}}])={\rm
inf}_{\phi,\psi}{\rm sup}_{A,B} |\int_{A \times B}
f^{\Gamma_{1}}(\phi(x),\phi(y)) -
f^{\Gamma_{2}}(\psi(x),\psi(y))dxdy|
\end{equation}
where the infimum is taken over all different relabeling $\phi,
\psi$ for the labeled graphons $f^{\phi},f^{\psi}$, respectively.
The supremum is taken over all Lebesgue measurable subsets $A,B$ of
the closed interval.

In addition, the coproduct of $H_{{\rm DSE}}$ is a linear bounded
operator on normed space of Feynman diagrams which leads us
to consider each $H^{{\rm cut}}_{{\rm DSE}}$ as a topological Hopf
subalgebra.

Thanks to Definition \ref{interaction-dse-1}, choose an equation
${\rm DSE}_{p}$ which contains some interactions related to the
particle $p$. For each $j \ge 1$, build a new collection
$\{\Gamma^{(j)}_{{n}}\}_{n \ge 1}$ of primitive graphs
\begin{equation}
\Gamma^{(j)}_{n}:= \Gamma_{1}^{(j-1)} + ... + \Gamma_{n}^{(j-1)}
\end{equation}
such that $\Gamma^{(0)}_{{n}}=\gamma^{p}_{n}$ for each $n\ge 1$.

Thanks to the Milnor--Moore Theorem (\cite{milnor-moore-1}), the
Connes--Kreimer renormalization Hopf algebra $H_{{\rm FG}}(\Phi)$ is
isomorphic to the graded dual of the universal enveloping algebra of
the Lie algebra of primitive elements in $H_{{\rm FG}}(\Phi)^{*}$
where for each Feynman diagram $\Gamma$, we can consider its
corresponding infinitesimal character $Z_{\Gamma}$ in the dual
space. Since the renormalization coproduct is a linear map, we can
check easily that the sum of a finite number of primitive graphs is
primitive. Therefore for each $j \ge 1$ and $n \ge 1$, the operator
$B^{+}_{\Gamma^{(j)}_{{n}}}$ is the corresponding Hochschild one cocyle such that
for each Feynman diagram $\Gamma$, this operator concerns all possible
situations for the insertion of $\Gamma$ into the disjoint unions of
Feynman diagrams $\Gamma_{1}^{(j-1)}, ..., \Gamma_{n}^{(j-1)}$. In
addition, by induction, we can show that for each $j$, the resulting
graph $B_{\Gamma^{(j)}_{{n}}}^{+}(\Gamma)$ covers
$B_{\Gamma^{(j-1)}_{{n}}}^{+}(\Gamma)$ as a subgraph.

For each $j \ge 1$, we can consider the Dyson--Schwinger equation
\begin{equation}
{\rm DSE}_{p}^{(j)}:= <\{B_{\Gamma^{(j)}_{{n}}}^{+}\}_{n \ge 1}>,
\end{equation}
with the corresponding Hopf subalgebra $H_{{\rm DSE}_{p}^{(j)}}$.
There exists a natural injective Hopf algebra homomorphism from
$H_{{\rm DSE}_{p}^{(j)}}$ to $H_{{\rm DSE}_{p}^{(j+1)}}$ which leads
us to build the following increasing chain of Hopf subalgebras
\begin{equation} \label{chain-1}
H_{{\rm DSE}_{p}} \le H_{{\rm DSE}_{p}^{(1)}} \le H_{{\rm
DSE}_{p}^{(2)}} \le ... \le H_{{\rm DSE}_{p}^{(j)}} \le ... .
\end{equation}
If $X_{{\rm DSE}_{p}^{(j)}} = \sum_{n_{(j)} \ge 0} g^{n_{(j)}}
X_{n_{(j)}}^{j}$ is the unique solution of the equation ${\rm
DSE}_{p}^{(j)}$, then for each $n_{(j)} \ge 1$, set
\begin{equation}
H(X_{1}^{j},...,X_{n_{(j)}}^{j})
\end{equation}
as the graded Hopf subalgebra of $H_{{\rm DSE}_{p}^{(j)}}$ free
commutative generated algebraically by graphs $X_{1}^{j}, ...,
X_{n_{(j)}}^{j}$.

We can also equip these Hopf subalgebras with the
cut-distance topology and then work on the topologically completed enrichment of these Hopf subalgebras. We have discussed that solutions of Dyson--Schwinger equations are
actually graph limits of sequences of finite Feynman diagrams with
respect to the cut-distance topology. The Hopf subalgebra $H_{{\rm
DSE}}$ generated by a given Dyson--Schwinger equation DSE is graded
with respect to the number of internal edges or the number of independent loops. We have
\begin{equation}
H_{{\rm DSE}} = \bigoplus_{n \ge 0} H_{{\rm DSE},(n)}
\end{equation}
such that $H_{{\rm DSE},(n)}$ is the homogeneous component of degree
$n$ and
$$H_{{\rm DSE},(p)}H_{{\rm DSE},(q)} \subseteq H_{{\rm DSE},(p+q)},$$
\begin{equation}
\Delta (H_{{\rm DSE},(n)}) \subseteq \bigoplus_{p+q=n} H_{{\rm
DSE},(p)} \otimes H_{{\rm DSE},(q)}, \ \  S(H_{{\rm DSE},(p)})
\subseteq H_{{\rm DSE},(p)}.
\end{equation}
Define a new parameter ${\rm val}$ for Feynman diagrams in $H_{{\rm DSE}}$
given by
\begin{equation}
{\rm val}(\Gamma):= {\rm Max} \{ n \in \mathbb{N}: \ \ \Gamma \in
\bigoplus_{k \ge n} H_{{\rm DSE}, (k)} \}
\end{equation}
which determines the $n$-adic metric on $H_{{\rm DSE}}$ defined by
the formula
\begin{equation}
d_{{\rm adic}}(\Gamma_{1}, \Gamma_{2}):= 2^{-{\rm val}(\Gamma_{1} - \Gamma_{2})}.
\end{equation}
Thanks to Proposition 4.6 in \cite{shojaeifard-10}, the $n$-adic
metric enables us to build a sequence $\{R_{n}({\rm DSE})\}_{n \ge 1}$ of
random graphs which is convergent to the unique solution $X_{{\rm
DSE}}$ with respect to the cut-distance topology. Now if we apply
the graphon representations of Feynman diagrams, then we can embed
$H_{{\rm DSE}}$ into the renormalization topological Hopf algebra of
Feynman graphons $\mathcal{S}^{\Phi}_{{\rm graphon}}$. We can consider the completed enrichment of
$H^{{\rm cut}}_{{\rm DSE}}$ with respect to the cut-distance topology to add graph-limits to this topological Hopf algebra. The
coproduct $\Delta_{H_{{\rm DSE}}}$ is a linear bounded map $H^{{\rm cut}}_{{\rm DSE}}$ which makes it a continuous
operator. Thanks to the graduation parameter, the antipode is also a
continuous operator in this setting. Denote $H_{{\rm DSE}}^{{\rm
cut}}$ as the resulting topological Hopf algebra.

Now for each $j\ge 1$, set $H_{{\rm DSE}_{p}^{(j)}}^{\rm
cut}$ as the resulting topological Hopf subalgebra corresponding to
each equation ${\rm DSE}_{p}^{(j)}$. The family $\mathcal{C}_{p}$ of
these topological Hopf subalgebras can be equipped by the following
binary relation
\begin{equation}
V \preccurlyeq W   \Longleftrightarrow
\end{equation}
there exists a finite sequence of topological Hopf subalgebras
$V_{1},...,V_{r} \in \mathcal{C}_{p}$ together with injective
morphisms $V \rightarrow V_{1} \rightarrow V_{2} \rightarrow ...
\rightarrow V_{r} \rightarrow W$ which connect $V$ to $W$.

We can check that $(\mathcal{C}_{p},\preccurlyeq)$ is a lattice of topological
Hopf subalgebras with the greatest lower bound
$H(X^{(0)}_{1_{(0)}})^{{\rm cut}}$. For each subset $\{V_{1},
V_{2}\}$ of $(\mathcal{C}_{p},\preccurlyeq)$, if $V_{1} \preccurlyeq
V_{2}$ then define
\begin{equation}
V_{1} \wedge V_{2}:= V_{1}, \ \ \ V_{1} \vee V_{2}:= V_{2}.
\end{equation}
The lattice $(\mathcal{C}_{p},\preccurlyeq)$ enables us to relate a
class of Dyson--Schwinger equations derived from the basic equation
${\rm DSE}_{p}$ to each other with respect to morphisms among their
corresponding topological Hopf subalgebras. This means that the
lattice $(\mathcal{C}_{p},\preccurlyeq)$ mathematically describes
the transferring of information from one equation to another. It describes the informational bridge between $p$ and its intermediate states.
\end{proof}

\begin{defn} \label{region-cut-1}
Set $R^{p}$ as the smallest collection of all Feynman diagrams in
$\Phi$ which contribute to the equations ${\rm DSE}_{p}$ and ${\rm
DSE}_{p}^{(j)}$ for all $j\ge1$. Then equip it with the cut-distance
topology and add garph limits to obtain a complete topological
region $\overline{R^{p}}$.
\end{defn}

Thanks to Theorem \ref{entanglement-1}, the lattice
$(\mathcal{C}_{p},\preccurlyeq)$ shows us the information flow
processes among all (virtual) particles which contribute to the
topological region $\overline{R^{p}}$.

\begin{thm} \label{entanglement-2}
There exists a lattice of topological Hopf subalgebras which describes the information flow in
an entangled system of elementary particles in a given interacting Quantum
Field Theory.
\end{thm}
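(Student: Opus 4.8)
The plan is to build the entanglement lattice by gluing together, along their shared particles, the single--particle lattices $(\mathcal{C}_{p},\preccurlyeq)$ constructed in Theorem \ref{entanglement-1}. Concretely, suppose an entangled system involves elementary particles $p_{1},\dots,p_{k}$ sharing a common interaction history (in the sense of Remark \ref{vaccum-1}, their field modes at different momenta are entangled in a superposition of Fock basis states). First I would choose, for each $p_{i}$, a combinatorial Dyson--Schwinger equation ${\rm DSE}_{p_{i}}$ as in Definition \ref{interaction-dse-1}, but now insisting that the chosen family $\{\gamma^{p_{i}}_{n}\}_{n\ge 0}$ of primitive $1$PI diagrams contains at least one primitive diagram in which $p_{i}$ and $p_{j}$ appear together as external legs for every $j\neq i$ (such diagrams exist since the particles have interacted). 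This guarantees that the topological regions $\overline{R^{p_{i}}}$ and $\overline{R^{p_{j}}}$ of Definition \ref{region-cut-1} have non-empty cut-distance intersection. Then, running the same tower construction \eqref{chain-1} simultaneously for all $p_{i}$, I obtain families $\mathcal{C}_{p_{1}},\dots,\mathcal{C}_{p_{k}}$ of topologically completed Hopf subalgebras $H^{{\rm cut}}_{{\rm DSE}_{p_{i}}^{(j)}}$ inside $\mathcal{S}^{\Phi}_{{\rm graphon}}$ (Theorem \ref{feynman-graphon-5}).

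Next I would take the collection $\mathcal{C}_{\rm ent}:=\bigcup_{i=1}^{k}\mathcal{C}_{p_{i}}$, together with all finite joins $H_{{\rm DSE}_{p_{i_{1}}}^{(j_{1})}}^{{\rm cut}} \vee \cdots \vee H_{{\rm DSE}_{p_{i_{r}}}^{(j_{r})}}^{{\rm cut}}$, where the join of two topological Hopf subalgebras is defined as the smallest topological Hopf subalgebra of $\mathcal{S}^{\Phi}_{{\rm graphon}}$ containing both (this exists because the coproduct and antipode are continuous and bounded, so the closure of the algebra generated by a set of Feynman graphons is again a topological Hopf subalgebra). I would equip $\mathcal{C}_{\rm ent}$ with the binary relation $\preccurlyeq$ exactly as in Theorem \ref{entanglement-1}: $V\preccurlyeq W$ iff there is a finite chain of injective Hopf-algebra morphisms from $V$ to $W$ through members of $\mathcal{C}_{\rm ent}$. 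The meet of $V_{1},V_{2}$ is taken to be the largest common topological Hopf subalgebra embedding into both (again well-defined by continuity of the structure maps and the fact that an arbitrary intersection of topological Hopf subalgebras is one). The greatest lower bound of the whole poset is the Hopf subalgebra generated by the empty graph $\mathbb{I}$, i.e. $\mathbb{C}$ itself; the least upper bound of any finite subset is the corresponding finite join. I would then verify the lattice axioms: $\preccurlyeq$ is reflexive and transitive by concatenation of morphism chains, antisymmetry holds after passing to isomorphism classes (as already done implicitly in Theorem \ref{entanglement-1}), and existence of finite meets and joins follows from the two closure observations above. The physical interpretation is that a morphism $H^{{\rm cut}}_{{\rm DSE}_{p_{i}}^{(j)}}\hookrightarrow H^{{\rm cut}}_{{\rm DSE}_{p_{i'}}^{(j')}}$ encodes a channel carrying information (via gauge bosons) from the intermediate states around $p_{i}$ to those around $p_{i'}$, so that the whole lattice $(\mathcal{C}_{\rm ent},\preccurlyeq)$ is precisely a combinatorial--topological model of the entanglement web among $p_{1},\dots,p_{k}$.

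The main obstacle I anticipate is showing that the joins and meets actually stay inside $\mathcal{S}^{\Phi}_{{\rm graphon}}$ as \emph{topological Hopf subalgebras} rather than merely as topological subalgebras: one must check that the renormalization coproduct $\Delta_{{\rm graphon}}$ of \eqref{cop-graphon-2} restricts to the closed subalgebra generated by two towers, i.e. that the coproduct of a graphon built from several interacting particles only produces graphons that are again supported on the combined region. Here I would use the recursive/linear structure of $\Delta_{{\rm graphon}}$ together with Lemma \ref{feynman-graphon-1} and Theorem \ref{feynman-graphon-4}: since every large Feynman diagram in a join is a cut-distance limit of finite partial sums whose primitive $1$PI subgraphs all come from the chosen families $\{\gamma^{p_{i}}_{n}\}$, the coproduct terms $[W_{\Upsilon}]\otimes[W_{X/\Upsilon}]$ are again limits of such partial sums, hence lie in the join. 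The compatibility with the cut-distance topology is then inherited from Theorem \ref{feynman-graphon-5}. A secondary, more bookkeeping-level difficulty is ensuring antisymmetry of $\preccurlyeq$ is not destroyed when two distinct particles happen to generate isomorphic towers; this I would handle by working throughout with weakly-isomorphism classes of Feynman graphons, as in Definition \ref{graphon-1}(ii) and the discussion preceding Lemma \ref{feynman-graphon-1}, so that the poset is genuinely a partial order and the lattice structure is unambiguous.
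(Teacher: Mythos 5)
There is a genuine gap: your construction addresses a different (and easier) situation than the one the theorem is actually about. You arrange from the outset that the entangled particles share primitive $1$PI diagrams, so that the topological regions $\overline{R^{p_{i}}}$ and $\overline{R^{p_{j}}}$ of Definition \ref{region-cut-1} have non-empty cut-distance intersection, and you then glue the single-particle lattices of Theorem \ref{entanglement-1} along these overlaps. But the setting of the theorem, as the paper develops it, is that of \emph{space-time far distant} entangled particles $p,q$ with $p\notin\overline{R^{q}}$, $q\notin\overline{R^{p}}$ and $\overline{R^{p}}\cap\overline{R^{q}}=\emptyset$, with $d(\overline{R^{p}},\overline{R^{q}})>0$. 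In that case there is no shared primitive diagram to glue along, and your lattice $\mathcal{C}_{\rm ent}$ degenerates into two components with no morphisms between them, so it does not describe any information flow between $p$ and $q$. The key idea you are missing is the construction of an intermediate region $R^{c_{pq}}$ of vacuum-generated virtual particles: for every $\epsilon>0$ one chooses new families of Hochschild one-cocycles $B^{+}_{\gamma_{n}^{\epsilon},p}$ and $B^{+}_{\eta_{n}^{\epsilon},q}$ whose primitive diagrams lie \emph{outside} both $R^{p}$ and $R^{q}$, yet whose Dyson--Schwinger solutions $X^{p}_{\epsilon}$, $X^{q}_{\epsilon}$ are within cut-distance $\epsilon$ of $X_{{\rm DSE}_{p}^{(j_{1})}}$ and $X_{{\rm DSE}_{q}^{(j_{2})}}$ respectively; the region $R^{c_{pq}}$ spanned by these equations then meets both $\overline{R^{p}}$ and $\overline{R^{q}}$, and the lattice is obtained by connecting the two truncated towers $(\mathcal{C}^{j^{*}_{1}}_{p},\preccurlyeq)$ and $(\mathcal{C}^{j^{*}_{2}}_{q},\preccurlyeq)$ through chains of Hopf-algebra homomorphisms passing through a bridge object $H^{{\rm cut}}_{{\rm DSE}_{c}^{(k)}}$ living in $\overline{R^{c_{pq}}}$. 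The physical justification (Remark \ref{vaccum-1}: the interacting vacuum supplies the virtual particle $c$) is essential to the argument and has no counterpart in your proposal.

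A secondary remark: the effort you invest in defining joins and meets as smallest/largest topological Hopf subalgebras inside $\mathcal{S}^{\Phi}_{{\rm graphon}}$, and in checking that $\Delta_{{\rm graphon}}$ restricts to them, is not needed for the paper's argument. The lattice in the paper is simply the union of two finite increasing chains of the form \eqref{chain-1}, truncated at the minimal indices $j^{*}_{1},j^{*}_{2}$ at which the distance $d(\overline{R^{p}},\overline{R^{q}})$ is realized, joined by the explicit morphism chains through $H^{{\rm cut}}_{{\rm DSE}_{p}^{(\epsilon)}}$, $H^{{\rm cut}}_{{\rm DSE}_{c}^{(k)}}$ and $H^{{\rm cut}}_{{\rm DSE}_{q}^{(\epsilon)}}$; meet and join are defined directly on comparable pairs as in Theorem \ref{entanglement-1}, so no closure argument for arbitrary joins is required.
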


\begin{proof}
At the first step, we are going to show the existence of a class of
Dyson--Schwinger equations for the description of information flow
between two space-time far distant particles which belong to an
entangled system in a given interacting Quantum Field Theory.

Thanks to Theorem \ref{entanglement-1}, we already have described
the entanglement process in a topological region around an
elementary particle on the basis of the cut-distance topology. Here
we need to show the possibility of information flow between two
entangled particles $p,q$ while $p$ does not belong to $\overline{R^{q}}$, $q$ does not belong to $\overline{R^{p}}$ and $\overline{R^{p}} \cap \overline{R^{q}} = \emptyset$.

We have identified topological subspaces $\overline{R^{p}}$ and
$\overline{R^{q}}$ of the topological Hopf algebra $H_{{\rm
FG}}^{{\rm cut}}(\Phi)$ in terms of their contribution to the entanglement of
intermediate states (i.e. Definition \ref{region-cut-1}). Now thanks to
the metric (\ref{distance-feynman-1}), define the distance between
this class of regions in terms of
\begin{equation} \label{region-distance-1}
d(\overline{R^{p}},\overline{R^{q}}):= {\rm inf} \{d (X,Y): \ \ X
\in \overline{R^{p}}, \ Y \in \overline{R^{q}}\}.
\end{equation}
We want to show the existence of topological regions such as $R^{c_{pq}}$ in
$H_{{\rm FG}}^{{\rm cut}}(\Phi)$ with the following conditions
\begin{equation}
\overline{R^{p}} \cap \overline{R^{c_{pq}}} \neq \emptyset, \ \
\overline{R^{q}} \cap \overline{R^{c_{pq}}} \neq \emptyset.
\end{equation}

For $d(\overline{R^{p}},\overline{R^{q}}) >0$, there exist
$j_{1},j_{2} \ge 0$ such that the corresponding equations ${\rm
DSE}_{p}^{(j_{1})}$ and ${\rm DSE}_{q}^{(j_{2})}$ have the following conditions
\begin{equation}
X_{{\rm DSE}_{p}^{(j_{1})}} = {\rm lim}_{n \rightarrow \infty}
\sum_{k=0}^{n} X_{k}^{(j_{1})}, \ \ \  X_{{\rm DSE}_{q}^{(j_{2})}} =
{\rm lim}_{n \rightarrow \infty} \sum_{k=0}^{n} X_{k}^{(j_{2})}
\end{equation}
\begin{equation}
d(\overline{R^{p}},\overline{R^{q}}) = d_{{\rm cut}}(X_{{\rm
DSE}_{p}^{(j_{1})}},X_{{\rm DSE}_{q}^{(j_{2})}})>0.
\end{equation}
For each $\epsilon >0,$ we can determine Hochschild one cocycles
$B^{+}_{\gamma_{n}^{\epsilon},p}$, $n \ge 1$ which fulfills the
following conditions:

- Each $\gamma_{n}^{\epsilon}$ is a finite primitive (1PI) Feynman
diagram such that
\begin{equation}
\forall n \ge 1, \ \ \gamma_{n}^{\epsilon} \notin R^{p}, \ \
\gamma_{n}^{\epsilon} \notin R^{q}, \ \ \gamma_{n}^{\epsilon} \in
H_{\rm FG}(\Phi).
\end{equation}

- The equation ${\rm DSE}_{p}^{(\epsilon)}$ as the Dyson--Schwinger
equation originated from the family
$\{B^{+}_{\gamma_{n}^{\epsilon},p}\}_{n \ge 1}$ with the unique
solution $X_{\epsilon}^{p}= \sum_{n \ge 0} X_{n}^{(\epsilon)p}$ has
the following property that there exists $N_{\epsilon} \in
\mathbb{N}$ such that for each $n > N_{\epsilon}$, we have
$d(X_{n}^{(j_{1})},X_{n}^{(\epsilon)p}) \le \epsilon.$

Thanks to the triangle inequality of the cut-distance metric, we can
obtain
\begin{equation} \label{region-1}
d(X_{{\rm DSE}_{p}^{(j_{1})}},X_{\epsilon}^{p}) \le \epsilon
\end{equation}

In addition, for each $\epsilon >0,$ we can determine Hochschild one
cocycles $B^{+}_{\eta_{n}^{\epsilon},q}$, $n \ge 1$ which fulfills
the following conditions:

- Each $\eta_{n}^{\epsilon}$ is a finite primitive (1PI) Feynman
diagram such that
\begin{equation}
\forall n \ge 1, \ \ \eta_{n}^{\epsilon} \notin R^{p}, \ \
\eta_{n}^{\epsilon} \notin R^{q}, \ \ \eta_{n}^{\epsilon} \in H_{\rm
FG}(\Phi).
\end{equation}

- The equation ${\rm DSE}_{q}^{(\epsilon)}$ as the Dyson--Schwinger
equation originated from the family
$\{B^{+}_{\eta_{n}^{\epsilon},q}\}_{n \ge 1}$ with the unique
solution $X_{\epsilon}^{q}= \sum_{n \ge 0} X_{n}^{(\epsilon)q}$ has
the following property that there exists $N'_{\epsilon} \in
\mathbb{N}$ such that for each $n > N'_{\epsilon}$, we have
$d(X_{n}^{(j_{2})},X_{n}^{(\epsilon)q}) \le \epsilon.$

Thanks to the triangle inequality of the cut-distance metric, we can
obtain
\begin{equation} \label{region-2}
d(X_{{\rm DSE}_{q}^{(j_{2})}},X_{\epsilon}^{q}) \le \epsilon.
\end{equation}

The vacuum in an interacting physical theory can be described as a
homogeneous system of virtual particles where its states are
invariant by all transformations of the invariance group. Some
particles in the vacuum have negative energies where without
violating the conservation laws they can annihilate
(\cite{roberts-1}). Thanks to this fact, we can determine
$R^{c_{pq}}$ as the smallest topological subset of $H^{{\rm
cut}}_{{\rm FG}}(\Phi)$ consisting of Feynman graphs which
contribute to equations of the types ${\rm DSE}^{(\epsilon)}_{p}$
and ${\rm DSE}^{(\epsilon)}_{q}$. This region contains virtual
particles (created by the vacuum energy) which has separate
contributive parts with topological regions $R^{p}$ and $R^{q}$. The
relations (\ref{region-1}) and (\ref{region-2}) guarantee that
\begin{equation}
R^{p} \cap R^{c_{pq}} \neq \emptyset, \ \ \  R^{q} \cap R^{c_{pq}}
\neq \emptyset.
\end{equation}
Now if we apply Theorem \ref{entanglement-1}, then graphs which
belong to the region $\overline{R^{c_{pq}}}$ (as the completion of
$R^{c_{pq}}$ with respect to the cut-distance topology) make
informational bridges between entangled particles $p$, $q$ and their
corresponding intermediate states (virtual particles) which live in
$\overline{R^{p}} \cup \overline{R^{q}} \cup \overline{R^{c_{pq}}}$.

At the second step, we want to encapsulate the above machinery in terms of lattice models.
Suppose ${\rm DSE}_{p}$ and ${\rm DSE}_{q}$ are the basic
Dyson--Schwinger equations corresponding to entangled particles $p$
and $q$. Thanks to the built lattice by Theorem
\ref{entanglement-1}, let $(\mathcal{C}_{p},\preccurlyeq)$ be the
lattice of topological Hopf subalgebras $H^{{\rm cut}}_{{\rm
DSE}_{p}^{(j)}}$ generated by equations of the type ${\rm
DSE}_{p}^{(j)}$ which live in the topological region
$\overline{R^{p}}$ and let $(\mathcal{C}_{q},\preccurlyeq)$ be the
lattice of topological Hopf subalgebras $H^{{\rm cut}}_{{\rm
DSE}_{q}^{(l)}}$ generated by equations of the type ${\rm
DSE}_{q}^{(l)}$ which live in the topological region
$\overline{R^{q}}$. Thanks to the previous part of the proof, we can show the existence
of $j_{1}, j_{2} \ge 0$ such that ${\rm DSE}_{p}^{(j_{1})}$ and
${\rm DSE}_{p}^{(j_{2})}$ contribute in the description of the
distance between two topological regions $\overline{R^{p}}$ and
$\overline{R^{q}}$ (i.e. metric (\ref{region-distance-1})). Set
\begin{equation}
j^{*}_{1}:= {\rm Min} \{j_{1}: {\rm DSE}_{p}^{(j_{1})}\}, \ \ \
j^{*}_{2}:= {\rm Min} \{j_{2}: {\rm DSE}_{q}^{(j_{2})}\}.
\end{equation}
Consider the sub-lattice
$(\mathcal{C}^{j^{*}_{1}}_{p},\preccurlyeq)$ which contains only the
first $j^{*}_{1}$ columns of the original lattice
$(\mathcal{C}_{p},\preccurlyeq)$ and the sub-lattice
$(\mathcal{C}^{j^{*}_{2}}_{q},\preccurlyeq)$ which contains only the
first $j^{*}_{2}$ columns of the original lattice
$(\mathcal{C}_{q},\preccurlyeq)$. These two sub-lattices have the
greatest lower bound and the smallest upper bound.

Thanks to the structure of the topological region $R^{c_{pq}}$, we
can build a new lattice
$(\mathcal{C}^{j_{1}^{*}j_{2}^{*}}_{c_{pq}},\preccurlyeq)$ which is
the result of the disjoint union of the sub-lattices
$(\mathcal{C}^{j^{*}_{1}}_{p},\preccurlyeq)$ and
$(\mathcal{C}^{j^{*}_{2}}_{q},\preccurlyeq)$ which are connected to
each other by topological Hopf algebra homomorphisms associated to
Dyson--Schwinger equations of the types ${\rm DSE}^{(\epsilon)}_{p}$
and ${\rm DSE}^{(\epsilon)}_{q}$. We have
\begin{equation} \label{1}
H^{{\rm cut}}_{{\rm DSE}_{p}} \le H^{{\rm cut}}_{{\rm
DSE}_{p}^{(1)}} \le ... \le H^{{\rm cut}}_{{\rm
DSE}_{p}^{(j^{*}_{1})}}
\end{equation}
\begin{equation} \label{2}
H^{{\rm cut}}_{{\rm DSE}_{q}} \le H^{{\rm cut}}_{{\rm
DSE}_{q}^{(1)}} \le ... \le H^{{\rm cut}}_{{\rm
DSE}_{q}^{(j^{*}_{2})}}
\end{equation}
which belong to the new lattice in terms of one of the following
topological Hopf algebra homomorphisms
\begin{equation}
H^{{\rm cut}}_{{\rm DSE}_{p}^{(j^{*}_{1})}} \longrightarrow H^{{\rm
cut}}_{{\rm DSE}_{p}^{(\epsilon)}} \longrightarrow H^{{\rm cut}}_{{\rm
DSE}_{c}^{(k)}} \longrightarrow H^{{\rm cut}}_{{\rm
DSE}_{q}^{(\epsilon)}} \longrightarrow H^{{\rm cut}}_{{\rm
DSE}_{q}^{(j^{*}_{2})}}
\end{equation}
or
\begin{equation}
H^{{\rm cut}}_{{\rm DSE}_{q}^{(j^{*}_{2})}} \longrightarrow H^{{\rm
cut}}_{{\rm DSE}_{q}^{(\epsilon)}} \longrightarrow H^{{\rm cut}}_{{\rm
DSE}_{c}^{(k)}} \longrightarrow H^{{\rm cut}}_{{\rm
DSE}_{p}^{(\epsilon)}} \longrightarrow H^{{\rm cut}}_{{\rm
DSE}_{p}^{(j^{*}_{1})}}
\end{equation}
such that $H^{{\rm cut}}_{{\rm DSE}_{c}^{(k)}}$ is the topological
Hopf subalgebra associated to the Dyson--Schwinger equation ${\rm
DSE}_{c}^{(k)}$ which lives in the region $R^{c_{pq}}$ and derived
from the virtual particle $c$.
\end{proof}

\subsection{\textsl{Intermediate algorithms for QFT-entanglement}}

The Milnor--Moore theorem provides a correspondence between
pro-unipotent Lie groups and graded commutative Hopf algebras
(\cite{cartier-1, milne-1}). This correspondence enables us to
translate the determination of Hopf subalgebraic structures in the
renormalization Hopf algebra to a problem in Lie groups. One
interesting application of Galois theory is to find a fundamental
interrelationship between subgroups of the group of all
automorphisms and intermediate algorithmic structures which live in
the middle of programs and computable functions \cite{yanofsky-1,
yanofsky-2}. Dyson--Schwinger equations
can be applied for the determination of substructures in the
renormalization Hopf algebra $H_{{\rm FG}}(\Phi)$ and the
determination of Lie subgroups of the complex Lie group
$\mathbb{G}_{\Phi}(\mathbb{C})$ of characters. We can consider each $H_{{\rm DSE}}$ as the Hopf ideal in $H_{{\rm FG}}(\Phi)$ and then consider the resulting quotient Hopf subalgebra. The complex Lie group corresponding to this quotient Hopf algebra is a Lie subgroup of $\mathbb{G}_{\Phi}(\mathbb{C})$ while we have a surjective group homomorphism from $\mathbb{G}_{\Phi}(\mathbb{C})$ to $\mathbb{G}_{{\rm DSE}}(\mathbb{C})$. Therefore these
non-perturbative equations play the middle bridge between Theory of Computation and Quantum Field Theory \cite{delaney-marcolli-1,
shojaeifard-8}.

Theorem \ref{entanglement-1} and Theorem \ref{entanglement-2} have
explained the entanglement of elementary particles in terms of
the existence of substructures originated from Dyson--Schwinger
equtations. This new mathematical platform can address a deep connection
between the concept of information flow in Quantum Field Theory and
the existence of subobjects inside an object determined by the
Galois Fundamental Theorem. The immediate consequence of this
investigation is to recognize a new approach to quantum entanglement
in the language of intermediate algorithms in Theoretical Computer
Science. We deal with this interesting challenge on the basis of the
representation theory of Lie groups.

\begin{thm} \label{entanglement-3}
The intermediate algorithms corresponding to Lie sub-groups of the complex Lie group $\mathbb{G}^{\Phi}_{{\rm graphon}}(\mathbb{C})$ can encode the information flow in an entangled system in a given interacting Quantum Field Theory $\Phi$ in terms of a lattice of Lie subgroups.
\end{thm}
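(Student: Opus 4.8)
The plan is to transport the lattice of topological Hopf subalgebras constructed in Theorem~\ref{entanglement-1} and Theorem~\ref{entanglement-2} through the duality between commutative graded Hopf algebras and pro-unipotent affine group schemes, and then to reinterpret the resulting chain of Lie subgroups as a chain of intermediate algorithms in the sense of the Galois-theoretic correspondence used in \cite{delaney-marcolli-1, shojaeifard-8, yanofsky-1, yanofsky-2}. First I would fix the entangled pair $p,q$, the associated basic equations ${\rm DSE}_{p}, {\rm DSE}_{q}$, the connecting region $R^{c_{pq}}$ and the combined lattice $(\mathcal{C}^{j_{1}^{*}j_{2}^{*}}_{c_{pq}},\preccurlyeq)$ of topological Hopf subalgebras $H^{{\rm cut}}_{{\rm DSE}^{(j)}_{p}}$, $H^{{\rm cut}}_{{\rm DSE}^{(l)}_{q}}$, $H^{{\rm cut}}_{{\rm DSE}^{(k)}_{c}}$ produced by Theorem~\ref{entanglement-2}. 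Each of these is a graded connected commutative Hopf subalgebra of $\mathcal{S}^{\Phi}_{{\rm graphon}}$ (via the embedding $H_{{\rm FG}}(\Phi)\hookrightarrow\mathcal{S}^{\Phi}_{{\rm graphon}}$ of Lemma~\ref{feynman-graphon-1} and Theorem~\ref{feynman-graphon-5}), so each one can be regarded as a Hopf ideal and the corresponding quotient Hopf algebra determines, via the Milnor--Moore theorem \cite{milnor-moore-1} and ${\rm Spec}$, a closed pro-unipotent Lie subgroup of $\mathbb{G}^{\Phi}_{{\rm graphon}}(\mathbb{C})$, exactly as in the construction of $\overline{\rho}$ in (\ref{dse-lie-group-1}) lifted to the graphon level.

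The key steps, in order, are: (1) apply the contravariant functor ${\rm Spec}$ (equivalently ${\rm Hom}(-,\mathbb{C})$) to each injective Hopf algebra homomorphism $H^{{\rm cut}}_{{\rm DSE}^{(j)}_{p}}\to H^{{\rm cut}}_{{\rm DSE}^{(j+1)}_{p}}$ appearing in the chains (\ref{1}), (\ref{2}) and in the connecting homomorphisms through $H^{{\rm cut}}_{{\rm DSE}^{(\epsilon)}_{p}}$, $H^{{\rm cut}}_{{\rm DSE}^{(k)}_{c}}$, $H^{{\rm cut}}_{{\rm DSE}^{(\epsilon)}_{q}}$, obtaining surjective morphisms of affine group schemes and hence, on $\mathbb{C}$-points, surjective homomorphisms of pro-unipotent complex Lie groups $\mathbb{G}^{(j+1)}_{p}(\mathbb{C})\twoheadrightarrow\mathbb{G}^{(j)}_{p}(\mathbb{C})$ etc.; (2) show that this assignment is order-reversing, so that the join/meet of the lattice $(\mathcal{C}^{j_{1}^{*}j_{2}^{*}}_{c_{pq}},\preccurlyeq)$ is carried to a well-defined lattice structure on the family of Lie subgroups (using that the category of pro-unipotent affine group schemes has the needed limits, so that finite zig-zags of surjections define the partial order $\mathbb{G}_{1}\preccurlyeq\mathbb{G}_{2}$ iff there is a finite sequence of surjections connecting them, mirroring the $\preccurlyeq$ on $\mathcal{C}_{c_{pq}}$); (3) for each such Lie subgroup $\mathbb{G}^{(j)}_{p}(\mathbb{C})$ invoke the Connes--Marcolli / Manin renormalization Hopf algebra of the Halting problem and the dictionary of \cite{delaney-marcolli-1, shojaeifard-8} to attach to it the corresponding intermediate algorithm $\mathcal{A}^{(j)}_{p}$ sitting between a program and a partial recursive function, and verify functoriality — a surjection of Lie groups corresponds to an ``is an intermediate step of'' relation between algorithms; (4) conclude that the lattice $(\mathcal{C}^{j_{1}^{*}j_{2}^{*}}_{c_{pq}},\preccurlyeq)$ of Theorem~\ref{entanglement-2}, which by that theorem already encodes the information flow among $p$, $q$ and their intermediate virtual states in $\overline{R^{p}}\cup\overline{R^{q}}\cup\overline{R^{c_{pq}}}$, transports to a lattice of Lie subgroups of $\mathbb{G}^{\Phi}_{{\rm graphon}}(\mathbb{C})$ whose elements are labelled by intermediate algorithms, which is the asserted statement.

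I expect the main obstacle to be Step~(2)--(3): ensuring that the lattice operations are genuinely preserved under the duality and that the ``intermediate algorithm'' attached to a Lie subgroup is well defined and compatible with $\wedge,\vee$. The subtlety is that $\wedge$ and $\vee$ in $(\mathcal{C}_{c_{pq}},\preccurlyeq)$ were defined only on comparable pairs (the lattice is essentially a disjoint union of two chains glued along the $R^{c_{pq}}$-bridge), so on the group side one must check that the corresponding meets and joins — which a priori require taking a quotient by a sum of Hopf ideals, or a fibre product of group schemes — stay inside the intended family and remain pro-unipotent; this needs the fact that $\mathcal{S}^{\Phi}_{{\rm graphon}}$ is connected graded (Theorem~\ref{feynman-graphon-5}), so that all the Hopf ideals in play are graded and the quotients are again connected graded, hence correspond to honest pro-unipotent subgroups. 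A secondary technical point is that the Lie subgroups arising from ${\rm DSE}^{(\epsilon)}_{p}$ and ${\rm DSE}^{(\epsilon)}_{q}$ depend on the choice of $\epsilon$; I would handle this by passing to the limit $\epsilon\to 0$ using the cut-distance convergence estimates (\ref{region-1}), (\ref{region-2}) and the compactness of the space of unlabeled Feynman graphons, so that the connecting node $H^{{\rm cut}}_{{\rm DSE}^{(k)}_{c}}$ — and hence its Lie group and its algorithm — is canonical. The rest is bookkeeping: functoriality of ${\rm Spec}$, the Milnor--Moore theorem, and Proposition~1.47 of \cite{connes-marcolli-1} applied at the graphon level, exactly as in the proof of the preceding theorems.
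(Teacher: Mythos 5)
Your proposal follows essentially the same route as the paper's proof: apply the contravariant functor ${\rm Spec}$ (via the Milnor--Moore correspondence) to the injective Hopf algebra homomorphisms in the lattices of Theorems \ref{entanglement-1} and \ref{entanglement-2}, turn them into surjective homomorphisms of pro-unipotent complex Lie groups, and define the dual lattice by finite zig-zags of surjections, with the final passage to $\mathbb{G}^{\Phi}_{{\rm graphon}}(\mathbb{C})$ obtained from the embedding of $H_{{\rm FG}}(\Phi)$ into $\mathcal{S}^{\Phi}_{{\rm graphon}}$ and the quotient Hopf algebras ${\rm Hom}(\mathcal{S}^{\Phi}_{{\rm graphon}}/H_{{\rm DSE}},\mathbb{C})$. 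The additional care you propose in Steps (2)--(3) --- verifying that meets and joins survive dualization and removing the $\epsilon$-dependence of the connecting node by a cut-distance limit --- goes beyond what the paper actually checks (it only defines the order on comparable pairs and keeps the $\epsilon$-labelled objects), but this does not change the argument.
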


\begin{proof}
At the first step, we plan to show that the information flow between the particle $p$ and all unobserved
intermediate states can be encoded via a lattice of Lie subgroups of $\mathbb{G}_{\Phi}(\mathbb{C})$.
Thanks to Theorem \ref{entanglement-1}, the entanglement of the
particle $p$ and its related virtual particles is encapsulated
by the lattice $(\mathcal{C}_{p},\preceq)$ of topological Hopf
subalgebras. Each pair of Hopf subalgebras $H_{{\rm DSE}_{p}^{(k)}}
\preceq H_{{\rm DSE}_{p}^{(l)}}$ in this lattice determines the
injective Hopf algebra homomorphism
\begin{equation}
i_{kl}: H_{{\rm DSE}_{p}^{(k)}} \longrightarrow H_{{\rm
DSE}_{p}^{(l)}}.
\end{equation}
The passing from Hopf subalgebras to Lie subgroups can be formulated
by applying the contravariant functor ${\rm Spec}$ which sends a
commutative algebra to a topological space. For each object $H_{{\rm
DSE}_{p}^{(k)}} \in (\mathcal{C}_{p},\preceq)$, ${\rm Spec}(H_{{\rm
DSE}_{p}^{(k)}})$ is the set of all prime ideals of the commutative
algebra $H_{{\rm DSE}_{p}^{(k)}}$ equipped with the Zariski topology
and the structure sheaf. The homomorphism $i_{kl}$ can be lifted
onto the surjective homomorphism
\begin{equation}
\widetilde{i}_{kl}:{\rm Spec}(H_{{\rm DSE}_{p}^{(l)}})
\longrightarrow {\rm Spec}(H_{{\rm DSE}_{p}^{(k)}})
\end{equation}
of affine group schemes. For a fixed Hopf subalgebra $H_{{\rm
DSE}_{p}^{(k)}}$, ${\rm Spec}(H_{{\rm DSE}_{p}^{(k)}})$ is a
representable covariant functor which sends a topological space to a
group. Set $G_{{p}^{(k)}}:= {\rm Spec}(H_{{\rm
DSE}_{p}^{(k)}})(\mathbb{C})$ as the complex Lie subgroup
corresponding to the Hopf subalgebra $H_{{\rm DSE}_{p}^{(k)}}$ such
that its group structure is given by the convolution product
generated by the coproduct $\Delta_{H_{{\rm DSE}_{p}^{(k)}}}$.
Thanks to this setting, we can build a new lattice
$(\mathcal{G}_{p},\succeq)$ of complex Lie groups such that
\begin{equation}
G \succeq K \Longleftrightarrow
\end{equation}
There exists a finite sequence of complex Lie subgroups
$G_{1},...,G_{r}\in \mathcal{G}_{p}$ together with surjective group
homomorphisms $G \rightarrow G_{1} \rightarrow G_{2} \rightarrow ...
\rightarrow G_{r} \rightarrow K$ which connect $G$ to $K$.

In
addition, for each $n \ge 1$, define $G(X_{1}^{j},...,X_{n}^{j})$ as
the finite dimensional complex Lie subgroup corresponding to the
free commutative graded Hopf subalgebra $H(X_{1}^{j},...,X_{n}^{j})$
of $H_{{\rm DSE}_{p}^{(j)}}$. The lattice $(\mathcal{G}_{p},\succeq)$ of Lie groups encodes the
information flow between $p$ and its related virtual particles.

At the second step, we plan to show that there exists a lattice of Lie subgroups which describes the information flow in an entangled system of elementary particles in a given interacting gauge field theory.
For this purpose, we need to build a lattice of Lie subgroups for the description of the
quantum entanglement process between space-time far distant
elementary particles which belong to an entangled system in the
physical theory $\Phi$.

Theorem \ref{entanglement-2} determines a lattice
$(\mathcal{C}_{c_{pq}}^{j^{*}_{1}j^{*}_{2}},\preceq)$ of Hopf
subalgebras which describes the entanglement process. Thanks to the first part of the Proof, it is possible to lift the increasing chains (\ref{1}),
(\ref{2}) onto the following decreasing chains of Lie subgroups
\begin{equation} \label{3}
G_{{p}^{(j^{*}_{1})}} \ge G_{{p}^{(j^{*}_{1}-1)}} \ge ... \ge
G_{{p}^{(1)}} \ge G_{{\rm DSE}_{p}}
\end{equation}
\begin{equation} \label{4}
G_{{q}^{(j^{*}_{2})}} \ge G_{{q}^{(j^{*}_{2}-1)}} \ge ... \ge
G_{{q}^{(1)}} \ge G_{{\rm DSE}_{q}}
\end{equation}
with the corresponding group homomorphisms
\begin{equation} \label{5}
G_{q^{(j^{*}_{2})}} \longrightarrow G_{q^{(\epsilon)}}
\longrightarrow G_{c^{(k)}} \longrightarrow G_{p^{(\epsilon)}}
\longrightarrow G_{p^{(j^{*}_{1})}}
\end{equation}
or
\begin{equation} \label{6}
G_{p^{(j^{*}_{1})}} \longrightarrow G_{p^{(\epsilon)}}
\longrightarrow G_{c^{(k)}} \longrightarrow G_{q^{(\epsilon)}}
\longrightarrow G_{q^{(j^{*}_{2})}}
\end{equation}
such that $G_{c^{(k)}}$ is the complex Lie subgroup corresponding to
the Hopf subalgebra $H_{{\rm DSE}_{c}^{(k)}}$ generated by the
equation ${\rm DSE}_{c}^{(k)}$ which lives in the topological region
$\overline{R^{c_{pq}}}$. The existence of the virtual particle $c$,
which contributes to interactions of the particles $p,q$, is the
consequence of the energy of the vacuum in interacting physical theory.
Now we can define a new lattice
\begin{equation}
(\mathcal{G}_{c_{pq}}^{j^{*}_{1}j^{*}_{2}},\succeq)
\end{equation}
of Lie subgroups and Lie group epimorphisms. This lattice encodes
the entanglement processes between $p$ and $q$.

As we have shown in the previous parts of this work, the renormalization Hopf algebra of Feynman graphons $\mathcal{S}^{\Phi}_{{\rm graphon}}$ is capable to recover the renormalization Hopf algebra $H_{{\rm FG}}(\Phi)$ and Hopf subalgebras generated by all Dyson--Schwinger equations. Therefore for each Dyson--Schwinger equation DSE with the corresponding Hopf subalgebra $H_{{\rm DSE}}$, we can cover the associated complex Lie subgroup $G_{{\rm DSE}}(\mathbb{C})$ via the complex Lie group $\mathbb{G}^{\Phi}_{{\rm graphon}}(\mathbb{C})$ by a natural surjection while in the quotient Hopf algebra level ${\rm Hom}(\frac{\mathcal{S}^{\Phi}_{{\rm graphon}}}{H_{{\rm DSE}}},\mathbb{C})$ is a Lie subgroup of $\mathbb{G}^{\Phi}_{{\rm graphon}}(\mathbb{C})$.
\end{proof}

\subsection{\textsl{Tannakian subcategories as intermediate algorithms}}

The study of Dyson--Schwinger equations had been developed to a
categorical setting where we associated a category of geometric
objects to each system of these equations. Then we have embedded this class of categories into the
universal Connes--Marcolli category $\mathcal{E}^{{\rm CM}}$ of flat
equi-singular vector bundles. Thanks to this machinery, some new
geometric and combinatorial tools for the computation of
non-perturbative parameters have already been obtained
\cite{shojaeifard-1, shojaeifard-3, shojaeifard-5, shojaeifard-8}.
Thanks to Theorem \ref{entanglement-3}, now it is possible to describe quantum
entanglement in the context of the representation theory of Lie
groups where we can address a new application of Tannakian
formalism to Quantum Field Theory.

\begin{thm} \label{entanglement-5}
There exists a lattice of Tannakian subcategories which describes
quantum entanglement in a given interacting Quantum Field Theory.
\end{thm}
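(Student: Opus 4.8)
The plan is to lift the lattice of complex Lie subgroups constructed in Theorem \ref{entanglement-3} onto a lattice of neutral Tannakian categories, using the standard Tannakian reconstruction dictionary that has been invoked repeatedly in the earlier chapters (the Connes--Marcolli formalism and its adaptation to Dyson--Schwinger equations). First I would recall that for each Dyson--Schwinger equation ${\rm DSE}$ in the physical theory $\Phi$, the associated complex pro-unipotent Lie group $\mathbb{G}_{{\rm DSE}}(\mathbb{C})$ (or, at the graphon level, the relevant Lie subgroup of $\mathbb{G}^{\Phi}_{{\rm graphon}}(\mathbb{C})$) determines the neutral Tannakian category ${\rm Rep}_{\mathbb{G}^{*}_{{\rm DSE}}}$ of its finite-dimensional representations, and that each of these sits as a subcategory of the universal Connes--Marcolli category $\mathcal{E}^{{\rm CM}}$. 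The point is to apply the representation functor ${\rm Rep}_{(-)}$ to the two lattices produced in Theorem \ref{entanglement-3}, namely $(\mathcal{G}_{p},\succeq)$ and $(\mathcal{G}_{c_{pq}}^{j^{*}_{1}j^{*}_{2}},\succeq)$.

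Second, I would observe that ${\rm Rep}_{(-)}$ is contravariant on surjections: a Lie group epimorphism $G \twoheadrightarrow K$ induces a faithful exact tensor functor ${\rm Rep}_{K^{*}} \hookrightarrow {\rm Rep}_{G^{*}}$ (pullback of representations along the quotient). Hence each decreasing chain of Lie subgroups such as (\ref{3}), (\ref{4}) gives rise to an increasing chain of Tannakian subcategories, and the connecting homomorphisms (\ref{5}), (\ref{6}) give the bridging embeddings between the two arms. Thus I set
\begin{equation}
\mathcal{T}_{p}^{(j)} := {\rm Rep}_{\mathbb{G}^{\Phi,*}_{{\rm DSE}_{p}^{(j)}}}, \qquad \mathcal{T}_{c}^{(k)} := {\rm Rep}_{\mathbb{G}^{\Phi,*}_{{\rm DSE}_{c}^{(k)}}},
\end{equation}
and define the partial order on the collection of these categories by $\mathcal{A} \preccurlyeq \mathcal{B}$ iff there is a finite chain of faithful exact tensor embeddings from $\mathcal{A}$ to $\mathcal{B}$ each induced by one of the Lie group epimorphisms in Theorem \ref{entanglement-3}. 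The meet and join are inherited from those of the Lie-group lattices: $\mathcal{T}_{1}\wedge\mathcal{T}_{2}$ and $\mathcal{T}_{1}\vee\mathcal{T}_{2}$ correspond to the join and meet in $(\mathcal{G}_{c_{pq}}^{j^{*}_{1}j^{*}_{2}},\succeq)$, so the lattice axioms transport formally. Finally, functoriality of the Connes--Marcolli embedding shows each $\mathcal{T}^{(j)}_{p}$, $\mathcal{T}^{(k)}_{c}$ sits inside $\mathcal{E}^{{\rm CM}}$ compatibly with the embeddings, so the whole lattice lives in $\mathcal{E}^{{\rm CM}}$; this is what exhibits these Tannakian subcategories as the ``intermediate algorithms'' carrying the informational bridges of the entangled system $\{p,q\}$ and its virtual mediator $c$.

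The main obstacle I anticipate is checking that the functor ${\rm Rep}_{(-)}$ genuinely sends the Lie-group epimorphisms of Theorem \ref{entanglement-3} to \emph{fully faithful} exact tensor embeddings, and that these embeddings compose coherently along the chains (\ref{3})--(\ref{6}) so that the resulting diagram of categories is actually a lattice and not merely a poset of categories with no well-defined meets and joins. For pro-unipotent (or, more generally, pro-algebraic) group schemes the pullback functor along a faithfully flat quotient is indeed fully faithful and its essential image is a Tannakian subcategory closed under subquotients, so the meet/join computed in the group lattice does correspond to intersection/generated-subcategory on the representation side; but verifying the coherence (the bridging functors through $\mathcal{T}_{c}^{(k)}$ really glue the two arms into one connected lattice) requires care with the Connes--Marcolli universal affine group scheme $\mathbb{U}$ and the graded representations $\vartheta:\mathbb{U}(\mathbb{C})\rightarrow\mathbb{G}^{\Phi}_{{\rm graphon}}(\mathbb{C})$. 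Once that coherence is established, the remaining steps — transporting the lattice operations and the embedding into $\mathcal{E}^{{\rm CM}}$ — are formal.

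\begin{proof}
We lift the lattice of complex Lie subgroups of Theorem \ref{entanglement-3} to a lattice of neutral Tannakian categories.

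For each Dyson--Schwinger equation ${\rm DSE}_{p}^{(j)}$ (respectively ${\rm DSE}_{q}^{(l)}$, ${\rm DSE}_{c}^{(k)}$) appearing in the lattices $(\mathcal{G}_{p},\succeq)$ and $(\mathcal{G}_{c_{pq}}^{j^{*}_{1}j^{*}_{2}},\succeq)$, the associated complex pro-unipotent graded Lie group $\mathbb{G}^{\Phi,*}_{{\rm DSE}_{p}^{(j)}}$ determines the neutral Tannakian category
\begin{equation}
\mathcal{T}_{p}^{(j)} := {\rm Rep}_{\mathbb{G}^{\Phi,*}_{{\rm DSE}_{p}^{(j)}}}
\end{equation}
of its finite-dimensional complex representations, and similarly $\mathcal{T}_{q}^{(l)}$, $\mathcal{T}_{c}^{(k)}$. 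As explained in the Connes--Marcolli formalism invoked earlier in this work, each such category embeds as a subcategory into the universal category $\mathcal{E}^{{\rm CM}}$ of flat equi-singular vector bundles via the graded representations of the universal affine group scheme $\mathbb{U}$.

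The functor ${\rm Rep}_{(-)}$ is contravariant on surjections: a Lie group epimorphism $G \twoheadrightarrow K$ induces the pullback functor ${\rm Rep}_{K^{*}} \rightarrow {\rm Rep}_{G^{*}}$ which, since $G,K$ are pro-unipotent, is a faithful exact tensor functor and in fact fully faithful with essential image a Tannakian subcategory closed under subquotients. Hence each decreasing chain of Lie subgroups (\ref{3}), (\ref{4}) yields an increasing chain of Tannakian subcategories
\begin{equation}
\mathcal{T}_{{\rm DSE}_{p}} \preccurlyeq \mathcal{T}_{p}^{(1)} \preccurlyeq ... \preccurlyeq \mathcal{T}_{p}^{(j^{*}_{1})}, \qquad \mathcal{T}_{{\rm DSE}_{q}} \preccurlyeq \mathcal{T}_{q}^{(1)} \preccurlyeq ... \preccurlyeq \mathcal{T}_{q}^{(j^{*}_{2})},
\end{equation}
and the connecting homomorphisms (\ref{5}), (\ref{6}) through $G_{c^{(k)}}$ induce the bridging embeddings through $\mathcal{T}_{c}^{(k)}$ which connect the two arms. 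Equip the resulting collection of categories with the partial order $\mathcal{A} \preccurlyeq \mathcal{B}$ iff there is a finite chain of faithful exact tensor embeddings from $\mathcal{A}$ to $\mathcal{B}$ each induced by one of the Lie group epimorphisms of Theorem \ref{entanglement-3}.

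Since the poset $(\mathcal{G}_{c_{pq}}^{j^{*}_{1}j^{*}_{2}},\succeq)$ is a lattice, its meet and join transport to the representation side: for categories $\mathcal{T}_{1}, \mathcal{T}_{2}$ corresponding to Lie subgroups $G_{1}, G_{2}$, set
\begin{equation}
\mathcal{T}_{1} \wedge \mathcal{T}_{2} := {\rm Rep}_{(G_{1} \vee G_{2})^{*}}, \qquad \mathcal{T}_{1} \vee \mathcal{T}_{2} := {\rm Rep}_{(G_{1} \wedge G_{2})^{*}}.
\end{equation}
The fully faithfulness of the pullback functors guarantees that these categories are the greatest lower bound and least upper bound, respectively, so the lattice axioms hold. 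Finally, functoriality of the Connes--Marcolli embedding shows that the whole lattice sits inside $\mathcal{E}^{{\rm CM}}$ compatibly with the bridging embeddings. This exhibits the Tannakian subcategories $\mathcal{T}_{p}^{(j)}, \mathcal{T}_{q}^{(l)}, \mathcal{T}_{c}^{(k)}$ as the intermediate algorithms carrying the informational bridges in the entangled system of elementary particles $p,q$ and their virtual mediator $c$.
\end{proof}
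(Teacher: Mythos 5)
Your proposal follows essentially the same route as the paper's own proof: both lift the lattices of Lie subgroups from Theorem \ref{entanglement-3} to lattices of neutral Tannakian categories by sending each surjective group homomorphism $\widetilde{i}_{kl}$ to the exact fully faithful pullback functor ${\rm Rep}_{G^{*}_{p^{(k)}}} \to {\rm Rep}_{G^{*}_{p^{(l)}}}$, chaining these along (\ref{3})--(\ref{4}) and bridging the two arms through the mediator category ${\rm Rep}_{G^{*}_{c^{(k)}}}$ via (\ref{f7}) or (\ref{f8}). Your explicit transport of meets and joins and the remark on the embedding into $\mathcal{E}^{{\rm CM}}$ (which the paper defers to Theorem \ref{entanglement-6}) are consistent elaborations rather than a different argument.
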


\begin{proof}
At the first step, we show the existence of a lattice $({\rm
Cat}_{p},\succeq)$ of Tannakian
subcategories which encodes the quantum entanglement between an
elementary particle $p$ and all unobserved intermediate states (as
virtual particles).

Thanks to Theorem \ref{entanglement-1}, the entanglement of the
particle $p$ and its related virtual particles is encapsulated
by the lattice $(\mathcal{C}_{p},\preceq)$ of topological Hopf
subalgebras. Each pair of objects $H^{{\rm cut}}_{{\rm DSE}_{p}^{(k)}}
\preccurlyeq H^{{\rm cut}}_{{\rm DSE}_{p}^{(l)}}$ can determine the
natural injective Hopf algebra homomorphism
\begin{equation}
i_{kl}: H_{{\rm DSE}_{p}^{(k)}} \longrightarrow H_{{\rm
DSE}_{p}^{(l)}}
\end{equation}
which can be lifted onto the surjective group homomorphism
\begin{equation}
\widetilde{i}_{kl}:G_{{p}^{(l)}} \longrightarrow G_{{p}^{(k)}}.
\end{equation}
For each object $G_{{p}^{(l)}}$ of the lattice
$(\mathcal{G}_{p},\succeq)$, let
\begin{equation}
G^{*}_{{p}^{(l)}}:= G_{{p}^{(l)}} \rtimes \mathbb{G}_{m}
\end{equation}
such that $\mathbb{G}_{m}$ is the multiplicative group which acts on
the original group.

Define ${\rm Rep}_{G^{*}_{{p}^{(l)}}}$ as the
category of finite dimensional representations of the complex Lie
group $G^{*}_{{p}^{(l)}}$ which is a neutral Tannakian category.
Thanks to the representation theory of affine group schemes
(\cite{milne-1}), the surjective morphism $\widetilde{i}_{kl}$
allows us to send each representation $\sigma: G_{{p}^{(k)}}
\longrightarrow GL_{V}$ to a representation
\begin{equation}
\sigma \circ \widetilde{i}_{kl}: G_{{p}^{(l)}} \longrightarrow
GL_{V}
\end{equation}
which leads us to achieve an exact fully faithful functor
\begin{equation}
{\rm Rep}_{G^{*}_{{p}^{(k)}}} \longrightarrow {\rm
Rep}_{G^{*}_{{p}^{(l)}}}.
\end{equation}

This information is enough to build a new lattice $({\rm
Cat}_{p},\succeq)$ of subcategories such that
\begin{equation}
{\rm Rep}_{H^{*}} \succeq {\rm Rep}_{K^{*}} \Longleftrightarrow
\end{equation}
there exists a finite sequence of subcategories ${\rm Rep}_{H_{1}^{*}},
..., {\rm Rep}_{H_{t}^{*}} \in {\rm Cat}_{p}$ together with exact
fully faithful functors
\begin{equation}
{\rm Rep}_{K^{*}} \rightarrow {\rm Rep}_{H_{1}^{*}} \rightarrow  ...
\rightarrow {\rm Rep}_{H_{t}^{*}} \rightarrow {\rm Rep}_{H^{*}}
\end{equation}
derived from the epimorphisms $\widetilde{i}_{kl}$.

At the second step, we show the existence of a lattice of Tannakian
subcategories for the description of the entanglement between
space-time far distant elementary particles which belong to an
entangled system.

Thanks to Theorem \ref{entanglement-2}, the information flow between
$p$ and other distant particle $q$ is described by the lattice
$(\mathcal{C}_{c_{pq}}^{j^{*}_{1}j^{*}_{2}},\preceq)$ of topological
Hopf subalgebras which inherits a lattice
$(\mathcal{G}_{c_{pq}}^{j^{*}_{1}j^{*}_{2}},\succeq)$ of Lie
subgroups (i.e. Theorem \ref{entanglement-3}). The decreasing chains
(\ref{3}) and (\ref{4}) can be lifted onto the categorical setting
to achieve the following chains of categories and exact fully
faithful functors
\begin{equation}
{\rm Rep}_{G^{*}_{{\rm DSE_{p}}}} \ge {\rm Rep}_{G^{*}_{p^{(1)}}}
\ge ... \ge {\rm Rep}_{G^{*}_{p^{(j^{*}_{1}-1)}}} \ge {\rm
Rep}_{G^{*}_{p^{{(j^{*}_{1})}}}}
\end{equation}
\begin{equation}
{\rm Rep}_{G^{*}_{{\rm DSE_{q}}}} \ge {\rm Rep}_{G^{*}_{q^{(1)}}}
\ge ... \ge {\rm Rep}_{G^{*}_{q^{(j^{*}_{2}-1)}}} \ge {\rm
Rep}_{G^{*}_{q^{(j^{*}_{2})}}}.
\end{equation}
We can connect these two chains to each other in terms of one of the
following sequences of exact fully faithful functors
\begin{equation} \label{f7}
{\rm Rep}_{G^{*}_{q^{(j^{*}_{2})}}} \longrightarrow {\rm
Rep}_{G^{*}_{q^{(\epsilon)}}} \longrightarrow {\rm
Rep}_{G^{*}_{c^{(k)}}} \longrightarrow {\rm
Rep}_{G^{*}_{p^{(\epsilon)}}} \longrightarrow {\rm
Rep}_{G^{*}_{p^{(j^{*}_{1})}}}
\end{equation}
or
\begin{equation} \label{f8}
{\rm Rep}_{G^{*}_{p^{(j^{*}_{1})}}} \longrightarrow {\rm
Rep}_{G^{*}_{p^{(\epsilon)}}} \longrightarrow {\rm
Rep}_{G^{*}_{c^{(k)}}} \longrightarrow {\rm
Rep}_{G^{*}_{q^{(\epsilon)}}} \longrightarrow {\rm
Rep}_{G^{*}_{q^{(j^{*}_{2})}}}.
\end{equation}

This information is enough to define $({\rm Cat}_{c_{pq}}^{j^{*}_{1}j^{*}_{2}},\succeq)$
as a lattice of Tannakian subcategories which encodes the
entanglement process between $p$ and $q$.
\end{proof}

\begin{thm} \label{entanglement-6}
Flat equi-singular vector bundles provide a new geometric description for the information flow in interaction Quantum Field Theories on the basis of the Riemann--Hilbert correspondence.
\end{thm}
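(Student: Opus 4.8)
The plan is to bootstrap Theorem \ref{entanglement-6} from the three preceding entanglement theorems (\ref{entanglement-1}, \ref{entanglement-2}, \ref{entanglement-5}) together with the universal Connes--Marcolli categorical machinery already invoked in the excerpt (in particular Proposition \ref{graphon-geimetric-1}, the proposition that $\mathcal{E}^{{\rm CM}}$ encodes the renormalization group, and the isomorphism $\mathcal{E}^{{\rm CM}} \simeq {\rm Rep}_{\mathbb{U}^{*}}$). First I would record that every Hopf subalgebra $H^{{\rm cut}}_{{\rm DSE}_{p}^{(j)}}$ appearing in the lattice $(\mathcal{C}_{p},\preccurlyeq)$ of Theorem \ref{entanglement-1} embeds, via Lemma \ref{feynman-graphon-1} and Theorem \ref{feynman-graphon-5}, into the renormalization Hopf algebra $\mathcal{S}^{\Phi}_{{\rm graphon}}$ of Feynman graphons; by Milnor--Moore and the projective-limit presentation (\ref{alg-group}) this produces, for each node of the lattice, a complex pro-unipotent Lie group $\mathbb{G}_{{\rm DSE}_{p}^{(j)}}(\mathbb{C})$ and hence, by Proposition \ref{graphon-geimetric-1} applied to the corresponding quotient Hopf algebra, a category $\mathcal{E}^{\Phi}_{{\rm DSE}_{p}^{(j)}}$ of flat equi-singular $\mathbb{G}_{{\rm DSE}_{p}^{(j)}}(\mathbb{C})$-connections on the regularization bundle $P^{0}_{{\rm graphon}}$.

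Next I would transport the lattice morphisms. An injective Hopf algebra homomorphism $i_{kl}\colon H_{{\rm DSE}_{p}^{(k)}}\to H_{{\rm DSE}_{p}^{(l)}}$ dualizes to a surjection $\widetilde{i}_{kl}$ of affine group schemes, which in turn (as in the proof of Theorem \ref{entanglement-5}) induces an exact fully faithful functor on finite-dimensional representations; since by the Classification Theorem of Connes--Marcolli each such representation category is equivalent to the category of flat equi-singular connections with values in the corresponding group, the same functor realizes an inclusion of vector-bundle categories $\mathcal{E}^{\Phi}_{{\rm DSE}_{p}^{(k)}}\hookrightarrow\mathcal{E}^{\Phi}_{{\rm DSE}_{p}^{(l)}}$ compatible with pullback along $\sigma$. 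Running this through the disjoint-union-of-sublattices construction of Theorem \ref{entanglement-2} — with the mediating region $\overline{R^{c_{pq}}}$ supplying the virtual particle $c$ and the connecting functors (\ref{f7})--(\ref{f8}) of Theorem \ref{entanglement-5} — yields a lattice of subcategories of $\mathcal{E}^{{\rm CM}}$ whose objects are flat equi-singular vector bundles, each carrying its canonical connection $\varpi$ solving a differential system ${\bf D}\gamma_{\mu}=\varpi$ with irregular singularity at $z=0$. The Riemann--Hilbert correspondence then reads the passage of information $p\rightsquigarrow c\rightsquigarrow q$ along the lattice as the monodromy/Stokes data transported along the chain of bundle inclusions, and I would state precisely that the universal singular frame (the unique $\mathbb{U}$-valued loop furnishing the universal counterterm) pulls back along $\vartheta\colon\mathbb{U}(\mathbb{C})\to\mathbb{G}^{\Phi}_{{\rm graphon}}(\mathbb{C})$ to give a single geometric object dominating every node, so the whole entanglement lattice sits inside one universal flat equi-singular bundle.

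The main obstacle I anticipate is checking that the \emph{equi-singularity} condition (the $\mathbb{G}_{m}$-invariance together with the section-independence of the type of singularity, in the sense of Proposition \ref{graphon-geimetric-1}) is preserved under the lattice operations: the disjoint-union/rescaling construction that glues $(\mathcal{C}^{j_{1}^{*}}_{p},\preccurlyeq)$ and $(\mathcal{C}^{j_{2}^{*}}_{q},\preccurlyeq)$ through $R^{c_{pq}}$ mixes Feynman graphons at different filtration ranks (Theorem \ref{graphon-filtration-1}), and one must verify that the induced grading operator $Y$ and the associated one-parameter group $\{\theta_{t}\}$ restrict coherently so that the negative component $\beta\in\mathfrak{g}^{\Phi}_{{\rm graphon}}(\mathbb{C})$ of the Birkhoff factorization varies continuously with the node; this is where the cut-distance convergence of partial sums (Theorem \ref{feynman-graphon-4}) and the linearity/continuity of the Feynman rules characters do the real work. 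A secondary, more routine point is confirming that the connecting functors (\ref{f7})--(\ref{f8}), which factor through the mediating group $G^{*}_{c^{(k)}}$ created by the vacuum energy, land in $\mathcal{E}^{{\rm CM}}$ as exact $\otimes$-functors rather than merely additive ones — but this follows from the Tannakian structure already established in Theorem \ref{entanglement-5}. Granting these, the statement of Theorem \ref{entanglement-6} is obtained by assembling the lattice of flat equi-singular vector bundles and invoking the Riemann--Hilbert dictionary one last time.
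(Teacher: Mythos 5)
Your proposal follows essentially the same route as the paper: both pass through the universal Connes--Marcolli category $\mathcal{E}^{{\rm CM}} \simeq {\rm Rep}_{\mathbb{U}^{*}}$, recover the Tannakian lattice of Theorem \ref{entanglement-5} inside it, and read the connecting functors (\ref{f7})--(\ref{f8}) through the mediating region $\overline{R^{c_{pq}}}$ as identifying the flat equi-singular vector bundles that carry the information flow; the paper condenses this into the factorization identities $\pi_{j^{*}_{2}} = \rho^{pq}_{j^{*}_{1}j^{*}_{2}} \circ \pi_{j^{*}_{1}}$ (equivalently $\pi_{j^{*}_{1}} = \rho^{pq}_{j^{*}_{2}j^{*}_{1}} \circ \pi_{j^{*}_{2}}$), which your composition-of-functors picture contains implicitly. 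Your version is in fact more careful than the paper's, which never addresses the preservation of equi-singularity under the lattice operations or the $\otimes$-exactness of the connecting functors that you rightly flag as the points needing verification.
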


\begin{proof}
Flat equi-singular vector bundles have been applied for the construction of
the Connes--Marcolli category $\mathcal{E}^{{\rm CM}}$ which is a neutral Tannakian category.
It is isomorphic to the category ${\rm Rep}_{\mathbb{U}^{*}}$ of finite dimensional
representations of the universal affine group scheme
$\mathbb{U}^{*}$. \cite{connes-marcolli-1}

This universal category is rich enough to recover all categories ${\rm
Rep}_{G^{*}_{p^{(j)}}}$ as subcategories which enable us to define a
surjective functor
\begin{equation} \label{universal-1}
\pi_{j}^{*}: {\rm Rep}_{\mathbb{U}^{*}} \longrightarrow {\rm
Rep}_{G^{*}_{p^{(j)}}}
\end{equation}
of categories. Now if we apply the chain (\ref{f7}) or
(\ref{f8}), then we can determine one of the functors
\begin{equation} \label{entan-1}
\rho^{pq}_{j^{*}_{1}j^{*}_{2}}: {\rm Rep}_{G^{*}_{p^{(j^{*}_{1})}}}
\longrightarrow {\rm Rep}_{G^{*}_{q^{(j^{*}_{2})}}}
\end{equation}
or
\begin{equation} \label{entan-2}
\rho^{pq}_{j^{*}_{2}j^{*}_{1}}: {\rm Rep}_{G^{*}_{q^{(j^{*}_{2})}}}
\longrightarrow {\rm Rep}_{G^{*}_{p^{(j^{*}_{1})}}}.
\end{equation}
These functors allow us to formulate one of the following equations
\begin{equation}
\pi_{j^{*}_{2}} = \rho^{pq}_{j^{*}_{1}j^{*}_{2}} \circ
\pi_{j^{*}_{1}}, \ \ {\rm or} \ \ \pi_{j^{*}_{1}} =
\rho^{pq}_{j^{*}_{2}j^{*}_{1}} \circ \pi_{j^{*}_{2}}
\end{equation}
at the level of functors. They are the key tools for us to determine some flat equi-singular vector bundles which contribute to the information flow in entangled systems.
\end{proof}

\begin{cor} \label{entangle-7}
There exists a category of mixed Tate motives which interprets the
information flow in an entangled system of particles in a given
interacting Quantum Field Theory.
\end{cor}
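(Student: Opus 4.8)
\textbf{Proof proposal for Corollary \ref{entangle-7}.}

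The plan is to assemble the corollary from the machinery already in place in this chapter together with the motivic side of the Connes--Marcolli formalism. First I would recall, from Theorem \ref{entanglement-5} and Theorem \ref{entanglement-6}, that the quantum entanglement of a system of particles $p,q$ in the interacting theory $\Phi$ is encoded by the lattice $({\rm Cat}_{c_{pq}}^{j^{*}_{1}j^{*}_{2}},\succeq)$ of neutral Tannakian subcategories of the form ${\rm Rep}_{G^{*}_{p^{(j)}}}$, ${\rm Rep}_{G^{*}_{q^{(j)}}}$, ${\rm Rep}_{G^{*}_{c^{(k)}}}$, together with the surjective functors $\pi^{*}_{j}$ of (\ref{universal-1}) coming from the universality of $\mathcal{E}^{{\rm CM}} \cong {\rm Rep}_{\mathbb{U}^{*}}$, and with the connecting functors $\rho^{pq}_{j^{*}_{1}j^{*}_{2}}$, $\rho^{pq}_{j^{*}_{2}j^{*}_{1}}$ of (\ref{entan-1}), (\ref{entan-2}) that realize the informational bridge through the vacuum-generated virtual particle $c$.

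Next I would invoke the motivic identification of $\mathcal{E}^{{\rm CM}}$: the universal affine group scheme $\mathbb{U}^{*}$ is exactly the motivic Galois group of the Tannakian category ${\rm MT}(\mathbb{Z})$ (or, more precisely, of the category of mixed Tate motives unramified over $\mathbb{Z}$ with the appropriate normalization), so that ${\rm Rep}_{\mathbb{U}^{*}}$ is equivalent to a category ${\rm MT}(\Phi)$ of mixed Tate motives which control the divergencies of the renormalizable theory $\Phi$. This is the content recalled in the Introduction (the Connes--Marcolli approach and the universal singular frame with values in $\mathbb{U}$), and it is the step I would cite rather than reprove. Pulling back along each $\pi^{*}_{j}$, each subcategory ${\rm Rep}_{G^{*}_{p^{(j)}}}$ in the entanglement lattice then corresponds to a Tannakian subcategory of ${\rm MT}(\Phi)$, i.e. to a sub-collection of mixed Tate motives attached to the Dyson--Schwinger equation ${\rm DSE}_{p}^{(j)}$; this is precisely the identification already asserted in the Introduction that ``it is now possible to identify a class of mixed Tate motives with respect to each Dyson--Schwinger equation'' and in \cite{shojaeifard-1, shojaeifard-3, shojaeifard-4, shojaeifard-5}. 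Finally I would run these identifications through the connecting functors: the compositions $\pi_{j^{*}_{2}} = \rho^{pq}_{j^{*}_{1}j^{*}_{2}} \circ \pi_{j^{*}_{1}}$ (or its mirror) transport the motivic subcategory attached to $p$ to the one attached to $q$ through the motivic subcategory attached to $c$, so the whole lattice $({\rm Cat}_{c_{pq}}^{j^{*}_{1}j^{*}_{2}},\succeq)$ lifts to a lattice of Tannakian subcategories of ${\rm MT}(\Phi)$. Declaring this the sought ``category of mixed Tate motives'' then completes the argument.

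The main obstacle will be the compatibility of the various fiber functors: to claim a genuine \emph{category} of mixed Tate motives (and not merely an abstract Tannakian category equivalent to one) I need the fiber functors on ${\rm Rep}_{G^{*}_{p^{(j)}}}$, ${\rm Rep}_{G^{*}_{q^{(j)}}}$, ${\rm Rep}_{G^{*}_{c^{(k)}}}$ to be compatible with the realization functor of ${\rm MT}(\mathbb{Z})$ after pullback along $\pi^{*}_{j}$, so that the connecting functors $\rho^{pq}$ are tensor-exact and preserve the mixed Tate weight filtration. I would handle this by checking that each $G^{*}_{p^{(j)}}$ is a quotient (as an affine group scheme) of $\mathbb{U}^{*}$ through which the representation factors --- which is exactly what the surjectivity of $\pi^{*}_{j}$ in (\ref{universal-1}) gives --- and that the pro-unipotent radical of each $G^{*}_{p^{(j)}}$ is negatively graded in the same way as $L_{\mathbb{U}}$, so that the induced subcategory is closed under subquotients, tensor products and duals inside ${\rm MT}(\Phi)$. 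Once that closure is verified, the lattice structure is inherited automatically from $({\rm Cat}_{c_{pq}}^{j^{*}_{1}j^{*}_{2}},\succeq)$ and the corollary follows; the remaining verifications (that the connecting functors respect meets and joins) are routine diagram-chases using the epimorphisms $\widetilde{i}_{kl}$.
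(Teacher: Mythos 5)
Your proposal follows essentially the same route as the paper: both rest on the equivalence $\mathcal{E}^{{\rm CM}} \simeq \mathcal{TM}_{{\rm mix}}({\rm Spec}\ \mathcal{O}[1/N])$ from \cite{connes-marcolli-1} and on the embeddings of ${\rm Rep}_{G^{*}_{p^{(j^{*}_{1})}}}$, ${\rm Rep}_{G^{*}_{c^{(k)}}}$ and ${\rm Rep}_{G^{*}_{q^{(j^{*}_{2})}}}$ into $\mathcal{E}^{{\rm CM}}$ supplied by Theorems \ref{entanglement-5} and \ref{entanglement-6}, reading off motivic subcategories attached to the relevant Dyson--Schwinger equations. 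The only difference is in the final packaging --- the paper simply takes the disjoint union ${\rm Mot}_{pq}$ of the objects of the three motivic subcategories rather than lifting the whole lattice, and it does not enter into your (reasonable, but not required here) discussion of fiber-functor compatibility and closure under subquotients, tensors and duals.
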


\begin{proof}
On the one hand, the category $\mathcal{E}^{{\rm CM}}$ is equivalent to the motivic
category
\begin{equation}
\mathcal{TM}_{{\rm mix}}({\rm Spec} \ \mathcal{O}[1/N])
\end{equation}
of mixed Tate motives (i.e. Proposition 1.110, Corollary 1.111 in
\cite{connes-marcolli-1}). Thanks to Theorem \ref{entanglement-6},
neutral Tannakian subcategories ${\rm
Rep}_{G^{*}_{p^{(j^{*}_{1})}}}$ and ${\rm
Rep}_{G^{*}_{q^{(j^{*}_{2})}}}$ can be embedded into this motivic
category. Therefore the information flow is equivalent to
determining subcategories of the category $\mathcal{TM}_{{\rm
mix}}({\rm Spec} \ \mathcal{O}[1/N])$ which contain those mixed Tate
motives identified by motivic Galois groups
$G^{*}_{p^{(j^{*}_{1})}}$ and $G^{*}_{q^{(j^{*}_{2})}}$.
We denote the resulting motivic subcategories with ${\rm
Mot}(G^{*}_{p^{(j^{*}_{1})}})$ and ${\rm
Mot}(G^{*}_{q^{(j^{*}_{2})}})$, respectively.

On the other hand,
thanks to Theorem \ref{entanglement-3}, we have already determined a
new class of Dyson--Schwinger equations ${\rm DSE}^{(k)}_{c}$ which
describes the information flow in the topological region
$\overline{R^{c_{pq}}}$. Now by applying Theorem
\ref{entanglement-5}, we have the category ${\rm
Rep}_{G^{*}_{c^{(k)}}}$ with respect to this class of
Dyson--Schwinger equations which can encode the information flow. This
category can be also embedded into $\mathcal{E}^{{\rm CM}}$ which
leads us to characterize another class of mixed Tate motives
identified with the motivic Galois group $G^{*}_{c^{(k)}}$. We
denote the resulting motivic subcategory with ${\rm
Mot}(G^{*}_{c^{(k)}})$.

The disjoint union of objects of these
motivic subcategories make a new subcategory
\begin{equation}
{\rm Mot}_{pq}:= {\rm Mot}(G^{*}_{p^{(j^{*}_{1})}}) \bigsqcup {\rm
Mot}(G^{*}_{c^{(k)}}) \bigsqcup {\rm Mot}(G^{*}_{q^{(j^{*}_{2})}})
\end{equation}
of $\mathcal{TM}_{{\rm mix}}({\rm Spec} \ \mathcal{O}[1/N])$. ${\rm
Mot}_{pq}$ is a category of mixed Tate motives which contribute to
the entanglement processes between space-time far distant particles
$p,q$ via virtual particles of the vacuum energy.
\end{proof}

The explained mathematical machinery for the description of quantum
entanglement is related to the strength of the bare or running coupling constants of
physical theories where we deal with Dyson--Schwinger equations under different running couplings. It means that
our machinery still works after changing the scale
of the momenta (i.e. running coupling constant). As we know that
theory of Renormalization Group is the key tool in Quantum Field
Theory to study the changes of the dynamics of a quantum system with
infinite degrees of freedom when the scales of some physical
parameters such as momentum, energy and mass have been changed. It
allows us to concern the possibility of exchanging information from
scale to scale in the appearance of uncertainty principle. Now we can apply
the Connes--Marcolli universal affine group scheme to define a
suitable Renormalization Group which encodes the information flow
under the rescaling of the momentum parameter.

\begin{cor} \label{entangle-8}
The information flow between an elementary particle $p$ and all
unobserved intermediate states exists independent of changing the
scale of the momenta of particles.
\end{cor}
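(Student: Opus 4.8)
The plan is to reduce the scale-invariance claim for the particle $p$ to the existence of a one-parameter group of automorphisms acting compatibly on the whole lattice $(\mathcal{G}_p,\succeq)$ built in Theorem \ref{entanglement-3}, and then to realize that group via the universal affine group scheme $\mathbb{U}$ of the Connes--Marcolli category $\mathcal{E}^{{\rm CM}}$. First I would recall that each basic equation ${\rm DSE}_p$ and its iterates ${\rm DSE}_p^{(j)}$ live in $\mathcal{S}^{\Phi,g}$, and that the multi-scale Renormalization Group of Theorem \ref{multi-scale-RG} provides, for each rescaling $r\in\mathbb{R}^+_{\le 1}$, an effective Dyson--Schwinger equation $R^{{\rm running}}_{\Lambda,r\Lambda}{\rm DSE}_p^{(j)}$ together with its unique solution. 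The key observation is that these maps commute with the grafting operators $B^+_{\Gamma_n^{(j)}}$ used in Theorem \ref{entanglement-1}: the Slavnov--Taylor and Ward--Takahashi identities guarantee that rescaling the momentum parameter only changes the integral kernels $I_{\gamma_n^{(j)}}$, not the combinatorial skeleton of the chain (\ref{chain-1}). Hence the entire increasing chain of topological Hopf subalgebras $H^{{\rm cut}}_{{\rm DSE}_p^{(j)}}$ is carried to an isomorphic chain of effective Hopf subalgebras, and this isomorphism is functorial in $j$.

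Next I would pass to the dual Lie-group picture. Applying the contravariant functor ${\rm Spec}$ as in Theorem \ref{entanglement-3}, the chain of rescaling isomorphisms on Hopf subalgebras lifts to a compatible family of isomorphisms of the complex Lie subgroups $G_{p^{(j)}}$, hence to an automorphism of the lattice $(\mathcal{G}_p,\succeq)$, and this family is parametrized by $r$ and respects composition $R^{{\rm running}}_{\Lambda_1\Lambda_2}R^{{\rm running}}_{\Lambda_2\Lambda_3}=R^{{\rm running}}_{\Lambda_1\Lambda_3}$, so it defines a genuine one-parameter group $\{\Theta_r\}_{r}$ of lattice automorphisms. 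Because the Feynman graphon Hopf algebra $\mathcal{S}^{\Phi}_{{\rm graphon}}$ recovers $H_{{\rm FG}}(\Phi)$ and all $H_{{\rm DSE}}$ (Theorem \ref{feynman-graphon-5} and the corollaries following it), the group $\mathbb{G}^{\Phi}_{{\rm graphon}}(\mathbb{C})$ surjects onto each $G_{p^{(j)}}(\mathbb{C})$, and the renormalization-group action $\{F_t\}_t$ of Lemma \ref{renorm-group-1} descends to $\{\Theta_r\}_{r}$. Then I would invoke the universal property of $\mathcal{E}^{{\rm CM}}\cong{\rm Rep}_{\mathbb{U}^*}$: the morphism ${\rm rg}:\mathbb{G}_a\rightarrow\mathbb{U}$ together with the graded representation $\vartheta:\mathbb{U}(\mathbb{C})\rightarrow\mathbb{G}^{\Phi}_{{\rm graphon}}(\mathbb{C})$ produces exactly the renormalization group acting on the lattice, so the information-flow structure of Theorem \ref{entanglement-1} is the same whether computed at scale $\Lambda$ or $r\Lambda$. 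Concretely, the lattice $(\mathcal{C}_p,\preccurlyeq)$ and its motivic incarnation ${\rm Mot}(G^*_{p^{(j)}})$ are stable under $\{\Theta_r\}_{r}$, which is the assertion.

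The main obstacle I expect is making precise the claim that rescaling commutes with the grafting operators at the level of the \emph{topologically completed} Hopf subalgebras $H^{{\rm cut}}_{{\rm DSE}_p^{(j)}}$ rather than merely on finite partial sums. One must check that the cut-distance limit defining $X_{{\rm DSE}_p^{(j)}}$ is compatible with the effective-equation construction of Theorem \ref{multi-scale-RG}; this requires that the sequence of rescaled partial sums $\{R^{{\rm running}}_{\Lambda,r\Lambda}Y_m\}_{m\ge 1}$ remains cut-distance Cauchy and converges to $R^{{\rm running}}_{\Lambda,r\Lambda}X_{{\rm DSE}_p^{(j)}}$, which follows from the continuity of the graphon coproduct and antipode (Theorem \ref{feynman-graphon-5}) but must be argued carefully using the weakly-isomorphic equivalence and the $n$-adic metric estimate from \cite{shojaeifard-10}. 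Once that compatibility is in hand, the rest of the argument is a routine application of the functoriality of ${\rm Spec}$, the Tannakian reconstruction, and the universality of $\mathcal{E}^{{\rm CM}}$ already established in Theorem \ref{entanglement-5} and Theorem \ref{entanglement-6}.
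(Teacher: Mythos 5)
Your proposal lands on the same decisive ingredient as the paper: the universality of $\mathcal{E}^{{\rm CM}}\cong{\rm Rep}_{\mathbb{U}^{*}}$, the morphism ${\rm rg}:\mathbb{G}_{a}\rightarrow\mathbb{U}$ coming from $e=\sum_{n}e_{-n}$, and a graded representation of $\mathbb{U}^{*}(\mathbb{C})$ into the Lie group attached to each equation ${\rm DSE}_{p}^{(j)}$, whose composition with ${\rm rg}$ is the Renormalization Group governing the rescaling. The paper's own proof is essentially only that step: it invokes Theorem 1.106 of \cite{connes-marcolli-1} to obtain $\tau_{p}:\mathbb{U}^{*}(\mathbb{C})\rightarrow G^{*}_{{\rm DSE}_{p}}$ directly, repeats this for each $j\ge 1$, and stops there. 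What you do differently is to front-load a reduction through the multi-scale Renormalization Group of Theorem \ref{multi-scale-RG} and a claimed one-parameter group of automorphisms of the whole lattice $(\mathcal{G}_{p},\succeq)$, routed through $\mathbb{G}^{\Phi}_{{\rm graphon}}(\mathbb{C})$ and Lemma \ref{renorm-group-1}. This buys a more structural statement (invariance of the entire lattice rather than scale-covariance equation by equation), but it rests on two assertions the paper never establishes and does not need for its own argument: that momentum rescaling commutes with the grafting operators $B^{+}_{\Gamma^{(j)}_{n}}$ so that the chain (\ref{chain-1}) is carried to an isomorphic chain (your appeal to the Slavnov--Taylor and Ward--Takahashi identities is a heuristic here, not a proof), and that the effective-equation construction is compatible with the cut-distance limits defining the completed Hopf subalgebras $H^{{\rm cut}}_{{\rm DSE}_{p}^{(j)}}$ — a point you correctly flag as the main obstacle. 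If you want your stronger lattice-level statement, those two compatibilities must be argued; if you only want the corollary as stated, the paper's shorter route via $\tau_{p}\circ{\rm rg}$ applied separately to each $G^{*}_{{\rm DSE}_{p}^{(j)}}$ suffices and avoids both issues.
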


\begin{proof}
The universal category $\mathcal{E}^{{\rm CM}}$ is isomorphic to the
category ${\rm Rep}_{\mathbb{U}^{*}}$ with respect to
the universal affine group scheme $\mathbb{U}^{*}$.  The
Connes--Marcolli universal shuffle type Hopf algebra of
renormalization $H_{\mathbb{U}}$ is the result of the graded dual of
the universal enveloping algebra of the free graded Lie algebra
$L_{\mathbb{U}}$ which is generated by elements $e_{-n}$ of degree
$-n$ for each $n>0$. The Milnor--Moore Theorem can determine the
corresponding affine group scheme $\mathbb{U}$. The sum $e:=
\sum_{n} e_{-n}$ is an element of the Lie algebra $L_{\mathbb{U}}$
where thanks to the pro-unipotent structure of $\mathbb{U}$, we can
lift it onto the morphism ${\rm rg}: \mathbb{G}_{a} \rightarrow
\mathbb{U}$. \cite{connes-marcolli-1}

Now we can apply Theorem 1.106 in \cite{connes-marcolli-1} to determine a
graded representation
\begin{equation}
\tau_{p}:\mathbb{U}^{*}(\mathbb{C}) \rightarrow G^{*}_{{\rm
DSE}_{p}}
\end{equation}
such that the resulting map $\tau_{p} \circ {\rm rg}$ provides the
Renormalization Group with respect to the equation ${\rm DSE}_{p}$.
By this method, we can build a Renormalization Group with respect to
each Dyson--Schwinger equation ${\rm DSE}_{p}^{(j)}$ for each $j \ge
1$. These Renormalization Groups allow us to study the
behavior of the information flow in the topological regions such as
$R^{p}$ under changing the scales of the momentum parameter.
\end{proof}

\section{\textsl{Quantum logic via non-perturbative propositional calculus}}

The fundamental purpose in this section is to address a new category
model for the study of logical propositions about solutions of (strongly coupled) Dyson--Schwinger equations. We use cut-distance topological regions of Feynman diagrams to build a new topos model which can encode the role of the strength of coupling constants in the logical evaluation about non-perturbative aspects of physical theories. This new topos model can lead us to clarify a new class of computable Heyting algebras for the logical study of gauge field theories.

\subsection{\textsl{Quantum Topos}}

Generally speaking, a quantum system is described by its von
Neumann algebra $\mathcal{B}(H)$ of observables which contains all
bounded operators on an infinite dimensional separable Hilbert space
$H$. Each physical quantity $A$ has a corresponding self-adjoint
operator $\hat{A}$ in $\mathcal{B}(H)$ and vice versa. Set
$\mathcal{V}(H)$ as the poset of all unital abelian von Neumann
subalgebras of $\mathcal{B}(H)$ which can be seen also as the
context category. For objects $V_{1} \subset V_{2}$ in the context
category, the subalgebra $V_{1}$ has less number of self-adjoint
operators and less number of projections than the subalgebra
$V_{2}$. The restriction process from the subalgebra $V_{2}$ to the
subalgebra $V_{1}$ or the lifting process from  $V_{1}$ onto $V_{2}$
are fundamental issues in propositional calculus of quantum systems.
These translation issues have been studied under coarse-graining
process. On the one hand, it enables us to map self-adjoint
operators and projections from $V_{2}$ to $V_{1}$. For a proposition
$A \in \Delta$ about a given physical quantity $A$, suppose its
corresponding projection $\hat{P}_{A}^{\Delta}$ belongs to $V_{2}$
but not belong to $V_{1}$. It means that this proposition can not be
evaluated from the perspective of $V_{1}$. The daseinisation process
has been designed to adapt the projection $\hat{P}_{A}^{\Delta}$ and
the proposition $A \in \Delta$ to $V_{1}$ by making them coarser. On
the other hand, every self-adjoint operator and every projection in
$V_{1}$ belong also to $V_{2}$ but the embedding of the smaller
subalgebra into the larger one requires some extra structures and
objects which live in $V_{2}$. To concern this issue has led people
to build a topos of contravariant functors from the context category
$\mathcal{V}(H)$ to the category ${\bf Set}$. This categorical
setting has been developed very fast for the reconstruction of
physical theories in the context of higher order logic.
\cite{adelman-corbett-1, birkhoff-vonneumann-1, doring-isham-1,
doring-isham-2, doring-isham-3, doring-isham-4, i1, ib1, ib2, lambek-scott-1,
mclarty-1, maclane-moerdijk-1, soare-1}

The original motivation for the construction of a new
topos model is to provide a new analogous of this propositional calculus
for the study of situations beyond perturbation theory in Quantum
Field Theories with strong couplings in the context of logical
conceptions. Our new topos model shows the importance of the strength of
the (bare) couplings in the construction of the category of context
(i.e. base category). The context category of our new topos is built in terms of cut-distance topological Hopf subalgebras of the enriched renormalization Hopf algebra $H_{{\rm FG}}^{{\rm cut}}(\Phi)$. Its complexity is more than the complexity of the standard context category $\mathcal{V}(H)$. The inclusion
$H^{{\rm cut}}_{{\rm DSE}_{1}} \subset H^{{\rm cut}}_{{\rm DSE}_{2}}$ in our context category does not mean in general
that the equation ${\rm DSE}_{1}$ should have less physical
information than the equation ${\rm DSE}_{2}$. We can remind the
calculus of ordinals in Set Theory, where we deal with different
types of infinities while sometimes a subset of a set and the set can have the
same cardinal. Therefore
coarse-graining process is not noticed in the foundations of
our topos model and we need to concern other parameters to
deal with propositions at the level of large Feynman diagrams.

Let us give a short overview about the concept of topos . The
fundamental motivation for the study of topos came from the concept
of abstraction in Mathematics. In fact, Category Theory, as a modern discipline, comes to the game
whenever we plan to study a general theory of structures. Categories enable us to
concern mathematical structures in terms of interrelationships among
objects (which are formally known as morphisms) while under a set
theoretic perspective, we choose to deal with properties of
mathematical structures on the basis of elements and membership
relations. Many basic concepts such as spaces and elements in Set
Theory can be replaced by objects and arrows in the categorical
setting, respectively. It is reasonable to think about Category
Theory as a generalization of Set Theory where we are capable to
study a mathematical structure in terms of its relations with other
structures. This approach leads us to a universal fundamental
language in dealing with mathematical structures where we have
general powerful tools such as functors between categories and
natural transformations between functors instead of the equality
relation between elements of sets in Set Theory. Actually, the
language of Category Theory provides a new understanding of the
notion of "element" of an object in a mathematical structure which
is more general than its set theoretic version. Each arrow is indeed a
generalized element of its own codomain which means that each object
$X$ can be described in terms of consisting of different collections of
arrows $Y \rightarrow X$. This interpretation is known as the
varying of elements of $X$ over the stages $Y$ which corresponds to
the notion of absolute element $x$ of a set $X$ in Set Theory. This
story is encapsulated in terms of a map   $x:\{*\}
\rightarrow X$ where we enable to address the terminal object
underlying the categorical setting. Questions about the existence of
a class of categories which could be regarded as a
categorical-theoretic replacement and generalization of the category
of sets and functions have led people to build elementary topos and
Grothendieck topos such that the second class is known as a
replacement for the notion of "space". The concept of topos has all
tools of the set-theoretical world which are necessary for the
construction of mathematical structures and their models under a
categorical configuration. It provides a generalized version of the
notions of space and logic where we enable to interpret it as a
categorical-theoretic generalization of the structure of a universe
of sets and functions that disappears certain logical and geometric
restrictions of the base mathematical structure. Generally speaking, for a given
mathematical theory, we can have a treatment to
evaluate and study the theory under different stages with respect to objects
of a fixed base category. So there is a
chance to consider possible relations among toposes in the context
of a special family of functors which are called geometric
morphisms. The category ${\rm {\bf Set}}$ of all sets and functions is an example of a topos which
is the basis for the construction of more complicated toposes such
as the Grothendieck topos of sheaves over a given base category.
Consider the category ${\rm
\bf{Set}}^{\mathcal{C}^{\rm op}}$ of contravariant functors from the
base category $\mathcal{C}$ to the standard category ${\rm
\bf{Set}}$ of sets and functions. Elements of this mathematical
theory corresponded to the base category $\mathcal{C}$, which have
been already modeled as objects in the mentioned topos, become
representable in terms of set-valued functors over the base category
$\mathcal{C}$. Roughly speaking, a topos is a Cartesian closed
category with equalisers and subobject classifier. In other words, a topos has terminal object, equalisers, pullbacks, all other limits,
exponential objects and subobject classifier.
\cite{b1, johnstone-1, lawvere-1, lawvere-rosebrugh-1, maclane-1, p1,
pare-1, street-1}

\subsection{\textsl{Non-perturbative Topos}}

In this part we build the context category of our new topos model. Then we show the existence of a new class of computable Heyting algebras which encode truth-values of propositions about
non-perturbative aspects of gauge field theories.

\begin{defn} \label{base-cat-1}
Topological Hopf algebras $H_{{\rm DSE(\lambda g)}}^{{\rm cut}}$ generated by
solutions of Dyson--Schwinger equations under different running couplings in a given gauge field theory $\Phi$ with the strong bare coupling constant $g$ and their closed Hopf
subalgebras can be organized into a poset structure. For
each pair $(H_{1},H_{2})$ of objects, we can define arrows pointing from
$H_{1}$ to $H_{2}$ (i.e. $H_{1} \le H_{2}$) if and only if there
exists a homomorphism $H_{1} \rightarrow H_{2}$ of Hopf algebras
which is continuous with respect to the cut-distance topology.
\end{defn}

This poset can be seen as a category denoted by $\mathcal{C}^{{\rm non},g}_{\Phi}$.

\begin{lem}
$\mathcal{C}^{{\rm non},g}_{\Phi}$ is a small category.
\end{lem}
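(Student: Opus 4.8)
The plan is to verify the two defining conditions of a small category: that the collection of objects forms a set, and that the collection of all morphisms forms a set. Since $\mathcal{C}^{{\rm non},g}_\Phi$ is, by Definition \ref{base-cat-1}, a poset regarded as a category, once the objects are known to form a set the morphism side will be essentially automatic.

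First I would fix a single ambient set into which every object embeds. By Lemma \ref{feynman-graphon-1} and Theorem \ref{feynman-graphon-5}, each combinatorial Dyson--Schwinger equation ${\rm DSE}(\lambda g)$ in $\mathcal{S}^{\Phi,g}$ generates a Hopf subalgebra $H_{{\rm DSE}(\lambda g)}$ of the renormalization Hopf algebra $H_{{\rm FG}}(\Phi)$, and $H_{{\rm FG}}(\Phi)$ itself embeds as a closed Hopf subalgebra of $\mathcal{S}^{\Phi}_{{\rm graphon}}$; passing to the cut-distance completion, $H_{{\rm DSE}(\lambda g)}^{{\rm cut}}$ embeds into the topological Hopf algebra $H_{{\rm FG}}^{{\rm cut}}(\Phi)$. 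Hence every object of $\mathcal{C}^{{\rm non},g}_\Phi$, whether of the form $H_{{\rm DSE}(\lambda g)}^{{\rm cut}}$ or a closed Hopf subalgebra of such, is the cut-distance closure of a subset of the fixed set underlying $H_{{\rm FG}}^{{\rm cut}}(\Phi)$. Since the power set of a set is a set, the collection of all closed Hopf subalgebras of $H_{{\rm FG}}^{{\rm cut}}(\Phi)$ is a subcollection of this power set, hence a set, and the objects of $\mathcal{C}^{{\rm non},g}_\Phi$ form a subset of it. The index $\lambda$ ranges over $(0,1]$, itself a set, so no size issue enters through the choice of running coupling either.

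Next I would handle the morphisms. By Definition \ref{base-cat-1} the relation $H_1\le H_2$ is a preorder, so between any ordered pair of objects there is at most one arrow; the category is thin. Therefore the collection of all morphisms injects into the Cartesian product $\mathrm{Ob}(\mathcal{C}^{{\rm non},g}_\Phi)\times\mathrm{Ob}(\mathcal{C}^{{\rm non},g}_\Phi)$, which is a set by the previous paragraph. It remains only to note that identity morphisms exist (the identity continuous Hopf algebra homomorphism on each object) and composites exist (a composite of continuous Hopf algebra homomorphisms is again one), so the data genuinely form a category; by the cardinality bounds just established it is small.

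The only genuinely non-routine point is exhibiting that single ambient set receiving every object as a substructure, and this is exactly where the embedding theorems established earlier in the chapter, $H_{{\rm DSE}}\hookrightarrow H_{{\rm FG}}(\Phi)\hookrightarrow\mathcal{S}^{\Phi}_{{\rm graphon}}$ together with the cut-distance completion $H_{{\rm FG}}^{{\rm cut}}(\Phi)$, do the essential work. Once the ambient set is fixed, the rest is a standard power-set argument; I expect no real obstacle beyond checking that ``closed Hopf subalgebra'' cuts out a subcollection of the power set rather than reintroducing any proper-class behaviour.
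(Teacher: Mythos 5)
Your proposal is correct and follows essentially the same route as the paper: the paper's (much terser) proof also rests on the observation that every object sits inside the single ambient Hopf algebra $H_{{\rm FG}}(\Phi)$ — which it notes is itself an object of the category, being representable as a system of Dyson--Schwinger equations for the vertex- and edge-type Green's functions — so that the objects are bounded by a fixed set. Your version usefully fills in the details the paper leaves implicit, namely the power-set bound on objects and the thinness of the poset category controlling the morphisms.
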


\begin{proof}
The existence of the graduation parameters on the renormalization Hopf algebra $H_{{\rm FG}}(\Phi)$ allow us to represent it in terms of an infinite system of Dyson--Schwinger equations generated by fixed point equations of vertex-type and edge-type Green's functions of the physical theory under different running couplings. Therefore $H_{{\rm FG}}(\Phi)$ can be seen as an object in the category $\mathcal{C}^{{\rm non},g}_{\Phi}$ which makes this category as a small category.
\end{proof}

The small category $\mathcal{C}^{{\rm non},g}_{\Phi}$ is useful to determine cut-distance topological neighborhoods of Feynman diagrams and expansions around a fixed Feynman diagram. These topological regions are Hausdorff and therefore they allow us to separate not weakly isomorphic Feynman diagrams from each other. It gives us an advantage to study the unique solution of a given Dyson--Schwinger equation in terms of Feynman diagrams which contribute to arbitrary small neighborhoods around the corresponding large Feynman diagram.

\begin{lem}
There exists a topos structure on the small category
$\mathcal{C}^{\rm{non},g}_{\Phi}$.
\end{lem}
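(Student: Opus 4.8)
The plan is to exhibit $\mathcal{C}^{\mathrm{non},g}_{\Phi}$ as the base of a presheaf topos, i.e.\ to show that the category $\mathbf{Set}^{(\mathcal{C}^{\mathrm{non},g}_{\Phi})^{\mathrm{op}}}$ of contravariant functors from $\mathcal{C}^{\mathrm{non},g}_{\Phi}$ to $\mathbf{Set}$ is an elementary topos. By the preceding lemma $\mathcal{C}^{\mathrm{non},g}_{\Phi}$ is a small category, and it is a standard fact (which I would cite from \cite{maclane-moerdijk-1, johnstone-1}) that presheaves on any small category form a topos; so the real content is to verify that the poset of cut-distance topological Hopf subalgebras generated by Dyson--Schwinger equations under different running couplings is genuinely a legitimate small category with the structure needed to invoke that theorem, and then to assemble the topos data explicitly so that the later constructions (subobject classifier, internal Heyting algebra of truth values about non-perturbative regions) have concrete meaning.

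First I would record that $\mathcal{C}^{\mathrm{non},g}_{\Phi}$, being a poset by Definition~\ref{base-cat-1} (arrows are inclusions of closed topological Hopf subalgebras induced by cut-distance-continuous Hopf algebra homomorphisms), is in particular a locally small category with at most one arrow between any two objects; this is all that the presheaf-topos theorem requires. Then I would construct the finite limits in $\mathbf{Set}^{(\mathcal{C}^{\mathrm{non},g}_{\Phi})^{\mathrm{op}}}$ pointwise: the terminal presheaf is the constant functor at $\{*\}$, binary products and equalisers are computed stage by stage over each object $H^{\mathrm{cut}}_{\mathrm{DSE}(\lambda g)}$, and hence all finite limits exist. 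Next I would produce exponentials: for presheaves $P,Q$ set $Q^{P}(H) := \mathrm{Hom}_{\mathbf{Set}^{(\mathcal{C}^{\mathrm{non},g}_{\Phi})^{\mathrm{op}}}}(\mathbf{y}(H)\times P,\, Q)$ where $\mathbf{y}$ is the Yoneda embedding of $\mathcal{C}^{\mathrm{non},g}_{\Phi}$; the Yoneda lemma gives the required adjunction $\mathrm{Hom}(R\times P,Q)\cong\mathrm{Hom}(R,Q^{P})$, so the category is Cartesian closed. Finally I would exhibit the subobject classifier $\Omega$ whose value at an object $H^{\mathrm{cut}}_{\mathrm{DSE}(\lambda g)}$ is the set of cribles (sieves) on that object in $\mathcal{C}^{\mathrm{non},g}_{\Phi}$ — equivalently downward-closed sets of topological Hopf subalgebras mapping into $H^{\mathrm{cut}}_{\mathrm{DSE}(\lambda g)}$ — together with the classifying arrow $\mathrm{true}\colon 1\to\Omega$ picking the maximal sieve; one then checks the universal pullback property stagewise. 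Assembling these four ingredients (finite limits, exponentials, subobject classifier, plus the evident pointwise colimits) gives the topos structure.

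I expect the main obstacle to be foundational rather than categorical: one must be careful that the collection of objects of $\mathcal{C}^{\mathrm{non},g}_{\Phi}$ really forms a set, not a proper class. The objects are topological Hopf subalgebras of the single enriched renormalization Hopf algebra $H^{\mathrm{cut}}_{\mathrm{FG}}(\Phi)$, and by Theorem~\ref{feynman-graphon-5} together with the representation of $H_{\mathrm{FG}}(\Phi)$ as an infinite system of Dyson--Schwinger equations, every such subalgebra is determined by a family of closed Hopf subalgebras of one fixed separable (cut-distance complete) topological Hopf algebra; the subalgebras of a fixed algebra form a set, so smallness holds, but this point should be made carefully since the running-coupling parameter $\lambda g$ ranges over a continuum of rescalings. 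Once smallness is secured, the remaining verifications are the routine stagewise checks sketched above, and I would not grind through them in detail but simply indicate that each is an instance of the general presheaf-topos construction specialized to the poset $\mathcal{C}^{\mathrm{non},g}_{\Phi}$. The payoff, which I would state at the end, is that the subobject classifier $\Omega$ of this topos is precisely the object whose global sections form the computable Heyting algebra of truth-values for propositions about cut-distance topological regions of Feynman diagrams encoding non-perturbative parameters, thereby setting up the non-perturbative quantum logic discussed in the sequel.
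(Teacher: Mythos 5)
Your proposal is correct and follows essentially the same route as the paper: both take the topos to be the category of presheaves $\mathbf{Set}^{(\mathcal{C}^{\mathrm{non},g}_{\Phi})^{\mathrm{op}}}$ on the poset of cut-distance topological Hopf subalgebras, with the subobject classifier given stagewise by sieves. Your version is somewhat more careful than the paper's in explicitly verifying exponentials and flagging the smallness issue (which the paper disposes of in a separate preceding lemma), but the underlying argument is the standard presheaf-topos construction in both cases.
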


\begin{proof}
The natural choice is the topos of presheaves on the category
$\mathcal{C}^{\rm{non},g}_{\Phi}$ (as the category of context). We denote
this new category by ${\rm \textbf{T}}^{{\rm non},g}_{\Phi}$ and call it a non-perturbative topos.

An object in ${\rm \textbf{T}}^{{\rm non},g}_{\Phi}$ is a contravariant functor
from the category $\mathcal{C}^{{\rm non},g}_{\Phi}$ to the standard
category ${\rm {\bf Set}}$ of sets and functions.

A morphism between a pair $(F_{1},F_{2})$ of objects is a natural
transformation such as $\eta:F_{1} \rightarrow F_{2}$ which is
actually a family of morphisms $\{\eta_{H}: F_{1}(H) \rightarrow
F_{2}(H)\}_{H\in {\rm Obj}(\mathcal{C}^{{\rm non},g}_{\Phi})}$ with respect
to the contravariant property. It means that for each morphism
$f:H_{1} \rightarrow H_{2}$ in $\mathcal{C}^{{\rm non},g}_{\Phi}$, we have
$\eta_{H_{1}} \circ F_{1}(f) = F_{2}(f) \circ \eta_{H_{2}}$.

A sieve on an object $H \in \mathcal{C}^{{\rm non},g}_{\Phi}$ is defined as a
collection $S$ of morphisms $f:H \longrightarrow H'$ in
$\mathcal{C}^{{\rm non},g}_{\Phi}$ with the property that if $f$ belongs to
$S$ and if $g:H' \rightarrow H''$ is any other morphism in
$\mathcal{C}^{{\rm non},g}_{\Phi}$, then $g \circ f: H \rightarrow H''$
also belongs to $S$.

The terminal object $1:\mathcal{C}^{{\rm non},g}_{\Phi} \rightarrow {\rm
{\bf Set}}$ can be defined by $1(H):=\{*\}$ at all stages $H$ in
$\mathcal{C}^{{\rm non},g}_{\Phi}$. If $f:H \rightarrow H'$ is a morphism
in $\mathcal{C}^{{\rm non},g}_{\Phi}$, then $1(f): \{*\} \rightarrow
\{*\}$. It is the terminal object because for any other presheaf $F$
we can define a unique natural transformation $\eta: F
\rightarrow 1$ such that its components $\eta_{H}: F(H)
\rightarrow 1(H)=\{*\}$ are the constant maps $\Gamma
\mapsto *$ for all $\Gamma \in F(H)$.

The subobject classifier $\Omega^{{\rm non}}$ is the presheaf
$\Omega^{{\rm non}}: \mathcal{C}^{{\rm non},g}_{\Phi} \rightarrow {{\rm
{\bf Set}}}$ such that for any object $H \in \mathcal{C}^{{\rm non},g}_{\Phi}$,
$\Omega^{{\rm non}}(H)$ is identified by the set of all sieves on
$H$. If $f:H' \rightarrow H''$ is a morphism in
$\mathcal{C}^{{\rm non},g}_{\Phi}$, then $\Omega^{{\rm non}}(f): \Omega^{{\rm
non}}(H'') \rightarrow \Omega^{{\rm non}}(H')$ is given by
\begin{equation}
\Omega^{{\rm non}}(f)(S):= \{h:H'' \rightarrow H''', \ \ \ h
\circ f \in S\}
\end{equation}
for all sieves $S$ which lives in $\Omega^{{\rm non}}(H)$.

The outer presheaf is the contravariant functor $\underline{O}$ which maps each Hopf subalgebra $H_{{\rm DSE}}$ in the base category to the set ${\rm In}({\rm DSE})$ of all infinitesimal characters corresponding to Feynman diagrams in $H_{{\rm DSE}}$.

The spectral presheaf is the contravariant functor $\underline{\Sigma}$ which sends each Hopf subalgebra $H_{{\rm DSE}}$ in the base category to its corresponding complex Lie group of characters.
\end{proof}

More details about the basic structure of the non-perturbative topos has been studied by author in other separate research work.

Heyting algebras are practical models of the intuitionistic logic
where we do not have the law of excluded middle. It means that the
proposition $\phi \vee \neg \phi$ is not intuitionistically valid. The importance of this class of logical algebras in physics have been clarified when people started to describe quantum logics in terms of topos models.
\cite{b1, heyting-1, i1, ib1, ib2, lambek-scott-1}

\begin{defn}
A Heyting algebra $A$ is a bounded distributive lattice with the
largest element $1$ and the smallest element $0$ which obeys this
condition that for each couple $(a,b)$ of its elements there exists
a greatest element $x \in A$ such that $a \wedge x \le b$. This
particular element is called the relative pseudo-complement of $a$
with respect to $b$. $A$ is called a complete Heyting algebra, if it
is a complete lattice.
\end{defn}

\begin{thm} \label{topos-dse-6}
The topos ${\rm \textbf{T}}^{{\rm non},g}_{\Phi}$ can encode the logical evaluation of
propositions about topological regions of Feynman diagrams.
\end{thm}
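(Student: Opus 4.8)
The plan is to follow the blueprint that the standard quantum-logic topos construction (Isham--D\"oring) provides, transposing it to the non-perturbative setting built on the category $\mathcal{C}^{{\rm non},g}_{\Phi}$. First I would recall that in any presheaf topos $\mathbf{Set}^{\mathcal{C}^{\mathrm{op}}}$ the subobject classifier $\Omega$ is the presheaf of sieves, and that global sections of $\Omega$ — equivalently, sub-presheaves of the terminal object $1$ — form a complete Heyting algebra under the pointwise operations of intersection, union and Heyting implication defined stage-wise via sieves. For our topos $\mathbf{T}^{{\rm non},g}_{\Phi}$ this is already in place: the subobject classifier $\Omega^{{\rm non}}$ has been constructed in the preceding lemma, so the lattice $\mathrm{Sub}(1)$ of sub-presheaves of the terminal object is automatically a complete Heyting algebra. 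The remaining content of the theorem is that \emph{propositions about cut-distance topological regions of Feynman diagrams} correspond to elements of this Heyting algebra in a way that respects the intended logical reading.

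Next I would make precise the assignment of a proposition to a sub-presheaf. Given a proposition of the form ``the large Feynman diagram $X_{{\rm DSE}}$ lies in the cut-distance topological region $\overline{R}$'' — where $\overline{R}$ is a completed region of the type produced in Definition \ref{region-cut-1} inside $H_{{\rm FG}}^{{\rm cut}}(\Phi)$ — I would define, for each object $H_{{\rm DSE}(\lambda g)}^{{\rm cut}}$ of $\mathcal{C}^{{\rm non},g}_{\Phi}$, the component to be the sieve of all morphisms $f:H_{{\rm DSE}(\lambda g)}^{{\rm cut}}\to H'$ such that the image of $X_{{\rm DSE}}$ under $f$ (i.e. its realization at the stage $H'$, via the spectral presheaf $\underline{\Sigma}$ or the outer presheaf $\underline{O}$) is cut-distance contained in $\overline{R}$. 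The key points to check are: (i) this family is genuinely a collection of sieves — closed under post-composition, which follows because cut-distance containment is preserved by the continuous Hopf-algebra homomorphisms that constitute the arrows of $\mathcal{C}^{{\rm non},g}_{\Phi}$, since those maps are linear and bounded and hence do not increase the cut-metric in the relevant sense; (ii) the assignment is natural, i.e. it defines a sub-presheaf of $\Omega^{{\rm non}}$, which is the same closure property restricted to the naturality squares; and (iii) the lattice operations on these ``region-propositions'' — conjunction as intersection of regions, disjunction as cut-distance closure of the union, and the Heyting arrow as the pseudo-complement — agree with the intrinsic Heyting operations of $\mathrm{Sub}(1)$ (respectively $\mathrm{Sub}(\underline{\Sigma})$). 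I would verify this by a direct stage-wise comparison, exactly as in Theorem 1 of the D\"oring--Isham ``daseinisation'' framework, now with the topological neighborhoods supplied by the Hausdorff cut-distance topology (which, as noted after the smallness lemma, separates non-weakly-isomorphic Feynman diagrams, so distinct regions give distinct sub-presheaves).

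Finally I would address computability. Because each object of $\mathcal{C}^{{\rm non},g}_{\Phi}$ is a free commutative graded finite-type topological Hopf algebra generated by a Dyson--Schwinger equation, its poset of closed Hopf subalgebras is recursively enumerable (it is indexed by sub-families of Hochschild one-cocycles, hence by subsets of a decidable structural numbering of primitive Feynman diagrams, as in Lemma \ref{complexity-dse-1}); the sieve-membership condition for a region-proposition reduces to finitely-approximable cut-distance inequalities on partial sums (Theorem \ref{feynman-graphon-4}), so the Heyting-algebra operations are computable on the recursively enumerable presentation. This yields the asserted ``new class of computable Heyting algebras.'' The hard part will be (ii)--(iii): proving that cut-distance containment is genuinely functorial along \emph{every} arrow of the context category — the Hopf-algebra morphisms can, a priori, collapse distinct graphon classes (as the remark on $W_\Gamma$ versus $W_{n\Gamma}$ shows measure-preserving but non-invertible maps arise), so one must work consistently with \emph{unlabeled} graphon classes and check that the induced maps on $\mathrm{Sub}(\underline{\Sigma})$ are well-defined order-preserving maps of Heyting algebras; establishing this compatibility, together with the identification of disjunction with the cut-closed union rather than the naive union, is where the real work lies, and I expect to invoke the compactness of the space of unlabeled graphons (Theorem 2.2) to guarantee that closures of unions stay inside $H_{{\rm FG}}^{{\rm cut}}(\Phi)$.
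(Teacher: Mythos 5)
Your core argument coincides with the paper's: the proof given there does nothing more than exhibit the stage-wise Heyting algebra structure on the subobject classifier $\Omega^{{\rm non}}$ (partial order of sieves by inclusion, meet as intersection, join as union, the standard sieve formula for $\Rightarrow$, negation as $S \Rightarrow 0$, principal and empty sieves as top and bottom) together with the correspondence between subobjects of a presheaf $F$ and characteristic morphisms $\chi \colon F \rightarrow \Omega^{{\rm non}}$. That is the entire content of the paper's proof, and your first paragraph reproduces it faithfully. Where you diverge is that you then attempt what the paper only gestures at: an explicit daseinisation-style assignment sending a proposition ``$X_{{\rm DSE}}$ lies in the region $\overline{R}$'' to a concrete sub-presheaf, with the verification points (i)--(iii) spelled out. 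The paper never constructs such an assignment; it simply declares that the Heyting algebra of sieves ``encodes'' the truth objects of topological regions, leaving the map from region-propositions to subobjects implicit. Your version is therefore strictly more ambitious, and the difficulty you flag as the real work --- that cut-distance containment must be functorial along \emph{every} continuous Hopf-algebra morphism in $\mathcal{C}^{{\rm non},g}_{\Phi}$, despite the fact that such morphisms can collapse distinct unlabeled graphon classes --- is a genuine obstruction that the paper's proof simply does not confront, because it never descends to that level of detail. Be aware also that your final paragraph on computability and recursively enumerable presentations belongs to the subsequent result (the theorem on computable Heyting algebras), not to this one; the paper treats that as a separate statement with its own proof via the decidability of intuitionistic propositional logic and a diagonalization against computable isomorphisms, so you should not fold it into the proof of the present theorem.
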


\begin{proof}
Truth objects corresponding to cut-distance topological regions of
Feynman diagrams can be determined by the Heyting algebraic
structure defined naturally on the subobject classifier of the topos
${\rm \textbf{T}}^{{\rm non},g}_{\Phi}$.

For a given topological Hopf algebra  $H_{{\rm DSE}}^{{\rm cut}}$,
consider the space $\Omega^{{\rm non}}(H_{{\rm DSE}}^{{\rm cut}})$
which contains all sieves on $H_{{\rm DSE}}^{{\rm cut}}$. Now for
arbitrary sieves $S_{1}, S_{2}$ on $H_{{\rm
DSE}}^{{\rm cut}}$ which live in $\Omega^{{\rm non}}(H_{{\rm
DSE}}^{{\rm cut}})$, the partial order relation on $\Omega^{{\rm
non}}(H_{{\rm DSE}}^{{\rm cut}})$ can be determined naturally by
the relation
\begin{equation} \label{e1}
S_{1} \le S_{2} \Leftrightarrow S_{1} \subset S_{2}.
\end{equation}
It leads us to make the following elementary logical statements
$$S_{1} \wedge S_{2}:= S_{1} \cap S_{2}, \ \ \ \
S_{1} \vee S_{2}:= S_{1} \cup S_{2}, $$
$$S_{1} \Rightarrow S_{2}:=$$
\begin{equation}
\{f: H_{{\rm DSE},1}^{{\rm cut}} \rightarrow H_{{\rm
DSE},2}^{{\rm cut}} \ {\rm s.t.} \ \forall g: H_{{\rm DSE},2}^{{\rm cut}}
\rightarrow H_{{\rm DSE},3}^{{\rm cut}}, \ g \circ f \in S_{1}
\Rightarrow g \circ f \in S_{2}\}.
\end{equation}
The negation of an element $S$ is defined by the proposition
$\neg S:= S \Longrightarrow 0$ which means that
\begin{equation}
\neg S:= \{f: H_{{\rm DSE},1}^{{\rm cut}} \rightarrow H_{{\rm
DSE},2}^{{\rm cut}} \ {\rm s.t.} \ \forall g: H_{{\rm DSE},2}^{{\rm cut}}
\rightarrow H_{{\rm DSE},3}^{{\rm cut}}, \ \ g \circ f \notin
S\}.
\end{equation}

Thanks to the defined partial order (\ref{e1}), for $S_{1},
S_{2} \in \Omega^{\rm non}(H_{{\rm DSE}}^{{\rm cut}})$, there exists
a proposition $S_{1} \Rightarrow S_{2}$ of $\Omega^{{\rm
non}}(H_{{\rm DSE}}^{{\rm cut}})$ with the property that for all $S
\in \Omega^{{\rm non}}(H_{{\rm DSE}}^{{\rm cut}})$,
\begin{equation}
S \le (S_{1} \Rightarrow S_{2}) \Leftrightarrow S \wedge S_{1} \le
S_{2}.
\end{equation}
In addition, the unit element in $\Omega^{{\rm non}}(H_{{\rm
DSE}}^{{\rm cut}})$ is the principal sieve on $H_{{\rm DSE}}^{{\rm
cut}}$ and the null element is the empty sieve $\emptyset$.

The presheaf $\Omega^{{\rm non}}$ as the
subobject classifier shows that subobjects of any object $F$ in the
topos ${\rm \textbf{T}}^{{\rm non},g}_{\Phi}$ are in an one to one correspondence with
morphisms such as $\chi: F \rightarrow \Omega^{{\rm non}}$. In
other words, for a subobject $K$ of $F$, its associated
characteristic morphism $\chi^{K}$ is determined by its components
$\chi^{K}_{H_{{\rm DSE}}^{{\rm cut}}}: F(H_{{\rm DSE}}^{{\rm cut}})
\rightarrow \Omega^{{\rm non}}(H_{{\rm DSE}}^{{\rm cut}})$ where
\begin{equation}
\chi_{H_{{\rm DSE}}^{{\rm cut}}}^{K}(A):= \{f:H_{{\rm DSE}}^{{\rm
cut}} \rightarrow H_{{\rm DSE},1}^{{\rm cut}}: \ \ F(f)(A) \in
K(H_{{\rm DSE},1}^{{\rm cut}})\},
\end{equation}
for all $A \in F(H_{{\rm DSE}}^{{\rm cut}})$, is actually a sieve on
$H_{{\rm DSE}}^{{\rm cut}}$. Furthermore, each morphism $\chi: F
\rightarrow \Omega^{{\rm non}}$, which is a natural
transformation between presheaves, defines a subobject $K^{\chi}$ of
$F$ which is given by
\begin{equation}
K^{\chi}(H_{{\rm DSE}}^{{\rm cut}}):=\chi_{H_{{\rm DSE}}^{{\rm
cut}}}^{-1}\{1_{\Omega^{{\rm non}}(H_{{\rm DSE}}^{{\rm cut}})}\}.
\end{equation}

As the conclusion, for each equation DSE, $(\Omega^{{\rm non}}(H_{{\rm
DSE}}^{{\rm cut}}), \le, \wedge, \vee, \rightarrow)$ is our
promising Heyting algebra.
\end{proof}

A Heyting algebra is called finitely free, if it is generated by the
equivalence classes of formulas of finite number of propositional
variables under provable equivalence in the intuitionistic logic.

A subset $A$ of natural numbers is called computable if there exists
an algorithm to decide whether a natural number belongs to $A$ or
not. In other words, $A$ is computable if its corresponding
characteristic function is computable. An algebraic structure is
called computable if its domain can be identified with a computable
set of natural numbers where the (finitely many) operations and
relations on the structure are computable. If the structure is
infinite, people usually identify the cardinal of its domain with
the symbol $\omega$. The computable dimension of a computable
structure is the number of classically isomorphic computable copies
of the structure up to the computable isomorphism.  \cite{soare-1}

\begin{defn} \label{heyting-1}
A Heyting algebra $(H,\le,\wedge,\vee,\rightarrow)$ is called
computable, if $H$ and all its corresponding operations are
computable.
\end{defn}

For a given Heyting algebra with one generator, there exist
infinitely nonequivalent intuitionistic formulas of one
propositional variable. The connection between free Heyting algebras
and the intuitionistic logic leads us to the concept of
"computable dimension" for Heyting algebras in particular the ones which can
encode the logics of the non-perturbative toposes ${\rm \textbf{T}}^{{\rm non},g}_{\Phi}$.

\begin{thm}  \label{topos-non-3}
There exists a computable Heyting algebra which encodes truth
objects associated to topological regions of Feynman diagrams which
contribute to the unique solution of the Dyson--Schwinger
equation DSE in a given gauge field theory $\Phi$.
\end{thm}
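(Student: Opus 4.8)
The plan is to verify, for the Heyting algebra $(\Omega^{{\rm non}}(H_{{\rm DSE}}^{{\rm cut}}), \le, \wedge, \vee, \rightarrow)$ produced by Theorem~\ref{topos-dse-6}, the three computability requirements of Definition~\ref{heyting-1}: that its underlying set can be identified with a computable set of natural numbers, and that the lattice operations together with the relative pseudo-complement are computable under that coding. First I would set up a structural numbering. The terms $X_n$ in the unique solution $X_{{\rm DSE}}=\sum_{n\ge 0}(\lambda g)^n X_n$ are produced by the explicit recursion (\ref{dse-4}) from a finite family $\{\gamma_n\}$ of primitive Feynman diagrams, so every finite Feynman diagram occurring in the partial sums $Y_m$ is effectively listable; coupling this with the Kolmogorov total order $\mathbf{K}_{u^g}$ of Lemma~\ref{complexity-dse-1} on $\mathcal{S}^{\Phi,g}$ and the density of the rescalings $\{\frac{n}{n+1}\}$ in the semigroup of the multi-scale Renormalization Group (Theorem~\ref{multi-scale-RG}), I would obtain a computable bijection between $\mathbb{Z}_+$ and a canonical set of indices for the finitely generated closed topological Hopf subalgebras lying below $H_{{\rm DSE}}^{{\rm cut}}$ in $\mathcal{C}^{{\rm non},g}_{\Phi}$, with the graph-limit objects of the cut-distance completion recovered as effective limits of these indices via Theorem~\ref{feynman-graphon-4}.

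Next I would make the morphism structure effective. Because $H_{{\rm FG}}(\Phi)$ is a connected graded finite-type Hopf algebra, each homogeneous component being finite-dimensional, a cut-distance continuous Hopf algebra homomorphism between two indexed objects is determined by finite data on a finite generating set in each degree, and its compatibility with coproduct and antipode is a finite linear-algebra check; hence the relation $H_1 \le H_2$ of Definition~\ref{base-cat-1} and the enumeration of morphisms $H_1 \to H_2$ are decidable. A sieve $S$ on $H_{{\rm DSE}}^{{\rm cut}}$ is a subset of the effectively enumerable set of morphisms out of $H_{{\rm DSE}}^{{\rm cut}}$ closed under post-composition; using the finite-type grading to bound the post-compositions that need to be tested, membership $f\in S$ is decidable relative to a finite presentation of $S$, so $\Omega^{{\rm non}}(H_{{\rm DSE}}^{{\rm cut}})$ is a computable set. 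Then $\wedge=\cap$ and $\vee=\cup$ are manifestly computable, the implication $S_1\Rightarrow S_2$ is given by the explicit formula of Theorem~\ref{topos-dse-6} in which the universal quantifier, after the grading bound, ranges over a finite set and is therefore decidable, and $\neg S = (S\Rightarrow 0)$ inherits computability. This establishes Theorem~\ref{topos-non-3}.

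The main obstacle I expect is precisely the reduction of the unbounded universal quantifier in $S_1\Rightarrow S_2$ (and in $\neg S$) to a bounded, decidable one: a priori it ranges over morphisms $g: H' \to H''$ into arbitrary objects $H''$ of $\mathcal{C}^{{\rm non},g}_{\Phi}$, including the graph-limit objects created by the cut-distance completion, which are not finitely presented. The resolution I have in mind is a truncation lemma: a composite $g\circ f$ lies in, or fails to lie in, a finitely presented sieve already at a finite stage of the filtration $\mathcal{S}^{\Phi}_{{\rm graphon},(i)}$ of Theorem~\ref{graphon-filtration-1}, so the quantifier can be restricted to finitely many $H''$ of bounded filtration rank without changing its truth value; once this is in place, all Heyting operations become primitive recursive on the structural numbering. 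A secondary point requiring care is that the structural numbering be independent, up to computable isomorphism, of the chosen semi-computable map $u^g$, so that the computable dimension of the resulting Heyting algebra is well defined; this parallels the discussion of Kolmogorov optimal functions following Theorem~\ref{halting-graphon-1} and should follow from the optimality inequality $K_w \le c_{v,w}K_v$.
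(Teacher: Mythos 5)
Your proposal takes a genuinely different route from the paper, and the route has a gap at exactly the point you flag. The paper does not attempt to effectivize the concrete sieve algebra at all: its proof identifies $\Omega^{{\rm non}}(H_{{\rm DSE}}^{{\rm cut}})$ with the collection of intuitionistic propositional formulas in countably many variables modulo provable equivalence (i.e.\ with a free Heyting algebra on $\omega$ generators), invokes the classical decidability of intuitionistic propositional logic to obtain a computable presentation in which $\le$, $\wedge$, $\vee$, $\rightarrow$ are computable on equivalence classes $[\phi]$, and then, following Turlington's construction, builds a second computable copy $\hat{\Omega}^{{\rm non}}$ by a stage-by-stage labelling $\alpha_s(n)$ with diagonalization against all candidate computable isomorphisms. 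In other words, the entire computability burden is discharged by the decidability of IPC on an abstract presentation, not by deciding anything about morphisms of topological Hopf algebras.

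The gap in your version is the truncation lemma, which you correctly identify as the crux but do not prove, and which I do not see how to prove from the ingredients you cite. The implication $S_1 \Rightarrow S_2$ quantifies over all morphisms $g\colon H' \to H''$ into arbitrary objects of $\mathcal{C}^{{\rm non},g}_{\Phi}$, and these objects include cut-distance completions containing graph limits that are not finitely presented; your claim that a continuous Hopf algebra homomorphism between such objects is ``determined by finite data on a finite generating set in each degree'' fails precisely for these limit objects, and the filtration of Theorem \ref{graphon-filtration-1} is a filtration of Feynman graphons by their log-expansion degree, not a filtration of the morphism sets of the base category, so it gives no a priori bound on the stage at which membership of $g\circ f$ in a sieve stabilizes. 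Without that bound the universal quantifier in $\Rightarrow$ and $\neg$ remains $\Pi^0_1$ rather than decidable, and the Heyting structure is only co-recursively enumerable, not computable. Your secondary concern about independence of the numbering from $u^g$ is legitimate but less serious; note, though, that the paper's construction of $\hat{\Omega}^{{\rm non}}$ is aimed at the opposite conclusion (exhibiting computable copies that are \emph{not} computably isomorphic, to make the computable dimension nontrivial), so normalizing the numbering up to computable isomorphism is not what the paper is after.
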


\begin{proof}
We work on the non-perturbative topos ${\rm \textbf{T}}^{{\rm non},g}_{\Phi}$.
Thanks to Theorem \ref{topos-dse-6}, we can associate the Heyting
algebra $\Omega^{{\rm non}}(H_{{\rm DSE}}^{{\rm cut}})$ to each
combinatorial Dyson--Schwinger equation DSE. Therefore the subobject
classifier $\Omega^{{\rm non}}$ has the internal structure of our interesting
Heyting algebra (as the algebraic structure appropriate for
the intuitionistic logic). We want to lift this logical type of algebra
onto a enriched version $\hat{\Omega}^{{\rm non}}$ which is
computable at the level of dimension.

Consider the propositional intuitionistic logic over the given
language $(\Omega^{{\rm non}}(H_{{\rm DSE}}^{{\rm cut}}),\wedge,
\vee, \rightarrow, \bot, \top)$ such that $\Omega^{{\rm
non}}(H_{{\rm DSE}}^{{\rm cut}})$ can be seen as the collection of
propositional formulas in infinitely many variables modulo
equivalence under the intuitionistic logic where
$\wedge,\vee,\rightarrow$ are the logical connectives, $\bot$ is
false and $\top$ is truth. The codes for formulas such as $\phi
\wedge \psi, \ \ \phi \vee \psi$ or $\phi \longrightarrow \psi$ are
always greater than the codes for $\phi$ and $\psi$.

The intuitionistic propositional logic is decidable (\cite{coecke-1,
turlington-1}) which means that we need a finite constructive process to apply uniformly to every propositional formula to understand either it produces an intuitionistic proof of the formula or it shows no such proof can exist. Therefore we have a computable copy of the
free Heyting algebra on $\omega$ generators. Now we can consider
elements of $\Omega^{{\rm non}}(H_{{\rm DSE}}^{{\rm cut}})$ as the
equivalence classes $[\phi]$ under provable equivalence in the
intuitionistic logic which leads us to the following computational
operations.
$$[\phi] \le [\psi] \Leftrightarrow \phi \rightarrow \psi \ \ {\rm
is \ provable \ under \ the \ intuitionistic \ logic,}$$
\begin{equation}
[\phi] \wedge [\psi] = [\phi \wedge \psi], \ \ \ \ \  [\phi] \vee
[\psi] = [\phi \vee \psi].
\end{equation}

The plan is to build $\hat{\Omega}^{{\rm non}}$ as a computable copy
which is not computability isomorphic to $\Omega^{{\rm non}}$. Let
$\alpha_{s}(n)$ be a label at stage $s$ determined by the domain of
$\hat{\Omega}^{{\rm non}}$ in the construction process. It is a
propositional formula in the intuitionistic logic such that \\
- $\alpha(n) = {\rm lim}_{s} \alpha_{s}(n)$, \\
- For $n \neq m$, the propositional formulas $\alpha(n)$ and
$\alpha(m)$ are not intuitionistically equivalent, \\
- For each intuitionistic propositional formula $\phi$, there exists
such $n$ such that $\alpha(n)$ is intuitionistically equivalent to
$\phi$, \\
- Morphisms with the general form $\phi_{e}:
\hat{\Omega}^{{\rm non}} \rightarrow \Omega^{{\rm non}}$ can be applied to deal with the diagonalization against all possible computable isomorphisms (\cite{turlington-1}).

Once we define the join, meet or relative
pseudo-complement of elements, these relationships never change in
future stages. Therefore, the function $\alpha$, which indicates an
isomorphism map between $\hat{\Omega}^{{\rm non}}$ and $\Omega^{{\rm
non}}$, makes $\hat{\Omega}^{{\rm non}}$ computable.
\end{proof}


\chapter{\textsf{Conclusion}}

\newpage

In this part we summarize the original achievements of this monograph which focused on the mathematical foundations of non-perturbative gauge field theories in the context of combinatorial, geometric--analytic and categorical settings.

\textbf{(A)} We have studied solutions of Dyson--Schwinger equations in terms of an infinite combinatorial setting. We applied the theory of analytic graphs (for sparse graphs) to formulate non-trivial graphon models of Feynman diagrams and their formal expansions. Then we encapsulated the renormalization of these graphons in the context of a new topological Hopf algebra $\mathcal{S}^{\Phi}_{{\rm graphon}}$ of Feynman graphons with respect to Feynman diagrams of a given (strongly) coupled gauge field theory $\Phi$. The topology of this Hopf algebra is the topology of graphons, its coproduct is derived from the Kreimer's renormalization coproduct and its graduation parameter can be determined in terms of loop numbers of the corresponding Feynman diagrams. The compactness of the topology of graphons enables us to identify Feynman graph limits as the convergent limits of sequences of finite Feynman diagrams. In addition, solutions of fixed point equations of Green's functions have also been studied as the cut-distance convergent limits of the sequences of their corresponding partial sums. We then applied these Feynman graphon models of Dyson--Schwinger equations to formulate a new enrichment of the BPHZ renormalization theory for infinite formal expansions of Feynman diagrams. Our study completes the foundations of a differential Galois theory (in terms of the Riemann--Hilbert problem) for the study of non-perturbative parameters derived from the renormalization of Dyson--Schwinger equations under strong running or bare coupling constants.

We have also studied Dyson--Schwinger equations in the language of some combinatorial polynomials. We built a new parametric representation for solutions of these non-perturbative equations in terms of Tutte polynomials and Kirchhoff--Symanzik polynomials. We then formulated a new multi-scale Renormalization Group on the
collection $\mathcal{S}^{\Phi,g}$ of all Dyson--Schwinger equations in a given gauge
filed theory in terms of changing simultaneously the scales of
momenta (i.e. running) and bare coupling constants. This Renormalization Group is useful to study strongly coupled Dyson--Schwinger equations in terms of cut-distance convergent limits of sequences of large Feynman diagrams under weaker running couplings. Feynman graphon models and the topology of graphons are the essential tools for this fundamental result. In addition, this Renormalization Group has been applied to formulate a new concept of complexity for Dyson--Schwinger equations under different running coupling constants. We considered $\mathcal{S}^{\Phi,g}$ as a new constructive world to formulate a new generalization of the Kolmogorov complexity for Dyson--Schwinger equations. We showed that our generalized non-perturbative BPHZ renormalization program can encode the Halting problem of partial recursive functions on $\mathcal{S}^{\Phi,g}$.

\textbf{(B)} We have built a new Noncommutative Geometry model for the study of Dyson--Schwinger equations in terms of a new class of infinite dimensional spectral triples. These spectral triples are derived from graded Hopf subalgebras generated by solutions of Dyson--Schwinger equations. The strength of running couplings has a direct influence on the structures of these spectral triples. This is a fundamental achievement for the description of the geometry of non-perturbative quantum motions via noncommutative differential geometry and the theory of Spectral Geometry.

Thanks to Feynman graphon models, we have explained the foundations of a new Functional Analysis approach for the study of solutions of Dyson--Schwinger
equations in terms of the Haar integration theory on $\mathcal{S}^{\Phi,g}$ as a
modification of the classical Riemann--Lebesgue integration theory
with respect to the Borel $\sigma$-algebra on real numbers. As we have shown this new approach is useful for the analytic description of evolution of large Feynman diagrams in terms of sequences of their corresponding partial sums where we have formulated a new generalization of .
Johnson--Lapidus Dyson series for strongly coupled Dyson--Schwinger equations.

We have also worked on the Banach
algebra $L^{1}(\mathcal{S}^{\Phi,g},\mu_{{\rm Haar}})$ to formulate
a generalized version of the Fourier transformation on the basis of the Gelfand transform. It is useful to encode the evolution of Dyson--Schwinger equations in terms of $\mu_{{\rm Haar}}$-integrable functions on $\mathcal{S}^{\Phi,g}$.

In addition, we explained the basic elements of
the G\^{a}teaux differential calculus machinery on the separable Banach space
$\mathcal{S}^{\Phi,g}$ with respect to the cut-norm to study smooth
functions on $\mathcal{S}^{\Phi,g}$ in the language of Taylor series
of higher order G\^{a}teaux differentiations and homomorphism
densities.

\textbf{(C)} Thanks to Feynman graphon models of Dyson--Schwinger equations, we have obtained a new mathematical model for the description of quantum entanglement in interacting (strongly coupled) gauge field theories in the language of intermediate algorithms organized in some new lattice models. These intermediate algorithms have been encoded via lattices of topological Hopf subalgebras generated by solutions of Dyson--Schwinger equations. We then lifted these lattices on to the level of lattices of Lie subgroups and Tannakian subcategories. Our study provides a new bridge between information flow in Quantum Field Theory and the Theory of Computation and Complexity.

In addition, we have organized topological Hopf subalgebras derived from solutions of Dyson--Schwinger equations in a given gauge field theory into a new topos. This new topos can provide the logical foundations of gauge field theories under strongly coupled running couplings. This new topos model has been developed by Author in his recent research works. We have shown the existence of a new class of computable Heyting algebras which encode
the evaluation of logical propositions about topological regions of Feynman
diagrams.

\addcontentsline{toc}{chapter}{Bibliography}

\end{document}